\pdfoutput=1
\documentclass[runningheads,orivec,envcountsame,envcountsect,appendix]{llncs}
\usepackage[T1]{fontenc}
\usepackage{graphicx}

\usepackage[numbers,sort&compress]{natbib}
\usepackage{hyperref}
\usepackage[prologue]{xcolor}

\urlstyle{rm}
\usepackage{tikz}
\usepackage{xspace}
\usepackage{xargs}
\usetikzlibrary{calc,positioning}
\usetikzlibrary{arrows,automata,shapes,matrix,patterns,patterns.meta}
\usepackage{style/notation}
\usepackage{style/hmsc-tikz}
\usepackage{newunicodechar}
\newunicodechar{∃}{\ensuremath{\exists}}
\newunicodechar{∀}{\ensuremath{\forall}}
\newunicodechar{θ}{\ensuremath{\theta}}
\newunicodechar{τ}{\ensuremath{\tau}}
\newunicodechar{φ}{\ensuremath{\varphi}}
\newunicodechar{ξ}{\ensuremath{\xi}}
\newunicodechar{ζ}{\ensuremath{\zeta}}
\newunicodechar{ψ}{\ensuremath{\psi}}
\newunicodechar{π}{\ensuremath{\pi}}
\newunicodechar{α}{\ensuremath{\alpha}}
\newunicodechar{β}{\ensuremath{\beta}}
\newunicodechar{γ}{\ensuremath{\gamma}}
\newunicodechar{δ}{\ensuremath{\delta}}
\newunicodechar{ε}{\ensuremath{\varepsilon}}
\newunicodechar{κ}{\ensuremath{\kappa}}
\newunicodechar{λ}{\ensuremath{\lambda}}
\newunicodechar{μ}{\ensuremath{\mu}}
\newunicodechar{ρ}{\ensuremath{\rho}}
\newunicodechar{σ}{\ensuremath{\sigma}}
\newunicodechar{ω}{\ensuremath{\omega}}
\newunicodechar{Γ}{\ensuremath{\Gamma}}
\newunicodechar{Φ}{\ensuremath{\Phi}}
\newunicodechar{Δ}{\ensuremath{\Delta}}
\newunicodechar{Σ}{\ensuremath{\Sigma}}
\newunicodechar{Π}{\ensuremath{\Pi}}
\newunicodechar{∑}{\ensuremath{\Sigma}}
\newunicodechar{∏}{\ensuremath{\Pi}}
\newunicodechar{Θ}{\ensuremath{\Theta}}
\newunicodechar{Ω}{\ensuremath{\Omega}}
\newunicodechar{⇒}{\ensuremath{\Rightarrow}}
\newunicodechar{⇐}{\ensuremath{\Leftarrow}}
\newunicodechar{⇔}{\ensuremath{\Leftrightarrow}}
\newunicodechar{→}{\ensuremath{\rightarrow}}
\newunicodechar{←}{\ensuremath{\leftarrow}}
\newunicodechar{↔}{\ensuremath{\leftrightarrow}}
\newunicodechar{¬}{\ensuremath{\neg}}
\newunicodechar{∧}{\ensuremath{\land}}
\newunicodechar{∨}{\ensuremath{\lor}}
\newunicodechar{≠}{\ensuremath{\neq}}
\newunicodechar{≡}{\ensuremath{\equiv}}
\newunicodechar{∼}{\ensuremath{\sim}}
\newunicodechar{≈}{\ensuremath{\approx}}
\newunicodechar{≥}{\ensuremath{\geq}}
\newunicodechar{≤}{\ensuremath{\leq}}
\newunicodechar{≫}{\ensuremath{\gg}}
\newunicodechar{≪}{\ensuremath{\ll}}
\newunicodechar{∅}{\ensuremath{\emptyset}}
\newunicodechar{⊆}{\ensuremath{\subseteq}}
\newunicodechar{⊂}{\ensuremath{\subset}}
\newunicodechar{∩}{\ensuremath{\cap}}
\newunicodechar{⋂}{\ensuremath{\cap}}
\newunicodechar{∪}{\ensuremath{\cup}}
\newunicodechar{⋃}{\ensuremath{\cup}}
\newunicodechar{⊎}{\ensuremath{\uplus}}
\newunicodechar{∈}{\ensuremath{\in}}
\newunicodechar{∉}{\ensuremath{\not\in}}
\newunicodechar{⊤}{\ensuremath{\top}}
\newunicodechar{⊥}{\ensuremath{\bot}}
\newunicodechar{₀}{\ensuremath{_0}}
\newunicodechar{₁}{\ensuremath{_1}}
\newunicodechar{₂}{\ensuremath{_2}}
\newunicodechar{₃}{\ensuremath{_3}}
\newunicodechar{₄}{\ensuremath{_4}}
\newunicodechar{₅}{\ensuremath{_5}}
\newunicodechar{₆}{\ensuremath{_6}}
\newunicodechar{₇}{\ensuremath{_7}}
\newunicodechar{₈}{\ensuremath{_8}}
\newunicodechar{₉}{\ensuremath{_9}}
\newunicodechar{⁰}{\ensuremath{^0}}
\newunicodechar{¹}{\ensuremath{^1}}
\newunicodechar{²}{\ensuremath{^2}}
\newunicodechar{³}{\ensuremath{^3}}
\newunicodechar{⁴}{\ensuremath{^4}}
\newunicodechar{⁵}{\ensuremath{^5}}
\newunicodechar{⁶}{\ensuremath{^6}}
\newunicodechar{⁷}{\ensuremath{^7}}
\newunicodechar{⁸}{\ensuremath{^8}}
\newunicodechar{⁹}{\ensuremath{^9}}
\newunicodechar{𝔹}{\ensuremath{\mathbb{B}}}
\newunicodechar{ℝ}{\ensuremath{\mathbb{R}}}
\newunicodechar{ℕ}{\ensuremath{\mathbb{N}}}
\newunicodechar{ℂ}{\ensuremath{\mathbb{C}}}
\newunicodechar{ℚ}{\ensuremath{\mathbb{Q}}}
\newunicodechar{𝕋}{\ensuremath{\mathbb{T}}}
\newunicodechar{𝕏}{\ensuremath{\mathbb{X}}}
\newunicodechar{ℤ}{\ensuremath{\mathbb{Z}}}
\newunicodechar{✓}{\checkmark}
\newunicodechar{✗}{\ensuremath{\times}}
\newunicodechar{◊}{\ensuremath{\lozenge}}
\newunicodechar{□}{\ensuremath{\square}}
\newunicodechar{𝓐}{\ensuremath{\mathcal{A}}}
\newunicodechar{𝓑}{\ensuremath{\mathcal{B}}}
\newunicodechar{𝓒}{\ensuremath{\mathcal{C}}}
\newunicodechar{𝓓}{\ensuremath{\mathcal{D}}}
\newunicodechar{𝓔}{\ensuremath{\mathcal{E}}}
\newunicodechar{𝓕}{\ensuremath{\mathcal{F}}}
\newunicodechar{𝓖}{\ensuremath{\mathcal{G}}}
\newunicodechar{𝓗}{\ensuremath{\mathcal{H}}}
\newunicodechar{𝓘}{\ensuremath{\mathcal{I}}}
\newunicodechar{𝓙}{\ensuremath{\mathcal{J}}}
\newunicodechar{𝓚}{\ensuremath{\mathcal{K}}}
\newunicodechar{𝓛}{\ensuremath{\mathcal{L}}}
\newunicodechar{𝓜}{\ensuremath{\mathcal{M}}}
\newunicodechar{𝓝}{\ensuremath{\mathcal{N}}}
\newunicodechar{𝓞}{\ensuremath{\mathcal{O}}}
\newunicodechar{𝓟}{\ensuremath{\mathcal{P}}}
\newunicodechar{𝓠}{\ensuremath{\mathcal{Q}}}
\newunicodechar{𝓡}{\ensuremath{\mathcal{R}}}
\newunicodechar{𝓢}{\ensuremath{\mathcal{S}}}
\newunicodechar{𝓣}{\ensuremath{\mathcal{T}}}
\newunicodechar{𝓤}{\ensuremath{\mathcal{U}}}
\newunicodechar{𝓥}{\ensuremath{\mathcal{V}}}
\newunicodechar{𝓦}{\ensuremath{\mathcal{W}}}
\newunicodechar{𝓧}{\ensuremath{\mathcal{X}}}
\newunicodechar{𝓨}{\ensuremath{\mathcal{Y}}}
\newunicodechar{𝓩}{\ensuremath{\mathcal{Z}}}
\newunicodechar{…}{\ensuremath{\ldots}}
\newunicodechar{∗}{\ensuremath{\ast}}
\newunicodechar{⊢}{\ensuremath{\vdash}}
\newunicodechar{⊧}{\ensuremath{\models}}
\newunicodechar{′}{\ensuremath{'}}
\newunicodechar{″}{\ensuremath{''}}
\newunicodechar{‴}{\ensuremath{'''}}
\newunicodechar{∥}{\ensuremath{\|}}
\newunicodechar{⊕}{\ensuremath{\oplus}}
\newunicodechar{⁺}{\ensuremath{^+}}
\newunicodechar{⊇}{\ensuremath{\supseteq}}
\newunicodechar{∘}{\ensuremath{\circ}}
\newunicodechar{∙}{\ensuremath{\cdot}}
\newunicodechar{⋅}{\ensuremath{\cdot}}
\newunicodechar{≈}{\ensuremath{\approx}}
\newunicodechar{×}{\ensuremath{\times}}
\newunicodechar{∞}{\ensuremath{\infty}}
\newunicodechar{⊑}{\ensuremath{\sqsubseteq}}
 \usepackage{mathpartir}
\usepackage{booktabs}
\usepackage[shortlabels,inline]{enumitem}
\usepackage{subcaption}
\captionsetup[subfigure]{justification=centering}
\usepackage{wrapfig}
\usepackage{savefnmark}
\usepackage{amssymb}
\usepackage{thm-restate}
\usepackage[capitalise]{cleveref}
\usepackage{crossreftools}
\pdfstringdefDisableCommands{\let\Cref\crtCref
    \let\cref\crtcref
}

\usepackage{mathrsfs}

\usepackage[title]{style/appendix}
\additionalmaterial[The extended version~\cite{fullversion}]{\cite{fullversion}}

\ifappendix
\definecolor[named]{links}{cmyk}{0.55,1,0,0.15}
  \definecolor[named]{urls}{cmyk}{1,0.58,0,0.21}
  \hypersetup{
    bookmarksdepth=2,colorlinks,
    linkcolor=links,
    citecolor=links,
    urlcolor=urls,
    filecolor=urls
  }
\fi

\definecolor{colorblind1}{RGB}{216, 27, 96} \definecolor{colorblind2}{RGB}{30, 136, 229} \definecolor{colorblind3}{RGB}{255, 193, 7} \definecolor{colorblind4}{RGB}{0, 77, 64} 

\colorlet{MST}{colorblind1}
\colorlet{AMP}{colorblind2}

\newif\ifdraft
\drafttrue
\newtoggle{arxiv}
\toggletrue{arxiv}

\setlist[itemize]{topsep=3pt,partopsep=3pt}
\setlist[enumerate]{topsep=3pt,partopsep=3pt}
\renewlist{enumerate*}{enumerate*}{1}
\setlist[enumerate*]{label=(\roman*)}
\makeatletter
\renewcommand\paragraph{\@startsection{paragraph}{4}{\z@}{-1\p@ \@plus -4\p@ \@minus 4\p@}{-0.5em \@plus -0.22em \@minus 0.1em}{\normalfont\normalsize\itshape}}
\makeatother

\begin{document}

\title{An Automata-theoretic Basis for Specification and Type Checking of Multiparty Protocols}

\author{ Felix Stutz\thanks{corresponding author}\inst{1,2}\orcidID{0000-0003-3638-4096} \and \\
Emanuele D'Osualdo \inst{1,3}\orcidID{0000-0002-9179-5827}
\\ \email{felix.stutz@uni.lu, emanuele.dosualdo@uni-konstanz.de}
}
\institute{
MPI-SWS, Germany \and
University of Luxembourg, Luxembourg \and
University of Konstanz, Germany
}

\authorrunning{F.\ Stutz and E.\ D'Osualdo}
\titlerunning{AMP: An Automata-theoretic Basis for Multiparty Protocols} \maketitle

\begin{abstract}
We propose the Automata-based Multiparty Protocols framework~(AMP)
for top-down protocol development. 
The framework features a new very general formalism for global protocol specifications called Protocol State Machines~(PSMs),
Communicating State Machines~(CSMs) as specifications for local participants,
and a type system to check a \picalc\ with session interleaving and delegation
against the CSM specification.
Moreover, we define a large class of PSMs, called ``tame'',
for which we provide a sound and complete PSPACE projection operation that computes a CSM describing the same protocol as a given PSM if one exists. 
We propose these components as a backwards-compatible new backend for
frameworks in the style of Multiparty Session Types. In comparison to the latter, AMP offers a considerable improvement in expressivity,
decoupling of the various components (\eg projection and typing),
and robustness (thanks to the complete projection).  \keywords{Communication protocols \and  
           Verification \and
           Multiparty session types \and
           Communicating state machines \and  
           Type checking} \end{abstract}

\section{Introduction}

Designing correct distributed communication protocols is an important and hard problem.
Consider a finite set of protocol \emph{participants}
(\ie independent processes)
whose only means of interaction between each other is asynchronous message passing through reliable FIFO channels.
The goal is to program each participant so that some global emergent behaviour is achieved, \eg a~leader is~elected.
Unfortunately, even when each participant is finite-state, the presence of unbounded delays (\ie unbounded communication channels) makes any non-trivial property of the emergent global behaviour undecidable \cite{DBLP:journals/jacm/BrandZ83}.

The top-down protocol design approach proposes to work around this issue by a reversal in the methodology:
instead of first programming the participants and then checking that their global behaviour is what we desired it to be, we first specify the desired global behaviour and then synthesize each participant's local specification
so that local behaviour gives rise to the correct global behaviour by construction.
Each participant's concrete implementation is then checked against its local specification, which
\begin{enumerate*}[(a)]
\item can be achieved by static means like type systems, and
\item makes the verification of the implementation local and~modular.
\end{enumerate*}

\acronym{Multiparty Session Types}{MSTs} \cite{DBLP:conf/popl/HondaYC08}
is one of the most prominent and extensively studied formalisms
supporting this top-down design methodology.
The key components of the framework,
depicted in \cref{fig:framework}, are:
\begin{enumerate*}[(1)]
  \item \textbf{Global Types:} a~dedicated language to specify correct global behaviour;\item \textbf{Local Types:} a~dedicated language to specify each participant's actions in the protocol;
  \item \textbf{Programs:} a~programming language (typically a \picalc) to express concrete implementations of each participant of the protocol.
\end{enumerate*}

Imagine, as a simple example,
we want to specify a centralised leader election protocol,
where an arbiter~\p{a} selects a leader among~\p{p} and~\p{q}
and the selected participant communicates the win to the other.
A possible global type representing the protocol is
$
  G =
  (\act{a->p:\lbl{sel}} \seq \act{p->q:\lbl{win}})
  +
  (\act{a->q:\lbl{sel}} \seq \act{q->p:\lbl{win}})
$
where
$\act{a->q:\lbl{sel}}$ says that
$\p{a}$ sends $\lbl{sel}$ and $\p{q}$ receives it and $+$ denotes branching.
The local type of \p{p} would then be
$ (\ev<{a->p:\lbl{sel}} \seq \ev>{p->q:\lbl{win}}) + \ev<{q->p:\lbl{win}} $
where $\ev<{a->p:\lbl{sel}}$ means ``$\p{p}$~receives message $\lbl{sel}$ from $\p{a}$'' and
$\ev<{p->q:\lbl{win}}$ means ``$\p{p}$ sends message $\lbl{win}$ to $\p{q}$''.
Thus, \p{p} is supposed to listen for a message from \p{a} or from \p{q};
in the former case it would then communicate the win to \p{q},
in the latter, just concede.
A program implementation may consist of a process for each participant;
the process for the arbiter \p{a} may implement any specific policy for selecting the
leader (\eg always choose~\p{p}), as long as the communications
follow the protocol.

\begin{figure}[t]
  \centering
  \begin{tikzpicture}[
  shade/.code={
    \colorlet{fw@col}{#1}
    \tikzset{draw=#1}
  },
  link/.style={
    circle,
    inner sep=1pt,
minimum width = 2em,
    draw,
    fill=white,
  },
  spec/.style={
strut sized,
draw=fw@col!70!black,
    fill=fw@col!10!white,
  },
  fat ->/.style={
    -{[scale=.5]Triangle},shorten=5pt,line width=1ex,
  },
  thick,
  node distance=1cm and .5cm,
]

\begin{scope}[start chain=mst,shade=MST]
  \node[on chain,MST] {MST};
  \node[on chain,spec] {Global Types};
  \node[on chain,link] {$ (\hole)\proj $};
  \node[on chain,spec]  {Local Types};
  \node[on chain,link] {$ \vdash $};
  \node[on chain,spec] {\picalc};

  \draw
    (mst-2) edge (mst-3)
    (mst-3) edge (mst-4)
    (mst-4) edge (mst-5)
    (mst-5) edge (mst-6)
  ;
\end{scope}

\begin{scope}[start chain=amp,shade=AMP]
  \node[on chain,AMP, below=of mst-1,xshift=-.6pt] {AMP};
  \node[on chain,spec] {\phantom{Global Types}} node at (\tikzlastnode) {(Tame) PSMs};
  \node[on chain,link] {$ (\hole)\proj $};
  \node[on chain,spec]  {\phantom{Local Types}} node at (\tikzlastnode) {DF CSMs};
  \node[on chain,link] {$ \vdash $};
  \node[on chain,spec] {\picalc};

  \draw
    (amp-2) edge (amp-3)
    (amp-3) edge (amp-4)
    (amp-4) edge (amp-5)
    (amp-5) edge (amp-6)
  ;
\end{scope}

\node[above=0 of mst-3, font=\footnotesize, MST] {incomplete};
\node[below=0 of amp-3, font=\footnotesize, AMP] {complete};
\path (mst-3) --node[fill=white,text=black!70]{\textsc{projection}} (amp-3);
\path (mst-5) --node[fill=white,text=black!70]{\textsc{typing}} (amp-5);

\draw[fat ->] (mst-2) -- (amp-2);
\draw[fat ->] (mst-4) -- (amp-4);

\end{tikzpicture}
   \vspace{-1.5ex}
  \caption{The components of top-down frameworks.}
  \label{fig:framework}
  \vspace{-3ex}
\end{figure}

The relationship between the three representations of the protocol, \ie global types, local types, and programs, is delicate.
First, the global type and the local types should give rise to the same behaviour;
however it is not always possible to capture the behaviour of a global type with local types.
Suppose, for instance,
that
we modified the leader election protocol $G$
to \mbox{$
  G' =
  (\act{a->p:\lbl{sel}} \seq \act{q->p:\lbl{lose}})
  \, + $} $
  (\act{a->q:\lbl{sel}} \seq \act{p->q:\lbl{lose}}) .
$
While, from a global perspective, it is possible to insist on the losing participant
informing the winner that they lost, locally, the losing participant has no way to determine whether they won or not.
Therefore~$G'$ is not realisable by local processes: we say it is not projectable.
Second, local types are a more abstract representation of the system than programs, but we still want to show that, when implementation details are omitted, a program only performs communications that adhere to the local specification.

In MSTs, the relationship between the three layers of the framework
are enforced through two  procedures:
\begin{enumerate*}[label=(\roman*)]
  \item \textbf{Projection},
    which (when possible) extracts, from a global type~$G$,
    some local types that are guaranteed to behave as described by $G$;
    and
  \item \textbf{Type Checking},
    which checks that the program implementation of each participant adheres to the behaviour specified in its local type. \end{enumerate*}

\begin{samepage}
\noindent
In a perfect world, a framework for top-down protocol design
should be:
\begin{enumerate}[topsep=3pt]
\item \textbf{Expressive:}
    It should support as many protocols as possible.
\item \textbf{Decoupled:}
    Its components
    (global/local specifications, programs, projection, type checking)
    should depend on each other as little as possible,
    and be specified independently of their algorithmic implementation,
    to allow for reuse and modularity.
\item \textbf{Robust:}
    It should only reject a global specification if there is a genuine
    issue with it (\ie no false positives).
\end{enumerate}

Unfortunately, the MST frameworks in the literature leave something to be desired against this ideal picture.
They all suffer from:
\begin{itemize}[topsep=3pt]
  \item \textbf{Expressivity Limitations:}
    Many legitimate protocols are rejected either because the global specification syntax is too restricted, or because the projection algorithm cannot handle them.
    For example,
    every MST framework we are aware of
    can only handle global types with \emph{directed choice},
    \ie where every branching point involves exactly one sender and one receiver.
    This immediately rules out
    our example leader election protocol~$G$
    because the branches involve different receivers.
  \item \textbf{Decoupling Limitations:}
    In MSTs, the syntax of global types directly influences the definition of the projection algorithm and the syntax of the local types,
    which in turn influence the type system design.
    Typically, changing one of the framework's components requires adapting (and reproving correctness of) all the others.
Furthermore, many MST frameworks solely give the intended relation between global and local types through the projection algorithm and do not give a declarative definition.
\item \textbf{Robustness Limitations:}
    The heuristic nature of the projection algorithms makes it very hard
    to predict if a global type will be handled or not by an MST framework,
    even in the case where the behaviour specified by the global type is unproblematic.
\end{itemize}
\end{samepage}

In this paper, we propose a new foundation for top-down protocol design machinery,
dubbed AMP~(Automata-based Multiparty Protocols),
that achieves the expressivity, decoupling and robustness goals.

\paragraph{Expressivity of \framework.}
\Cref{fig:framework} shows the components of AMP.
To achieve expressivity, we propose a new general formalism for (finite-control) global protocol specifications, which we call \emph{Protocol State Machines}~(PSMs).
The formalism is based on automata which are given semantics in terms
of (finite and infinite) sets of words, over an alphabet of send ($\snd{\procA}{\procB}{\val}$) and receive ($\rcv{\procA}{\procB}{\val}$) actions.
PSMs remove many of the restrictions of global types, while retaining their character:
they specify the expected behaviour from a global perspective, and satisfy some basic correctness properties by construction (e.g. every send is eventually received, no type mismatches, etc).
Owing to their generality, PSMs can represent any global type, but can go well beyond them: 
they also strictly subsume High-Level Message Sequence Charts~(HMSCs).
For maximizing expressivity at the local level, we adopt Communicating State Machines~(CSMs) as the formalism for local protocol specifications.
They are a canonical representation for decentralised asynchronous communication
and as such do not impose constraints over what can be represented.
Finally, to maximise expressivity at the program level, we consider a \picalc\ with session interleaving and delegation.

\paragraph{Decoupling of \framework.}
In \framework, decoupling is achieved through its handling of projection and type checking.
For projection, the framework merely specifies the semantic requirements that a correct projection algorithm needs to satisfy:
essentially that it produces a deadlock-free CSM which represents the same language as the input PSM.
This limits the impact of projection on global and local specifications, and leaves open any algorithmic/manual strategy to achieve the projection goal. (We discuss how \framework\ proposes to actually implement projection in the discussion of robustness).
For example, scenarios in which the user provides both the PSM and the CSM and a proof that they represent the same language and the CSM satisfies desirable properties (like deadlock freedom) 
or where an algorithm infers a PSM from a CSM, are both compatible with our framework thanks to this declarative approach to projection.
This treatment is in line with some MST works where the fundamental property of projection is expressed in terms of some behavioural equivalence between
local and global types.
For type checking, decoupling is achieved by defining the type system by
depending exclusively on programs and CSMs.
The standard guarantees of subject reduction,
communication safety and session fidelity
are proven by appealing to properties of CSMs.
This demonstrates how effective CSMs are in providing a clean decoupled interface between projection and type checking.

\paragraph{Robustness of \framework.}
Finally, we demonstrate how robustness can be achieved in \framework,
by identifying a large class of PSMs, called \emph{Tame PSMs}, for which we provide a decidable, sound and complete projection operation.
Tame PSMs extend the reach of sound and complete projection
beyond global types and can handle a large class of HMSCs as well as protocols that cannot be expressed as either global types nor HMSCs.
The main constraint that makes a PSM tame is what we call \emph{sender-driven choice}: that at any branching point, the sender in all the branches is the same participant and takes distinct actions in the branches.
Our projection algorithm builds on a recently proposed complete projection for sender-driven global types~\cite{DBLP:conf/cav/LiSWZ23}.
Thanks to a surprising reduction,
we manage to extend the algorithm to tame PSMs while keeping the complexity in PSPACE.
Due to the fact that our projection operation is complete,
only protocols that do not admit any valid projection will be rejected:
those are protocols which simply cannot be implemented by local processes.
We also show that our class is in a sense ``maximally robust'':
lifting the sender-driven restriction makes projection undecidable,
even for global types.
\framework\xspace is also robust in the sense that one can select the desired
guarantees of the type system and check whether they can be enforced
by checking (syntactic) properties of the global protocol, pinpointing exactly which guarantee is provided by a PSM.
Finally, we show that the framework is backwards-compatible with MSTs:
not only can we encode global types into PSMs and project them, we also pinpoint the (simple) conditions under which our projection
yields CSMs which are equivalent to local~types.

\paragraph{Contributions.}
\begin{samepage}
In summary:
\begin{itemize}
  \item
    We propose PSMs as an expressive general formalism for
    (finite-control) global protocol specifications.
  \item
    We propose CSMs as a canonical model for local protocol specifications
    and specify their desired relationship with PSMs declaratively.
  \item
    We define the first session type system based on CSMs,
    pinpointing exactly the properties of the CSM that are needed
    to provide each of the desired guarantees;
    these properties can be enforced by construction
    by ensuring the PSMs conform to some simple checks.
  \item
    We define Tame PSMs 
    (encompassing all directed and sender-driven global types)
    and give a sound and complete projection algorithm for them.
  \item
    We show that sender-driven choice is a necessary restriction
    even for global types: projection is undecidable otherwise.
  \item
    We characterise which class of PSMs corresponds to global types,
    and which CSMs correspond to local types,
giving us full backward-compatibility with standard MST theory.
\end{itemize}
\end{samepage}

We think of AMP as a backend for top-down protocol design tools with the following workflow.
Any specific tool, e.g.\ Scribble \cite{DBLP:conf/tgc/YoshidaHNN13}, provides a dedicated syntax for types and processes. Then, a global specification is compiled to a PSM (where the compiler guarantees its tameness, which would be trivial for global types) and invokes the projection of AMP, producing a CSM. 
This could be re-translated for user consumption, but also be used to drive typing using AMP. 
Failure of projection can be directly translated by the frontend to an explanation of why the protocol is flawed and must be repaired. Given the generality of PSMs, it should also be easier to experiment with extensions of the frontend language.

\ifappendix The appendix \else \cite{fullversion} \fi contains all proofs, omitted details, and additional examples.

 \section{Motivation and Key Ideas}
\label{sec:key-ideas}

In this section, we give an informal overview of the key ideas behind \framework\xspace
before proceeding with the formal development from \cref{sec:automata-based-protocols}.

\subsection{Global Specifications via Protocol State Machines}
\label{sec:global-specs-via-PSMs}
Our first goal is to define an expressive formalism for specifying global protocols, that is also constrained enough to make it tractable for top-down protocol development.
One of the most accomplished such formalisms, used in MSTs, is \emph{global types}.
\Cref{fig:example1-mst} shows an example of a global type,
represented in \Cref{fig:example1-hmsc} as an HMSC. 

\begin{figure}[b]
  \adjustfigure[\footnotesize]\begin{minipage}[b]{.49\textwidth}
    \[
      μX.
      {+} \begin{cases}
            \act{p->q:m_1} \seq
            \act{q->r:1} \seq
            \act{r->p:v_1} \seq \zero
            \\
            \act{p->q:m_2} \seq
            \act{q->r:1} \seq
            \act{r->p:v_2} \seq \zero
            \\
            \act{p->q:m_3} \seq
            \act{q->r:3} \seq
            \act{p->r:v_3} \seq X
          \end{cases}
    \]
    \vspace{-1ex}
    \caption{Example global type.}
    \label{fig:example1-mst}
  \end{minipage}\begin{minipage}[b]{.49\textwidth}
    \centering \begin{tikzpicture}[hmsc,baseline,event sep=0.6em]

\begin{scope}[msc=s1,final]
  \begin{scope}[participant=p]
    \node[head];
    \node[event] {};
    \node[no event] {};
    \node[event] {};
  \end{scope}

  \begin{scope}[participant=q]
    \node[head];
    \node[event] {};
    \node[event] {};
    \node[no event] {};
  \end{scope}

  \begin{scope}[participant=r]
    \node[head];
    \node[no event] {};
    \node[event] {};
    \node[event] {};
  \end{scope}

  \draw[messages]
    (p-2) edge node[msg,pos=.4]{$m_1$} (q-2)
    (q-3) edge node[msg]{$1$} (r-3)
    (r-4) edge node[msg,below]{$v_1$} (p-4)
  ;
\end{scope}

\begin{scope}[msc=s2,final]
  \begin{scope}[participant=p]
    \node[head,right=of s1r];
    \node[event] {};
    \node[no event] {};
    \node[event] {};
  \end{scope}

  \begin{scope}[participant=q]
    \node[head];
    \node[event] {};
    \node[event] {};
    \node[no event] {};
  \end{scope}

  \begin{scope}[participant=r]
    \node[head];
    \node[no event] {};
    \node[event] {};
    \node[event] {};
  \end{scope}

  \draw[messages]
    (p-2) edge node[msg,pos=.4]{$m_2$} (q-2)
    (q-3) edge node[msg]{$1$} (r-3)
    (r-4) edge node[msg,below]{$v_2$} (p-4)
  ;
\end{scope}

\begin{scope}[msc=s3]
  \begin{scope}[participant=p]
    \node[head,right=of s2r];
    \node[event] {};
    \node[no event] {};
    \node[event] {};
  \end{scope}

  \begin{scope}[participant=q]
    \node[head];
    \node[event] {};
    \node[event] {};
    \node[no event] {};
  \end{scope}

  \begin{scope}[participant=r]
    \node[head];
    \node[no event] {};
    \node[event] {};
    \node[event] {};
  \end{scope}

  \draw[messages]
    (p-2) edge node[msg,pos=.4]{$m_3$} (q-2)
    (q-3) edge node[msg]{$3$} (r-3)
    (p-4) edge node[msg,below]{$v_3$} (r-4)
  ;
\end{scope}
\useasboundingbox +(0,.3cm); \node[initial,empty block,name=s0,above=.8 of s2q];
\draw[trans]
  (s0) edge[bend right=10pt] (s1)
       edge (s2)
       edge[bend left=10pt] (s3)
  (s3) edge[bend right] (s0);

\end{tikzpicture}
     \vspace{-1ex}
    \caption{A protocol as an HMSC.}
    \label{fig:example1-hmsc}
  \end{minipage}\end{figure}
The term
$ \act{p->q:m_1} $
indicates the transmission of message $m_1$ from $\procA$ to $\procB$.
The symbol~$\zero$ denotes termination of the protocol.
Recursion can be specified by binding a recursion variable~$X$ with~$μX$ and using $X$ subsequently to jump back to where~$X$ was bound.
Branching is denoted by~$+$.
In the example, $\procA$ can pick between three branches by sending different messages to~$\procB$.
Subsequently, $\procB$ sends messages to~$\procC$ in all branches:
$1$ in the top and middle branch and $3$ in the bottom branch.
Participant~$\procC$ is supposed to send messages $v_1$ or $v_2$ in the top and middle branch while it receives from~$\procA$ in the bottom branch, which also recurses using~$X$.

What makes these formalisms tractable?
Their first key characteristic is that,
as a specification tool, they allow the user to
(a) adopt a global point of view,
describing what coordination between all the participants is induced by the protocol;
(b) express this coordination without enumerating all possible interleavings
of the send and receive events that can happen due to the asynchronous nature of communication, \eg
$ \act{p->q:m_1} $ indicates the send of the message immediately followed by its receipt, although in any asynchronous implementation, the receipt might happen at a much later point, after other independent events took place.
In \cref{fig:example1-mst},
$\procC$ may lag behind arbitrarily while $\procA$ and $\procB$ keep sending messages.
The second key characteristic of global types and HMSCs is that
they are \emph{finite-control}:
their control structure can be described using a finite graph.
This makes it possible to algorithmically manipulate them,
\eg for verifying they satisfy some desirable properties,
    or for extracting local protocol specifications.

Our aim is to distil these two characterising features and remove any other restriction that is not necessary, to obtain a more expressive global specification formalism.
To do this, we take a language-theoretic view of protocols,
where a protocol is seen as the set of sequences of send and receive events
that are considered compliant with it.
More precisely, a send event $\snd{\procA}{\procB}{\val}$
  records that~$\procA$ sent the message~$\val$ to~$\procB$;
a receive event $\rcv{\procA}{\procB}{\val}$
  records that~$\procB$ received message~$\val$ from~$\procA$.
A protocol specification is the language of desired finite or infinite words
of events.
For the purpose of this section, we will focus on finite words, but the technical development considers both finite and infinite words. 

Not all languages over these events are meaningful in the context of protocols.
First, the sequences of events might not be feasible when using FIFO channels
(\eg $\snd{\procA}{\procB}{1}\cat\rcv{\procA}{\procB}{2}$ is not FIFO);
we write $\FIFO$ for the language of all words that satisfy FIFO order.
Second, if 
\mbox{ $
  \snd{\procA}{\procB}{\val_1} \cat
  \snd{\procC}{\procB}{\val_2} \, \cdot \negmedspace $ } $
  \rcv{\procA}{\procB}{\val_1} \cat
  \rcv{\procC}{\procB}{\val_2}
$
is accepted by a procotol,
it ought to also accept
$
  \snd{\procC}{\procB}{\val_2}\cat
  \snd{\procA}{\procB}{\val_1}\cat
  \rcv{\procA}{\procB}{\val_1}\cat
  \rcv{\procC}{\procB}{\val_2}
$
as this kind of reorderings are induced by the scheduling of participants and network delays which are out of the control of participants.
We write $\close(L)$ for the closure
of the language~$L$ under such reorderings.
Thus a language~$L \subs \FIFO$ represents the global interaction patterns of the protocols;
moreover $L$ can specify only some of these interactions and get all the ones
that should also be possible under the asynchronous semantics
by declaring the full set of acceptable words to be $\close(L)$.

Now, to obtain a finite-control formalism,
we propose to express such a ``core'' language~$L$ for the protocol $\close(L)$ using a finite state machine~$M$
with $\lang(M)=L$.
Since $\FIFO$ is not regular,
the only feasible way of ensuring $ \lang(M) \subs \FIFO $
is by requiring~$M$ to keep track of which sent messages are still pending,
which a finite-state machine can only do up to some maximum capacity for the send buffers.
We thus arrive at the requirement that $\lang(M) \subs \FIFO_B$,
where $ \FIFO_B $ is the set of words respecting FIFO but where the number of pending sends never exceed $B\in\Nat$ at any point in time.
Note that $\FIFO_B$ is regular.

Building on these observations, we define a \emph{Protocol State Machine}~(PSM)
to be a finite state machine~$M$ recognising words of send and receive events,
with $\lang(M) \subs \FIFO_B$ for some~$B\in\Nat$. \cref{fig:example1-psm} shows the protocol of \cref{fig:example1-mst} as a PSM.
\begin{figure}[tb]
  \centering
  \begin{tikzpicture}[psm, node distance=2em and 4em, on grid,baseline=-.5ex]
  \node[state,initial] (s0) {};

  \begin{scope}[start chain=up going right, state/.append style={on chain}]
    \node[state,above right=of s0] {};
    \node[state] {};
    \node[state] {};
    \node[state] {};
    \node[state] {};
    \node[state,final] {};
  \end{scope}
  \begin{scope}[start chain=mid going right, state/.append style={on chain}]
    \node[state,right=of s0] {};
    \node[state] {};
    \node[state] {};
    \node[state] {};
    \node[state] {};
    \node[state,final] {};
  \end{scope}
  \begin{scope}[start chain=down going right, state/.append style={on chain}]
    \node[state,below right=of s0] {};
    \node[state] {};
    \node[state] {};
    \node[state] {};
    \node[state] {};
    \node[state] {};
  \end{scope}
  \draw
    (s0) edge[out=north,in=west] node[send'=p->q:m_1,straight,above]{} (up-1)
         edge node[send'=p->q:m_2,straight]{} (mid-1)
         edge[out=south,in=west] node[send'=p->q:m_3,straight,below]{} (down-1)
    (up-1) edge node[recv'=p->q:m_1]{} (up-2)
    (up-2) edge node[send'=q->r:1]{} (up-3)
    (up-3) edge node[recv'=q->r:1]{} (up-4)
    (up-4) edge node[send'=r->p:v_1]{} (up-5)
    (up-5) edge node[recv'=r->p:v_1]{} (up-6)
    (mid-1) edge node[recv'=p->q:m_2]{} (mid-2)
    (mid-2) edge node[send'=q->r:1]{} (mid-3)
    (mid-3) edge node[recv'=q->r:1]{} (mid-4)
    (mid-4) edge node[send'=r->p:v_2]{} (mid-5)
    (mid-5) edge node[recv'=r->p:v_2]{} (mid-6)
    (down-1) edge node[recv'=p->q:m_3]{} (down-2)
    (down-2) edge node[send'=q->r:3]{} (down-3)
    (down-3) edge node[recv'=q->r:3]{} (down-4)
    (down-4) edge node[send'=p->r:v_3]{} (down-5)
    (down-5) edge node[recv'=p->r:v_3]{} (down-6)
    (down-6) edge[out=90,in=-35,looseness=.2] (s0)
  ;

\end{tikzpicture}   \vspace{-1.5ex}
  \caption{A PSM encoding for the protocol of \cref{fig:example1-mst}.}
  \label{fig:example1-psm}
  \vspace{-1ex}
\end{figure}
Interpreted as a mere automaton~$M$,
it recognises a language $\lang(M)$ of words with at most one pending send at all time.
(We call~$M$ \sumOnePSM because the total number of messages in flight is at most~$1$;
if we allowed~$1$ message \emph{per channel}, it would be called a \pre1-PSM.) As a PSM, however, $M$ denotes the language $ \close(\lang(M)) $,
which admits words with unbounded channel behaviours and is not even regular in general.
For instance, $ \close(\lang(M)) $
includes words starting with
$
  (
    \ev*{p->q:m_3} \cat
    \ev>{q->r:3}   \cat
    \ev>{p->r:v_3}
  )^n \cat (
    \ev<{q->r:3}   \cat
    \ev<{p->r:v_3}
  )^n
  \dots
$
where~$\procC$ is running at a lower rate than the other participants,
and leaves~$n$ pending sends from~$\procA$ and from~$\procB$
before it consumes them.

PSMs achieve a substantial gain in expressivity while retaining the
key characteristics of global types.
In terms of expressivity, every global type can be encoded as a \sumOnePSM;
furthermore PSMs can be used to encode HMSCs, which strictly subsume global types because the latter cannot specify simultaneous message exchanges between a pair of participants 
\cite{DBLP:journals/corr/abs-2209-10328}. 
PSMs can even represent protocols that are outside the reach of HMSCs.
Consider, for example, the PSM in \cref{fig:psm-no-hmsc-possible}.
In that protocol, \p{p} commits to some integer (abstracted as the label~$\lbl{int}$) at the beginning by sending
it to~\p{r} and sends a $\lbl{go}$ signal to~\p{q}.
Note that here we use the paired send and receive notation
$ \act{p->q:\lbl{ok}} $ to emit the two events in sequence.
Then \p{q} and \p{r} engage in some negotiation of arbitrary length
until \p{q} decides to exit the loop, at which point \p{r} is finally
allowed to receive the message sent by \p{p}.
No HMSC can represent such protocol:
the matching events $\snd{\procA}{\procC}{\lbl{int}}$ and $\rcv{\procA}{\procC}{\lbl{int}}$
are separated by an arbitrary number of events
(with no opportunity for reordering up to~$\close(\hole)$);
since matching events in HMSCs need to belong to the same basic block,
such block would also need to contain the arbitrarily many events in between,
which is~impossible.

\begin{figure}[tb]
  \centering \begin{tikzpicture}[psm, node distance=1.5em and 3em,baseline=-.5ex]
  \begin{scope}[
    start chain=seq going right,
    state/.append style={on chain,join=by trans},
  ]
  \node[state,init]{};
  \node[state] {};
  \node[state] {};
  \node[state] {};
  \node[state,final] {};
  \end{scope}
  \path
    (seq-1) -- node[send=p to r:\lbl{int}]{}
    (seq-2) -- node[event]{$\act{p->q:\lbl{go}}$}
    (seq-3) -- node[event]{$\act{q->r:\lbl{ok}}$}
    (seq-4) -- node[recv=r from p:\lbl{int}]{}
    (seq-5)
  ;
  \node[state,below=of seq-3] (qr) {};
  \draw
    (seq-3) edge[bend right] node[event,straight,left]{$\act{q->r:\lbl{int}}$} (qr)
    (qr) edge[bend right] node[event,straight,right]{$\act{r->q:\lbl{int}}$} (seq-3)
  ;
\end{tikzpicture}
   \caption{A~protocol not expressible as an HMSC.
    Transitions labelled with~$\act{p->q:m}$
    should be interpreted as emitting the sequence
    $ \ev*{p->q:m} $.
  }
  \label{fig:psm-no-hmsc-possible}
\end{figure}

Of course, this level of generality would be pointless if we were not able to
provide for it in the other components of top-down protocol design.
We start by studying the first crucial component: projection.

\subsection{From Global to Local Specifications: Projection}

When considering projection, our first concern is the goal of decoupling:
we want to define a general interface for projection,
such that both different algorithmic implementations of projection can be used
without altering the design of the rest of the framework;
and such that typing is not dependent on global specifications (nor projection details).

In \framework, the key to decoupling is in choosing \emph{Communicating State Machines}~(CSMs) as the formalism for local specifications.
A CSM $\CSM{A}$ associates a finite state automaton $ A_{\procA} $ to each participant~$\procA \in \Procs$,
where transitions can either send or receive on the channels of~$\procA$;
the semantics of $\CSM{A}$ is defined on configurations that include the local states for each participant and an (unbounded) FIFO buffer for each channel.
They induce a FIFO language $ \lang(\CSM{A}) $ over send/receive events, by considering as final
the configurations where all the participants are in final local states and
all the buffers are empty.
CSMs thus represent a canonical general model of finite-control asynchronous
protocol implementations.

\begin{figure}[b]
  \centering
  \begin{tikzpicture}[psm, node distance=2em and 4em, on grid,baseline=-.5ex]
  \node[state,initial,label={above left:\normalsize$A_\procA$}] (s0) {};

  \begin{scope}[start chain=up going right, state/.append style={on chain}]
    \node[state,above right=of s0] {};
    \node[state,final] {};
  \end{scope}
  \begin{scope}[start chain=mid going right, state/.append style={on chain}]
    \node[state,right=of s0] {};
    \node[state,final] {};
  \end{scope}
  \begin{scope}[start chain=down going right, state/.append style={on chain}]
    \node[state,below right=of s0] {};
    \node[state] {};
  \end{scope}
  \draw
    (s0) edge[out=north,in=west] node[send'=p->q:m_1,straight,above]{} (up-1)
         edge node[send'=p->q:m_2,straight]{} (mid-1)
         edge[out=south,in=west] node[send'=p->q:m_3,straight,below]{} (down-1)
    (up-1) edge node[recv'=r->p:v_1]{} (up-2)
    (mid-1) edge node[recv'=r->p:v_2]{} (mid-2)
    (down-1) edge node[send'=p->r:v_3,below]{} (down-2)
    (down-2) edge[out=130,in=-35,looseness=.4] (s0)
  ;

\end{tikzpicture}
\quad
\begin{tikzpicture}[psm, node distance=2em and 4em, on grid,baseline=-.5ex]
  \node[state,initial,label={above left:\normalsize$A_\procB$}] (s0) {};

  \begin{scope}[start chain=up going right, state/.append style={on chain}]
    \node[state,above right=of s0] {};
    \node[state,final] {};
  \end{scope}
  \begin{scope}[start chain=down going right, state/.append style={on chain}]
    \node[state,below right=of s0] {};
    \node[state] {};
  \end{scope}
  \draw
    (s0) edge[out=north,in=west] node[recv'=p->q:m_1,straight,above]{} (up-1)
         edge[out=east,in=south] node[recv'=p->q:m_2,straight,above,pos=.3]{} (up-1)
         edge[out=south,in=west] node[recv'=p->q:m_3,straight,below]{} (down-1)
    (up-1) edge node[send'=q->r:1]{} (up-2)
    (down-1) edge node[send'=q->r:3,below]{} (down-2)
    (down-2) edge[out=130,in=-35,looseness=.4] (s0)
  ;

\end{tikzpicture}
\quad
\begin{tikzpicture}[psm, node distance=2em and 4em, on grid,baseline=-.5ex]
  \node[state,initial,label={above left:$A_\procC$}] (s0) {};

  \begin{scope}[start chain=up going right, state/.append style={on chain}]
    \node[state,above right=of s0] {};
    \node[state,final] {};
  \end{scope}
  \begin{scope}[start chain=down going right, state/.append style={on chain}]
    \node[state,below right=of s0] {};
    \node[state] {};
  \end{scope}
  \draw
    (s0) edge[out=north,in=west] node[recv'=q->r:1,straight,above]{} (up-1)
edge[out=south,in=west] node[recv'=q->r:3,straight,below]{} (down-1)
    (up-1) edge[out=10,in=170] node[send'=r->p:v_1,above,straight]{} (up-2)
    (up-1) edge[out=-10,in=190] node[send'=r->p:v_2,below,straight]{} (up-2)
    (down-1) edge node[recv'=r->p:v_3,below]{} (down-2)
    (down-2) edge[out=130,in=-35,looseness=.4] (s0)
  ;

\end{tikzpicture}   \vspace{-1ex}
  \caption{Example CSM.}
  \label{fig:example1-csm}
\end{figure}
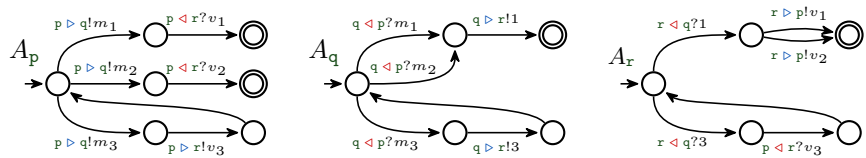

Per se, this is not a particularly original choice:
MST's local types have been linked to CSMs of a certain shape before \cite{DBLP:conf/esop/DenielouY12,DBLP:conf/ecoop/Stutz23}, and HMSC-based work used them as local specifications.
What \framework\ demonstrates is that it is possible to build the entire
top-down methodology around CSMs (with fewer restrictions),
including a session type system,
gaining both in expressivity and in decoupling.

Having fixed our model for local behaviour, we can ask when it defines behaviour consistent with a global specification.
We say a CSM~$\CSM{A}$ \emph{is a projection of} a PSM~$M$ if
$\CSM{A}$ is deadlock-free 
and $ \lang(\CSM{A}) = \close(\lang(M)) $.
We focus on the \emph{(projection) synthesis problem}, producing a CSM as result. 
The corresponding decision problem is the \emph{projectability problem}, which simply asks if there exists such a CSM.
Notably, projectability can have lower complexity. 

Even for simple protocols, projection can be tricky.
Take the example of \cref{fig:example1-mst}:
$\procC$ can never distinguish between the top two branches,
as its only observations would be to have received~$1$ from $\p{q}$.
The instance of the protocol with $m_1 \ne m_2$ and $v_1 \ne v_2$
would thus not be projectable.
If $m_2 = m_3$ then $\p{q}$ would not be able to send the appropriate message to~$\p{r}$.
Therefore, the only projectable instances with no redundant branches
are the ones where
$m_1$, $m_2$, and $m_3$ are pair-wise distinct and $v_1 = v_2$.
\Cref{fig:example1-csm} shows a candidate projection of the PSM in \cref{fig:example1-mst}.
If $m_2 = m_3$ or $v_1 \ne v_2$, the PSM is not projectable,
and in fact the CSM can reach a deadlock.

Given CSMs are Turing-complete models, it is unsurprising that
checking if a given CSM is a projection of a given PSM is undecidable. The key advantage of the top-down approach boils down to the fact that
it is nevertheless often possible to efficiently \emph{compute}
a valid projection from a global specification.
This is precisely the goal of the \emph{projection} operation.
A projection operation $(\hole)\proj$ is a function taking a PSMs as input
and returning either $\bot$ or a CSM;
it is a \emph{sound projection} if for all PSM~$M$,
if $ M\proj = \CSM{A} $ then $ \CSM{A} $ is a projection of~$M$;
it is a \emph{complete projection} if for all projectable $M$, $M\proj \ne \bot$.

The MST literature proposed a number of sound but incomplete
projection algorithms for global types.
Incompleteness makes MST frameworks lack robustness:
a projectable global type might still be rejected by the framework
because the projection is unable to handle it;
this leaves the user in the awkward position of having to
build a mental model for the projection algorithm to be able to
design viable global types.
\citet{DBLP:conf/cav/LiSWZ23} proposed the first sound and complete projection
algorithm for sender-driven global types.
Its PSPACE complexity stems from the need for determinisation.
Their evaluation, though, showed that these corner cases will likely not occur in reality. 
This provides initial evidence that robustness is achievable without compromising efficiency.

As is to be expected, the jump in expressivity by adopting PSMs cannot come for free: the problem of computing a sound and complete projection for PSMs is in general undecidable, a fact inherited from being able to encode \mbox{HMSCs}. This does not defeat us, however:
one of our main positive results is the definition of a very large class of PSMs, called \emph{Tame PSMs}, that enjoys sound and complete PSPACE projection. A PSM is \emph{tame} if it satisfies three constraints:
\begin{enumerate*}[(a)]
  \item \label{cond:tame-PSM-1} \negmedspace
  a technical refinement of the notion of the bound~$B$ for buffers,
  \item \label{cond:tame-PSM-2} \negmedspace
  that final states have no outgoing transitions, and
  \item \label{cond:tame-PSM-3} \negmedspace
  \emph{sender-driven choice}: at each branching point, there is a single sender taking distinct actions.
\end{enumerate*}

Our proof works by reducing the problem to an instance of projectability of MSTs with sender-driven choice,
which was proven to be decidable in PSPACE~\cite{DBLP:conf/cav/LiSWZ23}.
Our reduction is surprising because it produces a transformed protocol
which is different from the original one:
the encoded protocol language is different
and involves additional participants and additional message exchanges;
and yet its synthesized local specifications can be transformed back to local specifications for the original protocol.
Due to the mismatch in expressivity between PSMs and global types,
it is necessary that the reduction modifies the protocol semantics.
Furthermore, we show the reduction preserves the complexity class,
giving us a PSPACE algorithm for projectability
of sender-driven~PSMs.

Despite being a restriction,
Tame PSMs are still much more general than global types:
every sender-driven global type gives rise to a Tame PSM;
moreover, every example given so far is tame.
While the first two constraints
\ref{cond:tame-PSM-1}
and
\ref{cond:tame-PSM-2}
are not severe,
the third condition
\ref{cond:tame-PSM-3}
imposes a genuine restriction on expressivity.

In fact our main negative result is that sender-driven choice
is in a sense ``minimal'':
we prove that projectability is undecidable
for
global types (the most primitive kind of PSMs)
with general choice (aka ``mixed choice'').

\subsection{Processes and Typing}

To complete the top-down toolkit, we
provide a mean to check
that a program correctly implements a protocol specified as a CSM.
We achieve this by defining a CSM-based session type system for an expressive variant of \picalc\ with session interleaving and delegation.
The process calculus is adapted from~\cite{DBLP:journals/pacmpl/ScalasY19}
which represents a feature-rich modern presentation of multiparty session typing.

The type system's main soundness argument hinges, as is standard, on a subject reduction result: if a typable program can take a step, it remains typable.
From this, we derive two main safety correctness guarantees:
typable programs cannot produce type mismatches (\ie receiving a message that the process is not expecting) and terminated sessions do not leave orphan messages behind. 
We further prove a progress property under standard restrictions:
roughly speaking, if the process contains only one session, then, if the type of the session is not final, the process can take a step
(among the ones allowed by the type).
Global progress in the presence of session interleaving is out of scope
of this paper, but it may be attainable by adapting the (orthogonal) analysis employed in~\cite{CoppoDYP16,Kobayashi06}.

In line with our decoupling goal, the guarantees of the type system
are derived from the key properties of CSMs produced by projection
(\eg deadlock freedom).
This makes it even compatible with the bottom-up methodology of
\cite{DBLP:journals/pacmpl/ScalasY19}
which forgoes global types and proposes to check key properties
on local types directly.
If a CSM satisfying the desired properties is provided to our type system,
the corresponding guarantees apply to typable processes regardless
of the existence of a PSM representing the protocol.
This also liberates the type system completely from the choice of representation
for global protocols.

Overall we obtain an expressive, decoupled and robust backend
for top-down protocol development.
Finally, we also show that this backend is backwards-compatible with MSTs:
not only every sender-driven global type can be encoded as Tame PSM, but we also prove that,
when there exists a local type that is a projection of a global type,
our projection produces a CSM that can be translated back to a local type.
This shows under which conditions PSMs and global types as well as CSMs and local types are equivalent, despite their structural differences. 

 \section{Automata-based Protocol Specifications}
\label{sec:automata-based-protocols}

We start our technical development by introducing a language-theoretic
view of protocol specifications.
We define protocols as special languages of words,
and use CSMs as our local specifications of such languages.
Finally, we introduce PSMs as global protocol specifications.

\subsection{State Machines and Protocol Languages}

Let $\StdAlphabet$ be a finite alphabet.
The set of finite words over~$\StdAlphabet$ is denoted by $\StdAlphabet^*$, the set of infinite words by $\StdAlphabet^\omega\negthinspace$, and their union by~$\StdAlphabet^\infty\negthinspace$.
We write $\emptystring$ for the empty word.
For two strings $u \in \StdAlphabet^*$ and $v \in \StdAlphabet^\infty$,
their concatenation is $u \cat w$, and
we say that $u$ is a \emph{prefix} of $v$,
written $u \leq v$,
if there is some $w\in\StdAlphabet^\infty$ such that $u \cat w = v$;
$\pref(v)$ denotes all prefixes of~$v$ and is lifted to languages as expected.
For a language $L \subseteq \StdAlphabet^{\infty}$, we distinguish between the language of finite words $L_{\fin} \is L \inters \StdAlphabet^*$ and the language of infinite words $L_{\inf} \is L \inters \StdAlphabet^\omega$.

\begin{definition}[State machines]
A \emph{state machine} $A = (Q, \StdAlphabet, \delta, q_{0}, F)$ is a $5$-tuple with
a finite set of states $Q$,
an alphabet $\StdAlphabet$,
a transition relation
  $\delta \subseteq Q \times (\StdAlphabet \union \set{\emptystring}) \times Q$,
an initial state $q_{0}\in Q$ from the set of states, and
a set of final states $F$ with $F \subseteq Q$.
If $(q, a, q')\in\delta$, we also write $q \xrightarrow{a} q'\negthinspace$.
A \emph{run} is a finite or infinite sequence
$q_0\xrightarrow{a_0} q_1 \xrightarrow{a_1} \ldots$,
with $q_i~\in~Q$ and $a_i\in \StdAlphabet \union \set{\emptystring}$ for $i\geq 0$,
such that $q_0$ is the initial state, and for each $i\geq 0$, it holds that
$(q_i, a_i, q_{i+1})\in\delta$.
The \emph{trace} of such run is the word
$a_0\cat a_1 \cat \ldots\in \StdAlphabet^\infty\negthinspace$.
A run is \emph{maximal} if it ends in a final state or is infinite.
The \emph{(core) language} $\lang(A)$ of $A$ is the set of traces of all maximal runs.
If~$Q$ is finite, we say~$A$ is a \emph{finite state machine} (FSM).
A state machine is \emph{dense} if
for every $q \xrightarrow{x} q' \in \delta$,
the transition label $x$ is $\emptystring$ implies that $q$ has only one outgoing transition.
A state machine is \emph{deterministic} if
  $ \A (q, a, q') \in \delta. a \neq \emptystring $ and
  $ \A (q, a, q'), (q, a, q'') \in \delta. q' = q'' $.
We call a dense state machine \emph{deterministic} if
$ \A (q, a, q'), (q, a, q'') \in \delta. q' = q'' $.
A state $q \in Q$ is called a \emph{sink state}
if it has no outgoing transitions,
\ie $\A a \in \StdAlphabet \union \set{\emptystring}, q' \in Q. (q, a, q') \notin \delta$.
We say a state machine is \emph{\sinkfinal} if, for every state, it is final iff it is a sink.
\end{definition}

\paragraph{A language-theoretic view of protocols.}
Let $\val\in\MsgVals$ be a finite set of messages 
and $\procA,\procB,\ldots\in\Procs$ be a finite set of participants.
The alphabet of $\procA$'s send and receive events 
is the set
$
  \AlphAsync_{\procA} \is
  \Union_{\procB \in \Procs, \val \in \MsgVals}
  \set{
      \snd{\procA}{\procB}{\val}, \, \rcv{\procB}{\procA}{\val}
}
$.
A~send event $\snd{\procA}{\procB}{\val}$
  records that~$\procA$ sent the message~$\val$ to~$\procB$;
a receive event $\rcv{\procB}{\procA}{\val}$
  records that~$\procA$ received message~$\val$ from~$\procB$.
The alphabet of all events is the set
$\AlphAsync_{\Procs} \is \Union_{\procA \in \Procs} \AlphAsync_{\procA}$.
A paired event is a send event and its corresponding receive event:
$
  \msgFromTo{\procA}{\procB}{\val} \is
    \snd{\procA}{\procB}{\val}
    \cat
    \rcv{\procA}{\procB}{\val}
$.
We define the alphabet of paired events as
$
  \AlphSync_\Procs \is \set{
    \msgFromTo{\procA}{\procB}{\val}
      \mid \procA,\procB ∈ \Procs \text{ and } \val ∈ \MsgVals}
$.
For the remainder of the paper,
we fix an arbitrary set of
participants $\Procs$ and messages $\MsgVals$,
and often write $\AlphAsync$ for~$\AlphAsync_\Procs$ and $\AlphSync$ for $\AlphSync_\Procs$.
Given a word, we can project it to all letters of a certain shape:
for instance, $w\wproj_{\snd{\procA}{\procB}{\_}}$ is the subword of $w$ with all of its send events where $\procA$ sends any message to~$\procB$.
We write $\MsgVals(w)$ for the sequence of values in $w$ (in the same order). In 
$w = w_1 \ldots \in \Alphabet^\infty$,
a send event $w_i = \snd{\procA}{\procB}{\val}$ is \emph{matched} by a receive event $w_j = \rcv{\procA}{\procB}{\val}$ if $i < j$ and
$\MsgVals((w_1 \ldots w_i) \wproj_{\snd{\procA}{\procB}{}})$
=
$\MsgVals((w_1 \ldots w_j) \wproj_{\rcv{\procA}{\procB}{}})$.
A~send event $w_i$ is \emph{unmatched} if there is no such receive event~$w_j$. A language $L \subseteq \AlphAsync^\infty$ satisfies \emph{feasible eventual reception} if for every finite word $w \is w_1 \ldots w_n \in \pref(L)$ with an unmatched send event $w_i$, 
there is an extension $w \preforder w' \in L$ 
such that $w_i$ is matched in $w'$\negthinspace.

A sequence of send and receive events shall describe the execution of a protocol.
We define when such a sequence uses channels in FIFO manner.

\begin{definition}[FIFO Language]
A word $w \in \AlphAsync^\infty$ is \channelcompliant if
    for each prefix $w'$ of $w$, it holds that
    $\MsgVals(w'\wproj_{\rcv{\procA}{\procB}{\_}})$ is a prefix of $\MsgVals(w'\wproj_{\snd{\procA}{\procB}{\_}})$,
    for every $\procA,\procB \in \Procs$.
We denote the set of all infinite \channelcompliant words by $\FIFOlang_{\inf}$.
For finite words, we require that all send events are matched.
Thus,
$\FIFOlang_{\fin} \is \set{w \mid w \text{ is \channelcompliant and }
    \MsgVals(w \wproj_{\snd{\procA}{\procB}{\_}}) = \MsgVals(w \wproj_{\rcv{\procA}{\procB}{\_}}) \; \forall \procA, \procB \in \Procs}$.
We denote the (non-regular) set of all FIFO words by $\FIFOlang = \FIFOlang_{\inf} \dunion \FIFOlang_{\fin}$.
A language $L \subseteq \FIFOlang$ is a called a \emph{FIFO language}.
\end{definition}

As the model of distributed implementation of a protocol,
we adopt communicating state machines: parallel compositions of finite-control processes communicating asynchronously
via point-to-point FIFO channels.

\begin{definition}[Communicating state machines]
We call $\CSMabb{A} = \CSM{A}$ a \emph{communicating state machine} (CSM)  over $\Procs$ and~$\MsgVals$ if
${A}_\procA$
is a finite state machine
with alphabet~$\AlphAsync_{\negthinspace\procA}$ for every $\procA\in\Procs\negthinspace$. The semantics of a CSM~$\CSMabb{A}$
is the language $\lang(\CSMabb{A}) \subs \FIFOlang$
whose definition is standard (see \appendixref{app:csms}).
Roughly,
for each pair of distinct participants $\procA, \procB\in \Procs$ there are
two FIFO channels
$
\channel{\procA}{\procB},
\channel{\procB}{\procA}
\in \channels
$
allowing communication between the participants in the two directions.
The FSM 
$
  {A}_\procA =
    (Q_\procA, \AlphAsync_{\negthinspace\procA}, \delta_\procA, q_{0, \procA}, F_\procA)
$
describes the possible actions of participant~$\procA$.
A transition $(q_{\procA}, \snd{\procA}{\procB}{\val}, q'_{\procA}) \in \delta_\procA$ indicates that when $\procA$ takes a step
from $q_{\procA}$ to $q'_{\procA}$, it will send a message $\val$ to $\procB$
by enqueuing it in channel $\channel{\procA}{\procB}$.
Similarly,
$(q_{\procA}, \rcv{\procB}{\procA}{\val}, q'_{\procA}) \in \delta_\procA$
prescribes the reception by $\procA$ of message $\val$ from
the channel $\channel{\procB}{\procA}$.
A CSM's run always starts with empty channels and each participant running its respective initial state.
We denote the set of all reachable configurations (from the initial configuration) by $\reach(\CSMabb{A})$.
A~\emph{deadlock} of $\CSM{A}$ is a reachable configuration with no outgoing transition that has at least one non-empty channel or at least one participant not in a (local) final~state.
\end{definition}

The formal definition is given in \appendixref{app:csms}.
\noindent
As an example, \cref{fig:example1-csm} shows the three state machines constituting a CSM.

The goal of a protocol designer is to define
a protocol that can be realised as a CSM.
The \emph{projectable} languages are exactly those protocols which can.

\begin{definition}[Projections and Projectability]
A language $L \subseteq \AlphAsync^\infty$ is said to be \emph{projectable} if there exists a \emph{deadlock-free} CSM $\CSM{A}$ such that
 it generates the same language \emph{(protocol fidelity)}, i.e.,
$L = \lang(\CSM{A})$.
We say that $\CSM{A}$ is a projection of $L$.
\end{definition}

The asynchronous nature of CSMs makes them unable to enforce
the order between certain events without explicit synchronisation.
For instance, any CSM producing a word
$ 
 \snd{\procA}{\procB}{\val} \cat
 \snd{\procC}{\procD}{\val'} \cat w
$ 
will necessarily produce also
$ 
 \snd{\procC}{\procD}{\val'} \cat
 \snd{\procA}{\procB}{\val} \cat w
$. 
Which events can be reordered is context-dependent:
the events in the word $
 \snd{\procA}{\procB}{\val} \cat
 \rcv{\procA}{\procB}{\val}
$ cannot be swapped, as the receive is only possible after the send.
But in
$
 \snd{\procA}{\procB}{\val} \cat
 \snd{\procA}{\procB}{\val} \cat
 \rcv{\procA}{\procB}{\val}
$
the last two events can be reordered.
This has been formalised as equivalence relation by~\citet{DBLP:conf/concur/MajumdarMSZ21}, which can be seen as an instance of Lamport's happens-before relation \cite{DBLP:journals/cacm/Lamport78} 
for systems with point-to-point FIFO channels.

\begin{definition}\label{def:indistinguishability-relation}
The \emph{indistinguishability relation}
${\interswap} \subseteq \AlphAsync^* \times \AlphAsync^*$
is the smallest equivalence relation such that
\begin{enumerate}[label=\textnormal{(\arabic*)}]
\item
\label{rule:indistinguishability-relation-send-send}
If $\procA ≠ \procC$, then
$
 w \cat
 \snd{\procA}{\procB}{\val} \cat
 \snd{\procC}{\procD}{\val'} \cat
 u
 \; \interswap \;
 w \cat
 \snd{\procC}{\procD}{\val'} \cat
 \snd{\procA}{\procB}{\val} \cat
 u
$.

\item
\label{rule:indistinguishability-relation-receive-receive}
If $\procB ≠ \procD$, then
$
 w \cat
 \rcv{\procA}{\procB}{\val} \cat
 \rcv{\procC}{\procD}{\val'} \cat
 u
 \; \interswap \;
 w \cat
 \rcv{\procC}{\procD}{\val'} \cat
 \rcv{\procA}{\procB}{\val} \cat
 u
$.

\item
\label{rule:indistinguishability-relation-send-receive-indep}
If $\procA ≠ \procD \land (\procA ≠ \procC ∨ \procB ≠ \procD)
$, then
$
 w \cat
 \snd{\procA}{\procB}{\val} \cat
 \rcv{\procC}{\procD}{\val'} \cat
 u
 \; \interswap \;
 w \cat
 \rcv{\procC}{\procD}{\val'} \cat
 \snd{\procA}{\procB}{\val} \cat
 u
$.
\item
\label{rule:indistinguishability-relation-send-receive-history}
If $\card{w \wproj_{\snd{\procA}{\procB}{}}} >
    \card{w \wproj_{\rcv{\procA}{\procB}{}}}$,
then
$
 w \cat
 \snd{\procA}{\procB}{\val} \cat
 \rcv{\procA}{\procB}{\val'} \cat
 u
 \; \interswap \;
 w \cat
 \rcv{\procA}{\procB}{\val'} \cat
 \snd{\procA}{\procB}{\val} \cat
 u
$.
\end{enumerate}
We define $u \preceq_\interswap v$ if there is $w\in\AlphAsync^*$ such that $u \cat w \interswap v$.
Observe that $u \interswap v$ iff
$u \preceq_\interswap v$ and $v \preceq_\interswap u$.
For infinite words $u, v\in\AlphAsync^\omega$, we define $u \preceq_\interswap^\omega v$
if for each finite prefix $u'\negthinspace$ of $u$, there is a finite prefix~$v'$ of~$v$ such that
$u' \preceq_\interswap v'$.
Define $u \interswap v$ iff $u \preceq_\interswap^\omega v$ and $v\preceq_\interswap^\omega u$.
We lift the equivalence relation $\interswap$ on words to languages.
For a language $L$, we define
{ \small
\[
  \close(L) = \left\{ w' \;\middle|\; \bigvee
    \begin{array}{l}
    w' \in \AlphAsync^* \land ∃ w ∈ \AlphAsync^*. \; w \in L \text{ and } w' \interswap w \\
    w' ∈ \AlphAsync^ω \land \exists w \in \AlphAsync^\omega. \; w \in
    L \text{ and } w' \preceq_\interswap^\omega w
  \end{array} \right\}.
\]
}
\end{definition}

\begin{lemma}[\cite{DBLP:conf/concur/MajumdarMSZ21}]
For any CSM $\CSM{A}$, $\lang(\CSM{A}) = \close(\lang(\CSM{A}))$.
\end{lemma}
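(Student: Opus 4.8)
The plan is to prove the two inclusions separately. The inclusion $\lang(\CSM{A}) \subseteq \close(\lang(\CSM{A}))$ is immediate: $\interswap$ is an equivalence relation, hence reflexive, and $\preceq_\interswap^\omega$ is reflexive, so every word of $\lang(\CSM{A})$ witnesses its own membership in the closure. All the work lies in the reverse inclusion $\close(\lang(\CSM{A})) \subseteq \lang(\CSM{A})$, and the engine driving it is a \emph{commutation (diamond) lemma} for the global transition system of $\CSM{A}$.

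First I would recall that $w \in \lang(\CSM{A})$ holds iff there is a maximal run of the global transition system over configurations (tuples of local states together with the FIFO channel contents) whose trace is $w$: for finite $w$ the run must end in an accepting configuration (all participants final and all channels empty), while for infinite $w$ maximality merely means the run is infinite. The key structural fact is that each letter of the trace is fired by exactly one participant and touches only that participant's local state and a single channel: $\snd{\procA}{\procB}{\val}$ advances $\procA$ and enqueues on $\channel{\procA}{\procB}$, whereas $\rcv{\procA}{\procB}{\val}$ advances $\procB$ and dequeues from $\channel{\procA}{\procB}$.

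The core step is the claim that each of the four generating rules of $\interswap$ exactly captures independence of two consecutive transitions, so that whenever $c \xrightarrow{a} c_1 \xrightarrow{b} c_2$ occurs in a run and $a,b$ may be swapped by a rule, there is a configuration $c_1'$ with $c \xrightarrow{b} c_1' \xrightarrow{a} c_2$ reaching the very same $c_2$. I would verify this by case analysis: in rules~(1) and~(2) the two actions use distinct participants and distinct channels (distinct source for two sends, distinct destination for two receives), so the updates act on disjoint parts of the configuration; in rule~(3) the side condition $\procA \neq \procD \land (\procA \neq \procC \lor \procB \neq \procD)$ is precisely what forces the send's channel $\channel{\procA}{\procB}$ and the receive's channel $\channel{\procC}{\procD}$ to differ and the acting participants to differ. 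The one rule needing a genuine FIFO argument is rule~(4): there the send and receive share the channel $\channel{\procA}{\procB}$, but the hypothesis $\card{w \wproj_{\snd{\procA}{\procB}{}}} > \card{w \wproj_{\rcv{\procA}{\procB}{}}}$ guarantees the channel is already nonempty, so the receive dequeues a message present at the head while the send enqueues at the tail; acting on opposite ends of a nonempty queue, they commute. I expect this FIFO case to be the main obstacle, being the one place where the precise form of the side condition is essential.

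Finally I would lift the diamond to whole words. For finite words, since $\interswap$ is the least equivalence closed under the four rules, $w' \interswap w$ is witnessed by a finite chain of single swaps; applying the commutation lemma along a fixed accepting run of $w$ rewrites it into an accepting run of $w'$ reaching the identical final configuration, so $w' \in \lang(\CSM{A})$. For infinite words I would first strengthen the finite statement to a prefix form: if $u \preceq_\interswap v$ and $v$ labels a partial run reaching $c$, then $u$ labels a partial run that can be completed to reach $c$. Given $w' \preceq_\interswap^\omega w$ with $w \in \lang(\CSM{A})$, every finite prefix of $w'$ is then the trace of some partial run; the partial runs labelled by prefixes of $w'$ form an infinite tree that is finitely branching (each FSM being finite), so König's lemma yields an infinite run with trace $w'$. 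As infinite runs are maximal, $w' \in \lang(\CSM{A})$, which completes the reverse inclusion.
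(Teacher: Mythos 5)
Your proof is correct in substance; note that the paper itself does not prove this lemma but imports it from \citet{DBLP:conf/concur/MajumdarMSZ21} (their Lemma~22), and your argument --- reflexivity for one inclusion, a per-rule commutation diamond on configurations for the other, with the FIFO side condition of rule~(4) doing the real work, and K\"onig's lemma for the infinite case --- is essentially the argument used there. The only point to tighten is the K\"onig step: local FSMs may have $\emptystring$-transitions, so the tree of partial runs indexed by trace prefixes is not literally finitely branching unless you first quotient out (or bound) the silent steps; this is a routine repair but should be stated.
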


\begin{example}
\label{ex:interswap-and-fairness}
For finite words $\close(\hole)$ is standard.
For infinite words, though, the situation is a bit counterintuitive.
Let us consider
$w \is (\snd{\procA}{\procB}{\val} \cat \rcv{\procA}{\procB}{\val})^\omega$.
It is easy to construct a CSM $\CSM{A}$, with FSMs $A_\procA$ and $A_\procB$, that accepts $w$. CSMs do not promise any sort of fairness for infinite runs so there is an infinite run for
$(\snd{\procA}{\procB}{\val})^\omega$
where only $A_\procA$'s transitions are scheduled.
This is why $\close(\hole)$ is defined using
$\preceq^\omega_\interswap$, giving
$(\snd{\procA}{\procB}{\val})^\omega
\in
\close((\snd{\procA}{\procB}{\val} \cat \rcv{\procA}{\procB}{\val})^\omega)$.
\end{example}

\subsection{Protocol State Machines}
\label{sec:protocol-state-machines}
\label{sec:psm}

We now introduce PSMs as a mean to specify protocol languages
from a global, centralised perspective.
The idea, shared with both global types and HMSCs,
is to specify only a core subset of the admissible executions,
\eg the ones where there is a bounded delay between sends and matching receives,
and obtain the full set of admissible executions by closing the core language
using $\close(\hole)$.

We adapt the notion of $B$-bounded from~\cite{DBLP:journals/fuin/GenestKM07}
to formalise the idea of ``bounded delay'' between matching events.

\begin{definition}[$B$-bounded and \sumBounded{B}]Let $B \in \Nat$ be a natural number.
A \channelcompliant word $w$ is \emph{$B$-bounded}, resp.\ \emph{\sumBounded{B}},  if for every prefix $w'$ of $w$ and participants $\procA, \procB \in \Procs$, it holds that
\mbox{$\card{w' \wproj_{\snd{\procA}{\procB}{}}} -
\card{w' \wproj_{\rcv{\procA}{\procB}{}}} \leq B$},
resp.\,
{ \small
$ \sum_{\procA \neq \procB \in \Procs}
    \left(\card{w' \wproj_{\snd{\procA}{\procB}{}}} -
    \card{w' \wproj_{\rcv{\procA}{\procB}{}}}\right)
    \leq B
$. }
We define the (regular) set of \mbox{$B$-bounded} FIFO words:
$\FIFO_B \is \set{ w \in \FIFO \mid w \text{ is $B$-bounded}}$.
\end{definition}

\begin{definition}[Protocol State Machine]
A dense FSM $\PSM = (Q, \AlphAsync, \delta, q_{0}, F)$
is a $B$-PSM if $\lang(\PSM) \subs \FIFO_B$ and $\lang(\PSM)$ satisfies feasible eventual reception.
The semantics of $\PSM$ defined as
$\semantics(\PSM) \is \close(\lang(\PSM))$.
Moreover, $\PSM$ is a PSM if it is a $B$-PSM for some~$B$.
\end{definition}

By definition, PSMs specify FIFO languages;
importantly, although the core language $\lang(\PSM)$ is \pre B-bounded,
the semantics $\close(\lang(\PSM))$ 
includes non-$B$-bounded words
and will not even be regular in general.
Note that, it is decidable to check if an FSM is a \pre B-PSM.
 
In \appendixref{app:FER-reflected-and-preserved}, we show that $\close(\hole)$ preserves and reflects feasible eventual reception:
if $L \subseteq \AlphAsync^\infty$ satisfies feasible eventual reception, then $\close(L)$ does,
and if $\close(L)$ satisfies feasible eventual reception, then $L$ does.
More generally, every property that is preserved by $\close(\hole)$ can be soundly checked on the core language of a PSM.
If the property is also reflected by $\close(\hole)$, the property holds if and only if it holds for the core language.

\begin{definition}
\label{def:sumOnePSM}
A PSM $\PSM$ is a \sumOnePSM if its core language $\lang(\PSM)$ is \sumBounded{\textit{1}}.
We may abuse notation and use $\AlphSync_\Procs$ as alphabet for \sumOnePSMs.
\end{definition}

\begin{example}[Kindergarten Leader Election]
\label{ex:kindergarten-leader-election}
We consider a protocol between two participants \p{e} (evens) and \p{o} (odds).
It can be used to quickly settle a dispute between children (hence the name).
Both children pick~$0$ or~$1$ and tell each other their pick at the same time.
Child \p{e} wins if the sum is even while \p{o} wins if the sum is odd.
At the end, the loser concedes by sending the message $\lbl{win}$ to the winner.
The protocol is specified as a PSM in \cref{fig:kle-psm}
(and as an HMSC in \cref{fig:kle-hmsc}).
Note that specifying this protocol requires
the ability of issuing send and receive events independently.
If one insisted on issuing send and matching receives together,
as in global types and \sumOnePSMs,
one of the children would be forced to reveal their hand first,
undermining the purpose of the protocol.
\end{example}

\begin{figure}[t]
\begin{subfigure}[b]{0.62\textwidth}
\resizebox{0.95\textwidth}{!}{
 \begin{tikzpicture}[psm]
  \node[state,] (s0) {} [
    psm tree,
    level 1/.style={sibling distance=4em},
    level 2/.style={sibling distance=2em},
  ]
    child { node[state] {}
      child { node[state] {}
        child {node[state]{} child {node[state] (s00) {} [recv=e from o:0]} [recv=o from e:0]}
      [send=o to e:0]}
child { node[state] {}
        child {node[state]{} child {node[state] (s01) {} [recv=e from o:1]} [recv=o from e:0]}
      [down,send=o to e:1]}
[send=e to o:0]}
child { node[state] {}
      child { node[state] {}
        child {node[state]{} child {node[state] (s11) {} [recv=e from o:1]} [recv=o from e:1]}
      [send=o to e:1]}
child { node[state] {}
        child {node[state]{} child {node[state] (s10) {} [recv=e from o:0]} [recv=o from e:1]}
      [down,send=o to e:0]}
[down,send=e to o:1]
    }
  ;

  \node[state,between=s11 and s00,right=2em] (wE) {}
  [psm tree,level distance=4em]
    child { node[state] {}
      child { node[state,final] {} [recv=e from o:\win]}
    [send=o to e:\win]}
  ;
  \node[state,between=s01 and s10,right=2em] (wO) {}
  [psm tree,level distance=4em]
    child { node[state] {}
      child { node[state,final] {} [down,recv=o from e:\win]}
    [down,send=e to o:\win]}
  ;
  \draw
    (s00) edge (wE)
    (s11) edge (wE)
    (s01) edge (wO)
    (s10) edge (wO)
  ;
\draw[trans] (s0.west) ++(-2ex,0) -- (s0);
\end{tikzpicture}
 }
\vspace{-1ex}
\caption{KLE as a PSM.}
 \label{fig:kle-psm}
\end{subfigure}
\begin{subfigure}[b]{0.37\textwidth}
\resizebox{0.97\textwidth}{!}{
 \begin{tikzpicture}[hmsc,baseline,msg/.append style={tight,above=2pt,pos=.2}]
\begin{scope}[msc=s1,final]
  \begin{scope}[participant=e]
    \node[head];
    \node[event] {};
    \node[event] {};
    \node[event] {};
  \end{scope}

  \begin{scope}[participant=o]
    \node[head];
    \node[event] {};
    \node[event] {};
    \node[event] {};
  \end{scope}

  \draw[messages]
    (e-2) edge node[msg]{0} (o-3)
    (o-2) edge node[msg]{0} (e-3)
    (o-4) edge node[msg,midway]{\win} (e-4)
  ;
\end{scope}

\begin{scope}[msc=s2,final]
  \begin{scope}[participant=e]
    \node[head,right=of s1o];
    \node[event] {};
    \node[event] {};
    \node[event] {};
  \end{scope}

  \begin{scope}[participant=o]
    \node[head];
    \node[event] {};
    \node[event] {};
    \node[event] {};
  \end{scope}

  \draw[messages]
    (e-2) edge node[msg]{1} (o-3)
    (o-2) edge node[msg]{1} (e-3)
    (o-4) edge node[msg,midway]{\win} (e-4)
  ;
\end{scope}

\begin{scope}[msc=s3,final]
  \begin{scope}[participant=e]
    \node[head,right=of s2o];
    \node[event] {};
    \node[event] {};
    \node[event] {};
  \end{scope}

  \begin{scope}[participant=o]
    \node[head];
    \node[event] {};
    \node[event] {};
    \node[event] {};
  \end{scope}

  \draw[messages]
    (e-2) edge node[msg]{1} (o-3)
    (o-2) edge node[msg]{0} (e-3)
    (e-4) edge node[msg,midway]{\win} (o-4)
  ;
\end{scope}

\begin{scope}[msc=s4,final]
  \begin{scope}[participant=e]
    \node[head,right=of s3o];
    \node[event] {};
    \node[event] {};
    \node[event] {};
  \end{scope}

  \begin{scope}[participant=o]
    \node[head];
    \node[event] {};
    \node[event] {};
    \node[event] {};
  \end{scope}

  \draw[messages]
    (e-2) edge node[msg]{0} (o-3)
    (o-2) edge node[msg]{1} (e-3)
    (e-4) edge node[msg,midway]{\win} (o-4)
  ;
\end{scope}
\scoped[exclude from bounding box]
\path (s2.north) -- node[initial,empty block,name=s0,above=1em]{} (s3.north);
\draw[trans]
  (s0) edge[bend right=10pt] (s1.north east)
       edge (s2.north)
       edge (s3.north)
       edge[bend left=10pt] (s4.north west)
  ;

\end{tikzpicture} }
\vspace{-1ex}
 \caption{KLE as an HMSC.}
 \label{fig:kle-hmsc}
\end{subfigure}
\caption{Kindergarten Leader Election (KLE).}
\vspace{-1ex}
\label{fig:kindergarten-leader-election}
\end{figure}
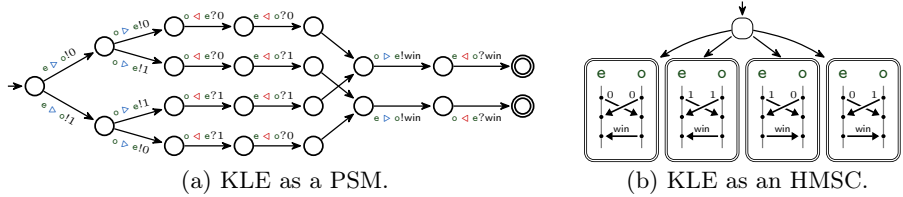
 \section{Projection: From PSMs to CSMs}
\label{sec:projection-dec-undec}

A CSM~$\CSMabb{A}$ is a projection of a PSM~$M$,
if $\CSMabb{A}$ is a projection of $ \semantics(M) $.
In this section, we explain two main results.
The first is positive:
we show that sound and complete projection
is decidable for Tame PSMs.
The second is negative:
we show that the sender-driven restriction of Tame PSMs
is necessary: if we drop the restriction,
projectability becomes undecidable even for sink-final \sumOnePSMs.
The full proofs can be found in
\appendixref{app:projection-dec-undec}.

\subsection{Sound and Complete Projection for Tame PSMs}
\label{sec:sound-and-complete-projection-for-tame-PSMs}

The idea of the decidability result is to reduce projectability of a Tame PSM to
projectability of a (different) sender-driven global type,
which can then be handled
using the sound and complete algorithm of \cite{DBLP:conf/cav/LiSWZ23}. Furthermore, the reduction is such that a projection of the original
PSM can be read off a projection of the global type.
Before sketching the idea behind the reduction, we define Tame PSMs formally.
Tame PSMs satisfy three conditions:
they are sink-final, sender-driven, and satisfy some more fine-grained
bounds on the message queues.

\begin{definition}[Choice restrictions for PSMs]Let $\PSM = (Q, \AlphAsync, \delta, q_{0}, F)$ be a PSM.
The PSM $\PSM$ satisfies \emph{sender-driven choice}
if
there is a function
$\choicefunction \from Q \to \Procs$ such that for all states $q, q'$ such that $q \xrightarrow{x} q'$ with $x \in \AlphAsync_!$, it holds that $\choicefunction(q)$ is the sender for $x$, i.e., $x = \snd{\choicefunction(q)}{\_}{\_}$.
In addition, we say $\PSM$ is \emph{directed} if for every state $q$, there is also a dedicated receiver $\procA$, i.e., all transition labels from $q$ are of the form $\snd{\choicefunction(q)}{\procA}{\_}$.
Last, if there is no dedicated sender but all transitions are still distinct, \ie $\PSM$ is deterministic, we say that it satisfies \emph{mixed choice}.
\end{definition}

\begin{definition}[Channel bounds for PSMs]
We define \emph{channel bounds} as a partial function $\chanBounds \from \channels \pto \Nat$ from channels to natural numbers, where $\domainOf(\chanBounds)$ denotes the domain of $\chanBounds$.
Given a PSM $\PSM$, we say that $\PSM$ respects $\chanBounds$ if the following holds for every
$\channel{\procA}{\procB} \in \channels$:
\begin{itemize}
\item If $\channel{\procA}{\procB} \notin \domainOf(\chanBounds)$,
        then every message from $\procA$ to $\procB$ is immediately followed by a receive:
        for every state $q$ and transition from $q$ to $q'$ labelled with $\snd{\procA}{\procB}{\val}$, it holds that there is a single transition from $q'$ and it is labelled with $\rcv{\procA}{\procB}{\val}$.
\item If $\channel{\procA}{\procB} \in \domainOf(\chanBounds)$,
        then
$w \wproj_{\snd{\procA}{\procB}{}, \rcv{\procA}{\procB}{}}$
        is
        $\chanBounds(\channel{\procA}{\procB})$-bounded
        for every $w \in \semantics(\PSM)$.
\end{itemize}
\end{definition}

A PSM that respects $\chanBounds$ with $\chanBounds = \emptyset$
is a PSM which only uses paired events, just like global types do.
Thus checking the condition with $\chanBounds = \emptyset$ is a
trivial syntactic check.
For general PSMs, it is possible to generate valid channel bounds
with a sound algorithm we propose
in \cref{sec:checking-tameness-for-psms}.
We conjecture the algorithm to be also complete,
\ie to always output some bounds if they exist.

\begin{definition}[Tame PSMs]
A \emph{Tame} PSM is a pair $(M,\chanBounds)$
where the PSM~$M$
is sender-driven, \sinkfinal, and
respects the channel bounds~$\chanBounds$.
\end{definition}

We can now sketch the idea behind the reduction.
Fundamentally, the gap in expressivity between Tame PSMs and
sender-driven global types is that in PSMs
sends and matching receives do not need
to appear one right after the other.
One can observe, however, that one could replicate the same asynchrony
of some trace $ \ev>{p->q:m} \cdots \ev<{p->q:m} $ by introducing an
intermediary participant $\p{(p, q)}$ that is always ready to forward
messages from $\p{p}$ to $\p{q}$,
leading to a trace
$
  \act{p->(p,q):m}
  \cdots
  \act{(p,q)->q:m}
$ where the sends and matching receives between participants and the intermediaries are now immediately adjacent.
The channel bounds~$\chanBounds$ tell us exactly for which channels we need to
introduce intermediaries; moreover the bound on the buffers induced by $\chanBounds$ makes sure that these intermediaries will not introduce any spurious dependency in the executions.
To consolidate the idea, we show how it applies to our KLE example.

\begin{figure}[t]
\centering
\scalebox{0.85}{
 \begin{tikzpicture}[psm]
  \node[state,] (s0) {} [
    psm tree,
    level distance=5em,
    level 1/.style={sibling distance=4em},
    level 2/.style={sibling distance=2em},
  ]
    child { node[state] {}
      child { node[state] {}
        child {node[state]{} child {node[state] (s00) {} [recvchanpaired=e from o:0]} [recvchanpaired=o from e:0]}
      [sendchanpaired=o to e:0]}
child { node[state] {}
        child {node[state]{} child {node[state] (s01) {} [recvchanpaired=e from o:1]} [recvchanpaired=o from e:0]}
      [down,sendchanpaired=o to e:1]}
[sendchanpaired=e to o:0]}
child { node[state] {}
      child { node[state] {}
        child {node[state]{} child {node[state] (s11) {} [recvchanpaired=e from o:1]} [recvchanpaired=o from e:1]}
      [sendchanpaired=o to e:1]}
child { node[state] {}
        child {node[state]{} child {node[state] (s10) {} [recvchanpaired=e from o:0]} [recvchanpaired=o from e:1]}
      [down,sendchanpaired=o to e:0]}
[down,sendchanpaired=e to o:1]
    }
  ;

  \node[state,between=s11 and s00,right=2em] (wE) {}
  [psm tree,level distance=6em]
    child { node[state] {}
      child { node[state,final] {} [recvchanpaired=e from o:\win]}
    [sendchanpaired=o to e:\win]}
  ;
  \node[state,between=s01 and s10,right=2em] (wO) {}
  [psm tree,level distance=6em]
    child { node[state] {}
      child { node[state,final] {} [down,recvchanpaired=o from e:\win]}
    [down,sendchanpaired=e to o:\win]}
  ;
  \draw
    (s00) edge (wE)
    (s11) edge (wE)
    (s01) edge (wO)
    (s10) edge (wO)
  ;
\draw[trans] (s0.west) ++(-2ex,0) -- (s0);
\end{tikzpicture}
 }
\vspace{-1ex}
 \caption{Kindergarten Leader Election after the Channel-participant Encoding.}
 \label{fig:KLE-chan-part-enc}
\vspace{-2ex}
\end{figure}
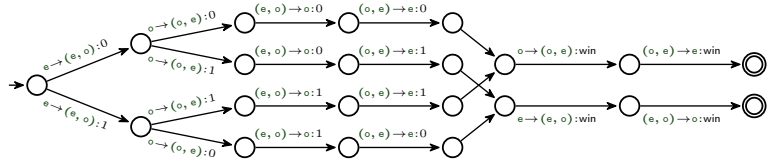

\begin{example}[Revisiting the KLE protocol]
\label{ex:revisiting-KLE}
In \cref{ex:kindergarten-leader-election}, we introduced the Kindergarten Leader Election protocol, whose communication pattern cannot be represented as a \sumOnePSM/global type: both children need to commit to the number they send before they receive the other's message.
Its PSM (\cref{fig:kindergarten-leader-election})
is however tame:
it is sink-final, sender-driven and respects $\chanBounds(\p{e},\p{o})=\chanBounds(\p{o},\p{e})=1$.
The ``intermediary forwarders'' idea applied to the protocol
results in a protocol where some teachers (the intermediaries)
will act as depositories for the initial choices of the two children.
After committing their choice, each child is allowed to learn from the teacher
the choice of the other child.
The resulting PSM is given in \cref{fig:KLE-chan-part-enc}.
The names of the additional participants indicate the direction of communication: $\procChanWo{\p{e}}{\p{o}}$ forwards messages from $\p{e}$ to $\p{o}$.
Obviously, this encoding does not specify the same protocol.
Still, our construction shows that one can obtain a projection of the original protocol from a projection of the modified one, by appropriately removing the forwarding actions of the teachers. \end{example}

The example illustrates the simple case where $\chanBounds(\hole) \leq 1$;
in the more general case, the reduction is more involved and requires
more intermediaries.

\noindent
\begin{minipage}[b]{.48\textwidth}
The workflow of our encoding is visualised in \cref{fig:workflow-encoding}.
Given a PSM~$\PSM$, one first computes its encoding $\encchanpsm(\PSM)$. Second, one synthesizes a projection $\CSM{A}$ of the encoded protocol using results from~\cite{DBLP:conf/cav/LiSWZ23}.
Third, one decodes
to obtain a projection $\CSMl{\decchanfsm(A_\procA)}$ of~$\PSM$.
\end{minipage}
\hfill
\begin{minipage}[b]{.52\textwidth}
\centering \resizebox{1.02\textwidth}{!}{
    \begin{tikzpicture}[
  thick,
  annot/.style={
    auto,
    sloped,font=\tiny,
  },
  number/.style={
    inner sep=.5pt,
    outer sep=0pt,
    circle,
    font=\tiny,
    draw,
    semithick,
  },
  num/.style={
    label={[number]left:#1},
  },
]
  \matrix[
    matrix of math nodes,
    column sep=4em,
    row sep=5em,
    ampersand replacement=\&,
  ] (M) {
    \PSM \& \encchanpsm(\PSM)\\
    \CSMl{\decchanfsm(A_\procA)} \& \CSM{A}\\
  };
  \draw[densely dashed,->]
    (M-2-1) edge node[annot]{implements} (M-1-1)
    (M-2-2) edge node[annot]{implements} (M-1-2)
    ;
  \draw[->]
    (M-1-2) edge[bend left] node[annot,num=2]{synthesis} (M-2-2)
    (M-1-1) edge node[annot,num=1]{$\encchanpsm(\hole)$} (M-1-2)
    (M-2-2) edge node[annot,num=3,xshift=1em]{$\decchanfsm(\hole)$} (M-2-1);
\end{tikzpicture}
 }
\vspace{-5ex}
\captionof{figure}{Workflow of encoding.}
\label{fig:workflow-encoding}
\end{minipage}

\begin{restatable}{theorem}{implDecPSMsChoice}
\label{thm:implDecPSMsChoice}
Checking projectability of Tame PSMs is in \mbox{PSPACE}.
One can also synthesize a projection in \mbox{PSPACE}.
\end{restatable}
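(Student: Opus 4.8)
The plan is to realise the three-step pipeline of \cref{fig:workflow-encoding}: encode the Tame PSM~$\PSM$ into a sender-driven global type $\encchanpsm(\PSM)$, run the sound and complete \mbox{PSPACE} projection of~\cite{DBLP:conf/cav/LiSWZ23} on it, and decode the resulting CSM back into a projection of~$\PSM$ via $\decchanfsm(\hole)$. Correctness then rests on three claims: (i)~$\encchanpsm(\PSM)$ is a \sinkfinal{} sender-driven \sumOnePSM{} (hence a global type on which~\cite{DBLP:conf/cav/LiSWZ23} applies); (ii)~$\PSM$ is projectable iff $\encchanpsm(\PSM)$ is; and (iii)~for every projection~$\CSM{A}$ of $\encchanpsm(\PSM)$, the decoded family $\CSMl{\decchanfsm(A_\procA)}$ is a projection of~$\PSM$.

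First I would pin down the encoding. For each channel $\channel{\procA}{\procB}$ in $\domainOf(\chanBounds)$ with bound $k = \chanBounds(\channel{\procA}{\procB})$, introduce $k$ fresh forwarder participants $\procChanWo{\procA}{\procB}$, and replace each unpaired transmission---whose send $\ev>{p->q:m}$ and matching receive $\ev<{p->q:m}$ are separated in~$\PSM$---by the paired exchange $\act{p->(p,q):m}$ in place of the send and the paired exchange $\act{(p,q)->q:m}$ in place of the receive, routed through the $k$ forwarders in round-robin order so that up to $k$ messages may be in flight while FIFO order is preserved. Channels outside $\domainOf(\chanBounds)$ already carry only paired events, by the definition of ``respects $\chanBounds$'', and are untouched. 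Every event of $\encchanpsm(\PSM)$ is then paired, and the sender at each state is that of $\PSM$'s choice function (the forwarders being deterministic relays), so $\encchanpsm(\PSM)$ is a \sinkfinal{} sender-driven \sumOnePSM{}: this gives~(i), and $\encchanpsm(\PSM)$ can be viewed as a global type to which~\cite{DBLP:conf/cav/LiSWZ23} applies. The encoded KLE of \cref{fig:KLE-chan-part-enc} (\cref{ex:revisiting-KLE}) is the special case $k=1$.

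The semantic correspondence~(ii)--(iii) is where I expect the real work. The decoding $\decchanfsm(\hole)$ inlines forwarders: wherever an original participant's FSM communicates with a forwarder it instead communicates directly with the true counterpart, and the forwarder FSMs are dropped. The central lemma is that erasing all forwarder events is a surjection from $\close(\lang(\encchanpsm(\PSM)))$ onto $\close(\lang(\PSM))$ that is compatible with prefixes and with $\interswap$ (\cref{def:indistinguishability-relation}); here the bound~$\chanBounds$ is essential, as it guarantees the $k$ relays never force a message to wait, so the encoding neither adds a spurious ordering dependency nor destroys any concurrency present in~$\PSM$. From this I would derive both directions of~(ii) together with~(iii): given a projection $\CSM{A}$ of $\encchanpsm(\PSM)$, the inlined CSM generates $\close(\lang(\PSM))$ and inherits deadlock-freedom, since any reachable deadlock after inlining lifts to a deadlock of $\CSM{A}$ (a forwarder step being always enabled once its input arrives); conversely, from any projection of~$\PSM$ one rebuilds a deadlock-free projection of $\encchanpsm(\PSM)$ by inserting explicit relay machines. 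The infinite-word case and feasible eventual reception require handling the $\preceq_\interswap^\omega$ machinery, so this step carries the bulk of the technical burden.

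Finally, for the complexity bound I would show the encoding is of polynomial size. Since $\lang(\PSM) \subs \FIFO_B$ and $\PSM$ has finitely many states~$Q$, the pending count of any channel stays at most $\card{Q}$ along every run: the first positions at which each value $1, \dots, c$ is reached are strictly increasing, so if $c > \card{Q}$ two of them share a state, yielding a cycle of positive net pending that could be pumped to break $B$-boundedness. Hence we may assume $\chanBounds(\channel{\procA}{\procB}) \le \card{Q}$ for every channel (replacing any larger given bound by the structurally induced one), so the total number of forwarders is $O(\card{\Procs}^2 \cdot \card{Q})$ and $\encchanpsm(\PSM)$ is polynomial in $\card{\PSM}$. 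Deciding projectability then runs the \cite{DBLP:conf/cav/LiSWZ23} procedure---\mbox{PSPACE} in its input---on this polynomially larger global type, hence \mbox{PSPACE} in~$\card{\PSM}$; synthesis analogously composes the \mbox{PSPACE} synthesis of~\cite{DBLP:conf/cav/LiSWZ23} with the polynomial-time decoding $\decchanfsm(\hole)$, giving the stated bounds.
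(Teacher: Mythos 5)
Your proposal follows essentially the same route as the paper: the channel-participant (forwarder) encoding into a sink-final sender-driven \sumOnePSM, the appeal to the complete \mbox{PSPACE} projection of~\cite{DBLP:conf/cav/LiSWZ23}, the decoding by inlining forwarders, and the central observation that the channel bounds prevent the relays from introducing spurious orderings (the paper packages this as the ``amicable''/local-language-preserving property of the synthesized forwarder machines, which is why it needs only that the \emph{specific} projection of~\cite{DBLP:conf/cav/LiSWZ23} decodes correctly, not your slightly stronger claim~(iii) for arbitrary projections). The argument is correct and matches the paper's proof in structure and substance.
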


\subsection{Mixed Choice yields Undecidable Projectability}
\label{sec:mixed-choice-yields-undecidable-projection}

Now, we show that the sender-driven choice restriction for Tame PSMs is necessary for projectability to be decidable. 
General PSMs inherit undecidability of projectability from HMSCs,
which in turn was proven by~\citet[Thm.\,3.4]{DBLP:journals/tcs/Lohrey03}.
Given our positive result for Tame PSMs,
the proof for undecidability ought to break
in the presence of sender-driven choice.
The original proof goes through several (often implicitly given or omitted)
automata-based transformations
and does not give any insights about where and how the transformations
break under the assumption of sender-driven choice.

\begin{restatable}{theorem}{checkingImplSinkfinalMixedChoicePSMUndec}
\label{thm:checking-implementability-sinkfinal-mixed-choice-PSM-undecidable}
The projectability problem for \sinkfinal mixed-choice \sumOnePSMs is undecidable. 
\end{restatable}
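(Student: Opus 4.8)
The plan is to prove undecidability by reduction from the Post Correspondence Problem (PCP): from each instance I would build a PSM~$M$ that is \sinkfinal, mixed-choice, and with $\lang(M)$ being \sumBounded{1}, and that is projectable if and only if the instance has \emph{no} solution. The first observation driving the construction is that the \sumOnePSM restriction is severe: since at most one message may be in flight, every send in the core language $\lang(M)$ is immediately followed by its matching receive (a state reached right after an unmatched send cannot branch on a further send without reaching buffer count~$2$, and by sink-finality together with $\lang(M)$ being FIFO it cannot be final either). Hence a \sinkfinal mixed-choice \sumOnePSM is, up to the atomicity of exchanges, exactly a \emph{mixed-choice global type}, and the theorem reads as: realisability of mixed-choice global types is undecidable. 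This framing matters because the positive result of~\cite{DBLP:conf/cav/LiSWZ23} already decides the sender-driven fragment, so the reduction must produce branching points that are genuinely mixed (carried by distinct senders), and the undecidability must be seen to rest precisely on that feature.

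The reduction exploits the closure semantics together with mixed-choice branching. The protocol has two client participants $p$ and $q$ who, during a generation phase, each receive from a coordinator a stream determined by a non-deterministically chosen index sequence $i_1\cdots i_n$: $p$'s stream spells out $u_{i_1}\cdots u_{i_n}$ and $q$'s stream spells out $v_{i_1}\cdots v_{i_n}$, every symbol transmitted as an atomic exchange so that the bound is respected. Each fully specified run commits both clients to the \emph{same} index sequence, and ends at a mixed-choice branch whose terminating message has a different \emph{sender} depending on a locally-checkable feature of the respective client's stream. Because $\close(\hole)$ recombines the independent local behaviours of $p$ and $q$, any CSM realising the specification is forced to also admit an \emph{implied} execution in which $p$ behaves as in one specified run while $q$ behaves as in another; such a cross-over corresponds exactly to two index sequences whose induced words coincide, i.e.\ to a PCP solution. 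The gadget is arranged so that, in the cross-over, both $p$ and $q$ have local views consistent with being the designated terminating sender: a realising CSM then either lets one of them send (producing a word outside $\close(\lang(M))$) or neither (a deadlock), so projectability fails exactly when a solution exists.

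I would then discharge three obligations. \textbf{Well-formedness:} $M$ is a dense FSM with $\lang(M)$ \sumBounded{1}, it satisfies feasible eventual reception (immediate from the atomic-exchange shape), it is \sinkfinal by construction, and its branching is mixed choice (deterministic, distinct outgoing labels, with the two terminating branches on distinct senders). \textbf{Soundness:} if the instance has no solution, exhibit a deadlock-free CSM with $\lang(\CSM{A}) = \close(\lang(M))$, essentially the canonical projection in which each client acts on its own stream and the terminating sender is determined unambiguously because no confusing cross-over exists. \textbf{Completeness:} if a solution exists, show no deadlock-free CSM can satisfy $\lang(\CSM{A}) = \close(\lang(M))$, by deriving the confusing implied execution above and arguing that any CSM reproducing all specified runs must also reproduce the cross-over, violating language equality or deadlock-freedom.

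The hard part will be the gadget design under the \sumOnePSM restriction. The classical undecidability proofs for HMSC realisability~\cite{DBLP:journals/tcs/Lohrey03} rely on basic blocks with several messages simultaneously in flight to stage the information cross-over; reproducing the same cross-over using only atomic, paired exchanges, while keeping the core language regular and the branching mixed but deterministic, is delicate. Concretely, I must encode the two streams so that (a) each is locally reconstructible by its client, (b) the $\close(\hole)$-recombination of two specified runs yields an execution whose membership in $\close(\lang(M))$ is equivalent to the non-existence of a match, and (c) the only obstruction created is a genuinely mixed-choice one, so that the construction does not accidentally fall into the sender-driven fragment decided by~\cite{DBLP:conf/cav/LiSWZ23}. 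Verifying (b) requires a careful analysis of the indistinguishability relation of \cref{def:indistinguishability-relation} on the constructed words, and is where most of the technical effort will go; the remaining bookkeeping (feasible eventual reception, sink-finality, determinism of branching) is routine.
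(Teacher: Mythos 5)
Your route is genuinely different from the paper's. The paper reduces from Turing-machine acceptance following Lohrey's strategy: it builds a five-participant PSM whose top-level branch (visible only to two participants) leads into two languages $L_l$ and $L_r$ encoding arbitrary, respectively non-accepting, sequences of TM configurations, so that the PSM is projectable iff $L_l = L_r$ iff the machine does not accept; mixed choice arises only when the sub-automata checking ``$C_i$ vs.\ $D_i$'' and ``$D_i$ vs.\ $C_{i+1}$'' are merged. Your plan --- PCP plus an implied cross-over scenario --- is closer to the Alur--Etessami--Yannakakis style of argument, and could in principle work, but the sketch has a gap at exactly the point you defer to ``gadget design'', and I do not think it is routine.

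The problem is the cross-over mechanism itself. For the closure argument to force an implied run in which $p$ follows index sequence $\sigma$ while $q$ follows $\tau \ne \sigma$, no participant's local view may correlate the two streams; otherwise that participant's projection pins down a single sequence and no cross-over is implied. But if you decouple the streams (two independent coordinators), then \emph{every} pair $(\sigma,\tau)$ yields an implied scenario, not only the PCP-witnessing ones, so the specification must already contain all the harmless pairs --- at which point the two streams are independent in $\close(\lang(M))$ and it is unclear where the ``same index sequence'' correlation defining the specified runs survives, or why only matching pairs cause a violation. Relatedly, ``two index sequences whose induced words coincide'' is not the PCP condition (PCP asks for a single $\sigma$ with $u_\sigma = v_\sigma$), so the claimed equivalence between cross-overs and solutions is wrong as stated. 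Finally, the soundness direction (no solution implies projectable) is not ``essentially the canonical projection'': the paper must invoke the closure conditions $CC_2$/$CC_3$ of Alur--Etessami--Yannakakis, together with a transfer from finite to infinite words, to establish protocol fidelity and deadlock freedom of the witness CSM, and your proof would need the analogous argument for your language. Until the gadget resolves the decoupling-versus-correlation tension, the reduction does not go through.
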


We reduce the membership problem for Turing Machines to checking projectability of a \sinkfinal mixed-choice \sumOnePSM with five participants. Initially, there is a branching which only two participants are involved in and learn about. 
Subsequently, all participants communicate Turing machine computations in the form of configurations in both branches.
If the (projected) language of one of the other participants is not the same for both branches, 
the PSM cannot be implementable because they do not know which branch to comply with and easily deadlock.
We also show that the reverse is the case. 
Hence, we specify a language for each branch and make both coincide if and only if the Turing Machine has no accepting computation, which is the case if and only if the PSM is projectable. 

The full proof is in 
\appendixref{app:checking-implementability-mixed-choice-undec}.
We adopt the proof strategy of Lohrey to PSMs and 
make every transformation explicit and carefully check
which structural properties the transformations preserve,
yielding a stronger undecidability result concerning the most rudimentary of PSMs: \sumOnePSMs.

 \section{Typing Programs against CSMs}
\label{sec:typing-for-csms}

We now overview the key ideas behind AMP's type system.
The formal details and full proofs can be found in 
\appendixref{app:typing-for-csms}.
To define programs,
we take inspiration from the process calculus
with session interleaving and delegation of~\cite{DBLP:journals/pacmpl/ScalasY19}.
The syntax of AMP's programs is reproduced in
\cref{fig:prog-syntax}.
The processes~$P$ represent the static program text.
As is standard, $\zero$ is the terminated process, $\parallel$ denotes parallel composition,
$ \pn{Q}[\vec{c}] $ denotes a sequential process running the code
defined by a finite set of definitions~$\Defs$.
The prefixes~
  $\IntCh_{i \in I} c[\procB_i] ! \labelAndMsg{l_i}{c_i}$ and
  $\ExtCh_{i \in I} c[\procB_i] ? \labelAndVar{l_i}{y_i}$
denote internal and external choice respectively,
with a non-empty finite set of indices $I$.
The endpoint of participant $\procA$ of a channel between $\procA$ and $\procB$ in a session~$s$, is denoted by $s[\procA][\procB]$;
$\procA$ can send a label~$l$ and some payload~$p$ to $\procB$ in session~$s$
by $s[\procA][\procB] ! \labelAndMsg{l}{p}$,
the dual reception is denoted by $s[\procB][\procA] ? \labelAndMsg{l}{x}$
(which binds the payload to~$x$).
To model delegation, a process must be able to send to another the capability
to act as participant~$\procA$ in session~$s$, denoted $s[\procA]$; the receiving process will bind such capability to a variable~$x$ and use it to form endpoints~$x[\procB]$; we thus have in general send/receive actions
on $c[\procB]$ where $c$ can be a variable or some $s[\procA]$.

The process $(\restr s \hasType \CSMabb{A}) \, P$
denotes the creation of a new bound session~$s$ used in~$P$.
The session is annotated with a (computationally irrelevant) CSM $\CSMabb{A}$,
taking the place of what is often a global type.
So far, we treated messages in CSMs very abstractly
as elements of a finite alphabet.
In processes, messages are more structured:
they have a label (from a finite set) and a payload (of some type).
The messages used by the CSM will thus be pairs $\labelAndType{l}{t}$
of a label~$l$ and a payload type~$t$, with the convention that
if, from a state~$q$, there are two outgoing transitions with the same sender,
receiver and label, they will agree on the type.

In applications, the payload can be of any base type
(\eg integers, strings),
or be a channel capability $s[\procA]$ (for delegation).
Since supporting base types is a simple exercise, we follow
\cite{DBLP:journals/pacmpl/ScalasY19}
and focus on the harder case of channel capabilities as payloads.
When using a CSM $\CSMabb{A} = \CSM{A}$
as a protocol specification for a session~$s$,
it is natural to consider the (control) states~$Q_{\procA}$ of $A_{\procA}$
to be the local types that can be associated to~$s[\procA]$. Therefore, in our setting we will consider
the set~$L$ of the states of
any $\CSMabb{A}$ annotating the process, as the possible payload types.
For simplicity, we assume all CSMs use disjoint sets of states,
so that we can unambiguously refer to the transitions from any state~$q$
by $\delta(q)$.

In particular, if the protocol specified by~$\CSMabb{A}$ can delegate
channels of a session following some CSM~$\CSMabb{B}$,
then the message alphabet of $\CSMabb{A}$ will include states of $\CSMabb{B}$.
When the CSMs are obtained through projection, it is natural to first
obtain~$\CSMabb{B}$ so we can refer to its states in writing the PSM that projects to $\CSMabb{A}$.
We thus assume there is an acyclic ``delegation partial order''
between the CSMs of a process:
$\CSMabb{B} < \CSMabb{A}$ means that $\CSMabb{A}$ can use the states of $\CSMabb{B}$ in its messages. 

\begin{figure}[t]
    \adjustfigure[\small]
    \begin{grammar}
     c \is
            x
        |   s[\procA]
\\
     P \is
            \zero
        |   P_1 \parallel P_2
        |   (\restr s \hasType \CSMabb{A}) \, P
        |   \IntCh_{i \in I} c[\procB_i] ! \labelAndMsg{l_i}{c_i} \seq P_i
        |   \ExtCh_{i \in I} c[\procB_i] ? \labelAndVar{l_i}{y_i} \seq P_i
        |   \pn{Q}[\vec{c}]
        \\
     R \is
            \zero
        |   R_1 \parallel R_2
        |   (\restr s \hasType \CSMabb{A}) \, R
        |   \smash{\IntCh_{i \in I} c[\procB_i] ! \labelAndMsg{l_i}{c_i} \seq P_i}
        |   \smash{\ExtCh_{i \in I} c[\procB_i] ? \labelAndVar{l_i}{y_i} \seq P_i}
        |   \pn{Q}[\vec{c}]
        \\ & \hspace{5.77ex}
        |   \queueProc{s}{\queuecontent}
        |   \err
        \\
     \Defs \is
        \bigl(\pn{Q}[\vec{x}] =
        \smash{\IntCh_{i \in I} c[\procB_i]} ! \labelAndMsg{l_i}{c_i} \seq P_i\bigr); \; \Defs
        | \bigl(\pn{Q}[\vec{x}] =
        \smash{\ExtCh_{i \in I} c[\procB_i]} ? \labelAndVar{l_i}{y_i} \seq P_i\bigr); \; \Defs
        | \emptystring
    \end{grammar}
    \vspace{-1ex}
    \caption{Syntax of AMP's \picalc.}
    \label{fig:prog-syntax}
  \vspace{-2ex}
\end{figure}

The semantics of the calculus is defined on runtime configurations~$R$
(defined in \cref{fig:prog-syntax}),
which are processes which additionally contain message queues
$\queueProc{s}{\queuecontent}$ for each active session~$s$.
Here $\queuecontent$ is a map from pairs of participants to
sequences of messages.
The reduction semantics is standard (cf.~\appendixref{app:typing-for-csms}).
The only reduction rules we highlight here are the ones leading to an error
configuration:
{ 
\begin{proofrules}
\small
  \inferrule*{
\forall i \in I \st
          \queuecontent(\procB_i, \procA) = \labelAndMsg{l}{\_} \ldots
          \land
          l_i \neq l
  }{
          \ExtCh_{i \in I} s[\procA][\procB_i] ? \labelAndVar{l_i}{y_i} \seq P_i
          \parallel
          \queueProc{s}{\queuecontent}
              \redto
          \err
  }

  \inferrule*{
          \queuecontent(\procA, \procB) \neq \emptystring \text{ for some } \procA, \procB
  }{
          (\restr s \hasType \CSMabb{A}) \; \queueProc{s}{\queuecontent}
          \redto
          \err
  }
\end{proofrules}
}

\noindent 
The first rule models \emph{unsafe communication}:
a process is stuck
because all the queues it is waiting to receive from
are not empty, but the labels of the first messages do not match any of the cases the process is expecting.
The second rule models \emph{orphan messages}: a session where all participants terminated but that has still non-empty message queues.
The safety guarantees of our type system will rule out both cases.
Note that \cite{DBLP:journals/pacmpl/ScalasY19,NODBLP:ScalasY19techreport} focuses on communication safety.
In addition, they consider S-deadlock freedom, which implies no orphan messages, but is an undecidable property that needs to be checked and is not necessarily transferred to processes by the type system: the property only holds if there is only one session, in which case much stronger conditions transfer. 
In our setting, deadlock freedom is transferred throughout by projection and the type system, yielding no orphan messages.

\begin{figure}[t]
  \adjustfigure[\small]
  \begin{proofrules}
  \infer*[right=\procTypingZero]{
}{
      \typingContextOne
      \typingContextCat
      \emptyset
          \types
      \zero
  }

  \infer*[right=\procTypingEnd]{
      \typingContextOne
      \typingContextCat
      \typingContextTwo \types P \\
\EndState(q)
}{
      \typingContextOne
      \typingContextCat
      c \hasType q,
      \typingContextTwo
          \types
      P
  }

  \infer*[right=\procTypingExtCh]{
      \delta(q) =
      \set{(\rcv{\procB_i}{\procA}{\labelAndType{l_i}{p_i}}, q_i) \mid i \in I} \\
      \meta{\forall i \in I \st}
      \typingContextOne \typingContextCat
          \typingContextTwo,
          c \hasType q_i,
          y_i \hasType p_i
          \types P_i
  }{
      \typingContextOne
      \typingContextCat
      \typingContextTwo,
      c \hasType q
          \types
      \ExtCh_{i \in I} c[\procB_i] ? \labelAndVar{l_i}{y_i} \seq P_i
  }

  \infer*[right=\procTypingIntCh]{
\delta(q) \sups
      \set{(\snd{\procA}{\procB_i}{\labelAndType{l_i}{p_i}}, q_i) \mid i \in I}
      \\
\meta{\forall i \in I \st}
      \typingContextOne \typingContextCat
          \typingContextTwo , c \hasType q_i,
          \set{c_j \hasType p_j}_{j \in I\setminus \set{i}}
          \types P_i
  }{
      \typingContextOne
      \typingContextCat
      \typingContextTwo,
      c \hasType q,
      \set{c_i \hasType p_i}_{i \in I}
          \types
      \IntCh_{i \in I} c[\procB_i] ! \labelAndMsg{l_i}{c_i} \seq P_i
  }

  \infer*[right=\procTypingRestr]{
      \typingContextTwo_s =
          \set{s[\procA] \hasType \initialState(\CSMabb{A}_{\procA})}_{\procA \in \ProcsOf{\!\!\CSMabb{A}}} \\
\typingContextOne
          \typingContextCat
          \typingContextTwo,
          \typingContextTwo_s
          \types
          P
  }{
      \typingContextOne
          \typingContextCat
          \typingContextTwo
          \types
          (\restr s \hasType \CSMabb{A})\, P
  }
  \end{proofrules}
  \vspace{-1ex}
  \caption{
    Typing rules for processes; 
    $\initialState(\hole)$ denotes a CSM's initial state.
  }
  \label{fig:proc-typing-short}
  \vspace{-2ex}
\end{figure}

\Cref{fig:proc-typing-short} shows the crucial rules of AMP's type system.
The typing judgement
$
  \typingContextOne
  \typingContextCat
  \typingContextTwo
  \types P
$
uses
a process~$P$\!,
a typing context $\typingContextOne$
for the types of the parameters $\vec{c}$ of
sequential processes $\pn{Q}[\vec{c}]$
(the definitions of which are typed separately against $\typingContextOne$);
a typing context $\typingContextTwo$
associating the variables~$x$
and the channel capabilities~$s[\procA]$
occurring free in~$P$, with some CSM state $q\in L$.
Rule~\procTypingZero says that a terminated process
is typable in the environment with no capabilities.
Rule~\procTypingEnd permits to discard the capabilities that have terminated:
$\EndState(q)$ holds for final states with no outgoing receive transition.
Rules~\procTypingExtCh and~\mbox{\procTypingIntCh} deal with communication.
Assume $c=s[\p{p}]$.
According to \procTypingExtCh,
to receive a message as participant \p{p} in session~$c$,
we look for the type~$q$ of~$c$ in the typing context and check the CSM transitions $\delta(q)$ are all receives $\set{(\rcv{\procB_i}{\procA}{\labelAndType{l_i}{p_i}}, q_i) \mid i \in I}$.
Then the process needs to be able to receive any branch~$i$,
resulting in the continuation $P_i$ which is typed in the context
extended with the corresponding payload type $y_i:p_i$,
and with the type of~$c$ changed to~$q_i$.
According to \procTypingIntCh,
to send, as $c$ with type~$q$, a message non-deterministically picked from a number of branches~$i\in I$, we have to make sure $q$ allows each branch,
including matching the types of the payloads.
Then each branch~$i$ continues as $P_i$ which is typed in a context
where $c$ has type $q_i$ and we lost ownership of the payload~$c_i$.
Finally, \procTypingRestr types a new session~$s$ used by~$P$,
by adding to the context a new binding $s[\p{p}] : q_{\p{p}}$ for each
participant \p{p} of the CSM~$\CSMabb{A}$ annotating the session,
with $q_{\p{p}}$ being the initial state of~$\p{p}$ in~$\CSMabb{A}$.

The first correctness criterion for the type system is to prove
subject reduction: if a process is typable, then every configuration reachable from it will be typable.
Thus, to state subject reduction,
we need to define when a runtime configuration is typable.
For this purpose, we define a second judgement
$
  \typingContextOne
  \typingContextCat
  \typingContextTwo
  \typingContextCat
  \typingContextThree
  \types R
$
that includes a third typing context $\typingContextThree$
used to type session message queues:
associating to each channel $s[\p{p}][\p{q}]$ a sequence of message types
$ l(q) $ (label and payload type).
The key to make typing of runtime configurations an inductive invariant,
is the following rule:
\vspace{-1ex}
{ 
\[
\small
  \inferrule*[right=\runtimeTypingRestr]{
(\vec{q}, \xi) \in \reach(\CSMabb{A})
      \\
      \typingContextTwo^s_{\vec{q}} =
          \set{s[\procA] \hasType \vec{q}_\procA}_{\procA \in \ProcsOf{\CSMabb{A}}}
      \\
      \typingContextThree^s_{\xi} =
          \set{s[\procA][\procB] \hasQueueType \xi(\procA,\procB)}_{\procA,\procB \in \ProcsOf{\CSMabb{A}}}
      \\
\typingContextOne
          \typingContextCat
          \typingContextTwo,
          \typingContextTwo^s_{\vec{q}}
          \typingContextCat
          \typingContextThree,
          \typingContextThree^s_{\xi}
      \types
      R
}{
      \typingContextOne
          \typingContextCat
          \typingContextTwo
          \typingContextCat
          \typingContextThree
          \types
          (\restr s \hasType \CSMabb{A})\, R
  }
\]
}
\vspace{-1ex}

\noindent
The main difference between \runtimeTypingRestr and \procTypingRestr is
that the typing context is not populated with capabilities
associated to initial states; instead the prover can pick any CSM
configuration $(\vec{q}, \xi)$
---where $\vec{q}$ collects the local state of each participant, and $\xi$ the contents of the message queues---
that is reachable from the initial configuration of $\CSMabb{A}$.
The states and the queues are used to initialise the typing context
to type the process~$R$ using the restricted session.

In what follows we assume
the definitions~$\Defs$ can be typed according to~$\typingContextOne$.
We say a process/runtime configuration is \emph{well-annotated} if
every CSM appearing in it is
\begin{enumerate*}[(1)]
  \item deadlock-free, and
  \item satisfies feasible eventual reception.
\end{enumerate*}
Here, \emph{annotated} indicates that the CSMs have no computational meaning but \emph{well} shows the need for certain guarantees, which our type system can preserve. 
Note that a process is automatically well-annotated
if the CSMs are obtained via projection.

\begin{theorem}[Subject Reduction]
\label{th:subj-red-closed}
  Given a well-annotated~$R$,
  if $
    \typingContextOne
    \typingContextCat
    \emptyset
    \typingContextCat
    \emptyset
    \types R
  $ and $R \redto R'$,
  then
  $
    \typingContextOne
    \typingContextCat
    \emptyset
    \typingContextCat
    \emptyset
    \types R'.
  $
\end{theorem}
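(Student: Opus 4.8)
The plan is to prove \cref{th:subj-red-closed} by induction on the derivation of $R \redto R'$, with a case analysis on the last reduction rule. Before that I would establish the standard auxiliary results: an inversion (generation) lemma for each typing rule; a weakening lemma that lets one add unused capability bindings to $\typingContextTwo$ and queue bindings to $\typingContextThree$; and a substitution lemma in two forms --- one that replaces a variable of type $q$ by a channel capability $s[\procA]$ (or another variable) of the same type, used for the receive and process-call cases, and one that instantiates the formal parameters $\vec{x}$ of a definition $\pn{Q}[\vec{x}]$ by the actual arguments $\vec{c}$. Since reduction is taken up to structural congruence, I would also show that typing is preserved by $\equiv$, so that the contextual closure rules follow from the induction hypothesis and only the principal reductions need detailed treatment.

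The core of the proof is the communication step, and the device that drives it is rule~\runtimeTypingRestr: a configuration $(\restr s \hasType \CSMabb{A})\,R$ is typed by selecting some reachable $(\vec{q}, \xi) \in \reach(\CSMabb{A})$, using $\vec{q}$ to type the capabilities $s[\procA]$ and $\xi$ to type the queues $s[\procA][\procB]$. For a send, inverting \procTypingIntCh shows that the chosen branch $\snd{\procA}{\procB_i}{\labelAndType{l_i}{p_i}}$ labels an outgoing transition of $\vec{q}_\procA$ and that the payload $c_i$ has type $p_i$; the reduction advances $\vec{q}_\procA$ to $q_i$ and appends one message to $s[\procA][\procB_i]$, so $(\vec{q},\xi)$ takes a matching CSM step to a still-reachable $(\vec{q}',\xi')$. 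I would then re-apply \runtimeTypingRestr at $(\vec{q}',\xi')$: the continuation is typed with $c$ at $q_i$ and $c_i$ dropped from the context, while the fresh queue binding records exactly $p_i$, so the transfer of the delegated capability is absorbed into $\typingContextThree$. The receive step is dual, using inversion of \procTypingExtCh, dequeueing the head, advancing to a reachable successor, and instantiating the bound $y_i$ with the received payload via the substitution lemma; the process-call step unfolds the definition and applies parameter substitution.

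The delicate part, and the step I expect to be the main obstacle, is ruling out both \emph{error} reductions, which is required because $\err$ is untypable and the statement quantifies over every reduction. For orphan messages, the premise forces the residual $(\restr s \hasType \CSMabb{A})\,\queueProc{s}{\queuecontent}$ with $\queuecontent \neq \emptystring$ and no participant code; by \procTypingEnd this is typable only if every $\vec{q}_\procA$ of the chosen reachable configuration is final, but a reachable configuration of a \emph{deadlock-free} CSM with all participants final and a non-empty queue is precisely a deadlock --- a contradiction. For unsafe communication I must exclude the situation where the head label of each queue the external choice reads from matches none of the expected labels $\set{l_i}$. I plan to prove a \emph{receive-consistency} lemma: if $(\vec{q},\xi) \in \reach(\CSMabb{A})$ and $\vec{q}_\procA$ has only receive transitions, then the head of every non-empty channel into $\procA$ carries a label appearing on one of those transitions. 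This is where the queue typing $\typingContextThree^s_\xi$ --- which mirrors the actual queued message types --- must be reconciled with the transition structure at $\vec{q}_\procA$, and I expect to derive it from the definition of $\reach$ together with deadlock freedom and feasible eventual reception, which jointly ensure that pending messages are consistent with the receiver's next actions instead of being stuck.

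What remains is the bookkeeping forced by delegation: payload types are CSM states drawn from strictly smaller CSMs along the acyclic order $\CSMabb{B} < \CSMabb{A}$, so the substitution and queue-typing steps above carry types between different sessions, and I would check that each session keeps its own reachability invariant independently. This part is routine but notationally heavy, and the acyclicity of $<$ keeps the supporting induction well-founded.
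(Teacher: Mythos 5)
Your overall plan coincides with the paper's proof of \cref{th:subj-red-closed}: the same reachability invariant threaded through \runtimeTypingRestr, a substitution lemma, admissibility of structural precongruence, deadlock freedom to dispose of the orphan-message error, and feasible eventual reception to dispose of the unsafe-communication error (the paper inducts on the typing derivation and inverts the reduction rather than the other way around, but that choice is immaterial). The one step that would fail is the ``receive-consistency lemma'' you propose for the \runtimeReductionErrOne case. As stated --- the head of \emph{every} non-empty channel into $\procA$ carries a label occurring on one of the receive transitions of $\procA$'s current state --- it is false. Take a CSM in which $\procA$ first receives $l_1$ from $\procB$ and then $l_2$ from $\procC$, and the reachable configuration in which both messages have been sent but neither received: the channel from $\procC$ to $\procA$ is non-empty with head $l_2$, which labels no transition of $\procA$'s current state, yet the CSM is deadlock-free and satisfies feasible eventual reception (the $l_2$ will be consumed two steps later). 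So no appeal to $\reach(\hole)$, deadlock freedom, or feasible eventual reception can prove your lemma, and the error rule does not require it: it only inspects the channels the external choice actually reads from.

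What you need, and what the paper proves, is weaker and differently shaped: it is impossible that \emph{all} of the channels $(\procB_i,\procA)$ read by the external choice are \emph{simultaneously} non-empty with heads mismatching every label $l_j$ associated with that channel. The argument is global rather than per-channel. Inverting \procTypingExtCh gives that $\delta(\vec{q}_\procA)$ is \emph{exactly} the set of those receives; if every such channel is FIFO-blocked by a mismatching head, then $\procA$ can never again take any transition, so the unmatched sends sitting at those heads can never be matched --- contradicting feasible eventual reception. (One still needs the auxiliary fact that the queue typing context faithfully mirrors the actual queue contents, the paper's \cref{lm:queue-types-model-queues}, to transport the mismatch from the process queues to the CSM configuration.) With your lemma replaced by this formulation, the rest of your plan goes through as in the paper; your orphan-message argument already matches the paper's, relying on $\EndState(\hole)$ forcing a stuck non-final configuration and hence a deadlock.
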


\begin{corollary}[Type Safety] 
\label{cor:safety}
  For a well-annotated~$R$,
  if $
    \typingContextOne
    \typingContextCat
    \emptyset
    \typingContextCat
    \emptyset
    \types R
  $ and \mbox{$R \redto^* R'$},
  then $R'$ does not contain~$\err$.
\end{corollary}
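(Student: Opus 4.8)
The plan is to obtain Type Safety directly from Subject Reduction (\cref{th:subj-red-closed}) together with the single observation that the error configuration $\err$ is not typable. This is the textbook ``well-typed configurations do not go wrong'' argument, so the real content is in the bookkeeping that lets us iterate Subject Reduction along the whole reduction sequence.

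First I would show that typability under empty session/queue contexts is preserved along $R \redto^* R'$, by induction on the length of the sequence. The base case is the hypothesis $\typingContextOne \typingContextCat \emptyset \typingContextCat \emptyset \types R$. For the inductive step, write $R \redto^* R'' \redto R'$ with $R''$ typable under empty contexts; to invoke \cref{th:subj-red-closed} on the step $R'' \redto R'$ I must check that $R''$ is well-annotated. This is where I would insert a small preservation lemma: the collection of CSMs annotating sessions in a configuration never grows under $\redto$, since each reduction rule only unfolds a definition, performs a send/receive on a queue, or opens/closes a session scope $(\restr s \hasType \CSMabb{A})$, and none of them introduces an annotating CSM that was not already present in the redex. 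Hence the two well-annotatedness requirements (deadlock freedom and feasible eventual reception of every annotating CSM) are inherited by every $R''$ in the sequence, and \cref{th:subj-red-closed} yields $\typingContextOne \typingContextCat \emptyset \typingContextCat \emptyset \types R'$. Thus every configuration reachable from $R$ is typable under empty session and queue contexts.

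Second, I would show that no typable runtime configuration contains $\err$. The key point is simply that the runtime typing rules provide no rule whose conclusion is a judgement for $\err$: $\err$ occurs in no rule's conclusion. Since the rules are syntax-directed on the structure of the configuration, any derivation of $\typingContextOne \typingContextCat \emptyset \typingContextCat \emptyset \types R'$ must decompose $R'$ through its constructors; in particular a parallel composition $R_1 \parallel R_2$ forces subderivations for each $R_i$, and a restriction $(\restr s \hasType \CSMabb{A})\, R_1$ forces a subderivation for $R_1$. Consequently, if $\err$ occurred anywhere as a subterm of $R'$, the derivation would eventually require a premise typing $\err$, which is impossible. Therefore a typable $R'$ cannot contain $\err$. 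Combining the two parts gives the corollary: every $R'$ with $R \redto^* R'$ is typable, and hence contains no $\err$.

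The main obstacle I anticipate is not the logical skeleton but exactly this iteration of Subject Reduction: because \cref{th:subj-red-closed} takes well-annotatedness as a hypothesis, one must genuinely verify the preservation lemma above rule-by-rule, confirming that deadlock freedom and feasible eventual reception of the annotating CSMs survive each reduction step. Once that preservation statement is established, the remainder is a routine induction together with a one-line inspection that $\err$ has no typing rule.
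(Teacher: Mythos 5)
Your proposal is correct and follows essentially the same route as the paper: the paper also proves the corollary by induction on the length of the reduction sequence, reapplying Subject Reduction at each step (its induction invariant additionally tracks that the per-session typing contexts correspond to reachable CSM configurations, which plays the role of your preservation lemma), and then concludes because $\err$ admits no typing rule. The only caveat is that your claim that the collection of annotating CSMs never grows must also account for CSMs introduced by unfolding process definitions from $\Defs$, but this is routine since those bodies are fixed once and for all and are assumed well-annotated.
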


For progress, the situation is more delicate:
just like in \cite{DBLP:journals/pacmpl/ScalasY19} and most other MST systems, allowing session interleaving may introduce inter-session dependencies that are not modelled in the global protocol (which only pertains intra-session dependencies).
We thus prove progress under these assumptions:
\begin{enumerate*}[label=(\roman*)]
  \item there is only one session running, and
  \item that each of its participants is implemented by a distinct process, and
  \item the CSM annotating it is sink-final.
\end{enumerate*}
To encode these extra restrictions, we define a ``Session Fidelity'' variant of the typing judgement
$
  \typingContextOne
  \typingContextCat
  \typingContextTwo
  \typingContextCat
  \typingContextThree
  \typesSFs R
$
which uses a subset of the rules of $\types$ to enforce the restrictions above.
Let $\typingContextTwo^s_{\vec{q}}$ and $\typingContextThree^s_{\xi}$ be
defined as in the premises of~\runtimeTypingRestr.

\begin{theorem}[Progress]
\label{thm:progress}
  Let $(\vec{q}, \xi)$ be a configuration of a
  sink-final, deadlock-free
  CSM $\CSMabb{A}$ satisfying feasible eventual reception.
  If $
    \typingContextOne
    \typingContextCat
    \typingContextTwo^s_{\vec{q}}
    \typingContextCat
    \typingContextThree^s_{\xi}
    \typesSFs R
  $,
  and $ (\vec{q}, \xi) $ can take a step,
  then there exist some $R'$ and $ (\pvec{q}', \xi') $,
  such that
    $R \redto R'$,
    $(\vec{q}, \xi)\redto(\pvec{q}', \xi')$,
    and
    $
      \typingContextOne
      \typingContextCat
      \typingContextTwo^s_{\pvec{q}'}
      \typingContextCat
      \typingContextThree^s_{\xi'}
      \typesSFs R'.
    $
\end{theorem}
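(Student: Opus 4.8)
The plan is to read \cref{thm:progress} as a combined progress-and-preservation statement for the restricted judgement $\typesSFs$, with the CSM configuration $(\vec{q},\xi)$ acting as a ghost state that the process is forced to track. The argument has three movements: first, invert the typing derivation to pin down the shape of $R$; second, use the \emph{given} CSM step to exhibit a matching process step $R \redto R'$; third, re-establish the typing invariant against the stepped configuration $(\pvec{q}',\xi')$.

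\emph{Canonical form.} Since $\typesSFs$ admits only single-session, one-process-per-participant runtime configurations together with the session's queues, inverting the derivation — in particular the $\runtimeTypingRestr$-style rule that installs $\typingContextTwo^s_{\vec{q}}$ and $\typingContextThree^s_{\xi}$ — yields, up to structural congruence, $R \equiv (\restr s \hasType \CSMabb{A})\,\bigl(P_{\procA_1} \parallel \cdots \parallel P_{\procA_n} \parallel \queueProc{s}{\queuecontent}\bigr)$, where each $P_{\procA}$ is typed with $s[\procA]\hasType\vec{q}_\procA$ and $\queuecontent=\xi$. I would then prove a small inversion lemma: if $\vec{q}_\procA$ is \emph{not} final, then (after unfolding any leading definition call $\pn{Q}[\vec c]$) $P_\procA$ is a guarded choice whose polarity is dictated by $\delta(\vec q_\procA)$ — an internal choice when the outgoing transitions are sends, an external choice when they are receives. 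Sink-finality makes this dichotomy exhaustive: a non-final state is a non-sink, hence has outgoing transitions, while $\procTypingIntCh$/$\procTypingExtCh$ forbid mixing polarities and $\procTypingEnd$/$\procTypingZero$ apply only at final states.

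\emph{Matching the step.} Because $(\vec q,\xi)$ can take a step, some participant $\procA$ has an enabled local transition; I pick such an $\procA$ and case on its polarity. If it is a send, $P_\procA$ is an internal choice; since $\procTypingIntCh$ requires the offered branches to be a \emph{subset} of $\delta(\vec q_\procA)$ (the $\sups$ premise), every branch the process offers is a genuine CSM send transition, and sends are non-blocking, so the process fires one and the CSM mirrors it. If it is a receive, $P_\procA$ is an external choice and $\procTypingExtCh$ forces $\delta(\vec q_\procA)$ to \emph{equal} the set of branches listened on; the enabled CSM receive means the head of the relevant queue in $\xi=\queuecontent$ matches one of these labels, so the receive reduction rule fires exactly that branch. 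Either way I obtain $R\redto R'$ together with a matching $(\vec q,\xi)\redto(\pvec q',\xi')$.

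\emph{Preservation and the hard part.} To close, I re-establish $\typingContextOne \typingContextCat \typingContextTwo^s_{\pvec q'} \typingContextCat \typingContextThree^s_{\xi'} \typesSFs R'$. Only $\procA$'s local state changed, moving from $\vec q_\procA$ to the continuation $q_i = \pvec q'_\procA$ of the fired transition; the premise of the applied rule already types the continuation at $q_i$ in the updated payload context, and $\typingContextThree$ changes only in the single channel touched, matching $\xi'$, while the other $P_\procB$ and their typings are untouched, so reassembling under the $\typesSFs$ restriction rule gives the claim. I expect the main obstacle to be the canonical-form/inversion lemma: making precise that the $\typesSFs$ restrictions genuinely confine $R$ to the one-process-per-participant shape — so that no participant can be ``ahead of'' or ``behind'' the CSM — and tracking the payload-ownership bookkeeping of delegation through the fired branch so that the updated $\typingContextTwo,\typingContextThree$ exactly realise $(\pvec q',\xi')$. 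Deadlock-freedom and feasible eventual reception enter as the standing coherence assumptions that keep this correspondence sound (ruling out, e.g., a participant stuck in a non-final external choice whose awaited message never appears), rather than to establish the existence of a step, which is supplied by hypothesis.
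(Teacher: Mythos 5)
Your proposal is correct and follows essentially the same route as the paper's proof (Session Fidelity, \cref{lm:session-fidelity} in the appendix): invert the $\typesSFs$ restriction rule to obtain the one-process-per-participant parallel form, case on the polarity of an enabled transition, exploit the asymmetry between the $\sups$ premise of \procTypingIntCh (so the CSM mirrors whichever send the process offers, which is why $(\pvec q',\xi')$ is existentially quantified) and the $=$ premise of \procTypingExtCh (so the matching queue head fires), then reassemble the typing from the branch premises. The bookkeeping you flag as the hard part (queue-type reversal for appended messages, delegation payload ownership) is exactly what the paper discharges with its auxiliary lemmas.
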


Progress hinges on deadlock freedom of the CSM. 
In general, any (language) property of a PSM that is preserved and reflected by $\interswaplang(\hole)$ holds for its projection. 
However, as for progress, it is not necessarily easy to make the type system enforce the preservation of these properties at the global process level and requires careful treatment. 
\cite{CoppoDYP16} demonstrated how Kobayashi-style techniques~\cite{Kobayashi06}
that can be used to show progress in the presence of session interleaving.
We conjecture a similar system can be
added on top of AMP's type~system.

 \section{Applications of AMP to MST Frameworks} 
\label{sec:reconstructing-global-types}

Standard (expression-based) global types from MST frameworks
can be seen as restricted special cases of PSMs.
What is gained from using AMP for global types seen as PSMs?
Is anything lost in doing so?
In this section, we evaluate AMP as a backend for projection/typing of
standard global types.
The key consequences of our results are:
\begin{enumerate}[labelindent=0pt,labelwidth=\widthof{(b)},label=\textnormal{(\alph*)},itemindent=0em,leftmargin=!] \item \label{item:key-consequence-a}
    Every sender-driven global type
    is a tame \sinkfinal \sumOnePSM.
  \item \label{item:key-consequence-b}
    Tame \sinkfinal \sumOnePSMs can be represented as a
    sender-driven global type.
  \item \label{item:key-consequence-c}
    Every collection of (expression-based) local types
    \CSM{L}
    can be expressed as a CSM
    \CSM{A}
    and vice versa.
  \item \label{item:key-consequence-d}
   AMP's projection is deadlock-free by construction,
    but MSTs typically insist on freedom of a stricter notion of deadlock
    which we call \emph{soft deadlock}. 
    We show AMP's projection can also be set
    to ensure soft deadlock freedom, without losing completeness.
\end{enumerate}

\noindent
These results help us settle two open questions:
\begin{itemize}
  \item
    Expression-based global/local types are equi-expressive with respect to
    state-machine-based global/local specifications.
  \item
    Allowing mixed-choice in global types makes
    projectability undecidable.
\end{itemize}

Here, we give an overview while 
\appendixref{app:reconstructing-global-types} presents the results in detail.

\paragraph{Global and local types.}
In most MST frameworks,
protocols are specified using expression-based global types~($G$),
which get projected to expression-based local types~($L$).
Their syntax is specified as follows:

\vspace{-2ex}
 \begingroup\small
  \begin{grammar}
    G \is
       \zero
     | \sum_{i ∈ I} \msgFromTo{\procA_i}{\procB_i}{\val_i} \seq G_i
     | μ X \seq G
     | X
    \quad
    \text{(global types)}
     \\
     L \is \zero
         | \IntCh_{i ∈ I} \snd{}{\procB_i}{\val_i} \seq L_i
         | \ExtCh_{i ∈ I} \rcv{\procB_{i}}{}{\val_i} \seq L_i
         | μ X \seq L
         | X
    \quad
    \text{(local types)}
  \end{grammar}
\endgroup
\noindent 
where 
$\zero$ explicitly denotes termination and 
$μ X$ binds a recursion variable $X$. The remaining operators specify how messages are exchanged: 
for local types, sending and receiving are separate actions,
while for global types they are specified in a single paired event.
Typically only \emph{deterministic} global types are considered,
\ie where
  every $\sum_{i ∈ I} \msgFromTo{\procA_i}{\procB_i}{\val_i} \seq G_i$
  has no $i\ne j$ with $ \msgFromTo{\procA_i}{\procB_i}{\val_i} = \msgFromTo{\procA_j}{\procB_j}{\val_j} $.
The choice restrictions we discussed, can be imposed on global types,
e.g.\ sender-driven choice requires that, for 
  $\sum_{i ∈ I} \msgFromTo{\procA_i}{\procB_i}{\val_i} \seq G_i$,
for all $i,j \in I$, $\procA_i=\procA_j$.

The standard semantics of global types has been given as a transition system,
or as sets of traces.
In both cases,
the semantics allows reordering of events that are not causally related,
\eg $\act{p->q:1}\seq \act{r->s:2} \seq \zero$
allows $\p{r}$ and $\p{s}$ to communicate before $\p{p}$ and $\p{q}$.
This is formalised, in the presentation of~\cite{DBLP:conf/concur/MajumdarMSZ21,DBLP:conf/ecoop/Stutz23,DBLP:conf/cav/LiSWZ23}
(which we adopt here)
as defining the semantics of a global type to be a set of traces
closed under the indistinguishability relation~$\interswap$.
With this view, it is immediate to represent any global type as a PSM.
Given the restricted format of global types,
the PSMs corresponding to translations of global types
(like the one in \cref{fig:example1-psm})
are \sumOnePSMs with a specific shape:
they are tree-like, \sinkfinal and recursion only happens at leaves and to ancestors~\cite{DBLP:conf/ecoop/Stutz23}.
On the face of it,
it is unclear whether every \sumOnePSM\ can be modelled as global type.

\begin{restatable}{theorem}{sumOnePSMasGlobalType}
\label{thm:sumOnePSMasGlobalType}
For every \sinkfinal \sumOnePSM \PSM, there is a global type $\semglobalsync(M)$ with the same core language (and hence the same semantics).
If \PSM is non-deterministic (mixed-choice, sender-driven, or directed, resp.), then $\semglobalsync(M)$ is non-deterministic (mixed-choice, sender-driven, or directed, resp.).
\end{restatable}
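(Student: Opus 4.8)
The plan is to exploit the defining feature of a \sumOnePSM---that at most one message is in flight at any time---to show that its core language is, up to the splitting of sends and receives, a language over \emph{paired} events, and then to read a global type directly off the automaton structure. Since $\semantics(\PSM)=\close(\lang(\PSM))$ for every PSM, it suffices to produce a global type $\semglobalsync(\PSM)$ whose core language equals $\lang(\PSM)$; equality of the semantics then follows by applying $\close(\hole)$ to both sides.

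First I would establish the paired-event structure of $\lang(\PSM)$. Let $w \in \lang(\PSM)$ and consider any prefix; by the \sumBounded{1} hypothesis at most one send is unmatched. Hence the first event must be a send, and immediately after any unmatched send the only admissible next event is its matching receive: a second send would push the count to two, while FIFO together with the single pending message forbids receiving anything other than the message just sent. Dually, after a matching receive the next event (if any) must again be a send. Thus every $w$ is a concatenation of paired events $\msgFromTo{\procA}{\procB}{\val}$, and $\lang(\PSM)$ can be viewed as a regular language over the alphabet $\AlphSync$. Reading this back into the automaton---after the routine elimination of $\emptystring$-transitions (easy given density) and of transitions lying on no maximal run---lets me contract each send together with its forced matching receive, yielding a finite state machine $M'$ over $\AlphSync$ with $\lang(M') = \lang(\PSM)$ in which every transition carries a paired event.

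Next I would turn $M'$ into a global type. Associate a recursion variable $X_q$ to each state $q$ of $M'$ and form the equation system with $X_q = \zero$ whenever $q$ is a sink and $X_q = \sum_{(x,q')} x \seq X_{q'}$ otherwise, the sum ranging over the outgoing transitions of $q$. This system is \emph{guarded}---every recursive occurrence sits under a paired-event prefix---so it admits a closed solution obtained by the standard elimination of recursion variables (a Gaussian-elimination / fixpoint argument for guarded systems): self-references are closed off with a $\mu X_q$ binder and solved variables are substituted into the remaining equations. The term computed for the initial state is $\semglobalsync(\PSM)$, and an induction over the elimination shows its core language coincides with $\lang(M')$, matching finite traces (terminating at $\zero$) and infinite traces (obtained by unfolding the $\mu$-binders) separately.

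Two points then close the argument. The \sinkfinal hypothesis is exactly what makes each state map cleanly to a grammar construct: a state is a sink (hence a terminating leaf $\zero$) or it has outgoing transitions (hence a non-empty choice), but never needs to be both terminating and continuing at once, which the global-type grammar could not express. The choice properties transfer because each choice node of $\semglobalsync(\PSM)$ stems from the outgoing transitions of a \emph{single} state $q$: repeated paired-event labels at $q$ become a non-deterministic choice; a common sender witnessed by $\choicefunction(q)$ becomes sender-driven choice; a common sender and receiver becomes directed choice; and mere distinctness of labels becomes mixed choice---so no determinisation is needed. I expect the main obstacle to be precisely the recursion-elimination step: verifying that substitution and $\mu$-introduction never merge the outgoing transitions of two distinct states into one choice, so that these per-state choice predicates are genuinely preserved, and confirming that the infinite-trace bookkeeping lines up between maximal runs of $M'$ and infinite unfoldings of the resulting term.
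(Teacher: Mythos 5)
Your proof is essentially sound, but it takes a genuinely different route from the paper's. The paper sets up the same per-state equation system ($\regexFmt{r}_q=\sum_{(q,x,q')\in\delta}x\cdot \regexFmt{r}_{q'}$, solved for the initial state only, which is where \sinkfinal{}ness is used), but then solves it into a Kleene-star \emph{regular expression} via a ``swapped'' Arden's lemma, and must subsequently undo the damage this does to the branching structure: it develops a theory of sender-driven regular expressions (\`a la Br\"uggemann-Klein--Wood's one-unambiguous expressions) and a PSM analogue of Brzozowski derivatives in order to rebuild an ancestor-recursive, non-merging, intermediate-recursion-free automaton without introducing nondeterminism, and only then reads off the global type. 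You instead solve the same guarded system directly in the $\mu$-term algebra of global types (Beki\'c-style elimination: wrap self-referential equations in $\mu X_q$, substitute solved forms), which never leaves the ``guarded sum'' representation; every choice node of the output is literally a copy of one state's out-transitions, so the per-state choice predicates transfer syntactically and the entire derivative machinery is unnecessary. What your route must supply instead is (i) the observation that a recursion variable in a global type can only jump to an \emph{enclosing} binder, so back-edges into already-solved, non-ancestor states must be resolved by duplicating the solved subterm under a fresh $\mu$ (harmless for the language and for choice, though exponential); (ii) a uniqueness/correspondence argument for guarded systems that covers the infinite-trace semantics (the equation $X=RX+S$ has several solutions over $\Sigma^\infty$, so you should argue via a bisimulation between the constructed term's state machine and the unrolled automaton rather than via pure fixed-point uniqueness); and (iii) the ``routine'' preprocessing (dead-transition removal, $\emptystring$-collapse using density, contraction of each send with its forced matching receive), which does carry real weight --- in particular ruling out states reachable by sends on two different channels and handling the degenerate $\lang(\PSM)=\set{\emptystring}$ case, which the paper treats separately. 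None of these is a gap in the approach, but they are exactly the points where the details must be written out.
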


The main idea of \cref{thm:sumOnePSMasGlobalType} is that one can
see a global type as a special regular expression, and thus
we can adapt techniques like Arden's lemma and Brzozowski derivatives
to the case of PSMs.
The key difficulty in the proof
lays in showing the branching conditions are preserved:
the standard automata transformations change the branching structure,
and we need to produce new variants that do.

Similarly, local types can be directly read as the FSMs of a CSM.
We can also provide an inverse transformation (preserving branching).

\begin{restatable}{theorem}{localTypesEquiExpressive}
\label{lm:local-types-equi-expressive}
Let $A_\procA$ be a \sinkfinal FSM over~$\AlphAsync_\procA$ without mixed-choice states
for a participant $\procA$. 
One can construct a local type $L_\procA$ for $\procA$ with $\lang(L_\procA) = \lang(A_\procA)$. 
\end{restatable}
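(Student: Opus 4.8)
The plan is to mirror, at the level of a single participant, the construction behind \cref{thm:sumOnePSMasGlobalType}, but in the simpler setting where no reordering closure $\close(\hole)$ is involved. A local type is essentially a guarded $\mu$-term over the signature of prefixing, internal/external choice, and $\zero$, so the task is to turn the FSM $A_\procA = (Q, \AlphAsync_\procA, \delta, q_0, F)$ into such a term with the same maximal traces, while respecting the discipline that every choice node is \emph{homogeneous} (all sends or all receives).

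First I would set up a system of equations, one variable $X_q$ per state $q \in Q$, reading each right-hand side off the outgoing transitions of $q$. If $q$ is a sink then, by the \sinkfinal\ property, $q \in F$ and I set $X_q := \zero$. Otherwise $q$ is non-final and, since $A_\procA$ has no mixed-choice state, all transitions out of $q$ share a polarity: if they are all sends I set $X_q := \IntCh_{(q,\,\snd{}{\procB}{\val},\,q')\in\delta} \snd{}{\procB}{\val} \seq X_{q'}$, and dually with $\ExtCh$ and $\rcv{\procB}{}{\val}$ if they are all receives. Two observations make this step go through without any determinisation. The \sinkfinal\ property guarantees the clean dichotomy ``$\zero$ or a pure choice'': no state is simultaneously final and branching, so I never face the ill-formed alternative $\zero \oplus (\dots)$. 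And nondeterminism is harmless: overlapping or repeated labels simply become distinct summands of the indexed choice, whose language is their union, so each equation preserves the language transition-by-transition. Crucially, a summand $\snd{}{\procB}{\val} \seq X_{q'}$ is well-formed regardless of the polarity of $q'$, so homogeneity is only a per-node constraint, matching the no-mixed-choice hypothesis exactly.

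Next I would solve this system to obtain a closed, finite local type $L_\procA := X_{q_0}$. Because every variable occurs on a right-hand side only underneath a prefix, the system is guarded, hence contractive, and admits a unique solution; I would extract it by a depth-first unfolding from $q_0$. On first entering a state $q$ I bind a fresh $\mu X_q$ and recurse into its successors; a transition back to a state currently on the DFS stack is translated to the bound $X_q$, and the $\mu$-binder is discarded afterwards if its variable was never used. This terminates because each DFS path visits distinct states (being cut at the first repeat) and so has length at most $|Q|$, and it is correctly scoped because the states on the stack are exactly the ancestors of the current node, so every back-reference falls under its binder. Guardedness is automatic since each edge contributes a prefix, so the resulting $\mu$-term is a bona fide local type.

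Finally I would verify $\lang(L_\procA) = \lang(A_\procA)$ by putting maximal runs of $A_\procA$ in correspondence with the paths through the (possibly infinite) unfolding of $L_\procA$: finite runs ending in a final sink correspond to paths ending in $\zero$, and infinite runs to infinite unfoldings of the $\mu$-binders, using uniqueness of the guarded solution to rule out spurious traces. The main obstacle, as already flagged for \cref{thm:sumOnePSMasGlobalType}, is \emph{branching preservation}: the textbook automaton-to-regular-expression procedures (state elimination, Kleene's algorithm) freely concatenate and star transitions across states and would both break guardedness and merge sends with receives into heterogeneous choices. The per-state equation system above is engineered precisely to avoid this, since each node of the output term is copied verbatim from a single FSM state, so the send/receive choice discipline, and hence the local-type shape, is maintained throughout.
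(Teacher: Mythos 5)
Your construction is correct, but it takes a genuinely different route from the paper's. The paper proves this theorem by instantiating the three-stage workflow of its tree-transformation appendix at the local alphabet: the sink-final FSM is first turned into a single regular expression for its initial state via a ``swapped'' Arden's lemma, that expression is then compiled back into an ancestor-recursive, non-merging, intermediate-recursion-free FSM using Brzozowski derivatives (precisely to avoid introducing spurious branching), and only then is the type syntax read off by a traversal. You instead go directly: one guarded equation per state, read verbatim from its outgoing transitions, solved by a DFS unfolding that binds $\mu X_q$ at first visit, uses $X_q$ for back-edges to stack ancestors, and re-expands the targets of cross-edges. Your key observation --- that every back-reference is already guarded by the label of the transition that creates it, so no inlining of another state's transition set is ever needed and every choice node of the output is a verbatim copy of a single FSM state --- is exactly what makes the ``naive'' duplication safe here and lets you bypass the regular-expression detour; the paper needs that detour chiefly because, in the global-type direction it generalises, $\emptystring$-labelled back-edges (from the $\mu t$/$t$ encoding) would force inlining of an ancestor's transitions and thereby merge choice sets. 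Two caveats, which apply equally to the paper's own argument and so are not gaps specific to your proof: you tacitly assume $A_\procA$ has no $\emptystring$-transitions (otherwise a summand $\emptystring \seq X_{q'}$ is not a local-type prefix and guardedness can fail on $\emptystring$-cycles), and if $A_\procA$ is nondeterministic, your treatment of ``repeated labels as distinct summands'' conflicts with the side condition $\forall i \neq j.\ (\procB_i,\val_i) \neq (\procB_j,\val_j)$ that the paper's local-type grammar imposes on choices.
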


\paragraph{Deadlocks and protocol termination.}
In MSTs, local types can only terminate with a $\zero$,
which signals at the same time that it is valid to stop the protocol,
and that there is no further potential action.
This implies that for a global type to be projectable into local types,
all the participants need to know unambiguously when the protocol terminated globally.
In contrast, using CSMs,
it is possible to mark as final a state with outgoing transitions.
Consider for instance the (directed) global type
{
$
 \small
    \GG \is
(\msgFromTo{\procA}{\procB}{\val_1} \seq
        \msgFromTo{\procA}{\procC}{\val_1} \seq \zero)
+
        (\msgFromTo{\procA}{\procB}{\val_2} \seq \zero)
$}.
AMP's projection would produce the FSM
\scalebox{0.65}{\begin{tikzpicture}[psm, node distance=5em and 4em,baseline=-3]
    \node[state, accepting, initial left, initial text = ] (q0) {};
    \node[state, accepting, right= 5em of q0] (q1) {};
\path (q0) edge node [above] {$\rcv{\procA}{\procC}{\val_1}$} (q1);
\end{tikzpicture}
 }
as the projection for~$\procC$.
It contains a non-sink final state:
$\procC$ is not informed of which branch was taken and needs to be prepared
to terminate, or receive one more message.

AMP's projection produces deadlock-free CSMs, where deadlocks are defined
as configurations which cannot take a step, but 
their queues are not empty or some participant is in a non-final state.
Projections to local types ensure the absence of another type of deadlock:
a \emph{soft deadlock}, \ie a configuration that is a deadlock, or that
cannot take a step but where some participant is in a \emph{non-sink} state.
Is the possibility of soft deadlocks desirable?
We argue that this depends on the domain of application:
in distributed computing, it would be fine if a server kept listening for incoming requests while, in embedded computing, it can be key that all participants eventually~stop.
We can show that it is possible to use AMP in both scenarios, without giving up on completeness.

\begin{definition}[Strong Projectability]
A language $L \subseteq \AlphAsync^\omega$ is said to be \emph{strongly projectable} if there exists a CSM $\CSM{B}$ such that
$\CSM{B}$ is free from soft deadlocks
 \emph{(soft deadlock freedom)},
 and
$L$ is the language of $\CSM{B}$
 \emph{(protocol fidelity)}.
We say that $\CSM{B}$ is a strong projection of $L$.
\end{definition}

\begin{restatable}{theorem}{softImplSinkFinalRed}
\label{thm:soft-impl-sink-final-red}
Let $\GG$ be a projectable global type.
Then, the subset construction $\subsetcons{\GG}{\procA}$ 
\textnormal{\cite[Def.\,5.4]{DBLP:conf/cav/LiSWZ23}} is \sinkfinal for every participant $\procA$ if and only if there is a CSM that is a strong projection of $\GG$ and this CSM satisfies feasible eventual reception or every of its state machines is deterministic. \end{restatable}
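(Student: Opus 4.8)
The plan is to prove both directions by exploiting that $\subsetcons{\GG}{\procA}$ is itself a sound and complete projection for projectable (sender-driven) global types~\cite{DBLP:conf/cav/LiSWZ23}. Since $\GG$ is projectable, the subset-construction CSM $\CSMabb{A}$ (whose local machine for each participant $\procA$ is $\subsetcons{\GG}{\procA}$) is deadlock-free, deterministic by construction, and satisfies $\lang(\CSMabb{A}) = \semantics(\GG)$. The key preliminary observation is that the two ways \sinkfinal can fail are not symmetric. A reachable \emph{non-final sink} of some $\subsetcons{\GG}{\procA}$ would witness a run in which $\procA$ gets permanently stuck in a non-final state, which contradicts deadlock-freedom of $\CSMabb{A}$; hence non-final sinks never occur. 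So \sinkfinal can only fail through a \emph{final non-sink} state, and using projectability one checks that such a state carries only receive transitions (a final state still able to send would leave its receiver unable to tell whether $\procA$ terminates or sends, breaking projectability). The theorem thus reduces to relating final non-sink (receiving) states of the subset construction to \emph{soft deadlocks}.

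For the direction from \sinkfinal to the existence of a strong projection, I would show that $\CSMabb{A}$ is itself the required witness. When every $\subsetcons{\GG}{\procA}$ is \sinkfinal, being non-sink coincides with being non-final, so a configuration that cannot step while some participant sits in a non-sink state is precisely an ordinary deadlock. Consequently soft deadlocks and deadlocks coincide, and deadlock-freedom of $\CSMabb{A}$ upgrades to soft-deadlock freedom, i.e.\ $\CSMabb{A}$ is a strong projection. As $\CSMabb{A}$ is deterministic by construction, the disjunct ``every state machine is deterministic'' is discharged for free.

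For the converse I would argue by contraposition: assuming some $\subsetcons{\GG}{\procA}$ has a final non-sink state, I show no strong projection satisfying the side condition can exist. By canonicity of the subset construction, such a state yields a language-level \emph{ambiguous termination} for $\procA$: a finite $w \in \semantics(\GG)$ in which $\procA$ legitimately terminates after local history $h \is w\wproj_\procA$, together with an extension in $\semantics(\GG)$ in which, with the same local history $h$, $\procA$ must still receive some message $m$ from a participant $\procB$. In any strong projection $\CSMabb{B}$ both behaviours must be realised by $B_\procA$ reading $h$. If $B_\procA$ is deterministic, reading $h$ lands in a unique state that must be simultaneously final (for the terminating run) and carry an outgoing receive of $m$ (for the extension), i.e.\ a final non-sink state; the terminating run then ends in a configuration with empty queues in which $\procA$ sits in this non-sink state and nobody can move, a soft deadlock contradicting strong projection. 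If instead $\CSMabb{B}$ merely satisfies feasible eventual reception, I rule out a nondeterministic ``routing around'' of the ambiguity: should $B_\procA$ reach after $h$ a final state from which it cannot consume $m$, then composing that terminating local choice of $\procA$ with the run in which $\procB$ does send $m$ produces a reachable configuration with an unmatched send of $m$ that no extension can ever match, violating feasible eventual reception; hence $B_\procA$ is again forced into a final state that must receive $m$, producing a soft deadlock.

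The main obstacle is this last step. First, I must make the canonicity claim precise, namely that a final non-sink state of $\subsetcons{\GG}{\procA}$ is genuinely reflected as an ambiguous termination in the language $\semantics(\GG)$ itself, hence in \emph{every} projection. The cleanest route is to read the states of $\subsetcons{\GG}{\procA}$ as the sets of global-type residuals consistent with $\procA$'s local observations, so that ``final'' and ``has an outgoing receive'' translate directly into the existence of the two global runs above. Second, I must handle nondeterministic strong projections, where the feasible-eventual-reception hypothesis is exactly what prevents a projection from splitting the ambiguity across two local states (one final-sink, one receiving) to dodge the soft deadlock; the delicate point is that $\procA$'s nondeterministic choice cannot depend on $\procB$'s hidden decision, so the ``terminate'' option remains available precisely in the run where $m$ is in flight, triggering the unmatchable-send contradiction. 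The two bookkeeping facts from the setup (no non-final sinks, and final non-sink states are receiving) then follow from deadlock-freedom and projectability as noted.
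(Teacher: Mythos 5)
Your forward direction (sink-finality of every $\subsetcons{\GG}{\procA}$ implies the existence of a strong projection) is essentially the paper's: take the subset-construction CSM itself as witness, observe that once final coincides with sink a soft deadlock is exactly a deadlock, and discharge the side condition by determinism of the construction. The converse is where you diverge, and where the gaps are. The paper never tries to exhibit a soft deadlock inside an arbitrary strong projection $\CSMabb{B}$; it first turns $\CSMabb{B}$ into a \sinkfinal\ projection (un-finalizing every non-sink final state, which soft-deadlock freedom licenses without disturbing protocol fidelity) and then derives a contradiction that is \emph{local} to $\procA$'s machine: from the two global runs $w$ and $w''$ with equal $\procA$-projections, $\procA$'s post-$h$ state must be final (because $w$ is accepted) and must have an outgoing transition (because the message pending after $w''$ must eventually be receivable), contradicting the machine being \sinkfinal. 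Your direct route instead needs the configuration reached after the terminating run $w$ to be \emph{globally} stuck, and ``nobody can move'' is precisely the step you do not justify: another participant could sit in a final state that still has an outgoing send transition, in which case the configuration steps and is not a soft deadlock. Ruling this out requires an additional argument (e.g., that $\lang(\GG)$ contains no proper extension of the complete finite word $w$, so any further step would violate protocol fidelity or deadlock freedom), which you would have to state and prove; the paper's detour through a \sinkfinal\ projection is exactly what makes this unnecessary.

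Second, your use of feasible eventual reception is backwards. FER is a property of the \emph{language} $\lang(\CSMabb{B}) = \lang(\GG)$, and the semantics of a global type always satisfies FER (and $\close(\hole)$ preserves it), so it cannot be ``violated'': exhibiting one run in which the pending send is never matched does not contradict a condition that only asks for \emph{some} extension of the word to lie in the language. The hypothesis must be used in the opposite direction, as the paper does: because the language satisfies FER, the message in flight after $w''$ is matched in some extension, and since $\procA$'s local history is the same $h$ in both runs, $\procA$'s post-$h$ state is forced to have an outgoing transition --- which is what clashes with that state being a final sink. Finally, your preliminary claim that a reachable non-final sink of $\subsetcons{\GG}{\procA}$ would contradict deadlock freedom is false: a participant may finish all its actions early in a protocol that continues forever among the others, and no deadlock arises. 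Only the final-non-sink failure mode is genuinely in play here, which is the case the paper's proof addresses.
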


If we aim for a strong projection of a projectable global type, we construct the global type's subset construction and check if it is \sinkfinal.
If it is not, there is no strong projection of it.
If this is undesirable, the protocol needs redesigning.
\cref{lm:local-types-equi-expressive} can yield local types and 
$\semlocal(L)$ is the FSM for a local type~$L$. 

\paragraph{Undecidability for mixed-choice.}
Finally, these results together with our results from
\cref{sec:mixed-choice-yields-undecidable-projection},
can settle the open question of whether we can project
mixed-choice global types algorithmically.

\begin{restatable}{corollary}{checkingImplMixedChoiceGlobalTypesUndec}
Both the projectability problem and the strong projectability problem for mixed-choice global types are undecidable.
\end{restatable}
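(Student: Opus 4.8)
The plan is to obtain both undecidability results by transporting the undecidability of projectability for \sinkfinal mixed-choice \sumOnePSMs (\Cref{thm:checking-implementability-sinkfinal-mixed-choice-PSM-undecidable}) across the semantics-preserving translation of \Cref{thm:sumOnePSMasGlobalType}.

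For ordinary projectability, I would fix a \sinkfinal mixed-choice \sumOnePSM $M$ and form the global type $G \is \semglobalsync(M)$. By \Cref{thm:sumOnePSMasGlobalType} this map is effective, preserves mixed-choice, and satisfies $\lang(G) = \lang(M)$, hence $\semantics(G) = \close(\lang(G)) = \close(\lang(M)) = \semantics(M)$. Since being a projection is a property of the closed language alone---a deadlock-free CSM $\CSM{A}$ is a projection of $M$ exactly when $\lang(\CSM{A}) = \semantics(M) = \semantics(G)$---we get that $M$ is projectable if and only if $G$ is. This is a computable, answer-preserving reduction from projectability of \sinkfinal mixed-choice \sumOnePSMs to projectability of mixed-choice global types, so the latter is undecidable.

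For strong projectability I would reuse the same family $G = \semglobalsync(M)$. One direction is immediate: soft deadlock freedom is strictly stronger than deadlock freedom, so every strong projection is in particular a projection, i.e.\ strongly projectable implies projectable; thus the ``no'' instances (those $M$ that are not projectable) are automatically not strongly projectable. The remaining task is to show that on the ``yes'' instances projectability already entails strong projectability. Here I would appeal to \Cref{thm:soft-impl-sink-final-red}: for a projectable $G$, a strong projection exists (with the FER/determinism side-condition, which is satisfied since $\semantics(G)=\semantics(M)$ satisfies feasible eventual reception) as soon as the subset construction $\subsetcons{\GG}{\procA}$ is \sinkfinal for every participant $\procA$. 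Establishing this sink-finalness is the crux and requires care, because sink-finalness of $G$ does \emph{not} by itself force sink-finalness of its subset construction---as witnessed by the directed type $\GG$ discussed earlier in this section, whose projection to $\procC$ has a non-sink final state. I would therefore exploit the specific shape of the reduction underlying \Cref{thm:checking-implementability-sinkfinal-mixed-choice-PSM-undecidable}, in which the encoding of Turing-machine computations marks its end to every participant; if the given construction does not already do so, I would augment it with a final unconditional round broadcasting termination to all five participants (a sequence of paired events, which stays within \sumOnePSM, preserves mixed-choice, and---being identical in both branches---does not affect branch-distinguishability and hence preserves projectability). With every participant observing termination, the only final states of each local behaviour are genuine sinks reached after the broadcast, so the subset construction is \sinkfinal on the ``yes'' instances and a strong projection exists. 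Consequently $M$ is strongly projectable iff it is projectable iff the underlying Turing machine has no accepting computation, and strong projectability for mixed-choice global types is undecidable as well.

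The main obstacle is precisely this sink-finalness of the subset construction on the projectable instances: unlike plain projectability, strong projectability is sensitive to whether participants can detect global termination, so the reduction must be arranged (or minimally patched) to broadcast termination, and one must verify that this patch disturbs neither the choice structure, nor the equivalence with the Turing-machine membership question, nor the \sinkfinalness of the resulting subset construction.
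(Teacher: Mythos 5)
Your proposal is correct and takes essentially the same route as the paper: reduce from \cref{thm:checking-implementability-sinkfinal-mixed-choice-PSM-undecidable} via the choice-preserving translation of \cref{thm:sumOnePSMasGlobalType} for plain projectability, and then observe that the encoding informs every participant of termination, which makes projectability and strong projectability coincide on this family. The only remark is that no patching is needed---the construction already appends the termination round $w_{\mathit{end}}$ (the $\markEndLoop$ exchanges reaching all five participants), which is precisely what the paper's one-line justification for the strong case relies on; your more detailed detour through the sink-finalness of the subset construction and \cref{thm:soft-impl-sink-final-red} just makes that one line explicit.
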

 \section{Related Work}
\label{sec:related-work}

\paragraph{Multiparty session types.}

Inspired by linear logic \cite{DBLP:journals/tcs/Girard87},
\citet{DBLP:conf/concur/Honda93} proposed binary sessions types
for sessions with two participants.
Multiparty session types \cite{DBLP:conf/popl/HondaYC08} extended the idea to multiple participants.
\citet{DBLP:conf/esop/DenielouY12}
were the first to extensively explore the relation between CSMs and local types, but their projection is not complete and only supports directed choice;
moreover the approach was found to be somewhat flawed~\cite{DBLP:journals/pacmpl/ScalasY19}.
MSTs have been incorporated into a number of programming languages
\cite{DBLP:journals/ftpl/AnconaBB0CDGGGH16,DBLP:conf/icfp/JespersenML15,DBLP:conf/haskell/LindleyM16,DBLP:conf/icse/LangeNTY18,DBLP:conf/ecoop/ScalasY16,DBLP:conf/cc/NeykovaHYA18,DBLP:conf/ecoop/ChenBT22}.
They have also been applied to various other domains like
    operating systems~\cite{DBLP:conf/eurosys/FahndrichAHHHLL06},
web services~\cite{DBLP:conf/tgc/YoshidaHNN13},
    distributed algorithms~\cite{DBLP:journals/corr/abs-1902-01353},
    timed systems~\cite{DBLP:conf/esop/BocchiMVY19},
    cyber-physical systems~\cite{DBLP:conf/ecoop/MajumdarPYZ19}, and
    smart contracts~\cite{DBLP:conf/csfw/DasB0PS21}.
A number of works are devoted to mechanising MST meta-theory~\cite{DBLP:conf/ppdp/Thiemann19,DBLP:conf/pldi/Castro-Perez0GY21,dblp:journals/pacmpl/jacobsbk22a,dblp:journals/pacmpl/jacobsbk22}.
Our results could potentially extend the expressivity of the types involved in these applications.

\paragraph{MST projectability/projection.}
Via a reduction to globally-cooperative HMSCs, \citet{DBLP:conf/ecoop/Stutz23} proved MST projectability to be decidable for the class of global types that can ---but do not have to--- terminate (called $\zero$-reachable).
\citet{DBLP:conf/cav/LiSWZ23} provided a direct MST projection algorithm that is complete for sender-driven global types, providing a PSPACE upper bound. Our results use a reduction to these later developments.
The global specifications in \cite{DBLP:conf/cav/LiSWZ23}
can be shown to be special cases of Tame \sumOnePSMs
so our results strictly expand the reach of their results. 
For example, the protocols of 
\cref{fig:psm-no-hmsc-possible,fig:kindergarten-leader-election}
are all tame, yet out of the reach of both works. 
We also clarify 
the discrepancy between the notion of deadlock in global types and in 
\cite{DBLP:conf/cav/LiSWZ23} 
(cf.~\cref{sec:reconstructing-global-types}).
Finally, \cite{DBLP:conf/ecoop/Stutz23,DBLP:conf/cav/LiSWZ23}
do not have a type system, providing no way to link properties of projections with the ones of processes.
Preliminary versions of some of our results
appeared in the PhD thesis of the first author~\cite{DBLP:phd/dnb/Stutz24}.

The almost totality of asynchronous MST works can only handle directed choice.
An exception is~\cite{DBLP:journals/corr/abs-1203-0780},
where unrestricted global types are considered (without a type system).
They propose an incomplete projection algorithm that is correct with respect to
a different notion of correct projection than the standard one we adopt and generalise.
We refer to \cite{DBLP:conf/concur/MajumdarMSZ21} for a survey on choice restrictions.

\citet{DBLP:conf/fase/HuY17} propose a scheme with global types
and an incomplete projection,
where the global types are not safe by construction
and the restrictions on choice only appear at the local types.
The safety of global types is ensured by a combination of model-checking
with message buffers of size 1,
and syntactic restrictions that ensure that any unsafety
that might arise, will be visible in the 1-bounded executions.
For PSMs satisfying the syntactic restrictions,
the same approach could be applied.
The types of~\cite{DBLP:conf/fase/HuY17} also include connect/disconnect actions,
which can be emulated in AMP by excluding deadlocks (but not soft deadlocks)
and using non-sink final states.

\paragraph{Choreography automata and languages.}
Choreography Automata \cite{DBLP:conf/coordination/BarbaneraLT20} are syntactically similar to \sumOnePSMs,
but do not employ any closure operation, requiring the user to specify all the allowed interleavings, and preventing finite state representations for many common communication patterns.
In addition, \citet{DBLP:conf/concur/MajumdarMSZ21} showed that their conditions for projectability are flawed for the asynchronous case
(fixes for the synchronous case appeared in \cite[p.\,8]{DBLP:conf/ecoop/GheriLSTY22}).
\citet{DBLP:conf/coordination/BarbaneraLT22} applies a language-theoretic
approach to a limited class of \emph{synchronous} choreographies
(with no claim of completeness of projection).

\paragraph{Bottom-up MSTs.}
A number of MST-based works deviate from the top-down approach.
For instance, \cite{DBLP:journals/pacmpl/ScalasY19}
proposes a type system that only requires local types and not a global type.
The typing ensures some operational correspondence between local types and processes, making it possible to model check local types to determine properties of the program.
Their local types in the asynchronous setting are Turing-powerful, and therefore model checking is of limited use.
By virtue of the decoupling achieved by our type system,
AMP can be used in a bottom-up way too:
safety of the CSM used to type a process, implies safety of the process,
regardless of whether the CSM is obtained by projection or just given.
\citet{DBLP:conf/coordination/DagninoGD21} and \citet{DBLP:journals/corr/abs-2203-12876} also use a bottom-up strategy
by reconstructing
a so-called \emph{deconfined} global type
from the parallel composition of local programs
of a single session.
Deconfined global types are not automatically safe,
and checking their safety is shown to be undecidable.

\paragraph{Extensions for MSTs.}
A number of extensions for MSTs have been considered
(see~\cite{DBLP:journals/programming/BejleriDVEM19} for a survey),
including:
parametrisation~\cite{dblp:journals/scp/charalambidesda16,DBLP:journals/corr/abs-1208-6483},
dependent types~\cite{DBLP:conf/ppdp/ToninhoCP11,DBLP:journals/corr/abs-1208-6483,DBLP:conf/fossacs/ToninhoY18},
graduality~\cite{DBLP:journals/jfp/IgarashiTTVW19},
fault-tolerance~\cite{DBLP:journals/pacmpl/VieringHEZ21,DBLP:conf/concur/BarwellSY022}.
Context-free session types~\cite{dblp:conf/icfp/thiemannv16,dblp:journals/toplas/keizerbp22} specify binary sessions that are not representable with finite-control.
It would be interesting to consider projection for PSMs generated by pushdown~automata.

\paragraph{Distributed synthesis.}
In automata theory,
distributed synthesis seeks a way to transform a sequential specification into an equivalent distributed implementation, which is close in spirit
with the idea of extracting local types from global types.
One of the few positive results in this area is Zielonka's theorem~\cite{Zielonka87},
which shows that every regular trace language can be recognised by a so-called ``asynchronous automaton''.
Despite their name, asynchronous automata can be seen as a parallel composition
of participants interacting through \emph{synchronous} actions.
In contrast, PSMs and CSMs represent non-regular FIFO languages,
giving rise to a harder challenge.

\paragraph{High-level message sequence charts.}
HMSCs were defined in an industry standard~\cite{z120-standard},
inspiring extensive academic research~\cite{DBLP:conf/sdl/MauwR97,
DBLP:conf/ac/GenestMP03,DBLP:conf/acsd/GenestM05,DBLP:conf/concur/GazagnaireGHTY07,DBLP:journals/tosem/RoychoudhuryGS12}.
Projectability has been studied for HMSCs under the name ``safe realisability''
\cite{DBLP:journals/jcss/GenestMSZ06,DBLP:journals/tcs/Lohrey03,DBLP:journals/tcs/AlurEY05},
and was shown to be undecidable in general \cite{DBLP:journals/tcs/Lohrey03}.
Several restrictions of HMSCs have been proposed to make projectability decidable.
For a detailed survey, we refer to \cite{DBLP:conf/ecoop/Stutz23}.
Compared to PSMs, HMSCs only model finite runs; their PSM encoding equips them with an infinite run semantics.
With our developments, it is fairly straightforward to obtain a projection operation for  sender-driven, \sinkfinal \mbox{HMSCs} that respect some channel bounds.
This class is incomparable to any of the decidable HMSC classes proposed in the literature.
Since our type system only depends on CSMs, regardless of how they are obtained,
AMP can type check a program against a projectable HMSC.

\paragraph{Choreographic programming.}
Choreographic programming \cite{DBLP:journals/tcs/Cruz-FilipeM20,DBLP:conf/ecoop/GiallorenzoMPRS21,DBLP:journals/pacmpl/HirschG22} adopts the top-down approach even more radically than MSTs.
In choreographic programming, the endpoint projection (EPP) aims at synthesizing a fully-featured program implementation directly from the global specification.
As a result, the global specification describes the local computation alongside the communication structure (requiring infinite-state formalisms).
In choreographies, one typically works with non-finite-control-state global specifications, 
so the hopes for a complete and decidable EPP are slim, justifying giving up on completeness. By only considering local computation in processes, the MST/AMP approach avoids this issue.
Nevertheless, our results 
could still be useful for EPP
when applied to the pure communication structure of choreographies.
Notably, our method can project examples that cannot be projected using EPPs from the literature.
Consider the choreography
$
    \textbf{if } \procA.{\star}
        \textbf{ then }
            (\msgFromTo{\procA}{\procB}{\_} \seq \msgFromTo{\procB}{\procD}{\_})
        \textbf{ else }
            (\msgFromTo{\procA}{\procD}{\_} \seq
            \msgFromTo{\procD}{\procB}{\_})
$,
where message payloads are irrelevant and hence omitted 
and $\procA.{\star}$ denotes non-determi\-nistic choice by $\procA$.
The example is syntactically valid in \cite{DBLP:conf/sac/Cruz-FilipeMP18}
and can be easily encoded as a global type with sender-driven choice.
However, their EPP
would be undefined for $\procB$ and $\procD$:
it 
uses the merge from
\cite{DBLP:journals/toplas/CarboneHY12},
which can only merge same sender receives. Our results would instead produce the desired projection.

\paragraph{Communicating state machines.}
CSMs are the canonical automata model for distributed systems.
They have been studied in the context of model checking projections and do not apply a top-down methodology.
The verification problem is undecidable in general since CSMs are Turing-powerful \cite{DBLP:journals/jacm/BrandZ83}.
Several strategies to yield decidable classes have been proposed:
assuming channels are lossy \cite{DBLP:conf/cav/AbdullaBJ98}, restricting the communication topology~\cite{DBLP:journals/acta/PengP92, DBLP:conf/tacas/TorreMP08}, or
only allowing half-duplex communication for two participants~\cite{DBLP:journals/iandc/CeceF05}.
The concept of existential boundedness~\cite{DBLP:journals/fuin/GenestKM07} was initially defined for CSMs and yields decidability of control state reachability. The same holds for synchronisability~\cite{DBLP:conf/cav/BouajjaniEJQ18,DBLP:conf/fossacs/GiustoLL20}, which, intuitively, requires that every execution can be re-ordered (up to $\interswap$) into phases of sends and receives such that messages can only be received in the same phase.
Global types can only express $1$-synchronisable and half-duplex communication~\cite{DBLP:journals/corr/abs-2209-10328}. 

\phantomsection\label{paper-last-page}

\begin{credits}
\subsubsection{\ackname} The authors thank Anca Muscholl and Jorge A.\ Pérez for their discussions. 
This work was partially supported by the Luxembourg National Research Fund (FNR) under the grant agreement C22/IS/17238244/AVVA
  and
by the ERC Consolidator Grant for the project ``PERSIST'' under the EU's Horizon 2020 research and innovation programme (grant No.~101003349). \end{credits}

\clearpage
\bibliographystyle{splncs04nat}

\appendix

\iftoggle{arxiv}
{
\section{Additional Material for \cref{sec:automata-based-protocols}}
\label{app:automata-based-protocols}

\subsection{Communicating State Machines}
\label{app:csms}

\begin{definition}[Semantics of Communicating State Machines]
\label{app:semantics-csm}
We denote the set of channels with $\channels = \set{\channel{\procA}{\procB} \mid \procA,\procB\in \Procs, \procA\neq \procB}$.
The set of global states of a CSM is given by $\prod_{\procA \in \Procs} Q_\procA$.
Given a global state $q$, $q_\procA$ denotes the state of $\procA$ in $q$.
A~\emph{configuration} of a CSM $\CSM{A}$ is a pair $(q, \xi)$, where $q$ is a global state and
$\xi : \channels \rightarrow \MsgVals^*$ is a mapping of each channel to its current content.
The initial configuration $(q_0, \xi_\emptystring)$ consists of  a global state $q_0$ where the state of each role is the initial state  $q_{0,\procA}$ of $A_\procA$ and a mapping~$\xi_\emptystring$, which maps each channel to the empty word~$\emptystring$.
A~configuration $(q, \xi)$ is said to be \emph{final} iff each individual local state $q_\procA$ is final for every $\procA$ and $\xi$ is $\xi_{\emptystring}$.

The global transition relation $\rightarrow$ is defined as follows:
\vspace{-1ex}
\begin{itemize}
\item
$(q,\xi) \xrightarrow{\snd{\procA}{\procB}{\val}} (q',\xi')$ if
$(q_\procA, \snd{\procA}{\procB}{\val}, q'_\procA)\in\delta_\procA$,
$q_\procC = q'_\procC$ for every role $\procC \neq \procA$,
$\xi'(\channel{\procA}{\procB}) =  \xi(\channel{\procA}{\procB})\cdot\val$ and $\xi'(c) = \xi(c)$ for every other channel $c\in \channels$.

\item
$(q,\xi) \xrightarrow{\rcv{\procA}{\procB}{\val}} (q',\xi')$ if
$(q_\procB, \rcv{\procA}{\procB}{\val}, q'_\procB)\in\delta_\procB$,
$q_\procC = q'_\procC$ for every role $\procC \neq \procB$,
$\xi(\channel{\procA}{\procB}) = \val\cdot \xi'(\channel{\procA}{\procB})$
and $\xi'(c) = \xi(c)$ for every other channel $c\in \channels$.

\item
$(q,\xi) \xrightarrow{\emptystring} (q',\xi)$ if
$(q_\procA, \emptystring, q'_\procA)\in\delta_\procA$ for some role
$\procA$, and
$q_\procB = q'_\procB$ for every role $\procB \neq \procA$.
\end{itemize}
\noindent
A run of the CSM always starts with an initial configuration $(q_0, \xi_0)$, and is a finite or infinite sequence
$(q_0, \xi_0) \xrightarrow{w_0} (q_1, \xi_1) \xrightarrow{w_1} \ldots$
for which  $(q_i,\xi_i) \xrightarrow{w_i} (q_{i+1},\xi_{i+1})$.
The word $w_0w_1 \ldots \in\Sigma^\infty$ is said to be the \emph{trace} of the run.
A run is called maximal if it is infinite or finite and ends in a final configuration. As before, the trace of a maximal run is maximal.
The language $\lang(\CSM{A})$ of the CSM $\CSM{A}$ consists of its set of maximal traces.
A configuration is a deadlock if it is not final and has no outgoing transitions.
\end{definition}

\citet[Lm.\,22]{DBLP:conf/concur/MajumdarMSZ21} showed that the semantics of CSMs are closed under the indistinguishability relation $\interswap$.

\begin{lemma}
\label{lm:csm-closed-interswap}
Let $\CSM{A}$ be a CSM.
Then, $\lang(\CSM{A}) = \interswaplang(\lang(\CSM{A}))$.
\end{lemma}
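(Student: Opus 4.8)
The plan is to prove the two inclusions separately, with essentially all the work in one of them. The inclusion $\lang(\CSM{A}) \subseteq \close(\lang(\CSM{A}))$ is immediate: since $\interswap$ and $\preceq_\interswap^\omega$ are reflexive, every word lies in the closure of any language containing it. The substance is the reverse inclusion $\close(\lang(\CSM{A})) \subseteq \lang(\CSM{A})$.

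The engine of the proof is a local commutation (``diamond'') lemma at the level of configurations: for each of the four generating rules of \cref{def:indistinguishability-relation}, if two adjacent events $a,b$ can be executed in the order $a\cat b$ from a configuration $(q,\xi)$ reaching $(q',\xi')$, then they can also be executed in the order $b\cat a$ from $(q,\xi)$ reaching the very same $(q',\xi')$. First I would verify this rule by rule. For rules~(1)--(3) the two events touch disjoint local states and disjoint channels --- this is exactly what the participant-side conditions $\procA\neq\procC$, $\procB\neq\procD$, and $\procA\neq\procD\land(\procA\neq\procC\lor\procB\neq\procD)$ guarantee --- so the two global transitions act on independent components and trivially commute. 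Rule~(4) is the delicate case, since there both events act on the same channel $\channel{\procA}{\procB}$: its side condition $\card{w\wproj_{\snd{\procA}{\procB}{}}}>\card{w\wproj_{\rcv{\procA}{\procB}{}}}$ is precisely the statement that, in any run executing the prefix $w$, the channel $\channel{\procA}{\procB}$ is non-empty, so the receive consumes a message already at the head rather than the one just sent, and executing receive-then-send leaves the same residual queue as send-then-receive. This is where FIFO bookkeeping is essential, and I expect it to be the main point to get right.

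With the diamond lemma in hand, I would lift it to runs: if $(q_0,\xi_0)\xrightarrow{u}(q,\xi)$ and $u\interswap u''$, then $(q_0,\xi_0)\xrightarrow{u''}(q,\xi)$. This follows by induction on the number of single-rule swaps witnessing $u\interswap u''$, applying the diamond at the swap site; the prefix and suffix of the run are untouched, since the swap preserves the configuration reached there. The finite-word case of the reverse inclusion is then immediate: given $w'\interswap w$ with $w\in\lang(\CSM{A})_{\fin}$, take the finite maximal run for $w$ and reorder it into a run with trace $w'$ ending in the same final configuration, so $w'\in\lang(\CSM{A})$.

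The infinite-word case requires a compactness argument, since there $\close$ uses $\preceq_\interswap^\omega$ rather than full equivalence (cf.\ \cref{ex:interswap-and-fairness}). Suppose $w'\preceq_\interswap^\omega w$ with $w\in\lang(\CSM{A})_{\inf}$. For each finite prefix $u'$ of $w'$ there is a prefix $v'$ of $w$ with $u'\preceq_\interswap v'$, i.e.\ $u'\cat s\interswap v'$ for some $s$; since $v'$ is a prefix of a run, the run-level diamond makes $u'\cat s$ executable, and truncating $s$ shows $u'$ is executable. The key observation is that the channel contents reached by a \emph{fixed} finite trace are uniquely determined by that trace, so configurations reachable by a given trace differ only in the finite global control state; hence for each $n$ there are only finitely many configurations reachable with the length-$n$ prefix of $w'$. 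I would therefore build a finitely-branching infinite tree (level $n$ = configurations reachable with trace $w'_{\le n}$, with children obtained by reading the next letter modulo $\emptystring$-steps) and apply K\"onig's lemma to extract an infinite branch; stitching the corresponding transition segments yields an infinite, hence maximal, run with trace exactly $w'$, so $w'\in\lang(\CSM{A})_{\inf}$. The main obstacles are the careful treatment of rule~(4) and confirming that the tree is genuinely finitely branching despite nondeterminism and $\emptystring$-transitions.
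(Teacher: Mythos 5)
Your proposal is correct, but note that the paper does not actually prove this lemma: it is imported verbatim from Majumdar et al.\ (CONCUR 2021, Lemma~22), so there is no in-paper proof to compare against. Your argument is the standard one and is essentially what the cited proof does: the inclusion $\lang(\CSM{A}) \subseteq \close(\lang(\CSM{A}))$ by reflexivity, a per-rule commutation lemma on configurations (with rule~(4) of \cref{def:indistinguishability-relation} being the only non-trivial case, handled exactly as you describe via the non-emptiness of the queue guaranteed by the side condition on $w$), lifting to runs by induction on the number of swaps, and a K\"onig-style extraction for the infinite case, which is sound because the channel contents after a fixed finite trace are determined by that trace and the global control state ranges over a finite set. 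The one detail worth spelling out is the treatment of $\emptystring$-transitions in the \emph{lifting} step, not only in the tree construction: the two swapped events need not be adjacent transitions in the run, since $\emptystring$-steps may be interleaved between them. This is routine --- an $\emptystring$-step of participant $\procC$ touches no channel and commutes with any event of a participant other than $\procC$, and in all four rules the two swapped events belong to distinct participants, so their respective $\emptystring$-steps can be kept on the correct side of each event --- but the diamond lemma as you state it (for literally adjacent events) does not immediately yield the run-level claim without this observation.
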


\subsection{Feasible Eventual Reception is Preserved and Reflected by the Indistinguishability Closure}
\label{app:FER-reflected-and-preserved}

\begin{lemma}[Feasible eventual reception is preserved and reflected by $\close(\hole)$]
\label{lm:FER-reflected-and-preserved}
Let $L \subseteq \AlphAsync^\infty$ be a language.
It holds that $L$ satisfies feasible eventual reception
iff
$\close(L)$ satisfies feasible eventual reception.
\end{lemma}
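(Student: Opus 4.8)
The plan is to reduce both implications to a single invariance property of the indistinguishability relation and then transport witnesses across $\close(\hole)$. The key observation is that none of the rules \ref{rule:indistinguishability-relation-send-send}--\ref{rule:indistinguishability-relation-send-receive-history} ever swaps two sends or two receives on the \emph{same} channel $\channel{\procA}{\procB}$, and every rule preserves the number of occurrences of each letter. Hence, for finite words $u \interswap v$, the $n$-th send on $\channel{\procA}{\procB}$ in $u$ corresponds to the $n$-th such send in $v$, and---since matchedness of that send depends only on the total count of receives $\rcv{\procA}{\procB}{\val}$ available---it is matched in $u$ iff it is matched in $v$. I would first prove this \emph{matching-invariance} lemma for finite words, then lift it to $\preceq_{\interswap}^{\omega}$ using the finite-prefix definition of the infinite closure. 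I would also record two structural facts: that $\interswap$ is a congruence for concatenation (appending a common finite or infinite suffix preserves equivalence), and the prefix characterisation that a finite $u$ lies in $\pref(\close(L))$ iff there is $p \in \pref(L)$ with $u \preceq_{\interswap} p$, obtained by tracking a finite prefix of a closure word back to a finite prefix of the corresponding word of $L$.

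For the preservation direction, assume $L$ satisfies feasible eventual reception and take a finite $u \in \pref(\close(L))$ with an unmatched send $u_i$. The prefix characterisation yields $p \in \pref(L)$ and a witness $a$ with $u \cat a \interswap p$, and hence $u \preceq_{\interswap} z$ for some $z \in L$. If $u_i$ is already matched in $u \cat a$, then $u \cat a \in \close(L)$ is the required matching extension of $u$. Otherwise matching-invariance transfers unmatchedness of $u_i$ to the corresponding send $e$ of $z$; I then apply the feasible eventual reception of $L$ at the prefix of $z$ ending in $e$ (where $e$ is necessarily unmatched, since a matching receive can only occur later) to obtain an $L$-extension in which $e$ is matched, and finally transport this extension back to a $\close(L)$-extension of $u$ by congruence of $\interswap$ together with matching-invariance. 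The infinite case is handled by replacing the finite $z$ with a sufficiently long finite prefix of an infinite word of $L$ supplied by $\preceq_{\interswap}^{\omega}$.

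For the reflection direction I would argue contrapositively. Since $L \subseteq \close(L)$ gives $\pref(L) \subseteq \pref(\close(L))$, any offending prefix $w \in \pref(L)$ with unmatched send $w_i$ also lies in $\pref(\close(L))$, so a $\close(L)$-witness $w \le u' \in \close(L)$ matching $w_i$ exists; the task is to turn this into a genuine prefix-extension of $w$ \emph{inside} $L$. I would pull $u'$ back to some $z \in L$ with $u' \interswap z$ (or $u' \preceq_{\interswap}^{\omega} z$ in the infinite case), use matching-invariance to locate the receive matching $w_i$ in $z$, and then re-linearise the events of $z$ so that $w$ is recovered as a literal prefix while that matching receive is retained. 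This last step is the main obstacle: the closure relates $u'$ and $z$ only up to reordering, so I must exploit the per-channel order-preservation of $\interswap$ and the FIFO discipline to show that the order ideal generated by $w$ admits a linearisation which is simultaneously a prefix of a word of $L$ and still contains the matching receive. Establishing that this re-linearisation lands back in $L$ (rather than merely in $\close(L)$) is exactly where the argument is delicate, and I expect it to be the crux of the whole proof.
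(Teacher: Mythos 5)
Your overall skeleton is the paper's: the single load-bearing fact in both arguments is that $\interswap$ preserves per-participant (hence per-channel) projections, so the $n$-th send on a given channel and its matchedness are invariant across $\interswap$-equivalent words; both directions are then witness transport. Your preservation direction is right in substance, with one repair needed: after locating the counterpart $e$ of $u_i$ in $z\in L$ (where $u\preceq_\interswap z$), you apply feasible eventual reception of $L$ to the prefix of $z$ ending at $e$. The extension of that \emph{short} prefix returned by the hypothesis need not be compatible with the events of $u$ whose counterparts sit beyond $e$ in $z$, so you cannot in general transport it back to an extension of $u$. Apply the hypothesis instead to a prefix of $z$ long enough to contain the counterparts of all of $u$ (for finite $z$, to $z$ itself; and if $e$ is already matched in $z$, no appeal to the hypothesis is needed at all). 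With that change the congruence argument goes through exactly as you describe, and matches the paper's (terser) argument.

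The genuine gap is in the reflection direction, and you have put your finger on it without closing it. The re-linearisation you propose --- reorder $z$ so that $w$ reappears as a literal prefix while keeping the matching receive --- produces a word that is merely $\interswap$-equivalent to $z$, and there is no reason for that reordering to lie in $L$: the lemma is stated for an arbitrary $L$, which is not closed under $\interswap$. So this step, as planned, fails, and it is not how the paper proceeds. The paper's proof of this direction takes the word $u\in L$ with $u\interswap u'$ (resp., in the infinite case, a finite prefix $v$ of $u\in L$ with $v'\preceq_\interswap v$) directly as the witness, identifies ``the send event $w_i$'' with its counterpart in $u$ via its per-channel position, and invokes matching-invariance; it never re-establishes $w$ as a literal prefix of the witness. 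If one insists on the strictly literal reading of the definition (the witness must extend $w$ in the prefix order), then it is the paper's step that would need further justification, so your sense that something delicate happens here is sound --- but the way out is to transport matchedness across $\interswap$ as the paper does, not to force the witness back into prefix position inside $L$, which is not achievable in general.
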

\begin{proof}
Let us recall the definition of feasible eventual reception:
for every finite word $w \is w_1 \ldots w_n \in \pref(L)$ such that $w_i$ is an unmatched send event, there is an extension $w \preforder w' \in L$ of $w$ such that $w_i$ is matched in $w'$\negthinspace.

For the direction from left to right, it suffices to observe the following.
For every (finite) prefix $u \in \pref(\close(L))$ with an unmatched send event, there is $w \in L$ with $u \interswap w$ and, by assumption, there is $w \preforder w'$ where this send event is matched.
For finite words, $\interswap$ does not affect matching (as it only considers the send and receive events for one channel).
It is obvious that we can also use $w'$ as witness for $u$.

Let us consider the direction from right to left.
We need to show that for every finite word $w \is w_1 \ldots w_n \in \pref(L)$ such that $w_i$ is an unmatched send event, there is an extension $w \preforder w' \in L$ of $w$ such that $w_i$ is matched in $w'$\negthinspace.
By assumption, we know there is $u' = w_1 \ldots w_n \ldots \in \close(L)$ with $w \preforder u'$ where $w_i$ is matched.
We  do a case distinction whether $u'$ is finite.
If $u'$ is finite, there is $u \in L$ with $u \interswap u'$ by definition.
It is straightforward that $w_i$ is also matched in $u$, proving the claim.
Let us suppose that $u'$ is infinite.
Despite, by definition, there is $j$ such that $w_j$ is the matching receive event for $w_i$.
(It is obvious that $j > n$.)
By definition of $\close(\hole)$, there is $u \in L$ with $u' \preceq_\interswap^\omega u$.
This means for each finite prefix $v' \negthinspace$ of $u'$, there is a finite prefix~$v$ of~$u$ such that
$v' \cat v'' \interswap v$ for some $v''$.
We choose $v' = w_1 \ldots w_n \ldots w_j$.
Hence, we can obtain $v \in \pref(L)$ where $w_j$ occurs and matches $w_i$ because, as argued before, $\interswap$ does not affect matching for finite words.
For the definition of feasible eventual reception, we need a word in $L$.
We choose $u$, which is a continuation of $v$, where $w_j$ is still matched.
\proofEndSymbol
\end{proof}

 \section{Additional Material for \cref{sec:projection-dec-undec}}
\label{app:projection-dec-undec}

Without loss of generality,
in what follows we assume there are no $\emptystring$-transitions in a PSM $\PSM$ if $\lang(\PSM) = \set{\emptystring}$.

\subsection{Finding Channel Bounds for PSMs}
\label{sec:checking-tameness-for-psms}

Our algorithm to generate channel bounds consists of three phases. 

First, we detect all channels for which we need a channel bound. 
For this, we consider every transition with label 
$(q_1, \snd{\procA}{\procB}{\val}, q_2)$ 
and check if there is only a single transition from $q_2$ such that
$(q_2, \snd{\procA}{\procB}{\val}, q_3)$. 
If not, the channel $\channel{\procA}{\procB}$ needs to have a channel bound. 
We call these \emph{detected} channels.

Second, we check if the channel bound can be upperbounded.
For every detected channel $\channel{\procA}{\procB}$, we consider every loop with a message from $\procA$ to $\procB$ and check the following. 
We check if there is a sequence of messages from $\procB$ back to $\procA$ (possibly involving other intermediary participants), 
i.e., does the loop have a subword of the form \[
\snd{\procB}{\procC_1}{m_1} \cat
\rcv{\procB}{\procC_1}{m_1} \cat 
\snd{\procC_1}{\procC_2}{m_2} \cat 
\rcv{\procC_1}{\procC_2}{m_2} \cat 
\ldots 
\snd{\procC_n}{\procA}{m_n} \cat 
\rcv{\procC_n}{\procA}{m_n} 
\text{ for } n \geq 0.
\]
where subword is defined in the expected way. 

Intuitively, the condition enforces that any message sent in one iteration from $\procA$ to $\procB$ needs to be received before a message from $\procA$ to $\procB$ will be sent in the next iteration of the loop.
Hence, if satisfied, $\procA$ will not be able to go ahead with loop executions without $\procB$ receiving its messages. 
If this condition is violated for any detected channel, the PSM will not respect any channel bounds.
This condition is inspired from bounded HMSCs \cite{DBLP:conf/concur/AlurY99,DBLP:conf/mfcs/MuschollP99}, where they require the communication topology of every loop to form a strongly connected component, yielding that all channels will always be bounded.

Third, we compute the channel bounds.
In the second step, we ensured that loops will not add unmatched messages to the channels. 
Hence, we do not need to account for loops in the computation of channel bounds. 
We consider every loop-free path from the initial state. 
For every detected channel, we compute the number of messages in the channel and the maximum over all loop-free paths yields the channel bound.
We shall not restrict ourselves to paths ending in final states as we also consider infinite executions and there might be loop-free paths for which no extension to a final state exists.

\subsection{Projection Synthesis and Projectability for a Class of PSMs}
\label{sec:channel-participant-encoding}

\citet{DBLP:conf/cav/LiSWZ23} propose a sound and complete algorithm to generate projections of global types from MSTs.
Transferred to our setting, this amounts to \sinkfinal sender-driven \sumOnePSMs.
We propose an encoding that encodes PSMs as \mbox{\sumOnePSMs.}
This encoding preserves choice restrictions and is \sinkfinal if and only if the original PSM is.

With \cref{ex:revisiting-KLE},
we gave intuition about the channel participant encoding for
the KLE example.
\cref{fig:KLE-implementations} provides the respective projection and decoded FSM for $\p{e}$.

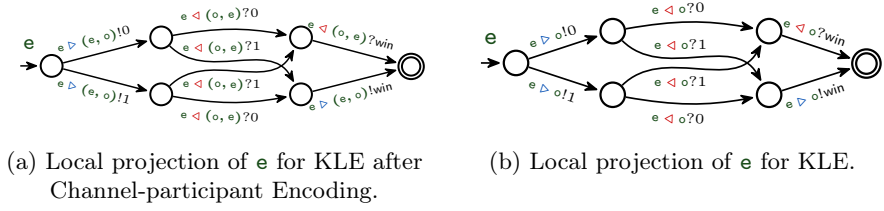
\begin{figure}[t]
\begin{subfigure}[b]{0.49\textwidth}
\centering
\resizebox{0.97\textwidth}{!}{
 \begin{tikzpicture}[csm, node distance=.5em and 4em]
  \node[state,init,label=above left:\p{e}](s){};
  \node[state,above right=of s] (s0) {};
  \node[state,right=5em of s0]      (sE) {};
  \node[state,below right=of s] (s1) {};
  \node[state,right=5em of s1]      (sO) {};
  \node[state,below right=of sE,final](end){};
  \draw
    (s)  edge node[sendchan=e to o:0]{} (s0)
    (s0) edge[out=-45,in=120] node[recvchan=o from e:1]{} (sO)
    (s0) edge[bend left=10pt] node[recvchan=o from e:0]{} (sE)
    (s)  edge node[down,sendchan=e to o:1]{} (s1)
    (s1) edge[out=45,in=-120] node[down,recvchan=o from e:1]{} (sE)
    (s1) edge[bend right=10pt] node[down,recvchan=o from e:0]{} (sO)
(sE)  edge node[recvchan=o from e:\win]{} (end)
    (sO)  edge node[down,sendchan=e to o:\win]{} (end)
  ;
\end{tikzpicture}
 }
 \caption{Local projection of \p{e} for KLE after Channel-participant Encoding.}
 \label{fig:KLE-impl-chan-part-enc}
\end{subfigure}
\begin{subfigure}[b]{0.49\textwidth}
\centering
\resizebox{0.97\textwidth}{!}{
 \begin{tikzpicture}[csm, node distance=.5em and 3em]
  \node[state,init,label=above left:\p{e}](s){};
  \node[state,above right=of s] (s0) {};
  \node[state,right=5em of s0]      (sE) {};
  \node[state,below right=of s] (s1) {};
  \node[state,right=5em of s1]      (sO) {};
  \node[state,below right=of sE,final](end){};
  \draw
    (s)  edge node[send=e to o:0]{} (s0)
    (s0) edge[out=-45,in=120] node[recv=e from o:1]{} (sO)
    (s0) edge[bend left=10pt] node[recv=e from o:0]{} (sE)
    (s)  edge node[down,send=e to o:1]{} (s1)
    (s1) edge[out=45,in=-120] node[down,recv=e from o:1]{} (sE)
    (s1) edge[bend right=10pt] node[down,recv=e from o:0]{} (sO)
(sE)  edge node[recv=e from o:\win]{} (end)
    (sO)  edge node[down,send=e to o:\win]{} (end)
  ;
\end{tikzpicture} }
 \caption{Local projection of \p{e} for KLE. \\ \phantom{empty line}}
 \label{fig:KLE-impl}
\end{subfigure}
\caption{Local projections for participant \p{e} for KLE variants.}
\label{fig:KLE-implementations}
\end{figure}

We also hinted at the fact that only having one channel participant per channel can be unsound.
The following example shows this.

\begin{example}
\label{ex:example-issue-with-only-one-channel-participant}
Consider the word
$
\snd{\procA}{\procB}{\val} \cat
\snd{\procA}{\procB}{\val} \cat
\rcv{\procA}{\procB}{\val} \cat
\rcv{\procA}{\procB}{\val}
$.
For now, let us say that $\procC$ acts as channel participant from $\procA$ to $\procB$.
Then, the encoded word would be
$
\msgFromTo{\procA}{\procC}{\val} \cat
\msgFromTo{\procA}{\procC}{\val} \cat
\msgFromTo{\procC}{\procB}{\val} \cat
\msgFromTo{\procC}{\procB}{\val}
$.
While the original version allows us to swap the 2nd and 3rd event:
$
\snd{\procA}{\procB}{\val} \cat
\rcv{\procA}{\procB}{\val} \cat
\snd{\procA}{\procB}{\val} \cat
\rcv{\procA}{\procB}{\val}
$,
this is not possible for the encoded word as $\procC$ induces an ordering between them.
Intuitively, this means that our encoding would not consider certain scenarios, possibly leading to unsoundness.
\end{example}

To avoid this, intuitively, we would want to introduce a new channel participant for every message.
However, we only know how to check projectability for a finite number of participants.
Thus, we introduce the notion of \emph{channel bounds} that allows us to restrict the number of channel participants.
Intuitively, they restrict the number of messages in flight in a channel.
This allows re-use of the channel participants for later messages.
Obviously, if we imposed this on every channel, we would end up with finite state systems.
Thus, we distinguish between channels for which we need channel participants and for which we do not need them: 
channel $\channel{\procA}{\procB}$
does not need a channel bound if all its
actions of shape $\snd{\procA}{\procB}{\val}$
are immediately followed by actions of shape
$\rcv{\procA}{\procB}{\val}$.

We now formalise our encoding and prove it correct.

Formally, we introduce channel alphabets to define channel bounds.

\begin{definition}[Channel alphabets]
For readability, we define the alphabet of a channel $\channel{\procA}{\procB} \in \channels$:
$
  \AlphAsync_{\channel{\procA}{\procB}} \is
    \AlphAsync_{{\channel{\procA}{\procB}}, !}
    \dunion
    \AlphAsync_{{\channel{\procA}{\procB}}, ?}
$
where
$
    \AlphAsync_{{\channel{\procA}{\procB}}, !} \is
        \set{ \snd{\procA}{\procB}{\val} \mid \val \in \MsgVals}
$ and $
    \AlphAsync_{{\channel{\procA}{\procB}}, ?} \is
        \set{ \rcv{\procA}{\procB}{\val} \mid \val \in \MsgVals}
$.
\end{definition}

Note that we already used these alphabets to define channel bounds in the main text but did not use a special symbol for them.
Here, we also need to know when a word satisfies channel bounds.

\begin{definition}[More on channel bounds] We say a word $w \subseteq \AlphAsync^\infty$ respects the channel bounds $\chanBounds$ if
for every channel $\channel{\procA}{\procB} \in \channels$, the following holds:
if $\chanBounds(\channel{\procA}{\procB})$ is defined, then
 $w \wproj_{\AlphAsync_{\channel{\procA}{\procB}}}$ is
$\chanBounds(\channel{\procA}{\procB})$-bounded.
\end{definition}

\begin{proposition}
\label{prop:respecting-channel-bounds-from-psm-to-semantics}
 If a PSM $\PSM$ respects channel bounds $\chanBounds$, then every word in its semantics does.
\end{proposition}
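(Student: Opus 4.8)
The plan is to prove the statement by unfolding the two notions of ``respecting $\chanBounds$'' --- the one for a PSM (the main-text definition) and the one for a single word (the appendix definition ``More on channel bounds'') --- and observing that they are related by a mere reordering of quantifiers. First I would fix an arbitrary $w \in \semantics(\PSM) = \close(\lang(\PSM))$ and, by the word-level definition, reduce the goal to showing that $w \wproj_{\AlphAsync_{\channel{\procA}{\procB}}}$ is $\chanBounds(\channel{\procA}{\procB})$-bounded for every channel $\channel{\procA}{\procB} \in \domainOf(\chanBounds)$; channels outside $\domainOf(\chanBounds)$ impose no constraint on $w$ and are discarded immediately.

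The key bookkeeping step is to note that $\AlphAsync_{\channel{\procA}{\procB}}$ consists exactly of the send events $\snd{\procA}{\procB}{\val}$ and receive events $\rcv{\procA}{\procB}{\val}$ of that channel, so the projection $w \wproj_{\AlphAsync_{\channel{\procA}{\procB}}}$ coincides with $w \wproj_{\snd{\procA}{\procB}{}, \rcv{\procA}{\procB}{}}$ as used in the definition of a PSM respecting $\chanBounds$. With this identification, for each in-domain channel the required boundedness of $w$ is precisely the second clause of the hypothesis ``$\PSM$ respects $\chanBounds$'', which is already quantified over all $w \in \semantics(\PSM)$. Hence the conclusion follows simply by instantiating that clause at our fixed $w$ and at each in-domain channel in turn.

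The only genuine point to verify is well-definedness: the predicate ``$\chanBounds(\channel{\procA}{\procB})$-bounded'' is defined only for channel-compliant words, so I must ensure that each $w \in \semantics(\PSM)$, and hence its single-channel projection, is channel-compliant. This holds because $\lang(\PSM) \subseteq \FIFO_B \subseteq \FIFO$ by the definition of a PSM, and the indistinguishability relation $\interswap$ preserves channel-compliance: every swap rule either reorders events lying on distinct channels or, in the case of rule~\ref{rule:indistinguishability-relation-send-receive-history}, is guarded (by $\card{w \wproj_{\snd{\procA}{\procB}{}}} > \card{w \wproj_{\rcv{\procA}{\procB}{}}}$) precisely so as to keep the received values a prefix of the sent values on each channel. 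Consequently $\close(\lang(\PSM)) \subseteq \FIFO$, and projecting a channel-compliant word onto one channel's alphabet again yields a channel-compliant word.

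I do not expect a serious obstacle here: the content is purely definitional, because the hypothesis is phrased directly over the semantics $\semantics(\PSM)$ rather than over the core language $\lang(\PSM)$. Were it instead stated over the core language, the real work would be to show that per-channel $\chanBounds(\channel{\procA}{\procB})$-boundedness is preserved by $\close(\hole)$ --- which, as the analysis of rule~\ref{rule:indistinguishability-relation-send-receive-history} indicates, is delicate, since its guard controls the direction in which the pending-message count may grow. As the statement stands, that difficulty is absorbed into the hypothesis, and the proof reduces to matching notation and swapping quantifiers.
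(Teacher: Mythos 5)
Your proof is correct and matches the paper's intent exactly: the paper states this proposition without proof precisely because the second clause of ``$\PSM$ respects $\chanBounds$'' is already quantified over all $w \in \semantics(\PSM)$, so the claim is a quantifier rearrangement plus the notational identification of $w \wproj_{\AlphAsync_{\channel{\procA}{\procB}}}$ with $w \wproj_{\snd{\procA}{\procB}{}, \rcv{\procA}{\procB}{}}$. Your additional well-definedness check (that $\close(\hole)$ preserves channel-compliance, so ``$B$-bounded'' is applicable to words of $\semantics(\PSM)$) is a reasonable and correct piece of diligence that the paper leaves implicit.
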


\begin{remark}[Notation]
Intuitively, if a PSM respects channel bounds $\chanBounds$, for channels that are not in the domain of $\chanBounds$, the reception of a message is specified right after it was sent. 
We may call such events \emph{paired} and will abuse notation: we assume transitions of PSMs are labelled with $\AlphSync$ if and only if the respective channel has an undefined channel bound in $\chanBounds$.
\end{remark}

Given a channel bound for a channel, we know the number of messages in this channel will never exceed this bound.
Intuitively, we can therefore treat this channel as a ring buffer with a producer, i.e.\ the sender, and a consumer, i.e.\ the receiver.
For both, we will keep track which is the next position to send to or to receive from.
To this end, we use the additive group of integers modulo $\chanBounds(\channel{\procA}{\procB})$ for every channel $\channel{\procA}{\procB}$:
$(\Int_{\chanBounds(\channel{\procA}{\procB})}, +)$.
It is common to write
$5 \modOp 2 \modSuff \chanBounds(\channel{\procA}{\procB})$ for
$\chanBounds(\channel{\procA}{\procB}) = 3$.
In our setting,
$\chanBounds(\channel{\procA}{\procB})$
will always be clear from context.
Therefore, we will omit it.

\begin{definition}[Channel participants]
Given channel bounds $\chanBounds$, we define a set of \emph{channel participants} for the channels in the domain of $\chanBounds$:
\[
    \ProcsCh \is \Union_{0 \leq \roleFmt{i} < \chanBounds(\channel{\procA}{\procB})}
                    \set{\procChanI{\procA}{\procB}{i}}
\quad \text{ and } \quad
    \ProcsWCh \is \Procs \dunion \ProcsCh
\]
to obtain the new set with both the original participants and the channel participants.
The set of events $\AlphAsyncWCh$ for $\ProcsWCh$ is defined as follows:
{ \small
\begin{align*}
 \AlphAsyncWCh \is \set{
    & \snd{\procA}{\procChan{\procA}{\procB}}{\val} \;,\;
      \rcv{\procA}{\procChan{\procA}{\procB}}{\val} \;,\;
      \snd{\procChan{\procA}{\procB}}{\procB}{\val} \;,\;
      \rcv{\procChan{\procA}{\procB}}{\procB}{\val} \\
    & \mid \procA, \procB ∈ \Procs, \procChan{\procA}{\procB} \in \ProcsCh \text{ and } \val ∈ \MsgVals},
\end{align*}
}
Again, we define syntactic sugar for paired send and receive events:
\begin{align*}
    \msgFromTo{\procA}{\procChan{\procA}{\procB}}{\val}
    & \is
    \snd{\procA}{\procChan{\procA}{\procB}}{\val} \cat
    \rcv{\procA}{\procChan{\procA}{\procB}}{\val}
    \quad \text{ and } \quad
    \\
    \msgFromTo{\procChan{\procA}{\procB}}{\procB}{\val}
    & \is
    \snd{\procChan{\procA}{\procB}}{\procB}{\val} \cat
    \rcv{\procChan{\procA}{\procB}}{\procB}{\val}
\end{align*}
for
$\procA, \procB ∈ \Procs, \procChan{\procA}{\procB} \in \ProcsCh \text{ and } \val ∈ \MsgVals$.
The set of these is denoted by $\AlphSyncWCh\negthinspace$.
Note that, in $\AlphAsyncWCh\negthinspace$, participants do only send to and receive from the respective channel participants.
Consequently, the alphabet for channel participants is
$
    \AlphAsyncWCh_{\procChan{\procA}{\procB}}
    \is
    \set{
        \snd{\procChan{\procA}{\procB}}{\procB}{\val},
        \rcv{\procA}{\procChan{\procA}{\procB}}{\val}
        \mid \val \in \MsgVals
    }
$
while the alphabet for $\procA \in \Procs$ is
{ \small
\begin{align*}
    \AlphAsyncWCh_{\procA}
    \is
    \set{
        \snd{\procA}{\procChan{\procA}{\procB}}{\val},
        \rcv{\procChanI{\procB}{\procA}{j}}{\procA}{\val}
        \mid \;
        & \val \in \MsgVals, \procB \in \Procs,
        0 \leq i < \chanBounds(\channel{\procA}{\procB}),
        \\ &
        \text{ and }
        0 \leq j < \chanBounds(\channel{\procB}{\procA})
    }
    \enspace .
\end{align*}
}
\end{definition}

\begin{definition}[Channel-ordered]
We say a word $w$ over $\AlphSyncWCh$ is \emph{channel-ordered} if,
for every channel $\channel{\procA}{\procB} \in \channels$,
the following holds:
\begin{itemize}
 \item Let $w'_0 w'_1 \ldots \is w \wproj_{\msgFromTo{\procA}{\procChanI{\procA}{\procB}{\_}}{\_}}$.
 It holds that $w'_i = \msgFromTo{\procA}{\procChanI{\procA}{\procB}{j}}{\_}$ for
 $\roleFmt{j} \modOp i$ for all $i$.
 \item Let $w'_0 w'_1 \ldots \is w \wproj_{\msgFromTo{\procChanI{\procA}{\procB}{\_}}{\procB}{\_}}$.
 It holds that $w'_i = \msgFromTo{\procChanI{\procA}{\procB}{j}}{\procB}{\_}$ for
 $\roleFmt{j} \modOp i$ for all $i$.
\end{itemize}
We say that a word $w$ over $\AlphAsyncWCh$ is channel-ordered if there is channel-ordered word $w' \in (\AlphSyncWCh)^\infty$ such that $w' \interswap w$.

For participant $\procA$, we say a word $w$ over $\AlphSyncWCh_\procA$ is \emph{channel-ordered} if,
for every channel $\channel{\procA}{\procB} \in \channels$,
the following holds:
\begin{itemize}
 \item Let $w'_0 w'_1 \ldots \is w \wproj_{\snd{\procA}{\procChanI{\procA}{\procB}{\_}}{\_}}$.
 It holds that $w'_i = \snd{\procA}{\procChanI{\procA}{\procB}{j}}{\_}$ for
 $\roleFmt{j} \modOp i$ for all $i$.
 \item Let $w'_0 w'_1 \ldots \is w \wproj_{\rcv{\procChanI{\procA}{\procB}{\_}}{\procB}{\_}}$.
 It holds that $w'_i = \rcv{\procChanI{\procA}{\procB}{j}}{\procB}{\_}$ for
 $\roleFmt{j} \modOp i$ for all $i$.
\end{itemize}
\end{definition}

Equipped with these definitions, we can define the channel participant encoding for a PSM that respects channel bounds, for which we briefly give an intuition.

We introduce channel participants
$\procChanI{\procA}{\procB}{i}$
for every channel $\channel{\procA}{\procB} \in \domainOf(\chanBounds)$,
concretely $\chanBounds(\channel{\procA}{\procB})$ of them.
From a PSM~$\PSM$, we obtain an encoded PSM $\encchanpsm(\PSM)$ by substituting every send and receive event by the respective (paired) event involving the channel participants,
\eg
we change 
$\snd{\procA}{\procB}{\val}$
into 
$\msgFromTo{\procA}{\procChanI{\procA}{\procB}{i}}{\val}$
if it is the $i^{\text{th}}$ send event.
Provided that we can obtain a projection $\CSMwCh{A}$ of $\encchanpsm(\PSM)$,
we can compute a CSM by rebending the messages to and from channel participants, yielding the decoding function $\decchanfsm(A_\procA)$ for every $\procA$.
Note that, in such an encoded PSM, a participant $\procA$ only communicates with channel participants,
\ie all its communications are of form
$\snd{\procA}{\procChanI{\procA}{\_}{i}}{\_}$
or
$\rcv{\procChanI{\_}{\procA}{i}}{\procA}{\_}$.
For channel participants, it is even more restricted: a channel participant $\procChan{\procA}{\procB}$ solely receives from $\procA$ and sends to $\procB$.

\begin{definition}[Channel-participant Encoding]
\label{def:channel-participant-encoding}
The channel-participant encoding for a PSM $\PSM$ that respects channel bounds $\chanBounds$ is a $6$-tuple
\[
\left( \;
    \encchanpsm,
    \encchanwords,
    \decchanwords,
    \encchanfsm,
    \decchanfsm,
    A_{(\procA, \procB)} \text{ for } (\procA, \procB) \in \ProcsCh
\; \right)
\]
with the following definitions:
\begin{itemize}
 \item $\encchanpsm$ turns a PSM into a \sumOnePSM:
    \[
    \encchanpsm((Q, \AlphAsync, \delta, q_{0}, F)) \is
    (Q', \AlphSyncWCh, \delta', q'_0, F')
    \qquad \text{where}
    \]
    \begin{itemize}
     \item $Q' \is
                Q \times
                (\set{\channel{\procA}{\procB} \to \Int_{\chanBounds(\channel{\procA}{\procB})}}_{\channel{\procA}{\procB} \in \domainOf(\chanBounds)})^2 $,
     \item $q'_0 \is
            \left( q_{0}, (\set{\channel{\procA}{\procB} \mapsto 0}_{\channel{\procA}{\procB} \in \domainOf(\chanBounds)})^2 \right)$,
     \item $F' \is
            F \times (\set{\channel{\procA}{\procB} \mapsto 0}_{\channel{\procA}{\procB} \in \domainOf(\chanBounds)})^2$,
    \end{itemize}
    and the transition relation $\delta'$ is defined as follows:
    \begin{itemize}
     \item $(
                (q_1, \chanBoundsMapSnd, \chanBoundsMapRcv),
                \; x, \;
                (q_2, \chanBoundsMapSnd, \chanBoundsMapRcv)
            ) \in \delta'$ if
            $x \in \AlphSync$ and
            $(q_1, \; x, \; q_2) \in \delta$,
     \item $(
                (q_1, \chanBoundsMapSnd, \chanBoundsMapRcv),
                \; \msgFromTo{\procA}{\procChanI{\procA}{\procB}{j}}{\val}, \;
                (q_2,
                \chanBoundsMapSnd[\channel{\procA}{\procB} \mapsto \chanBoundsMapSnd(\channel{\procA}{\procB}) + 1],
                \channel{\procA}{\procB}), \chanBoundsMapRcv)
            ) \in \delta'$ if \\
            $(q_1, \; \snd{\procA}{\procB}{\val}, \; q_2) \in \delta$ and
            $\roleFmt{j} = \chanBoundsMapSnd(\channel{\procA}{\procB})$,
     \item $(
                (q_1, \chanBoundsMapSnd, \chanBoundsMapRcv),
                \; \msgFromTo{\procChanI{\procB}{\procA}{j}}{\procA}{\val}, \;
                (q_2, \chanBoundsMapSnd,
                \chanBoundsMapRcv[\channel{\procB}{\procA} \mapsto \chanBoundsMapRcv(\channel{\procB}{\procA}) + 1]
            ) \in \delta'$ if \\
            $(q_1, \; \rcv{\procB}{\procA}{\val}, \; q_2) \in \delta$ and
            $\roleFmt{j} = \chanBoundsMapRcv(\channel{\procB}{\procA})$.
    \end{itemize}

 \item $\encchanwords \from \AlphAsync^\infty \to (\AlphSyncWCh)^\infty$: \[
        \encchanwords(w_1 w_2 \ldots) \is
            \encchanwords(w_1) \cat
            \encchanwords(w_2) \ldots
    \quad \text{where }
    \]
    \vspace{-2.5ex}
    {
    \small
    \[
    \encchanwords(w_i) \is
    \begin{cases}
        \msgFromTo{\procA}{\procChanI{\procA}{\procB}{j}}{\val}
            & \text{if } w_i = \snd{\procA}{\procB}{\val}
            \text{ and } \roleFmt{j} \modOp \card{(w_1 \ldots w_{i-1}) \wproj_{\snd{\procA}{\procB}{\_}}}
            \\
        \msgFromTo{\procChanI{\procA}{\procB}{j}}{\procB}{\val}
            & \text{if } w_i = \rcv{\procA}{\procB}{\val}
            \text{ and } \roleFmt{j} \modOp \card{(w_1 \ldots w_{i-1}) \wproj_{\rcv{\procA}{\procB}{\_}}}
            \\
        w_i
            & \text{if }
            (w_i = \snd{\procA}{\procB}{\val}
            \text{ or }
            w_i = \rcv{\procA}{\procB}{\val}) \\
            & \quad
            \text{ and } \chanBounds(\channel{\procA}{\procB})
            \text{ is undefined}
    \end{cases}
\enspace .
    \]
    }
 \item $\decchanwords \from (\AlphSyncWCh)^\infty \to \AlphAsync^\infty$:
    \[
        \decchanwords(w_1 w_2 \ldots) \is
        \decchanwords(w_1) \cat \decchanwords(w_2) \ldots
    \quad \text{where }
    \]
    \vspace{-2.5ex}
    {
    \small
    \[
    \decchanwords(x) \is
    \begin{cases}
    \snd{\procA}{\procB}{\val}
        & \text{if } x = \msgFromTo{\procA}{\procChan{\procA}{\procB}}{\val} \\
\rcv{\procA}{\procB}{\val}
        & \text{if } x = \msgFromTo{\procChan{\procA}{\procB}}{\procB}{\val} \\
    \end{cases}
\enspace .
    \]
    }
 \item $\encchanfsm$ that turns an FSM over $\AlphAsync_\procA$ to an FSM over $\AlphAsyncWCh_\procA$:
    \[
        \encchanfsm((Q, \AlphAsync_\procA, \delta, q_{0}, F)) \is
        (Q', \AlphAsyncWCh_\procA, \delta', q'_0, F')
        \qquad \text{where}
    \]
    \begin{itemize}
     \item $Q' \is
             Q \times
             \set{\channel{\procA}{\procB} \to \Int_{\chanBounds(\channel{\procA}{\procB})}}_{\channel{\procA}{\procB} \in \domainOf(\chanBounds)}  \times
             \set{\channel{\procB}{\procA} \to \Int_{\chanBounds(\channel{\procB}{\procA})}}_{\channel{\procB}{\procA} \in \domainOf(\chanBounds)} $,
     \item $q'_0 \is
            \left(
                q_{0},
                \set{\channel{\procA}{\procB} \mapsto 0}_{\channel{\procA}{\procB} \in \domainOf(\chanBounds)},
                \set{\channel{\procB}{\procA} \mapsto 0}_{\channel{\procB}{\procA} \in \domainOf(\chanBounds)}
            \right)$,
     \item $F' \is
            F \times
                (\set{\channel{\procA}{\procB} \mapsto 0}_{\channel{\procA}{\procB} \in \domainOf(\chanBounds)}) \times
                (\set{\channel{\procB}{\procA} \mapsto 0}_{\channel{\procB}{\procA} \in \domainOf(\chanBounds)}) $,
    \end{itemize}
    and the transition relation $\delta'$ is defined as follows:
    \begin{itemize}
     \item $(
                (q_1, \chanBoundsMapSnd, \chanBoundsMapRcv),
                \; \snd{\procA}{\procB}{\val}, \;
                (q_2, \chanBoundsMapSnd, \chanBoundsMapRcv)
            ) \in \delta'$ if
            $(q_1, \snd{\procA}{\procB}{\val}, q_2) \in \delta$ and
           $\channel{\procA}{\procB} \notin \domainOf(\chanBounds)$,
     \item $(
                (q_1, \chanBoundsMapSnd, \chanBoundsMapRcv),
                \; \rcv{\procB}{\procA}{\val}, \;
                (q_2, \chanBoundsMapSnd, \chanBoundsMapRcv)
            ) \in \delta'$ if
            $(q_1, \rcv{\procB}{\procA}{\val}, q_2) \in \delta$ and
            $\channel{\procA}{\procB} \notin \domainOf(\chanBounds)$,
     \item $(
                (q_1, \chanBoundsMapSnd, \chanBoundsMapRcv),
                \; \snd{\procA}{\procChanI{\procA}{\procB}{j}}{\val}, \;
                (q_2,
                \chanBoundsMapSnd[\channel{\procA}{\procB} \mapsto \chanBoundsMapSnd(\channel{\procA}{\procB}) + 1],
                \chanBoundsMapRcv)
            ) \in \delta'$ if \\
            $(q_1, \; \snd{\procA}{\procB}{\val}, \; q_2) \in \delta$ and
            $\roleFmt{j} = \chanBoundsMapSnd(\channel{\procA}{\procB})$, and
     \item $(
                (q_1, \chanBoundsMapSnd, \chanBoundsMapRcv),
                \; \rcv{\procChanI{\procB}{\procA}{j}}{\procA}{\val}, \;
                (q_2, \chanBoundsMapSnd,
                \chanBoundsMapRcv[\channel{\procB}{\procA} \mapsto \chanBoundsMapRcv(\channel{\procB}{\procA}) + 1],
                )
            ) \in \delta'$ if \\
            $(q_1, \; \rcv{\procB}{\procA}{\val}, \; q_2) \in \delta$ and
            $\roleFmt{j} = \chanBoundsMapRcv(\channel{\procB}{\procA})$.
    \end{itemize}
\item $\decchanfsm$ that turns an FSM over $\AlphAsyncWCh_\procA$ to an FSM over $\AlphAsync_\procA$:
  \[
    \decchanfsm((Q, \AlphAsyncWCh_\procA, \delta, q_{0}, F)) \is
    (Q, \AlphAsync_\procA, \delta', q_{0}, F)
    \quad \text{where }
  \]
        $(q_1, h(x), q_2) \in \delta'$ if $(q_1, x, q_2) \in \delta$
        with
        {
        \small
        $
        h(x) \is
        \begin{cases}
                \snd{\procA}{\procB}{\val}
                & \text{if } x = \snd{\procA}{\procChan{\procA}{\procB}}{\val} \\
                \rcv{\procB}{\procA}{\val}
                & \text{if } x = \rcv{\procChan{\procB}{\procA}}{\procA}{\val} \\
        \end{cases}
\enspace .
        $
        }

 \item
    $A_{\procChan{\procA}{\procB}} \is
    \left((q_{0,f} \dunion \Union_{\val \in \MsgVals} q_{\val}), \AlphAsync_{(\procA, \procB)}, \delta, q_{0,f}, \set{q_{0,f}}\right)$ where \\
    $(q_{0,f}, \rcv{\procA}{\procChan{\procA}{\procB}}{\val}, q_{\val}) \in \delta$
    and
    $(q_{\val}, \snd{\procChan{\procA}{\procB}}{\procB}{\val}, q_{0,f}) \in \delta$
    for every $\val \in \MsgVals$.
\end{itemize}
\end{definition}

Eventually, we want to prove the following:
first, if the encoded PSM is projectable, then the original PSM is;
second, if the original PSM is projectable, then the encoded PSM is projectable.
In particular, we will show that the respective projections can be constructed.
The second condition is important to obtain completeness.

What follows is a sequence of lemmas that establishes properties of the channel-participant encoding, ultimately leading to the desired result.

From now on, we fix a PSM $\PSM$, channel bounds $\chanBounds$
and a corresponding channel-participant encoding
\[
\left( \;
    \encchanpsm,
    \encchanwords,
    \decchanwords,
    \encchanfsm,
    \decchanfsm,
    A_{(\procA, \procB)} \text{ for } (\procA, \procB) \in \ProcsCh
\; \right)
\enspace .
\]
For readability, we assume that
$\encchanfsm(A_{\procChan{\procA}{\procB}}) = A_{\procChan{\procA}{\procB}}$.

In addition, without loss of generality, we assume that for every CSM, it holds that the FSM for every participant is deterministic. 
This is possible as, if there is a CSM (which is what we will need to show for projectability), each FSM that can be determinised individually. 
Observe that $\encchanfsm(\hole)$ and $\decchanfsm(\hole)$ preserve determinism.

The following lemma characterises the property that would not hold if we only introduced one channel participant per channel (cf.\ \cref{ex:example-issue-with-only-one-channel-participant}).

\begin{lemma}
\label{lm:circular-matching}
Let $w, u \in \AlphAsync^\infty$ be two \channelcompliant words that respect $\chanBounds$ for which $w \interswap u$ holds.
For every $\procChan{\procA}{\procB} \in \ProcsCh$, it holds that
\[
\encchanwords(w) \wproj_{\AlphAsyncWCh_{\procChan{\procA}{\procB}}}
=
\encchanwords(u) \wproj_{\AlphAsyncWCh_{\procChan{\procA}{\procB}}}
\enspace .
\]
\end{lemma}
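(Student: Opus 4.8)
The plan is to exploit that $\encchanwords(\hole)$ assigns to every original communication event a fixed channel participant, and to show that this assignment, together with the relative order of the events routed through any single channel participant, is invariant under $\interswap$. Fix a channel participant $\procChanI{\procA}{\procB}{i} \in \ProcsCh$ and write $B \is \chanBounds(\channel{\procA}{\procB})$. By \cref{def:channel-participant-encoding}, the only letters of $\encchanwords(w)$ lying in $\AlphAsyncWCh_{\procChanI{\procA}{\procB}{i}}$ are the receive $\rcv{\procA}{\procChanI{\procA}{\procB}{i}}{\val}$ produced from an original send $\snd{\procA}{\procB}{\val}$ whose send-count is $\equiv i \pmod B$, and the send $\snd{\procChanI{\procA}{\procB}{i}}{\procB}{\val}$ produced from an original receive $\rcv{\procA}{\procB}{\val}$ whose receive-count is $\equiv i \pmod B$. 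Hence $\encchanwords(w) \wproj_{\AlphAsyncWCh_{\procChanI{\procA}{\procB}{i}}}$ is completely determined by the subsequence of those send/receive events on $\channel{\procA}{\procB}$ whose index is $\equiv i \pmod B$, listed in the order they occur in $w$, together with their payloads.

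First I would treat finite words. Since $\interswap$ on $\AlphAsync^*$ is the least equivalence relation generated by the adjacent transpositions of rules~\ref{rule:indistinguishability-relation-send-send}--\ref{rule:indistinguishability-relation-send-receive-history}, it suffices by induction to prove the claim when $u$ arises from $w$ by a single transposition. The key structural observation is that no rule ever transposes two events of the same kind ($\snd{}{}{}$ or $\rcv{}{}{}$) on the same channel: rules~\ref{rule:indistinguishability-relation-send-send} and~\ref{rule:indistinguishability-relation-receive-receive} require distinct channels, rule~\ref{rule:indistinguishability-relation-send-receive-indep} pairs events on distinct channels, and rule~\ref{rule:indistinguishability-relation-send-receive-history} pairs a send with a receive. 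This immediately implies that a transposition changes, for no event, the number of preceding same-kind events on its own channel; hence $\encchanwords(\hole)$ assigns every event the same channel participant in $w$ and in $u$. It then remains to check that a transposition never reorders two letters both lying in $\AlphAsyncWCh_{\procChanI{\procA}{\procB}{i}}$: the only way two such letters could be adjacent and swappable is via rule~\ref{rule:indistinguishability-relation-send-receive-history} applied to a send and a receive on $\channel{\procA}{\procB}$ that are both routed through $\procChanI{\procA}{\procB}{i}$.

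The crucial case is therefore rule~\ref{rule:indistinguishability-relation-send-receive-history}, which transposes a send $\snd{\procA}{\procB}{\val}$ immediately followed by a receive $\rcv{\procA}{\procB}{\val'}$ on $\channel{\procA}{\procB}$, under the guard that strictly more sends than receives have occurred on $\channel{\procA}{\procB}$ in the preceding prefix. Here I would use the bound $B$. Writing $m$ and $m'$ for the zero-based send- and receive-counts of these two events, the guard gives $m > m'$; after executing the send, channel $\channel{\procA}{\procB}$ holds $m+1-m'$ unreceived messages, so $B$-boundedness of $w \wproj_{\AlphAsync_{\channel{\procA}{\procB}}}$ yields $m+1-m' \le B$, i.e.\ $0 < m-m' < B$. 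Consequently $m \not\equiv m' \pmod B$, so the send and the receive are routed through \emph{different} channel participants, and transposing them leaves the slot-$i$ subsequence untouched. This is precisely where using $B$ distinct channel participants rather than one (cf.\ \cref{ex:example-issue-with-only-one-channel-participant}) becomes essential, so I would present it as the heart of the argument. Combining the two observations with the induction closes the finite case.

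Finally, for the infinite case I would reduce to an $\interswap$-invariant. The analysis above also shows that, for finite words, $\MsgVals(w\wproj_{\snd{\procA}{\procB}{\_}})$ and $\MsgVals(w\wproj_{\rcv{\procA}{\procB}{\_}})$ are unchanged by $\interswap$, and that in any channel-compliant $\chanBounds$-respecting word the slot-$i$ events occur in the forced order $s_i, r_i, s_{i+B}, r_{i+B}, \dots$ (channel-compliance puts each send before its matching receive, and $B$-boundedness puts $r_k$ before $s_{k+B}$). Thus $\encchanwords(w)\wproj_{\AlphAsyncWCh_{\procChanI{\procA}{\procB}{i}}}$ depends only on the value sequences of $\channel{\procA}{\procB}$. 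For infinite $w \interswap u$, i.e.\ $w \preceq_\interswap^\omega u$ and $u \preceq_\interswap^\omega w$, a prefix argument lifts the finite invariance of these value sequences to equality of the infinite value sequences, whence the forced-order characterization gives the claim. I expect the main obstacle to be the bound computation in rule~\ref{rule:indistinguishability-relation-send-receive-history} together with the prefix bookkeeping needed to transport the invariant from finite to infinite words.
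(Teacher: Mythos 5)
Your overall strategy is reasonable, and your closing observation --- that in any channel-compliant word respecting $\chanBounds$ the slot-$i$ events are forced into the alternating order $s_i, r_i, s_{i+B}, r_{i+B}, \dots$, so the slot-$i$ projection is a function of the $\interswap$-invariant value sequences $\MsgVals(w\wproj_{\snd{\procA}{\procB}{\_}})$ and $\MsgVals(w\wproj_{\rcv{\procA}{\procB}{\_}})$ --- is essentially the paper's proof (its extraction function $h$ and Claims~1--2). However, the argument you present as the heart of the finite case, the induction over single transpositions, has a genuine gap. The relation $\interswap$ is the smallest equivalence generated by the four rules on \emph{all} of $\AlphAsync^*$, so a chain witnessing $w \interswap u$ may pass through intermediate words that are \emph{not} $B$-bounded; yet your key inequality $m+1-m'\le B$ is read off the configuration ``after executing the send'' in the word to which rule~\ref{rule:indistinguishability-relation-send-receive-history} is being applied, i.e.\ it needs $B$-boundedness of the \emph{intermediate} word, which you do not have. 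The failure is concrete: with $B=1$, rule~\ref{rule:indistinguishability-relation-send-receive-history} gives, in a single step,
$\snd{\procA}{\procB}{\val_0}\cat\snd{\procA}{\procB}{\val_1}\cat\rcv{\procA}{\procB}{\val_0}\cat\rcv{\procA}{\procB}{\val_1} \interswap \snd{\procA}{\procB}{\val_0}\cat\rcv{\procA}{\procB}{\val_0}\cat\snd{\procA}{\procB}{\val_1}\cat\rcv{\procA}{\procB}{\val_1}$;
here $m=1$, $m'=0$, so $m-m'=B$, the two swapped events lie in the \emph{same} slot, and the slot-$0$ projections of the two encodings genuinely differ (this is exactly \cref{ex:example-issue-with-only-one-channel-participant}). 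So the per-step invariant ``each transposition preserves the slot-$i$ subsequence'' is false on chains that leave the $B$-bounded regime, and the induction does not close unless you additionally prove that any two $B$-bounded $\interswap$-equivalent words are connected by a chain of $B$-bounded words --- a nontrivial normalisation claim you neither state nor prove.

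The repair is to drop the chain altogether and argue directly on $w$ and $u$, which is what the paper does and what your last paragraph already contains in embryo. The per-channel subwords $w\wproj_{\snd{\procA}{\procB}{\_}}$ and $w\wproj_{\rcv{\procA}{\procB}{\_}}$ are literally unchanged by every rule (no rule commutes two same-kind events on the same channel; equivalently, per-participant projections are $\interswap$-invariant), so each event's index and payload is a global invariant of the equivalence class. Then, for a send of index $m$ and a receive of index $m'$ with $m\equiv m'\pmod B$, the relative order is forced in \emph{every} channel-compliant $B$-bounded word: if $m'\ge m$, channel-compliance puts the send first; if $m'\le m-B$, $B$-boundedness puts the receive first; and congruence mod $B$ excludes the intermediate regime $m-B<m'<m$ where rule~\ref{rule:indistinguishability-relation-send-receive-history} could act. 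Applying this to $w$ and to $u$ themselves --- both of which satisfy the hypotheses --- yields equality of the slot-$i$ projections without ever inspecting intermediate words.
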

\begin{proof}
We define a function that, for an offset $j$, keeps every $k^{\text{th}}$ send and every $k^{\text{th}}$ receive event.
Let $v \is v_1 \ldots \in \AlphAsync^\infty$.
\[
h(v, j, k) \is
    h'(\emptystring, v_1, j, k) \cat
    h'(v_1, v_2, j, k) \cat
\ldots \cat
    h'(v_1 \ldots v_{i-1}, v_i, j, k) \cat
    \ldots
\]
\[
\text{ where }
h'(v', v_i, j, k) \is
    \begin{cases}
        v_i
            & \text{if } v_i = \snd{\procA}{\procB}{\val}
            \text{ and } j \modOp \card{v' \wproj_{\snd{\procA}{\procB}{\_}}} \modSuff k
            \\
v_i
            & \text{if } v_i = \rcv{\procA}{\procB}{\val}
            \text{ and } j \modOp \card{v' \wproj_{\rcv{\procA}{\procB}{\_}}} \modSuff k
            \\
        \emptystring
        & \text{otherwise}
    \end{cases}
    \enspace .
\]

\textit{Claim 1:}
Let $v$ be a \channelcompliant word that respects $\chanBounds$.
For all channels $\channel{\procA}{\procB}$ and every $0 \leq j < \chanBounds(\channel{\procA}{\procB})$, it holds that
\[
    h(v, j, \chanBounds(\channel{\procA}{\procB})) \in
    \set{x \cat y \mid
    x \in \AlphAsync_{{\channel{\procA}{\procB}}, !} \text{ and }
    y \in \AlphAsync_{{\channel{\procA}{\procB}}, ?}
    }^\infty
    \enspace .
\]

\textit{Proof of Claim 1:}
Towards a contradiction, assume that
$
    h(v, j, \chanBounds(\channel{\procA}{\procB})) \notin
    \set{x \cat y \mid x = \snd{\procA}{\procB}{\_} \land y = \rcv{\procA}{\procB}{\_} }^\infty
$
for some $j$.
We know that $v$ is \channelcompliant.
Thus, two send events occur next to each other:
$
h(v, j, \chanBounds(\channel{\procA}{\procB}))
=
\ldots \cat \snd{\procA}{\procB}{\_} \cat \snd{\procA}{\procB}{\_} \cat \ldots
$.
We situate the above pattern in $v$:
\[
 v' \cat \snd{\procA}{\procB}{\_} \cat v'' \cat \snd{\procA}{\procB}{\_} \cat v'''
 \enspace .
\]
By construction,
$h(v, j, \chanBounds(\channel{\procA}{\procB}))$
only collects every $k^{\text{th}}$ send and every $k^{\text{th}}$ receive event.
Thus, $v''$ contains
$\chanBounds(\channel{\procA}{\procB}) - 1$
send events
$\snd{\procA}{\procB}{\_}$.
Notably, it can also contain receive events of shape
$\rcv{\procA}{\procB}{\_}$.
However, these will all match send events in $v'$ because
the matching receive event for the first
$\snd{\procA}{\procB}{\_}$ from above is also in
$h(v, j, \chanBounds(\channel{\procA}{\procB}))$ and thus can only appear in $v'''$.
In other words, $v''$ is empty for $h(v, j, \chanBounds(\channel{\procA}{\procB}))$. 
Thus,
$v' \cat \snd{\procA}{\procB}{\_} \cat v'' \cat \snd{\procA}{\procB}{\_}$
contains at least
$\chanBounds(\channel{\procA}{\procB}) + 1$
unmatched send events so $v$ does not respect $\chanBounds$, yielding a contradiction.

\claimProofEnd{1}

\textit{Claim 2:}
For every channel $\channel{\procA}{\procB}$ and
$0 \leq j < \chanBounds(\channel{\procA}{\procB})$, it holds that
$h(w, j, k) = h(u, j, k)$.

\textit{Proof of Claim 2:}
Towards a contradiction, suppose that
$h(w, j, k) \neq h(u, j, k)$.
By Claim 1, we know that both
$h(u, j, k)$ and $h(w, j, k)$ alternate between send and receive events.
We consider the first difference and do a case analysis if it is a send or receive event.
First, suppose that they differ on a receive event.
Then, the received message is different from the message that was sent, contradicting \channelcompliancy for at least one of $u$ or $w$.
Second, suppose that they differ on a send event.
Let us establish an intermediate fact that we contradict later.
From $w \interswap u$ and \cite[Lm.\,23]{DBLP:conf/concur/MajumdarMSZ21}, we know that
$w \wproj_{\AlphAsync_\procA} = u \wproj_{\AlphAsync_\procA}$.
It is straightforward that this implies that
$w \wproj_{\snd{\procA}{\procB}{\_}} = u \wproj_{\snd{\procA}{\procB}{\_}}$ and, thus,
also that every
$\chanBounds(\channel{\procA}{\procB})$-th occurrence on both sides are the same (for every offset).
However, the fact that
$h(w, j, k)$ and $h(u, j, k)$
differ on some send contradicts this observation, yielding a contradiction.

\claimProofEnd{2}

We now show that Claim 2 implies
$
\encchanwords(w) \wproj_{\AlphAsyncWCh_{\procChan{\procA}{\procB}}}
=
\encchanwords(u) \wproj_{\AlphAsyncWCh_{\procChan{\procA}{\procB}}}
$
for every $\procChan{\procA}{\procB} \in \ProcsCh$, which is the goal of this lemma.
By construction of $\encchanwords(\hole)$, we know that
$\encchanwords(v) \wproj_{\AlphAsync_{\procChanI{\procA}{\procB}{i}}}$
is built from $v$ in the following way:
starting with the $i$-th occurrence,
every $\chanBounds(\channel{\procA}{\procB})$-th instance of
a send event
$\snd{\procA}{\procB}{\_}$
turns into
$\rcv{\procA}{\procChanI{\procA}{\procB}{i}}{\_}$
and
every $\chanBounds(\channel{\procA}{\procB})$-th instance of
a receive event
$\rcv{\procA}{\procB}{\_}$
turns into
$\snd{\procChanI{\procA}{\procB}{i}}{\procB}{\_}$.
The function
$h(\hole, \hole, \hole)$
considers precisely this pattern to keep letters,
\ie keeping every
$\chanBounds(\channel{\procA}{\procB})$-th letter
starting from any offset.
Thus, Claim~2 implies our claim.
\proofEndSymbol
\end{proof}

We continue by proving various properties about the channel participant encoding. 

\begin{restatable}{lemma}{propertiesChanPartEnc}The following properties hold:
\begin{enumerate}[label=\textnormal{(\arabic*)}]
 \item
    \label{def:channel-encoding-prop-words-almost-bij-1}
    For every word $w \in \AlphAsync^\infty$, it holds that
    $\decchanwords(\encchanwords(w)) = w$.
 \item
    \label{def:channel-encoding-prop-words-almost-bij-2}
    For every channel-ordered word $w \in (\AlphSyncWCh)^\infty$, we have
    \mbox{$\encchanwords(\decchanwords(w)) = w$}.
 \item \label{def:channel-encoding-prop-psm-func-channel-ordered}
 Every word in $\semantics(\encchanpsm(\PSM))$ is channel-ordered.
 \item \label{def:channel-encoding-prop-fsm-func-channel-ordered}
 Let $\CSM{A}$ be a CSM. Then, for every $\procA$, every word in $\lang(\encchanfsm(A_\procA))$ is channel-ordered. 
 \item \label{def:channel-encoding-prop-words-interplay-interswap}
 Let $w, u \in \AlphAsync^\infty$ be two \channelcompliant words that respect $\chanBounds$ such that $w \interswap u$. \\
 Then, it holds that $\encchanwords(w) \interswap \encchanwords(u)$.
 \item \label{def:channel-encoding-prop-words-interplay-interswap-inv}
 Let $w, u \in (\AlphSyncWCh)^\infty$ be two words such that $w \interswap u$.
 Then, it holds that $\decchanwords(w) \interswap \decchanwords(u)$.
 \item \label{def:channel-encoding-prop-words-and-psm-func-align}
 For every $w \in \AlphAsync^\infty$, it holds that
 $w \in \semantics(\PSM)$ iff
 $\encchanwords(w) \in \semantics(\encchanpsm(\PSM))$.
 \item \label{def:channel-encoding-prop-words-and-psm-func-rev}
 For every channel-ordered word $w \in (\AlphSyncWCh)^\infty$, it holds that \\
    $w \in \semantics(\encchanpsm(\PSM))$
    iff
    $\decchanwords(w) \in \semantics(\PSM)$.
\end{enumerate}
\end{restatable}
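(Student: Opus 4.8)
The plan is to organise the eight claims into four groups and dispatch them in dependency order: the pointwise identities (1)--(2), the channel\hyph orderedness of the two constructed languages (3)--(4), the congruence lemmas transporting $\interswap$ through $\encchanwords(\hole)$ and $\decchanwords(\hole)$ (5)--(6), and finally the semantic correspondences (7)--(8), with (7) resting on (1), (3), (5), (6) and (8) falling out of (7) together with (2). The real content sits in (5) and in the infinite\hyph word handling of (7).

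For (1) I would argue letter-by-letter: $\decchanwords(\hole)$ is a plain homomorphism that reads only the \emph{direction} of a paired block and discards the channel-participant index, whereas $\encchanwords(\hole)$ replaces $\snd{\procA}{\procB}{\val}$ by $\msgFromTo{\procA}{\procChanI{\procA}{\procB}{j}}{\val}$ (dually for receives) for an index $j$ determined by the prefix, or leaves the letter untouched on unbounded channels; since $j$ is irrelevant to decoding, each encoded block decodes back to the original $w_i$. Claim (2) is the converse and is exactly where \emph{channel\hyph ordered} is needed: decoding forgets the indices and re-encoding recomputes them in round-robin order $0, 1, \dots$ modulo $\chanBounds(\channel{\procA}{\procB})$, and channel-orderedness of $w$ says precisely that its original indices already follow this pattern. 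For (3) and (4) I would first show, by induction on runs, that the counter components $\chanBoundsMapSnd,\chanBoundsMapRcv$ in the states of $\encchanpsm(\PSM)$ (resp.\ $\encchanfsm(A_\procA)$) faithfully record, modulo the bounds, the number of sends/receives on each bounded channel; hence every word of the \emph{core} language $\lang(\encchanpsm(\PSM))$ is strictly channel-ordered. Because channel-orderedness of an $\AlphAsyncWCh$-word is defined up to $\interswap$, it is preserved by $\close(\hole)$, and (3)--(4) follow, with the infinite case handled through $\preceq_\interswap^\omega$ as in the proof of \cref{lm:FER-reflected-and-preserved}.

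The crux is (5): if $w \interswap u$ are channel-compliant and respect $\chanBounds$, then $\encchanwords(w) \interswap \encchanwords(u)$. Since $\interswap$ is the least equivalence generated by the four adjacent-swap rules, it suffices to treat a single swap and close under transitivity and congruence. For a swap of two \emph{independent} events lying on distinct channels (rules (1)--(3)), the per-channel prefix counts are unchanged, so $\encchanwords(\hole)$ assigns the \emph{same} indices in $w$ and $u$; the swap then lifts to a reordering of the two corresponding paired blocks, which is itself an $\interswap$-equivalence because the channel participants involved are distinct. The delicate case is the same-channel send/receive transposition (rule (4)): here I would invoke \cref{lm:circular-matching}, which guarantees $\encchanwords(w)\wproj_{\AlphAsyncWCh_{\procChan{\procA}{\procB}}} = \encchanwords(u)\wproj_{\AlphAsyncWCh_{\procChan{\procA}{\procB}}}$, so that the indices attached to the swapped send and the swapped receive stay consistent and the reordering is again legal. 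Claim (6) is the mirror image: since $\decchanwords(\hole)$ collapses each paired block to a single event and introduces no index recomputation, I would transport $w \interswap u$ (both in $(\AlphSyncWCh)^\infty$) by checking that every generating transposition projects to an identity or to an admissible transposition of $\decchanwords(w)$; equivalently, one observes that $\decchanwords(\hole)$ preserves the per-participant projections $\wproj_{\AlphAsync_\procA}$ that characterise the $\interswap$-class. This direction is more routine than (5).

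For (7) I would first establish the \emph{core} correspondence $v \in \lang(\PSM) \iff \encchanwords(v) \in \lang(\encchanpsm(\PSM))$ by a step-by-step matching of runs, the counters of $\encchanpsm(\PSM)$ mirroring exactly the round-robin indices produced by $\encchanwords(\hole)$. The forward direction then combines this with (5): from $w \interswap v$ (or $w \preceq_\interswap^\omega v$) with $v \in \lang(\PSM)$ we obtain $\encchanwords(w) \interswap \encchanwords(v) \in \lang(\encchanpsm(\PSM))$, hence $\encchanwords(w) \in \semantics(\encchanpsm(\PSM))$. For the backward direction, pick $u' \in \lang(\encchanpsm(\PSM))$ witnessing membership; by (3) it is channel-ordered, both $\encchanwords(w)$ and $u'$ live in $(\AlphSyncWCh)^\infty$, so (6) and (1) give $w = \decchanwords(\encchanwords(w)) \interswap \decchanwords(u')$, while the core correspondence gives $\decchanwords(u') \in \lang(\PSM)$, whence $w \in \semantics(\PSM)$. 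Claim (8) is then immediate: for channel-ordered $w$ we have $\encchanwords(\decchanwords(w)) = w$ by (2), so instantiating (7) at $\decchanwords(w)$ yields the stated equivalence. The \textbf{main obstacle} I anticipate is twofold: getting the same-channel case of (5) right, which is precisely what \cref{lm:circular-matching} is engineered to support, and transporting the closure through \emph{infinite} words, where the asymmetric prefix order $\preceq_\interswap^\omega$ (rather than $\interswap$) forces one to reason at the level of finite prefixes and pass to the limit, again mirroring the technique of \cref{lm:FER-reflected-and-preserved}.
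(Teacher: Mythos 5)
Your proposal is correct and reuses the paper's overall architecture (same dependency order, \cref{lm:circular-matching} for the channel-participant projections, the run-level core correspondence for (7), and the identities (1)--(2) to close the loop in (7)--(8)), but it takes a genuinely different route through the crux claims (5)--(6). The paper does not induct on the generating transpositions of $\interswap$ at all: it invokes the characterisation from \cite[Lm.~23]{DBLP:conf/concur/MajumdarMSZ21} that $u \interswap w$ iff $u\wproj_{\AlphAsync_\procA} = w\wproj_{\AlphAsync_\procA}$ for every participant, reduces (5) to showing the per-participant projections of the encoded words agree for every $\procC \in \ProcsWCh$ (commuting $\encchanwords(\hole)$ with projection for original participants, \cref{lm:circular-matching} for channel participants), and handles (6) symmetrically. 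Your generator-based induction also works, but it is more laborious in three places your sketch only partially surfaces: a single source swap becomes a multi-step commutation of two paired blocks in the target, each elementary step of which must be licensed by rules (1)--(3) separately; in the rule-(4) case what actually makes the blocks commute is not the projection equality from \cref{lm:circular-matching} per se but the arithmetic fact that the send index $i$ and receive index $j$ differ modulo $\chanBounds(\channel{\procA}{\procB})$ (which follows from $1 \le s-r \le B-1$ for a word respecting the bound), so that $\procChanI{\procA}{\procB}{i}$ and $\procChanI{\procA}{\procB}{j}$ are distinct participants; and for infinite words $\interswap$ is not generated by adjacent transpositions, so the whole induction must be redone on finite prefixes and passed to the limit via $\preceq_\interswap^\omega$, a complication the projection-based characterisation absorbs uniformly. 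What your route buys is a more self-contained, first-principles argument that does not lean on the external characterisation lemma; what the paper's route buys is a shorter, case-free proof that treats finite and infinite words at once.
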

\begin{proof}
We consider every claim and use previously proven ones later:
\begin{itemize}\item
    \ref{def:channel-encoding-prop-words-almost-bij-1}:
    Let $w \is w_1 \cat w_2 \ldots$.
    Then, we have that
    \begin{align*}
    \; & \decchanwords(\encchanwords(w)) \\
    = \; & \decchanwords(\encchanwords(w_1 \cat w_2 \ldots)) \\
    = \; & \decchanwords(\encchanwords(w_1) \cat \encchanwords(w_2) \ldots) \\
    = \; & \decchanwords(\encchanwords(w_1)) \cat \decchanwords(\encchanwords(w_2)) \ldots \\
    = \; & w_1 \cat w_2 \ldots
    \end{align*}
 \item
    \ref{def:channel-encoding-prop-words-almost-bij-2}:
    Let $w \is w_1 \cat w_2 \ldots$ with $w_i \in \AlphSyncWCh$.
    Then, we have that
    \begin{align*}
    \; & \encchanwords(\decchanwords(w)) \\
    = \; & \encchanwords(\decchanwords(w_1 \cat w_2 \ldots)) \\
    = \; & \encchanwords(\decchanwords(w_1) \cat \decchanwords(w_2) \ldots) \\
    = \; & \encchanwords(\decchanwords(w_1)) \cat \encchanwords(\decchanwords(w_2)) \ldots \\
    = \; & w_1 \cat w_2 \ldots
    \end{align*}
    where the last equality hinges on the fact that $w$ is channel-ordered as it makes sure that the indices for channel participants are introduced in the right way.

 \item \ref{def:channel-encoding-prop-psm-func-channel-ordered}:
    The construction of $\encchanpsm(\hole)$ keeps track of which channel participants for each channel shall be used next, both for sending and receiving.
    The semantics allows reordering under $\interswap$.
    Still, events from the same participant cannot be reordered beyond each other.
    Thus, all words are channel-ordered.
    Note that the events of different channel participants might be reordered, \eg,
    $
    \rcv{\procA}{\procChanI{\procA}{\procB}{{i+1}}}{\val_1}
    \cat
    \rcv{\procA}{\procChanI{\procA}{\procB}{i}}{\val_2}
        \interswap
    \rcv{\procA}{\procChanI{\procA}{\procB}{i}}{\val_2}
    \cat
    \rcv{\procA}{\procChanI{\procA}{\procB}{{i+1}}}{\val_1}
    $
    but the matching send events by $\procA$ will happen in the latter order.

 \item \ref{def:channel-encoding-prop-fsm-func-channel-ordered}
    Again, the property follows directly by construction, which keeps track of which channel participants to use next, both for sending and receiving.

 \item \ref{def:channel-encoding-prop-words-interplay-interswap}:
    With Lemma 23 from \cite{DBLP:conf/concur/MajumdarMSZ21},
    we know that $u \interswap w$ if and only if
    $u \wproj_{\AlphAsync_\procA} = w \wproj_{\AlphAsync_\procA}$ for every $\procA \in \Procs$.
    Again, because of \cite[Lm.\,23]{DBLP:conf/concur/MajumdarMSZ21},
    it suffices to show that
    $\encchanwords(u) \wproj_{\AlphAsync_\procC} = \encchanwords(w) \wproj_{\AlphAsync_\procC}$ for every $\procC \in \ProcsWCh$.
    Let $\procC \in \ProcsWCh$.
    We do a case analysis if $\procC \in \Procs$ or $\procC \in \ProcsCh$.

    First, let $\procC \in \Procs$.
    By assumption, we know that
    $u \wproj_{\AlphAsync_\procC} = w \wproj_{\AlphAsync_\procC}$.
    Applying $\encchanwords(\hole)$ on both sides yields
    $\encchanwords(u \wproj_{\AlphAsync_\procC}) = \encchanwords(w \wproj_{\AlphAsync_\procC})$.
    By induction, it is straightforward to show that applying $\encchanwords(\hole)$ first and then
    $(\hole) \wproj_{\AlphAsync_\procC}$ yields the same result, proving the claim.
    Intuitively, $\encchanwords(\hole)$ requires the prefix of a word but it actually does only consider the part that is left after applying
    $(\hole) \wproj_{\AlphAsync_\procC}$ anyway.

    Second, let $\procC = \procChanI{\procA}{\procB}{i} \in \ProcsCh$.
    We need to show that
    $\encchanwords(u) \wproj_{\AlphAsync_{\procChanI{\procA}{\procB}{i}}} =
     \encchanwords(w) \wproj_{\AlphAsync_{\procChanI{\procA}{\procB}{i}}}$, which follows from
    \cref{lm:circular-matching}.

 \item \ref{def:channel-encoding-prop-words-interplay-interswap-inv}:
    With Lemma 23 from \cite{DBLP:conf/concur/MajumdarMSZ21},
    it suffices to show that
    $\decchanwords(u) \wproj_{\AlphAsync_\procC} = \decchanwords(w) \wproj_{\AlphAsync_\procC}$ for every $\procC \in \Procs$.
    It suffices to consider $\Procs$ since those are the only participants that are left in $\decchanwords(\hole)$.
    Let $\procC \in \Procs$ be some participant.
    We show that
    $\decchanwords(u) \wproj_{\AlphAsync_\procC} = \decchanwords(w) \wproj_{\AlphAsync_\procC}$.
    Again, from Lemma 23 from \cite{DBLP:conf/concur/MajumdarMSZ21},
    we know that
    $u \wproj_{\AlphSyncWCh_\procA} = w \wproj_{\AlphSyncWCh_\procA}$ for every $\procA \in \ProcsWCh$.
    We instantiate this with $\procC$ to obtain:
    $u \wproj_{\AlphSyncWCh_\procC} = w \wproj_{\AlphSyncWCh_\procC}$.
    We apply $\decchanwords(\hole)$ on both sides to obtain:
    $\decchanwords(u \wproj_{\AlphSyncWCh_\procC}) = \decchanwords(w \wproj_{\AlphSyncWCh_\procC})$.
    By induction, it is straightforward to show that applying $\decchanwords(\hole)$ first and then
    $(\hole) \wproj_{\AlphAsync_\procC}$ yields the same result, proving the claim.
Intuitively, we only keep events of $\procC$ because of the projection and the decoding $\decchanwords(\hole)$ can also be applied to the whole word first (and vice versa).

 \item \ref{def:channel-encoding-prop-words-and-psm-func-align}:
    For both directions, we use the following notation for the PSMs: \\
    \[ \PSM = (Q, \AlphAsync, \delta, q_{0}, F) \text{ and } \] 
    { \scriptsize
    \[\encchanpsm(\PSM) =
        (Q \times
            (\set{\channel{\procA}{\procB} \mapsto [0, \chanBounds(\channel{\procA}{\procB})-1]}_{\channel{\procA}{\procB} \in \domainOf(\chanBounds)})^2,
        \AlphSyncWCh, \delta',
        (q_{0}, \chanBoundsMapSnd_0, \chanBoundsMapRcv_0), F')
        \enspace . \]
    }

    The transition relation will be clear from context, so we use $\rightarrow$ for both.

    For the direction from left to right, we assume that
    $w' \in \semantics(\PSM)$ and show that
    $\encchanwords(w') \in \semantics(\encchanpsm(\PSM))$.
    By definition, there is $w \interswap w'$ such that
    $w \in \lang(\PSM)$.
    Thus, there is a run $\run$ in $\PSM$ with $\trace(\run) = w$.

    We show that the run for $w$ in $\PSM$
    can be simulated in
    $\encchanpsm(\PSM)$ for $\encchanwords(w)$.
    Let us assume there exists run
    $ \run \is
     q_0 \xrightarrow{w_0}
     q_1 \xrightarrow{w_1}
     \ldots
    $
    for $\PSM$ for
    $w \is w_0 \cat w_1 \cat \ldots$.
    We claim the following:
    for every $i$
    with
    $
     q_i \xrightarrow{w_i}
     q_{i+1}
    $ and
     $(s_i, \chanBoundsMapSnd_i, \chanBoundsMapRcv_i)$
    such that
    \begin{enumerate}[label=(\alph*)]
     \item \label{cond:wCh-sim-woCh-PSM-1}
            $q_i = s_i$,
     \item \label{cond:wCh-sim-woCh-PSM-2}
        $\chanBoundsMapSnd_i(\channel{\procA}{\procB}) \modOp \card{(w_1 \ldots w_{i-1}) \wproj_{\snd{\procA}{\procB}{\_}}}$
        for every $\channel{\procA}{\procB} \in \domainOf(\chanBounds)$, and
     \item \label{cond:wCh-sim-woCh-PSM-3}
        $\chanBoundsMapRcv_i(\channel{\procA}{\procB}) \modOp \card{(w_1 \ldots w_{i-1}) \wproj_{\rcv{\procA}{\procB}{\_}}}$
        for every $\channel{\procA}{\procB} \in \domainOf(\chanBounds)$.
    \end{enumerate}
    there is
     $(s_{i+1}, \chanBoundsMapSnd_{i+1}, \chanBoundsMapRcv_{i+1})$
    such that
    \begin{enumerate}[label=(\alph*')]
     \item \label{prop:wCh-sim-woCh-PSM-1}
            $q_{i+1} = s_{i+1}$,
     \item \label{prop:wCh-sim-woCh-PSM-2}
        $\chanBoundsMapSnd_{i+1}(\channel{\procA}{\procB}) \modOp \card{(w_1 \ldots w_{i}) \wproj_{\snd{\procA}{\procB}{\_}}}$
        for every $\channel{\procA}{\procB} \in \domainOf(\chanBounds)$,
     \item \label{prop:wCh-sim-woCh-PSM-3}
        $\chanBoundsMapRcv_{i+1}(\channel{\procA}{\procB}) \modOp \card{(w_1 \ldots w_{i}) \wproj_{\rcv{\procA}{\procB}{\_}}}$
        for every $\channel{\procA}{\procB} \in \domainOf(\chanBounds)$, and
     \item \label{prop:wCh-sim-woCh-PSM-4}
        $(s_i, \chanBoundsMapSnd_i, \chanBoundsMapRcv_i)
         \xrightarrow{\encchanwords(w_i)}
         (s_{i+1}, \chanBoundsMapSnd_{i+1}, \chanBoundsMapRcv_{i+1})$.
    \end{enumerate}

    For our simulation argument, we need one more ingredient: the two initial states, \ie for $i=0$, satisfy the above conditions.
    It is easy to check that this is the case.
    Then, the claim can be used to mimic the run in $\encchanpsm(\PSM)$ for $\encchanwords(w)$.

    Let us prove the claim.
    Let $i \in \Nat$.
    We assume that
    \ref{cond:wCh-sim-woCh-PSM-1}
    to
    \ref{cond:wCh-sim-woCh-PSM-3}
    hold and prove
    \ref{prop:wCh-sim-woCh-PSM-1}
    to
    \ref{prop:wCh-sim-woCh-PSM-4}.
    We choose
    $(s_{i+1}, \chanBoundsMapSnd_{i+1}, \chanBoundsMapRcv_{i+1})$
    such that
    \ref{prop:wCh-sim-woCh-PSM-1}
    to
    \ref{prop:wCh-sim-woCh-PSM-3}
    are satisfied, so it remains to prove that
    \ref{prop:wCh-sim-woCh-PSM-4}
    holds.
    We do a simultaneous case analysis on the shape of $w_i$ and whether the corresponding channel is in the domain~$\chanBounds$.
    \begin{itemize}
     \item $\channel{\procA}{\procB} \notin \domainOf(\chanBounds)$ and either $w_i = \snd{\procA}{\procB}{\val}$ or $w_i = \rcv{\procA}{\procB}{\val}$: \\
        Since $\channel{\procA}{\procB} \notin \domainOf(\chanBounds)$,
        we have
        $\encchanwords(w_i) = w_i$ by definition
        as well as
        \[
        \card{(w_1 \ldots w_{i-1}) \wproj_{\snd{\procA}{\procB}{\_}}}
        \modOp
        \card{(w_1 \ldots w_{i}) \wproj_{\snd{\procA}{\procB}{\_}}}
        \text{ and }
        \phantom{.}
        \]
        \[
        \card{(w_1 \ldots w_{i-1}) \wproj_{\rcv{\procA}{\procB}{\_}}}
        \modOp
        \card{(w_1 \ldots w_{i}) \wproj_{\rcv{\procA}{\procB}{\_}}}
        \phantom{ and }
        \enspace.
        \]
        Again, since $\channel{\procA}{\procB} \notin \domainOf(\chanBounds)$,
        we also have that
        $\chanBoundsMapSnd_i = \chanBoundsMapSnd_{i+1}$
        and
        $\chanBoundsMapRcv_i = \chanBoundsMapRcv_{i+1}$
        by construction of $\encchanpsm(\PSM)$.
        Thus,
        \ref{prop:wCh-sim-woCh-PSM-4}
        is satisfied.

     \item $\channel{\procA}{\procB} \in \domainOf(\chanBounds)$ and $w_i = \snd{\procA}{\procB}{\val}$: \\
        By definition,
        $\encchanwords(w_i) = \msgFromTo{\procA}{\procChanI{\procA}{\procB}{j}}{\val}$
        for
        $\roleFmt{j} \modOp
        \card{(w_1 \ldots w_{i-1}) \wproj_{\snd{\procA}{\procB}{\_}}}$.
        By construction, there is a transition labelled with $\encchanwords(w_i)$:
        \[
         (s_i, \chanBoundsMapSnd_i, \chanBoundsMapRcv_i)
         \xrightarrow{\encchanwords(w_i)}
         (s'_{i+1}, \widehat{\chanBoundsMapSnd_{i+1}}, \widehat{\chanBoundsMapSnd_{i+1}})
         \enspace .
         \]
        By construction, it is obvious that $s_{i+1} = s'_{i+1}$.
        Let us consider the changes for the channel bounds due to
        \ref{prop:wCh-sim-woCh-PSM-2} and
        \ref{prop:wCh-sim-woCh-PSM-3}.
        Since $w_i$ is not a receive event, we have
        $\chanBoundsMapRcv_{i+1} = \chanBoundsMapRcv_{i}$.
        For every channel $\channel{\procC}{\procD}$ different from $\channel{\procA}{\procB}$, we also have
        $\chanBoundsMapSnd_{i+1}(\channel{\procC}{\procD}) = \chanBoundsMapSnd_{i}(\channel{\procC}{\procD})$.
        For $\channel{\procA}{\procB}$, it holds that
        $\chanBoundsMapSnd_{i+1}(\channel{\procA}{\procB}) \equiv \chanBoundsMapSnd_{i}(\channel{\procA}{\procB}) + 1$.
        This matches with the changes due to the semantics for $\encchanpsm(\PSM)$.
        Thus,
        $
        \widehat{\chanBoundsMapSnd_{i+1}}
        =
        \chanBoundsMapSnd_{i+1}
        $ and $
        \widehat{\chanBoundsMapRcv_{i+1}}
        =
        \chanBoundsMapRcv_{i+1}
        $, which shows that \ref{prop:wCh-sim-woCh-PSM-4} is satisfied.

     \item $\channel{\procA}{\procB} \in \domainOf(\chanBounds)$ and $w_i = \rcv{\procA}{\procB}{\val}$: \\
        This case is analogous to the previous one and, thus, omitted.

    \end{itemize}

    This shows that there is a run in
    $\encchanpsm(\PSM)$ for $\encchanwords(w)$.
    With $w' \interswap w$ and
    \ref{def:channel-encoding-prop-words-interplay-interswap},
    we have that
    $\encchanwords(w') \interswap \encchanwords(w)$.
    Together, it follows that $\encchanwords(w') \in \semantics(\encchanpsm(\PSM))$, which concludes this direction.

    For the direction from right to left, we claim it suffices to show that,
    for all $u' \in (\AlphSyncWCh)^\infty$, it holds that,
    if $u' \in \semantics(\encchanpsm(\PSM))$,
    then $\decchanwords(u') \in \semantics(\PSM)$. \\
    Let us first explain why this claim is sufficient:
    we instantiate $u' = \encchanwords(w)$, which gives that
    if $\encchanwords(w) \in \semantics(\encchanpsm(\PSM))$,
    then $\decchanwords(\encchanwords(w)) \in \semantics(\PSM)$.
    By
    \ref{def:channel-encoding-prop-words-almost-bij-1}, we have that
    $\decchanwords(\encchanwords(w)) = w$, proving the claim. \\
    Let us now prove this claim.
    We assume that
    $u' \in \semantics(\encchanpsm(\PSM))$
    and show that
    $\decchanwords(u') \in \semantics(\PSM)$.

    By definition, there is $u \interswap u'$ such that
    $u \in \lang(\encchanpsm(\PSM))$.
    Thus, there is a run $\run$ in $\encchanpsm(\PSM)$ with $\trace(\run) = u$.
    We claim that there is a run~$\run'$ in $\PSM$ with $\trace(\run') = \decchanwords(u)$.
    As for the other direction, we prove a simulation-like argument for which both initial states are trivially related.
    Let us assume there exists run
    $ \run \is
     (s_0, \chanBoundsMapSnd_0, \chanBoundsMapRcv_0) \xrightarrow{u_0}
     (s_1, \chanBoundsMapSnd_1, \chanBoundsMapRcv_1) \xrightarrow{u_1}
     \ldots
    $
    in $\PSM$ for
    $u \is u_0 \cat u_1 \cat \ldots$.
    We claim the following:
    for every $i$
    with
    $
     (s_i, \chanBoundsMapSnd_i, \chanBoundsMapRcv_i) \xrightarrow{u_i}
     (s_{i+1}, \chanBoundsMapSnd_{i+1}, \chanBoundsMapRcv_{i+1})
    $,
    it holds that
    $s_i \xrightarrow{\decchanwords(u_i)} s_{i+1}$.
    Let $i \in \Nat$.
    By construction of $\encchanpsm(\PSM)$, there is
    $s_i \xrightarrow{x} s_{i+1}$ such that
    $\encchanwords(x) = u_i$.
    We apply $\decchanwords(\hole)$ on both sides and obtain
    $\decchanwords(\encchanwords(x)) = \decchanwords(u_i)$.
    By
    \ref{def:channel-encoding-prop-words-almost-bij-1},
    it get $x = \decchanwords(u_i)$, which proves the claim.
    Together, this shows that there is a run with trace $\decchanwords(u)$ in $\PSM$.

    With $u \interswap u'$ and
    \ref{def:channel-encoding-prop-words-interplay-interswap-inv}
    we have that
    $\decchanwords(u) \interswap \decchanwords(u')$.
    Thus, we have $\decchanwords(u) \in \semantics(\PSM)$, which concludes this proof.

 \item \ref{def:channel-encoding-prop-words-and-psm-func-rev}:
 The direction from left to right was proven as part of the proof for~\ref{def:channel-encoding-prop-words-and-psm-func-align}.
 The direction from right to left follows from
 \ref{def:channel-encoding-prop-words-and-psm-func-align}
 if one instantiates $w = \decchanwords(w)$:
 if $\decchanwords(w) \in \semantics(\PSM)$ then
 $\encchanwords(\decchanwords(w)) \in \semantics(\encchanpsm(\PSM))$;
 with
 \ref{def:channel-encoding-prop-words-almost-bij-2}, we obtain that
 $\encchanwords(\decchanwords(w)) = w$, proving the claim.

\end{itemize}
This concludes all proofs. 
\proofEndSymbol
\end{proof}

\begin{lemma}
Let $\CSM{A}$ be a CSM.
 \label{def:channel-encoding-prop-words-and-csm-func-align} Then, for every $w \in \AlphAsync^\infty$ that respects $\chanBounds$, it holds that 
 $w \in \lang(\CSM{A})$ iff
 $\encchanwords(w) \in \lang(\CSMlwCh{\encchanfsm(A_\procA)})$.
\end{lemma}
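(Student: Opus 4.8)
The plan is to mirror the simulation argument used for property~\ref{def:channel-encoding-prop-words-and-psm-func-align} (the PSM analogue), but to carry it out on the global CSM transition system rather than on the global state machine of a PSM. Write $\CSMabb{B} \is \CSMlwCh{\encchanfsm(A_\procA)}$ for the encoded CSM: its participant set is $\ProcsWCh$, with FSM $\encchanfsm(A_\procA)$ for each $\procA \in \Procs$ and the forwarding FSM $A_{\procChan{\procA}{\procB}}$ for each channel participant $\procChan{\procA}{\procB}$. By \cref{lm:csm-closed-interswap} both $\lang(\CSM{A})$ and $\lang(\CSMabb{B})$ are $\interswap$-closed, which is useful context but, as it turns out, not load-bearing: because $w \in \lang(\CSM{A})$ means there is a maximal run of $\CSM{A}$ whose trace is \emph{exactly} $w$, and $\encchanwords(w)$ has all its forwarding pairs adjacent by construction, both directions can be obtained by an explicit run-to-run translation.

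For the direction from left to right, I would take a maximal global run of $\CSM{A}$ with trace $w$ and build a matching run of $\CSMabb{B}$ with trace $\encchanwords(w)$ by a step-by-step simulation. The invariant maintained after reading the prefix $w_1 \cdots w_i$ relates the two configurations as follows: each $\procA \in \Procs$ sits in a state $(q_\procA, \chanBoundsMapSnd, \chanBoundsMapRcv)$ of $\encchanfsm(A_\procA)$, where $q_\procA$ is its $\CSM{A}$-state and the counters record, modulo the bound, the number of sends and receives performed on each channel in $\domainOf(\chanBounds)$ (exactly as in the proof of property~\ref{def:channel-encoding-prop-words-and-psm-func-align}); each channel participant is in its initial state $q_{0,f}$ unless it currently holds an unforwarded message (then $q_\val$); and the two buffers $\channel{\procA}{\procChanI{\procA}{\procB}{j}}$ and $\channel{\procChanI{\procA}{\procB}{j}}{\procB}$ together realise the original buffer $\channel{\procA}{\procB}$. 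A bounded send $\snd{\procA}{\procB}{\val}$ of $\CSM{A}$ is simulated by the paired event $\msgFromTo{\procA}{\procChanI{\procA}{\procB}{j}}{\val}$ with $\roleFmt{j} \modOp \card{(w_1 \cdots w_{i-1}) \wproj_{\snd{\procA}{\procB}{\_}}}$, a bounded receive $\rcv{\procA}{\procB}{\val}$ by $\msgFromTo{\procChanI{\procA}{\procB}{j'}}{\procB}{\val}$, and unbounded events are copied verbatim; this is precisely $\encchanwords(w_i)$. For a finite $w$, the final configuration of $\CSM{A}$ (final local states, empty buffers) maps to a final configuration of $\CSMabb{B}$ (all counters back to $0$, all channel participants in $q_{0,f}$, all new buffers empty); maximal infinite runs transfer likewise. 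Hence $\encchanwords(w) \in \lang(\CSMabb{B})$.

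For the converse I am given that the specific word $\encchanwords(w)$ lies in $\lang(\CSMabb{B})$, hence there is a maximal run of $\CSMabb{B}$ with that exact trace. By construction $\encchanwords(w)$ has, for every bounded-channel event, the forwarder's receive adjacent to $\procA$'s send and the forwarder's send adjacent to $\procB$'s receive, so the run factors through these adjacent forwarding pairs (this is where property~\ref{def:channel-encoding-prop-fsm-func-channel-ordered}, channel-orderedness, guarantees well-definedness). Applying $\decchanwords$ collapses each pair to the corresponding original send or receive and deletes the channel participants; the counter structure of $\encchanfsm(A_\procA)$ projects back onto plain transitions of $A_\procA$, and the merged buffers project back onto the original buffer $\channel{\procA}{\procB}$. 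The result is a maximal run of $\CSM{A}$ with trace $\decchanwords(\encchanwords(w)) = w$ by property~\ref{def:channel-encoding-prop-words-almost-bij-1}, so $w \in \lang(\CSM{A})$.

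The main obstacle is the middle step of the forward simulation: verifying that, when $\procB$ is ready to receive, the channel participant holding the matching message is precisely $\procChanI{\procA}{\procB}{j'}$, and that this forwarder is free (in $q_{0,f}$) at the moment of $\procA$'s send. Both hinge on the hypothesis that $w$ respects $\chanBounds$: at most $\chanBounds(\channel{\procA}{\procB})$ messages on $\channel{\procA}{\procB}$ are ever in flight, so the $k$-th send and the $k$-th receive use the same index $k \bmod \chanBounds(\channel{\procA}{\procB})$, and no forwarder is overtaken before its message is delivered. This is exactly the content of \cref{lm:circular-matching}, and it is also what rules out the spurious ordering of \cref{ex:example-issue-with-only-one-channel-participant}. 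Once this index alignment is established, the remaining work --- transfer of FIFO compliance, matching of sends, and maximality --- is routine bookkeeping identical to that already carried out for PSMs.
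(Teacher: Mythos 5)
Your proposal is correct and follows essentially the same route as the paper's proof: a bidirectional, step-by-step run simulation in which the invariant tracks the modular send/receive counters in each $\encchanfsm(A_\procA)$, keeps the new bounded channels empty (paired events are received immediately), and encodes the contents of each original bounded buffer in the control states of the ring of forwarders --- exactly the paper's $\contentOf$ invariant --- with the hypothesis that $w$ respects $\chanBounds$ doing the same work of keeping the ring-buffer indices aligned. The paper likewise treats the converse direction as the symmetric simulation and omits its details, so your level of rigour matches.
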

\begin{proof}
We show that the run for $w$ in $\CSM{A}$
    can be simulated
    for $\encchanwords(w)$ in
    $\CSMlwCh{\encchanfsm(A_\procA)}$,
    and vice versa.

    Prior, we establish some notation that will be used for both cases. \\
    Let $w \is w_0 \cat w_1 \ldots$ and 
    $\run$ be the run in $\CSM{A}$ for $w$:
    $
     (\vec{q}_0, \xi_0) \xrightarrow{w_0}
     (\vec{q}_1, \xi_1) \xrightarrow{w_1}
     \ldots
    $.
    For
    $\CSMlwCh{\encchanfsm(A_\procA)}$,
    we split the states of participants as follows:
    for participants from $\Procs$, we have $\vec{s}_i$ as well as $\vec{\chanBoundsMapSnd_i}$ and $\vec{\chanBoundsMapRcv_i}$ for the respective channel bounds;
    for participants from $\ProcsCh$, we have $\vec{t}_i$.
    Then, let $\run'$ be the run in
    $\CSMlwCh{\encchanfsm(A_\procA)}$
    for $\encchanwords(w)$:
    \[
     (\vec{s}_0, \vec{\chanBoundsMapSnd_0}, \vec{\chanBoundsMapRcv_0}, \vec{t}_0, \xi'_0) \xrightarrow{\encchanwords(w_0)}
     (\vec{s}_1, \vec{\chanBoundsMapSnd_1}, \vec{\chanBoundsMapRcv_1}, \vec{t}_1, \xi'_1) \xrightarrow{\encchanwords(w_1)}
     \ldots
     \enspace .
    \]
    Intuitively, the channel participants store the channel content while the actual channels are empty.
    More precisely,
    We introduce notation for this.
    Intuitively, $\xi(\channel{\procA}{\procB})$ contains what is stored in states
    $\procChanI{\procA}{\procB}{i}$ for $\roleFmt{i}$ between
                from
                $\chanBoundsMapRcv(\channel{\procA}{\procB})$ to
                $\chanBoundsMapSnd(\channel{\procA}{\procB})$ (excl.)
    and all the other channel participants, \ie
                $\chanBoundsMapSnd(\channel{\procA}{\procB})$ to
                $\chanBoundsMapRcv(\channel{\procA}{\procB})$ (excl.)
    are in their initial state;
    where we consider the indices for channel participants to form a ring buffer, \eg,
    $5$ to $2$ (excl.) gives the sequence $5, 6, 0, 1$ if the channel bound is $7$.

    Formally, we define
    $\contentOf(\vec{t}, \vec{\chanBoundsMapSnd}, \vec{\chanBoundsMapRcv}, \channel{\procA}{\procB}) \is w_i \ldots w_j$
    for the states $\vec{t}$ of channel participants if the following holds:
    \begin{itemize}
        \item $j = \vec{\chanBoundsMapSnd_\procA}(\channel{\procA}{\procB})$,
        \item $i = \vec{\chanBoundsMapRcv_\procB}(\channel{\procA}{\procB})$,
        \item for every $i \leq k < j$, it holds that $\vec{t}_{\procChanI{\procA}{\procB}{k}} = q_\val$ and $w_k = \val$, and
        \item for every $j \leq k < i$, it holds that $\vec{t}_{\procChanI{\procA}{\procB}{k}} = q_{0,f}$.
    \end{itemize}

    First, let us assume there exists run $\run$ for $\CSM{A}$ for $w  = w_0 \cat w_1 \cat \ldots$.
    We claim the following:
    for every $i$
    with
    $
     (\vec{q}_i, \xi_i) \xrightarrow{w_i}
     (\vec{q}_{i+1}, \xi_{i+1})
    $ and
     $(\vec{s}_i, \vec{\chanBoundsMapSnd_i}, \vec{\chanBoundsMapRcv_i}, \vec{t}_i, \xi'_i)$
    such that
    \begin{enumerate}[label=(\alph*)]
     \item \label{cond:wCh-sim-woCh-CSM-1}
            $\vec{q}_i = \vec{s}_i$,
     \item \label{cond:wCh-sim-woCh-CSM-2}
$\xi'_i(\channel{\procA}{\procB}) = \xi_i(\channel{\procA}{\procB})$
            for every $\channel{\procA}{\procB} \notin \domainOf(\chanBounds)$,
     \item \label{cond:wCh-sim-woCh-CSM-3}
$\xi'_i(\channel{\procA}{\procB}) = \emptystring$
            for every $\channel{\procA}{\procB} \in \domainOf(\chanBounds)$,
     \item \label{cond:wCh-sim-woCh-CSM-4}
$\xi_i(\channel{\procA}{\procB}) =
        \contentOf(\vec{t}_i, \vec{\chanBoundsMapSnd_i}, \vec{\chanBoundsMapRcv_i}, \channel{\procA}{\procB})$
            for every $\channel{\procA}{\procB} \in \domainOf(\chanBounds)$,
    \end{enumerate}
    there is
     $(\vec{s}_{i+1}, \vec{\chanBoundsMapSnd_{i+1}}, \vec{\chanBoundsMapSnd_{i+1}}, \vec{t}_{i+1}, \xi'_{i+1})$
    such that
    \begin{enumerate}[label=(\alph*')]
     \item \label{prop:wCh-sim-woCh-CSM-1}
           $\vec{q}_{i+1} = \vec{s}_{i+1}$,
     \item \label{prop:wCh-sim-woCh-CSM-2}
$\xi'_{i+1}(\channel{\procA}{\procB}) = \xi_{i+1}(\channel{\procA}{\procB})$
            for every $\channel{\procA}{\procB} \notin \domainOf(\chanBounds)$,
     \item \label{prop:wCh-sim-woCh-CSM-3}
$\xi'_{i+1}(\channel{\procA}{\procB}) = \emptystring$
            for every $\channel{\procA}{\procB} \in \domainOf(\chanBounds)$,
     \item \label{prop:wCh-sim-woCh-CSM-4}
$\xi_{i+1}(\channel{\procA}{\procB}) =
        \contentOf(\vec{t}_{i+1}, \vec{\chanBoundsMapSnd}_{\kern-.3em i+1}, \vec{\chanBoundsMapSnd}_{\kern-.3em i+1}, \channel{\procA}{\procB})$
            for every $\channel{\procA}{\procB} \in \domainOf(\chanBounds)$, and
     \item \label{prop:wCh-sim-woCh-CSM-5}
    $
     (\vec{s}_i, \vec{\chanBoundsMapSnd_i}, \vec{\chanBoundsMapRcv_i}, \vec{t}_i, \xi'_i) \xrightarrow{\encchanwords(w_{i})}\!\!^2 \,
     (\vec{s}_{i+1}, \vec{\chanBoundsMapSnd}_{\kern-.3em i+1}, \vec{\chanBoundsMapRcv}_{\kern-.3em i+1}, \vec{t}_{i+1}, \xi'_{i+1})
    $.
    \end{enumerate}

    With the above claim, we need one more ingredient to show that the run for $w$ can be mimicked with one for $\encchanwords(w)$:
    we need to relate the initial states.
    Concretely, we show that
    \ref{cond:wCh-sim-woCh-CSM-1}
    to
    \ref{cond:wCh-sim-woCh-CSM-4}
    hold for $i = 0$.
    In fact, all of these conditions are trivial: initially, all participants are in their initial state and all channels are empty.

    Having related the initial states, it is straightforward that the above claim can be used to mimic the run for $w$ in $\CSM{A}$ in
    $\CSMlwCh{\encchanfsm(A_\procA)}$, having trace $\encchanwords(w)$.
    Intuitively, we start with the initial states and, for every $w_i$, we apply the above claim to obtain the mimicked run.

    It remains to prove the above claim.
    Let $i$ be some number.
    We assume that
    \ref{cond:wCh-sim-woCh-CSM-1}
    to
    \ref{cond:wCh-sim-woCh-CSM-4}
    hold.
    We show that
    \ref{prop:wCh-sim-woCh-CSM-1}
    to
    \ref{prop:wCh-sim-woCh-CSM-5}
    hold.
    We use \ref{prop:wCh-sim-woCh-CSM-5} to obtain the witness
    $
     (\vec{s}_{i+1}, \vec{\chanBoundsMapSnd}_{\kern-.3em i+1}, \vec{\chanBoundsMapRcv}_{\kern-.3em i+1}, \vec{t}_{i+1}, \xi'_{i+1})
    $.
    We will still need to show that this transition is possible.
    We do a simultaneous case analysis whether $w_i$ is a send or receive transition and whether the corresponding channel is in the domain of $\chanBounds$.
    \begin{itemize}
     \item $w_i = \snd{\procA}{\procB}{\val}$ and
            $\channel{\procA}{\procB} \notin \domainOf(\chanBounds)$: \\
            Because of $\channel{\procA}{\procB} \notin \domainOf(\chanBounds)$,
            we have that
            $\encchanwords(w_i) = w_i$.
            From \ref{cond:wCh-sim-woCh-CSM-1}, we know that $\vec{q}_i$ and $\vec{s}_i$ agree on the state for $\procA$, which is the active participant for $w_i$.
            Therefore, the send transition for
            \ref{prop:wCh-sim-woCh-CSM-5}
            is possible.
            All claims
            \ref{prop:wCh-sim-woCh-CSM-1}
            to
            \ref{prop:wCh-sim-woCh-CSM-4}
            trivially follow by the semantics of CSMs and the fact that channel participants are not involved for transitions for which the corresponding channel is not in the domain of $\chanBounds$.

     \item $w_i = \rcv{\procA}{\procB}{\val}$ and
            $\channel{\procA}{\procB} \notin \domainOf(\chanBounds)$: \\
            From \ref{cond:wCh-sim-woCh-CSM-1}, we know that $\vec{q}_i$ and $\vec{s}_i$ agree on the state for $\procB$, which is the active participant for $w_i$.
            In addition, we know that the channel content for $\channel{\procA}{\procB}$ are the same in
            $\xi_i$ and $\xi'_i$ from
            \ref{cond:wCh-sim-woCh-CSM-2}.
            Therefore, the receive transition for
            \ref{prop:wCh-sim-woCh-CSM-5}
            is possible.
            Again, the remaining claims
            \ref{prop:wCh-sim-woCh-CSM-1}
            to
            \ref{prop:wCh-sim-woCh-CSM-4}
            trivially follow by the semantics of CSMs and the fact that channel participants are not involved for transitions for which the corresponding channel is not in the domain of~$\chanBounds$.

     \item $w_i = \snd{\procA}{\procB}{\val}$ and
            $\channel{\procA}{\procB} \in \domainOf(\chanBounds)$: \\
            We have that
            $
            \encchanwords(w_i)
            =
            \msgFromTo{\procA}{\procChanI{\procA}{\procB}{j}}{\val}
            $
            where
            $
            \roleFmt{j} \modOp
            \card{(w_1 \ldots w_{i-1}) \wproj_{\snd{\procA}{\procB}{\_}}}
            $ 
            by construction.
            We first argue that the transition in
            \ref{prop:wCh-sim-woCh-CSM-5}
            is possible.
            In fact, it contains two transitions: one send for $\procA$ and one receive for $\procChanI{\procA}{\procB}{j}$.
            Because of
            \ref{cond:wCh-sim-woCh-CSM-1},
            the states for $\procA$ in $\vec{q}_i$ and $\vec{s}_i$ agree and, thus, the transition is possible.
            After this transition, message $\val$ is in the channel from $\procA$ to $\procChanI{\procA}{\procB}{i}$ and, by
            \ref{cond:wCh-sim-woCh-CSM-3}, this is the only message.
            From
            \ref{cond:wCh-sim-woCh-CSM-4},
            we also know that this channel participant is in its initial state in $\vec{t}_{i+1}$.
            Thus, it can receive $\val$ by construction, with which it updates its state to $q_{\val}$.
            Therefore, this channel is empty again and, thus, \ref{prop:wCh-sim-woCh-CSM-3} holds.
            After the first send transition, $\procA$ is in the same state and thus
            \ref{prop:wCh-sim-woCh-CSM-1} is satisfied by determinism. None of the other participants change their state and thus \ref{prop:wCh-sim-woCh-CSM-2} is (still) satisfied.
            There are only changes related $\channel{\procA}{\procB}$.
            Thus, we only need to consider these changes for \ref{prop:wCh-sim-woCh-CSM-4}.

            By the semantics of CSMs, we have
            $\xi_{i+1}(\channel{\procA}{\procB}) =
             \xi_{i}(\channel{\procA}{\procB}) \cat \val$.
            We also have
            $
            (\vec{\chanBoundsMapSnd}_{\kern-.3em i+1})_{\procC} = (\vec{\chanBoundsMapRcv_i})_{\procC}
            $
            for all $\procC$.

            Thus, it remains to show
            $
            \xi_{i}(\channel{\procA}{\procB}) \cat \val
            =
            \contentOf(\vec{t}_{i+1}, \vec{\chanBoundsMapSnd}_{\kern-.3em i+1}, \vec{\chanBoundsMapRcv_i}, \channel{\procA}{\procB})
            $.

            From
            $
            \contentOf(\vec{t}_{i}, \vec{\chanBoundsMapSnd_i}, \vec{\chanBoundsMapRcv_i}, \channel{\procA}{\procB})
            $
            to
            $
            \contentOf(\vec{t}_{i+1}, \vec{\chanBoundsMapSnd}_{\kern-.3em i+1}, \vec{\chanBoundsMapRcv_i}, \channel{\procA}{\procB})
            $,
            there are only the following changes: 
            the state change of
            $\procChanI{\procA}{\procB}{j}$
            in $\vec{t}_{i+1}$
            and the increase of
            $
            (\vec{\chanBoundsMapSnd}_{\kern-.3em i+1})_\procA(\channel{\procA}{\procB}) =
            (\vec{\chanBoundsMapSnd_i})_\procA(\channel{\procA}{\procB}) + 1
            $,
            while
            $
            (\vec{\chanBoundsMapSnd}_{\kern-.3em_i+1})_{\procC} = (\vec{\chanBoundsMapSnd_i})_{\procC}
            $
            for all $\procC \neq \procA$.
            It is easy to show that these add precisely $\val$ to
            $
            \contentOf(\vec{t}_{i+1}, \vec{\chanBoundsMapSnd}_{\kern-.3em i+1}, \vec{\chanBoundsMapRcv_i}, \channel{\procA}{\procB})
            $
            to
            $
            \contentOf(\vec{t}_{i}, \vec{\chanBoundsMapSnd_{i}}, \vec{\chanBoundsMapRcv_{i}}, \channel{\procA}{\procB})
            $
            which proves the claim.

     \item $w_i = \rcv{\procA}{\procB}{\val}$ and
            $\channel{\procA}{\procB} \in \domainOf(\chanBounds)$: \\
            We have that
            $
            \encchanwords(w_i)
            =
            \msgFromTo{\procChanI{\procA}{\procB}{j}}{\procB}{\val}
            $
            where
            $
            \roleFmt{j} \modOp
            \card{(w_1 \ldots w_{i-1}) \wproj_{\rcv{\procA}{\procB}{\_}}}
            $
            by construction.
            We first argue that the transition in
            \ref{prop:wCh-sim-woCh-CSM-5}
            is possible.
            In fact, it contains two transitions:
            one send for $\procChanI{\procA}{\procB}{j}$ and
            one receive for $\procB$.
            Regarding the second transition:
            because of
            \ref{cond:wCh-sim-woCh-CSM-1},
            the states for $\procA$ in $\vec{q}_i$ and $\vec{s}_i$ agree and, thus, the transition is possible.
            Regarding the first transition:
            since the previous transition is possible in $\CSM{A}$, we know that $\val$ is at the head of the channel
            $\xi(\channel{\procA}{\procB})$;
            from \ref{cond:wCh-sim-woCh-CSM-4}, we know that the state of
            $\procChanI{\procA}{\procB}{j}$ is $q_{\val}$ in $\vec{t}_i$, making the first transition possible and putting it back to its initial state.
            As for the other cases, it is straightforward that
            \ref{prop:wCh-sim-woCh-CSM-1}
            to
            \ref{prop:wCh-sim-woCh-CSM-3}
            are satisfied after these transitions.

            It remains to show
            \ref{prop:wCh-sim-woCh-CSM-4}.
            There are only changes related $\channel{\procA}{\procB}$.
            Thus, we only need to consider these changes.

            By the semantics of CSMs, we have
            $
            \xi_{i}(\channel{\procA}{\procB})
             =
            \val \cat \xi_{i+1}(\channel{\procA}{\procB})
             $.
            We also have
            $
            (\vec{\chanBoundsMapSnd}_{\kern-.3em i+1})_{\procC} = (\vec{\chanBoundsMapSnd_i})_{\procC}
            $
            for all $\procC$.

            The only changes
            from
            $
            \contentOf(\vec{t}_{i}, \vec{\chanBoundsMapSnd_i}, \vec{\chanBoundsMapRcv_i}, \channel{\procA}{\procB})
            $
            to
            $
            \contentOf(\vec{t}_{i+1}, \vec{\chanBoundsMapSnd_i}, \vec{\chanBoundsMapSnd}_{\kern-.3em i+1}, \channel{\procA}{\procB})
            $
            are: the state change of
            $\procChanI{\procA}{\procB}{j}$
            in $\vec{t}_{i+1}$
            and the increase of
            $
            (\vec{\chanBoundsMapSnd}_{\kern-.3em i+1})_\procA(\channel{\procA}{\procB}) =
            (\vec{\chanBoundsMapRcv_i})_\procA(\channel{\procA}{\procB}) + 1
            $,
            while
            $
            (\vec{\chanBoundsMapSnd}_{\kern-.3em i+1})_{\procC} = (\vec{\chanBoundsMapRcv_i})_{\procC}
            $
            for all $\procC \neq \procA$.
            It is easy to show that these remove precisely $\val$,
            which proves the claim.
    \end{itemize}
    This concludes the proof for this direction.

    Second, let us assume there exists run $\run$ for
    $\CSMlwCh{\encchanfsm(A_\procA)}$
    for $\encchanwords(w)$.
    We claim the following:
    for every $i$
    with
    \[
     (\vec{s}_i, \vec{\chanBoundsMapSnd_i}, \vec{\chanBoundsMapRcv_i}, \vec{t}_i, \xi'_i) \xrightarrow{\encchanwords(w_{i})}\!\!^2 \,
     (\vec{s}_{i+1}, \vec{\chanBoundsMapSnd}_{\kern-.3em i+1}, \vec{\chanBoundsMapSnd}_{\kern-.3em i+1}, \vec{t}_{i+1}, \xi'_{i+1})
    \]
    and
    $(\vec{q}_i, \xi_i)$
    such that
    \begin{enumerate}[label=(\alph*)]
     \item \label{cond:woCh-sim-wCh-CSM-1}
            $\vec{q}_i = \vec{s}_i$,
     \item \label{cond:woCh-sim-wCh-CSM-2}
$\xi'_i(\channel{\procA}{\procB}) = \xi_i(\channel{\procA}{\procB})$
            for every $\channel{\procA}{\procB} \notin \domainOf(\chanBounds)$,
     \item \label{cond:woCh-sim-wCh-CSM-3}
$\xi'_i(\channel{\procA}{\procB}) = \emptystring$
            for every $\channel{\procA}{\procB} \in \domainOf(\chanBounds)$,
     \item \label{cond:woCh-sim-wCh-CSM-4}
$\xi_i(\channel{\procA}{\procB}) =
        \contentOf(\vec{t}_i, \vec{\chanBoundsMapSnd_i}, \vec{\chanBoundsMapRcv_i}, \channel{\procA}{\procB})$
            for every $\channel{\procA}{\procB} \in \domainOf(\chanBounds)$,
    \end{enumerate}
    there is
    $(\vec{q}_{i+1}, \xi_{i+1})$
    such that
    \begin{enumerate}[label=(\alph*')]
     \item \label{prop:woCh-sim-wCh-CSM-1}
           $\vec{q}_{i+1} = \vec{s}_{i+1}$,
     \item \label{prop:woCh-sim-wCh-CSM-2}
$\xi'_{i+1}(\channel{\procA}{\procB}) = \xi_{i+1}(\channel{\procA}{\procB})$
            for every $\channel{\procA}{\procB} \notin \domainOf(\chanBounds)$,
     \item \label{prop:woCh-sim-wCh-CSM-3}
$\xi'_{i+1}(\channel{\procA}{\procB}) = \emptystring$
            for every $\channel{\procA}{\procB} \in \domainOf(\chanBounds)$,
     \item \label{prop:woCh-sim-wCh-CSM-4}
$\xi_{i+1}(\channel{\procA}{\procB}) =
        \contentOf(\vec{t}_{i+1}, \vec{\chanBoundsMapSnd_{i+1}}, \vec{\chanBoundsMapSnd_{i+1}}, \channel{\procA}{\procB})$
            for every $\channel{\procA}{\procB} \in \domainOf(\chanBounds)$, and
     \item \label{prop:woCh-sim-wCh-CSM-5}
            $
            (\vec{q}_i, \xi_i) \xrightarrow{w_i}
            (\vec{q}_{i+1}, \xi_{i+1})
            $.
    \end{enumerate}

    As before, we need one more ingredient to show that one run simulates the other: the conditions hold for the initial states, \ie $i=0$.
    Again, this is trivially true.
    In fact, one can also prove the above claim analogously so we omit the full proof.

 \end{proof}

We want the channel participants only to faithfully forward messages.
To formalise this, we define the notions of \emph{forwarding} and \emph{almost forwarding}.

\begin{definition}[Forwarding and almost forwarding]
A word $w = w_1 \ldots  \in (\AlphAsync_{\procChan{\procA}{\procB}})^\infty$
if \emph{forwarding} is for every odd $j$, it holds that
$w_j = \rcv{\procA}{\procChan{\procA}{\procB}}{\val}$
and
$w_{j+1} = \snd{\procChan{\procA}{\procB}}{\procB}{\val}$.
If $w = w_1 \ldots w_n$ is finite, it is \emph{almost forwarding} if $w_1 \ldots w_{n-1}$ is forwarding and
$w_n = \rcv{\procA}{\procChan{\procA}{\procB}}{\val}$.
A language $L \subseteq (\AlphAsync_{\procChan{\procA}{\procB}})^\infty$ is forwarding if for every word is.
We say that an FSM~$A$ over alphabet $\AlphAsync_{\procChan{\procA}{\procB}}$ is \emph{forwarding} if its language $\lang(A)$ is forwarding. 
\end{definition}

Intuitively, for finite words, we only want forwarding words. 
For infinite words, though, it can happen that some send events will never be matched. 
Because of the channel bounds, this, however, cannot happen indefinitely, which is why we use the notion of almost forwarding for infinite words whose projection onto channel participants will be finite.

\begin{lemma}
\label{lm:chan-part-wproj-forwarding}
Let $\PSM$ be a PSM that respects channel bounds $\chanBounds$ and $\procChanI{\procA}{\procB}{i}$ be a channel participant.
Then, the following holds
for every $u \in \semantics(\encchanpsm(\PSM))$:
\begin{itemize}
 \item  If $u$ is finite, then $w = u \wproj_{\AlphAsyncWCh_{\procChanI{\procA}{\procB}{i}}}$ is forwarding.
 \item  If $u$ is infinite, then $w = u \wproj_{\AlphAsyncWCh_{\procChanI{\procA}{\procB}{i}}}$ is almost forwarding or forwarding.
\end{itemize}
\end{lemma}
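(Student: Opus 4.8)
The plan is to reduce the statement to a structural property of the original PSM word underlying $u$, by undoing the channel-participant encoding and then reading everything off channel-compliance and $B$-boundedness of that decoded word. First I would observe that, since $u \in \semantics(\encchanpsm(\PSM))$, the word $u$ is channel-ordered by \ref{def:channel-encoding-prop-psm-func-channel-ordered}, and hence $u = \encchanwords(\decchanwords(u))$ by \ref{def:channel-encoding-prop-words-almost-bij-2}. Writing $v \is \decchanwords(u)$, the right-to-left direction of \ref{def:channel-encoding-prop-words-and-psm-func-rev} gives $v \in \semantics(\PSM)$. Then \cref{prop:respecting-channel-bounds-from-psm-to-semantics} shows that $v$ respects $\chanBounds$, and since $v \in \semantics(\PSM) \subseteq \FIFO$, the word $v$ is channel-compliant. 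So the whole problem reduces to understanding $\encchanwords(v) \wproj_{\AlphAsyncWCh_{\procChanI{\procA}{\procB}{i}}}$ for a channel-compliant, $\chanBounds$-respecting word $v$; set $B \is \chanBounds(\channel{\procA}{\procB})$.

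Next I would identify which events of $v$ contribute to $\procChanI{\procA}{\procB}{i}$. By the definition of $\encchanwords$, the $k$-th send $\snd{\procA}{\procB}{\val}$ of $v$ on $\channel{\procA}{\procB}$ becomes $\msgFromTo{\procA}{\procChanI{\procA}{\procB}{k \bmod B}}{\val}$ and the $k$-th receive $\rcv{\procA}{\procB}{\val}$ becomes $\msgFromTo{\procChanI{\procA}{\procB}{k \bmod B}}{\procB}{\val}$. Hence $\procChanI{\procA}{\procB}{i}$ contributes exactly one event per send/receive at a logical position $\equiv i \pmod B$: a $\rcv{\procA}{\procChanI{\procA}{\procB}{i}}{\val}$ for each such send and a $\snd{\procChanI{\procA}{\procB}{i}}{\procB}{\val}$ for each such receive, listed in the order the originals occur in $v$ (all other letters of $\encchanwords(v)$ do not involve $\procChanI{\procA}{\procB}{i}$).

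The key step is to pin down that order using the two properties of $v$. Let $S_k$ and $R_k$ denote the $k$-th send and receive of $v$ on $\channel{\procA}{\procB}$. Channel-compliance gives $\MsgVals(R_k)=\MsgVals(S_k)$ and that $S_k$ precedes $R_k$; $B$-boundedness gives that right after $S_{i+mB}$ at most $B$ sends are unmatched, so $R_{i+(m-1)B}$ has already occurred. Applying these to the positions $\equiv i \pmod B$ yields $S_i \prec R_i \prec S_{i+B} \prec R_{i+B} \prec \cdots$, so the $\procChanI{\procA}{\procB}{i}$-projection alternates $\rcv{\procA}{\procChanI{\procA}{\procB}{i}}{\val},\snd{\procChanI{\procA}{\procB}{i}}{\procB}{\val}$ with matching $\val$, which is exactly the forwarding pattern. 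The main obstacle, and the only place the finite/infinite split matters, is controlling a possible dangling receive. Here I would prove that if $S_{i+mB}$ is unmatched in $v$ then $S_{i+(m+1)B}$ cannot occur: otherwise the prefix ending at $S_{i+(m+1)B}$ carries at least $B+1$ unmatched sends on $\channel{\procA}{\procB}$ (by FIFO, $R_{i+mB}$ and later never occur), contradicting $B$-boundedness. Thus at most one send $\equiv i \pmod B$ is unmatched, and it is necessarily the last $\procChanI{\procA}{\procB}{i}$-event.

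Finally I would assemble the case analysis. If $u$ is finite then so is $v$, and being a finite FIFO word all its sends are matched, so no dangling receive arises and the projection is forwarding. If $u$ is infinite, then either $\procChanI{\procA}{\procB}{i}$ acts infinitely often, producing an infinite forwarding word, or it acts only finitely often, in which case the previous paragraph shows the projection is a finite forwarding word possibly followed by a single trailing $\rcv{\procA}{\procChanI{\procA}{\procB}{i}}{\val}$, i.e.\ forwarding or almost forwarding. I expect the bound bookkeeping for the at-most-one-dangling-receive claim to be the delicate part; routing through $v = \decchanwords(u)$ instead of reasoning about $u$ directly is what lets me sidestep the fairness subtleties of $\close(\hole)$ on infinite words, since all the ordering facts are read off from channel-compliance and $B$-boundedness of the single decoded word $v$.
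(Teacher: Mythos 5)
Your proposal is correct and follows essentially the same route as the paper: the paper's proof also reduces to the underlying channel-compliant, $\chanBounds$-respecting word and invokes Claim~1 of \cref{lm:circular-matching} (the function $h(v,j,B)$ selecting every $B$-th send and receive from offset $j$), which is precisely the alternation fact $S_i \prec R_i \prec S_{i+B} \prec R_{i+B} \prec \cdots$ that you re-derive directly from channel-compliance and $B$-boundedness. The infinite case is likewise handled identically in both arguments, via the observation that at most $\chanBounds(\channel{\procA}{\procB})$ unmatched sends on the channel leaves at most one dangling receive per channel participant, yielding almost-forwarding.
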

\begin{proof}
Let $u \in \semantics(\encchanpsm(\PSM))$ be a word.
We do a case distinction if $u$ is finite of infinite.

Suppose $u$ is finite.
Then, we have to show that
$w = u \wproj_{\AlphAsyncWCh_{\procChanI{\procA}{\procB}{i}}}$ is forwarding.
By assumption, we know that $w$ respects $\chanBounds$.
We show that $w$ is forwarding with reasoning from \cref{lm:circular-matching}.
From Claim 1 in
\cref{lm:circular-matching},
we know the following:
for every $0 \leq j < \chanBounds(\channel{\procA}{\procB})$, it holds that
\[
    h(w, j, \chanBounds(\channel{\procA}{\procB})) \in
    \set{x \cat y \mid
    x \in \AlphAsync_{{\channel{\procA}{\procB}}, !} \text{ and }
    y \in \AlphAsync_{{\channel{\procA}{\procB}}, ?}
    }^* \enspace .
\]
Later in that lemma, we have drawn the connection between
$h(w, j, \chanBounds(\channel{\procA}{\procB}))$
and
$\encchanwords(w) \wproj_{\AlphAsyncWCh_{\procChan{\procA}{\procB}}}$.
In particular, they consider the same pattern to keep letters:
starting from the $j$-th occurrence,
it turns
every $\chanBounds(\channel{\procA}{\procB})$-th instance of
a send event
$\snd{\procA}{\procB}{\_}$
turns into
$\snd{\procA}{\procChanI{\procA}{\procB}{i}}{\_}$
and
every $\chanBounds(\channel{\procA}{\procB})$-th instance of
a receive event
$\rcv{\procA}{\procB}{\_}$
turns into
$\rcv{\procChanI{\procA}{\procB}{i}}{\procB}{\_}$.
By construction and the fact that $w$ is \channelcompliant, we also know that these occurrences match, \ie they carry the same message.
Together, it follows that
\[
    w \in
    \set{x \cat y \mid
    x = \rcv{\procA}{\procChanI{\procA}{\procB}{i}}{\val}
    \text{ and }
    y = \snd{\procChanI{\procA}{\procB}{i}}{\procB}{\val}
    \text{ for some } \val \in \MsgVals
    }^* \enspace .
\]
With this, it is obvious that $w$ is forwarding.

Suppose that $u$ is infinite.
In this case, the same reasoning almost applies.
The only difference is that it is possible that some send events are unmatched.
However, because $u$ respects $\chanBounds$, we know that there can be at most $\chanBounds(\channel{\procA}{\procB})$ such unmatched send events.
Thus, for $\procChanI{\procA}{\procB}{i}$, there can be at most one unmatched send event.
This is precisely the difference between almost forwarding and forwarding.
\proofEndSymbol
\end{proof}

It will not be sufficient to only check that an implementation is (almost) forwarding.
What is important is that the channel participant can react to all possible messages properly (and forwards them).
We call this \emph{amicable}.

\begin{definition}[Amicable]
Let $A_\procA$ be an FSM over $\AlphSyncWCh_{\procA}$,
$A_{\procChanI{\procA}{\procB}{i}}$ be an FSM over $\AlphSyncWCh_{\procChanI{\procA}{\procB}{i}}$,
and $B$ be the channel bound for $\channel{\procA}{\procB}$.
We say that
$A_{\procChanI{\procA}{\procB}{i}}$
is \emph{amicable} with
$A_\procA$ if the following holds:
\begin{itemize}
 \item $\lang(A_\procA)$ is channel-ordered,
 \item $A_{\procChanI{\procA}{\procB}{i}}$ is forwarding, and
 \item for every run of $A_\procA$ with trace $u$ and
    $w = u \wproj_{\AlphSync_{\procChanI{\procA}{\procB}{i}}} = w_1 \ldots$,
    there is a run of $A_{\procChanI{\procA}{\procB}{i}}$ with trace
    $
    w_i \cat w'_0 \cat w_{i+1\cdot B} \cat w'_{1} \cat w_{i+2\cdot B} \cat w'_2 \cat \ldots
    $
    where,
    for every $j$,
    $w'_j = \snd{\procChanI{\procA}{\procB}{i}}{\procB}{\val}$
    if
    $w_{i+j\cdot B} = \rcv{\procA}{\procChanI{\procA}{\procB}{i}}{\val}$,
    and for every trace with prefix
    $w_i \cat w'_0 \cat \ldots w_{i+j\cdot B}$,
    the only continuation is $w'_{j}$.
\end{itemize}
We say a CSM $\CSMwCh{A}$ is amicable if,
for every $\procA \in \Procs$, 
$A_{\procChanI{\procA}{\procB}{i}}$
is \emph{amicable} with
$A_\procA$
for every $\procChanI{\procA}{\procB}{i} \in \ProcsCh$.
\end{definition}

Given an amicable projection of an encoded PSM, there will be FSMs for channel participants.
We observe that we can match the states of such FSMs to different letters which ought to be forwarded.

\begin{proposition}[Matching states and messages for amicable CSMs]
\label{prop:matching-states-and-messages}Let $\CSMwCh{A}$ be an amicable CSM that satisfies feasible eventual reception.
For every $\procChanI{\procA}{\procB}{i} \in \ProcsCh$ with
$A_{\procChanI{\procA}{\procB}{i}} = (Q, \AlphSync_{\procChanI{\procA}{\procB}{i}}, \delta, q_{0}, F)$,
there is a function $f \from Q \to \MsgVals \dunion \set{\emptystring}$
such that:
\begin{itemize}
 \item if $f(q) = \emptystring$, for every $(q, x, q') \in \delta$, it holds that
 $x = \rcv{\procA}{\procChanI{\procA}{\procB}{i}}{\val}$ for some $\val \in \MsgVals$,
 \item if $f(q) = \val$, then $q$ is not final and for every $(q, x, q') \in \delta$, it holds that
 $x = \snd{\procChanI{\procA}{\procB}{i}}{\procB}{\val}$ for some $\val \in \MsgVals$.
\end{itemize}
\end{proposition}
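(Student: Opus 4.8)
The plan is to read $f$ directly off the transition structure of $A := A_{\procChanI{\procA}{\procB}{i}}$ and to verify the two bullets using only that $A$ is \emph{forwarding} (the other amicability clauses and feasible eventual reception are not needed here). First I would reduce to the case where $A$ is \emph{trim}, i.e.\ every state is reachable from $q_0$ and lies on some maximal run: deleting the other states changes neither $\lang(A)$ nor the forwarding hypothesis, and the conditions on $f$ are local to each remaining state. By the standing determinism assumption I may also take $A$ deterministic. I then define $f(q) := \val$ whenever $q$ has an outgoing transition $q \xrightarrow{\snd{\procChanI{\procA}{\procB}{i}}{\procB}{\val}} q'$, and $f(q) := \emptystring$ otherwise. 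The first bullet is then immediate: if $f(q) = \emptystring$ then $q$ has no outgoing send, and since the alphabet of $A$ consists only of sends $\snd{\procChanI{\procA}{\procB}{i}}{\procB}{\_}$ to $\procB$ and receives $\rcv{\procA}{\procChanI{\procA}{\procB}{i}}{\_}$ from $\procA$, every outgoing transition of $q$ must be a receive.

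The core of the argument is a parity invariant on reachable states, which I expect to be the main obstacle since it is exactly where the global \emph{forwarding} constraint must be converted into the purely local statement that each state is either a pure receiving state or a pure sending state. Let $q$ be reachable by a trace $u$ of length $\ell$ and suppose $q$ has at least one outgoing transition. If that transition is a receive, then (by trimness) $u \cdot \rcv{\procA}{\procChanI{\procA}{\procB}{i}}{\_}$ extends to a word of $\lang(A)$; forwarding places receives precisely at odd positions, so $\ell+1$ is odd and $\ell$ is even. Symmetrically, an outgoing send forces $\ell$ odd. Hence no state can carry both a send and a receive transition, and the length-parity of every trace reaching $q$ is fixed by the type of its outgoing transitions.

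It remains to discharge the second bullet for a state $q$ with $f(q) = \val$. By the invariant $q$ is reached by some trace $u$ of odd length $\ell$, so its last letter $u_\ell$ sits at an odd position and is a receive, say $u_\ell = \rcv{\procA}{\procChanI{\procA}{\procB}{i}}{\val_0}$. For any outgoing send $q \xrightarrow{\snd{\procChanI{\procA}{\procB}{i}}{\procB}{\val'}} q''$, trimness makes $u \cdot \snd{\procChanI{\procA}{\procB}{i}}{\procB}{\val'}$ a prefix of a word of $\lang(A)$, and forwarding forces the send at position $\ell+1$ to forward the receive at position $\ell$, whence $\val' = \val_0$. Thus every outgoing send of $q$ carries the same message, so $f(q)$ is well defined and equals $\val_0$, and by the parity invariant $q$ has no receive transition; all its outgoing transitions are sends of $\val$. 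Finally, a finite forwarding word has even length (each odd position is matched by the following send), so no maximal finite run can end at the odd-length state $q$; hence $q$ is not final. This verifies both conditions and completes the construction of $f$.
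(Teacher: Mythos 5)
The paper states \cref{prop:matching-states-and-messages} without any proof, treating it as evident from the definition of \emph{forwarding}; so there is no argument of the authors' to compare against, and what you have done is supply the missing justification. Your argument is essentially sound, and it correctly isolates what is actually needed: only the forwarding clause of amicability enters (feasible eventual reception and channel-ordering play no role, and the determinism you invoke is never used). The substance is the parity invariant: in a forwarding word receives occupy exactly the odd positions and sends the even ones, so a state that is reachable and whose successors lie on maximal runs cannot carry both a send and a receive; every outgoing send must repeat the message of the receive immediately preceding it along \emph{any} access path, which simultaneously shows that $f$ is well defined when several traces reach the same state and that $f(q)$ equals the stored message; and a state reached at odd trace-depth cannot be final, since a finite trace of a maximal run ending there would be an odd-length word of $\lang(A_{\procChanI{\procA}{\procB}{i}})$, contradicting that finite forwarding words have even length. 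All of these steps check out against the paper's definitions of run, maximal run and forwarding.

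One caveat should be made explicit rather than folded into the ``reduce to the trim case'' remark. As literally stated the proposition quantifies over \emph{all} states of $A_{\procChanI{\procA}{\procB}{i}}$, and for an automaton containing states that are unreachable, or from which no maximal run can be completed, it is simply false: such a state may carry both a send and a receive transition without perturbing $\lang(A_{\procChanI{\procA}{\procB}{i}})$ or the forwarding hypothesis. Your trimming step is therefore not a cosmetic normalisation but a genuine restriction of the claim to the states that matter, and you should say so: what you prove is the proposition for states that are reachable and lie on some maximal run, which is exactly the form used downstream in the proof of \cref{lm:channel-encoding-prop-words-and-csm-func-rev}, where $f$ is only ever applied to local states occurring in reachable configurations of a deadlock-free CSM. (A similar implicit assumption, shared with the paper, is that these FSMs have no $\emptystring$-labelled transitions, without which neither bullet could hold.) With those two provisos recorded, the proof is complete.
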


We now prove that amicable CSMs with channel participants can be used to mimic CSMs without channel participants and vice versa, when the right encoding and decoding functions for words is applied. 

\begin{lemma}
    \label{lm:channel-encoding-prop-words-and-csm-func-rev}
Let $\CSMwCh{A}$ be an amicable CSM. For every word $w \in (\AlphAsyncWCh)^\infty$
 that respects~$\chanBounds$,
 it holds that \\
    $\encchanwords(w) \in \lang(\CSMwCh{A})$
    iff $\decchanwords(\encchanwords(w)) \in \lang(\CSMl{\decchanfsm(A_\procA)})$.
\end{lemma}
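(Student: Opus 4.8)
The plan is to prove the stated equivalence by a two-directional run simulation between the amicable CSM $\CSMwCh{A}$ operating on $\encchanwords(w)$ and its decoding $\CSMl{\decchanfsm(A_\procA)}$ operating on $w$, reusing the skeleton of the companion result \cref{def:channel-encoding-prop-words-and-csm-func-align} and patching the one place where the two settings differ. First I would observe that, by \cref{def:channel-encoding-prop-words-almost-bij-1}, $\decchanwords(\encchanwords(w)) = w$, so the right-hand side is literally $w \in \lang(\CSMl{\decchanfsm(A_\procA)})$; the goal becomes $\encchanwords(w) \in \lang(\CSMwCh{A})$ iff $w \in \lang(\CSMl{\decchanfsm(A_\procA)})$. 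The decoded CSM keeps only the original participants $\procA \in \Procs$ (the channel participants are absorbed into the channels), and $\decchanfsm$ merely relabels each transition of $A_\procA$ while preserving its state set, determinism, and final states; hence the \emph{local} run of any $\procA \in \Procs$ on $w\wproj_{\AlphAsync_\procA}$ in $\CSMl{\decchanfsm(A_\procA)}$ is in lockstep, step for step, with its local run on $\encchanwords(w)\wproj_{\AlphAsyncWCh_\procA}$ in $\CSMwCh{A}$. The only genuine work is in the channels. Since $\CSMwCh{A}$ is amicable, I would fix, for every channel participant $\procChanI{\procA}{\procB}{i}$, the matching function $f$ of \cref{prop:matching-states-and-messages}, so that a state with $f(q)=\emptystring$ can only receive the next message from $\procA$ and a (non-final) state with $f(q)=\val$ can only forward exactly $\val$ to $\procB$.

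The core of the argument is a configuration invariant relating $(\vec{q},\xi)$ of $\CSMl{\decchanfsm(A_\procA)}$ with $(\vec{s},\vec{t},\xi')$ of $\CSMwCh{A}$, directly analogous to the one built around $\contentOf$ in the proof of \cref{def:channel-encoding-prop-words-and-csm-func-align}: (i) $\vec{q}_\procA = \vec{s}_\procA$ for every $\procA \in \Procs$; (ii) for $\channel{\procA}{\procB} \notin \domainOf(\chanBounds)$, $\xi(\channel{\procA}{\procB}) = \xi'(\channel{\procA}{\procB})$, since $\decchanfsm$ leaves these transitions untouched and no channel participant is introduced; (iii) for $\channel{\procA}{\procB} \in \domainOf(\chanBounds)$, the content $\xi(\channel{\procA}{\procB})$ is reconstructed from the channel-participant states $\vec{t}$ — reading off one pending message $\val$ from each $\procChanI{\procA}{\procB}{k}$ with $f(\vec{t}_{\procChanI{\procA}{\procB}{k}}) = \val$, in round-robin order — together with the (at most one-symbol) contents of the sub-channels $\channel{\procA}{\procChanI{\procA}{\procB}{k}}$ and $\channel{\procChanI{\procA}{\procB}{k}}{\procB}$. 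The key point is that $\contentOf$ from the canonical proof is replaced by this $f$-based reconstruction; \cref{prop:matching-states-and-messages} is what makes the reconstruction well-defined for an \emph{arbitrary} amicable channel participant rather than the single-message-per-state canonical forwarder $A_{\procChan{\procA}{\procB}}$.

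Next I would show the invariant is preserved, in both directions, along the word correspondence $w \leftrightarrow \encchanwords(w)$. A send/receive on an unbounded channel transfers verbatim. A single $\snd{\procA}{\procB}{\val}$ of the decoded run corresponds to the paired block $\msgFromTo{\procA}{\procChanI{\procA}{\procB}{j}}{\val}$ in $\CSMwCh{A}$: by the invariant the intended channel participant is in an $f=\emptystring$ state, so by amicability (forwarding plus the determinism clause ``the only continuation is $w'_j$'') it receives $\val$ and moves to the $f=\val$ state, appending $\val$ to the reconstructed content; symmetrically a $\rcv{\procA}{\procB}{\val}$ corresponds to $\msgFromTo{\procChanI{\procA}{\procB}{j}}{\procB}{\val}$, where the relevant participant sits in the $f=\val$ state holding the channel head and forwards it. That the indices $j$ cycle correctly, so that the right participant is consulted and never two unmatched sends collide on one index, is exactly what respecting $\chanBounds$ and \cref{lm:circular-matching} supply. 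For the converse direction, a run of $\CSMwCh{A}$ on $\encchanwords(w)$ may interleave the channel participants' forward-sends with other actions; I would reorder these (legitimately, since CSM languages are closed under $\interswap$ by \cref{lm:csm-closed-interswap}) into the paired-block shape before reading off the decoded run. Finite maximality is immediate: a final configuration of $\CSMwCh{A}$ forces every channel participant back to its $f=\emptystring$ initial/final state with empty sub-channels, which the invariant maps to empty original channels and final main states; the infinite case is handled via $\preceq_\interswap^\omega$ together with the (almost-)forwarding bound of \cref{lm:chan-part-wproj-forwarding}, exactly as in the companion lemma.

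The main obstacle I anticipate is precisely the delivery-correctness step: because the channel participants are only constrained to be amicable, the fact that each receive of $\procB$ obtains the \emph{correct} head message — and that the channel participant has no spurious alternative transition at that moment — is not syntactic and must be extracted from amicability. The delicate verification is that along the simulated run, whenever $\procB$ is scheduled, the consulted $\procChanI{\procA}{\procB}{j}$ is genuinely in an $f$-state matching the reconstructed channel head, with its continuation forced to be the single forwarding send; this is where \cref{prop:matching-states-and-messages}, the forwarding guarantee of \cref{lm:chan-part-wproj-forwarding}, and the amicability determinism clause do all the work that the explicit canonical transitions did for free in \cref{def:channel-encoding-prop-words-and-csm-func-align}.
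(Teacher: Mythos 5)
Your proposal is correct and follows essentially the same route as the paper: a bidirectional run simulation whose configuration invariant equates the main participants' states, keeps unbounded channels identical, and reconstructs each bounded channel's content from the channel-participant states via the matching function $f$ of \cref{prop:matching-states-and-messages}, with amicability supplying the forced-forwarding step that the canonical forwarders gave for free in \cref{def:channel-encoding-prop-words-and-csm-func-align}. The only cosmetic difference is that the paper takes the paired events as atomic two-step transitions so the sub-channels to and from the channel participants are always empty at invariant points, whereas you allow them to hold one pending symbol; both bookkeeping choices work.
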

\begin{proof}
We show that the run for $\encchanwords(w)$ in $\CSMwCh{A}$
    can be simulated in
    $\CSMl{\decchanfsm(A_\procA)}$
    for $\decchanwords(\encchanwords(w))$,
    and vice versa.

    Prior, we establish some notation that will be used for both cases.
    We denote $\encchanwords(w) \is \encchanwords(w_1) \cat \encchanwords(w_2) \cat \ldots$,
    assuming that $w \is w_1 \cat w_2 \cat \ldots$
    where $w_i \in \AlphAsync$ for every $i$.
    Let $\run$ be the run in $\CSMwCh{A}$ for $\encchanwords(w)$:
    \[
     (\vec{s}_0, \vec{t}_0, \xi'_0) \xrightarrow{\encchanwords(w_1)}\!\!^2 \,
     (\vec{s}_1, \vec{t}_1, \xi'_1) \xrightarrow{\encchanwords(w_2)}\!\!^2 \,
     \ldots
    \]
    where we split the states for participants from $\Procs$, given by $\vec{s}_i$,
    and the ones from $\ProcsCh$, given by~$\vec{t}_i$.
    Let $\run'$ be the run in
    $\CSMl{\decchanfsm(A_\procA)}$
    for $\decchanwords(\encchanwords(w))$:
    $
     (\vec{q}_0, \xi_0) \xrightarrow{\decchanwords(\encchanwords(w_1))}
     (\vec{q}_1, \xi_1) \xrightarrow{\decchanwords(\encchanwords(w_2))}
     \ldots
    $.

    Compared to the proof of
    \cref{def:channel-encoding-prop-words-and-csm-func-align},
    we are not given $\chanBoundsMapSnd$ and $\chanBoundsMapRcv$ explicitly.
    However, we can construct them from the part of $w$ which has been consumed already.
    Formally, we define
    $\contentOf(\vec{t}, w_1 \ldots w_l, \channel{\procA}{\procB}) \is u_i \ldots u_j$
    for the states $\vec{t}$ of channel participants and a prefix $w_1 \ldots w_l$ of $w$ if the following holds:
    \begin{itemize}
        \item $j \modOp \card{(w_1 \ldots w_{i}) \wproj_{\snd{\procA}{\procB}{\_}}}$,
        \item $i \modOp \card{(w_1 \ldots w_{l}) \wproj_{\rcv{\procA}{\procB}{\_}}}$,
        \item for every $i \leq k < j$, it holds that $\vec{t}_{\procChanI{\procA}{\procB}{k}} = q'$ with $f(q') = \val$ and $w_k = \val$, and
        \item for every $j \leq k < i$, it holds that $\vec{t}_{\procChanI{\procA}{\procB}{k}} = q'$ with $f(q') = \emptystring$.
    \end{itemize}
    where we use the function $f(\hole)$ from
    \cref{prop:matching-states-and-messages}
    to distinguish between send and receive states of the channel participants.
    Note that we use the same name $\contentOf(\hole, \hole, \hole)$ but different parameters.
    Also, we have $\contentOf(\vec{t}, \emptystring, \channel{\procA}{\procB}) = \emptystring$ by definition.
    In fact, this only happens for initial states and this coincides with the fact that all channel participants are in their initial states initially, not storing any messages in transit.

    First, let us assume there exists run $\run$ 
    for $\encchanwords(w)$ 
    in $\CSMlwCh{\encchanfsm(A_\procA)}$. 
    We claim the following:
    for every $i$
    with
    $
     (\vec{s}_i, \vec{t}_i, \xi'_i) \xrightarrow{\encchanwords(w_{i})}\!\!^k \,
     (\vec{s}_{i+1}, \vec{t}_{i+1}, \xi'_{i+1})
    $
    for $k \in \set{1,2}$
    and
    $(\vec{q}_i, \xi_i)$
    such that
    \begin{enumerate}[label=(\alph*)]
     \item \label{cond2:woCh-sim-wCh-CSM-1}
            $\vec{q}_i = \vec{s}_i$,
     \item \label{cond2:woCh-sim-wCh-CSM-2}
$\xi'_i(\channel{\procA}{\procB}) = \xi_i(\channel{\procA}{\procB})$
            for every $\channel{\procA}{\procB} \notin \domainOf(\chanBounds)$,
     \item \label{cond2:woCh-sim-wCh-CSM-3}
$\xi'_i(\channel{\procA}{\procB}) = \emptystring$
            for every $\channel{\procA}{\procB} \in \domainOf(\chanBounds)$,
     \item \label{cond2:woCh-sim-wCh-CSM-4}
$\xi_i(\channel{\procA}{\procB}) =
        \contentOf(\vec{t}_i, w_1 \ldots w_{i-1}, \channel{\procA}{\procB})$
            for every $\channel{\procA}{\procB} \in \domainOf(\chanBounds)$,
    \end{enumerate}
    there is
    $(\vec{q}_{i+1}, \xi_{i+1})$
    such that
    \begin{enumerate}[label=(\alph*')]
     \item \label{prop2:woCh-sim-wCh-CSM-1}
           $\vec{q}_{i+1} = \vec{s}_{i+1}$,
     \item \label{prop2:woCh-sim-wCh-CSM-2}
$\xi'_{i+1}(\channel{\procA}{\procB}) = \xi_{i+1}(\channel{\procA}{\procB})$
            for every $\channel{\procA}{\procB} \notin \domainOf(\chanBounds)$,
     \item \label{prop2:woCh-sim-wCh-CSM-3}
$\xi'_{i+1}(\channel{\procA}{\procB}) = \emptystring$
            for every $\channel{\procA}{\procB} \in \domainOf(\chanBounds)$,
     \item \label{prop2:woCh-sim-wCh-CSM-4}
$\xi_{i+1}(\channel{\procA}{\procB}) =
        \contentOf(\vec{t}_{i+1}, w_1 \ldots w_{i}, \channel{\procA}{\procB})$
            for every $\channel{\procA}{\procB} \in \domainOf(\chanBounds)$, and
     \item \label{prop2:woCh-sim-wCh-CSM-5}
            $
            (\vec{q}_i, \xi_i) \xrightarrow{w_i}
            (\vec{q}_{i+1}, \xi_{i+1})
            $.
    \end{enumerate}

    To use this claim, we need to argue that the initial states are related, \ie the conditions hold for $i=0$.
    Conditions
    \ref{cond2:woCh-sim-wCh-CSM-1}
    to
    \ref{cond2:woCh-sim-wCh-CSM-3}
    are trivially satisfied.
    For \ref{cond2:woCh-sim-wCh-CSM-4},
    it suffices to recall that $\contentOf(\vec{t}, \emptystring, \channel{\procA}{\procB}) = \emptystring$.

    Now, we prove the claim.
    Let $i \in \Nat$.
    We assume that
    \ref{cond2:woCh-sim-wCh-CSM-1}
    to
    \ref{cond2:woCh-sim-wCh-CSM-4}
    hold and prove
    \ref{prop2:woCh-sim-wCh-CSM-1}
    to
    \ref{prop2:woCh-sim-wCh-CSM-5}.
    We use
    \ref{prop2:woCh-sim-wCh-CSM-5}
    to obtain
    $
       (\vec{q}_{i+1}, \xi_{i+1})
    $.
    We will still need to show, though, that this transition is possible.
    Let us do a case analysis on the shape of~$w_i$.
    We do a simultaneous case analysis whether $w_i$ is a send or receive transition and whether the corresponding channel is in the domain of $\chanBounds$.
    \begin{itemize}
     \item $w_i = \snd{\procA}{\procB}{\val}$ and
            $\channel{\procA}{\procB} \notin \domainOf(\chanBounds)$: \\
        Because of $\channel{\procA}{\procB} \notin \domainOf(\chanBounds)$, we have that
        $\encchanwords(w_i) = w_i$.
        Participant $\procA$ is in the same state in $\vec{s}$ and $\vec{q}$ by
        \ref{cond2:woCh-sim-wCh-CSM-1}.
        Since $w_i$ is a send transition, it is always enabled and, thus, possible.
        The channels change in the same way in both runs: $\val$ is appended to $\channel{\procA}{\procB}$.
        The states also change in the same way: only the one of $\procA$ changes but in the same way due to determinism.
        Thus, it is easy to check that all properties
        \ref{prop2:woCh-sim-wCh-CSM-1}
        to
        \ref{prop2:woCh-sim-wCh-CSM-4}
        are satisfied.

     \item $w_i = \snd{\procA}{\procB}{\val}$ and
            $\channel{\procA}{\procB} \in \domainOf(\chanBounds)$: \\
        By definition,
        $\encchanwords(w_i) =
        \msgFromTo{\procA}{\procChanI{\procA}{\procB}{j}}{\val}$
        for
        $\roleFmt{j} \modOp \card{(w_1 \ldots w_{i-1}) \wproj_{\snd{\procA}{\procB}{\_}}}$.
        Thus, there are two transitions in $\CSMwCh{A}$:
        one send transition labelled with
        $\snd{\procA}{\procChanI{\procA}{\procB}{j}}{\val}$
        and one receive transition labelled with
        $\rcv{\procA}{\procChanI{\procA}{\procB}{j}}{\val}$.
        Participant $\procA$ is in the same state in $\vec{s}$ and $\vec{q}$ by
        \ref{cond2:woCh-sim-wCh-CSM-1}.
        Since $w_i$ is a send transition, it is always enabled and, thus, possible.
        \ref{prop2:woCh-sim-wCh-CSM-1} holds by determinism of the state machines for~$\procA$.
        \ref{prop2:woCh-sim-wCh-CSM-2} is satisfied as we deal with a transition for which
        $\channel{\procA}{\procB} \in \domainOf(\chanBounds)$.
        \ref{prop2:woCh-sim-wCh-CSM-3} is satisfied as the only message enqueued into a channel, with the first transition, is immediately received, with the second one.
        For the remaining property, we show that the changes related to the second transition match the ones of the channel in $\CSMl{\decchanfsm(A_\procA)}$.
        It is obvious that we only need to consider the changes of the channel $\channel{\procA}{\procB}$.
        For this, we have
        $
            \xi_{i+1}(\channel{\procA}{\procB}) =
            \xi_{i}(\channel{\procA}{\procB}) \cat \val
        $.
        With
        $\xi_i(\channel{\procA}{\procB}) =
        \contentOf(\vec{t}_i, w_1 \ldots w_{i-1}, \channel{\procA}{\procB})$,
        it suffices to show that
        $
        \contentOf(\vec{t}_i, w_1 \ldots w_{i-1}, \channel{\procA}{\procB}) \cat \val
        =
        \contentOf(\vec{t}_{i+1}, w_1 \ldots w_{i}, \channel{\procA}{\procB})
        $.
        And this easily follows from the definition of $\contentOf(\hole, \hole, \hole)$ because the only state to change is the one of $\procChanI{\procA}{\procB}{j}$ and adding $w_i$ changes the indices in the way that the state of this channel participant is considered.

     \item $w_i = \rcv{\procA}{\procB}{\val}$ and
            $\channel{\procA}{\procB} \notin \domainOf(\chanBounds)$: \\
        Because of $\channel{\procA}{\procB} \notin \domainOf(\chanBounds)$, we have that
        $\encchanwords(w_i) = w_i$.
        Participant $\procB$ is in the same state in $\vec{s}$ and $\vec{q}$ by
        \ref{cond2:woCh-sim-wCh-CSM-1}.
        From \ref{cond2:woCh-sim-wCh-CSM-2} and the fact that the transition is possible in $\CSMwCh{A}$, we know that the transition is also possible in $\CSMl{\decchanfsm(A_\procA)}$.
        The channels change in the same way in both runs: $\val$ is removed from $\channel{\procA}{\procB}$.
        The states also change in the same way: only the one of $\procA$ changes but in the same way due to determinism.
        Thus, it is easy to check that all properties
        \ref{prop2:woCh-sim-wCh-CSM-1}
        to
        \ref{prop2:woCh-sim-wCh-CSM-4}
        are satisfied.

     \item $w_i = \rcv{\procA}{\procB}{\val}$ and
            $\channel{\procA}{\procB} \in \domainOf(\chanBounds)$: \\
        By definition,
        $\encchanwords(w_i) =
        \msgFromTo{\procChanI{\procA}{\procB}{j}}{\procB}{\val}$
        for
        $\roleFmt{j} \modOp \card{(w_1 \ldots w_{i-1}) \wproj_{\rcv{\procA}{\procB}{\_}}}$.
        Thus, there are two transitions in $\CSMwCh{A}$:
        one send transition labelled with
        $\snd{\procChanI{\procA}{\procB}{j}}{\procB}{\val}$
        and one receive transition labelled with
        $\rcv{\procChanI{\procA}{\procB}{j}}{\procB}{\val}$.
        In $\CSMl{\decchanfsm(A_\procA)}$, we only have one transition.

        We first argue that this transition is possible:
        participant $\procB$ is in the same state in $\vec{s}$ and $\vec{q}$ by
        \ref{cond2:woCh-sim-wCh-CSM-1} so it is able to receive $\val$ and,
        from \ref{cond2:woCh-sim-wCh-CSM-4}, we know that $\val$ is enqueued at the head of $\channel{\procA}{\procB}$ in $\xi_{i}$.

        It remains to show that the other properties are satisfied, for which the arguments are analogous.
        We still repeat them for understandability.
        \ref{prop2:woCh-sim-wCh-CSM-1} holds by determinism of the state machines for~$\procA$.
        \ref{prop2:woCh-sim-wCh-CSM-2} is satisfied as we deal with a transition for which
        $\channel{\procA}{\procB} \in \domainOf(\chanBounds)$.
        \ref{prop2:woCh-sim-wCh-CSM-3} is satisfied as the only message enqueued into a channel, with the first transition, is immediately received, with the second one.
        For the remaining property, we show that the changes related to the first transition match the ones of the channel in $\CSMl{\decchanfsm(A_\procA)}$.
        It is obvious that we only need to consider the changes of the channel $\channel{\procA}{\procB}$.
        For this, we have
        $
            \xi_{i}(\channel{\procA}{\procB}) =
            \val \cat \xi_{i+1}(\channel{\procA}{\procB})
        $.
        With
        $\xi_i(\channel{\procA}{\procB}) =
        \contentOf(\vec{t}_i, w_1 \ldots w_{i-1}, \channel{\procA}{\procB})$,
        it suffices to show that
        $
        \contentOf(\vec{t}_i, w_1 \ldots w_{i-1}, \channel{\procA}{\procB})
        =
        \val \cat \contentOf(\vec{t}_{i+1}, w_1 \ldots w_{i}, \channel{\procA}{\procB})
        $.
        And this easily follows from the definition of $\contentOf(\hole, \hole, \hole)$ because the only state to change is the one of $\procChanI{\procA}{\procB}{j}$ and adding $w_i$ changes the indices in the way that the state of this channel participant is considered.
    \end{itemize}
    This concludes the proof for this direction.

    Second, let us assume there exists run $\run'$ for
    $\CSMl{\decchanfsm(A_\procA)}$ for $w$.
    We claim the following:
    for every $i$
    with
    $
     (\vec{q}_i, \xi_i)
        \xrightarrow{w_{i}}
     (\vec{q}_{i+1}, \xi_{i+1})
    $
    and
    $(\vec{s}_i, \vec{t}_i, \xi'_i)$
    such that
    \begin{enumerate}[label=(\alph*)]
     \item \label{cond3:woCh-sim-wCh-CSM-1}
            $\vec{q}_i = \vec{s}_i$,
     \item \label{cond3:woCh-sim-wCh-CSM-2}
$\xi'_i(\channel{\procA}{\procB}) = \xi_i(\channel{\procA}{\procB})$
            for every $\channel{\procA}{\procB} \notin \domainOf(\chanBounds)$,
     \item \label{cond3:woCh-sim-wCh-CSM-3}
$\xi'_i(\channel{\procA}{\procB}) = \emptystring$
            for every $\channel{\procA}{\procB} \in \domainOf(\chanBounds)$,
     \item \label{cond3:woCh-sim-wCh-CSM-4}
$\xi_i(\channel{\procA}{\procB}) =
        \contentOf(\vec{t}_i, w_1 \ldots w_{i-1}, \channel{\procA}{\procB})$
            for every $\channel{\procA}{\procB} \in \domainOf(\chanBounds)$,
    \end{enumerate}
    there is
    $(\vec{s}_{i+1}, \vec{t}_{i+1}, \xi'_{i+1})$
    such that
    \begin{enumerate}[label=(\alph*')]
     \item \label{prop3:wCh-sim-woCh-CSM-1}
           $\vec{q}_{i+1} = \vec{s}_{i+1}$,
     \item \label{prop3:wCh-sim-woCh-CSM-2}
$\xi'_{i+1}(\channel{\procA}{\procB}) = \xi_{i+1}(\channel{\procA}{\procB})$
            for every $\channel{\procA}{\procB} \notin \domainOf(\chanBounds)$,
     \item \label{prop3:wCh-sim-woCh-CSM-3}
$\xi'_{i+1}(\channel{\procA}{\procB}) = \emptystring$
            for every $\channel{\procA}{\procB} \in \domainOf(\chanBounds)$,
     \item \label{prop3:wCh-sim-woCh-CSM-4}
$\xi_{i+1}(\channel{\procA}{\procB}) =
        \contentOf(\vec{t}_{i+1}, w_1 \ldots w_{i}, \channel{\procA}{\procB})$
            for every $\channel{\procA}{\procB} \in \domainOf(\chanBounds)$, and
     \item \label{prop3:wCh-sim-woCh-CSM-5}
            $
            (\vec{s}_{i}, \vec{t}_{i}, \xi'_{i})
            \xrightarrow{\encchanwords(w_i)}
            (\vec{s}_{i+1}, \vec{t}_{i+1}, \xi'_{i+1})
            $.
    \end{enumerate}

    The conditions are basically the same.
    Thus, the initial states are still related, making the above claim sufficient to show our simulation argument.
    The proof is analogous to the one before and, thus, omitted.
    There is one minor difference: one exploits the fact that a CSM is amicable to show that the channel participants faithfully forward the respective messages.
\end{proof}

Equipped with these, we can show that both ways of the encoding are fine to take.

\begin{restatable}{lemma}{PSMimplementableIffSumOnePSMimplementable}
\label{lm:PSMimplementableIffSumOnePSMimplementable}
Let $\PSM$ be a PSM and $\CSMwCh{A}$ and $\CSM{B}$ be CSMs.
\begin{enumerate}[label=\textnormal{(\alph*)}]
 \item \label{lm:PSMimplementableIffSumOnePSMimplementableCoupledToDecoupled}
 If $\CSMwCh{A}$ is an amicable projection of $\encchanpsm(\PSM)$, \\
    then $\CSMl{\decchanfsm(A_\procA)}$ is a projection of~$\PSM$.
 \item \label{lm:PSMimplementableIffSumOnePSMimplementableDecoupledToCoupled}
 If $\CSM{B}$ is a projection of $\PSM$, then $\CSMlwCh{\encchanfsm(B_\procA)}$ is a projection of $\encchanpsm(\PSM)$.\end{enumerate}
\end{restatable}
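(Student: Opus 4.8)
The plan is to unfold the definition of ``projection'' into its two independent requirements---\emph{protocol fidelity} (the decoded, resp.\ encoded, CSM recognises the right language) and \emph{deadlock freedom}---and to establish each separately, handling parts~(a) and~(b) by the same two-layer argument with the roles of $\encchanwords/\encchanfsm$ and $\decchanwords/\decchanfsm$ swapped. First I would record the hypotheses as language equalities: in part~(a) we are given that $\CSMwCh{A}$ is amicable and deadlock-free with $\lang(\CSMwCh{A}) = \semantics(\encchanpsm(\PSM))$, and in part~(b) that $\CSM{B}$ is deadlock-free with $\lang(\CSM{B}) = \semantics(\PSM)$.

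For protocol fidelity I would first prove agreement on the ``core'' words and then lift it through the indistinguishability closure. For part~(a), for every word $w$ respecting $\chanBounds$ one chains the equivalences
\[
  w \in \semantics(\PSM)
  \;\iff\;
  \encchanwords(w) \in \semantics(\encchanpsm(\PSM))
  \;\iff\;
  \encchanwords(w) \in \lang(\CSMwCh{A})
  \;\iff\;
  w \in \lang(\CSMl{\decchanfsm(A_\procA)}),
\]
where the first step is \cref{def:channel-encoding-prop-words-and-psm-func-align}, the middle step is the hypothesis, and the last step is \cref{lm:channel-encoding-prop-words-and-csm-func-rev} together with $\decchanwords(\encchanwords(w)) = w$ (\cref{def:channel-encoding-prop-words-almost-bij-1}). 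Symmetrically, for part~(b) I would chain, for every channel-ordered $u$, that $u \in \semantics(\encchanpsm(\PSM))$ iff $\decchanwords(u) \in \semantics(\PSM) = \lang(\CSM{B})$ iff $u \in \lang(\CSMlwCh{\encchanfsm(B_\procA)})$, now using \cref{def:channel-encoding-prop-words-and-psm-func-rev}, the hypothesis, \cref{def:channel-encoding-prop-words-and-csm-func-align}, and $\encchanwords(\decchanwords(u)) = u$ for channel-ordered $u$ (\cref{def:channel-encoding-prop-words-almost-bij-2}).

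To turn core agreement into full language equality I would use that all four languages are closed under $\close(\hole)$ (\cref{lm:csm-closed-interswap} for the CSM languages, and by definition for the PSM semantics) and that each equals the closure of its core fragment. For the PSM semantics this is immediate, since $\semantics(\cdot) = \close(\lang(\cdot))$ while the core $\lang(\PSM)$ respects $\chanBounds$ and $\lang(\encchanpsm(\PSM))$ is channel-ordered (\cref{def:channel-encoding-prop-psm-func-channel-ordered}). The nontrivial direction is that the \emph{CSM} language is also generated by its core fragment under $\close(\hole)$: every accepted word must be $\close$-reducible to a $\chanBounds$-respecting (resp.\ channel-ordered) accepted word. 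For the encoded CSM of part~(b) this follows from \cref{def:channel-encoding-prop-fsm-func-channel-ordered}, since each local language is channel-ordered; for the decoded CSM of part~(a) one argues existential $\chanBounds$-boundedness from the forwarding discipline of the channel participants (\cref{lm:chan-part-wproj-forwarding}) and the configuration invariant relating a decoded run to an encoded one. I expect this lifting to be the main obstacle, precisely because the ``iff'' lemmas only apply to bounded/channel-ordered words whereas the CSM languages also contain their unbounded $\close$-images, so one must manufacture bounded representatives; the infinite-word case, where $\close$ is defined via $\preceq_\interswap^\omega$ and the channel-participant projections are only \emph{almost} forwarding, will need particular care.

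Finally, deadlock freedom is not a language property and must be transferred at the level of configurations. Reusing the step-by-step simulations built in the proofs of \cref{def:channel-encoding-prop-words-and-csm-func-align,lm:channel-encoding-prop-words-and-csm-func-rev} (the $\contentOf$ invariant, under which bounded channels are emptied into channel-participant states), I would show that a reachable non-final configuration of the target CSM with no outgoing transition corresponds to a reachable configuration of the source CSM which, after letting its channel participants drain the messages they hold, is again non-final and stuck---contradicting the assumed deadlock freedom. Here amicability is essential: \cref{prop:matching-states-and-messages} guarantees that each channel participant can always forward exactly the message it currently stores, so the forwarders neither create a spurious deadlock nor mask a genuine one. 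The same argument, run in the opposite direction and starting from deadlock freedom of $\CSM{B}$, discharges part~(b).
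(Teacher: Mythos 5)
Your proposal matches the paper's own proof in both structure and substance: protocol fidelity is established by exactly the chain of biconditionals you write (the encoding/decoding word properties together with \cref{def:channel-encoding-prop-words-and-csm-func-align} and \cref{lm:channel-encoding-prop-words-and-csm-func-rev}, with the $\interswap$-closure of CSM languages inserted at the ends of the chain in part~(b)), and deadlock freedom is transferred by the same configuration-level simulation-plus-contradiction argument, relying on determinism of the local machines and on amicability so that the channel participants faithfully drain their stored messages. The one place you diverge is your intermediate ``prove core agreement, then lift through $\close(\hole)$'' layer: the paper does not do this --- its biconditional chain is stated per word over the whole alphabet, so the obstacle you anticipate about manufacturing bounded or channel-ordered representatives does not arise in its formulation (although the applicability of the ``iff'' lemmas to arbitrary words is indeed the delicate point, which the paper discharges explicitly only in part~(b) via \cref{prop:respecting-channel-bounds-from-psm-to-semantics} and channel-orderedness).
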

\begin{proof}
First, we prove \ref{lm:PSMimplementableIffSumOnePSMimplementableCoupledToDecoupled}.
Thus, we assume that
$\CSMwCh{A}$ is deadlock-free and forwarding as well as
$\lang(\CSMwCh{A}) = \semantics(\encchanpsm(\PSM))$.
We prove deadlock freedom and protocol fidelity.

For deadlock freedom,
we assume, towards a contradiction,
$\CSMl{\decchanfsm(A_\procA)}$
has a deadlock.
Let $w$ be the trace of this run ending in a deadlock.
We used a simulation argument to show that there is a run with trace $\encchanwords(w)$ in
$\CSMwCh{A}$
in the second part of the proof for
\cref{lm:channel-encoding-prop-words-and-csm-func-rev}.
We can apply this result and reach also reach a configuration in $\CSM{A}$ from which no further transitions can be taken.
By determinism of each $A_\procA$, this configuration is unique.
Hence, this is a deadlock and, yielding a contradiction.

For protocol fidelity, the following equivalences prove the claim:
\smallskip

{ \small
\begin{tabular}{cl}
    & $w \in \semantics(\PSM)$ \\
$\overset{\ref{def:channel-encoding-prop-words-and-psm-func-align}}{\Leftrightarrow}$
    & $\encchanwords(w) \in \semantics(\encchanpsm(\PSM))$ \\
$\overset{Assumption}{\Leftrightarrow}$
    & $\encchanwords(w) \in \lang(\CSMwCh{A})$ \\
$\overset{\cref{lm:channel-encoding-prop-words-and-csm-func-rev}}{\Leftrightarrow}$
    & $\decchanwords(\encchanwords(w)) \in \lang(\CSMl{\decchanfsm(A_\procA)})$ \\
$\overset{\ref{def:channel-encoding-prop-words-almost-bij-1}} {\Leftrightarrow}$
    & $w \in \lang(\CSMl{\decchanfsm(A_\procA)})$ \\
\end{tabular}
}

\medskip

Second, we prove \ref{lm:PSMimplementableIffSumOnePSMimplementableDecoupledToCoupled}.
Thus, we assume that
$\CSM{B}$ is deadlock-free and
$\lang(\CSM{B}) = \semantics(\PSM)$.
We prove deadlock freedom and protocol fidelity.

For deadlock freedom,
we assume, towards a contradiction,
$\CSMlwCh{\encchanfsm(B_\procA)}$
has a deadlock.
Let $w$ be the trace of this run ending in a deadlock.
By construction of channel participants, there is $w' \in (\AlphSyncWCh)^*$ such that $w' \interswap w$.
Thus, there is $u \in \AlphAsync^*$ such that $w' = \encchanwords(u)$.
We used a simulation argument to show that there is a run with trace $u$ in
$\CSMwCh{B}$
in the second part of the proof for
\cref{def:channel-encoding-prop-words-and-csm-func-align}.
We can apply this result and also reach a configuration in $\CSM{A}$ from which no further transitions can be taken.
By determinism of each $B_\procA$, this configuration is unique.
Hence, this is a deadlock and, yielding a contradiction.

For protocol fidelity, the following equivalences prove the claim: \\

{ \small
\begin{tabular}{cl}
    & $w \in \semantics(\encchanpsm(\PSM))$ \\
$\overset{\text{def.\ \& constr.}}{\Leftrightarrow}$
    & $\exists w' \in (\AlphSyncWCh)^\infty \text{ such that }
    w' \interswap w \text{ and }
    w' \in \semantics(\encchanpsm(\PSM))$ \\
$\overset{\ref{def:channel-encoding-prop-words-and-psm-func-rev}}{\Leftrightarrow}$
    & $\exists w' \in (\AlphSyncWCh)^\infty \text{ such that }
    w' \interswap w \text{ and }
    \decchanwords(w') \in \semantics(\PSM)$ \\
$\overset{Assumption}{\Leftrightarrow}$
    & $\exists w' \in (\AlphSyncWCh)^\infty \text{ such that }
    w' \interswap w \text{ and }
    \decchanwords(w') \in \lang(\CSM{B})$ \\
$\overset{\cref{def:channel-encoding-prop-words-and-csm-func-align}}{\Leftrightarrow}$
    & $\exists w' \in (\AlphSyncWCh)^\infty \text{ such that }
    w' \interswap w \text{ and }$ \\
    & \hspace{17mm}
    $\encchanwords(\decchanwords(w')) \in \lang(\CSMlwCh{\encchanfsm(B_\procA)})$ \\
$\overset{\ref{def:channel-encoding-prop-words-almost-bij-2}}{\Leftrightarrow}$
    & $\exists w' \in (\AlphSyncWCh)^\infty \text{ such that }
    w' \interswap w \text{ and }
    w' \in \lang(\CSMlwCh{\encchanfsm(B_\procA)})$ \\
$\overset{(*)}{\Leftrightarrow}$
    & $w \in \lang(\CSMlwCh{\encchanfsm(B_\procA)})$
\end{tabular}
}
\medskip

where the first equivalence follows by definition of the semantics but we keep the semantics $\semantics(\hole)$ instead of the language $\lang(\hole)$, and $(*)$ follows from the fact that CSMs are closed under $\interswap$ \cite[Lm.\,22]{DBLP:conf/concur/MajumdarMSZ21}.
Furthermore,
$\decchanwords(w')$ respects~$\chanBounds$ because every word in $\semantics(\PSM)$ does
(\cref{prop:respecting-channel-bounds-from-psm-to-semantics})
and $w'$ is channel-ordered by
\ref{def:channel-encoding-prop-psm-func-channel-ordered}.~\proofEndSymbol
\end{proof}

It remains to show that we can obtain an amicable projection.
For this, we use the fact that we can obtain a local language preserving projection.

\begin{definition}[Local language preserving projection \cite{DBLP:conf/cav/LiSWZ23}]
Let $L \subseteq \AlphAsync^\infty$ be a language and $\CSM{A}$ be a projection of $L$.
We say that $\CSM{A}$ is \emph{local language preserving} if it holds that
$L \wproj_{\AlphAsync_\procA} = \lang(A_\procA)$
for every $\procA \in \Procs$.
For a CSM $\CSMwCh{B}$ to be local language preserving, we require
$L \wproj_{\AlphAsync_\procA} = \lang(B_\procA)$
for every $\procA \in \ProcsWCh$.
\end{definition}

\begin{lemma}[\cite{DBLP:conf/cav/LiSWZ23}] \label{thm:cav23-and-arxiv24-result}
\label{thm:cav-result}
Let $\PSM$ be a projectable \sinkfinal sender-driven \sumOnePSM and let $\CSM{A}$ be its subset projection from \textnormal{\cite[Def.\,5.4]{DBLP:conf/cav/LiSWZ23}}. 
The subset projection can be computed in PSPACE and $\CSM{A}$ is a local language preserving projection.
\end{lemma}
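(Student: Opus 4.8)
The plan is to reduce the statement to the corresponding result of \cite{DBLP:conf/cav/LiSWZ23}, by observing that a \sinkfinal sender-driven \sumOnePSM, presented over the synchronous alphabet $\AlphSync$, is exactly an automaton presentation of a sender-driven global type, and that the subset projection of \cite[Def.\,5.4]{DBLP:conf/cav/LiSWZ23} is defined on precisely such automata. Concretely, I would first use \cref{thm:sumOnePSMasGlobalType} to obtain a sender-driven global type $\semglobalsync(\PSM)$ with the same core language, hence $\semantics(\semglobalsync(\PSM)) = \close(\lang(\PSM)) = \semantics(\PSM)$; since \cref{thm:sumOnePSMasGlobalType} preserves the choice restriction, the global type is again sender-driven. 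Equivalently, one may view $\PSM$ itself as the global-type automaton on which Def.\,5.4 operates, so that $\subsetcons{\PSM}{\procA}$ is well defined and coincides with $\subsetcons{\semglobalsync(\PSM)}{\procA}$.

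Second, I would reconcile the two notions of projectability. Our notion, that there is a deadlock-free CSM $\CSM{A}$ with $\lang(\CSM{A}) = \semantics(\PSM)$, refers only to the global language $\semantics(\PSM)$ and to the standard CSM semantics; since $\semantics(\PSM) = \semantics(\semglobalsync(\PSM))$, it coincides with the projectability notion used for the global type in \cite{DBLP:conf/cav/LiSWZ23}. This is a definition-chasing step: I would check that their correctness criterion (deadlock freedom together with protocol fidelity) matches ours, taking care of the asynchronous send/receive alphabet versus their paired synchronous labels, and of the treatment of both finite and infinite maximal traces.

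Third, with $\PSM$ (equivalently $\semglobalsync(\PSM)$) projectable, I would invoke the main correctness theorem of \cite{DBLP:conf/cav/LiSWZ23}: for a projectable sender-driven global type the subset projection is defined, its side conditions all hold, and the resulting CSM $\CSM{A} = (\subsetcons{\PSM}{\procA})_{\procA \in \Procs}$ is deadlock-free with $\lang(\CSM{A}) = \semantics(\PSM)$, i.e.\ it is a projection in our sense. The PSPACE bound is their complexity result: although each $\subsetcons{\PSM}{\procA}$ can be exponential in the size of $\PSM$, the subset construction can be carried out on the fly and the projectability side conditions checked in polynomial space, so both deciding projectability and synthesising $\CSM{A}$ stay in PSPACE.

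Finally, the one genuinely new verification is local language preservation, i.e.\ $\lang(\subsetcons{\PSM}{\procA}) = \semantics(\PSM) \wproj_{\AlphAsync_\procA}$ for every $\procA$. By its definition the subset projection tracks, from $\procA$'s observable perspective, the set of global states consistent with each $\procA$-local prefix; its accepted language is therefore the determinisation (with the send/receive split induced by the asynchronous alphabet) of the set of $\procA$-local traces of $\semantics(\PSM)$, and determinisation preserves the accepted language. I would confirm this equality for both finite and infinite maximal traces. The main obstacle is precisely this definitional reconciliation: most of the mathematical content lives in \cite{DBLP:conf/cav/LiSWZ23}, and the delicate work is matching their global-type, paired-label formulation to our PSM, asynchronous-alphabet formulation so that deadlock freedom, protocol fidelity, and---above all---local language preservation transfer verbatim.
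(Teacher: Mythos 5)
Your proposal is correct and takes essentially the same route as the paper: the paper's entire proof is the observation that Li et al.'s subset projection, although stated for global types, actually operates on the global state machine — which is exactly a \sinkfinal sender-driven \sumOnePSM — followed by a citation of their Cor.\,8.2, which already packages deadlock freedom, protocol fidelity, local language preservation, and the PSPACE bound. Your detour through \cref{thm:sumOnePSMasGlobalType} is unnecessary (and the claim that $\subsetcons{\PSM}{\procA}$ literally coincides with the subset projection of the transformed global type is doubtful, since that transformation can restructure states), but your alternative framing of feeding $\PSM$ itself to Def.\,5.4 is precisely what the paper does, and your remaining steps just re-derive by hand what the cited corollary provides.
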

\begin{proof}
Their subset projection is defined for global types but the algorithm works on a global state machine which is a \sinkfinal \sumOnePSM.
The rest follows from~\cite[Cor.\,8.2]{DBLP:conf/cav/LiSWZ23}.
\proofEndSymbol
\end{proof}

\begin{lemma}
\label{lm:local-language-preserving-implies-forwarding}
Let $\PSM$ be a PSM and let $\CSMwCh{A}$ be a CSM.
If $\CSMwCh{A}$ is a local language preserving projection for $\encchanpsm(\PSM)$, then $\CSMwCh{A}$ is amicable.
\end{lemma}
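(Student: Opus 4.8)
The plan is to check, for every $\procA \in \Procs$ and every channel participant $\procChanI{\procA}{\procB}{i} \in \ProcsCh$, the three clauses in the definition of ``$A_{\procChanI{\procA}{\procB}{i}}$ is amicable with $A_\procA$'', using only the local language preserving hypothesis together with the structural facts about the encoding already established. Write $L \is \semantics(\encchanpsm(\PSM))$. Local language preservation means $\lang(A_\procC) = L \wproj_{\AlphAsyncWCh_\procC}$ for every $\procC \in \ProcsWCh$, so each clause reduces to a property of the projections of the words of $L$ onto the relevant component alphabet, and these can be read off directly from the shape of $\encchanpsm(\PSM)$.

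First I would dispatch the two easy clauses. For the channel-ordered clause, $\lang(A_\procA) = L \wproj_{\AlphAsync_\procA}$, and by \ref{def:channel-encoding-prop-psm-func-channel-ordered} every word of $L$ is channel-ordered; projecting such a word onto $\AlphAsync_\procA$ keeps exactly $\procA$'s sends to, and receives from, its channel participants in their original order, which is precisely the participant-level channel-ordered condition, so $\lang(A_\procA)$ is channel-ordered. For the forwarding clause, $\lang(A_{\procChanI{\procA}{\procB}{i}}) = L \wproj_{\AlphAsyncWCh_{\procChanI{\procA}{\procB}{i}}}$, and \cref{lm:chan-part-wproj-forwarding} states that every such projection is forwarding --- or, for an infinite word whose projection is finite and ends after a receive that the channel bound leaves pending, almost forwarding --- which is exactly what ``$A_{\procChanI{\procA}{\procB}{i}}$ is forwarding'' asks for.

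The run-matching clause is where the real work lies. Given a run of $A_\procA$ with trace $u$, I would lift it --- after extending to a maximal run, which deadlock-freedom of the projection permits --- to a witness $v \in L$ whose $\procA$-projection agrees with $u$, using $\lang(A_\procA) = L\wproj_{\AlphAsync_\procA}$. Because $v$ is channel-ordered, the subsequence of $\procA$'s sends to $\procB$ that are routed, round-robin, through $\procChanI{\procA}{\procB}{i}$ is exactly the one handled by that channel participant, and the forwarding sends back to $\procB$ carry the matching values; hence $v\wproj_{\AlphAsyncWCh_{\procChanI{\procA}{\procB}{i}}} \in \lang(A_{\procChanI{\procA}{\procB}{i}})$ is precisely the interleaving $w_i \cat w'_0 \cat w_{i+1\cdot B} \cat w'_1 \cat \ldots$ demanded by the definition.

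The hard part will be the ``only continuation is $w'_j$'' requirement: once $\procChanI{\procA}{\procB}{i}$ has received $\val$, it must have no alternative to emitting $\snd{\procChanI{\procA}{\procB}{i}}{\procB}{\val}$. I would derive this by combining forwarding with the standing assumption that each component FSM is deterministic --- forwarding forces any continuation in $\lang(A_{\procChanI{\procA}{\procB}{i}})$ to resume with a send of the same value, and determinism then makes that continuation unique. At the level of $L$ this is underpinned by the fact that $\encchanpsm$ emits every channel communication as a paired event $\msgFromTo{\procA}{\procChanI{\procA}{\procB}{j}}{\val}$, so no word of $L$ can interpose another action between a receive of $\procChanI{\procA}{\procB}{i}$ and its matching forwarding send. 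Collecting the three clauses over all $\procA$ and all $\procChanI{\procA}{\procB}{i}$ gives amicability of $\CSMwCh{A}$.
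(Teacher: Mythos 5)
Your proof is correct and follows essentially the same route as the paper's: reduce each clause of amicability to a property of $\semantics(\encchanpsm(\PSM))$ projected onto the relevant component via local language preservation, then invoke channel-ordering of the encoded semantics for $A_\procA$ and \cref{lm:chan-part-wproj-forwarding} for the channel participants. The only difference is that you spell out the run-matching/``only continuation'' clause (via lifting to a witness in $L$, channel-ordering, and determinism of the component FSMs), which the paper's proof leaves implicit in its final ``thus, it follows'' step.
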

\begin{proof}
For every channel participant $\procChan{\procA}{\procB}$, we have to show that $A_{\procChan{\procA}{\procB}}$ is amicable with $A_\procA$.
Let $\procChanI{\procA}{\procB}{i}$ be a channel participant.
By construction of $\encchanpsm$, we know that
$\procChanI{\procA}{\procB}{i}$ solely receives from $\procA$ in $\encchanpsm(\PSM)$.
From
\cref{lm:chan-part-wproj-forwarding},
we know that
$\semantics(\encchanpsm(\PSM)) \wproj_{\AlphAsyncWCh_{\procChan{\procA}{\procB}}}$
is forwarding or almost forwarding (only possible if the projected word is infinite).
Since $\CSMwCh{A}$ is a local language preserving projection of $\encchanpsm(\PSM)$,
we know that all words in
$\semantics(\encchanpsm(\PSM)) \wproj_{\AlphAsyncWCh_{\procChan{\procA}{\procB}}}$
have a trace in $A_{\procChan{\procA}{\procB}}$.
Last, by construction, $\semantics(\encchanpsm(\PSM)) \wproj_{\AlphAsync_{\procA}}$ is channel-ordered.
Thus, again, by local language preservation, $A_\procA$ is channel-ordered.
Thus, it follows that $A_{\procChan{\procA}{\procB}}$ is amicable with $A_\procA$.
\proofEndSymbol
\end{proof}

Our main theorem combines the previous observations.

\implDecPSMsChoice*
\begin{proof}
  Let $(\PSM, \chanBounds)$ be a Tame PSM, \ie be sender-driven, \sinkfinal and respect channel bounds $\chanBounds$.
  We apply the channel-participant encoding to obtain
  the \sinkfinal \sumOnePSM $\encchanpsm(\PSM)$.

  We claim that $\PSM$ is projectable if and only if $\encchanpsm(\PSM)$ is projectable.

  For the direction from left to right,
  let $\CSM{B}$ be the witness for projectability of $\PSM$.
  With \cref{lm:PSMimplementableIffSumOnePSMimplementable}\ref{lm:PSMimplementableIffSumOnePSMimplementableDecoupledToCoupled},
  $\CSMlwCh{\encchanfsm(B_\procA)}$ is a projection of $\encchanpsm(\PSM)$, proving its projectability.

  For the direction from right to left,
  we apply
  \cref{thm:cav23-and-arxiv24-result} 
  to obtain a local language preserving projection $\CSM{A}$ for $\encchanpsm(\PSM)$.
  With
  \cref{lm:local-language-preserving-implies-forwarding},
  it follows that $\CSM{A}$ is amicable.
  By
  \cref{lm:PSMimplementableIffSumOnePSMimplementable}\ref{lm:PSMimplementableIffSumOnePSMimplementableCoupledToDecoupled},
  $\CSMlwCh{\decchanfsm(A_\procA)}$ is a projection of~$\PSM$, proving its projectability.

  This proves that $\PSM$ is projectable if and only if $\encchanpsm(\PSM)$ is projectable.

  The encodings (and decodings) can be constructed in polynomial time.
  Thus, we can check projectability and also obtain an projection of $\PSM$ in \mbox{PSPACE} (\cref{thm:cav-result}).

\proofEndSymbol
\end{proof}
 
\subsection{Additional Material for \cref{sec:mixed-choice-yields-undecidable-projection}}
\label{app:checking-implementability-mixed-choice-undec}

\checkingImplSinkfinalMixedChoicePSMUndec*

\begin{proof}
We consider the problem of checking if a word is accepted by a Turing Machine.
This is known to be undecidable and we reduce it to checking projectability of a mixed-choice \sinkfinal \sumOnePSM.
We basically construct a PSM with two branches at the top. 
For each, we construct a language and they coincide -- which will be necessary for projectability -- if and only if the Turing Machine does not halt in a final configuration.
We assume familiarity with the concept of Turing Machines and refer to \cite{DBLP:books/aw/HopcroftU79} for further details.

Let $\TM$ be a Turing Machine with tape alphabet $\TapeAlph$ and states $\TMStates$ such that $\TapeAlph \inters \TMStates = \emptyset$.
We have that $q_0 \in \TMStates$ is the initial state and $q_f \in \TMStates$ is, without loss of generality, the only final state.
A configuration of $\TM$ is given by a word $a_1, \ldots, a_i, q, b_1, \ldots, b_j \in \TapeAlph^{\!*} \TMStates \TapeAlph^{\!*}$. The initial configuration for input word $w$ is
$q_0 w$ while any configuration from $\TapeAlph^{\!*} q_f \TapeAlph^{\!*}$ is final.
A computation is a sequence of configurations
$(u_1, \ldots, u_m)$ such that
$u_{i+1}$ is the next configuration of $\TM$,
also denoted by $u_i \TMNext u_{i+1}$.
A computation $(u_1, \ldots, u_m)$ accepts $w$ if $u_1 = q_0 w$ and $u_m \in \TapeAlph^{\!*} q_f \TapeAlph^{\!*}$.

For our encoding, we use five participants $\procA_1, \ldots, \procA_5$ who send configurations to each other.
Thus, messages are from the set
$\TapeAlph \dunion
\set{
    \markNewRound,
    \markBeginConf,
    \markEndConf,
    \markEndLoop
}
\dunion Q$
where $\markNewRound$ is sent by $\procA_3$ to indicate the start of a new pair of configurations
and $\markBeginConf$ and $\markEndConf$ delimit a configuration.

We introduce the notation 
$\msgBackForth{\procA}{\procB}{\val}$
to abbreviate 
$\msgFromTo{\procA}{\procB}{\val} \cat
 \msgFromTo{\procB}{\procA}{\val}$.
We only specify interactions using
$\msgBackForth{\_}{\_}{\_}$.
Using these, we will also define regular expressions and complements thereof and consider
$\msgBackForth{\procA}{\procB}{\val}$
as their single letters.
By construction, every PSM will be \sumBounded{1} and, in fact, every message is immediately acknowledged.

For a word $w = w_1 \ldots w_i$, we write
$\msgBackForth{\procA}{\procB}{w}$
for
$\msgBackForth{\procA}{\procB}{w_1}
 \cdots
 \msgBackForth{\procA}{\procB}{w_i}$.

For words $C_1, D_1, C_2, D_2 \ldots, C_m, D_m \in (\TapeAlph \dunion \TMStates)^*$, we define the word
{ \scriptsize
\begin{align*}
   w(C_1, D_1, C_2, D_2, \ldots, C_m, D_m)
   \is \quad
   &
    \msgBackForth{\procA_3}{\procA_2}{\markNewRound} \cat
    \msgBackForth{\procA_2}{\procA_1}{\markBeginConf} \cat
    \msgBackForth{\procA_2}{\procA_1}{C_1} \cat
    \msgBackForth{\procA_2}{\procA_1}{\markEndConf} \cat \invisibleEndLine
    \\ &
    \msgBackForth{\procA_3}{\procA_4}{\markNewRound} \cat
    \msgBackForth{\procA_4}{\procA_5}{\markBeginConf} \cat
    \msgBackForth{\procA_4}{\procA_5}{D_1} \cat
    \msgBackForth{\procA_4}{\procA_5}{\markEndConf} \cat \invisibleEndLine
    \\ &
    \msgBackForth{\procA_3}{\procA_2}{\markNewRound} \cat
    \msgBackForth{\procA_2}{\procA_1}{\markBeginConf} \cat
    \msgBackForth{\procA_2}{\procA_1}{C_2} \cat
    \msgBackForth{\procA_2}{\procA_1}{\markEndConf} \cat \invisibleEndLine
    \\ &
    \msgBackForth{\procA_3}{\procA_4}{\markNewRound} \cat
    \msgBackForth{\procA_4}{\procA_5}{\markBeginConf} \cat
    \msgBackForth{\procA_4}{\procA_5}{D_2} \cat
    \msgBackForth{\procA_4}{\procA_5}{\markEndConf} \cat
    \\ & \cdots \\ &
    \msgBackForth{\procA_3}{\procA_2}{\markNewRound} \cat
    \msgBackForth{\procA_2}{\procA_1}{\markBeginConf} \cat
    \msgBackForth{\procA_2}{\procA_1}{C_m} \cat
    \msgBackForth{\procA_2}{\procA_1}{\markEndConf} \cat \invisibleEndLine
    \\ &
    \msgBackForth{\procA_3}{\procA_4}{\markNewRound} \cat
    \msgBackForth{\procA_4}{\procA_5}{\markBeginConf} \cat
    \msgBackForth{\procA_4}{\procA_5}{D_m} \cat
    \msgBackForth{\procA_4}{\procA_5}{\markEndConf} \phantom{\cat}
    \enspace .
\end{align*}
}

Intuitively, $\procA_2$ sends the sequence $C_i$ to $\procA_1$ while
$\procA_4$ sends the sequence $D_i$ to $\procA_5$.
Each sequence is started by a $\markBeginConf$-message and finished by a $\markEndConf$-message between the respective pair.
The participant $\procA_3$ starts each round by sending $\markNewRound$.

Note that the communication of $C_i$ between $\procA_1$ and $\procA_2$ can happen concurrently to both $D_{i-1}$ and $D_i$ between $\procA_4$ and $\procA_5$.
(This will later allow us to both detect if $C_i$ and $D_i$ do not coincide or $C_i$ is no successor configuration of $D_i$.)

We define two languages $L_l$ and $L_r$, which we later use for two branches of the PSM encoding. 
\begin{align*}
 L_l & \is
    \set{
        \interswaplang(w(C_1, D_1, \ldots, C_m, D_m)) \mid
        m \geq 1, C_1, D_1, \ldots, C_m, D_m \in (\TapeAlph \dunion \TMStates)^*
    }
 \\
 L_r & \is
    L_l \setminus
    \set{
        \interswaplang(w(u_1, u_1, \ldots, u_m, u_m)) \mid
        (u_1, \ldots, u_m) \text{ is an accepting computation}
    }
\end{align*}

To make the resulting PSM \sinkfinal, we define a sequence of messages that indicates the end of an execution:
\[
 w_{\mathit{end}} \is
    \msgBackForth{\procA_3}{\procA_2}{\markEndLoop} \cat
    \msgBackForth{\procA_2}{\procA_1}{\markEndLoop} \cat
    \msgBackForth{\procA_3}{\procA_4}{\markEndLoop} \cat
    \msgBackForth{\procA_4}{\procA_5}{\markEndLoop}
\]
and append it to obtain
$L'_l  \is \set{w \cat w_{\mathit{end}} \mid w \in L_l}$
and
    $L'_r  \is \set{w \cat w_{\mathit{end}} \mid w \in L_r}$.

We will show that both $L_l$ and $L_r$, and thus, $L'_l$ and $L'_r$, can be specified as \sumOnePSMs.
Provided with PSMs for $L_l$ and $L_r$, it is straightforward to construct a $\PSM_{\TM}$ such that
\[
\semantics(\PSM_{\TM}) =
    \set{\msgFromTo{\procA_2}{\procA_3}{\lbl{l}} \cat w \mid w \in L'_l}
    \dunion
    \set{\msgFromTo{\procA_2}{\procA_3}{\lbl{r}} \cat w \mid w \in L'_r}
    \enspace .
\]
By definition of $L'_l$ and $L'_r$, every word ends with $w_{\mathit{end}}$ so $\PSM_{\TM}$ is \sinkfinal.
(In fact, if there is an implementation, the FSM for each participant will also be \sinkfinal.)
We will show that $\PSM_{\TM}$ is projectable if and only if $\TM$ does not accept the input $w$.

For this, it suffices to establish the following four facts:
\begin{itemize}
 \item Claim 1: $L_l$ and $L_r$ can be specified as \sumOnePSMs.
 \item Claim 2: $L_l$ is projectable.
 \item Claim 3: If $\TM$ has no accepting computation for $w$, then $\lang(\PSM_{\TM})$ is projectable.
 \item Claim 4: If $\TM$ has an accepting computation for $w$, then $\lang(\PSM_{\TM})$ is not projectable.
\end{itemize}

\claimStm{1}
Both $L_l$ and $L_r$ can be specified as \sumOnePSMs.

\claimProofStart{1}
It is easy to construct a PSM from a regular expression.
Thus, for conciseness, we give regular expressions for the languages we consider or for their complements.
For this, we introduce some more notation for concise specifications when using sets of messages:
\[
    \msgBackForth{\procA_2}{\procA_1}{\set{x_1, \ldots, x_n}}
    \is
    (
    \msgBackForth{\procA_2}{\procA_1}{x_1}
    +
    \ldots
    +
    \msgBackForth{\procA_2}{\procA_1}{x_n}
    )
\enspace .
\]

First, let us consider $L_l$.
Inspired by the definition of $w(C_1, D_1, \ldots, C_m, D_m)$, we construct this regular expression $\regex[l]$ for $L_l$:
\begin{align*}
    &
    (
    \msgBackForth{\procA_3}{\procA_2}{\markNewRound} \cat
    \msgBackForth{\procA_2}{\procA_1}{\markBeginConf} \cat
    (\msgBackForth{\procA_2}{\procA_1}{(\TapeAlph \dunion Q)})^* \cat
    \msgBackForth{\procA_2}{\procA_1}{\markEndConf} \cat \invisibleEndLine
    \\
    &
    \phantom{(}
    \msgBackForth{\procA_3}{\procA_4}{\markNewRound} \cat
    \msgBackForth{\procA_4}{\procA_5}{\markBeginConf} \cat
    (\msgBackForth{\procA_4}{\procA_5}{(\TapeAlph \dunion Q)})^* \cat
    \msgBackForth{\procA_4}{\procA_5}{\markEndConf}
    )^*
    \enspace.
\end{align*}

Second, let us consider $L_r$.
Recall that $L_r$ should admit the encoding of all sequences of configurations except for accepting ones.
We provide an exhaustive list of how such a sequence can fail to be an accepting computation.
We provide a language $L_{r,i}$ for each and $L_r$ is their union.\footnote{We renumbered the languages because some of Lohrey's construction does not apply to this undecidability proof.}

\begin{itemize}
 \item $L_{r,1}$ contains all sequences of configurations for which some $C_k$ or $D_k$ is actually not a configuration, \ie not from $\TapeAlph^{\!*} \TMStates \TapeAlph^{\!*}$.
 \item $L_{r,2}$ contains all sequences for which $C_1$ is not the correct initial configuration, \ie it does not have the shape $q_0, a_1, a_2, \ldots a_n$ where $w = a_1 \cdots a_n$.
 \item $L_{r,3}$ contains all sequences for which $q_f$ does not occur in $C_m$.
 \item $L_{r,4}$ contains all sequences where $C_k$ and $D_k$ differ in some position. \item $L_{r,5}$ contains all sequences for which $C_{k+1}$ is no successor configuration for~$D_k$.
\end{itemize}

For each $L_{r,i}$, we show that it can be specified as PSM (or regular expression).
It is straightforward to obtain a PSM for $L_r$ by adding one initial state and adding a transition from this one to the initial state for the PSM of $L_{r,i}$ for each~$i$.

\textit{Language $L_{r,1}$}:

We construct a regular expression for
$w(C_1, D_1, \ldots, C_m, D_m)$ for any $m$ such that there is some $C_i$ or $D_i$ with either no message from $Q$ or at least two messages from $Q$:
{ \tiny
\begin{align*}
\regex[1] \is \quad &
    (\msgBackForth{\procA_3}{\procA_2}{\markNewRound} \cat
    \msgBackForth{\procA_2}{\procA_1}{\markBeginConf} \cat
    (\msgBackForth{\procA_2}{\procA_1}{\TapeAlph})^* \cat
    \msgBackForth{\procA_2}{\procA_1}{\TMStates} \cat
    (\msgBackForth{\procA_2}{\procA_1}{\TapeAlph})^* \cat
    \msgBackForth{\procA_2}{\procA_1}{\markEndConf} \cat \invisibleEndLine
    \\
    &
    \phantom{(}
    \msgBackForth{\procA_3}{\procA_4}{\markNewRound} \cat
    \msgBackForth{\procA_4}{\procA_5}{\markBeginConf} \cat
    (\msgBackForth{\procA_4}{\procA_5}{\TapeAlph})^* \cat
    \msgBackForth{\procA_4}{\procA_5}{\TMStates} \cat
    (\msgBackForth{\procA_4}{\procA_5}{\TapeAlph})^* \cat
    \msgBackForth{\procA_4}{\procA_5}{\markEndConf} )^* \cat
    \\
    &
    (\regex[l0] + \regex[l2] + \regex[r0] + \regex[r2]) \cat \regex[l]^*
    \\
\text{where} \\
\regex[l0] \is \quad &
    \msgBackForth{\procA_3}{\procA_2}{\markNewRound} \cat
    \msgBackForth{\procA_2}{\procA_1}{\markBeginConf} \cat
    (\msgBackForth{\procA_2}{\procA_1}{\TapeAlph})^* \cat
    \msgBackForth{\procA_2}{\procA_1}{\markEndConf} \cat \invisibleEndLine
    \\
    &
    \msgBackForth{\procA_3}{\procA_4}{\markNewRound} \cat
    \msgBackForth{\procA_4}{\procA_5}{\markBeginConf} \cat
    (\msgBackForth{\procA_4}{\procA_5}{\TapeAlph \dunion \TMStates})^* \cat
    \msgBackForth{\procA_4}{\procA_5}{\markEndConf} \cat
    \\
\regex[l2] \is \quad &
    \msgBackForth{\procA_3}{\procA_2}{\markNewRound} \cat
    \msgBackForth{\procA_2}{\procA_1}{\markBeginConf} \cat
    \\ & \quad
    (\msgBackForth{\procA_2}{\procA_1}{\TapeAlph})^* \cat
    \msgBackForth{\procA_2}{\procA_1}{\TMStates} \cat
    (\msgBackForth{\procA_2}{\procA_1}{\TapeAlph})^* \cat
    \msgBackForth{\procA_2}{\procA_1}{\TMStates} \cat
    (\msgBackForth{\procA_2}{\procA_1}{\TapeAlph \dunion \TMStates})^* \cat
    \msgBackForth{\procA_2}{\procA_1}{\markEndConf} \cat \invisibleEndLine
    \\
    &
    \msgBackForth{\procA_3}{\procA_4}{\markNewRound} \cat
    \msgBackForth{\procA_4}{\procA_5}{\markBeginConf} \cat
    (\msgBackForth{\procA_4}{\procA_5}{\TapeAlph \dunion \TMStates})^* \cat
    \msgBackForth{\procA_4}{\procA_5}{\markEndConf} \cat
    \\
\regex[r0] \is \quad &
    \msgBackForth{\procA_3}{\procA_2}{\markNewRound} \cat
    \msgBackForth{\procA_2}{\procA_1}{\markBeginConf} \cat
    (\msgBackForth{\procA_2}{\procA_1}{\TapeAlph \dunion \TMStates})^* \cat
    \msgBackForth{\procA_2}{\procA_1}{\markEndConf} \cat \invisibleEndLine
    \\
    &
    \msgBackForth{\procA_3}{\procA_4}{\markNewRound} \cat
    \msgBackForth{\procA_4}{\procA_5}{\markBeginConf} \cat
    (\msgBackForth{\procA_4}{\procA_5}{\TapeAlph})^* \cat
    \msgBackForth{\procA_4}{\procA_5}{\markEndConf} \cat
    \\
\regex[r2] \is \quad &
    \msgBackForth{\procA_3}{\procA_2}{\markNewRound} \cat
    \msgBackForth{\procA_2}{\procA_1}{\markBeginConf} \cat
    (\msgBackForth{\procA_2}{\procA_1}{\TapeAlph \dunion \TMStates})^* \cat
    \msgBackForth{\procA_2}{\procA_1}{\markEndConf} \cat \invisibleEndLine
    \\
    &
    \msgBackForth{\procA_3}{\procA_4}{\markNewRound} \cat
    \msgBackForth{\procA_4}{\procA_5}{\markBeginConf} \cat
    \\ & \quad
    (\msgBackForth{\procA_4}{\procA_5}{\TapeAlph})^* \cat
    \msgBackForth{\procA_4}{\procA_5}{\TMStates} \cat
    (\msgBackForth{\procA_4}{\procA_5}{\TapeAlph})^* \cat
    \msgBackForth{\procA_4}{\procA_5}{\TMStates} \cat
    (\msgBackForth{\procA_4}{\procA_5}{\TapeAlph \dunion \TMStates})^* \cat
    \msgBackForth{\procA_4}{\procA_5}{\markEndConf} \cat
\end{align*}
}

Let us explain how $\regex[1]$ works.
In the beginning, there are only $C_i$ and $D_i$ with one message from $\TMStates$.
At some point, one of the regular expressions
$\regex[l0], \regex[l2], \regex[r0]$ or $\regex[r2]$
has to match.
These specify some way how the number of messages for $\TMStates$ can be wrong:
$\regex[l0]$ has no messages from $\TMStates$ between $\procA_2$ and $\procA_1$ while
$\regex[r0]$ has no messages from $\TMStates$ between $\procA_4$ and $\procA_5$;
$\regex[l2]$ has more than one message from $\TMStates$ between $\procA_2$ and $\procA_1$ while
$\regex[r2]$ has more than one message from $\TMStates$ between $\procA_4$ and $\procA_5$.
For each, the other pair can communicate any number of messages from $\TMStates$ to account for sequences where both $C_i$ and $D_i$ do not match.
Subsequently, we use $\regex[l]$ to simply allow a sequence of any configuration.

\textit{Language $L_{r,2}$}:

Let $w = a_1 \ldots a_n$ be the input word for $\TM$.
Then, we can specify $L_{r,2}$ as follows:
\[
 \set{
    \interswaplang(w(C_1, D_1, C_2, D_2 \ldots, C_m, D_m))
        \mid
        m \geq 1 \land C_1 \neq q_0, a_1, \ldots, a_n
    }
\]

It is easy to see that we can change $\regex[l]$ to obtain a regular expression for
$\set{w(C_2, D_2, \ldots, C_m, D_m) \mid m \geq 1}$.
Thus, it suffices to show that
$w(C_1, D_1)$
with $C_1 \neq q_0, a_1, \ldots, a_n$
can be specified as PSM.
Again, we give a regular expression for the complement.
We observe that the communication between $\procA_2$ and $\procA_1$ about the configuration can easily be specified as regular expression:
\[
 \msgBackForth{\procA_2}{\procA_1}{q_0} \cat
 \msgBackForth{\procA_2}{\procA_1}{a_1}
 \ldots
 \msgBackForth{\procA_2}{\procA_1}{a_n}
\enspace .
\]
It is straightforward to construct a PSM for the complement of this regular expression.
(Before we did not use the complement for $L_{r,1}$ because we could not guarantee that the same number of configurations would be communicated between both pairs.)
When combined, this gives us the following regular expression (with the complement operator as syntactic sugar) for $L_{r,2}$:
\begin{align*}
    &
    \msgBackForth{\procA_3}{\procA_2}{\markNewRound} \cat
    \msgBackForth{\procA_2}{\procA_1}{\markBeginConf} \cat \invisibleEndLine
    \\
    &
    \overline{
    \msgBackForth{\procA_2}{\procA_1}{q_0} \cat
    \msgBackForth{\procA_2}{\procA_1}{a_1}
    \cat \ldots \cat
    \msgBackForth{\procA_2}{\procA_1}{a_n}
} \cat
    \msgBackForth{\procA_2}{\procA_1}{\markEndConf} \cat \invisibleEndLine
    \\
    &
    \msgBackForth{\procA_3}{\procA_4}{\markNewRound} \cat
    \msgBackForth{\procA_4}{\procA_5}{\markBeginConf} \cat
    (\msgBackForth{\procA_4}{\procA_5}{(\TapeAlph \dunion Q)})^* \cat
    \msgBackForth{\procA_4}{\procA_5}{\markEndConf} \cat
    \regex[l]
    \enspace.
\end{align*}

\textit{Language $L_{r,3}$}:

The following regular expression specifies all sequences for which the last $C_m$ does not contain the final state $q_f \in \TMStates$:
\begin{align*}
    &
    \regex[l] \cat \invisibleEndLine
    \\ &
    \msgBackForth{\procA_3}{\procA_2}{\markNewRound} \cat
    \msgBackForth{\procA_2}{\procA_1}{\markBeginConf} \cat
    (\msgBackForth{\procA_2}{\procA_1}{(\TapeAlph \dunion Q \setminus \set{q_f})})^* \cat
    \msgBackForth{\procA_2}{\procA_1}{\markEndConf} \cat \invisibleEndLine
    \\
    &
    \msgBackForth{\procA_3}{\procA_4}{\markNewRound} \cat
    \msgBackForth{\procA_4}{\procA_5}{\markBeginConf} \cat
    (\msgBackForth{\procA_4}{\procA_5}{(\TapeAlph \dunion Q)})^* \cat
    \msgBackForth{\procA_4}{\procA_5}{\markEndConf}
    \enspace.
\end{align*}

\textit{Language $L_{r,4}$}:

Intuitively, we can merge the loops for both $C_i$ and $D_i$ to check that some message at the same position is different.
This is possible because all languages are closed under $\interswap$ by definition.
We introduce this notation
\[
    \{
    \msgBackForth{\procA_2}{\procA_1}{x} \cat
    \msgBackForth{\procA_4}{\procA_5}{x}
    \mid x \in \set{y_1, \ldots, y_n}
    \}
\]
which is an abbreviation for
\[
    (
    \msgBackForth{\procA_2}{\procA_1}{y_1} \cat
    \msgBackForth{\procA_4}{\procA_5}{y_1}
    )
    +
    \ldots
    +
    (
    \msgBackForth{\procA_2}{\procA_1}{y_n} \cat
    \msgBackForth{\procA_4}{\procA_5}{y_n}
    )
    \enspace .
\]

With this, the following is a regular expression for $L_{r,4}$:
{ \scriptsize
\begin{align*}
\regex[4] \is \quad &
    \regex[l] \cat
    \\ &
    \msgBackForth{\procA_3}{\procA_2}{\markNewRound} \cat
    \msgBackForth{\procA_3}{\procA_4}{\markNewRound} \cat
    \msgBackForth{\procA_2}{\procA_1}{\markBeginConf} \cat
    \msgBackForth{\procA_4}{\procA_5}{\markBeginConf} \cat \invisibleEndLine
    \\ &
    (\{
    \msgBackForth{\procA_2}{\procA_1}{x_1} \cat
    \msgBackForth{\procA_4}{\procA_5}{x_1}
    \mid x_1 \in (\TapeAlph \union \TMStates)
    \})^* \cat \invisibleEndLine
    \\ &
    (\regex[a] + \regex[b] + \regex[c])
    \cat \regex[l]
    \\
\text{where} \\
\regex[a] \is \quad &
    (
    \{
    \msgBackForth{\procA_2}{\procA_1}{x_2} \cat
    \msgBackForth{\procA_4}{\procA_5}{x_3}
    \mid x_2, x_3 \in (\TapeAlph \union \TMStates)
    \text{ and } x_2 \neq x_3
    \}
    )
    \cat \invisibleEndLine
    \\ & \quad
    (\{
    \msgBackForth{\procA_2}{\procA_1}{x_4} \cat
    \msgBackForth{\procA_4}{\procA_5}{x_5}
    \mid x_4, x_5 \in (\TapeAlph \union \TMStates)
    \})^* \cat \regex[l]
\\
\regex[b] \is \quad &
    \{
    \msgBackForth{\procA_2}{\procA_1}{x_7} \cat
    \msgBackForth{\procA_4}{\procA_5}{\markEndConf}
    \mid x_7 \in (\TapeAlph \union \TMStates)
    \}
    \cat
    (\msgBackForth{\procA_2}{\procA_1}{\TapeAlph \union \TMStates})^* \cat
    \msgBackForth{\procA_2}{\procA_1}{\markEndConf}
    \cat \regex[l]
\\
\regex[c] \is \quad &
    \{
    \msgBackForth{\procA_2}{\procA_1}{\markEndConf} \cat
    \msgBackForth{\procA_4}{\procA_5}{x_6}
    \mid x_6 \in (\TapeAlph \union \TMStates)
    \}
    \cat
    (\msgBackForth{\procA_4}{\procA_5}{\TapeAlph \union \TMStates})^* \cat
    \msgBackForth{\procA_4}{\procA_5}{\markEndConf}
    \cat \regex[l]
    \enspace .
\end{align*}
}

The regular expression checks that at some point two configurations $C_i$ and $D_i$ do not agree on some position (using $\regex[a]$), or $C_i$ is longer than $D_i$ (using $\regex[b]$), or $D_i$ is longer than $C_i$ (using $\regex[c]$).

\textit{Language $L_{r,5}$}:

We use the same idea of merging the loops to compare as for the previous case and also use the same notation.
We can give a regular expression that consists of different phases.
First, we let $\procA_2$ and $\procA_1$ communicate about $C_1$ in order to then compare $D_i$ with $C_{i+1}$ for any $i$ in a loop.
We want that $C_{i+1}$ is no successor of $D_i$ for some $i$.
Thus, we check if the changes from $D_i$ to $C_{i+1}$ are a valid transition for $\TM$.
The regular expression $\regex[5]$ is defined as follows:

{ \tiny
\begin{align*}
\regex[5] \is \quad
    &
    \msgBackForth{\procA_3}{\procA_2}{\markNewRound} \cat
    \msgBackForth{\procA_2}{\procA_1}{\markBeginConf} \cat
    (\msgBackForth{\procA_2}{\procA_1}{(\TapeAlph \dunion \TMStates)})^* \cat
    \msgBackForth{\procA_2}{\procA_1}{\markEndConf} \cat \invisibleEndLine
    \\ &
    (\regex[d] + \regex[e])^*
        \cat
    (\regex[f] + \regex[g])
        \cat \invisibleEndLine
    \\ &
    \msgBackForth{\procA_3}{\procA_4}{\markNewRound} \cat
    \msgBackForth{\procA_4}{\procA_5}{\markBeginConf} \cat
    (\msgBackForth{\procA_4}{\procA_5}{(\TapeAlph \dunion \TMStates)})^* \cat
    \msgBackForth{\procA_4}{\procA_5}{\markEndConf} \cat \invisibleEndLine
    \\ &
    \regex[l]
    \\
\text{where}
    \\
\regex[d] \is \quad
    &
    \msgBackForth{\procA_3}{\procA_4}{\markNewRound} \cat
    \msgBackForth{\procA_4}{\procA_5}{\markBeginConf} \cat
    \msgBackForth{\procA_3}{\procA_2}{\markNewRound} \cat
    \msgBackForth{\procA_2}{\procA_1}{\markBeginConf} \cat \invisibleEndLine
    \\ &
    (\{
    \msgBackForth{\procA_4}{\procA_5}{x_1} \cat
    \msgBackForth{\procA_2}{\procA_1}{x_1}
    \mid x_1 \in (\TapeAlph \union \TMStates)
    \})^* \cat \invisibleEndLine
    \\ &
    (\{
    \msgBackForth{\procA_4}{\procA_5}{a_1} \cat
    \msgBackForth{\procA_2}{\procA_1}{a_2} \cat
    \msgBackForth{\procA_4}{\procA_5}{b_1} \cat
    \msgBackForth{\procA_2}{\procA_1}{b_2} \cat
    \msgBackForth{\procA_4}{\procA_5}{c_1} \cat
    \msgBackForth{\procA_2}{\procA_1}{c_2}
    \\ &
    \hspace{3ex} \mid
        a_1, a_2, b_1, b_2, c_1, c_2
        \in (\TapeAlph \union \TMStates)^*,
        a_1 \neq a_2
    \text{ and }
        \exists w_1, w_2 \in \TapeAlph^{\!*} \st
        w_1 a_1 b_1 c_1 w_2 \TMNext w_1 a_2 b_2 c_2 w_2
    \}) \cat \invisibleEndLine
    \\ &
    (\{
    \msgBackForth{\procA_4}{\procA_5}{x_2} \cat
    \msgBackForth{\procA_2}{\procA_1}{x_2}
    \mid x_2 \in (\TapeAlph \union \TMStates)
    \})^* \cat \invisibleEndLine
    \\ &
    \msgBackForth{\procA_4}{\procA_5}{\markEndConf} \cat
    \msgBackForth{\procA_2}{\procA_1}{\markEndConf}
    \\
\regex[e] \is \quad
    &
    \msgBackForth{\procA_3}{\procA_4}{\markNewRound} \cat
    \msgBackForth{\procA_4}{\procA_5}{\markBeginConf} \cat
    \msgBackForth{\procA_3}{\procA_2}{\markNewRound} \cat
    \msgBackForth{\procA_2}{\procA_1}{\markBeginConf} \cat \invisibleEndLine
    \\ &
    (\{
    \msgBackForth{\procA_4}{\procA_5}{x_1} \cat
    \msgBackForth{\procA_2}{\procA_1}{x_1}
    \mid x_1 \in (\TapeAlph \union \TMStates)
    \})^* \cat \invisibleEndLine
    \\ &
    (\{
    \msgBackForth{\procA_4}{\procA_5}{a_1} \cat
    \msgBackForth{\procA_2}{\procA_1}{a_2} \cat
    \msgBackForth{\procA_4}{\procA_5}{b_1} \cat
    \msgBackForth{\procA_2}{\procA_1}{b_2} \cat
    \msgBackForth{\procA_4}{\procA_5}{\markEndConf} \cat
    \msgBackForth{\procA_2}{\procA_1}{c_2} \cat
    \msgBackForth{\procA_2}{\procA_1}{\markEndConf}
    \\ &
    \hspace{3ex} \mid
        a_1, a_2, b_1, b_2, c_2
        \in (\TapeAlph \union \TMStates)^*,
        a_1 \neq a_2
    \text{ and }
        \exists w_1 \in \TapeAlph^{\!*} \st
        w_1 a_1 b_1 \TMNext w_1 a_2 b_2 c_2
    \})
    \\
\regex[f] \is \quad
    &
    \msgBackForth{\procA_3}{\procA_4}{\markNewRound} \cat
    \msgBackForth{\procA_4}{\procA_5}{\markBeginConf} \cat
    \msgBackForth{\procA_3}{\procA_2}{\markNewRound} \cat
    \msgBackForth{\procA_2}{\procA_1}{\markBeginConf} \cat \invisibleEndLine
    \\ &
    (\{
    \msgBackForth{\procA_4}{\procA_5}{x_1} \cat
    \msgBackForth{\procA_2}{\procA_1}{x_1}
    \mid x_1 \in (\TapeAlph \union \TMStates)
    \})^* \cat \invisibleEndLine
    \\ &
    (\{
    \msgBackForth{\procA_4}{\procA_5}{a_1} \cat
    \msgBackForth{\procA_2}{\procA_1}{a_2} \cat
    \msgBackForth{\procA_4}{\procA_5}{b_1} \cat
    \msgBackForth{\procA_2}{\procA_1}{b_2} \cat
    \msgBackForth{\procA_4}{\procA_5}{c_1} \cat
    \msgBackForth{\procA_2}{\procA_1}{c_2}
    \\ &
    \hspace{3ex} \mid
        a_1, a_2, b_1, b_2, c_1, c_2
        \in (\TapeAlph \union \TMStates)^*,
        a_1 \neq a_2
    \text{ and }
        \nexists w_1, w_2 \in \TapeAlph^{\!*} \st
        w_1 a_1 b_1 c_1 w_2 \TMNext w_1 a_2 b_2 c_2 w_2
    \}) \cat \invisibleEndLine
    \\ &
    ( \msgBackForth{\procA_4}{\procA_5}{\TapeAlph \union \TMStates} )^* \cat
    ( \msgBackForth{\procA_2}{\procA_1}{\TapeAlph \union \TMStates} )^* \cat \invisibleEndLine
    \\ &
    \msgBackForth{\procA_4}{\procA_5}{\markEndConf} \cat
    \msgBackForth{\procA_2}{\procA_1}{\markEndConf}
\\
\regex[g] \is \quad
    &
    \msgBackForth{\procA_3}{\procA_4}{\markNewRound} \cat
    \msgBackForth{\procA_4}{\procA_5}{\markBeginConf} \cat
    \msgBackForth{\procA_3}{\procA_2}{\markNewRound} \cat
    \msgBackForth{\procA_2}{\procA_1}{\markBeginConf} \cat \invisibleEndLine
    \\ &
    (\{
    \msgBackForth{\procA_4}{\procA_5}{x_1} \cat
    \msgBackForth{\procA_2}{\procA_1}{x_1}
    \mid x_1 \in (\TapeAlph \union \TMStates)
    )^* \cat \invisibleEndLine
    \\ &
    (\{
\msgBackForth{\procA_4}{\procA_5}{c_1} \cat
    \msgBackForth{\procA_2}{\procA_1}{\markEndConf}
\mid
c_1
        \in (\TapeAlph \union \TMStates)^*
\}) \cat \invisibleEndLine
    \\ &
    ( \msgBackForth{\procA_4}{\procA_5}{\TapeAlph \union \TMStates} )^* \cat
\msgBackForth{\procA_4}{\procA_5}{\markEndConf}
\end{align*}
}

We distinguish two types of transitions:
the ones that simply change letters in the middle of the configurations ($\regex[d]$) and the ones that extend the tape ($\regex[e]$).
If one is matched against, we recurse using
$(\regex[d] + \regex[e])^*$.
If not, $(\regex[f] + \regex[g])$ is matched against and we, subsequently, allow any possible subsequent pair configurations using $\regex[l]$.
The regular expression $\regex[f]$ checks that the transition is not possible while $\regex[g]$ checks if $C_{i+1}$ is shorter than $D_i$.
Without loss of generality, we can assume that the tape never shrinks (as it could be encoded using an extra tape alphabet letter).
Note that the transition check is a local condition and it suffices to check at most two more messages after the first different message.
In fact, the words $w_1$ and $w_2$ in the conditions do not matter:
either it is a transition for all such pairs or none.
After the mismatch, we let $D_i$ catch up and continue with~$\regex[l]$.

\textit{Mixed choice}:

We explained how to construct a PSM for $L_{r,i}$ for every $i$.
They are \mbox{\sumOnePSMs} by construction.
In fact, each of them individually also satisfies sender-driven choice.
However, when we combine both PSMs for $L_{r,4}$ and $L_{r,5}$ in order to obtain a PSM for $L_r$, the resulting PSM exposes mixed choice.
Intuitively, this happens because $L_{r,4}$ checks $C_i$ against $D_i$ and $L_{r,5}$ checks $D_i$ against $C_{i+1}$.
Technically, when merging both PSMs, we reach a state after the sequence
\[
    \msgBackForth{\procA_3}{\procA_2}{\markNewRound} \cat
    \msgBackForth{\procA_2}{\procA_1}{\markBeginConf}
\]
for which $L_{r,4}$ requires to have
$ \msgBackForth{\procA_3}{\procA_4}{\markNewRound} $
next while
$L_{r,5}$ requires to have a loop with
$
    \msgBackForth{\procA_2}{\procA_1}{\TapeAlph \dunion \TMStates}. 
$
It is not possible to let $\procA_2$ send a message to distinguish both branches as the indistinguishability of both branches is necessary so that $C_{i+1}$ can be compared to both $D_i$ and $D_{i+1}$.

\claimProofEnd{1}

\claimStm{2}
    $L_l$ is projectable.

\claimProofStart{2}
 By \cref{thm:sumOnePSMasGlobalType}, the PSM for $L_l$ can be represented as a global type with mixed choice.
 It is also $0$-reachable \cite{DBLP:conf/ecoop/Stutz23}, \ie one can reach a final state from every state.
 For a $0$-reachable global type $\GG$,
 \cite[Lm.\,4.10]{DBLP:conf/ecoop/Stutz23} showed that projections for $\lang_{\fin}(\GG)$ generalise to $\lang_{\inf}(\GG)$.
 They do not consider mixed choice but the proof generalises to the mixed choice setting as it does not use any restrictions on choice.
 \cite{DBLP:journals/tse/AlurEY03} showed that a language~$L$ of finite words is projectable if and only if two closure conditions $CC_2$ and $CC_3$ hold.
 Hence, it suffices to show that both $CC_2$ and $CC_3$. 

In what follows, a word $u$ is said to be complete if 
 $u \wproj_{\snd{\procA}{\procB}{\_}} = 
  u \wproj_{\rcv{\procA}{\procB}{\_}}$. 
\\
 \textbf{$CC_2$}:
If $w \in \AlphAsync^*$ is
 \channelcompliant, complete,
and for every participant $\procA \in \Procs$, there is $v \in L$ with
 $w \wproj_{\AlphAsync_\procA} = v \wproj_{\AlphAsync_\procA}$,
 then $w \in L$. 
\\
 \textbf{$CC_3$}:
 If $w$ is \channelcompliant and for every participant $\procA \in \Procs$, there is $v \in \pref(L)$ with
 $w \wproj_{\AlphAsync_\procA} = v \wproj_{\AlphAsync_\procA}$,
 then $v \in \pref(L)$.

 We show that $L_l$ satisfies $CC_2$.
 The proof for $CC_3$ is analogous.
 Let $w \in \AlphAsync^*$ be a \channelcompliant and complete word such that for every participant $\procA \in \Procs$,
 there is $v \in L_l$ with
 $w \wproj_{\AlphAsync_\procA} = v \wproj_{\AlphAsync_\procA}$.
 Let us give the structure of $w \wproj_{\AlphAsync_{\procA_i}}$ for $i \in \set{1,2,3}$.
{ \footnotesize
\begin{align*}
   w \wproj_{\AlphAsync_{\procA_3}}
   = \quad &
    (\snd{\procA_3}{\procA_2}{\markNewRound} \cat
    \rcv{\procA_2}{\procA_3}{\markNewRound} \cat
    \snd{\procA_3}{\procA_4}{\markNewRound} \cat
    \rcv{\procA_4}{\procA_3}{\markNewRound})^{k_3}
    \text{ for some } k_3
\\
   w \wproj_{\AlphAsync_{\procA_2}}
   = \quad &
    \rcv{\procA_3}{\procA_2}{\markNewRound} \cat
    \snd{\procA_2}{\procA_1}{\markBeginConf} \cat
    \rcv{\procA_1}{\procA_2}{\markBeginConf} \, \cat
    \\ &
    (\snd{\procA_2}{\procA_1}{a_{1,1}} \cat
    \rcv{\procA_1}{\procA_2}{a_{1,1}} \cat
    \ldots
    \snd{\procA_2}{\procA_1}{a_{1,i_1}} \cat
    \rcv{\procA_1}{\procA_2}{a_{1,i_1}}) \cat \invisibleEndLine
    \\ &
    \ldots
    \\ &
    (\snd{\procA_2}{\procA_1}{a_{k_2,1}} \cat
    \rcv{\procA_1}{\procA_2}{a_{k_2,1}} \cat
    \ldots
    \snd{\procA_2}{\procA_1}{a_{k_2,i_{k_2}}} \cat
    \rcv{\procA_1}{\procA_2}{a_{k_2,i_{k_2}}})
    \\ &
    \text{for some } k_2, i_1, \ldots, i_{k_2}
\\
   w \wproj_{\AlphAsync_{\procA_2}}
   = \quad &
    \rcv{\procA_2}{\procA_1}{\markBeginConf} \cat
    \snd{\procA_1}{\procA_2}{\markBeginConf} \, \cat
    \\ &
    (\rcv{\procA_2}{\procA_1}{b_{1,1}} \cat
    \snd{\procA_1}{\procA_2}{b_{1,1}} \cat
    \ldots
    \rcv{\procA_2}{\procA_1}{b_{1,j_1}} \cat
    \snd{\procA_1}{\procA_2}{b_{1,j_1}}) \cat \invisibleEndLine
    \\ &
    \ldots
    \\ &
    (\rcv{\procA_2}{\procA_1}{b_{k_1,1}} \cat
    \snd{\procA_1}{\procA_2}{b_{k_1,1}} \cat
    \ldots
    \rcv{\procA_2}{\procA_1}{b_{k_1,j_{k_1}}} \cat
    \snd{\procA_1}{\procA_2}{b_{k_1,j_{k_1}}})
    \\ &
    \text{for some } k_1, j_1, \ldots, j_{k_1}
\end{align*}
}

(The projections for $\procA_4$ and $\procA_5$ are analogous to $\procA_2$ and~$\procA_1$ and analogous reasoning applies.)
By the fact that $w$ is finite and complete, we know that $k_1 = k_2 = k_3$ and
$i_l = j_l$ for every $1 \leq l \leq k$.
By the fact that $w$ is \channelcompliant, the letters coincide, \ie
$a_{i, l} = b_{i, l}$
for every $i$ and~$l$.
Therefore, it is straightforward that $w$ can be obtained by reordering
$w(C_1, D_1, \ldots, C_k, D_k)$
using $\interswap$
and, thus, $w \in L_l$.

\claimProofEnd{2}

\claimStm{3}
 If $\TM$ has no accepting computation for $w$, then $\lang(\PSM_{\TM})$ is projectable.

\claimProofStart{3}
Recall that
\[
\lang(\PSM_{\TM}) =
\set{\msgFromTo{\procA_2}{\procA_3}{\lbl{l}} \cat w \mid w \in L'_l}
\dunion
\set{\msgFromTo{\procA_2}{\procA_3}{\lbl{r}} \cat w \mid w \in L'_r}
\enspace .
\]
If $w$ is not accepted, then $L_l$ and $L_r$ and, hence, $L'_l$ and $L'_r$ coincide by construction.
Thus, it is irrelevant for $\procA_1, \procA_4$, and $\procA_5$ which branch was taken.
By Claim 2, $L_l$ is projectable and so is~$\lang(\PSM_\TM)$.

\claimProofEnd{3}

\claimStm{4}
 If $\TM$ has an accepting computation for $w$, then $\lang(\PSM_{\TM})$ is not projectable.

\claimProofStart{4}
Let $(u_1, \ldots, u_m)$ be an accepting computation for $w$.
Then, there is
$w_u = w(u_1, u_1, \ldots, u_m, u_m)$
and, by construction, it holds that
$w_u \notin L_r$.
By definition of $\lang(\PSM_{\TM})$, it holds that
$\msgFromTo{\procA_2}{\procA_3}{\lbl{r}} \cat w_u \cat w_{\mathit{end}} \notin \lang(\PSM_{\TM})$.
However, for every participant $\procA \in \Procs$,
there is
$v \in \lang(\PSM_{\TM})$
such that
$w \wproj_{\AlphAsync_\procA} = v \wproj_{\AlphAsync_\procA}$.
Together, this contradicts closure condition $CC_2$ which is a necessary condition for projectability.
Thus, $\lang(\PSM_{\TM})$ is not projectable.

\claimProofEnd{4}
This concludes the proofs for all claims and hence the overall proof. 
\proofEndSymbol
\end{proof}
  \section{Additional Material for \cref{sec:typing-for-csms}}
\label{app:typing-for-csms}

In this section, we show CSMs are a good fit for type checking.
They can be seamlessly integrated into a type system.
To show this, we present a session type system that uses CSMs as interface for type checking, instead of local types. 
This clean separation of concerns makes our type system applicable to any type of protocol specification that can be projected to CSMs.
CSMs are strictly more expressive than local types.
Thus, the use of CSMs as intermediate interface improves generality without loosing efficiency.
For instance, with our projection result, we can take (Tame) PSMs as global protocol specifications and type check processes against them, with CSMs as intermediate interface for local specifications.

For the design of our type system, we follow
\cite{DBLP:journals/pacmpl/ScalasY19}
as a particularly streamlined instance of a session type system.
There are two main differences.
\citet{DBLP:journals/pacmpl/ScalasY19}
do not consider global specifications, so they do not check against a global protocol specification, but they still use local types.
In our type system, we use CSMs and show which of their properties will entail which properties of the typed program.

\subsection{Payload Types and Delegation}

In contrast to the explanation in the main text, we use global types (cf. \cref{def:global-types}) as representations for protocols in this section to elaborate more on the payload types and delegation.
We will show that every global type can be represented as PSM but, here, their syntactic nature makes examples more concise. 
While these are global specifications, all our examples transfer to CSMs but using them would be a bit harder to grasp.

So far, we treated message payloads as uninterpreted names from a fixed finite set.
In practice, each message would be a label and a payload type.
The label can indicate the branch that was taken while the payload type is interpreted as type of the data transmitted.
For instance, consider the following global type:

\[ \GG \is
    \msgFromTo{\procA}{\procB}{\labelAndType{l}{\stringtype}}
    +
    \msgFromTo{\procA}{\procB}{\labelAndType{r}{\inttype}}
\]

Here, $l$ and $r$ are labels so $\procB$ knows which branch was taken.
For the right branch, $\inttype$ is interpreted as type of the payload.
Thus, the payload should be of type integer, \eg the number $2$.
If there is no payload, we simply omit it and only write the label.

A global type specifies the intended behaviour for one session.
With a type system, it is interesting to consider systems with multiple sessions that possibly follow protocols given by different global types.
For session $s$, the endpoint of participant $\procA$ is denoted by $s[\procA]$.

Let us give an informal example of a process that follows the protocol specified by $\GG$; more precisely let $s$ be that session.
Then, the process
$P_\procA \parallel P_\procB$, where $\parallel$ is parallel composition,
would comply with the above global type:

\vspace{-2ex}
{ \small
\begin{align*}
    P_\procA \is & \;
        s[\procA][\procB] ! \labelAndMsg{l}{\text{``foo''}} \seq 0
        \, \IntCh \,
        s[\procA][\procB] ! \labelAndMsg{r}{2} \seq 0
    \\
    P_\procB \is & \;
        s[\procB][\procA] ? \labelAndVar{l}{x} \seq 0
        \, \ExtCh \,
        s[\procB][\procA] ? \labelAndVar{r}{y} \seq 0
\end{align*}
}

We use internal choice $\IntCh$ for
$P_\procA$ because it sends first:
$s[\procA][\procB] ! \labelAndMsg{r}{2}$
indicates that endpoint $s[\procA]$ sends message $\labelAndMsg{r}{2}$ to $s[\procB]$.
For $P_\procB$, we use external choice $\ExtCh$ as it cannot actually choose but receives the sent message.
It stores the payload in a variable which could be used subsequently, \eg
$s[\procB][\procA] ? \labelAndVar{l}{x}$
stores the message in $x$.
Both processes terminate after their only action.

If all payload types are base types like $\stringtype$, $\inttype$, $\booltype$, etc., type checking is rather simple.
It gets more complicated and interesting if one adds the possibility to send channel endpoints.
This amounts to sending an endpoint of a session, \eg $s[\procA]$.
Upon receiving, the receiver shall subsequently comply with what is specified.
This is why sending a channel endpoint is called \emph{delegation}.
Usually, local types are used to specify channel endpoint types.
However, we do not use local types so how do we specify such behaviour?
We use the states from the subset projections of a global type because this specifies the behaviour of a participant in our setting.
We assume that they are distinct across all considered state machines.
Our model uses FSMs, giving us a finite set of labels.
Note that one cannot only send the initial state.
Each state corresponds to a position in the local behaviour.
Sending non-initial states corresponds to sending subexpressions of local~types.

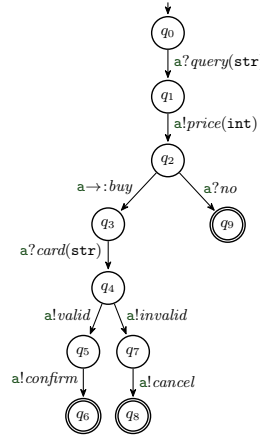
\begin{figure}
\centering
 \resizebox{0.3\textwidth}{!}{
    \begin{tikzpicture}[psm, node distance=0.7cm and 0.7cm]
    \node[state, initial above, initial text = ] (q0) {$q_0$};
    \node[state, below = of q0, yshift=1mm] (q1) {$q_1$};
    \node[state, below = of q1, yshift=1mm] (q2) {$q_2$};
    \node[state, below = of q2, xshift=-12mm, yshift=1mm] (q3) {$q_3$};
    \node[state, below = of q3, yshift=1mm] (q4) {$q_4$};
    \node[state, below = of q4, xshift=-5mm, yshift=1mm] (q5) {$q_5$};
    \node[finalstate, below = of q5, yshift=1mm] (q6) {$q_6$};
    \node[state, below = of q4, xshift=5mm, yshift=1mm] (q7) {$q_7$};
    \node[finalstate, below = of q7, yshift=1mm] (q8) {$q_{8}$};
    \node[finalstate, below = of q2, xshift=12mm, yshift=1mm] (q9) {$q_9$};
\path (q0) edge node[right, yshift=.5mm] {$\rcv{\buyerA}{}{\labelAndType{\query}{\texttt{str}}}$} (q1);
    \path (q1) edge node[right, yshift=1mm] {$\snd{}{\buyerA}{\labelAndType{\price}{\inttype}}$} (q2);
    \path (q2) edge node[left, yshift=1mm] {$\msgFromTo{\buyerA}{}{\buy}$} (q3);
    \path (q2) edge node[right, yshift=1mm] {$\rcv{\buyerA}{}{\no}$} (q9);
    \path (q3) edge node[left, yshift=1mm] {$\rcv{\buyerA}{}{\labelAndType{\ccard}{\stringtype}}$} (q4);
    \path (q4) edge node[left, yshift=1mm] {$\snd{}{\buyerA}{\valid}$} (q5);
    \path (q5) edge node[left, yshift=1mm] {$\snd{}{\buyerA}{\confirm}$} (q6);
    \path (q4) edge node[right, yshift=1mm] {$\snd{}{\buyerA}{\invalid}$} (q7);
    \path (q7) edge node[right, yshift=.5mm] {$\snd{}{\buyerA}{\cancel}$} (q8);

\end{tikzpicture}
  }
 \caption{Projection of the one buyer protocol onto seller $\seller$.}
 \label{fig:one-buyer-protocol-projection-seller}
\end{figure}

\begin{example}[Delegation]
\label{ex:global-types-delegation}
We consider a protocol where (one) buyer $\buyerA$ buys a book from a seller $\seller$ who delegates the payment to a payment service~$\payment$.
The one buyer protocol is specified as follows:

\vspace{-2ex}
{ \small
\begin{align*}
 \GG_1 & \is
 \msgFromTo{\buyerA}{\seller}{\labelAndType{\query}{\stringtype}} \seq
 \msgFromTo{\seller}{\buyerA}{\labelAndType{\price}{\inttype}} \seq
 +
 \begin{cases}
    \msgFromTo{\buyerA}{\seller}{\buy} \seq
    \msgFromTo{\buyerA}{\seller}{\labelAndType{\ccard}{\stringtype}} \seq
    \GG'
    \\
    \msgFromTo{\buyerA}{\seller}{\no} \seq \zero
 \end{cases}
 \text{ where}
 \\
 \GG' & \is
    +
    \begin{cases}
    \msgFromTo{\seller}{\buyerA}{\valid} \seq
    \msgFromTo{\seller}{\buyerA}{\confirm} \seq \zero
    \\
    \msgFromTo{\seller}{\buyerA}{\invalid} \seq
    \msgFromTo{\seller}{\buyerA}{\cancel} \seq \zero
    \end{cases}
\end{align*}
}

We project $\GG_1$ onto $\seller$ to obtain the state machine in \cref{fig:one-buyer-protocol-projection-seller} with its set of states $\set{q_1, \ldots, q_9}$.
We define the second global type, which specifies the interaction between the seller $\seller$ and the payment service $\payment$, including delegation.

\begin{align*}
 \GG_2 & \is
 +
 \begin{cases}
 \msgFromTo{\seller}{\payment}{\labelAndType{\price}{\inttype}} \seq
 \msgFromTo{\seller}{\payment}{\labelAndType{\deleg}{q_3}} \seq
    +
    \begin{cases}
    \msgFromTo{\payment}{\seller}{\labelAndType{\valid}{q_5}} \seq \zero
    \\
    \msgFromTo{\payment}{\seller}{\labelAndType{\invalid}{q_7}} \seq \zero
    \end{cases}
 \\
 \msgFromTo{\seller}{\payment}{\no} \seq \zero
 \end{cases}
\end{align*}

First, the seller delegates checking the card details to the payment service by sending $q_3$.
The payment service then takes care of the payment but we do not specify this here.
Afterwards, the payment service delegates control back to the seller:
depending on the outcome of the credit card check, they send $q_5$ or $q_7$.
Starting from there, the seller will either confirm or cancel.
This choice is not up to the seller but determined by the label, $\valid$ or $\invalid$, sent by the payment service earlier.

We use the states of a projection for seller $\seller$ as syntactic marker for the behaviour that is expected from the receiver of that channel endpoint.
Usually, this is achieved using local types.
In general, local types are less expressive than state machines.
However, with
\cref{lm:local-types-equi-expressive},
we will show that their expressivity coincides for \sinkfinal state machines, \ie the ones where final states have no outgoing transitions.
Hence, any \sinkfinal state machine can be turned into a local type.
Such a local type, however, would correspond to the initial state and, with delegation, we can also send non-initial states, \eg $q_3$.
Despite, it appears feasible to first construct a state machine that represents the same behaviour as starting from a non-initial state and, then, the same techniques for constructing a local type apply.
For instance, the behaviour specified by $q_3$ can also be represented as local~type:

\begin{align*}
    \rcv{\buyerA}{\seller}{\labelAndType{\ccard}{\stringtype}} \seq
    \IntCh
    \begin{cases}
    \snd{\seller}{\buyerA}{\valid} \seq
    \snd{\seller}{\buyerA}{\confirm} \seq \zero
    \\
    \snd{\seller}{\buyerA}{\invalid} \seq
    \snd{\seller}{\buyerA}{\cancel} \seq \zero
    \end{cases}
\end{align*}
\end{example}

We have seen an example for delegation.
There, in order to talk about states from a projection of $G_1$ in $G_2$, we need access to the projection of $G_1$ already before defining $G_2$.
If one considers CSMs, this is not necessarily the case.
However, we still need a similar condition to prove that well-typed processes do not leave messages in channels behind, so-called orphan messages.
Solely for this property, we assume a strict partial order $<$,
\ie it is irreflexive, antisymmetric, and transitive, for the CSMs under consideration.
With its acyclicity, this relation provides means to decide which CSMs can use the control states of which in a system:
for every $\CSMabb{A}$ and $\CSMabb{B}$ in a system,
if $\CSMabb{A} < \CSMabb{B}$, then $\CSMabb{B}$ can use states from $\CSMabb{A}$.
We expect that this condition could be worked around with more sophisticated techniques but leave this for future work.
Interestingly,
\citet{DBLP:journals/pacmpl/ScalasY19},
allow delegation using local types and do not impose such restrictions.
However, they also do not prove the absence of orphan messages.
It is unclear whether their type system can be extended to prove the absence of orphan messages without such restrictions.

\subsection{Process Calculus}

We first define processes and runtime configurations.

\begin{definition}
Processes, runtime configurations and process definitions are defined by the following grammar:
\begin{grammar}
 c \is
        x
    |   s[\procA]
\\
 P \is
        \zero
    |   P_1 \parallel P_2
    |   (\restr s \hasType \CSMabb{A}) \, P
    |   \IntCh_{i \in I} c[\procB_i] ! \labelAndMsg{l_i}{c_i} \seq P_i
    |   \ExtCh_{i \in I} c[\procB_i] ? \labelAndVar{l_i}{y_i} \seq P_i
    |   \pn{Q}[\vec{c}]
    \\
 R \is
        \zero
    |   R_1 \parallel R_2
    |   (\restr s \hasType \CSMabb{A}) \, R
    |   \smash{\IntCh_{i \in I} c[\procB_i] ! \labelAndMsg{l_i}{c_i} \seq P_i}
    |   \smash{\ExtCh_{i \in I} c[\procB_i] ? \labelAndVar{l_i}{y_i} \seq P_i}
    |   \pn{Q}[\vec{c}]
    \\ & \hspace{5.77ex}
    |   \queueProc{s}{\queuecontent}
    |   \err
    \\
 \Defs \is
    \bigl(\pn{Q}[\vec{x}] =
    \IntCh_{i \in I} c[\procB_i] ! \labelAndMsg{l_i}{c_i} \seq P_i\bigr); \; \Defs
    | \bigl(\pn{Q}[\vec{x}] =
    \ExtCh_{i \in I} c[\procB_i] ? \labelAndVar{l_i}{y_i} \seq P_i\bigr); \; \Defs
    | \emptystring
\end{grammar}

The term $c$ can either be a variable $x$ or a session endpoint of shape $s[\procA]$ (which will have type $q$ for some state $q$).
Let us explain the constructors for processes and runtime configurations in more detail.
The term $\zero$ denotes termination while
$\parallel$ is the parallel operator.
With $(\restr s \hasType \CSMabb{A})$, we restrict a new session~$s$ for which the CSM~$\CSMabb{A}$ specifies the intended session behaviour.
$\CSMabb{A}$ is ignored by our reduction semantics and solely used for type checking.
We have internal ($\IntCh$) and external ($\ExtCh$) choice and assume that $\card{I} > 0$.
For runtime configurations, the use of processes $P_i$ for continuations ensures that we only specify queue contents for active session restrictions.
A session restriction is \emph{active} if it is not \emph{guarded} by internal or external actions.
$\pn{Q}[\vec{c}]$ specifies the use of a process definition with identifier $\pn{Q} \in \ProcIds$ and parameters $\vec{c}$ for which
$\Defs$ provides the definitions.
We only consider guarded process definitions (and $c$ and $c_i$ can be in $\vec{x}$).
This allows us to properly distinguish between processes and runtime configurations later:
in the reduction rules, we will only add queues for active session restrictions.
We assume that $\Defs$ has one single definition for every process identifier in $\ProcIds$.
Thus, we define $\Defs(\pn{Q}, \vec{c})$ as unfolding of the process definition when its variables $\vec{x}$ are substituted by $\vec{c}$.
For runtime configurations,
$\queueProc{s}{\queuecontent}$
denotes that the queues of session $s$ are currently $\queuecontent$.
The function
$\queuecontent \from \Procs \times \Procs \to \Msg^*$
where
$\gramm{\Msg \ni m \is \labelAndMsg{l}{v}}$
specifies the queue content where $l$ is from a finite set of labels.
If all queues are empty, we exploit notation and use $\emptystring$, \ie
$\emptyqueuecontent(\procA,\procB) \is \emptystring $.
We will use the term $\err$ to specify if something went wrong.
\end{definition}

The correctness of our typing relies on the CSMs used for annotations
to be well-behaved, which we encode in the notion of well-annotated processes.
\begin{definition}[Well-annotated]
  We say a process/runtime configuration is \emph{well-annotated} if
  every CSM appearing in it is:
  \begin{enumerate}[(i)]
    \item deadlock-free, and
    \item satisfies feasible eventual reception.
  \end{enumerate}
\end{definition}
Note that any CSM obtained through projection automatically satisfies
the conditions of well-annotation.

In our process calculus, recursion can be achieved using process definitions $\pn{Q}[\vec{x}]$.
Not that, for global and local types, recursion is usually defined using $\mu t$, which binds a recursion variable $t$ that can be used subsequently 
(cf. \cref{def:global-types,def:local-types}).

\begin{figure}
  \adjustfigure[\small]\begin{align*}
   P \is & \;
      (\restr s_1 \hasType \CSMabb{A})
         (\restr s_2 \hasType \CSMabb{A}) \;
         P_{\buyerA} \parallel P_{\seller} \parallel P_{\payment} \quad \text{ where}
      \\
P_{\buyerA} \is & \;
      s_1[\buyerA][\seller] ! \labelAndMsg{\query}{\text{``Alice in Wonderland''}} \seq
s_1[\buyerA][\seller] ? \labelAndVar{\price}{p} \seq
      \\ & \;
      \iif p > 10
      \\ & \quad
      \ithen
      s_1[\buyerA][\seller] ! \no \seq \zero
      \\ & \quad
      \ielse
      s_1[\buyerA][\seller] ! \buy \seq
      s_1[\buyerA][\seller] ! \labelAndMsg{\ccard}{\text{``1234..., 08/2024, 532''}} \seq
      \\ & \quad \hspace{17ex}
      \ExtCh
          \begin{cases}
          s_1[\buyerA][\seller] ? \confirm \seq \zero
          \\
          s_1[\buyerA][\seller] ? \cancel \seq \zero
          \end{cases}
      \\
P_{\seller} \is & \;
      s_1[\seller][\buyerA] ? \labelAndVar{\query}{b} \seq
s_1[\seller][\buyerA] ! \labelAndMsg{\price}{\operatorname{prices}[b]} \seq
      \\ & \;
      \hspace{17ex}
      \ExtCh
      \begin{cases}
          s_1[\seller][\buyerA] ? \no \seq
          s_2[\seller][\payment] ! \no \seq \zero
          \\
          s_1[\seller][\buyerA] ? \buy \seq
          s_2[\seller][\payment] ! \labelAndMsg{\price}{\operatorname{prices}[b]} \seq
          s_2[\seller][\payment] ! \labelAndMsg{\deleg}{s_1[\seller]} \seq
          P'_{\seller}
      \end{cases}
      \\
P'_{\seller} \is & \;
      \ExtCh
      \begin{cases}
      s_2[\seller][\payment] ? \labelAndVar{\valid}{y_1} \seq
      y_1[\buyerA] ! \confirm \seq \zero
      \\
      s_2[\seller][\payment] ? \labelAndVar{\invalid}{y_2} \seq
      y_2[\buyerA] ! \cancel \seq \zero
      \end{cases}
      \\
P_{\payment} \is & \;
      \ExtCh
      \begin{cases}
          s_2[\payment][\seller] ? \no \seq \zero
          \\
          s_2[\payment][\seller] ! \labelAndVar{\price}{p} \seq
          s_2[\payment][\seller] ? \labelAndMsg{\deleg}{y} \seq
          y[\buyerA] ? \labelAndVar{\ccard}{z} \seq
          \\ \hspace{17ex}
          \iif \operatorname{is-valid}(z)
\ithen
          s_2[\payment][\seller] ! \labelAndMsg{\valid}{y} \seq \zero
\ielse
          s_2[\payment][\seller] ! \labelAndMsg{\invalid}{y} \seq \zero
      \end{cases}
  \end{align*}
  \caption{An example process following the protocol of \cref{ex:global-types-delegation}.}
  \label{fig:proc-ex-deleg}
\end{figure}

\begin{example}
\label{ex:processes-delegation}
In \cref{fig:proc-ex-deleg}, we give a process~$P$ that uses
projections of the global types from
\cref{ex:global-types-delegation}:
$\CSMabb{A}$ is a projection of $\GG_1$ and
$\CSMabb{B}$ is a projection of $\GG_2$.
We assume base types $\booltype$, $\stringtype$ and $\inttype$ as well as a construct for if-then-else for illustrative~purposes.
In $P$, $\operatorname{prices}[b]$ denotes a lookup for the price and $\operatorname{is-valid} \from \stringtype \to \booltype$ is a function that checks if credit card details are valid.
Note the use of variable $y$ in $P_{\payment}$ for the delegation.
In fact, it does not know the endpoint, or local type, it receives but needs to trust that it can perform the respective actions on it.
A type system can ensure~this.
\end{example}

\begin{definition}

We define a function $\procToRuntime{\hole}$ to convert a process into a runtime configuration by adding channel types for active sessions:
\begin{align*}
\procToRuntime{P_1 \parallel P_2}
    & \is \procToRuntime{P_1} \parallel \procToRuntime{P_2} \\
\procToRuntime{(\restr s \hasType \CSMabb{A}) \, P}
    & \is (\restr s \hasType \CSMabb{A}) \, (\procToRuntime{P} \parallel \queueProc{s}{\emptyqueuecontent}) \\
\procToRuntime{P}
    & \is P \text{ otherwise}
\end{align*}

\end{definition}

We define structural (pre)congruence.
Intuitively, this shows which kind of transformations do not change the meaning of a process or runtime configuration.
For instance, parallel composition of $P$ with $\zero$ is basically the same as $P$ itself.

\begin{definition}

For processes, the rules for structural congruence $\congr$ are the following:

\begin{itemize}
 \item $P_1 \parallel P_2
        \congr
        P_2 \parallel P_1$
 \item $(P_1 \parallel P_2) \parallel P_3
        \congr
        P_1 \parallel (P_2 \parallel P_3)$
 \item $P \parallel \zero
        \congr
        P$
 \item $(\restr s \hasType \CSMabb{A}) \, (\restr s' \hasType \CSMabb{B}) \, P
        \congr
        (\restr s' \hasType \CSMabb{B}) \, (\restr s \hasType \CSMabb{A}) \, P$
 \item $(\restr s \hasType \CSMabb{A}) \, (P_1 \parallel P_2)
        \congr
        P_1 \parallel (\restr s \hasType \CSMabb{A}) \, P_2$,
        if $s$ is not free in $P_1$
\end{itemize}

We define structural precongruence $\precongr$ for processes as the smallest precongruence relation that
includes $\congr$ and
    $(\restr s \hasType \CSMabb{A}) \, \zero
    \precongr
    \zero$.
For runtime configurations, the rules for structural congruence $\congr$ are the ones above.
We define structural precongruence~$\precongr$ for runtime configurations as the smallest precongruence relation that
includes $\congr$
    and
    $(\restr s \hasType \CSMabb{A}) \, \queueProc{s}{\emptystring} \precongr \zero$.
\end{definition}

We only define one direction for the rules
$(\restr s \hasType \CSMabb{A}) \, \zero \precongr \zero$
and
\mbox{$(\restr s \hasType \CSMabb{A}) \, \queueProc{s}{\emptystring} \precongr \zero$}.
This is solely required to prove that structural congruence preserves typability for both processes and runtime configurations (cf.\,\cref{lm:str-congr-preserves-typability-runtime-confs,lm:str-congr-preserves-typability-processes}).
Intuitively, the other direction would require to impose conditions on the CSM~$\CSMabb{A}$.
This treatment is not restrictive in terms of reductions: applying these rules from right to left will not change the possibility for reductions.

Last, we define the reduction rules for our process calculus.

\begin{definition}
For our reduction rules, we first define a reduction context:

\begin{grammar}
    \redContext \is
        \redContext \parallel R
    |   R \parallel \redContext
    |   (\restr s \hasType \CSMabb{A}) \, \redContext
    |   [\,]
\end{grammar}

In \cref{fig:reduction-rules} we define the reduction rules.
The rule \runtimeReductionProcName unfolds a process definition while \runtimeReductionContext allows us to descend for reductions using contexts.
Both rules \runtimeReductionOut and \runtimeReductionIn specify how a message is output to a queue or received as input from a queue.
\runtimeReductionCongr allows us to consider structurally precongruent runtime configurations for reductions.
\runtimeReductionErrOne yields an error if the next action is to receive but all possible incoming messages do not match any specified label.
Last, \runtimeReductionErrTwo yields an error if a session is over but there are non-empty queues for this session.
\end{definition}

\begin{figure}[t]
  \adjustfigure[\scriptsize]
  \begin{mathpar}

  \inferrule*[right=\runtimeReductionProcName]{
          \Defs(\pn{Q}, \vec{c})
          \parallel
          R
          \redto
          R'
  }{
          \pn{Q}[\vec{c}]
          \parallel
          R
          \redto
          R'
  }

  \inferrule*[right=\runtimeReductionContext]{
          R \redto R'
  }{
          \redContext[R] \redto \redContext[R']
  }

  \inferrule*[right=\runtimeReductionOut]{
          k \in I
  }{
          \IntCh_{i \in I} s[\procA][\procB_i] ! \labelAndMsg{l_i}{v_i} \seq P_i
          \parallel
          \queueProc{s}
          {\queuecontent[(\procA, \procB_k) \mapsto \vec{m}]}
              \redto
          \procToRuntime{P_k}
          \parallel
          \queueProc{s}
          {\queuecontent[(\procA, \procB_k) \mapsto \vec{m} \cat \labelAndMsg{l_k}{v_k}]}
  }

  \inferrule*[right=\runtimeReductionIn]{
          k \in I
  }{
          \ExtCh_{i \in I} s[\procA][\procB_i] ? \labelAndVar{l_i}{y_i} \seq P_i
          \parallel
          \queueProc{s}
          {\queuecontent[(\procB_k, \procA) \mapsto \labelAndMsg{l_k}{v_k} \cat \vec{m}]}
          \redto
          \procToRuntime{P_k[v_k / y_k]}
          \parallel
          \queueProc{s}
          {\queuecontent[(\procA, \procB_k) \mapsto \vec{m}]}
  }

  \inferrule*[right=\runtimeReductionCongr]{
          R_1 \precongr R'_1
          \\
          R'_1 \redto R'_2
          \\
          R'_2 \precongr R_2
  }{
          R_1 \redto R_2
  }

  \inferrule*[right=\runtimeReductionErrOne]{
\forall i \in I \st
          \queuecontent(\procB_i, \procA) = \labelAndMsg{l}{\_} \cat \vec{m}
          \text{ and }
          l_i \neq l
  }{
          \ExtCh_{i \in I} s[\procA][\procB_i] ? \labelAndVar{l_i}{y_i} \seq P_i
          \parallel
          \queueProc{s}{\queuecontent}
              \redto
          \err
  }

  \inferrule*[right=\runtimeReductionErrTwo]{
          \queuecontent(\procA, \procB) \neq \emptystring \text{ for some } \procA, \procB
  }{
          (\restr s \hasType \CSMabb{A}) \; \queueProc{s}{\queuecontent}
          \redto
          \err
  }
  \end{mathpar}
  \caption{Reduction rules for runtime configurations.}
  \label{fig:reduction-rules}
\end{figure}

\begin{remark}
To prove the absence of runtime error \runtimeReductionErrOne, we require that feasible eventual reception holds.
If the CSM does not have this property, there might be messages that could be left behind.
\end{remark}

\begin{example}
\label{ex:reduction-processes-delegation}
We give a reduction for the process specified in
\cref{ex:processes-delegation}.
First, we apply the function $\procToRuntime{\hole}$ to turn the process into a runtime configuration, yielding
\[
 R \is
    (\restr s_1 \hasType \CSMabb{A})
       (\restr s_2 \hasType \CSMabb{B}) \,
       P_{\buyerA} \parallel P_{\seller} \parallel P_{\payment}
       \parallel \queueProc{s_1}{\emptystring}
       \parallel \queueProc{s_2}{\emptystring}
\]
There is only one possible reduction step: the message $\labelAndMsg{\query}{\text{``Alice in Wonderland''}}$ is sent by $s_1[\buyerA]$ to $s_1[\seller]$.
Then, we obtain the following runtime configuration:

\vspace{-2ex}
{ \small
\begin{align*}
 R' \is & \;
    (\restr s_1 \hasType \CSMabb{A})
       (\restr s_2 \hasType \CSMabb{B}) \;
       P'_{\buyerA} \parallel P_{\seller} \parallel P_{\payment}
       \\
       & \; \hspace{20ex}
       \parallel \queueProc{s_1}{\emptystring[(\buyerA, \seller) \mapsto \labelAndMsg{\query}{\text{``Alice in Wonderland''}}]}
       \parallel \queueProc{s_2}{\emptystring}
    \\
\text{ where}
    \\
 P'_{\buyerA} \is & \;
s_1[\buyerA][\seller] ? \labelAndVar{\price}{p} \seq
    \\ & \;
    \iif p > 10
    \\ & \quad
    \ithen
    s_1[\buyerA][\seller] ! \no \seq \zero
    \\ & \quad
    \ielse
    s_1[\buyerA][\seller] ! \buy \seq
    s_1[\buyerA][\seller] ! \labelAndMsg{\ccard}{\text{``1234..., 08/2024, 111''}} \seq
\ExtCh
        \begin{cases}
        s_1[\buyerA][\seller] ? \confirm \seq \zero
        \\
        s_1[\buyerA][\seller] ? \cancel \seq \zero
        \end{cases}
\end{align*}
}

Despite dealing with runtime configurations, we can specify processes because the queues are specified at top level.
\end{example}

\subsection{Type System for Processes and Runtime Configurations}

Typing for base types is well-understood.
Thus, we focus on the more difficult case of delegation, following  work by
\citet{DBLP:journals/pacmpl/ScalasY19}.
Integration of base types is mostly orthogonal and would distract from the main concerns here so we briefly remark differences in the treatment of base types after presenting our type system.

For processes, we have two typing contexts:
$\typingContextOne$ and $\typingContextTwo$.
We consider states as syntactic markers for local specifications so we use $L$ as type for such payloads. So far, we only considered a fixed set of participants $\Procs$.
In a system with multiple CSMs, we write $\ProcsOf{\CSMabb{A}}$ to denote the subset of participants of $\CSMabb{A}$ and $\channelsOf{\CSMabb{A}}$ for the respective channels.
We might also use the session $s$ instead of the respective~CSM.

Prior to giving definitions for our type system, let us remark that we make use of the \emph{Barendregt Variable Convention} \cite{DBLP:books/daglib/0067558}, which assumes that the names of bound variables is always distinct from the ones of free variables.
This allows us not to explicitly rename variables, simplifying the formalisation both for writing and reading.

\begin{definition}
The process definition typing context $\typingContextOne$ is a function from process identifiers to types for its parameters:
$\typingContextOne \from \ProcIds \to \vec{L}$.
A \emph{syntactic typing context} is defined by the following grammar:
\begin{grammar}
    \typingContextTwo \is
        \typingContextTwo, s[\procA] \hasType L
    |   \typingContextTwo, x \hasType L
    |   \emptyset
\end{grammar}
A syntactic typing context is a \emph{typing context} if every element has at most one type.
Here, we do only consider typing contexts.
We consider typing contexts to be equivalent up to reordering and, thus, we may also treat them as mappings.
We use notation $\set{\typingContextTwo_i}_{i \in I}$ to denote that we split $\typingContextTwo$ into $\card{I}$ typing contexts.
\end{definition}

Equipped with these typing contexts, we can give the typing rules for processes.
The first two rules solely deal with the process definition typing context, which provides the types for process definitions.
The other rules deal with the different constructs of our process calculus and how to type them.
Most importantly, our type system ensures that all information in the typing context is used exactly once.

\begin{figure}[t]
  \adjustfigure[\small]
  \begin{mathpar}
\inferrule*[right=\procTypingProcDefEmpty]{
}{
      \types
      \emptystring \hasType \typingContextOne
  }

  \inferrule*[right=\procTypingProcDef]{
      \typingContextOne
      \typingContextCat
      \vec{x} \hasType \vec{L}
      \types
      P \\
\typingContextOne(\pn{Q}) = \vec{L}
  }{
      \types
      (\pn{Q}[\vec{x}] = P); \Defs \hasType \typingContextOne
  }

  \inferrule*[right=\procTypingProcName]{
      \typingContextOne(\pn{Q}) = \vec{L} }{
      \typingContextOne
          \typingContextCat
          \vec{c}  \hasType \vec{L}
\types
          \pn{Q}[\vec{c}]
  }

  \inferrule*[right=\procTypingZero]{
}{
      \typingContextOne
      \typingContextCat
      \emptyset
          \types
      \zero
  }

  \inferrule*[right=\procTypingEnd]{
      \typingContextOne
      \typingContextCat
      \typingContextTwo \types P \\
\EndState(q)
}{
      \typingContextOne
      \typingContextCat
      c \hasType q,
      \typingContextTwo
          \types
      P
  }

  \inferrule*[right=\procTypingParallel]{
      \typingContextOne
          \typingContextCat
          \typingContextTwo_1
          \types
          P_1 \\
\typingContextOne
          \typingContextCat
          \typingContextTwo_2
          \types
          P_2 \\
  }{
      \typingContextOne
          \typingContextCat
          \typingContextTwo_1,
          \typingContextTwo_2
          \types
          P_1 \parallel P_2
  }

  \inferrule*[right=\procTypingIntCh]{
\delta(q) \sups
      \set{(\snd{\procA}{\procB_i}{\labelAndType{l_i}{L_i}}, q_i) \mid i \in I}
      \\
\meta{\forall i \in I \st}
      \typingContextOne \typingContextCat
          \typingContextTwo , c \hasType q_i,
          \set{c_j \hasType L_j}_{j \in I\setminus \set{i}}
          \types P_i \\
  }{
      \typingContextOne
      \typingContextCat
      \typingContextTwo,
      c \hasType q,
      \set{c_i \hasType L_i}_{i \in I}
          \types
      \IntCh_{i \in I} c[\procB_i] ! \labelAndMsg{l_i}{c_i} \seq P_i
  }

  \inferrule*[right=\procTypingExtCh]{
      \delta(q) =
      \set{(\rcv{\procB_i}{\procA}{\labelAndType{l_i}{L_i}}, q_i) \mid i \in I} \\
      \meta{\forall i \in I \st}
      \typingContextOne \typingContextCat
          \typingContextTwo,
          c \hasType q_i,
          y_i \hasType L_i
          \types P_i \\
  }{
      \typingContextOne
      \typingContextCat
      \typingContextTwo,
      c \hasType q
          \types
      \ExtCh_{i \in I} c[\procB_i] ? \labelAndVar{l_i}{y_i} \seq P_i
  }

  \inferrule*[right=\procTypingRestr]{
      \typingContextTwo_s =
          \set{s[\procA] \hasType \initialState(\CSMabb{A}_{\procA})}_{\procA \in \ProcsOf{\CSMabb{A}}} \\
\typingContextOne
          \typingContextCat
          \typingContextTwo,
          \typingContextTwo_s
          \types
          P
  }{
      \typingContextOne
          \typingContextCat
          \typingContextTwo
          \types
          (\restr s \hasType \CSMabb{A})\, P
  }
  \end{mathpar}
  \caption{
    The typing rules for processes.
    $\initialState(\hole)$ denotes the initial state of a CSM.
  }
  \label{fig:proc-typing}
\end{figure}

\begin{definition}
We define $\EndState(q)$ to hold when $q$ is final and does not have outgoing receive transitions.
The typing rules for processes are shown in \cref{fig:proc-typing}.
We assume that all CSMs in typing derivations are deadlock-free and satisfy feasible eventual reception.

The rules \procTypingProcDefEmpty and \procTypingProcDef ensure that the process definition typing context provides the right types for parameters.
This is then used to type process definitions in a process, using \procTypingProcName.
\procTypingZero types $\zero$ with an empty second typing context while \procTypingEnd can be used to remove type bindings $c \hasType q$ where $q$ is a final state without outgoing receive transitions.
The rules \procTypingIntCh and \procTypingExtCh can be used to type internal and external choice.
The rule \procTypingParallel allows us to split the typing contexts and type the respective processes independently.
Last, \procTypingRestr adds type bindings for a session~$s$ and requires that the remaining process is typed using~this.
\end{definition}

Our type system is \emph{linear}, \ie it requires that every type binding is used once and they can only be dropped if they correspond to final states without outgoing receive transitions.
This ensures that all the actions specified by the CSM are actually taken and the participants of a session cannot stop earlier.

It might seem that $\procA$ in \procTypingIntCh and \procTypingExtCh is unbound but, by assumption, $q$ is distinct across all considered FSMs so it is clear from context.
Let us explain \procTypingIntCh in more detail.
To type
$
    \typingContextOne
    \typingContextCat
    \typingContextTwo,
    c \hasType q,
    \set{c_i \hasType L_i}_{i \in I}
        \types
    \IntCh_{i \in I} c[\procB_i] ! \labelAndMsg{l_i}{c_i} \seq P_i
$,
we require all send actions to be possible from $q$, \ie
\mbox{
$
    \delta(q) \sups
    \set{(\snd{\procA}{\procB_i}{\labelAndType{l_i}{L_i}}, q_i) \mid i \in I}
$}.
We do not require all of them to be possible though, in contrast to the receive actions in \procTypingExtCh.
In addition, for every $i \in I$, we require the following: 
$
    \typingContextOne \typingContextCat
        \typingContextTwo , c \hasType q_i,
        \set{c_j \hasType L_j}_{j \in I\setminus \set{i}}
        \types P_i
$,
which gives $c$ the new binding $q_i$ and removes the type binding for the payload $c_i \hasType L_i$.
Intuitively, it is transferred when sending a message.
Thus, in its counterpart \procTypingExtCh, the payload's type will be used to type the continuation after receiving it, \ie $y_i \hasType L_i$.

The assumption that we only consider typing contexts, and not syntactic typing contexts, ensures that $\typingContextTwo$ in the conclusion of \procTypingRestr does not contain any $s$ for instance.
The same will hold for the typing rules for runtime configurations.

\begin{remark}[$\EndState(\hole)$ and final non-sink states]
We use $\EndState(q)$ to check if $q$ is a final state without outgoing receive transitions.
Following standard MST frameworks, we would simply require $q$ to be final.
Thus, our type system is slightly more general. 
However, when using our type system for projections from our approach, this will not be exploited: 
we base our projection on results by 
\cite{DBLP:conf/cav/LiSWZ23} and their (complete) conditions do not allow final state with outgoing send transitions. 
\end{remark}

\newcommand{\markerEndStateType}{\texttt{end}}
\newcommand{\markerEndStateState}{\textit{end}\,}

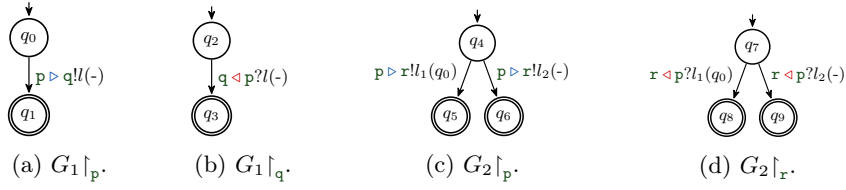
\begin{figure}[tb]
\hfill
\begin{subfigure}[b]{0.18\textwidth}
\centering
 \resizebox{0.7\textwidth}{!}{
    \begin{tikzpicture}[psm, node distance=5em and 4em]
    \node[state, initial above, initial text = ] (q0) {$q_0$};
    \node[finalstate, below =2em of q0] (q1) {$q_1$};
\path (q0) edge node[right] {$\snd{\procA}{\procB}{\labelAndType{l}{\hole}}$} (q1);
\end{tikzpicture}
  }
\caption{$\GG_1 \tproj_\procA$.}
\label{fig:project-G1-onto-p}
\end{subfigure}
\hfill
\begin{subfigure}[b]{0.18\textwidth}
\centering
 \resizebox{0.7\textwidth}{!}{
    \begin{tikzpicture}[psm, node distance=5em and 4em]
    \node[state, initial above, initial text = ] (q0) {$q_2$};
    \node[finalstate, below =2em of q0] (q1) {$q_3$};
\path (q0) edge node[right] {$\rcv{\procA}{\procB}{\labelAndType{l}{\hole}}$} (q1);
\end{tikzpicture}
  }
\caption{$\GG_1 \tproj_\procB$.}
\label{fig:project-G1-onto-q}
\end{subfigure}
\hfill
\begin{subfigure}[b]{0.28\textwidth}
\centering
 \resizebox{0.85\textwidth}{!}{
    \begin{tikzpicture}[psm, node distance=5em and 4em]
    \node[state, initial above, initial text = ] (q0) {$q_4$};
    \node[finalstate, below=2em of q0, xshift=-1.5em] (q1) {$q_5$};
    \node[finalstate, below=2em of q0, xshift=1.5em] (q3) {$q_6$};
\path (q0) edge node[left,yshift=1ex] {$\snd{\procA}{\procC}{\labelAndType{l_1}{q_0}}$} (q1);
    \path (q0) edge node[right,yshift=1ex] {$\snd{\procA}{\procC}{\labelAndType{l_2}{\hole}}$} (q3);
\end{tikzpicture}
  }
\caption{$\GG_2 \tproj_\procA$.}
\label{fig:project-G2-onto-p}
\end{subfigure}
\hfill
\begin{subfigure}[b]{0.28\textwidth}
\centering
 \resizebox{0.85\textwidth}{!}{
    \begin{tikzpicture}[psm, node distance=5em and 4em]
    \node[state, initial above, initial text = ] (q0) {$q_7$};
    \node[finalstate, below=2em of q0, xshift=-1.5em] (q1) {$q_8$};
    \node[finalstate, below=2em of q0, xshift=1.5em] (q3) {$q_9$};
\path (q0) edge node[left,yshift=1ex] {$\rcv{\procA}{\procC}{\labelAndType{l_1}{q_0}}$} (q1);
    \path (q0) edge node[right,yshift=1ex] {$\rcv{\procA}{\procC}{\labelAndType{l_2}{\hole}}$} (q3);
\end{tikzpicture}
  }
\caption{$\GG_2 \tproj_\procC$.}
\label{fig:project-G2-onto-r}
\end{subfigure}
\hfill
\caption{Projections of two global types.}
\end{figure}

\begin{example}
\label{ex:proc-typing-delegation}
Let us illustrate delegation with an example.
We have the following global type:
\[
 \GG_1 \is
    \msgFromTo{\procA}{\procB}{\labelAndType{l}{\markerEndStateType}} \seq \zero
\]
for which we could model the payload of $\labelAndType{l}{\markerEndStateType}$ with an arbitrary state $\markerEndStateState$ such that $\EndState(\markerEndStateState)$.
For readability, we omit its treatment in the typing derivations.
We obtain a projection $\CSMabb{A}$ of $\GG_1$,
giving \cref{fig:project-G1-onto-p} and \cref{fig:project-G1-onto-q}.
Using the states of these, we define delegation in the second global type:

\vspace{-2ex}
{ \small
\[
 \GG_2 \is
    +
    \begin{cases}
    \msgFromTo{\procA}{\procC}{\labelAndType{l_1}{q_0}} \seq \zero
    \\
    \msgFromTo{\procA}{\procC}{\labelAndType{l_2}{\markerEndStateType}} \seq \zero
    \end{cases}
\]
}

A projection $\CSMabb{B}$ for $\GG_2$ is given in
\cref{fig:project-G2-onto-p} and
\cref{fig:project-G2-onto-r}.
Let us define a process that uses both global~types:

\vspace{-2ex}
{ \small
\begin{align*}
 P \is & \;
    (\restr s_1 \hasType \CSMabb{A})
    (\restr s_2 \hasType \CSMabb{B}) \;
    P_{\procA} \parallel
    P_{\procB} \parallel
    P_{\procC}
    \text{ where}
    \\
P_{\procA} \is & \;
    \IntCh
    \begin{cases}
    s_1[\procA][\procC] ! \labelAndMsg{l_1}{s_1[\procA]} \seq \zero
    \\
    s_1[\procA][\procC] ! \labelAndMsg{l_2}{\markerEndStateState} \seq
    s_1[\procA][\procB] ! \labelAndMsg{l}{\markerEndStateState} \seq \zero
    \end{cases}
    \\
P_{\procB} \is & \;
    s_1[\procB][\procA] ? \labelAndVar{l}{x} \seq \zero
    \\
P_{\procC} \is & \;
    \ExtCh
    \begin{cases}
    s_2[\procC][\procA] ? \labelAndVar{l_1}{x} \seq
    x[\procB] ! \labelAndMsg{l}{\markerEndStateState} \seq \zero
    \\
    s_2[\procC][\procA] ? \labelAndType{l_2}{x} \seq \zero
    \end{cases}
\end{align*}
}

In this example, the process definition typing context is always empty so we omit~it.
To fit it within the page limits, we give the typing derivation in pieces.
We use numbers $(0)-(5)$ to refer to the typing derivation for the respective branch.
It should be read from bottom to top, starting from $(0)$.
We start with the initial part until typing we arrive at typing the parallel composition.

\vspace{-2ex}
{ \small
\begin{mathpar}
    \inferrule*[right=\procTypingParallel (twice)]{
        \inferrule*{
            (3)
            }{
            s_1[\procA] \hasType q_0,
            s_2[\procA] \hasType q_4
            \types
            P_{\procA}
        }
        \\
\inferrule*{
            (4)
            }{
            s_1[\procB] \hasType q_2
            \types
            P_{\procB}
        }
        \\
\inferrule*{
            (5)
            }{
            s_2[\procC] \hasType q_7
            \types
            P_{\procC}
        }
    }{
        (2): \quad
        \typingContextTwo_{s_1},
        \typingContextTwo_{s_2}
        \types
        P_{\procA} \parallel
        P_{\procB} \parallel
        P_{\procC}
    }
\end{mathpar}
\begin{mathpar}
    \inferrule*[right=\procTypingRestr]{
        \typingContextTwo_{s_2} =
            s_2[\procA] \hasType q_4,
            s_2[\procC] \hasType q_7 \\
\inferrule*{
            (2)
        }{
            \typingContextTwo_{s_1},
            \typingContextTwo_{s_2}
            \types
            P_{\procA} \parallel
            P_{\procB} \parallel
            P_{\procC}
        }
    }{
        (1): \quad
        \typingContextTwo_{s_1}
        \types
        (\restr s_2 \hasType \CSMabb{B}) \;
        P_{\procA} \parallel
        P_{\procB} \parallel
        P_{\procC}
    }
\end{mathpar}
\begin{mathpar}
    \inferrule*[right=\procTypingRestr]{
        \typingContextTwo_{s_1} =
            s_1[\procA] \hasType q_0,
            s_1[\procB] \hasType q_2 \\
\inferrule*{
            (1)
        }{
            \typingContextTwo_{s_1}
            \types
            (\restr s_2 \hasType \CSMabb{B}) \;
            P_{\procA} \parallel
            P_{\procB} \parallel
            P_{\procC}
        }
    }{
        (0): \quad
        \emptyset
        \types
        (\restr s_1 \hasType \CSMabb{A})
        (\restr s_2 \hasType \CSMabb{B}) \;
        P_{\procA} \parallel
        P_{\procB} \parallel
        P_{\procC}
    }
\end{mathpar}
}

We apply the rule for restrictions \procTypingRestr and then the one for parallel composition~\procTypingParallel.
Let us give the typing derivations for the individual branches.

\vspace{-2ex}
{ \scriptsize
\begin{mathpar}
    \inferrule*[right=\procTypingIntCh]{
        \inferrule*[left=\procTypingEnd]{
            \inferrule*[left=\procTypingIntCh]{
                \delta(q_0) =
                \set{
                (\snd{\procA}{\procB}{\labelAndType{l}{\markerEndStateType}}, q_1)
                }
                \\
\inferrule*[left=\procTypingEnd]{
                    \inferrule*[left=\procTypingZero]{
                    }{
                        \emptyset \types \zero
                    }
                    \\
\EndState(q_1)
                }{
                    s_1[\procA] \hasType q_1
                    \types
                    \zero
                }
            }{
                s_1[\procA] \hasType q_0
                \types
                s_1[\procA][\procB] ! \labelAndMsg{l}{\markerEndStateState} \seq \zero
            }
        }{
            s_1[\procA] \hasType q_0,
            s_2[\procA] \hasType q_6
            \types
            s_1[\procA][\procB] ! \labelAndMsg{l}{\markerEndStateState} \seq \zero
        }
        \\
\inferrule*[left=\procTypingEnd]{
            \inferrule*[left=\procTypingZero]{
            }{
                \emptyset \types \zero
            }
            \\
            \EndState(q_5)
        }{
            s_2[\procA] \hasType q_5
            \types
            \zero
        }
        \\
\delta(q_4) =
        \set{
        (\snd{\procA}{\procC}{\labelAndType{l_1}{q_0}}, q_5),
        (\snd{\procA}{\procC}{\labelAndType{l_2}{\markerEndStateType}}, q_6)
        }
    }{
        (3): \quad
        s_1[\procA] \hasType q_0,
        s_2[\procA] \hasType q_4
        \types
        P_{\procA}
    }
\end{mathpar}

\begin{mathpar}
    \inferrule*[right=\procTypingExtCh]{
        \delta(q_2) =
        \set{
        (\rcv{\procA}{\procB}{\labelAndType{l}{\markerEndStateType}}, q_3)
        }
        \\
\inferrule*[right=\procTypingEnd]{
            \inferrule*[left=\procTypingZero]{
            }{
                \emptyset \types \zero
            }
            \\
            \EndState(q_3)
        }{
            s_1[\procB] \hasType q_3
            \types
            \zero
        }
    }{
        (4): \quad
        s_1[\procB] \hasType q_2
        \types
        P_{\procB}
    }

\end{mathpar}

\begin{mathpar}
    \inferrule*[right=\procTypingExtCh]{
        \inferrule*[left=\procTypingEnd]{
            \inferrule*[left=\procTypingIntCh]{
                \delta(q_0) =
                \set{
                (\snd{\procA}{\procB}{\labelAndType{l}{\markerEndStateType}}, q_1)
                }
                \\
\inferrule*[left=\procTypingEnd]{
                    \inferrule*[left=\procTypingZero]{
                    }{
                        \emptyset \types \zero
                    }
                    \\
\EndState(q_1)
                }{
                    x \hasType q_1 \types \zero
                }
            }{
                x \hasType q_0
                \types
                x[\procB] ! \labelAndMsg{l}{\markerEndStateState} \seq \zero
            }
            \\
\EndState(q_8)
        }{
            s_2[\procC] \hasType q_8,
            x \hasType q_0
            \types
            x[\procB] ! \labelAndMsg{l}{\markerEndStateState} \seq \zero
        }
        \\
\inferrule*[right=\procTypingEnd]{
            \inferrule*[right=\procTypingZero]{
            }{
                \emptyset \types \zero
            }
            \\
            \EndState(q_9)
        }{
            s_2[\procC] \hasType q_9
            \types
            \zero
        }
        \\
\delta(q_7) =
        \set{
        (\rcv{\procA}{\procC}{\labelAndType{l_1}{q_0}}, q_8),
        (\rcv{\procA}{\procC}{\labelAndType{l_2}{\markerEndStateType}}, q_9)
        }
    }{
        (5): \quad
        s_2[\procC] \hasType q_7
        \types
        P_{\procC}
    }
\end{mathpar}
}
\end{example}

During runtime, we have queues for each session.
For these, we define queue types and use them in queue typing contexts.
\begin{definition}
Queue types are defined by the following grammar:
\begin{grammar}
    \queueType \is
        \labelAndVar{l}{L} \cat \queueType
    |   \emptystring
\end{grammar}
A \emph{syntactic queue typing context} is defined by the following grammar:
\begin{grammar}
    \typingContextThree \is
        \typingContextThree, s[\procA][\procB] \hasQueueType \queueType
    |   \emptyset
\end{grammar}
A syntactic queue typing context is a \emph{queue typing context} if every element has at most one type.
Here, we do only consider queue typing contexts.
We consider queue typing contexts to be equivalent up to reordering and, thus, we may also treat them as mappings.
\end{definition}

While the (second) typing context specifies states for each participant, the queue typing context specifies the content of the queues.
This is all we need to define reductions following the respective communicating state machine.

\newcommand{\typingReductionIntCh}{\textsc{TR-$\IntCh$}\xspace}
\newcommand{\typingReductionExtCh}{\textsc{TR-$\ExtCh$}\xspace}

\begin{definition}
We define the reductions for typing contexts as follows:
\begin{mathpar}
\inferrule*[right=\typingReductionIntCh]{
    q
        \xrightarrow{\snd{\procA}{\procB}{\labelAndVar{l}{L}}}
    q'
}{
    s[\procA] \hasType q,
        \typingContextTwo
        \typingContextCat
        s[\procA][\procB] \hasQueueType \queueType,
        \typingContextThree
        \redto
    s[\procA] \hasType q',
        \typingContextTwo
        \typingContextCat
        s[\procA][\procB] \hasQueueType \queueType \cat \labelAndVar{l}{L},
        \typingContextThree
}

\inferrule*[right=\typingReductionExtCh]{
    q
        \xrightarrow{\rcv{\procB}{\procA}{\labelAndType{l}{L}}}
    q'
}{
    s[\procA] \hasType q,
        \typingContextTwo
        \typingContextCat
        s[\procB][\procA] \hasQueueType \labelAndVar{l}{L} \cat \queueType,
        \typingContextThree
        \redto
    s[\procA] \hasType q',
        \typingContextTwo
        \typingContextCat
        s[\procB][\procA] \hasQueueType \queueType,
        \typingContextThree
}
\end{mathpar}
These rules mimic exactly the semantics of communicating state machines.
\end{definition}

We show that reductions for typing contexts are preserved when adding type bindings to the typing contexts.

\begin{lemma}
\label{lm:typing-reduction-cong}
Let $\typingContextTwo_1, \typingContextTwo'_1$, and $\typingContextTwo_2$ be typing contexts and
$ \typingContextThree_1, \typingContextThree'_1$, and $\typingContextThree_2$ be queue typing contexts.
If
    $
        \typingContextTwo_1
            \typingContextCat
            \typingContextThree_1
        \redto
        \typingContextTwo'_1
            \typingContextCat
            \typingContextThree'_1
    $,
then
    $
        \typingContextTwo_1,
            \typingContextTwo_2
            \typingContextCat
            \typingContextThree_1,
            \typingContextThree_2
        \redto
        \typingContextTwo_1',
            \typingContextTwo_2
            \typingContextCat
            \typingContextThree_1',
            \typingContextThree_2
    $.
\end{lemma}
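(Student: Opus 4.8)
The plan is to prove the lemma by a direct case analysis on the rule used to derive the hypothesis $\typingContextTwo_1 \typingContextCat \typingContextThree_1 \redto \typingContextTwo'_1 \typingContextCat \typingContextThree'_1$. There are exactly two reduction rules for typing contexts, \typingReductionIntCh and \typingReductionExtCh, and the key observation that drives the whole argument is that each of these rules already carries an arbitrary \emph{frame}---the context $\typingContextTwo$ for state bindings and the context $\typingContextThree$ for queue bindings---that the reduction leaves entirely untouched. Proving the lemma therefore reduces to absorbing the extra bindings $\typingContextTwo_2$ and $\typingContextThree_2$ into these frames and re-applying the very same rule with the same side condition.

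Concretely, in the \typingReductionIntCh case I would first read off from the rule that there are $s, \procA, \procB, q, q', \queueType$, a payload $\labelAndVar{l}{L}$, and frames $\typingContextTwo, \typingContextThree$ such that $\typingContextTwo_1 = s[\procA] \hasType q, \typingContextTwo$, $\typingContextThree_1 = s[\procA][\procB] \hasQueueType \queueType, \typingContextThree$, $\typingContextTwo'_1 = s[\procA] \hasType q', \typingContextTwo$, and $\typingContextThree'_1 = s[\procA][\procB] \hasQueueType \queueType \cat \labelAndVar{l}{L}, \typingContextThree$, subject to the transition $q \xrightarrow{\snd{\procA}{\procB}{\labelAndVar{l}{L}}} q'$. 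Using the stated convention that typing contexts are identified up to reordering, I then rewrite $\typingContextTwo_1, \typingContextTwo_2$ as $s[\procA] \hasType q,\, (\typingContextTwo, \typingContextTwo_2)$ and $\typingContextThree_1, \typingContextThree_2$ as $s[\procA][\procB] \hasQueueType \queueType,\, (\typingContextThree, \typingContextThree_2)$, so that a single application of \typingReductionIntCh with the enlarged frames $\typingContextTwo, \typingContextTwo_2$ and $\typingContextThree, \typingContextThree_2$, and the unchanged side condition, produces exactly $\typingContextTwo'_1, \typingContextTwo_2 \typingContextCat \typingContextThree'_1, \typingContextThree_2$. The \typingReductionExtCh case is completely symmetric: the reduction consumes the head $\labelAndVar{l}{L}$ of the queue $s[\procB][\procA]$ and advances $s[\procA]$ across a receive transition over an arbitrary frame, into which $\typingContextTwo_2$ and $\typingContextThree_2$ can again be absorbed.

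The one bookkeeping point I would address explicitly is well-formedness: writing $\typingContextTwo_1, \typingContextTwo_2$ and $\typingContextThree_1, \typingContextThree_2$ in the conclusion presupposes that these concatenations are genuine typing (resp.\ queue typing) contexts, with no name bound twice. This is preserved automatically, because each reduction rule rewrites only the right-hand side of the bindings it names---turning $q$ into $q'$ and lengthening or shortening $\queueType$---and never introduces a fresh name, so the domains of $\typingContextTwo_1$ and $\typingContextTwo'_1$ (and of $\typingContextThree_1$ and $\typingContextThree'_1$) coincide and well-formedness transfers unchanged across the reduction. I do not expect a genuine obstacle here: the statement is the standard frame/weakening property for context reduction, and its entire content is the locality of the two reduction rules.
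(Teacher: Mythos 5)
Your proof is correct and follows essentially the same route as the paper's: inversion on the reduction to obtain the two cases \typingReductionIntCh and \typingReductionExtCh, then absorbing $\typingContextTwo_2$ and $\typingContextThree_2$ into the frame contexts already present in each rule and re-applying the same rule with the unchanged side condition. Your additional remark on preservation of well-formedness is a sensible bit of bookkeeping that the paper leaves implicit.
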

\begin{proof}
We do inversion on
    $
        \typingContextTwo_1
            \typingContextCat
            \typingContextThree_1
        \redto
        \typingContextTwo'_1
            \typingContextCat
            \typingContextThree'_1
    $, yielding two cases.
First, we have
\begin{mathpar}
    \small
    \inferrule*[right=\typingReductionIntCh]{
        q
            \xrightarrow{\snd{\procA}{\procB}{\labelAndVar{l}{L}}}
        q'
    }{
        s[\procA] \hasType q,
            \hat{\typingContextTwo}_1
            \typingContextCat
            s[\procA][\procB] \hasQueueType \queueType,
            \hat{\typingContextThree}_1
            \redto
        s[\procA] \hasType q',
            \hat{\typingContextTwo}_1
            \typingContextCat
            s[\procA][\procB] \hasQueueType \queueType \cat \labelAndVar{l}{L},
            \hat{\typingContextThree}_1
    }
\end{mathpar}
With this, it is obvious that the following holds:
\begin{mathpar}
    \small
    \inferrule*[right=\typingReductionIntCh]{
        q
            \xrightarrow{\snd{\procA}{\procB}{\labelAndVar{l}{L}}}
        q'
    }{
        s[\procA] \hasType q,
            \hat{\typingContextTwo}_1,
            \typingContextTwo_2
            \typingContextCat
            s[\procA][\procB] \hasQueueType \queueType,
            \hat{\typingContextThree}_1,
            \typingContextThree_2
            \redto
        s[\procA] \hasType q',
            \hat{\typingContextTwo}_1,
            \typingContextTwo_2
            \typingContextCat
            s[\procA][\procB] \hasQueueType \queueType \cat \labelAndVar{l}{L},
            \hat{\typingContextThree}_1,
            \typingContextThree_2
    }
\end{mathpar}
which is precisely what we have to show.
The case for \typingReductionExtCh is analogous and, therefore, omitted.
\proofEndSymbol
\end{proof}

After this small intermezzo on reductions for typing contexts, we now define the typing rules for runtime configurations.

\begin{figure}[t]
  \adjustfigure[\small]
  \begin{mathpar}
\inferrule*[right=\runtimeTypingProcDefEmpty]{
}{
      \types
      \emptystring \hasType \typingContextOne
  }

  \inferrule*[right=\runtimeTypingProcDef]{
      \typingContextOne
          \typingContextCat
          \vec{x} \hasType \vec{L}
      \types
      P \\
\typingContextOne(\pn{Q}) = \vec{L}
  }{
      \types
      (\pn{Q}[\vec{x}] = P); \Defs \hasType \typingContextOne
  }

  \inferrule*[right=\runtimeTypingProcName]{
      \typingContextOne(\pn{Q}) = \vec{L} }{
      \typingContextOne
          \typingContextCat
          \vec{c} \hasType \vec{L}
\typingContextCat
          \emptyset
      \types
      \pn{Q}[\vec{c}]
  }

  \inferrule*[right=\runtimeTypingZero]{
}{
      \typingContextOne
          \typingContextCat
          \emptyset
          \typingContextCat
          \emptyset
      \types
      \zero
  }

  \inferrule*[right=\runtimeTypingEnd]{
      \typingContextOne
          \typingContextCat
          \typingContextTwo
          \typingContextCat
          \typingContextThree
      \types
      R \\
\EndState(q)
}{
      \typingContextOne
      \typingContextCat
      \typingContextTwo,
      c \hasType q
      \typingContextCat
      \typingContextThree
          \types
      R
  }

  \inferrule*[right=\runtimeTypingParallel]{
      \typingContextOne
          \typingContextCat
          \typingContextTwo_1
          \typingContextCat
          \typingContextThree_1
          \types
          R_1 \\
\typingContextOne
          \typingContextCat
          \typingContextTwo_2
          \typingContextCat
          \typingContextThree_2
          \types
          R_2 \\
  }{
      \typingContextOne
          \typingContextCat
          \typingContextTwo_1,
          \typingContextTwo_2
          \typingContextCat
          \typingContextThree_1,
          \typingContextThree_2
          \types
          R_1 \parallel R_2
  }

  \inferrule*[right=\runtimeTypingIntCh]{
      \delta(q) \sups
      \set{(\snd{\procA}{\procB_i}{\labelAndType{l_i}{L_i}}, q_i) \mid i \in I}
      \\
\meta{\forall i \in I \st}
      \typingContextOne \typingContextCat
          \typingContextTwo , c \hasType q_i,
          \set{c_j \hasType L_j}_{j \in I\setminus \set{i}}
          \typingContextCat
          \typingContextThree
          \types P_i
  }{
      \typingContextOne
          \typingContextCat
          \typingContextTwo,
          c \hasType q,
          \set{c_i \hasType L_i}_{i \in I}
          \typingContextCat
          \typingContextThree
      \types
      \IntCh_{i \in I} c[\procB_i] ! \labelAndMsg{l_i}{c_i} \seq P_i
  }

  \inferrule*[right=\runtimeTypingExtCh]{
      \delta(q) =
      \set{(\rcv{\procB_i}{\procA}{\labelAndType{l_i}{L_i}}, q_i) \mid i \in I} \\
      \meta{\forall i \in I \st}
      \typingContextOne
          \typingContextCat
          \typingContextTwo,
          y_i \hasType L_i,
          c \hasType q_i
          \typingContextCat
          \typingContextThree
      \types
      P_i \\
  }{
      \typingContextOne
          \typingContextCat
          \typingContextTwo,
          c \hasType q
          \typingContextCat
          \typingContextThree
      \types
      \ExtCh_{i \in I} c[\procB_i] ? \labelAndVar{l_i}{y_i} \seq P_i
  }

  \inferrule*[right=\runtimeTypingRestr]{
(\vec{q}, \xi) \in \reach(\CSMabb{A})
      \\
      \typingContextTwo_s =
          \set{s[\procA] \hasType \vec{q}_\procA}_{\procA \in \ProcsOf{\CSMabb{A}}}
      \\
      \typingContextThree_s =
          \set{s[\procA][\procB] \hasQueueType \xi(\procA,\procB)}_{(\procA,\procB) \in \channelsOf{\CSMabb{A}}}
      \\
\typingContextOne
          \typingContextCat
          \typingContextTwo,
          \typingContextTwo_s
          \typingContextCat
          \typingContextThree,
          \typingContextThree_s
      \types
      R
}{
      \typingContextOne
          \typingContextCat
          \typingContextTwo
          \typingContextCat
          \typingContextThree
          \types
          (\restr s \hasType \CSMabb{A})\, R
  }

  \inferrule*[right=\runtimeTypingEmptyQueue]{
}{
      \typingContextOne
          \typingContextCat
          \emptyset
          \typingContextCat
          \set{s[\procA][\procB] \hasQueueType \emptystring}_{(\procA, \procB) \in \channelsOf{s}}
          \types
          \queueProc{s}{\emptyqueuecontent}
  }

  \inferrule*[right=\runtimeTypingQueue]{
      \typingContextOne
          \typingContextCat
          \typingContextTwo
          \typingContextCat
          \typingContextThree,
          s[\procA][\procB] \hasQueueType \queueType
          \types
          \queueProc{s}{\queuecontent[
              (\procA, \procB) \mapsto \vec{m}
          ]}
  }{
      \typingContextOne
          \typingContextCat
          \typingContextTwo,
          v \hasType L
          \typingContextCat
          \typingContextThree,
          s[\procA][\procB] \hasQueueType \labelAndVar{l}{L} \cat \queueType
          \types
          \queueProc{s}{\queuecontent[
              (\procA, \procB) \mapsto \labelAndMsg{l}{v} \cat \vec{m}
          ]}
  }
  \end{mathpar}
  \caption{Typing rules for runtime configurations; 
    $\reach(\hole)$ denotes the set of reachable configurations of the given CSM.}
  \label{fig:runtime-typing}
\end{figure}

\begin{definition}
The typing rules for runtime configurations are defined in \cref{fig:runtime-typing}.
We require all CSMs in typing derivations to be deadlock-free and to satisfy feasible eventual reception.

Most rules are analogous to the rules for processes.
For \runtimeTypingRestr, though, we do not require the CSM configuration to be initial but solely reachable, yielding typability of runtime configurations during execution.
The rules for queues are standard:
\runtimeTypingQueue types queues from the first to the last message in the queue while
\runtimeTypingEmptyQueue types empty queues.
\end{definition}

\begin{example}
\label{ex:runtime-conf-typing-delegation}
In
\cref{ex:proc-typing-delegation},
we gave a typing derivation for a process $P$ using delegation.
It is straightforward that this typing derivation can be mimicked for $\procToRuntime{P}$.
Here, we want to give a typing derivation after one reduction step, for the case where delegation happens.
We have
\begin{align*}
 R' \is  & \;
    (\restr s_1 \hasType \CSMabb{A})
    (\restr s_2 \hasType \CSMabb{B}) \;
    \zero \parallel
    P_{\procB} \parallel
    P_{\procC} \parallel
    \queueProc{s_1}{\emptystring} \parallel
    \queueProc{s_2}{\emptystring[(\procA, \procC) \mapsto \labelAndMsg{l_1}{s_1[\procA]}]}
    \enspace .
\end{align*}
We give a typing derivation for
$
        \emptyset
        \typingContextCat
        \emptyset
        \types
        R'
$, with label $(0)$.
Again, we omit the process definition typing context since it is empty throughout.

\vspace{-2ex}
{ \small
\begin{mathpar}
    \inferrule*[right=\procTypingParallel (4$\times$)]{
        \inferrule*[left=\runtimeTypingEmptyQueue]{
}{
            \emptyset
            \typingContextCat
            \typingContextThree_1
            \types
            \queueProc{s_1}{\emptystring}
        }
        \\
\inferrule*{
            (6)
        }{
            s_1[\procA] \hasType q_0,
            \typingContextCat
            \typingContextThree_2
            \types
            \queueProc{s_2}{\emptystring[(\procA, \procC) \mapsto \labelAndMsg{l_1}{s_1[\procA]}]}
        }
        \\
\inferrule*{
            (3)
            }{
            s_2[\procA] \hasType q_5
            \typingContextCat
            \emptyset
            \types
            \zero
        }
        \\
\inferrule*{
            (4)
            }{
            s_1[\procB] \hasType q_2
            \typingContextCat
            \emptyset
            \types
            P_{\procB}
        }
        \\
\inferrule*{
            (5)
            }{
            s_2[\procC] \hasType q_7
            \typingContextCat
            \emptyset
            \types
            P_{\procC}
        }
    }{
        (2): \quad
        \typingContextTwo_{s_1},
        \typingContextTwo'_{s_2}
        \typingContextCat
        \typingContextThree_{s_1},
        \typingContextThree'_{s_2}
        \types
        \zero \parallel
        P_{\procB} \parallel
        P_{\procC} \parallel
        \queueProc{s_1}{\emptystring} \parallel
        \queueProc{s_2}{\emptystring[(\procA, \procC) \mapsto \labelAndMsg{l_1}{s_1[\procA]}]}
    }
\end{mathpar}
\begin{mathpar}
    \inferrule*[right=\runtimeTypingRestr]{
        \inferrule*{
            (2)
        }{
            \typingContextTwo_{s_1},
            \typingContextTwo'_{s_2}
            \typingContextCat
            \typingContextThree_{s_1},
            \typingContextThree'_{s_2}
            \types
            \zero \parallel
            P_{\procB} \parallel
            P_{\procC} \parallel
            \queueProc{s_1}{\emptystring} \parallel
            \queueProc{s_2}{\emptystring[(\procA, \procC) \mapsto \labelAndMsg{l_1}{s_1[\procA]}]}
        }
        \\
\typingContextTwo'_{s_2} =
            s_2[\procA] \hasType q_5,
            s_2[\procC] \hasType q_7 \\
\typingContextThree'_{s_2} =
            s_2[\procA][\procC] \hasQueueType \labelAndType{l_1}{q_0},
            s_2[\procC][\procA] \hasQueueType \emptystring \\
}{
        (1): \quad
        \typingContextTwo_{s_1}
        \typingContextCat
        \typingContextThree_{s_1}
        \types
        (\restr s_2 \hasType \CSMabb{B}) \;
        \zero \parallel
        P_{\procB} \parallel
        P_{\procC} \parallel
        \queueProc{s_1}{\emptystring} \parallel
        \queueProc{s_2}{\emptystring[(\procA, \procC) \mapsto \labelAndMsg{l_1}{s_1[\procA]}]}
    }
\end{mathpar}
\begin{mathpar}
    \inferrule*[right=\runtimeTypingRestr]{
        \inferrule*{
            (1)
        }{
            \typingContextTwo_{s_1}
            \typingContextCat
            \typingContextThree_{s_1}
            \types
            (\restr s_2 \hasType \CSMabb{B}) \;
            \zero \parallel
            P_{\procB} \parallel
            P_{\procC} \parallel
            \queueProc{s_1}{\emptystring} \parallel
            \queueProc{s_2}{\emptystring[(\procA, \procC) \mapsto \labelAndMsg{l_1}{s_1[\procA]}]}
        }
        \\
\typingContextTwo_{s_1} =
            s_1[\procA] \hasType q_0,
            s_1[\procB] \hasType q_2 \\
\typingContextThree_{s_1} =
            s_1[\procA][\procB] \hasQueueType \emptystring,
s_1[\procB][\procA] \hasQueueType \emptystring \\
}{
        (0): \quad
        \emptyset
        \typingContextCat
        \emptyset
        \types
        (\restr s_1 \hasType \CSMabb{A})
        (\restr s_2 \hasType \CSMabb{B}) \;
        \zero \parallel
        P_{\procB} \parallel
        P_{\procC} \parallel
        \queueProc{s_1}{\emptystring} \parallel
        \queueProc{s_2}{\emptystring[(\procA, \procC) \mapsto \labelAndMsg{l_1}{s_1[\procA]}]}
    }
\end{mathpar}
}

The typing derivations for $(4)$ and $(5)$ are analogous to the ones in \cref{ex:proc-typing-delegation}.
The typing derivation for $(3)$ is straightforward with \runtimeTypingEnd and \runtimeTypingZero, similar to what we presented for $(3)$ in
\cref{ex:proc-typing-delegation}.
We give the typing derivation for~$(6)$:

\vspace{-2ex}
{ \small
\begin{mathpar}
    \inferrule*[right=\runtimeTypingQueue]{
        \emptyset
        \typingContextCat
        s_2[\procA][\procC] \hasQueueType \emptystring,
        s_2[\procC][\procA] \hasQueueType \emptystring
        \types
        \queueProc{s_2}{\emptystring}
    }{
        (6): \quad
        s_1[\procA] \hasType q_0,
        \typingContextCat
        s_2[\procA][\procC] \hasQueueType \labelAndType{l_1}{q_0},
        s_2[\procC][\procA] \hasQueueType \emptystring
        \types
        \queueProc{s_2}{\emptystring[(\procA, \procC) \mapsto \labelAndMsg{l_1}{s_1[\procA]}]}
    }
\end{mathpar}
}
\end{example}

The presentation of our type system is inspired by work from
\citet{DBLP:journals/pacmpl/ScalasY19}.
Thus, we want to highlight key differences.
First, we handle sender-driven choice, which allows a participant to send to different receivers and to receive from different senders, while they only consider directed choice.
In fact, our processes can choose between different options when sending messages while their work restricts to a single choice.
However, their treatment could be combined with control flow like if-then-else to cover more scenarios.
Also, \cite{DBLP:journals/pacmpl/ScalasY19} employs subtyping similar to what we do for send actions so there can still be multiple branches when sending.
In a sender-driven choice setting, there are subtleties for subtyping as one cannot simply add receives for instance.
For details, we refer to \cite{DBLP:conf/esop/LiSW24} where CSMs are considered for subtyping and protocol refinement.
With our framework, one can directly make use of their results: we can apply their subtyping algorithm to obtain another CSM against which we type check.
Second, \cite{DBLP:journals/pacmpl/ScalasY19} only considers one error scenario:
a participant would like to receive something but the first message in the respective queue does not match.
We generalise this scenario to our setting and require that the first message in all respective queues does not match.
In addition, we consider the error case where a session ended with non-empty queues.
\cite{DBLP:journals/pacmpl/ScalasY19} also considers S-deadlock freedom and we refer to a discussion in \cref{sec:typing-for-csms}.
In the next section, we will prove that our type system prevents both these scenarios.

\begin{remark}[Adding base types]
The treatment of base types and expressions in type systems is well-understood.
Hence, it should be straightforward to extend our type system to add expressions.
More specifically, one could then send the result of expressions and, hence, variables can be bound to values.
Provided with (Boolean) expressions, it is also standard to add features of control-flow like if-then-else.
For most flexible use of our results, it would make most sense if a user provided a type system for the expressions and base types they need.
Then, the type system would use what we defined for local types (and hence delegation) and the provided type system for expressions.
There is one important difference between both type systems.
While ours is linear, the one for expressions does not need to be.
With a non-linear type system, one can duplicate and drop type bindings from the typing context, using rules called \emph{contraction} and \emph{weakening}.
\end{remark}

\subsection{Soundness of Type System}

For conciseness, we assume a process definition typing context $\typingContextOne$, typing contexts
$\typingContextTwo$, $\typingContextTwo_1$, $\typingContextTwo_2$, $\ldots$,
and queue typing contexts
$\typingContextThree$, $\typingContextThree_1$, $\typingContextThree_2$, $\ldots$
in this section.

We presented a type system for processes and runtime configurations.
While we closed the reduction semantics under structural precongruence $\precongr$, we have not stated the respective rules for our type~system:

\vspace{-2ex}
{ \small
\begin{mathpar}
    \inferrule*[right=\procTypingCongr]{
        \typingContextOne
            \typingContextCat
            \typingContextTwo
            \types
            P \\
P \precongr P'
    }{
        \typingContextOne
            \typingContextCat
            \typingContextTwo
            \types
            P'
    }

    \inferrule*[right=\runtimeTypingCongr]{
        \typingContextOne
            \typingContextCat
            \typingContextTwo
            \typingContextCat
            \typingContextThree
            \types
            R \\
R \precongr R'
    }{
        \typingContextOne
            \typingContextCat
            \typingContextTwo
            \typingContextCat
            \typingContextThree
            \types
            R'
    }
\end{mathpar}
}

We show that these rules are admissible, \ie they can be added without changing the capabilities of the type system.
This allows us not to consider these rules in the following proofs but still use them if convenient.

\begin{lemma}[Admissibility of structural precongruence for runtime configuration typing]
\label{lm:str-congr-preserves-typability-runtime-confs}
Let $R_1$ and $R_2$ be well-annotated runtime configurations.
If
    $
        \typingContextOne
        \typingContextCat
        \typingContextTwo
        \typingContextCat
        \typingContextThree
            \types R_1
    $
    and
    $
        R_1 \precongr R_2
    $,
then
    $
        \typingContextOne
        \typingContextCat
        \typingContextTwo
        \typingContextCat
        \typingContextThree
            \types R_2
    $.
\end{lemma}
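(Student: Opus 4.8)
The plan is to proceed by induction on the derivation of $R_1 \precongr R_2$. Since $\precongr$ is the least precongruence containing structural congruence $\congr$ together with the extra axiom $(\restr s \hasType \CSMabb{A}) \, \queueProc{s}{\emptystring} \precongr \zero$, the cases split into the reflexivity and transitivity rules of the preorder, the closure under contexts (the substantive cases being $\parallel$ and session restriction), and the base axioms. Reflexivity is immediate and transitivity composes two applications of the induction hypothesis. The context-closure cases are handled by inverting the typing of the enclosing constructor, applying the induction hypothesis to the sub-configuration while reusing \emph{exactly the same} context split (for \runtimeTypingParallel) or the same reachable configuration $(\vec{q}, \xi)$ (for \runtimeTypingRestr), and reassembling; well-annotatedness is preserved throughout because $\precongr$ never introduces new CSMs.

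For the $\congr$ axioms I would invert the relevant typing rules and re-derive, exploiting the symmetry of $\congr$ by treating both directions. Commutativity and associativity of $\parallel$ follow because \runtimeTypingParallel merely splits $\typingContextTwo$ into $\typingContextTwo_1, \typingContextTwo_2$ and $\typingContextThree$ into $\typingContextThree_1, \typingContextThree_2$, and context concatenation is commutative and associative up to the reordering under which we already identify typing contexts. The unit law $R \parallel \zero \congr R$ relies on the fact that $\zero$ can only be typed under an empty queue context whose state bindings are all end states (via \runtimeTypingZero and \runtimeTypingEnd); both directions then follow by moving these discardable end-state bindings between the two sides. The swap of two restrictions is handled by commuting the two applications of \runtimeTypingRestr, since the reachable configurations chosen for the two distinct sessions are independent, so the order in which their bindings $\typingContextTwo_s, \typingContextThree_s$ are introduced is irrelevant. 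Scope extrusion $(\restr s \hasType \CSMabb{A}) (R_1 \parallel R_2) \congr R_1 \parallel (\restr s \hasType \CSMabb{A}) R_2$ with $s$ not free in $R_1$ uses that \runtimeTypingRestr adds only the bindings $\typingContextTwo_s, \typingContextThree_s$ concerning $s$; as $s$ does not occur free in $R_1$, these can only be consumed while typing $R_2$, so the split produced by \runtimeTypingParallel can be rearranged to push the restriction inside.

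The one genuinely new base case is the precongruence axiom $(\restr s \hasType \CSMabb{A}) \, \queueProc{s}{\emptystring} \precongr \zero$, which I would treat by inversion on the assumed typing $\typingContextOne \typingContextCat \typingContextTwo \typingContextCat \typingContextThree \types (\restr s \hasType \CSMabb{A}) \, \queueProc{s}{\emptystring}$. This derivation must end in \runtimeTypingRestr, picking some reachable $(\vec{q}, \xi) \in \reach(\CSMabb{A})$ and typing $\queueProc{s}{\emptystring}$ under $\typingContextTwo, \typingContextTwo_s \typingContextCat \typingContextThree, \typingContextThree_s$. Because all queues of $s$ are empty, the only applicable queue rule is \runtimeTypingEmptyQueue, which forces the queue context to be exactly the all-empty context; hence $\xi$ is empty and $\typingContextThree = \emptyset$. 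The state bindings can be discharged only through \runtimeTypingEnd, so every binding in $\typingContextTwo$ and every $\vec{q}_\procA$ must satisfy $\EndState(\hole)$. Consequently $\typingContextThree = \emptyset$ and $\typingContextTwo$ consists solely of end-state bindings, whence $\zero$ is typable under the same $\typingContextTwo \typingContextCat \typingContextThree$ by repeated use of \runtimeTypingEnd followed by \runtimeTypingZero.

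I expect the main obstacle to be the bookkeeping in the scope-extrusion and restriction-swap cases, where the freedom of \runtimeTypingRestr to choose any reachable configuration must be tracked carefully when commuting restrictions past $\parallel$; this is coupled with the inversion required for the $(\restr s \hasType \CSMabb{A}) \, \queueProc{s}{\emptystring} \precongr \zero$ axiom, whose correctness hinges precisely on the fact that \runtimeTypingEnd can discard state bindings only for end states and cannot discard queue bindings at all.
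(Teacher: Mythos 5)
Your proposal is correct and follows essentially the same route as the paper's proof: a case analysis on the congruence axioms (handled in both directions) plus the extra axiom $(\restr s \hasType \CSMabb{A})\,\queueProc{s}{\emptystring} \precongr \zero$, with the same key observations — that \runtimeTypingParallel's context splits commute and reassociate, that $\zero$ forces an empty queue context and end-state bindings dischargeable by \runtimeTypingEnd, that the two \runtimeTypingRestr applications commute, and that for scope extrusion the $s$-bindings can be routed entirely to the $R_2$ side. The only difference is presentational: you make the closure under contexts and transitivity explicit as induction cases, which the paper leaves implicit.
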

\begin{proof}
We first consider the cases for structural congruence $\congr$ and then the additional ones for structural precongruence.
We do a case analysis on $\congr$ and reason for both directions.
Subsequently, we consider the two rules for $\precongr$.
\begin{itemize}
 \item $R_1 \parallel R_2
        \congr
        R_2 \parallel R_1$: \\
        By inversion, we know that \runtimeTypingParallel is the first rule applied in the typing derivation.
        This rule is symmetric so basically the typing derivation works.
 \item $(R_1 \parallel R_2) \parallel R_3
        \congr
        R_1 \parallel (R_2 \parallel R_3)$: \\
        By inversion, we know that \runtimeTypingParallel is the first and second rule applied in the typing derivation.
        It is easy to see that the typing derivation can be rearranged to match the structure.
 \item $R \parallel \zero
        \congr
        R$: \\
        First, assume that there is a typing derivation

        \vspace{-2ex}
        { \small
        \begin{mathpar}
            \inferrule*[right=\runtimeTypingParallel]{
                \typingContextOne
                    \typingContextCat
                    \typingContextTwo_1
                    \typingContextCat
                    \typingContextThree_1
                    \types
                    R \\
\typingContextOne
                    \typingContextCat
                    \typingContextTwo_2
                    \typingContextCat
                    \typingContextThree_2
                    \types
                    \zero \\
            }{
                \typingContextOne
                    \typingContextCat
                    \typingContextTwo_1,
                    \typingContextTwo_2
                    \typingContextCat
                    \typingContextThree_1,
                    \typingContextThree_2
                    \types
                    R \parallel \zero
            }
        \end{mathpar}
        }

        We show there is a typing derivation
        $
            \typingContextOne
                \typingContextCat
                \typingContextTwo_1,
                \typingContextTwo_2
                \typingContextCat
                \typingContextThree_1,
                \typingContextThree_2
                \types
                R
        $.
        Inversion yields that two rules can be applied for the given typing derivation
        $
            \typingContextOne
                \typingContextCat
                \typingContextTwo_2
                \typingContextCat
                \typingContextThree_2
                \types
                \zero
        $:
        \runtimeTypingEnd
        and
        \runtimeTypingZero.
        Thus, it follows that
        $\typingContextThree_2 = \emptyset$.
Also,
        $
            \typingContextTwo_2 =
                \set{s[\procA] \hasType \vec{q}_\procA}_{\procA \in S}
        $
        for some set of participants $S$
        and
        $\EndState(\vec{q}_\procA)$
for every
        $\procA \in S$.
        By inversion, there is a typing derivation for
        $
                \typingContextOne
                    \typingContextCat
                    \typingContextTwo_1
                    \typingContextCat
                    \typingContextThree_1
                    \types
                    R
        $.
        With
        $\typingContextThree_2 = \emptyset$,
        it remains to show that there is a typing derivation
        $
            \typingContextOne
                \typingContextCat
                \typingContextTwo_1,
                \typingContextTwo_2
                \typingContextCat
                \typingContextThree_1
                \types
                R
        $.
        The only difference is the typing context
        $\typingContextTwo_2$.
        This, however, can be taken care of using \runtimeTypingEnd as in the other typing derivation, concluding this~case.
        \\
        Second, assume there is a typing derivation for $R$.
        We show there is a typing derivation for
        $
            \typingContextOne
                \typingContextCat
                \typingContextTwo
                \typingContextCat
                \typingContextThree
                \types
                R \parallel \zero
        $.
        We first apply \runtimeTypingParallel to obtain

        \vspace{-2ex}
        { \small
        \begin{mathpar}
            \inferrule*[right=\runtimeTypingParallel]{
                \typingContextOne
                    \typingContextCat
                    \typingContextTwo
                    \typingContextCat
                    \typingContextThree
                    \types
                    R \\
\inferrule*[right=\runtimeTypingZero]{
}{
                \typingContextOne
                    \typingContextCat
                    \emptyset
                    \typingContextCat
                    \emptyset
                    \types
                    \zero
                }
            }{
                \typingContextOne
                    \typingContextCat
                    \typingContextTwo
                    \typingContextCat
                    \typingContextThree
                    \types
                    R \parallel \zero
            }
        \end{mathpar}
        }

        for which the right premise is met with \runtimeTypingZero and the left premise is given by~assumption.

 \item $(\restr s \hasType \CSMabb{A}) \, (\restr s' \hasType \CSMabb{B}) \, R
        \congr
        (\restr s' \hasType \CSMabb{B}) \, (\restr s \hasType \CSMabb{A}) \, R$: \\
        By inversion, both typing derivations need to apply
        \runtimeTypingRestr
        twice in the beginning.
        It is straightforward that both rule applications do not interfere with each other, yielding the same premise to prove:
        \[
            \typingContextOne
                \typingContextCat
                \typingContextTwo,
                \typingContextTwo_s,
                \typingContextTwo_{s'}
                \typingContextCat
                \typingContextThree,
                \typingContextThree_s,
                \typingContextThree_{s'}
            \types
            R
        \]
        Thus, this is given by assumption.

 \item $(\restr s \hasType \CSMabb{A}) \, (R_1 \parallel R_2)
        \congr
        R_1 \parallel (\restr s \hasType \CSMabb{A}) \, R_2$
        and $s$ is not free in $R_1$: \\
        First, we assume there is a typing derivation for
        $(\restr s \hasType \CSMabb{A}) \, (R_1 \parallel R_2)$
        and show there is a typing derivation for
        $R_1 \parallel (\restr s \hasType \CSMabb{A}) \, R_2$.
        Applying inversion twice yields

        \vspace{-2ex}
        { \footnotesize
        \begin{mathpar}
            \inferrule*[right=\runtimeTypingRestr]{
                \inferrule*[left=\runtimeTypingParallel]{
                    \typingContextOne
                        \typingContextCat
                        \typingContextTwo_1
                        \typingContextCat
                        \typingContextThree_1
                    \types
                    R_1
                    \\
                    \typingContextOne
                        \typingContextCat
                        \typingContextTwo_2,
                        \typingContextTwo_s
                        \typingContextCat
                        \typingContextThree_2,
                        \typingContextThree_s
                    \types
                    R_2
                }{
                    \typingContextOne
                        \typingContextCat
                        \typingContextTwo_1,
                        \typingContextTwo_2,
                        \typingContextTwo_s
                        \typingContextCat
                        \typingContextThree_1,
                        \typingContextThree_2,
                        \typingContextThree_s
                    \types
                    R_1 \parallel R_2
                }
(\vec{q}, \xi) \in \reach(\CSMabb{A})
                \\
                \typingContextTwo_s =
                    \set{s[\procA] \hasType \vec{q}_\procA}_{\procA \in \ProcsOf{\CSMabb{A}}}
                \\
                \typingContextThree_s =
                    \set{s[\procA][\procB] \hasQueueType \xi(\procA,\procB)}_{(\procA,\procB) \in \channelsOf{\CSMabb{A}}}
                \\
}{
                \typingContextOne
                    \typingContextCat
                    \typingContextTwo_1,
                    \typingContextTwo_2
                    \typingContextCat
                    \typingContextThree_1,
                    \typingContextThree_2
                    \types
                    (\restr s \hasType \CSMabb{A}) \, (R_1 \parallel R_2)
            }
        \end{mathpar}
        }

        where
        $\typingContextTwo =
                    \typingContextTwo_1,
                    \typingContextTwo_2$
        and
        $\typingContextThree =
                    \typingContextThree_1,
                    \typingContextThree_2$.
        We claim we can assume that $\typingContextTwo_s$ and $\typingContextThree_s$ are used in the typing derivation for
        $
            \typingContextOne
                \typingContextCat
                \typingContextTwo_2,
                \typingContextTwo_s
                \typingContextCat
                \typingContextThree_2,
                \typingContextThree_s
            \types
            R_2
        $.
        By definition, these only contain type bindings related to $s$, which does not occur in $R_1$ by assumption.
        There might exist a typing derivation where parts of $\typingContextTwo_s$ or $\typingContextThree_s$ appear in the typing derivation for $R_1$ but these can only removed with the rules
        \runtimeTypingEnd
        (not even with
        \runtimeTypingEmptyQueue
        since this requires
        $\queueProc{s}{\emptyqueuecontent}$).
        Hence, such derivations can be mimicked in the typing derivation for $R_2$, justifying our treatment of $\typingContextTwo_s$ and $\typingContextThree_s$.
        We construct a typing derivation:

        \vspace{-2ex}
        { \small
        \begin{mathpar}
            \inferrule*[right=\runtimeTypingParallel]{
                \inferrule*[right=\runtimeTypingRestr]{
(\vec{q}, \xi) \in \reach(\CSMabb{A})
                    \\
                    \typingContextTwo_s =
                        \set{s[\procA] \hasType \vec{q}_\procA}_{\procA \in \ProcsOf{\CSMabb{A}}}
                    \\
                    \typingContextThree_s =
                        \set{s[\procA][\procB] \hasQueueType \xi(\procA,\procB)}_{(\procA,\procB) \in \channelsOf{\CSMabb{A}}}
                    \\
                    \typingContextOne
                        \typingContextCat
                        \typingContextTwo_2,
                        \typingContextTwo_s
                        \typingContextCat
                        \typingContextThree_2,
                        \typingContextThree_s
                    \types
                    R_2
                }{
                    \typingContextOne
                        \typingContextCat
                        \typingContextTwo_2
                        \typingContextCat
                        \typingContextThree_2
                    \types
                    (\restr s \hasType \CSMabb{A}) \, R_2
                }
                \\
\typingContextOne
                    \typingContextCat
                    \typingContextTwo_1
                    \typingContextCat
                    \typingContextThree_1
                \types
                R_1
            }{
                \typingContextOne
                    \typingContextCat
                    \typingContextTwo_1,
                    \typingContextTwo_2
                    \typingContextCat
                    \typingContextThree_1,
                    \typingContextThree_2
                    \types
                    R_1 \parallel (\restr s \hasType \CSMabb{A}) \, R_2
            }
        \end{mathpar}
        }

        All premises coincide with the ones of the original typing derivation, concluding this case.
        \\
        Second, we assume there is a typing derivation for
        $R_1 \parallel (\restr s \hasType \CSMabb{A}) \, R_2$
        and show there is a typing derivation for
        $(\restr s \hasType \CSMabb{A}) \, (R_1 \parallel R_2)$.
        The proof is analogous to the previous case but we do not need to reason about the treatment of $\typingContextTwo_s$ and $\typingContextThree_s$ but it suffices to show there is one typing and we can choose the respective treatment.

 \item $(\restr s \hasType \CSMabb{A}) \, \queueProc{s}{\emptystring} \precongr \zero$: \\
We assume there is a typing derivation for
        $
            \typingContextOne
                \typingContextCat
                \typingContextTwo
                \typingContextCat
                \typingContextThree
            \types
            (\restr s \hasType \CSMabb{A}) \, \queueProc{s}{\emptystring}
        $.
        We show there is a typing derivation for
        $
            \typingContextOne
                \typingContextCat
                \typingContextTwo
                \typingContextCat
                \typingContextThree
            \types
            \zero
        $.
        By inversion, we know that \runtimeTypingRestr is the last rule to be applied and we get one of the premises:
        \[
            \typingContextOne
                \typingContextCat
                \typingContextTwo,
                \typingContextTwo_s
                \typingContextCat
                \typingContextThree,
                \typingContextThree_s
            \types
            \queueProc{s}{\emptystring}
        \]
        with
        $
            \typingContextTwo_s =
                \set{s[\procA] \hasType \vec{q}_\procA}_{\procA \in \ProcsOf{\CSMabb{A}}}
        $ and $
            \typingContextThree_s =
                \set{s[\procA][\procB] \hasQueueType \xi(\procA,\procB)}_{(\procA,\procB) \in \channelsOf{\CSMabb{A}}}
        $.
        By inversion,
        \runtimeTypingEmptyQueue
        and
        \runtimeTypingEnd
        are the only rules that can be applied in the typing derivation for
        $
            \typingContextOne
                \typingContextCat
                \typingContextTwo,
                \typingContextTwo_s
                \typingContextCat
                \typingContextThree,
                \typingContextThree_s
            \types
            \zero
        $.
        Thus, we have
        $\typingContextThree = \emptyset$,
        changing our proof obligation to
        $
            \typingContextOne
                \typingContextCat
                \typingContextTwo
                \typingContextCat
                \typingContextThree_s
            \types
            \zero
        $.
        Since
        \runtimeTypingEmptyQueue
        does only change the queue typing context,
        we have that
        $
            \typingContextTwo =
                \Union_{s' \in \SessionName}
                \set{s'[\procA] \hasType \vec{q}_\procA}_{\procA \in \Procs_{s'}}
        $
        for a set of sessions $\SessionName$ that does not contain $s$ and
        $\EndState(\vec{q}_\procA)$
for every $\procA \in \Procs_{s'}$ and $s' \in \SessionName$.
        Therefore, we can also first apply \runtimeTypingEnd $\card{\typingContextTwo}$ times and
        last \runtimeTypingEmptyQueue to obtain a typing derivation for
        $
            \typingContextOne
                \typingContextCat
                \typingContextTwo
                \typingContextCat
                \typingContextThree_s
            \types
            \zero
        $.

\end{itemize}
This concludes the proof.
\proofEndSymbol
\end{proof}

We proved the admissibility lemma for the type system for runtime configurations.
The proof for the type system for processes is analogous for most cases.

\begin{lemma}[Admissibility of structural precongruence for process typing]
\label{lm:str-congr-preserves-typability-processes}
Let $P_1$ and $P_2$ be well-annotated processes.
If it holds that
    $
        \typingContextOne
        \typingContextCat
        \typingContextTwo
            \types P_1
    $
    and
    $
        P_1 \precongr P_2
    $,
then
    $
        \typingContextOne
        \typingContextCat
        \typingContextTwo
            \types P_2
    $.
\end{lemma}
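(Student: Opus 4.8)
The plan is to mirror the proof of \cref{lm:str-congr-preserves-typability-runtime-confs}, since the process typing rules are obtained from the runtime-configuration rules by deleting everything pertaining to message queues: there is no queue typing context $\typingContextThree$, no queue process $\queueProc{s}{\queuecontent}$, and hence the rules \runtimeTypingQueue and \runtimeTypingEmptyQueue play no role. The only genuine structural difference is that \procTypingRestr populates the context with the \emph{initial} states $\typingContextTwo_s = \set{s[\procA] \hasType \initialState(\CSMabb{A}_\procA)}_{\procA \in \ProcsOf{\CSMabb{A}}}$, rather than with the states of an arbitrary reachable configuration as in \runtimeTypingRestr. This difference is immaterial for the congruence cases that merely move restrictions around, because both occurrences of a restriction introduce the same initial bindings. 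I would perform the proof by case analysis on the derivation of $P_1 \precongr P_2$, treating the five rules of $\congr$ (in both directions) and then the precongruence-specific rule.

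First I would dispatch the five rules of $\congr$ exactly as in the runtime proof. Commutativity and associativity of $\parallel$ follow from the symmetry of \procTypingParallel and a rearrangement of the derivation. For $P \parallel \zero \congr P$ I would invert the derivation of $\zero$: only \procTypingZero and \procTypingEnd apply, so the companion context $\typingContextTwo_2$ (the one typing $\zero$ after the split by \procTypingParallel) consists solely of bindings $c \hasType q$ with $\EndState(q)$; these can be reintroduced on the $P$ side using \procTypingEnd, and conversely discarded. Note that the reasoning about $\typingContextThree_2 = \emptyset$ from the runtime proof simply disappears here. The two restriction-reordering rules are handled by observing that, after inverting \procTypingRestr the requisite number of times, the two derivations expose identical premises, the initial bindings for $s$ and $s'$ not interfering.

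The scope-extrusion rule $(\restr s \hasType \CSMabb{A})\,(P_1 \parallel P_2) \congr P_1 \parallel (\restr s \hasType \CSMabb{A})\, P_2$ (with $s$ not free in $P_1$) is the delicate congruence case, and I expect it to require the same argument as in the runtime proof. After inverting \procTypingRestr and \procTypingParallel one must argue that the initial bindings $\typingContextTwo_s$ may be assumed to be used entirely within the derivation of $P_2$: since $s$ does not occur free in $P_1$, any portion of $\typingContextTwo_s$ appearing on the $P_1$ side could only have been removed by \procTypingEnd, and such uses can be replayed on the $P_2$ side. This justifies pushing the restriction inward, and the symmetric direction (pulling it out) is handled analogously, where one need not even argue about attribution but simply chooses the convenient derivation. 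The remaining precongruence-specific rule $(\restr s \hasType \CSMabb{A}) \, \zero \precongr \zero$ is one-directional: inverting \procTypingRestr yields a premise $\typingContextOne \typingContextCat \typingContextTwo, \typingContextTwo_s \types \zero$, and inverting the typing of $\zero$ forces every binding in $\typingContextTwo$ and $\typingContextTwo_s$ to satisfy $\EndState(\hole)$; hence $\typingContextTwo$ alone can be discarded by iterated \procTypingEnd followed by \procTypingZero, yielding $\typingContextOne \typingContextCat \typingContextTwo \types \zero$. The main obstacle is the bookkeeping in the scope-extrusion case, namely ensuring the initial-state bindings are attributed to the correct parallel component, but this is a direct specialisation of the already-established runtime argument and carries no new difficulty.
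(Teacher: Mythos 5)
Your proposal is correct and follows essentially the same route as the paper: the paper likewise dispatches all five congruence rules by appeal to the runtime-configuration lemma and treats only the precongruence-specific case $(\restr s \hasType \CSMabb{A})\,\zero \precongr \zero$ in detail, arguing exactly as you do that inversion forces every binding in $\typingContextTwo$ and $\typingContextTwo_s$ to satisfy $\EndState(\hole)$, so that $\typingContextTwo$ can be discharged by iterated \procTypingEnd followed by \procTypingZero. Your observation that the queue-context bookkeeping from the runtime proof simply disappears is also consistent with the paper's treatment.
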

\begin{proof}
The respective cases are analogous to the ones in the proof of
\cref{lm:str-congr-preserves-typability-runtime-confs}.
We only need to consider the case where
$(\restr s \hasType \CSMabb{A}) \, \zero
    \precongr
    \zero$: \\
We assume there is a typing derivation for
    $
        \typingContextOne
            \typingContextCat
            \typingContextTwo
        \types
        (\restr s \hasType \CSMabb{A}) \, \zero
    $.
    We show there is a typing derivation for
    $
        \typingContextOne
            \typingContextCat
            \typingContextTwo
        \types
        \zero
    $.
    By inversion, we know that \procTypingRestr is the last rule to be applied and we get one of the premises:
    $
        \typingContextOne
            \typingContextCat
            \typingContextTwo,
            \typingContextTwo_s
        \types
        \zero
    $
    with
    $
        \typingContextTwo_s =
            \set{s[\procA] \hasType \initialState(\CSMabb{A}_{\procA})}_{\procA \in \ProcsOf{\CSMabb{A}}}
    $.
    By inversion,
    \procTypingZero
    and
    \procTypingEnd
    are the only rules that can be applied in the typing derivation for
    $
        \typingContextOne
            \typingContextCat
            \typingContextTwo,
            \typingContextTwo_s
        \types
        \zero
    $.
    Since
    \procTypingZero
    needs the second typing context to be empty, it is applied last and all other derivations are applications of
    \procTypingEnd.
    Therefore,
    $
        \typingContextTwo =
            \Union_{s' \in \SessionName}
            \set{s'[\procA] \hasType \vec{q}_\procA}_{\procA \in \Procs_{s'}}
    $
    for some set of sessions $\SessionName$ that does not contain $s$
    and
    $\EndState(\vec{q}_\procA)$
for every $\procA \in \Procs_{s'}$ and $s' \in \SessionName$.
    Therefore, we can also first apply \runtimeTypingEnd $\card{\typingContextTwo}$ times and
    last \runtimeTypingZero to obtain a typing derivation for
    $
        \typingContextOne
            \typingContextCat
            \typingContextTwo
            \typingContextCat
            \emptyset
        \types
        \zero
    $.
\proofEndSymbol
\end{proof}

We proceed with a few observations about our type system that we will use later in the proof for our main result.

To start, we show that a term $x$ cannot appear in a runtime configuration if there is no type binding for it in the typing context.

\begin{lemma}
\label{lm:every-variable-in-runtime-configuration-needs-typing}
Let $R$ be a well-annotated runtime configuration.
If
$\typingContextOne
\typingContextCat
\typingContextTwo
\typingContextCat
\typingContextThree
\types R$
and $x$ is not in $\typingContextTwo$,
then $x$ cannot occur in $R$.
\end{lemma}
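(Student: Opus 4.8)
The plan is to prove the contrapositive in strengthened form: for every runtime configuration $R$ with $\typingContextOne \typingContextCat \typingContextTwo \typingContextCat \typingContextThree \types R$, every variable $x$ occurring \emph{free} in $R$ lies in the domain of $\typingContextTwo$ (its variable part). The lemma is then immediate, since if $x \notin \typingContextTwo$ then $x$ cannot occur free in $R$. I would carry out the argument by induction on the derivation of the typing judgement, inverting on the last rule applied; the induction hypothesis reads ``the free variables of the typed runtime configuration are contained in the variable part of its second typing context''. Thanks to \cref{lm:str-congr-preserves-typability-runtime-confs}, the admissible rule \runtimeTypingCongr need not be treated separately.

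The bulk of the cases are the rules that introduce a variable occurrence together with its binding, and the purely structural ones. In \runtimeTypingProcName the term $\pn{Q}[\vec{c}]$ only mentions $\vec{c}$, which are exactly the entries $\vec{c} \hasType \vec{L}$ of the context; in \runtimeTypingIntCh the subject $c$ and each payload $c_i$ carry bindings $c \hasType q$ and $c_i \hasType L_i$, while each continuation $P_i$ is typed under a context whose variable part is $\typingContextTwo$ extended only by $c$ and the $c_j$, so the induction hypothesis already accounts for its free variables; in \runtimeTypingQueue the enqueued value $v$ is bound by $v \hasType L$, and its free variable (if any) is covered. The structural rules are monotone: \runtimeTypingEnd only enlarges the context (adding $c \hasType q$) while leaving $R$ unchanged, and \runtimeTypingParallel splits $\typingContextTwo$ into $\typingContextTwo_1, \typingContextTwo_2$ with the free variables of each component covered by the corresponding part; both follow by monotonicity of the containment. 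The base cases \runtimeTypingZero and \runtimeTypingEmptyQueue have no free variables at all.

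The two cases that genuinely need care are the binder rules. In \runtimeTypingExtCh the names $y_i$ are bound in the continuations $P_i$: the induction hypothesis places the free variables of $P_i$ inside $\typingContextTwo, c \hasType q_i, y_i \hasType L_i$, and removing the bound $y_i$ leaves the free variables of the whole external choice inside $\typingContextTwo$ together with $c$ (which is itself bound as $c \hasType q$ in the conclusion). In \runtimeTypingRestr the bound name is the session name $s$, and the context extension $\typingContextTwo_s$ binds only session endpoints $s[\procA]$, never variables; since restricting $s$ binds no variable, the free variables of $(\restr s \hasType \CSMabb{A})\, R$ coincide with those of $R$, and the induction hypothesis applies verbatim.

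The main obstacle is precisely the interaction of free and bound names in these binder cases, where the step ``remove the binder from the free-variable set'' must be sound. Here I would invoke the Barendregt variable convention adopted earlier in the section, which guarantees that the bound $y_i$ (resp.\ $s$) are chosen distinct from every free name, so no accidental capture can occur and the removal is clean. A second, minor point worth isolating explicitly is that session endpoints $s[\procA]$ and session names $s$ are not variables; hence neither $\typingContextTwo_s$ nor the restriction of $s$ can create or discharge any obligation about a genuine variable $x$, which keeps these subcases vacuous and the statement self-contained.
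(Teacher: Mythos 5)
Your proof is correct and rests on the same observation as the paper's: every rule that types a variable occurrence (\runtimeTypingProcName, \runtimeTypingIntCh, \runtimeTypingExtCh, \runtimeTypingQueue) demands a matching entry in the typing context, and the only context extensions that introduce genuine variables correspond to binders. The paper packages this as a two-line argument by contradiction ("some rule must handle $x$ and would require it in its context"), whereas you make the same idea precise as an induction with a strengthened free-variable invariant, carefully discharging the binder cases via the Barendregt convention — a more rigorous rendering of the same proof.
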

\begin{proof}
Towards a contradiction, assume that $x$ occurs in $R$.
Then, at some point in the typing derivation
\[
    \typingContextOne
    \typingContextCat
    \typingContextTwo
    \typingContextCat
    \typingContextThree
    \types R
\]
one of the following rules applies to handle $x$:
\runtimeTypingProcName,
\runtimeTypingIntCh, or
\runtimeTypingExtCh.
Each of them requires all variables to occur in their respective typing contexts, yielding a contradiction.
\proofEndSymbol
\end{proof}

We defined a type system for processes and one for runtime configurations.
Both are very similar and we defined runtime configurations to only have queues for active sessions.
Once a process becomes active, we turn it into a runtime configuration using~$\procToRuntime{\hole}$.
We show that this preserves typability with an empty queue typing context.

\begin{lemma}
\label{lm:typing-proc-to-runtime-configuration}
Let $P$ be a well-annotated process.
If
$\typingContextOne \typingContextCat \typingContextTwo \types P$,
then 
$\typingContextOne \typingContextCat \typingContextTwo \typingContextCat \emptyset \types \procToRuntime{P}$.
\end{lemma}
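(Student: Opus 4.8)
The plan is to prove the statement by induction on the derivation of the process-typing judgement $\typingContextOne \typingContextCat \typingContextTwo \types P$ (equivalently, by induction on the structure of $P$). The guiding observation is that the process- and runtime-typing systems are in near one-to-one correspondence: every process rule has a runtime counterpart that merely threads an additional queue typing context $\typingContextThree$, and the conversion $\procToRuntime{\hole}$ is the identity everywhere except at a session restriction, where it inserts an empty queue process $\queueProc{s}{\emptyqueuecontent}$. Accordingly, I will maintain the invariant that the queue typing context stays empty as we descend, so that at each restriction the freshly introduced (empty) queue bindings are discharged exactly by the newly inserted empty queue process.

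The routine cases are the ones where $\procToRuntime{P} = P$. For $\zero$ (rule \procTypingZero) I build the runtime derivation with \runtimeTypingZero, which also demands an empty queue context; for $\pn{Q}[\vec{c}]$ (rule \procTypingProcName) I use \runtimeTypingProcName with empty queue context; the application of \procTypingEnd is mirrored by \runtimeTypingEnd, which passes the (empty) queue context through unchanged. For the two communication prefixes, $\procToRuntime{\hole}$ again leaves the term unchanged, so I reproduce the \procTypingIntCh/\procTypingExtCh derivation using \runtimeTypingIntCh/\runtimeTypingExtCh, reusing the continuation subderivations for the $P_i$ with the queue context instantiated to $\emptyset$ (the runtime choice rules thread $\typingContextThree$ verbatim, so $\typingContextThree=\emptyset$ is preserved into each premise). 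For parallel composition, $\procToRuntime{P_1 \parallel P_2} = \procToRuntime{P_1} \parallel \procToRuntime{P_2}$; I invert \procTypingParallel to split $\typingContextTwo = \typingContextTwo_1, \typingContextTwo_2$, apply the induction hypothesis to each part to obtain runtime typings with empty queue contexts, and recombine with \runtimeTypingParallel, whose queue-context split $\emptyset,\emptyset=\emptyset$ is trivially available.

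The load-bearing case is the session restriction $P = (\restr s \hasType \CSMabb{A})\,P'$, which is exactly where $\procToRuntime{\hole}$ stops being the identity, producing $(\restr s \hasType \CSMabb{A})\,(\procToRuntime{P'} \parallel \queueProc{s}{\emptyqueuecontent})$. Inverting \procTypingRestr gives $\typingContextOne \typingContextCat \typingContextTwo, \typingContextTwo_s \types P'$ with $\typingContextTwo_s = \set{s[\procA] \hasType \initialState(\CSMabb{A}_{\procA})}_{\procA \in \ProcsOf{\CSMabb{A}}}$. I then apply \runtimeTypingRestr instantiated at the \emph{initial} configuration of $\CSMabb{A}$, which always lies in $\reach(\CSMabb{A})$: this choice makes the state bindings produced by \runtimeTypingRestr coincide with $\typingContextTwo_s$ above, and makes the queue bindings $\typingContextThree_s = \set{s[\procA][\procB] \hasQueueType \emptystring}$ all empty. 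Its premise, $\typingContextOne \typingContextCat \typingContextTwo, \typingContextTwo_s \typingContextCat \typingContextThree_s \types \procToRuntime{P'} \parallel \queueProc{s}{\emptyqueuecontent}$, is then closed by \runtimeTypingParallel, splitting the contexts so that $\procToRuntime{P'}$ receives $\typingContextTwo, \typingContextTwo_s \typingContextCat \emptyset$ (supplied by the induction hypothesis applied to $P'$) and $\queueProc{s}{\emptyqueuecontent}$ receives $\emptyset \typingContextCat \typingContextThree_s$ (supplied by \runtimeTypingEmptyQueue, which consumes precisely the empty queue bindings over $\channelsOf{\CSMabb{A}}$).

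I expect the main obstacle to be purely bookkeeping around the restriction case: one must check that instantiating \runtimeTypingRestr at the initial configuration reconciles the two different premises of \procTypingRestr and \runtimeTypingRestr (initial states versus a reachable configuration's states, and no queues versus empty queue bindings), and that the empty queue bindings $\typingContextThree_s$ are discharged by exactly one use of \runtimeTypingEmptyQueue on the inserted $\queueProc{s}{\emptyqueuecontent}$, so that the linear queue context can eventually be driven to $\emptyset$. The prefix and parallel cases are comparatively mechanical once the empty-queue-context invariant is fixed, and no appeal to the structural-precongruence admissibility lemmas is needed.
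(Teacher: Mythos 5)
Your proposal is correct and follows essentially the same route as the paper's proof: induction on the structure of $P$, observing that the shared rules thread an unchanged (empty) queue context, and, in the restriction case, instantiating \runtimeTypingRestr at the initial configuration of $\CSMabb{A}$ so that the state bindings coincide with $\typingContextTwo_s$ and the empty queue bindings are discharged by \runtimeTypingEmptyQueue on the inserted $\queueProc{s}{\emptyqueuecontent}$. No substantive difference from the paper's argument.
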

\begin{proof}
We prove this by induction on the structure of $P$.

For all except $P = P_1 \parallel P_2$ and $P = (\restr s \hasType \CSMabb{A}) \, P'$, it holds that
$\procToRuntime{P} = P$.
For all typing rules that processes and runtime configurations share, the queue typing context is not changed in the respective runtime configuration typing rule.
Thus,
$\typingContextOne \typingContextCat \typingContextTwo \typingContextCat \emptyset \types \procToRuntime{P}$.

For $P = P_1 \parallel P_2$, the claim follows directly by induction hypothesis.

Last, we consider $P = (\restr s \hasType \CSMabb{A}) \, P'$.
We have the following typing derivation

\vspace{-2ex}
{ \small
\begin{mathpar}
    \inferrule*[right=\procTypingRestr]{
        \typingContextTwo' =
            \set{s[\procA] \hasType \initialState(\CSMabb{A}_{\procA})}_{\procA \in \ProcsOf{\CSMabb{A}}} \\
\typingContextOne
            \typingContextCat
            \typingContextTwo,
            \typingContextTwo'
            \types
            P'
    }{
        \typingContextOne
            \typingContextCat
            \typingContextTwo
            \types
            (\restr s \hasType \CSMabb{A})\, P'
    }
\end{mathpar}
}

We show there is a typing derivation where the last rule is $\runtimeTypingRestr$.
This requires a reachable configuration in the respective CSM:
        $(\vec{q}, \xi) \in \reach(\CSMabb{A})$.
In this case, we choose the initial states and empty channels, which allows us to use \runtimeTypingEmptyQueue for the queues.

\vspace{-2ex}
{ \small
\begin{mathpar}
    \inferrule*[right=\runtimeTypingRestr]{
        \inferrule*[right=\runtimeTypingParallel]{
            \typingContextOne
                \typingContextCat
                \typingContextTwo,
                \typingContextTwo_s
                \typingContextCat
                \emptyset
            \types
            P'
            \\
            \inferrule*[right=\runtimeTypingEmptyQueue]{
}{
                \typingContextOne
                    \typingContextCat
                    \emptyset
                    \typingContextCat
                    \typingContextThree_s
                \types
                \queueProc{s}{\emptystring}
            }
        }{
            \typingContextOne
                \typingContextCat
                \typingContextTwo,
                \typingContextTwo_s
                \typingContextCat
                \typingContextThree_s
            \types
            (P' \parallel \queueProc{s}{\emptystring})
        }
        \\
(\vec{q}, \xi) \in \reach(\CSMabb{A})
        \\
        \typingContextTwo_s =
            \set{s[\procA] \hasType \vec{q}_\procA}_{\procA \in \ProcsOf{\CSMabb{A}}}
        \\
        \typingContextThree_s =
            \set{s[\procA][\procB] \hasQueueType \xi(\procA,\procB)}_{(\procA,\procB) \in \channelsOf{\CSMabb{A}}}
        \\
}{
        \typingContextOne
            \typingContextCat
            \typingContextTwo
            \typingContextCat
            \emptyset
            \types
            (\restr s \hasType \CSMabb{A})\, (P' \parallel \queueProc{s}{\emptystring})
    }
\end{mathpar}
}

where $\vec{q}_\procA = \initialState(\CSMabb{A}_{\procA})$ for every $\procA$ and $\xi(\procA, \procB) = \emptystring$ for every $\procA, \procB$.
Thus, $(\vec{q}, \xi)$ is clearly reachable.
Also, both second typing contexts then coincide: $\typingContextTwo' = \typingContextTwo_s$.
Thus, $ \typingContextOne
            \typingContextCat
            \typingContextTwo,
            \typingContextTwo_s
            \typingContextCat
            \emptyset
        \types
        P'
    $, the last premise to satisfy,
follows from the induction hypothesis.
\proofEndSymbol
\end{proof}

With the following lemma, we show that, if there is a typing derivation for a process, then the queue typing context is empty.

\begin{lemma}
\label{lm:proc-typing-queue-context-empty}
If $P$ is a well-annotated process such that
$
    \typingContextOne
    \typingContextCat
    \typingContextTwo
    \typingContextCat
    \typingContextThree
        \types
    P
$,
then $\typingContextThree = \emptyset$.
\end{lemma}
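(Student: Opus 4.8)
The plan is to proceed by induction on the derivation of the runtime typing judgment $\typingContextOne \typingContextCat \typingContextTwo \typingContextCat \typingContextThree \types P$, exploiting throughout the single syntactic fact that a \emph{process} $P$ (as opposed to a general runtime configuration $R$) contains no queue subterm $\queueProc{s}{\queuecontent}$ and no $\err$. The key observation is that, among the runtime typing rules of \cref{fig:runtime-typing}, the only two whose conclusion types a term carrying a non-empty queue typing context ``out of nowhere'' are \runtimeTypingQueue and \runtimeTypingEmptyQueue, and both have a queue term $\queueProc{s}{\queuecontent}$ as their subject. Since $P$ is a process, neither of these can be the last rule of the derivation, so every binding that could ever enter $\typingContextThree$ is ruled out at the leaves.

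First I would dispatch the rules that fix $\typingContextThree$ syntactically: \runtimeTypingZero and \runtimeTypingProcName both place $\emptyset$ as the queue context in their conclusion, giving $\typingContextThree = \emptyset$ directly. For the rules that merely thread or split the queue context, I would appeal to the induction hypothesis after checking that process-hood is inherited by the relevant subterms. Concretely: in \runtimeTypingEnd the subject and the queue context are unchanged, so the IH on the premise yields $\typingContextThree=\emptyset$; in \runtimeTypingParallel the subject $P_1\parallel P_2$ has process subterms $P_1,P_2$, and the conclusion's context $\typingContextThree_1,\typingContextThree_2$ is empty because the IH gives $\typingContextThree_1=\typingContextThree_2=\emptyset$; and in \runtimeTypingIntCh and \runtimeTypingExtCh the continuations are again processes (the grammar uses $P_i$, not $R_i$, under the choice prefixes) sharing the same queue context as the conclusion, so the IH closes these cases as well. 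The rules \runtimeTypingProcDefEmpty and \runtimeTypingProcDef type the definition block $\Defs$ rather than the runtime judgment for $P$, so they never appear as the last step here.

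The only mildly subtle case, and the one I expect to be the main obstacle, is \runtimeTypingRestr for $P=(\restr s \hasType \CSMabb{A})\,P'$. There the conclusion's queue context $\typingContextThree$ is augmented in the premise to $\typingContextThree,\typingContextThree_s$, where $\typingContextThree_s = \set{s[\procA][\procB] \hasQueueType \xi(\procA,\procB)}_{(\procA,\procB)\in\channelsOf{\CSMabb{A}}}$ binds a (generally non-empty) queue type to each channel of the session. The point to get right is that $P'$ is still a process, so the IH applies to the premise and forces the \emph{augmented} context to be empty, $\typingContextThree,\typingContextThree_s=\emptyset$; since this is a disjoint union of contexts (entries are unique up to reordering), both parts vanish and in particular $\typingContextThree=\emptyset$, as required. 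Note this also shows $\typingContextThree_s=\emptyset$, i.e. such a restricted process is runtime-typeable only when $\CSMabb{A}$ has no channels to type; otherwise the case is vacuous. This is precisely why the reduction semantics must insert the empty queue via $\procToRuntime{\hole}$ before a restriction becomes typeable with a non-trivial session, which is consistent with \cref{lm:typing-proc-to-runtime-configuration}. With this case handled, the induction is complete.
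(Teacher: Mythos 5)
Your proof is correct and takes essentially the same route as the paper's: induction on the typing derivation, with \runtimeTypingZero\ and \runtimeTypingProcName\ fixing the queue context to $\emptyset$ outright, \runtimeTypingQueue\ and \runtimeTypingEmptyQueue\ excluded because their subjects are queue terms rather than processes, and every remaining rule (including \runtimeTypingRestr, where the induction hypothesis applied to the augmented context $\typingContextThree,\typingContextThree_s$ forces $\typingContextThree=\emptyset$) closed by the induction hypothesis. Your side remark that the restriction case additionally forces $\typingContextThree_s=\emptyset$ is accurate but not needed for the statement.
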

\begin{proof}
We do induction on the depth of the typing derivation.

For the base case, we consider \runtimeTypingZero, for which the claim trivially holds, and \runtimeTypingEmptyQueue, for which we reach a contradiction because $\queueProc{s}{\queuecontent}$ is no process.

Let us turn to the induction step:
\begin{itemize}
 \item \runtimeTypingProcName: trivially holds
 \item \runtimeTypingEnd: by inversion and induction hypothesis
 \item \runtimeTypingIntCh: by inversion and induction hypothesis for every $i \in I$
 \item \runtimeTypingExtCh: by inversion and induction hypothesis for every $i \in I$
 \item \runtimeTypingParallel: by inversion and induction hypothesis twice
 \item \runtimeTypingRestr: by inversion and induction hypothesis
 \item \runtimeTypingQueue: contradiction because $\queueProc{s}{\queuecontent[(\procA, \procB) \mapsto \labelAndMsg{l}{v} \cat \vec{m}]}$ is no process
\end{itemize}
This concludes the proof.
\proofEndSymbol
\end{proof}

In our type system, we type a queue from the first to the last element, when using \runtimeTypingQueue.
Thus, when applying inversion for this rule, we only get the type for the first element of a non-empty queue.
The following lemma allows us to also obtain the type for its last element.

\begin{lemma}[Message list reversal]
\label{lm:typing-message-list-reversal}
Let
$\queueType$
be a queue type.
If
\[
\typingContextOne \typingContextCat \typingContextTwo \typingContextCat
\typingContextThree,
s[\procA][\procB] \hasQueueType \queueType
\types
\queueProc{s}{\queuecontent[
            (\procA, \procB) \mapsto
                \vec{m}
        ]}, \text{ then}
\]
\[
\typingContextOne
\typingContextCat
\typingContextTwo,
v \hasType L
\typingContextCat
\typingContextThree,
s[\procA][\procB] \hasQueueType \queueType \cat \labelAndType{l}{L}
\types
\queueProc{s}{\queuecontent[
            (\procA, \procB) \mapsto
                \vec{m} \cat \labelAndMsg{l}{v}
        ]}
        \enspace .
\]
\end{lemma}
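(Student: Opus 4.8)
The plan is to proceed by structural induction on the derivation of the hypothesis judgement $\typingContextOne \typingContextCat \typingContextTwo \typingContextCat \typingContextThree, s[\procA][\procB] \hasQueueType \queueType \types \queueProc{s}{\queuecontent[(\procA,\procB)\mapsto\vec{m}]}$. The first observation I would record is that the subject is a bare queue process $\queueProc{s}{\queuecontent[\dots]}$, so the only rules that can conclude such a judgement are \runtimeTypingEmptyQueue, \runtimeTypingQueue, and \runtimeTypingEnd; all other runtime-typing rules have a syntactically different form at their conclusion. This lets me drive the induction purely by inversion on the last rule used.

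For the base case, the last rule is \runtimeTypingEmptyQueue. Then every channel, and in particular $s[\procA][\procB]$, is empty, so $\vec{m}=\emptystring$ and $\queueType=\emptystring$, while $\typingContextTwo=\emptyset$. Here a single application of \runtimeTypingQueue suffices: from the empty-queue judgement it yields the judgement for $\queueProc{s}{\queuecontent[(\procA,\procB)\mapsto\labelAndMsg{l}{v}]}$ with $v\hasType L$ added to the value context and $s[\procA][\procB]$ retyped to $\labelAndType{l}{L}$, which is exactly the desired conclusion since $\queueType\cat\labelAndType{l}{L}=\labelAndType{l}{L}$ and $\vec{m}\cat\labelAndMsg{l}{v}=\labelAndMsg{l}{v}$.

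The inductive cases all follow the same pattern: I invert the last rule, apply the induction hypothesis to the (strictly smaller) premise, and then re-apply the same rule. If the last rule is \runtimeTypingQueue peeling the front message off a channel $(\procC,\procD)\neq(\procA,\procB)$, the content and type of $s[\procA][\procB]$ are untouched, so the induction hypothesis appends $\labelAndMsg{l}{v}$ there and re-applying \runtimeTypingQueue restores the front message of $(\procC,\procD)$; the freshly introduced $v\hasType L$ binding sits harmlessly in the value context. If \runtimeTypingQueue peels the front of $s[\procA][\procB]$ itself, then $\vec{m}=\labelAndMsg{l'}{v'}\cat\vec{m}'$, $\queueType=\labelAndType{l'}{L'}\cat\queueType'$, and $v'\hasType L'$ was introduced; I apply the induction hypothesis to the premise (whose channel now carries $\vec{m}'$ and $\queueType'$) to append $\labelAndMsg{l}{v}$ at the back, and then re-apply \runtimeTypingQueue to prepend $\labelAndMsg{l'}{v'}$, recovering content $\vec{m}\cat\labelAndMsg{l}{v}$ and type $\queueType\cat\labelAndType{l}{L}$ with both $v'\hasType L'$ and $v\hasType L$ present. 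The case where the last rule is \runtimeTypingEnd, dropping some $c\hasType q$ with $\EndState(q)$, is immediate: the induction hypothesis appends the message and \runtimeTypingEnd is re-applied afterwards.

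The main obstacle is purely the bookkeeping of inversion when the derivation interleaves peelings from several channels with end-dropping: I must argue that appending at the \emph{back} of one channel commutes with removing from the \emph{front} of any channel and with \runtimeTypingEnd, which is what makes each re-application legal. The consistency between the length and shape of $\vec{m}$ and $\queueType$ is not something I need to assume separately, as it is inherited from derivability of the hypothesis; and because the induction is on the derivation rather than on the length of $\vec{m}$, the nested applications of \runtimeTypingQueue coming from other channels never obstruct the argument.
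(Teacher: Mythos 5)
Your proof is correct and follows essentially the same route as the paper's: peel the front of the queue apart by inversion and re-apply \runtimeTypingQueue after appending the new message at the back via the induction hypothesis. The only difference is that you induct on the typing derivation rather than on the length of $\vec{m}$, which makes the \runtimeTypingEnd case and the interleaved peelings of other channels explicit where the paper's inversion step glosses over them.
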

\begin{proof}
We prove this claim by induction on the length $n$ of $\vec{m} = \labelAndMsg{l_1}{v_1} \cat \ldots \cat \labelAndMsg{l_n}{v_n}$.

If $n = 0$, the claim is exactly the assumption.

For the induction step, we assume that
$\vec{m} = \labelAndMsg{l_1}{v_1} \cat \labelAndMsg{l_2}{v_2} \cat \ldots \cat \labelAndMsg{l_n}{v_n}$
and the induction hypothesis holds for $\labelAndMsg{l_2}{v_2} \cat \ldots \cat \labelAndMsg{l_n}{v_n}$.

We want to show that
\[
    \typingContextOne
    \typingContextCat
    \typingContextTwo
    \typingContextCat
    \typingContextThree,
    s[\procA][\procB] \hasQueueType \queueType
    \types
    \queueProc{s}{\queuecontent[
                (\procA, \procB) \mapsto
                \labelAndMsg{l_1}{v_1} \cat \labelAndMsg{l_2}{v_2} \cat \ldots \cat \labelAndMsg{l_n}{v_n}
            ]} \enspace \text{ implies}
\]
\[
    \typingContextOne
    \typingContextCat
    \typingContextTwo,
    v \hasType L
    \typingContextCat
    \typingContextThree,
    s[\procA][\procB] \hasQueueType \queueType \cat \labelAndType{l}{L}
    \types
    \queueProc{s}{\queuecontent[
                (\procA, \procB) \mapsto
                \labelAndMsg{l_1}{v_1} \cat \labelAndMsg{l_2}{v_2} \cat \ldots \cat \labelAndMsg{l_n}{v_n} \cat \labelAndMsg{l}{v}
            ]}
  \enspace .
\]
By inversion of the premise, we know that $\typingContextTwo = \typingContextTwo', v_1 \hasType L_1$ and $\queueType = L_1 \cat \queueType'$ in order to type $v_1$ with some type $L_1$.
Thus, we can apply \runtimeTypingQueue to our goal and then apply the induction hypothesis with $\typingContextTwo = \typingContextTwo'$ and $\queueType = \queueType'$ to conclude the proof.
\proofEndSymbol
\end{proof}

Next, we show that the queue types reflect what is in the queues of runtime~configurations.

\begin{lemma}
\label{lm:queue-types-model-queues}
If $
    \typingContextOne
        \typingContextCat
        \typingContextTwo
        \typingContextCat
        \typingContextThree,
        s[\procA][\procB] \hasQueueType
            \labelAndVar{l_1}{L_1} \cat \ldots
            \cat \labelAndVar{l_k}{L_k}
        \types
        \queueProc{s}{\queuecontent[
            (\procA, \procB) \mapsto \labelAndMsg{l'_1}{v'_1}, \ldots, \labelAndMsg{l'_n}{v'_n}
        ]}
   $,
then $k = n$ and,
for all $1 \leq i \leq k$, $l'_i = l_i$ and $v'_i \hasType L_i$.
\end{lemma}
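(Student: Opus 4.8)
The plan is to prove this by induction on the depth of the typing derivation for the judgement $\typingContextOne \typingContextCat \typingContextTwo \typingContextCat \typingContextThree, s[\procA][\procB] \hasQueueType \labelAndVar{l_1}{L_1} \cat \ldots \cat \labelAndVar{l_k}{L_k} \types \queueProc{s}{\queuecontent[(\procA, \procB) \mapsto \labelAndMsg{l'_1}{v'_1}, \ldots, \labelAndMsg{l'_n}{v'_n}]}$, proceeding by inversion on the last rule applied. The first observation I would make is that the only typing rules whose conclusion is a judgement about a queue process $\queueProc{s}{\queuecontent}$ are \runtimeTypingEmptyQueue, \runtimeTypingQueue, and \runtimeTypingEnd; every other rule concludes a judgement about a syntactically different shape of runtime configuration. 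This restricts the case analysis to exactly these three rules.

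For the base case, the last rule is \runtimeTypingEmptyQueue, which forces $\queuecontent = \emptyqueuecontent$ and assigns the queue type $\emptystring$ to every channel of $s$, in particular to $s[\procA][\procB]$. Hence $k = n = 0$ and the claim holds vacuously. The case of \runtimeTypingEnd is transparent: this rule only extends the second typing context $\typingContextTwo$ with a binding $c \hasType q$ satisfying $\EndState(q)$, leaving both $\typingContextThree$ and the queue process $R$ unchanged. Since the statement constrains only $\typingContextThree$ and $\queuecontent$, I would simply invert this rule and appeal to the induction hypothesis on its strictly shorter premise.

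The interesting case is \runtimeTypingQueue, which peels the head message $\labelAndMsg{l}{v}$ off of some channel $(\procC, \procD)$ and correspondingly prepends $\labelAndVar{l}{L}$ to that channel's queue type, adding $v \hasType L$ to the second typing context. I would split on whether $(\procC, \procD) = (\procA, \procB)$. If the channels differ, then neither the content of $(\procA, \procB)$ nor its queue type changes between conclusion and premise, so the induction hypothesis applies directly. If $(\procC, \procD) = (\procA, \procB)$, then the peeled message is exactly the head $\labelAndMsg{l'_1}{v'_1}$ and the rule forces the leading entry of the queue type to be $\labelAndVar{l'_1}{L}$; matching this against $\labelAndVar{l_1}{L_1}$ gives $l_1 = l'_1$ and $L = L_1$ together with $v'_1 \hasType L_1$. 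The premise then types $\queueProc{s}{\queuecontent[(\procA, \procB) \mapsto \labelAndMsg{l'_2}{v'_2}, \ldots, \labelAndMsg{l'_n}{v'_n}]}$ with $(\procA, \procB)$-queue type $\labelAndVar{l_2}{L_2} \cat \ldots \cat \labelAndVar{l_k}{L_k}$, and the induction hypothesis yields $k - 1 = n - 1$ and $l'_i = l_i$, $v'_i \hasType L_i$ for $2 \leq i \leq k$; combining with the head entry gives the full claim.

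I expect the proof to be almost entirely routine bookkeeping; the only point requiring care is that \runtimeTypingQueue consumes the head of an \emph{arbitrary} channel, so the induction should be organised on the derivation depth (equivalently on the total size of the queue map $\queuecontent$) rather than on the length of the single channel $(\procA, \procB)$, and interleaved \runtimeTypingEnd steps must be recognised as leaving $\typingContextThree$ untouched. Once the case analysis is set up this way, the alignment between the left-to-right orientation of the queue type (head on the left, matching the prepending performed by \runtimeTypingQueue) and the order of messages in $\queuecontent$ makes each matching step immediate, so there is no genuine obstacle beyond this orientation check.
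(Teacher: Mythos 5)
Your proof is correct and follows essentially the same route as the paper: induction on the depth of the typing derivation with inversion on the last rule, \runtimeTypingEmptyQueue as the base case and \runtimeTypingQueue peeling off the head message in the step. You are in fact slightly more thorough than the paper's own write-up, which silently glosses over the \runtimeTypingEnd pass-through case and the subcase where \runtimeTypingQueue consumes the head of a channel other than $(\procA,\procB)$ — both of which you handle correctly.
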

\begin{proof}
We do an induction on the depth of the typing derivation.

For the induction base, we have the following typing derivation:

\vspace{-2ex}
    { \small
    \begin{mathpar}
    \inferrule*[right=\runtimeTypingEmptyQueue]{
}{
        \typingContextOne
            \typingContextCat
            \emptyset
            \typingContextCat
            \set{s[\procA][\procB] \hasQueueType \emptystring}_{(\procA, \procB) \in \channelsOf{s}}
            \types
            \queueProc{s}{\emptyqueuecontent}
    }
    \end{mathpar}
    }

It is obvious that $k = 0 = n$ and there are no messages to consider.

For the induction step, we have the following typing derivation:

    \vspace{-2ex}
    { \tiny
    \begin{mathpar}
    \inferrule*[right=\runtimeTypingQueue]{
        \typingContextOne
            \typingContextCat
            \typingContextTwo
            \typingContextCat
            \typingContextThree,
            s[\procA][\procB] \hasQueueType
            \labelAndVar{l_1}{L_1} \cat
            \labelAndVar{l_2}{L_2} \cat
            \ldots
            \labelAndVar{l_k}{L_k}
            \types
            \queueProc{s}{\queuecontent[
                (\procA, \procB) \mapsto
                \labelAndMsg{l'_2}{v'_2} \cat
                \ldots
                \labelAndMsg{l'_n}{v'_n}
            ]}
    }{
        \typingContextOne
            \typingContextCat
            \typingContextTwo,
            v'_1 \hasType L_1
            \typingContextCat
            \typingContextThree,
            s[\procA][\procB] \hasQueueType
            \labelAndVar{l_1}{L_1} \cat
            \labelAndVar{l_2}{L_2} \cat
            \ldots
            \labelAndVar{l_k}{L_k}
            \types
            \queueProc{s}{\queuecontent[
                (\procA, \procB) \mapsto
                \labelAndMsg{l_1}{v'_1} \cat
                \labelAndMsg{l'_2}{v'_2} \cat
                \ldots
                \labelAndMsg{l'_n}{v'_n}
            ]}
    }
    \end{mathpar}
    }

Let us first consider the length of the queue type and the queue:
by induction hypothesis, we know that $n-1 = k-1$ and, thus, $k = n$.
Second, let us consider the labels and payload types.
For $i = 1$, the typing rule requires the labels to match and $v'_1 \hasType L_1$ is required in the typing context.
For $i > 1$, the induction hypothesis~applies.
\proofEndSymbol
\end{proof}

We also provided typing context reductions.
Here, we show that these actually preserve reachability for the CSM associated with a session.

\begin{lemma}[Typing reductions preserve reachability]
\label{lm:typing-reductions-preserve-reachability}
Let $\typingContextTwo =
            \hat{\typingContextTwo},
            \set{\typingContextTwo_s}_{s \in \SessionName}
$
be a typing context
and $\typingContextThree =
            \hat{\typingContextThree},
            \set{\typingContextThree_s}_{s \in \SessionName}
$
be a queue typing context with a set of sessions~$\SessionName$.
Assume that
\begin{itemize}
    \item $
            \hat{\typingContextTwo},
            \set{\typingContextTwo_s}_{s \in \SessionName}
            \typingContextCat
            \hat{\typingContextThree},
            \set{\typingContextThree_s}_{s \in \SessionName}
                \redto
            \hat{\typingContextTwo}',
            \set{\typingContextTwo'_s}_{s \in \SessionName}
            \typingContextCat
            \hat{\typingContextThree}',
            \set{\typingContextThree'_s}_{s \in \SessionName}
          $, and
    \item for all $s \in \SessionName$, it holds that
       there is $
    (\vec{q}, \xi) \in \reach(\CSMabb{A})
   $ such that \\
    $\typingContextTwo_s =
        \set{s[\procA] \hasType q}_{\procA \in \ProcsOf{\CSMabb{A}}}
        $
    and
    $\typingContextThree_s =
        \set{s[\procA][\procB] \hasQueueType \xi(\procA,\procB)}_{(\procA,\procB) \in \channelsOf{\CSMabb{A}}}
   $
\end{itemize}
Then, for all $s \in \SessionName$, it holds that
       there is $
    (\pvec{q}', \xi') \in \reach(\CSMabb{A})
   $ such that \\
    $\typingContextTwo'_s =
        \set{s[\procA] \hasType q'}_{\procA \in \ProcsOf{\CSMabb{A}}}
        $
    and
    $\typingContextThree'_s =
        \set{s[\procA][\procB] \hasQueueType \xi'(\procA,\procB)}_{(\procA,\procB) \in \channelsOf{\CSMabb{A}}}
   $.
\end{lemma}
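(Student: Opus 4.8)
The plan is to observe that the typing-context reduction relation $\redto$ is, by construction, a faithful image of the CSM global transition relation (\cref{app:semantics-csm}), and to argue by a one-step simulation. First I would apply inversion to the hypothesised reduction $\hat{\typingContextTwo}, \set{\typingContextTwo_s}_{s \in \SessionName} \typingContextCat \hat{\typingContextThree}, \set{\typingContextThree_s}_{s \in \SessionName} \redto \hat{\typingContextTwo}', \set{\typingContextTwo'_s}_{s \in \SessionName} \typingContextCat \hat{\typingContextThree}', \set{\typingContextThree'_s}_{s \in \SessionName}$, which yields exactly two cases, according to whether the last rule applied is \typingReductionIntCh or \typingReductionExtCh. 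In both cases the reduction alters a single endpoint binding $s_0[\procA] \hasType q$ and a single queue-type binding of one session $s_0 \in \SessionName$; every other binding (in particular all $\typingContextTwo_{s'}, \typingContextThree_{s'}$ with $s' \neq s_0$) is carried over unchanged. Hence for those sessions the reachable configuration supplied by the hypothesis witnesses the conclusion verbatim, and only $s_0$ requires work.

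For the affected session $s_0$, let $(\vec{q}, \xi) \in \reach(\CSMabb{A})$ be the configuration given by the hypothesis, so that $\vec{q}_\procA = q$ and $\xi$ agrees with the queue types recorded in $\typingContextThree_{s_0}$. In the \typingReductionIntCh case the premise provides a transition $q \xrightarrow{\snd{\procA}{\procB}{\labelAndType{l}{L}}} q'$ of $A_\procA$; since the messages of $\CSMabb{A}$ are precisely the label--payload pairs $\labelAndType{l}{L}$, this is a send transition of $\procA$'s automaton, and the CSM semantics permits the step $(\vec{q}, \xi) \xrightarrow{\snd{\procA}{\procB}{\labelAndType{l}{L}}} (\pvec{q}', \xi')$, where $\pvec{q}'$ differs from $\vec{q}$ only at $\procA$ (now $q'$) and $\xi'$ differs from $\xi$ only on the channel $\channel{\procA}{\procB}$, to which $\labelAndType{l}{L}$ is appended. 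This is exactly the configuration encoded by $\typingContextTwo'_{s_0}, \typingContextThree'_{s_0}$, and since $(\vec{q}, \xi)$ is reachable, so is $(\pvec{q}', \xi')$. The \typingReductionExtCh case is symmetric: the premise gives a receive transition $q \xrightarrow{\rcv{\procB}{\procA}{\labelAndType{l}{L}}} q'$, the hypothesis places $\labelAndType{l}{L}$ at the head of $\xi(\procB,\procA)$, and the matching CSM receive step dequeues it and moves $\procA$ to $q'$, producing a reachable $(\pvec{q}', \xi')$ realising $\typingContextTwo'_{s_0}, \typingContextThree'_{s_0}$.

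The only genuine bookkeeping lies in the decomposition step: matching the generic ``rest-of-context'' $\typingContextTwo, \typingContextThree$ appearing in the reduction rules against the structured context $\hat{\typingContextTwo}, \set{\typingContextTwo_s}_{s \in \SessionName}$ of the lemma, and checking that the single altered binding indeed sits inside one $\typingContextTwo_{s_0}$ (resp.\ one $\typingContextThree_{s_0}$) while the remaining bindings of that session stay fixed. Here I would use that distinct endpoints $s_0[\procC]$ and distinct channels are typed by distinct bindings, so that the update prescribed by the rule corresponds to the single local-state (resp.\ channel) update of the CSM semantics; \cref{lm:typing-reduction-cong} justifies isolating the reduction on $s_0$'s bindings from the untouched context. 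Once this correspondence is pinned down the conclusion is immediate, and no genuinely hard step arises: the lemma is essentially the statement that the type-level and machine-level transition systems proceed in lock-step.
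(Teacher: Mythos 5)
Your proof is correct and follows essentially the same route as the paper's: inversion on the typing-context reduction into the \typingReductionIntCh and \typingReductionExtCh cases, observing that sessions other than the affected one inherit their witness unchanged, and noting that the rule's premise $q \xrightarrow{\cdot} q'$ is exactly the CSM transition needed to step the reachable configuration to the one encoded by the updated contexts. (The only quibble: \cref{lm:typing-reduction-cong} is a weakening lemma for reductions, not a decomposition principle, so it is not quite the right citation for the bookkeeping step — but that step follows directly from inversion anyway.)
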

\begin{proof}
We do inversion on
    $
        \hat{\typingContextTwo},
        \set{\typingContextTwo_s}_{s \in \SessionName}
        \typingContextCat
        \hat{\typingContextThree},
        \set{\typingContextThree_s}_{s \in \SessionName}
            \redto
        \hat{\typingContextTwo}',
        \set{\typingContextTwo'_s}_{s \in \SessionName}
        \typingContextCat
        \hat{\typingContextThree}',
        \set{\typingContextThree'_s}_{s \in \SessionName}
    $,
yielding two cases.

First, we have
\vspace{-2ex}
{ \small
\begin{mathpar}

\inferrule*[right=\typingReductionIntCh]{
    q
        \xrightarrow{\snd{\procA}{\procB}{\labelAndVar{l}{L}}}
    q'
}{
    s[\procA] \hasType q,
        \bar{\typingContextTwo}
        \typingContextCat
        s[\procA][\procB] \hasQueueType \queueType,
        \bar{\typingContextThree}
        \redto
    s[\procA] \hasType q',
        \bar{\typingContextTwo}
        \typingContextCat
        s[\procA][\procB] \hasQueueType \queueType \cat \labelAndVar{l}{L},
        \bar{\typingContextThree}
}

\end{mathpar}
}

for some $s$, $\procA$, and $\procB$.
For every $s' \neq s$, the claim trivially holds.
For $s$,
the changes to $s[\procA]$ and $s[\procA][\procB]$ mimic the semantics of the CSM while the premise
$
    q
        \xrightarrow{\snd{\procA}{\procB}{\labelAndVar{l}{L}}}
    q'
$
ensures that such a transition is possible.

For the second case where we have

\vspace{-2ex}
{ \small
\begin{mathpar}

    \inferrule*[right=\typingReductionExtCh]{
        q
            \xrightarrow{\rcv{\procB}{\procA}{\labelAndMsg{l}{L}}}
        q'
    }{
        s[\procA] \hasType q,
            \bar{\typingContextTwo}
            \typingContextCat
            s[\procB][\procA] \hasQueueType \labelAndVar{l}{L} \cat \queueType,
            \bar{\typingContextThree}
            \redto
        s[\procA] \hasType q',
            \bar{\typingContextTwo}
            \typingContextCat
            s[\procB][\procA] \hasQueueType \queueType,
            \bar{\typingContextThree}
    }

\end{mathpar}
}

\noindent
the reasoning is analogous but mimics the receive case of the CSM semantics.
\proofEndSymbol
\end{proof}

With the substitution lemma, we prove that substituting a variable by a value with the same type preserves typability in our type system.

\begin{lemma}[Substitution Lemma]
\label{lm:substitution-lemma}
Let $R$ be a well-annotated runtime configuration.
For all $L$,
if it holds that
      $\typingContextOne
           \typingContextCat
           \typingContextTwo,
           x \hasType L
           \typingContextCat
           \typingContextThree
       \types
       R$,
then
      \[
        \typingContextOne
           \typingContextCat
           \typingContextTwo,
           v \hasType L
           \typingContextCat
           \typingContextThree
       \types
       R[v / x]
       \enspace .
      \]
\end{lemma}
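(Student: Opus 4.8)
The plan is to prove the statement by induction on the derivation of $\typingContextOne \typingContextCat \typingContextTwo, x \hasType L \typingContextCat \typingContextThree \types R$, keeping the payload type $L$ universally quantified throughout. Since the paper has already established that structural precongruence is admissible (\cref{lm:str-congr-preserves-typability-runtime-confs}), I may assume the last rule of the derivation is one of the concrete rules of \cref{fig:runtime-typing}, and proceed by case analysis on it. By the Barendregt variable convention, the substituted variable $x$ is distinct from every bound name (the receive binders $y_i$ and the restricted session names $s$), so no renaming or capture issues arise. Throughout, $v$ is a channel term ($s[\procA]$ or another variable) of the same type $L$, so that $R[v / x]$ is well formed.

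Most cases are a routine rebuilding of the derivation. For \runtimeTypingParallel the linear splitting of the typing context places $x \hasType L$ in exactly one of the two sub-contexts; I would apply the induction hypothesis to the corresponding subterm, and use \cref{lm:every-variable-in-runtime-configuration-needs-typing} to conclude that $x$ does not occur in the other subterm, so that the substitution leaves it unchanged, before recombining with \runtimeTypingParallel. For \runtimeTypingProcName and the queue rules \runtimeTypingQueue and \runtimeTypingEmptyQueue, the substitution simply replaces $x$ by $v$ in the argument list or in a stored message value, and the same binding $v \hasType L$ re-validates the rule. For \runtimeTypingRestr the session name $s$ and the bindings $\typingContextTwo_s, \typingContextThree_s$ it introduces are untouched by the substitution, and the chosen reachable configuration of $\CSMabb{A}$ is reused verbatim, so the induction hypothesis applied to the premise suffices.

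The interesting cases are \runtimeTypingEnd, \runtimeTypingIntCh and \runtimeTypingExtCh. In \runtimeTypingEnd, if the discarded capability $c \hasType q$ is precisely $x \hasType L$ (so $\EndState(L)$), then the premise types $R$ with a context no longer containing $x$; by \cref{lm:every-variable-in-runtime-configuration-needs-typing} $x$ does not occur in $R$, hence $R[v / x] = R$, and I reinstate the binding as $v \hasType L$ using \runtimeTypingEnd again. Otherwise the discarded binding is disjoint from $x$ and the induction hypothesis applies directly. For the choice rules, the key observation is that when $x$ is the subject channel $c$, the type of $x$ in each continuation $P_i$ is the successor state $q_i$, not $L$; this is exactly why the statement quantifies over all $L$, as the induction hypothesis is then invoked on the premise for $P_i$ at type $q_i$. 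When instead $x$ is a sent payload $c_k$ in \runtimeTypingIntCh, it is consumed in branch $k$ (so $x$ does not occur in $P_k$ and $P_k[v / x] = P_k$ by \cref{lm:every-variable-in-runtime-configuration-needs-typing}), while it survives at type $L$ in every other branch, where the induction hypothesis applies; in \runtimeTypingExtCh the freshness of the binders $y_i$ guarantees $x \neq y_i$, so substitution commutes with entering the continuations.

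The main obstacle I anticipate is the disciplined bookkeeping of linearity: at each rule I must argue precisely which subterm inherits the binding $x \hasType L$ and that all sibling subterms are $x$-free, so that the single occurrence of $v$ lands in the right place. \cref{lm:every-variable-in-runtime-configuration-needs-typing} is the workhorse here, converting ``absent from the context'' into ``absent from the term''. The only genuinely conceptual point, as opposed to bookkeeping, is the change of type of the subject channel across a communication step in the choice rules; the universally quantified formulation of the lemma is exactly what makes the induction go through there.
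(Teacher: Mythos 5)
Your proposal is correct and follows essentially the same route as the paper's proof: induction on the typing derivation with a rule-by-rule case analysis, using \cref{lm:every-variable-in-runtime-configuration-needs-typability} — i.e.\ the lemma that an untyped variable cannot occur in the configuration — to discharge the sibling/consumed-payload subcases, and exploiting the universal quantification over $L$ to re-instantiate the induction hypothesis at the successor state $q_i$ in the choice rules. The only cosmetic differences are your explicit appeals to the Barendregt convention and to precongruence admissibility, both of which the paper also relies on (implicitly or earlier in the section).
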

\begin{proof}
We do an induction on the depth of the typing derivation and do a case analysis on the last applied rule of the derivation.

For the induction base, we consider both rules with depth $0$ and show that there is no $R'$ such that $R \redto R'$.
For both \runtimeTypingZero and \runtimeTypingEmptyQueue, the second typing context is empty, which contradicts our assumption that $x \hasType L$ or $v \hasType L$.

For the induction step, the induction hypothesis yields that the claim holds for typing derivations of smaller depth.

\begin{itemize}
 \item \runtimeTypingProcName: \\
    We have that

    \vspace{-2ex}
    { \small
    \begin{mathpar}
        \inferrule*[right=\runtimeTypingProcName]{
            \typingContextOne(\pn{Q}) = L_1,\dots,L_n \\
        }{
            \typingContextOne
                \typingContextCat
                c_1 \hasType L_1,
                \ldots,
                c_{i-1} \hasType L_{i-1},
                x \hasType L_{i},
                c_{i+1} \hasType L_{i+1},
                \ldots
                c_n \hasType L_n
                \typingContextCat
                \emptyset
                \types
                \pn{Q}[\vec{c}]
        }
    \end{mathpar}
    }

    It is straightforward that we need to show precisely the same premise for the desired typing derivation:
    \[
            \typingContextOne
                \typingContextCat
                c_1 \hasType L_1,
                \ldots,
                c_{i-1} \hasType L_{i-1},
                v \hasType L_{i},
                c_{i+1} \hasType L_{i+1},
                \ldots,
                c_n \hasType L_n
                \typingContextCat
                \emptyset
                \types
                (\pn{Q}[\vec{c}])[v / x]
            \enspace .
    \]

 \item \runtimeTypingEnd: We have two cases. \\
    First, we have

    \vspace{-2ex}
    { \small
    \begin{mathpar}
        \inferrule*[right=\runtimeTypingEnd]{
            \typingContextOne
            \typingContextCat
            \typingContextTwo
            \typingContextCat
            \typingContextThree
            \types R \\
\EndState(q)
}{
            \typingContextOne
            \typingContextCat
            x \hasType q,
            \typingContextTwo
            \typingContextCat
            \typingContextThree
                \types
            R
        }
    \end{mathpar}
    }

    We show that

    \vspace{-2ex}
    { \small
    \begin{mathpar}
        \inferrule*[right=\runtimeTypingEnd]{
            \typingContextOne
            \typingContextCat
            \typingContextTwo
            \typingContextCat
            \typingContextThree
            \types R[v/x] \\
\EndState(q)
}{
            \typingContextOne
            \typingContextCat
            v \hasType q,
            \typingContextTwo
            \typingContextCat
            \typingContextThree
                \types
            R[v/x]
        }
    \end{mathpar}
    }

    For the first typing derivation, we have $x \hasType q, \typingContextTwo$ as typing context.
    By the fact that $\typingContextTwo$ is a typing context (and no syntactic typing context),
    $x$ does not occur in $\typingContextTwo$.
    By inversion, we have
    $
            \typingContextOne
            \typingContextCat
            \typingContextTwo
            \typingContextCat
            \typingContextThree
            \types R
    $.
    Thus, $x$ cannot occur in $R$.
    If it did, $R$ could only be typed with a typing context with $x$, which does not occur in~$\typingContextTwo$, given by contraposition of
    \cref{lm:every-variable-in-runtime-configuration-needs-typing}.
    Hence $R = R[v/x]$ and, thus, both premises coincide.

    Second, we have

    \vspace{-2ex}
    { \small
    \begin{mathpar}
        \inferrule*[right=\runtimeTypingEnd]{
            \typingContextOne
            \typingContextCat
            \typingContextTwo,
            x \hasType L
            \typingContextCat
            \typingContextThree
            \types R \\
\EndState(q)
}{
            \typingContextOne
            \typingContextCat
            c \hasType q,
            \typingContextTwo,
            x \hasType L
            \typingContextCat
            \typingContextThree
                \types
            R
        }
    \end{mathpar}
    }

    We show that

    \vspace{-2ex}
    { \small
    \begin{mathpar}
        \inferrule*[right=\runtimeTypingEnd]{
            \typingContextOne
            \typingContextCat
            \typingContextTwo,
            v \hasType L
            \typingContextCat
            \typingContextThree
            \types R[v/x] \\
\EndState(q)
}{
            \typingContextOne
            \typingContextCat
            c \hasType q,
            \typingContextTwo,
            v \hasType L
            \typingContextCat
            \typingContextThree
                \types
            R[v/x]
        }
    \end{mathpar}
    }

    By inversion of the first typing derivation, we know that both premises hold.
    The second premise is the same for both derivations.
    For the first premise, the induction hypothesis applies.

 \item \runtimeTypingIntCh: \\
    Here, we do a case analysis if $x = c$, $x = c_k$ for some $k \in I$ or neither of both.

    For the last case, we can apply inversion and the induction hypothesis applies to all cases of the right premise.

    For $x = c$, the second premise follows from inversion and the induction hypothesis, instantiated with $L = q_i$.

    We consider the case for $x = c_k$ in more detail.

    We have

    \vspace{-2ex}
    { \small
    \begin{mathpar}
    \inferrule*[right=\runtimeTypingIntCh]{
        \delta(q) \sups
        \set{(\snd{\procA}{\procB_i}{\labelAndType{l_i}{L_i}}, q_i) \mid i \in I}
        \\
\meta{\forall i \in I \setminus \set{k} \st}
        \typingContextOne \typingContextCat
            \typingContextTwo, c \hasType q_i,
            x \hasType L,
            \set{c_j \hasType L_j}_{j \in I\setminus \set{i,k}}
            \typingContextCat
            \typingContextThree
            \types P_i
        \\
        \typingContextOne \typingContextCat
            \typingContextTwo, c \hasType q_i,
            \set{c_j \hasType L_j}_{j \in I\setminus \set{k}}
            \typingContextCat
            \typingContextThree
            \types P_k
    }{
        \typingContextOne
            \typingContextCat
            \typingContextTwo,
            c \hasType q,
            x \hasType L,
            \set{c_i \hasType L_i}_{i \in I \setminus \set{k}}
            \typingContextCat
            \typingContextThree
        \types
        \IntCh_{i \in I} c[\procB_i] ! \labelAndMsg{l_i}{c_i} \seq P_i
    }
    \end{mathpar}
    }

    By inversion, we obtain all premises.

    We show that

    \vspace{-2ex}
    { \small
    \begin{mathpar}
    \inferrule*[right=\runtimeTypingIntCh]{
        \delta(q) \sups
        \set{(\snd{\procA}{\procB_i}{\labelAndType{l_i}{L_i}}, q_i) \mid i \in I}
        \\
\meta{\forall i \in I \setminus \set{k} \st}
        \typingContextOne \typingContextCat
            \typingContextTwo, c \hasType q_i,
            v \hasType L,
            \set{c_j \hasType L_j}_{j \in I\setminus \set{i,k}}
            \typingContextCat
            \typingContextThree
            \types P_i[v/x]
        \\
        \typingContextOne \typingContextCat
            \typingContextTwo, c \hasType q_i,
            \set{c_j \hasType L_j}_{j \in I\setminus \set{k}}
            \typingContextCat
            \typingContextThree
            \types P_k[v/x]
    }{
        \typingContextOne
            \typingContextCat
            \typingContextTwo,
            c \hasType q,
            v \hasType L,
            \set{c_i \hasType L_i}_{i \in I \setminus \set{k}}
            \typingContextCat
            \typingContextThree
        \types
        (\IntCh_{i \in I} c[\procB_i] ! \labelAndMsg{l_i}{c_i} \seq P_i)[v/x]
    }
    \end{mathpar}
    }

    The first premise is the same.
    The second premise, for every $i \in I \setminus \set{k}$, follows by the induction hypothesis.
    For the third premise, we claim that $x$ cannot occur in $P_k$.
    In the conclusion of the first typing derivation, we have the typing context
    $
            \typingContextTwo,
            c \hasType q,
            x \hasType L,
            \set{c_i \hasType L_i}_{i \in I \setminus \set{k}}
    $.
    Thus, by assumption that each element has at most one type in a typing context, we know that $x$ cannot occur in
    $
            \typingContextTwo,
            c \hasType q,
            \set{c_i \hasType L_i}_{i \in I \setminus \set{k}}
    $.
    With \cref{lm:every-variable-in-runtime-configuration-needs-typing}, $x$ cannot occur in $P_k$.
    Thus, $P_k[v/x] = P_k$ and the third premise in both typing derivations coincide, concluding this~case.

 \item \runtimeTypingExtCh: \\
    We have

    \vspace{-2ex}
    { \small
    \begin{mathpar}
    \inferrule*[right=\runtimeTypingExtCh]{
        \delta(q) =
        \set{(\rcv{\procB_i}{\procA}{\labelAndType{l_i}{L_i}}, q_i) \mid i \in I} \\
        \meta{\forall i \in I \st}
        \typingContextOne
            \typingContextCat
            \typingContextTwo,
            y_i \hasType L_i,
            c \hasType q_i
            \typingContextCat
            \typingContextThree
        \types
        P_i
    }{
        \typingContextOne
            \typingContextCat
            \typingContextTwo,
            c \hasType q
            \typingContextCat
            \typingContextThree
        \types
        \ExtCh_{i \in I} c[\procB_i] ? \labelAndVar{l_i}{y_i} \seq P_i
    }
    \end{mathpar}
    }

    We do a case analysis if $x = c$ or not.

    If not, the claim follows by inversion and induction hypothesis for the second premise.

    If $x = c$, there is $x \hasType q_i$ in the second premise to type $P_i[v/x]$.
    The existence of such a typing derivation follows from inversion and induction hypothesis when instantiated with $L = q_i$.

 \item \runtimeTypingParallel: \\
    There are two symmetric cases.
    We only consider one of both.
    For this, we~have

    \vspace{-2ex}
    { \small
    \begin{mathpar}
        \inferrule*[right=\runtimeTypingParallel]{
            \typingContextOne
                \typingContextCat
                \typingContextTwo_1,
                x \hasType L
                \typingContextCat
                \typingContextThree_1
                \types
                R_1 \\
\typingContextOne
                \typingContextCat
                \typingContextTwo_2
                \typingContextCat
                \typingContextThree_2
                \types
                R_2 \\
        }{
            \typingContextOne
                \typingContextCat
                \typingContextTwo_1,
                x \hasType L,
                \typingContextTwo_2
                \typingContextCat
                \typingContextThree_1,
                \typingContextThree_2
                \types
                R_1 \parallel R_2
        }
    \end{mathpar}
    }

    We show that there is a typing derivation for
    \[
            \typingContextOne
                \typingContextCat
                \typingContextTwo_1,
                v \hasType L,
                \typingContextTwo_2
                \typingContextCat
                \typingContextThree_1,
                \typingContextThree_2
                \types
                (R_1 \parallel R_2)[v / x]
    \]
    By our assumption that typing contexts have at most one type per element, $\typingContextTwo_1$ and $\typingContextTwo_2$ cannot share any names.
    By inversion,
    we have
           $ \typingContextOne
                \typingContextCat
                \typingContextTwo_2
                \typingContextCat
                \typingContextThree_2
                \types
                R_2$.
    Thus, by \cref{lm:every-variable-in-runtime-configuration-needs-typing}, $x$ cannot occur in $R_2$.
    Hence, $(R_1 \parallel R_2)[v / x]
                =
                R_1[v/x] \parallel R_2$.
    We claim the following typing derivation exists:

    \vspace{-2ex}
    { \small
    \begin{mathpar}
        \inferrule*[right=\runtimeTypingParallel]{
            \typingContextOne
                \typingContextCat
                \typingContextTwo_1,
                v \hasType L,
                \typingContextCat
                \typingContextThree_1
                \types
                R_1 \\
\typingContextOne
                \typingContextCat
                \typingContextTwo_2
                \typingContextCat
                \typingContextThree_2
                \types
                R_2 \\
        }{
            \typingContextOne
                \typingContextCat
                \typingContextTwo_1,
                v \hasType L,
                \typingContextTwo_2
                \typingContextCat
                \typingContextThree_1,
                \typingContextThree_2
                \types
                R_1[v/x] \parallel R_2
        }
    \end{mathpar}
    }

    The second premise is the same as in the original typing derivation.
    The first premise can be obtained by inversion on the original typing derivation and applying the induction hypothesis.

 \item \runtimeTypingRestr: \\
    We have a typing derivation for
    \[
        \typingContextOne
            \typingContextCat
            \typingContextTwo,
            x \hasType L,
            \typingContextTwo_s
            \typingContextCat
            \typingContextThree,
            \typingContextThree_s
            \types R
        \enspace .
    \]
    This case follows easily from inversion and applying the induction hypothesis to obtain a typing derivation for
    \[
        \typingContextOne
            \typingContextCat
            \typingContextTwo,
            v \hasType L,
            \typingContextTwo_s
            \typingContextCat
            \typingContextThree,
            \typingContextThree_s
            \types R[v/x]
        \enspace .
    \]

 \item \runtimeTypingQueue: \\
    We have a typing derivation for

    \vspace{-2ex}
    { \small
    \begin{mathpar}
        \inferrule*[right=\runtimeTypingQueue]{
            \typingContextOne
                \typingContextCat
                \typingContextTwo
                \typingContextCat
                \typingContextThree,
                s[\procA][\procB] \hasQueueType \queueType
                \types
                \queueProc{s}{\queuecontent[
                    (\procA, \procB) \mapsto \vec{m}
                ]}
        }{
            \typingContextOne
                \typingContextCat
                \typingContextTwo,
                v' \hasType L
                \typingContextCat
                \typingContextThree,
                s[\procA][\procB] \hasQueueType \labelAndVar{l}{L} \cat \queueType
                \types
                \queueProc{s}{\queuecontent[
                    (\procA, \procB) \mapsto \labelAndMsg{l}{v'} \cat \vec{m}
                ]}
        }
    \end{mathpar}
    }

    We do a case analysis if $x = v'$ or not.
    If so, the same typing derivation can simply be re-used as $v' \hasType L$ also disappears in the original typing derivation.
    If not, the claim follows by inversion and application of the induction hypothesis.
\end{itemize}
This concludes the proof of the substitution lemma.
\proofEndSymbol
\end{proof}

Now, we turn to the main result about our type system: \emph{subject reduction}.
In short, if there is a runtime configuration with a typing derivation that can take a step, then the typing contexts can also take a step and can be used to type the new runtime configuration.

Note that \cref{th:subj-red-closed} in the main text follows from \cref{thm:subject-reduction}.

\begin{theorem}[Subject Reduction]
\label{thm:subject-reduction}
Let $R$ be a well-annotated runtime configuration with a set of active sessions~$\SessionName$.
If
\begin{enumerate}[label=\textnormal{(\arabic*)}]
 \item \label{lm:subject-reduction-assumption-1} $\types \Defs \hasType \typingContextOne$,
 \item $\typingContextOne
           \typingContextCat
           \typingContextTwo
           \typingContextCat
           \typingContextThree
       \types
       R$ with
        $\typingContextTwo = \hat{\typingContextTwo}, \set{\typingContextTwo_s}_{s \in \SessionName}$
        and
        $\typingContextThree = \hat{\typingContextThree}, \set{\typingContextThree_s}_{s \in \SessionName}$,
 \item for all $s \in \SessionName$, it holds that
       there is $
    (\vec{q}, \xi) \in \reach(\CSMabb{A})
   $ such that \\
    $\typingContextTwo_s =
        \set{s[\procA] \hasType q}_{\procA \in \ProcsOf{\CSMabb{A}}}
        $
    and
    $\typingContextThree_s =
        \set{s[\procA][\procB] \hasQueueType \xi(\procA,\procB)}_{(\procA,\procB) \in \channelsOf{\CSMabb{A}}}
   $
 \item \label{lm:subject-reduction-assumption-4} $R \redto R'$,
\end{enumerate}
then there exist
$\typingContextTwo'$ and
$\typingContextThree'$
with
$\typingContextTwo \typingContextCat \typingContextThree
\redto
\typingContextTwo' \typingContextCat \typingContextThree'$
such that
$\typingContextOne
    \typingContextCat
    \typingContextTwo'
    \typingContextCat
    \typingContextThree'
\types
R'$.
\end{theorem}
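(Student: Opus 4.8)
The plan is to prove subject reduction by induction on the derivation of $R \redto R'$, following the standard structure of such proofs but carefully tracking how the three typing contexts evolve. The crucial invariant to maintain is the one expressed in assumption (3): that for each active session $s$, the state bindings $\typingContextTwo_s$ and queue bindings $\typingContextThree_s$ correspond exactly to some \emph{reachable} CSM configuration $(\vec{q},\xi)$. I would first dispatch the congruence rule \runtimeReductionCongr by appealing to \cref{lm:str-congr-preserves-typability-runtime-confs} (admissibility of structural precongruence), and the context rule \runtimeReductionContext and process-call rule \runtimeReductionProcName by straightforward inductive/unfolding arguments (for the latter using assumption (1), that $\Defs$ is typed against $\typingContextOne$, together with \cref{lm:typing-proc-to-runtime-configuration}).

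The heart of the proof lies in the two communication rules \runtimeReductionOut and \runtimeReductionIn. For \runtimeReductionOut, where participant $\procA$ sends $\labelAndMsg{l_k}{v_k}$ to $\procB_k$, I would invert the typing derivation to locate both the sending subterm (typed by \runtimeTypingIntCh) and the queue subterm (typed by \runtimeTypingQueue). The \runtimeTypingIntCh premise gives a CSM transition $q \xrightarrow{\snd{\procA}{\procB_k}{\labelAndType{l_k}{L_k}}} q_k$; I would use this to fire the corresponding typing-context reduction \typingReductionIntCh, setting $\typingContextTwo'$ to rebind $s[\procA]$ to $q_k$ and $\typingContextThree'$ to append $\labelAndType{l_k}{L_k}$ to the queue type for $\channel{\procA}{\procB_k}$. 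The continuation $P_k$ is already typed under the updated state binding and with the payload binding $c_k\hasType L_k$ \emph{removed} (ownership transferred on send); I would then retype the enlarged queue by invoking \cref{lm:typing-message-list-reversal} to push the new message's type onto the \emph{end} of the queue type, consuming exactly the binding $v_k\hasType L_k$ that the continuation relinquished. The dual rule \runtimeReductionIn is symmetric: inversion through \runtimeTypingExtCh yields a receive transition $q\xrightarrow{\rcv{\procB_k}{\procA}{\labelAndType{l_k}{L_k}}}q_k$, \cref{lm:queue-types-model-queues} guarantees the head of the queue type matches the received label and payload type, \typingReductionExtCh fires, and the \cref{lm:substitution-lemma} handles the substitution $P_k[v_k/y_k]$ replacing the variable binding $y_k\hasType L_k$ by the value binding $v_k\hasType L_k$.

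Throughout the communication cases, the binding $\typingContextThree$ must be threaded correctly through the restriction rule \runtimeTypingRestr and the parallel rule \runtimeTypingParallel; here I would use \cref{lm:typing-reduction-cong} to propagate the local typing-context reduction through the surrounding contexts, and \cref{lm:typing-reductions-preserve-reachability} to re-establish invariant (3) for the resulting $\typingContextTwo'_s,\typingContextThree'_s$ --- this is precisely the lemma asserting that a single \typingReductionIntCh or \typingReductionExtCh step maps a reachable CSM configuration to a reachable one. The two error rules \runtimeReductionErrOne and \runtimeReductionErrTwo need no positive treatment here (they produce $\err$, which is ruled out by \cref{cor:safety}, derived \emph{from} this theorem rather than needed within it), though I would confirm that a well-typed configuration satisfying feasible eventual reception and deadlock freedom cannot actually match their premises.

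The main obstacle I anticipate is the bookkeeping at the interface between the restriction rule and the communication rules: when the reducing send/receive acts on a session $s$ that is bound by some enclosing $(\restr s\hasType\CSMabb{A})$, the bindings $\typingContextTwo_s,\typingContextThree_s$ live \emph{inside} the derivation introduced by \runtimeTypingRestr, not in the ambient $\typingContextTwo,\typingContextThree$. Thus the typing-context reduction I construct operates on the \emph{augmented} contexts, and I must show that re-applying \runtimeTypingRestr with the \emph{new} reachable configuration $(\pvec{q}',\xi')$ yields a valid derivation of $(\restr s\hasType\CSMabb{A})\,R'$ with the original ambient contexts unchanged. Getting the reachability argument (\cref{lm:typing-reductions-preserve-reachability}) and the context-propagation argument (\cref{lm:typing-reduction-cong}) to line up exactly, so that the CSM transition justifying the typing move coincides with the concrete queue manipulation in the operational step, is where the care is required; the rest is routine inversion.
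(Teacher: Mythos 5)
Your proposal is essentially the paper's proof: the same case analysis, the same invariant (that each $\typingContextTwo_s,\typingContextThree_s$ mirrors a reachable CSM configuration, re-established via \cref{lm:typing-reductions-preserve-reachability}), and the same supporting lemmas (\cref{lm:str-congr-preserves-typability-runtime-confs}, \cref{lm:typing-proc-to-runtime-configuration}, \cref{lm:typing-message-list-reversal}, \cref{lm:queue-types-model-queues}, \cref{lm:substitution-lemma}, \cref{lm:typing-reduction-cong}); the only organisational difference is that the paper inducts on the depth of the \emph{typing} derivation and inverts the reduction inside each typing case, whereas you induct on the reduction derivation and invert the typing, which is an interchangeable choice here. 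One caveat on framing: the error rules \runtimeReductionErrOne and \runtimeReductionErrTwo cannot be deferred to \cref{cor:safety} (that corollary is derived \emph{from} this theorem), and since $\err$ is untypable these cases must be refuted outright within the proof --- as you do in fact go on to say, and as the paper does, deriving a contradiction with feasible eventual reception for the former and with deadlock freedom of the annotating CSM for the latter; these two cases are exactly where the well-annotation hypothesis is consumed, so they are load-bearing rather than a side confirmation.
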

\begin{proof}
We do an induction on the depth of the typing derivation
    $\typingContextOne
        \typingContextCat
        \typingContextTwo
        \typingContextCat
        \typingContextThree
    \types
    R$
and do a case analysis on the last applied rule of the derivation.

For the induction base, we consider both rules with depth $0$ and show that there is no $R'$ such that $R \redto R'$.
\begin{itemize}[leftmargin=4ex]
 \item \runtimeTypingZero: \\
        It is trivial that $\zero$ cannot reduce.
 \item \runtimeTypingEmptyQueue: \\
        In this case $R = \queueProc{s}{\emptystring}$ for which none of the reduction rules apply.
\end{itemize}

For the induction step, the induction hypothesis yields that the claim holds for typing derivations of smaller depth.
We do a case analysis on the typing rule that was applied last.

\begin{itemize}[leftmargin=4ex]
 \item \runtimeTypingProcName: \\
    We have that

    \vspace{-3ex}
    { \small
    \begin{mathpar}
        \inferrule*[right=\runtimeTypingProcName]{
            \typingContextOne(\pn{Q}) = L_1,\dots,L_n \\
        }{
            \typingContextOne
                \typingContextCat
                c_1 \hasType L_1,
                \dots,
                c_n \hasType L_n
                \typingContextCat
                \emptyset
                \types
                \pn{Q}[\vec{c}]
        }
    \end{mathpar}
    }

    Thus, we know that $R = \pn{Q}[\vec{c}]$.
    However, none of the reduction rules apply, contradicting~\ref{lm:subject-reduction-assumption-4}.

 \item \runtimeTypingEnd: \\
    We have that

    \vspace{-3ex}
    { \small
    \begin{mathpar}
        \inferrule*[right=\runtimeTypingEnd]{
            \typingContextOne
            \typingContextCat
            \typingContextTwo
            \typingContextCat
            \typingContextThree
            \types R \\
\EndState(q)
}{
            \typingContextOne
            \typingContextCat
            c \hasType q,
            \typingContextTwo
            \typingContextCat
            \typingContextThree
                \types
            R
        }
    \end{mathpar}
    }

    We have to show that

    \vspace{-2ex}
    { \small
    \begin{mathpar}
        \inferrule*[right=\runtimeTypingEnd]{
            \typingContextOne
            \typingContextCat
            \typingContextTwo'
            \typingContextCat
            \typingContextThree'
            \types R' \\
\EndState(q)
}{
            \typingContextOne
            \typingContextCat
            c \hasType q,
            \typingContextTwo'
            \typingContextCat
            \typingContextThree'
                \types
            R'
        }
    \end{mathpar}
    }

    The second premise
    $\EndState(q)$
    is the same for both.
    The first premise follows by the induction hypothesis, which also gives 
    $
    \typingContextTwo \typingContextCat \typingContextThree \redto
    \typingContextTwo' \typingContextCat \typingContextThree'
    $, concluding this case.

 \item \runtimeTypingIntCh: \\
    Inversion on the typing derivation yields that
    $R = \IntCh_{i \in I} s[\procA][\procB_i] ! \labelAndMsg{l_i}{v_i} \seq P_i$.
    None of the reduction rules apply and, thus, there is no $R'$ with $R \redto R'$, contradicting~\ref{lm:subject-reduction-assumption-4}.

 \item \runtimeTypingExtCh: \\
    Inversion on the typing derivation yields that
    $R = \ExtCh_{i \in I} s[\procA][\procB_i] ? \labelAndVar{l_i}{L_i} \seq P_i$.
    None of the reduction rules apply and, thus, there is no $R'$ with $R \redto R'$, contradicting~\ref{lm:subject-reduction-assumption-4}.

 \item \runtimeTypingParallel: \\
    We do inversion on the reduction~\ref{lm:subject-reduction-assumption-4}.
    \begin{itemize}[leftmargin=2ex]
     \item \runtimeReductionProcName: \\
        We have

        \vspace{-2ex}
        { \small
        \begin{mathpar}
            \inferrule*[right=\runtimeTypingParallel]{
                \inferrule*[left=\runtimeTypingProcName]{
                    \typingContextOne(\pn{Q}) = \vec{L}
                }{
                    \typingContextOne
                        \typingContextCat
                        \vec{c} \hasType \vec{L}
                        \typingContextCat
                        \emptyset
                        \types
                        \pn{Q}[\vec{c}] \\
                }
\typingContextOne
                    \typingContextCat
                    \typingContextTwo_2
                    \typingContextCat
                    \typingContextThree_2
                    \types
                    R_2 \\
            }{
                \typingContextOne
                    \typingContextCat
                    \vec{c} \hasType \vec{L},
                    \typingContextTwo_2
                    \typingContextCat
                    \typingContextThree_2
                    \types
                    \pn{Q}[\vec{c}] \parallel R_2
            }
        \end{mathpar}
        }

        By inversion on~\ref{lm:subject-reduction-assumption-4}, we have

        \vspace{-2ex}
        { \small
        \begin{mathpar}
            \inferrule*[right=\runtimeReductionProcName]{
                \Defs(\pn{Q}, \vec{c}) \parallel R_2
                \redto
                R'
            }{
                \pn{Q}[\vec{c}] \parallel R_2
                \redto
                R'
            }
        \end{mathpar}
        }

        By assumption that $\pn{Q}$ is defined in $\Defs$ and by definition of $\Defs$, we have that
        \[
            \Defs = \Defs_1; (\pn{Q}[\vec{x}] = P); \Defs_2
        \]
        for some $\Defs_1$ and $\Defs_2$.

        We claim there is a typing derivation for
        \[
                \typingContextOne
                    \typingContextCat
                    \vec{c} \hasType \vec{L}
                    \types
                    P[\vec{c} / \vec{x}]
        \]
        \ref{lm:subject-reduction-assumption-1} states that $\types \Defs \hasType \typingContextOne$.
        Inversion on
        \ref{lm:subject-reduction-assumption-1} for $\card{\Defs_1}$ times yields
        \[
            \typingContextOne
            \typingContextCat
            \vec{x} \hasType \vec{L}
            \types P
            \enspace .
        \]
        We obtain a typing derivation after $\card{\vec{x}}$ applications of the substitution lemma (\cref{lm:substitution-lemma}). 

        We do a case analysis on the structure of $P$ and simultaneously if $R' = \err$ for second case.

        \begin{itemize}[leftmargin=2ex]
         \item $P =
                \IntCh_{i \in I} x[\procB_i] ! \labelAndMsg{l_i}{x_i} \seq P_i$: \\
                Let us rewrite the typing context:
                $\vec{c} \hasType \vec{L} =
                    (x \hasType q,
                    \set{x_j \hasType L_j}_{j \in I},
                    \typingContextTwo_1)
                    [\vec{c}/\vec{x}]
                $.
                Without loss of generality, let $x[\vec{c}/\vec{x}] = s[\procA]$.
                Then $R' =
                    \procToRuntime{P_k[\vec{c}/\vec{x}]}
                    \parallel
                    \queueProc{s}{\queuecontent[(\procA, \procB_k) \mapsto \vec{m} \cat \labelAndMsg{l_k}{v_k}]}$
                    for some $k \in I$.
                We show there is a typing derivation for any $k \in I$:

                \vspace{-2ex}
                { \scriptsize
                \begin{mathpar}
                \inferrule*[right=\runtimeTypingParallel]{
                    \typingContextOne
                    \typingContextCat
                    (x \hasType q_k,
                        \set{x_j \hasType L_j}_{j \in I \setminus \set{k}},
                        \typingContextTwo_1)
                        [\vec{c}/\vec{x}]
                    \typingContextCat
                    \emptyset
                        \types
                    \procToRuntime{P_k[\vec{c}/\vec{x}]}
                    \\
                    \typingContextOne
                    \typingContextCat
                    (c_k \hasType L_k)[\vec{c}/\vec{x}],
                    \typingContextTwo_2
                    \typingContextCat
                    \hat{\typingContextThree}_2,
                    s[\procA][\procB_k] \hasQueueType \queueType \cat \labelAndType{l_k}{L_k[\vec{c}/\vec{x}]}
                        \types
                    \queueProc{s}{\queuecontent[(\procA, \procB_k) \mapsto \vec{m} \cat \labelAndMsg{l_k}{v_k}]}
                }{
                    \typingContextOne
                    \typingContextCat
                    (x \hasType q_k,
                    \set{x_j \hasType L_j}_{j \in I},
                        \typingContextTwo_1)
                        [\vec{c}/\vec{x}],
                    \typingContextTwo_2
                    \typingContextCat
                    \hat{\typingContextThree}_2,
                    s[\procA][\procB_k] \hasQueueType \queueType \cat \labelAndType{l_k}{L_k}
                        \types
                    R'
}
                \end{mathpar}
                }

                By inversion on
                \[
                    \typingContextOne
                    \typingContextCat
                    (x \hasType q,
                    \set{x_j \hasType L_j}_{j \in I},
                    \typingContextTwo_1)
                    [\vec{c}/\vec{x}]
                    \types
                    (\IntCh_{i \in I} x[\procB_i] ! \labelAndMsg{l_i}{x_i} \seq P_i)[\vec{c}/\vec{x}]
                    \enspace ,
                \]
                we get
                $
                    \typingContextOne
                    \typingContextCat
                    (x \hasType q_i,
                    \set{x_j \hasType L_j}_{j \in I \setminus \set{i}},
                    \typingContextTwo_1)
                    [\vec{c}/\vec{x}]
                        \types
                    P_i[\vec{c}/\vec{x}]
                $
                for every $i \in I$.
                Instantiating $i = k$ and
                applying \cref{lm:typing-proc-to-runtime-configuration} yields the desired premise:
                \[
                    \typingContextOne
                    \typingContextCat
                    (x \hasType q_i, \set{x_j \hasType L_j}_{j \in I \setminus \set{i}},
                    \typingContextTwo_1)
                    [\vec{c}/\vec{x}]
                    \typingContextCat
                    \emptyset
                        \types
                    \procToRuntime{P_i[\vec{c}/\vec{x}]}
                \]

                We show there is a typing derivation

                \vspace{-2ex}
                { \tiny
                \begin{mathpar}
                    \inferrule*[right=\runtimeTypingQueue]{
                        \typingContextOne
                        \typingContextCat
                        \typingContextTwo_2
                        \typingContextCat
                        \hat{\typingContextThree}_2,
                        s[\procA][\procB_k] \hasQueueType \queueType
                            \types
                        \queueProc{s}{\queuecontent[(\procA, \procB_k) \mapsto \vec{m}]}
                    }{
                        \typingContextOne
                        \typingContextCat
                        (c_k \hasType L_k)[\vec{c}/\vec{x}],
                        \typingContextTwo_2
                        \typingContextCat
                        \hat{\typingContextThree}_2,
                        s[\procA][\procB_k] \hasQueueType \queueType \cat \labelAndType{l_k}{L_k[\vec{c}/\vec{x}]}
                            \types
                        \queueProc{s}{\queuecontent[(\procA, \procB_k) \mapsto \vec{m} \cat \labelAndMsg{l_k}{v_k}]}
                    }
                \end{mathpar}
                }

                By inversion on $R \redto R'$, we have that
                $R_2 =
                    \queueProc{s}{\queuecontent[(\procA, \procB_k) \mapsto \vec{m}]}$.
                By inversion on
                $
                    \typingContextOne
                        \typingContextCat
                        \typingContextTwo_2
                        \typingContextCat
                        \typingContextThree_2
                        \types
                        R_2
                $,
                we obtain

                \vspace{-2ex}
                { \small
                \begin{mathpar}
                    \typingContextOne
                    \typingContextCat
                    \typingContextTwo_2
                    \typingContextCat
                    \hat{\typingContextThree}_2,
                    s[\procA][\procB_k] \hasQueueType \queueType
                        \types
                    \queueProc{s}{\queuecontent[(\procA, \procB_k) \mapsto \vec{m}]}
                \end{mathpar}
                }

                which is the desired premise.

                It is straightforward that there is a typing context reduction for the corresponding typing contexts, using \typingReductionIntCh.

         \item $P =
                \ExtCh_{i \in I} x[\procB_i] ? \labelAndVar{l_i}{y_i} \seq P_i$ and $R' \neq \err$: \\
                Let us rewrite the typing context:
                $\vec{c} \hasType \vec{L} =
                    (x \hasType q,
                    \typingContextTwo_1)
                    [\vec{c}/\vec{x}]
                $.
                Without loss of generality, let $x[\vec{c}/\vec{x}] = s[\procA]$.
                Then
                $R' = \procToRuntime{P_k[v_k/y_k]} \parallel \queueProc{s}{[(\procB_k, \procA) \mapsto \vec{m}]}$
                and
                $R_2 = \queueProc{s}{[(\procB_k, \procA) \mapsto \labelAndMsg{l_k}{v_k} \cat \vec{m}]}$.
                We show there is a typing derivation for any $k \in I$:

                \vspace{-2ex}
                { \tiny
                \begin{mathpar}
                \inferrule*[right=\runtimeTypingParallel]{
                    \typingContextOne
                    \typingContextCat
                    v_k \hasType L_k,
                    (x \hasType q_k,
                        \typingContextTwo_1)
                        [\vec{c}/\vec{x}]
                    \typingContextCat
                    \emptyset
                        \types
                    \procToRuntime{P_k[\vec{c}/\vec{x}][v_k/y_k]}
                    \\
                    \typingContextOne
                    \typingContextCat
                    \typingContextTwo_2
                    \typingContextCat
                    \hat{\typingContextThree}_2,
                    s[\procA][\procB_k] \hasQueueType \queueType
                        \types
                    \queueProc{s}{\queuecontent[(\procA, \procB_k) \mapsto \vec{m}]}
                }{
                    \typingContextOne
                    \typingContextCat
                    (x \hasType q_k,
                        \typingContextTwo_1)
                        [\vec{c}/\vec{x}],
                    \typingContextTwo_2
                    \typingContextCat
                    \hat{\typingContextThree}_2,
                    s[\procA][\procB_k] \hasQueueType \queueType
                        \types
                    \procToRuntime{P_k[\vec{c}/\vec{x}][v_k/y_k]}
                    \parallel
                    \queueProc{s}{\queuecontent[(\procA, \procB_k) \mapsto \vec{m}]}
                }
                \end{mathpar}
                }

                By inversion on
                $
                    \typingContextOne
                    \typingContextCat
                    (x \hasType q,
                        \typingContextTwo_1)
                        [\vec{c}/\vec{x}]
                        \types
                    (\ExtCh_{i \in I} x[\procB_i] ? \labelAndVar{l_i}{y_i} \seq P_i)[\vec{c}/\vec{x}]
                $,
                we get
                \[
                    \typingContextOne
                    \typingContextCat
                    (x \hasType q_i,
                        y_i \hasType L_i,
                        \typingContextTwo_1)
                        [\vec{c}/\vec{x}]
                        \types
                    P_i[\vec{c}/\vec{x}]
                \]
                for every $i \in I$.
                Instantiating $i = k$ and applying \cref{lm:typing-proc-to-runtime-configuration} yields the desired premise:
                \[
                    \typingContextOne
                    \typingContextCat
                    v_k \hasType L_k,
                    (x \hasType q_k,
                        \typingContextTwo_1)
                        [\vec{c}/\vec{x}]
                    \typingContextCat
                    \emptyset
                        \types
                    \procToRuntime{P_k[\vec{c}/\vec{x}][v_k/y_k]}
                \]

                The second premise is obtained by inversion on
                \[
                    \typingContextOne
                        \typingContextCat
                        v_k \hasType L_k,
                        \typingContextTwo_2
                        \typingContextCat
                        \hat{\typingContextThree}_2,
                        s[\procA][\procA_k] \hasQueueType \labelAndVar{l_k}{L_k} \cat \queueType
                        \types
                        \queueProc{s}{[(\procB_k, \procA) \mapsto \labelAndMsg{l_k}{v_k} \cat \vec{m}]}
                        \enspace .
                \]

                It is straightforward that there is a typing context reduction for the corresponding typing contexts, using \typingReductionExtCh.

         \item $P =
                \ExtCh_{i \in I} x[\procB_i] ? \labelAndVar{l_i}{y_i} \seq P_i$ and $R' = \err$: \\
                Let us rewrite the typing context:
                $\vec{c} \hasType \vec{L} =
                    (x \hasType q,
                    \typingContextTwo_1)
                    [\vec{c}/\vec{x}]
                $.
                Without loss of generality, let $x[\vec{c}/\vec{x}] = s[\procA]$.
                Then, by inversion on $R \redto R'$, we have
                $R_2 = \queueProc{s}{\queuecontent}$ and
                for every $i \in I$,
                $\queuecontent(\procB_i, \procA) = \labelAndMsg{l}{\_} \cat \vec{m}$ and $l_i \neq l$.
                We claim that there is no typing derivation
                \[
                    \typingContextOne
                    \typingContextCat
                    (x \hasType q,
                        \typingContextTwo_1)
                        [\vec{c}/\vec{x}],
                    \typingContextTwo_2
                    \typingContextCat
                    \typingContextThree_2
                        \types
                    (\ExtCh_{i \in I} x[\procB_i] ? \labelAndVar{l_i}{y_i} \seq P_i)
                    [\vec{c}/\vec{x}]
                    \parallel
                    \queueProc{s}{\queuecontent}
                    \enspace .
                \]
                By inversion, such a typing derivation must have the following shape:

                \vspace{-2ex}
                { \small
                \begin{mathpar}
                 \inferrule*[right=\runtimeTypingParallel]{
                    \inferrule*[right=\runtimeTypingExtCh]{
                        \delta(q) =
                        \set{(\rcv{\procB_i}{\procA}{\labelAndType{l_i}{L_i}}, q_i) \mid i \in I}
                        \\
                        \meta{\forall i \in I \st}
                        \typingContextOne
                            \typingContextCat
                            (x \hasType q_i,
                                \typingContextTwo_1)
                                [\vec{c}/\vec{x}],
                            y_i \hasType L_i,
                            \typingContextCat
                            \emptyset
                        \types
                        P_i \\
                    }{
                        \typingContextOne
                        \typingContextCat
                        (x \hasType q,
                            \typingContextTwo_1)
                            [\vec{c}/\vec{x}]
                        \typingContextCat
                        \emptyset
                            \types
                        (\ExtCh_{i \in I} x[\procB_i] ? \labelAndVar{l_i}{y_i} \seq P_i)
                        [\vec{c}/\vec{x}]
                    }
                    \\
\typingContextOne
                        \typingContextCat
                        \typingContextTwo_2
                        \typingContextCat
                        \typingContextThree_2
                            \types
                        \queueProc{s}{\queuecontent}
}{
                    \typingContextOne
                    \typingContextCat
                    (x \hasType q,
                        \typingContextTwo_1)
                        [\vec{c}/\vec{x}],
                    \typingContextTwo_2
                    \typingContextCat
                    \typingContextThree_2
                        \types
                    (\ExtCh_{i \in I} x[\procB_i] ? \labelAndVar{l_i}{y_i} \seq P_i)
                    [\vec{c}/\vec{x}]
                    \parallel
                    \queueProc{s}{\queuecontent}
                 }
                \end{mathpar}
                }

Let us rewrite the typing and queue typing context:
                \begin{align*}
                    (x \hasType q, \typingContextOne)[\vec{c}/\vec{x}],
                    \typingContextTwo_2
                    & =
                    \hat{\typingContextTwo}, \typingContextTwo_s
                    \\
                    \typingContextThree_2
                    & =
                    \hat{\typingContextThree}, \typingContextThree_s
                \end{align*}

                By assumption, we know that
                    there is
                    $
                    (\vec{q}, \xi) \in \reach(\CSMabb{A})
                    $ such that
                    $\typingContextTwo_s =
                        \set{s[\procA] \hasType q}_{\procA \in \ProcsOf{\CSMabb{A}}}
                        $
                    and
                    $\typingContextThree_s =
                        \set{s[\procA][\procB] \hasQueueType \xi(\procA,\procB)}_{(\procA,\procB) \in \channelsOf{\CSMabb{A}}}
                    $.
                Recall the condition on the reduction semantics:
                $
                        \forall i \in I \st
                        \queuecontent(\procB_i, \procA) = \labelAndMsg{l}{\_} \cat \vec{m}
                        \text{ and }
                        l_i \neq l
                $.
                Thus, with \cref{lm:queue-types-model-queues}, it follows that, for all $i \in I$,
                $\xi(\procB_i,\procA) = \labelAndVar{l'_i}{\_} \cat \_$
                with $l'_i \neq l_i$.
                For the CSM $\CSMabb{A}$, this entails that $\procA$ expects to receive a message from a set of other participants, ranged over by $\procB_i$, but the first message in each channel does not match.
                This yields a contradiction to feasible eventual reception:
                there has been at least one send event to $\procA$ and there is no matching receive event yet;
                because $\procA$ will never proceed, no matching receive event can ever happen.
        \end{itemize}

     \item \runtimeReductionContext: \\
        For the context rule, two cases apply:
        $R \parallel \redContext$
        or
        $\redContext \parallel R$.
        Both cases can be proven analogous, which is why we only prove the first.
        We have that

        \vspace{-2ex}
        { \small
        \begin{mathpar}
            \inferrule*[right=\runtimeTypingParallel]{
                \typingContextOne
                    \typingContextCat
                    \typingContextTwo_1
                    \typingContextCat
                    \typingContextThree_1
                    \types
                    R_1 \\
\typingContextOne
                    \typingContextCat
                    \typingContextTwo_2
                    \typingContextCat
                    \typingContextThree_2
                    \types
                    R_2 \\
            }{
                \typingContextOne
                    \typingContextCat
                    \typingContextTwo_1,
                    \typingContextTwo_2
                    \typingContextCat
                    \typingContextThree_1,
                    \typingContextThree_2
                    \types
                    R_1 \parallel R_2
            }
        \end{mathpar}
        }

        and we want to show that
        
        \vspace{-2ex}
        { \small
        \begin{mathpar}
            \inferrule*[right=\runtimeTypingParallel]{
                \typingContextOne
                    \typingContextCat
                    \typingContextTwo'_1
                    \typingContextCat
                    \typingContextThree'_1
                    \types
                    R'_1 \\
\typingContextOne
                    \typingContextCat
                    \typingContextTwo_2
                    \typingContextCat
                    \typingContextThree_2
                    \types
                    R_2 \\
            }{
                \typingContextOne
                    \typingContextCat
                    \typingContextTwo'_1,
                    \typingContextTwo_2
                    \typingContextCat
                    \typingContextThree'_1,
                    \typingContextThree_2
                    \types
                    R'_1 \parallel R_2
            }
        \end{mathpar}
        }

        The second premise is trivially satisfied.
        The first premise follows from the induction hypothesis, which also yields that
        $
        \typingContextTwo_1 \typingContextCat \typingContextThree_1
        \redto
        \typingContextTwo'_1 \typingContextCat \typingContextThree'_1
        $.
        We can apply
        \cref{lm:typing-reduction-cong}
        to obtain
        $
        \typingContextTwo_1, \typingContextTwo_2
        \typingContextCat
        \typingContextThree_1, \typingContextThree_2
        \redto
        \typingContextTwo'_1, \typingContextTwo_2
        \typingContextCat
        \typingContextThree'_1, \typingContextThree_2
        $, concluding this case.

     \item \runtimeReductionOut: \\
        With three inversions on the typing derivation, we have a typing derivation with the following shape for
        $R = \IntCh_{i \in I} s[\procA][\procB_i] ! \labelAndMsg{l_i}{v_i} \seq P_i
             \parallel
             {\queuecontent[(\procA, \procB_k) \mapsto \vec{m}]} $:

        \vspace{-2ex}
        { \scriptsize
        \begin{mathpar}
        \inferrule*[right=\runtimeTypingParallel]{
            \inferrule*[right=\runtimeTypingIntCh]{
                \delta(q) \sups
                \set{(\snd{\procA}{\procB_i}{\labelAndType{l_i}{L_i}}, q_i) \mid i \in I}
                \\
\meta{\forall i \in I \st}
                \typingContextOne \typingContextCat
                    \hat{\typingContextTwo}_1 ,
                    s[\procA] \hasType q_i,
                    \set{v_j \hasType L_j}_{j \in I\setminus \set{i}}
                    \typingContextCat
                    \typingContextThree_1
                \types P_i
            }{
                \typingContextOne
                \typingContextCat
                \hat{\typingContextTwo}_1,
                s[\procA] \hasType q,
                \set{v_i \hasType L_i}_{i \in I}
                \typingContextCat
                \typingContextThree_1
                    \types
                \IntCh_{i \in I} s[\procA][\procB_i] ! \labelAndMsg{l_i}{v_i} \seq P_i
            }
            \\
\inferrule*[right=\runtimeTypingQueue]{
                \vdots
            }{
                \typingContextOne
                    \typingContextCat
                    \typingContextTwo_2
                    \typingContextCat
                    \hat{\typingContextThree}_2,
                    s[\procA][\procB_k] \hasQueueType \queueType
                    \types
                    \queueProc{s}{\queuecontent[
                        (\procA, \procB) \mapsto \vec{m}
                    ]}
            }
        }{
            \typingContextOne
                \typingContextCat
                \hat{\typingContextTwo}_1,
                s[\procA] \hasType q,
                \set{v_i \hasType L_i}_{i \in I},
                \typingContextTwo_2
                \typingContextCat
                \typingContextThree_1,
                \hat{\typingContextThree}_2,
                s[\procA][\procB_k] \hasQueueType \queueType
                \types
                R
}
        \end{mathpar}
        }

    By inversion, we obtain all premises.
    We show that

        \vspace{-2ex}
        { \tiny
        \begin{mathpar}
        \inferrule*[right=\runtimeTypingParallel]{
            \inferrule*[right=\runtimeTypingIntCh]{
\vdots
            }{
                \typingContextOne
                \typingContextCat
                \hat{\typingContextTwo}_1,
                s[\procA] \hasType q_k,
                \set{v_i \hasType L_i}_{i \in I \setminus \set{k}}
                \typingContextCat
                \typingContextThree_1
                    \types
                \procToRuntime{P_k}
            }
            \\
\inferrule*[right=\runtimeTypingQueue]{
\vdots
            }{
                \typingContextOne
                    \typingContextCat
                    \typingContextTwo_2,
                    v_k \hasType L_k
                    \typingContextCat
                    \hat{\typingContextThree}_2,
                    s[\procA][\procB_k] \hasQueueType \queueType \cat \labelAndMsg{l_k}{L_k}
                    \types
                    \queueProc{s}{\queuecontent[(\procA, \procB_k) \mapsto \vec{m} \cat \labelAndMsg{l_k}{v_k}]}
            }
        }{
            \typingContextOne
                \typingContextCat
                \hat{\typingContextTwo}_1,
                s[\procA] \hasType q_k,
                \set{v_i \hasType L_i}_{i \in I \setminus \set{k}},
                v_k \hasType L_k,
                \typingContextTwo_2
                \typingContextCat
                \typingContextThree_1,
                \hat{\typingContextThree}_2,
                s[\procA][\procB_k] \hasQueueType
                \queueType \cat \labelAndMsg{l_k}{L_k}
                \types
                R'
}
        \end{mathpar}
        }

        for $R' =
                \procToRuntime{P_k}
                \parallel
                {\queuecontent[(\procA, \procB_k) \mapsto \vec{m} \cat \labelAndMsg{l_k}{v_k}]}$.

        First, we show there is a typing derivation for
             \[
                \typingContextOne
                \typingContextCat
                \hat{\typingContextTwo}_1,
                s[\procA] \hasType q_k,
                \set{v_i \hasType L_i}_{i \in I \setminus \set{k}}
                \typingContextCat
                \typingContextThree_1
                    \types
                \procToRuntime{P_k}
            \]
            first.
            We instantiate the premise
            \[
                \meta{\forall i \in I \st}
                \typingContextOne \typingContextCat
                    \hat{\typingContextTwo}_1 ,
                    s[\procA] \hasType q_i,
                    \set{v_j \hasType L_j}_{j \in I\setminus \set{i}}
                    \typingContextCat
                    \typingContextThree_1
                \types P_i
            \]
            for $i = k$ and obtain
            \[
                \typingContextOne \typingContextCat
                    \hat{\typingContextTwo}_1 ,
                    s[\procA] \hasType q_k,
                    \set{v_j \hasType L_j}_{j \in I\setminus \set{k}}
                    \typingContextCat
                    \typingContextThree_1
                \types P_k
            \]
            By
            \cref{lm:proc-typing-queue-context-empty},
            we know that $\typingContextThree_1 = \emptyset$ and, thus,
            \cref{lm:typing-proc-to-runtime-configuration} applies and concludes this case.

        Second, we show there is a typing derivation for
                \[
                \typingContextOne
                    \typingContextCat
                    \typingContextTwo_2,
                    v_k \hasType L_k
                    \typingContextCat
                    \hat{\typingContextThree}_2,
                    s[\procA][\procB_k] \hasQueueType \queueType \cat \labelAndMsg{l_k}{L_k}
                    \types
                    \queueProc{s}{\queuecontent[(\procA, \procB_k) \mapsto \vec{m} \cat \labelAndMsg{l_k}{v_k}]}
                \]
            From inversion of the original typing derivation, we have
            \[
                \typingContextOne
                    \typingContextCat
                    \typingContextTwo_2
                    \typingContextCat
                    \hat{\typingContextThree}_2,
                    s[\procA][\procB_k] \hasQueueType \queueType
                    \types
                    \queueProc{s}{\queuecontent[
                        (\procA, \procB_k) \mapsto \vec{m}
                    ]}
            \]
            With \cref{lm:typing-message-list-reversal}, the claim follows.

        It remains to show that there is a transition for the respective typing contexts:
        \begin{align*}
         \typingContextTwo & \is
            \hat{\typingContextTwo}_1,
            s[\procA] \hasType q,
            \set{v_i \hasType L_i}_{i \in I},
            \typingContextTwo_2
           \\
         \typingContextTwo' & \is
            \hat{\typingContextTwo}_1,
            s[\procA] \hasType q_k,
            \set{v_i \hasType L_i}_{i \in I \setminus \set{k}},
            v_k \hasType L_k,
            \typingContextTwo_2
            \\
         \typingContextThree & \is
            \typingContextThree_1,
            \hat{\typingContextThree}_2,
            s[\procA][\procB_k] \hasQueueType \queueType
            \\
         \typingContextThree' & \is
            \typingContextThree_1,
            \hat{\typingContextThree}_2,
            s[\procA][\procB_k] \hasQueueType
            \queueType \cat \labelAndMsg{l_k}{L_k}
            \\
        \end{align*}
       Note that the change from $\typingContextTwo$ to $\typingContextTwo'$ is solely the type of $s[\procA]$ while $s[\procA][\procB_k]$ is the only change from $\typingContextTwo$ to $\typingContextTwo'$.
       Thus, we can simply apply \typingReductionIntCh to obtain a typing context reduction.

     \item \runtimeReductionIn: \\
        With three inversions on the typing derivation, we have a typing derivation with the following shape for \\
                $R =
                \ExtCh_{i \in I} s[\procA][\procB_i] ? \labelAndVar{l_i}{y_i} \seq P_i
                \parallel
                {\queuecontent[(\procB_k, \procA) \mapsto \labelAndMsg{l_k}{v_k} \cat \vec{m}]}$:

        \vspace{-2ex}
        { \tiny
        \begin{mathpar}
        \inferrule*[right=\runtimeTypingParallel]{
            \inferrule*[right=\runtimeTypingExtCh]{
                \delta(q) =
                \set{(\rcv{\procB_i}{\procA}{\labelAndType{l_i}{L_i}}, q_i) \mid i \in I}
                \\
\meta{\forall i \in I \st}
                \typingContextOne
                    \typingContextCat
                    \hat{\typingContextTwo}_1,
                    y_i \hasType L_i,
                    s[\procA] \hasType q_i
                    \typingContextCat
                    \typingContextThree_1
                \types
                P_i \\
            }{
                \typingContextOne
                \typingContextCat
                \hat{\typingContextTwo}_1,
                s[\procA] \hasType q
                \typingContextCat
                \typingContextThree_1
                    \types
                \ExtCh_{i \in I} s[\procA][\procB_i] ? \labelAndVar{l_i}{y_i} \seq P_i
            }
            \\
\inferrule*[right=\runtimeTypingQueue]{
                \typingContextOne
                    \typingContextCat
                    \hat{\typingContextTwo}_2
                    \typingContextCat
                    \hat{\typingContextThree}_2,
                    s[\procB_k][\procA] \hasQueueType \queueType
                    \types
                    \queueProc{s}{\queuecontent[(\procB_k, \procA) \mapsto \vec{m}]}
            }{
                \typingContextOne
                    \typingContextCat
                    \hat{\typingContextTwo}_2,
                    v_k \hasType L_k,
                    \typingContextCat
                    \hat{\typingContextThree}_2,
                    s[\procB_k][\procA] \hasQueueType \labelAndVar{l_k}{L_k} \cat \queueType
                    \types
                    \queueProc{s}{\queuecontent[(\procB_k, \procA) \mapsto \labelAndMsg{l_k}{v_k} \cat \vec{m}]}
            }
        }{
            \typingContextOne
                \typingContextCat
                \hat{\typingContextTwo}_1,
                s[\procA] \hasType q,
                \hat{\typingContextTwo}_2,
                v_k \hasType L_k
                \typingContextCat
                \typingContextThree_1,
                \hat{\typingContextThree}_2,
                s[\procB_k][\procA] \hasQueueType \labelAndVar{l_k}{L_k} \cat \queueType
                \types
                R
}
        \end{mathpar}
        }

    By inversion, we obtain all the premises.
    We show that there is a typing derivation of shape

        \vspace{-1ex}
        { \scriptsize
        \begin{mathpar}
        \inferrule*[right=\runtimeTypingParallel]{
            \inferrule*[right=\runtimeTypingExtCh]{
                \vdots
            }{
                \typingContextOne
                \typingContextCat
                \hat{\typingContextTwo}_1,
                s[\procA] \hasType q_k,
                v_k \hasType L_k
                \typingContextCat
                \typingContextThree_1
                    \types
                \procToRuntime{P_k[v_k/y_k]}
            }
            \\
\inferrule*[right=\runtimeTypingQueue]{
                \vdots
            }{
                \typingContextOne
                    \typingContextCat
                    \hat{\typingContextTwo}_2
                    \typingContextCat
                    \hat{\typingContextThree}_2,
                    s[\procB_k][\procA] \hasQueueType \queueType
                    \types
                    \queueProc{s}{\queuecontent[(\procB_k, \procA) \mapsto \vec{m}]}
            }
        }{
            \typingContextOne
                \typingContextCat
                \hat{\typingContextTwo}_1,
                s[\procA] \hasType q,
                v_k \hasType L_k,
                \hat{\typingContextTwo}_2
                \typingContextCat
                \typingContextThree_1,
                \hat{\typingContextThree}_2,
                s[\procB_k][\procA] \hasQueueType \queueType
                \types
                \procToRuntime{P_i[v_k/y_k]}
                \parallel
                {\queuecontent[(\procB_k, \procA) \mapsto \vec{m}]}
        }
        \end{mathpar}
        }

        First, we show there is a typing derivation for
        \[
            \typingContextOne
            \typingContextCat
            \hat{\typingContextTwo}_1,
            s[\procA] \hasType q_k,
            v_k \hasType L_k
            \typingContextCat
            \typingContextThree_1
                \types
            \procToRuntime{P_k[v_k/y_k]}
            \enspace .
        \]
        From inversion, we get the following premise from the original typing derivation:
        \[
            \meta{i \in I \st}
            \typingContextOne
                \typingContextCat
                \hat{\typingContextTwo}_1,
                y_i \hasType L_i,
                s[\procA] \hasType q_i
                \typingContextCat
                \typingContextThree_1
            \types
            P_i
        \]
        which we instantiate with $i = k$ to obtain:
        \[
            \typingContextOne
                \typingContextCat
                \hat{\typingContextTwo}_1,
                y_k \hasType L_k,
                s[\procA] \hasType q_k
                \typingContextCat
                \typingContextThree_1
            \types
            P_k
            \enspace .
        \]
        With \cref{lm:substitution-lemma}, we get
        \[
            \typingContextOne
                \typingContextCat
                \hat{\typingContextTwo}_1,
                v_k \hasType L_k,
                s[\procA] \hasType q_k
                \typingContextCat
                \typingContextThree_1
            \types
            P_k[v_k/y_k]
            \enspace .
        \]
        By
        \cref{lm:proc-typing-queue-context-empty},
        we know that $\typingContextThree_1 = \emptyset$ and, thus,
        \cref{lm:typing-proc-to-runtime-configuration} applies and concludes this case.

        Second, there is a typing derivation for
        \[
            \typingContextOne
                \typingContextCat
                \hat{\typingContextTwo}_2
                \typingContextCat
                \hat{\typingContextThree}_2,
                s[\procB_k][\procA] \hasQueueType \queueType
                \types
                \queueProc{s}{\queuecontent[(\procB_k, \procA) \mapsto \vec{m}]}
        \]
        by inversion on the original typing derivation.

        It remains to show that there is a transition for the respective typing contexts:
        \begin{align*}
         \typingContextTwo & \is
            \hat{\typingContextTwo}_1,
            s[\procA] \hasType q,
            \hat{\typingContextTwo}_2,
            v_k \hasType L_k
           \\
         \typingContextTwo' & \is
            \hat{\typingContextTwo}_1,
            s[\procA] \hasType q_k,
            v_k \hasType L_k,
            \hat{\typingContextTwo}_2
            \\
         \typingContextThree & \is
            \typingContextThree_1,
            \hat{\typingContextThree}_2,
            s[\procB_k][\procA] \hasQueueType \labelAndVar{l_k}{L_k} \cat \queueType
            \\
         \typingContextThree' & \is
            \typingContextThree_1,
            \hat{\typingContextThree}_2,
            s[\procB_k][\procA] \hasQueueType
            \queueType
        \end{align*}

       Note that the change from $\typingContextTwo$ to $\typingContextTwo'$ is solely the type of $s[\procA]$ while $s[\procB_k][\procA]$ is the only change from $\typingContextTwo$ to $\typingContextTwo'$.
       Thus, we can simply apply \typingReductionExtCh to obtain a typing context reduction.

     \item \runtimeReductionErrOne: \\
        By assumption, we have a typing derivation for
        $
                \ExtCh_{i \in I} s[\procA][\procB_i] ? \labelAndVar{l_i}{y_i} \seq P_i
                \parallel
                \queueProc{s}{\queuecontent}
        $ and it holds that
        $
                \forall i \in I \st
                \queuecontent(\procB_i, \procA) = \labelAndMsg{l}{\_} \cat \vec{m}
                \text{ and }
                l_i \neq l
        $.
        By inversion and \cref{lm:proc-typing-queue-context-empty}, the typing derivation must have the following shape:

        \vspace{-2ex}
        { \small
        \begin{mathpar}
        \inferrule*[right=\runtimeTypingParallel]{
            \inferrule*[right=\runtimeTypingExtCh]{
                \delta(q) =
                \set{(\rcv{\procB_i}{\procA}{\labelAndType{l_i}{L_i}}, q_i) \mid i \in I} \\
                \meta{\forall i \in I \st}
                \typingContextOne
                    \typingContextCat
                    \hat{\typingContextTwo}_1,
                    y_i \hasType L_i,
                    s[\procA] \hasType q_i
                    \typingContextCat
                    \emptyset
                \types
                P_i \\
            }{
                \typingContextOne
                \typingContextCat
                \hat{\typingContextTwo}_1,
                s[\procA] \hasType q
                \typingContextCat
                \emptyset
                    \types
                \ExtCh_{i \in I} s[\procA][\procB_i] ? \labelAndVar{l_i}{y_i} \seq P_i
            }
                \\
            \inferrule*{
                \vdots
            }{
\typingContextOne
                \typingContextCat
                \typingContextTwo_2
                \typingContextCat
                \typingContextThree_2
                    \types
                \queueProc{s}{\queuecontent}
            }
        }{
                \typingContextOne
                \typingContextCat
                \hat{\typingContextTwo}_1,
                s[\procA] \hasType q,
                \typingContextTwo_2
                \typingContextCat
                \typingContextThree_2
                    \types
                \ExtCh_{i \in I} s[\procA][\procB_i] ? \labelAndVar{l_i}{y_i} \seq P_i
                \parallel
                \queueProc{s}{\queuecontent}
        }
        \end{mathpar}
        }

        Let us rewrite the typing and queue typing context:
        \begin{align*}
            \hat{\typingContextTwo}_1, s[\procA] \hasType q, \typingContextTwo_2
            & =
            \hat{\typingContextTwo}, \typingContextTwo_s
            \\
            \typingContextThree_2
            & =
            \hat{\typingContextThree}, \typingContextThree_s
        \end{align*}
        By assumption, we know that
            there is
            $
            (\vec{q}, \xi) \in \reach(\CSMabb{A})
            $ such that
            $\typingContextTwo_s =
                \set{s[\procA] \hasType q}_{\procA \in \ProcsOf{\CSMabb{A}}}
                $
            and
            $\typingContextThree_s =
                \set{s[\procA][\procB] \hasQueueType \xi(\procA,\procB)}_{(\procA,\procB) \in \channelsOf{\CSMabb{A}}}
            $.
        Recall the condition on the reduction semantics:
        $
                \forall i \in I \st
                \queuecontent(\procB_i, \procA) = \labelAndMsg{l}{\_} \cat \vec{m}
                \text{ and }
                l_i \neq l
        $.
        Thus, with \cref{lm:queue-types-model-queues}, it follows that, for all $i \in I$,
        $\xi(\procB_i,\procA) = \labelAndVar{l'_i}{\_} \cat \_$
        with $l'_i \neq l_i$.
        For the CSM $\CSMabb{A}$, this entails that $\procA$ expects to receive a message from a set of other participants, ranged over by $\procB_i$, but the first message in each channel does not match.
        Thus, none of them will ever be received.
        This yields a contradiction to feasible eventual reception:
        there has been at least one send event to $\procA$ and there is no matching receive event yet;
        because $\procA$ will never proceed, no matching receive event can ever happen.
    \end{itemize}

 \item \runtimeTypingRestr: \\
        By inversion on the typing derivation, there is a typing derivation
        \[
         \typingContextOne
         \typingContextCat
         \typingContextTwo
         \typingContextCat
         \typingContextThree
            \types
         (\restr s \hasType \CSMabb{A}) \, R
         \enspace .
        \]
        We do inversion on~\ref{lm:subject-reduction-assumption-4}, yielding two reduction rules that apply.
        \begin{itemize}[leftmargin=2ex]
         \item \runtimeReductionContext: \\
            We have

            \vspace{-2ex}
            { \small
            \begin{mathpar}
            \inferrule*[right=\runtimeTypingRestr]{
(\vec{q}, \xi) \in \reach(\CSMabb{A})
                \\
                \typingContextTwo_s =
                    \set{s[\procA] \hasType \vec{q}_\procA}_{\procA \in \ProcsOf{\CSMabb{A}}}
                \\
                \typingContextThree_s =
                    \set{s[\procA][\procB] \hasQueueType \xi(\procA,\procB)}_{(\procA,\procB) \in \channelsOf{\CSMabb{A}}}
                \\
\typingContextOne
                    \typingContextCat
                    \typingContextTwo,
                    \typingContextTwo_s
                    \typingContextCat
                    \typingContextThree,
                    \typingContextThree_s
                \types
                R
}{
                \typingContextOne
                    \typingContextCat
                    \typingContextTwo
                    \typingContextCat
                    \typingContextThree
                    \types
                    (\restr s \hasType \CSMabb{A})\, R
            }
         \end{mathpar}
         }

            By inversion, we obtain all premises.
            We show that

            \vspace{-2ex}
            { \small
            \begin{mathpar}
            \inferrule*[right=\runtimeTypingRestr]{
(\pvec{q}', \xi') \in \reach(\CSMabb{A})
                \\
                \typingContextTwo'_s =
                    \set{s[\procA] \hasType q'_\procA}_{\procA \in \ProcsOf{\CSMabb{A}}}
                \\
                \typingContextThree'_s =
                    \set{s[\procA][\procB] \hasQueueType \xi'(\procA,\procB)}_{(\procA,\procB) \in \channelsOf{\CSMabb{A}}}
                \\
\typingContextOne
                    \typingContextCat
                    \typingContextTwo,
                    \typingContextTwo'_s
                    \typingContextCat
                    \typingContextThree,
                    \typingContextThree'_s
                \types
                R'
}{
                \typingContextOne
                    \typingContextCat
                    \typingContextTwo
                    \typingContextCat
                    \typingContextThree
                    \types
                    (\restr s \hasType \CSMabb{A})\, R'
            }
         \end{mathpar}
         }

         The first premise is the same as for the original typing derivation.
         We know that
         $
                \typingContextOne
                    \typingContextCat
                    \typingContextTwo,
                    \typingContextTwo_s
                    \typingContextCat
                    \typingContextThree,
                    \typingContextThree_s
                \types
                R
         $.
         With the induction hypothesis,
         we get
         \[
                \typingContextOne
                    \typingContextCat
                    \typingContextTwo,
                    \typingContextTwo'_s
                    \typingContextCat
                    \typingContextThree,
                    \typingContextThree'_s
                \types
                R'
         \text{ and }
                \typingContextTwo,
                \typingContextTwo_s
                \typingContextCat
                \typingContextThree,
                \typingContextThree_s
                \redto
                \typingContextTwo,
                \typingContextTwo'_s
                \typingContextCat
                \typingContextThree,
                \typingContextThree'_s
         \enspace .
         \]
         The first fact proves the last premise for the new typing derivation.
         For the remaining ones, we apply \cref{lm:typing-reductions-preserve-reachability}, which yields that
            there is $
            (\pvec{q}', \xi') \in \reach(\CSMabb{A})
        $ such that
            $\typingContextTwo'_s =
                \set{s[\procA] \hasType q'}_{\procA \in \ProcsOf{\CSMabb{A}}}
                $
            and
            $\typingContextThree'_s =
                \set{s[\procA][\procB] \hasQueueType \xi'(\procA,\procB)}_{(\procA,\procB) \in \channelsOf{\CSMabb{A}}}
        $.
        These are precisely the remaining premises for the new typing derivation.
        It is obvious that there is a reduction for the typing contexts, which concludes this case.

         \item \runtimeReductionErrTwo: \\
            We have a typing derivation for
            \[
                    \typingContextOne
                    \typingContextCat
                    \typingContextTwo
                    \typingContextCat
                    \typingContextThree
                        \types
                    (\restr s \hasType \CSMabb{A}) \; \queueProc{s}{\queuecontent}
            \]
            and know
                    $\queuecontent(\procA, \procB) \neq \emptystring \text{ for some } \procA, \procB$.
            We do inversion on the typing derivation:

            \vspace{-2ex}
            { \small
            \begin{mathpar}
            \inferrule*[right=\runtimeTypingRestr]{
                \inferrule*[right=\runtimeTypingQueue]{
                    \vdots
                }{
                    \typingContextOne
                        \typingContextCat
                        \typingContextTwo,
                        \typingContextTwo_s
                        \typingContextCat
                        \typingContextThree,
                        \typingContextThree_s
                    \types
                    \queueProc{s}{\queuecontent}
                }
                \\
(\vec{q}, \xi) \in \reach(\CSMabb{A})
                \\
                \typingContextTwo_s =
                    \set{s[\procA] \hasType \vec{q}_\procA}_{\procA \in \ProcsOf{\CSMabb{A}}}
                \\
                \typingContextThree_s =
                    \set{s[\procA][\procB] \hasQueueType \xi(\procA,\procB)}_{(\procA,\procB) \in \channelsOf{\CSMabb{A}}}
            }{
                \typingContextOne
                \typingContextCat
                \typingContextTwo
                \typingContextCat
                \typingContextThree
                    \types
                (\restr s \hasType \CSMabb{A}) \; \queueProc{s}{\queuecontent}
            }
            \end{mathpar}
            }

            By definition of $\typingContextTwo_s$, there is a type $s[\procA] \hasType q$ for every $\procA \in \ProcsOf{\CSMabb{A}}$.
            There is a typing derivation for
            \[
                    \typingContextOne
                        \typingContextCat
                        \typingContextTwo,
                        \set{s[\procA] \hasType \vec{q}_\procA}_{\procA \in \ProcsOf{\CSMabb{A}}}
                        \typingContextCat
                        \typingContextThree,
                        \set{s[\procA][\procB] \hasQueueType \xi(\procA,\procB)}_{(\procA,\procB) \in \channelsOf{\CSMabb{A}}}
                    \types
                    \queueProc{s}{\queuecontent}
                \enspace .
            \]
            The only applicable typing rules (in the whole derivation) are
            \runtimeTypingEnd,
            \runtimeTypingEmptyQueue, and
            \runtimeTypingQueue.
            By our assumption that there is a strict partial order for the CSMs in our system, $s[\hole]$ does not appear in $\queuecontent$.

            Thus, \runtimeTypingEnd needs to be applied to reduce the typing context
            $\set{s[\procA] \hasType \vec{q}_\procA}_{\procA \in \ProcsOf{\CSMabb{A}}}$
            to only contain $\typingContextTwo$, which can then be used to type the queue with \runtimeTypingQueue.
            The premise of \runtimeTypingEnd requires that
            $\EndState(q)$, \ie
            $q$ is a final state
            and has not outgoing receive transition.
            This, however, entails that $(\vec{q}, \xi)$ is a non-final configuration where all participants are in final states
and the channels are not empty, yielding a deadlock.
            This contradicts the fact that $\CSMabb{A}$ is deadlock-free, concluding this case.

        \end{itemize}

 \item \runtimeTypingQueue: \\
        We have the typing derivation

        \vspace{-2ex}
        { \small
        \begin{mathpar}
        \inferrule*[right=\runtimeTypingQueue]{
            \typingContextOne
                \typingContextCat
                \typingContextTwo
                \typingContextCat
                \typingContextThree,
                s[\procA][\procB] \hasQueueType \queueType
                \types
                \queueProc{s}{\queuecontent[
                    (\procA, \procB) \mapsto \vec{m}
                ]}
        }{
            \typingContextOne
                \typingContextCat
                \typingContextTwo,
                v \hasType L
                \typingContextCat
                \typingContextThree,
                s[\procA][\procB] \hasQueueType \labelAndVar{l}{L} \cat \queueType
                \types
                \queueProc{s}{\queuecontent[
                    (\procA, \procB) \mapsto \labelAndMsg{l}{v} \cat \vec{m}
                ]}
        }
        \end{mathpar}
        }

        However, there is not $R'$ such that
                $\queueProc{s}{\queuecontent[
                    (\procA, \procB) \mapsto \labelAndMsg{l}{v} \cat \vec{m}
                ]}
            \redto R'$, contradicting~\ref{lm:subject-reduction-assumption-4}.
\end{itemize}
This concludes the proof of subject reduction. 
\proofEndSymbol
\end{proof}

From subject reduction, \emph{type safety} follows:
if a process can be typed, any runtime configuration that can be reached from this process cannot contain an error.
Note that \cref{cor:safety} in the main text follows from \cref{cor:type-safety}.

\begin{corollary}[Type Safety]
\label{cor:type-safety}
Assume that
$\types \Defs \hasType \typingContextOne$
and
$\typingContextOne
    \typingContextCat
    \emptyset
\types
P$ for some well-annotated process $P$.
If
$\procToRuntime{P} \redto^* R$,
then, $R \neq \err$.
\end{corollary}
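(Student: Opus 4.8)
The plan is to derive type safety as an immediate consequence of subject reduction together with the observation that $\err$ is untypable. First I would pass from the process to its runtime form: from the hypothesis $\typingContextOne \typingContextCat \emptyset \types P$, \cref{lm:typing-proc-to-runtime-configuration} yields $\typingContextOne \typingContextCat \emptyset \typingContextCat \emptyset \types \procToRuntime{P}$, i.e.\ the converted configuration is typable with empty state and queue contexts. This is exactly the shape required to invoke the closed form of subject reduction (\cref{th:subj-red-closed}), whose conclusion again returns empty contexts, making it iterable.

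Next I would proceed by induction on the length $n$ of the reduction sequence $\procToRuntime{P} \redto^* R$. For $n = 0$ we have $R = \procToRuntime{P}$, already typed above. For the inductive step, write $\procToRuntime{P} \redto^* R'' \redto R$ with $R''$ typable under empty contexts by the induction hypothesis; applying \cref{th:subj-red-closed} to the final step gives $\typingContextOne \typingContextCat \emptyset \typingContextCat \emptyset \types R$. To license each application of subject reduction I need every intermediate configuration to be well-annotated. This is preserved along $\redto$ because reduction never synthesises a new CSM: the rules only unfold a definition, move a message between a process and its queue, activate a session whose annotating CSM already occurs in the redex, or descend through a context. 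Hence the (finite) set of CSMs appearing in a configuration can only shrink or be replicated from $P$ and from $\Defs$, and since all of these are assumed deadlock-free and to satisfy feasible eventual reception, well-annotation propagates.

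Finally I would conclude by inspecting the typing rules for runtime configurations (\cref{fig:runtime-typing}): no rule has $\err$ in its conclusion. Consequently $\err$ has no typing derivation, and neither does any configuration that contains $\err$ as a subterm, since a typing derivation for a context around $\err$ (built via \runtimeTypingParallel and \runtimeTypingRestr) would have to type the leaf $\err$, which is impossible. Having shown that $R$ is typable, it therefore cannot be---nor contain---$\err$, which in particular gives $R \neq \err$ as required.

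The only genuinely delicate point in this argument is discharging the hypotheses of subject reduction uniformly along the reduction sequence. Using the \emph{closed} form with $\typingContextTwo = \typingContextThree = \emptyset$ is what makes this clean: the reachability invariant relating session contexts to reachable CSM configurations is never exposed at the top level but is re-established internally at each \runtimeTypingRestr node, so the sole residual obligation is the preservation of well-annotation discussed above. With that in hand, the corollary is a routine induction, all of whose inductive force is supplied by \cref{th:subj-red-closed}.
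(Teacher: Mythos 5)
Your proof is correct and takes essentially the same route as the paper's: both pass from $P$ to $\procToRuntime{P}$ via \cref{lm:typing-proc-to-runtime-configuration}, induct on the length of the reduction sequence applying subject reduction at each step, and conclude from the fact that $\err$ admits no typing derivation. The only packaging difference is that the paper strengthens the induction invariant to carry the per-session reachability conditions explicitly (as required by the appendix form \cref{thm:subject-reduction}), whereas you invoke the closed form \cref{th:subj-red-closed} with empty contexts, which hides that invariant inside the restriction rule; your extra remark that well-annotation is preserved along reductions is a detail the paper's proof leaves implicit.
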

\begin{proof}
From \cref{lm:typing-proc-to-runtime-configuration}, we know that
$\typingContextOne
    \typingContextCat
    \emptyset
    \typingContextCat
    \emptyset
\types
\procToRuntime{P}$.
By definition $\redto^* \is \set{\redto^k \mid k \geq 0}$.
We prove a stronger claim:
For all $k \geq 0$,
if
$\procToRuntime{P} \redto^k R$,
then,
\begin{itemize}
 \item $\typingContextOne
           \typingContextCat
           \typingContextTwo
           \typingContextCat
           \typingContextThree
       \types
       R$ with
        $\typingContextTwo = \hat{\typingContextTwo}, \set{\typingContextTwo_s}_{s \in \SessionName}$
        and
        $\typingContextThree = \hat{\typingContextThree}, \set{\typingContextThree_s}_{s \in \SessionName}$, and
 \item for all $s \in \SessionName$, it holds that
       there is $
    (\vec{q}, \xi) \in \reach(\CSMabb{A})
   $ such that
    $\typingContextTwo_s =
        \set{s[\procA] \hasType q}_{\procA \in \ProcsOf{\CSMabb{A}}}
        $
    and
    $\typingContextThree_s =
        \set{s[\procA][\procB] \hasQueueType \xi(\procA,\procB)}_{(\procA,\procB) \in \channelsOf{\CSMabb{A}}}
   $
\end{itemize}
This claim entails that $R \neq \err$ because $\err$ cannot be typed but $R$ can be typed.

We prove the claim by induction on $k$.

For $k = 0$, the claim trivially follows because both the typing and queue typing context is empty, trivially satisfying the conditions.

For the induction step,
we have $\procToRuntime{P} \redto^k R$, the claim holds for $R$, and $R \redto R'$.
With Subject Reduction (\cref{thm:subject-reduction}), we proved precisely what we need to show for $R'$.
\proofEndSymbol
\end{proof}

Subject reduction shows that any step of a runtime configuration can be mimicked by the typing contexts and these can be used to type the new runtime configuration.
Since $\err$ cannot be typed, this shows that a typed runtime configuration can never reduce to $\err$, yielding type safety.
While this is a safety property, \emph{session fidelity} deals with progress.
Roughly speaking, if the typing contexts can take a step, then the runtime configuration can also take a step.
In most MST frameworks, this can only be proven in the presence of a single session.
Thus, we define the following restriction of our type system.

\begin{definition}
We define $\typesSFd$ to be $\types$ but without the rules \procTypingRestr and \runtimeTypingRestr.
Using this, we define $\typesSFs$ for processes as follows:
\begin{mathpar}
\inferrule*[right=\procTypingRestr ']{
    \meta{\forall \procA \in \ProcsOf{\CSMabb{A}} \st}
    \forall c \hasType q \in \typingContextTwo_\procA \st
    \EndState(q)
    \\
\meta{\forall \procA \in \ProcsOf{\CSMabb{A}} \st}
        \typingContextOne
        \typingContextCat
        \typingContextTwo_\procA,
        s[\procA] \hasType \initialState(\CSMabb{A}_\procA)
        \typesSFd
        Q_\procA
}{
    \typingContextOne
        \typingContextCat
        \set{\typingContextTwo_\procA}_{\procA \in \ProcsOf{\CSMabb{A}}}
        \typesSFs
        (\restr s \hasType \CSMabb{A})\,
        (\Parallel_{\procA \in \ProcsOf{\CSMabb{A}}} Q_\procA)
}
\end{mathpar}
We also define $\typesSFs$ for runtime configurations:
\begin{mathpar}
\inferrule*[right=\runtimeTypingRestr ']{
    (\vec{q}, \xi) \in \reach(\CSMabb{A})
    \\
\meta{\forall \procA \in \ProcsOf{\CSMabb{A}} \st}
    \forall c \hasType q' \in \typingContextTwo_\procA \st
    \EndState(q')
    \\
\meta{\forall \procA \in \ProcsOf{\CSMabb{A}} \st}
        \typingContextOne
        \typingContextCat
        \typingContextTwo_\procA,
        s[\procA] \hasType \vec{q}_\procA
        \typesSFd
        Q_\procA
    \\
    \typingContextOne
    \typingContextCat
    \typingContextTwo'
    \typingContextCat
    \set{s[\procA][\procB] \hasQueueType \xi(\procA,\procB)}_{(\procA,\procB) \in \channelsOf{\CSMabb{A}}}
    \typesSFd
    \queueProc{s}{\queuecontent}
}{
    \typingContextOne
        \typingContextCat
        \set{\typingContextTwo_\procA}_{\procA \in \ProcsOf{\CSMabb{A}}},
\typingContextTwo'
        \typingContextCat
        \emptyset
        \typesSFs
        (\restr s \hasType \CSMabb{A})\,
        (\Parallel_{\procA \in \ProcsOf{\CSMabb{A}}} Q_\procA)
        \parallel
        \queueProc{s}{\queuecontent}
}
\end{mathpar}
If
$
    \typingContextOne
        \typingContextCat
        \set{\typingContextTwo_\procA}_{\procA \in \ProcsOf{\CSMabb{A}}},
\typingContextTwo'
        \typingContextCat
        \emptyset
        \typesSFs
        (\restr s \hasType \CSMabb{A})\,
        (\Parallel_{\procA \in \ProcsOf{\CSMabb{A}}} Q_\procA)
        \parallel
        \queueProc{s}{\queuecontent}
$
holds, we know that we can obtain the premises by inversion.
For conciseness, we use the following notation
to refer to the CSM configuration $(\vec{q}, \xi)$:
\[
    \typingContextOne
        \typingContextCat
        \set{\typingContextTwo_\procA}_{\procA \in \ProcsOf{\CSMabb{A}}},
\typingContextTwo'
        \typingContextCat
        \emptyset
        \overset{(\vec{q}, \xi)}{\typesSFs}
        (\restr s \hasType \CSMabb{A})\,
        (\Parallel_{\procA \in \ProcsOf{\CSMabb{A}}} Q_\procA)
        \parallel
        \queueProc{s}{\queuecontent}
    \enspace .
\]
\end{definition}

\begin{proposition}
Let $P$ be a process, $R$ be a runtime configuration and assume
        $\typesSFs \Defs \hasType \typingContextOne$.
If
        $
            \typingContextOne
                \typingContextCat
                \typingContextTwo
                \typesSFs
                (\restr s \hasType \CSMabb{A})\,
                P
        $,
then $P$ is restriction-free.
\\
If
        $
            \typingContextOne
                \typingContextCat
                \typingContextTwo
                \typingContextCat
                \emptyset
                \typesSFs
                (\restr s \hasType \CSMabb{A})\,
                R
        $,
then $R$ is restriction-free.
\end{proposition}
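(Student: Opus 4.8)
The plan is to reduce the statement to one structural observation about the auxiliary judgment $\typesSFd$: since $\typesSFd$ is by definition $\types$ with the two restriction-introduction rules \procTypingRestr and \runtimeTypingRestr deleted, no $\typesSFd$-derivation can ever have a session restriction as its subject. First I would isolate this as a lemma: \emph{every process $Q$ typable under $\typesSFd$, and every runtime configuration $R$ typable under $\typesSFd$, is restriction-free}, where ``restriction-free'' means the syntax tree of the term contains no node of the form $(\restr s \hasType \CSMabb{A})\,\cdot$. The proof is a simultaneous induction on the $\typesSFd$-derivation for processes and runtime configurations. The base cases — \procTypingZero / \runtimeTypingZero, \procTypingProcName / \runtimeTypingProcName (the subject $\pn{Q}[\vec c]$ is syntactically restriction-free, regardless of how $\Defs$ unfolds it), and the queue axioms \runtimeTypingEmptyQueue and \runtimeTypingQueue — are immediate, as none of these subjects is a restriction. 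In each inductive case (\procTypingEnd, \procTypingParallel, \procTypingIntCh, \procTypingExtCh and their runtime counterparts) the subject is not a restriction, and every immediate subterm occurs as the subject of a $\typesSFd$-premise, so the induction hypothesis makes it restriction-free; hence the whole term is too. The essential point is that there is simply no case whose subject is a restriction, because the only two rules that could produce one are absent from $\typesSFd$.

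With the lemma in hand I would discharge the proposition by inversion on the $\typesSFs$-judgment. For the process statement, the unique $\typesSFs$-rule whose conclusion can match $(\restr s \hasType \CSMabb{A})\, P$ is the session-fidelity restriction rule; inverting it forces $P$ to be exactly the parallel composition $\Parallel_{\procA \in \ProcsOf{\CSMabb{A}}} Q_\procA$, with each $Q_\procA$ typed by a premise of the form $\typingContextOne \typingContextCat \typingContextTwo_\procA, s[\procA]\hasType\initialState(\CSMabb{A}_\procA) \typesSFd Q_\procA$. By the lemma every $Q_\procA$ is restriction-free, and a parallel composition of restriction-free processes is restriction-free, so $P$ is restriction-free. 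The runtime case is identical in structure: inverting the runtime session-fidelity restriction rule forces $R$ to be $(\Parallel_{\procA \in \ProcsOf{\CSMabb{A}}} Q_\procA) \parallel \queueProc{s}{\queuecontent}$, where each $Q_\procA$ and the queue component $\queueProc{s}{\queuecontent}$ are typed under $\typesSFd$; the lemma yields that each is restriction-free (a queue term carries no restriction in any case), whence $R$ is restriction-free.

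I expect no genuine difficulty: the argument is purely structural, and the removal of \procTypingRestr and \runtimeTypingRestr from $\typesSFd$ does all the work. The only points requiring care are (i) fixing the intended reading of ``restriction-free'' as the syntactic absence of a $(\restr s \dots)$ node in the term itself, \emph{not} closure under unfolding of the definitions in $\Defs$ — this is precisely why the base case $\pn{Q}[\vec c]$ is unproblematic; and (ii) verifying that the enumeration of $\typesSFd$-rules in the induction is exhaustive, so that one may legitimately conclude that a restriction is never the subject of a $\typesSFd$-judgment. The hypothesis $\typesSFs \Defs \hasType \typingContextOne$ is not used beyond supplying a well-formed ambient context $\typingContextOne$, and can be carried along untouched throughout.
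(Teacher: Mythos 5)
Your proof is correct and is exactly the argument the paper intends: the paper states this proposition without proof, treating it as immediate from the fact that $\typesSFd$ omits \procTypingRestr and \runtimeTypingRestr, and your structural induction on $\typesSFd$-derivations followed by inversion of the single $\typesSFs$ restriction rule fills in that routine reasoning faithfully (including the correct syntactic reading of ``restriction-free'', which is what the session-fidelity proof later relies on when it concludes $\procToRuntime{P_k}=P_k$).
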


Intuitively, $\typesSFs$ allows us to have one restriction with CSM $\CSMabb{A}$ and requires that all different participants of $\CSMabb{A}$ are played by different processes in parallel.
As argued in the main text, these are standard restrictions for session fidelity (and deadlock freedom).

As for $\types$, we show a correspondence between processes and runtime configurations for $\typesSFd$ and $\typesSFs$.

\begin{lemma}
\label{lm:typingSF-proc-to-runtime-configuration}
Let $P$ be a well-annotated process. \\
If
$\typingContextOne \typingContextCat \typingContextTwo \typesSFd P$,
then
$\typingContextOne \typingContextCat \typingContextTwo \typingContextCat \emptyset \typesSFd \procToRuntime{P}$.
If
$\typingContextOne \typingContextCat \typingContextTwo \typesSFs P$,
then
$\typingContextOne \typingContextCat \typingContextTwo \typingContextCat \emptyset \typesSFs \procToRuntime{P}$.
\end{lemma}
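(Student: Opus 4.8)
The plan is to prove both implications by the same strategy used for Lemma~\ref{lm:typing-proc-to-runtime-configuration}, exploiting that the session-fidelity judgments treat restriction far more rigidly than $\types$. Concretely, I would prove the $\typesSFd$ claim by induction on the typing derivation and the $\typesSFs$ claim by a single inversion step, since every $\typesSFs$-typed process has exactly one top-level restriction.

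For the first implication, the key observation is that $\typesSFd$ is obtained from $\types$ by deleting \procTypingRestr and \runtimeTypingRestr, so no restriction rule can occur anywhere in a $\typesSFd$ derivation. Hence the subject $P$ is restriction-free, and by the definition of $\procToRuntime{\hole}$ (which only acts at parallel composition and at restrictions) we get $\procToRuntime{P} = P$. It then suffices to replay the process derivation as a runtime-configuration derivation carrying the empty queue context $\typingContextThree = \emptyset$: every rule available to $\typesSFd$ (namely the analogues of \procTypingZero, \procTypingEnd, \procTypingParallel, \procTypingIntCh, \procTypingExtCh, \procTypingProcName) has a runtime counterpart that leaves the queue context untouched, so the induction hypothesis on each subderivation closes the case. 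This is exactly the proof of Lemma~\ref{lm:typing-proc-to-runtime-configuration} with the restriction case removed.

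For the second implication, inversion on $\typingContextOne \typingContextCat \typingContextTwo \typesSFs P$ via the process rule \procTypingRestr$'$ forces $P = (\restr s \hasType \CSMabb{A})\,(\Parallel_{\procA \in \ProcsOf{\CSMabb{A}}} Q_\procA)$ with $\typingContextTwo = \set{\typingContextTwo_\procA}_{\procA \in \ProcsOf{\CSMabb{A}}}$, and yields its two premises: (i) every binding $c \hasType q' \in \typingContextTwo_\procA$ satisfies $\EndState(q')$, and (ii) $\typingContextOne \typingContextCat \typingContextTwo_\procA, s[\procA] \hasType \initialState(\CSMabb{A}_\procA) \typesSFd Q_\procA$ for each $\procA$. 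Each $Q_\procA$ is typed with $\typesSFd$ and is therefore restriction-free, so $\procToRuntime{Q_\procA} = Q_\procA$ and $\procToRuntime{P} = (\restr s \hasType \CSMabb{A})\,((\Parallel_\procA Q_\procA) \parallel \queueProc{s}{\emptyqueuecontent})$. I would then apply the runtime rule \runtimeTypingRestr$'$ instantiated at the initial CSM configuration $(\vec{q}_0, \xi_\emptystring)$, where $\vec{q}_{0,\procA} = \initialState(\CSMabb{A}_\procA)$ and all channels are empty; this configuration is trivially in $\reach(\CSMabb{A})$. Its reachability premise is immediate, its $\EndState$ side condition is premise (i), its third premise coincides with (ii) since $\vec{q}_{0,\procA} = \initialState(\CSMabb{A}_\procA)$, and its queue premise — typing $\queueProc{s}{\emptyqueuecontent}$ against the all-empty queue context with $\typingContextTwo' = \emptyset$ — is discharged by \runtimeTypingEmptyQueue. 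The conclusion is exactly $\typingContextOne \typingContextCat \typingContextTwo \typingContextCat \emptyset \typesSFs \procToRuntime{P}$.

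The routine part is the structural induction of the first implication; the only genuinely load-bearing checks are bookkeeping: that each $Q_\procA$ is restriction-free so that $\procToRuntime{\hole}$ acts as the identity on it, and that choosing the initial configuration makes premises (i) and (ii) of \procTypingRestr$'$ line up verbatim with the corresponding premises of \runtimeTypingRestr$'$. I expect no real obstacle, since the SF judgments were designed so that the unique restriction of the single session sits at the top; the main care is to confirm that $\typingContextTwo'$ and the queue context can both be taken empty, matching the empty-queue initialisation produced by $\procToRuntime{\hole}$.
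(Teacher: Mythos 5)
Your proposal is correct and follows essentially the same route as the paper: a structural induction for the $\typesSFd$ claim (the paper handles the cases directly, you streamline it by first observing that restriction-freeness makes $\procToRuntime{P}=P$), and a single inversion on \procTypingRestr$'$ followed by an application of \runtimeTypingRestr$'$ at the initial configuration with \runtimeTypingEmptyQueue for the queue premise. Your extra bookkeeping that each $Q_\procA$ is restriction-free, so the third premise of \runtimeTypingRestr$'$ matches premise (ii) verbatim, is a detail the paper glosses over but is exactly what makes the argument go through.
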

\begin{proof}
We prove the claim $\typesSFd$ by induction on the structure of $P$, as for
\cref{lm:typing-proc-to-runtime-configuration}.
For all except $P = P_1 \parallel P_2$ and $P = (\restr s \hasType \CSMabb{A}) \, P'$, it holds that
$\procToRuntime{P} = P$.
For all typing rules that processes and runtime configurations share, the queue typing context is not changed in the respective runtime configuration typing rule.
Thus,
$\typingContextOne \typingContextCat \typingContextTwo \typingContextCat \emptyset \types \procToRuntime{P}$.
For $P = P_1 \parallel P_2$, the claim follows directly by induction hypothesis.
Since $\typesSFd$ has no rule for restriction, we do not need to consider $P = (\restr s \hasType \CSMabb{A}) \, P'$.

In contrast, $\typesSFs$ only applies to $P$ with shape
$
    (\restr s \hasType \CSMabb{A})\,
    (\Parallel_{\procA \in \ProcsOf{\CSMabb{A}}} Q_\procA)
    \parallel
    \queueProc{s}{\queuecontent}
$,
so we only need to consider such processes.
As for
\cref{lm:typing-proc-to-runtime-configuration},
we can pick the initial states and empty channels.
The remaining premises follow from the first claim.
\proofEndSymbol
\end{proof}

Note that \cref{thm:progress} in the main text is a consequence of \cref{lm:session-fidelity}.

\begin{restatable}[Session fidelity with sink-final FSMs]{theorem}{sessionFidelitySoftDeadlocks}
\label{lm:session-fidelity}
Let $\CSMabb{A}$ be a deadlock-free CSM that satisfies feasible eventual reception and, 
for every $\procA \in \ProcsOf{\CSMabb{A}}$, 
$\CSMabb{A}_\procA$ is sink-final. 
Let $R$ be a runtime configuration. We assume that \begin{enumerate}[label=\textnormal{(\arabic*)}]
 \item \label{sf-cond-1}
        $\typesSFs \Defs \hasType \typingContextOne$,
 \item \label{sf-cond-2}
        $
            \typingContextOne
                \typingContextCat
                \typingContextTwo
\typingContextCat
                \emptyset
                \overset{(\vec{q}, \xi)}{\typesSFs}
                (\restr s \hasType \CSMabb{A})\,
                R
        $, and
 \item \label{sf-cond-3}
       $
       (\vec{q}, \xi)
       \rightarrow
       (\pvec{q}'', \xi'')
       $
       for some
       $\pvec{q}''$ and $\xi''$.
\end{enumerate}
Then, there is
$ (\pvec{q}', \xi') $
with
$
    (\vec{q}, \xi)
    \rightarrow
    (\pvec{q}', \xi')
$
and
$R'$ with $R \redto R'$ such that
        $
            \typingContextOne
                \typingContextCat
                \typingContextTwo
\typingContextCat
                \emptyset
                \overset{(\pvec{q}'\negthinspace, \, \xi')}{\typesSFs}
                (\restr s \hasType \CSMabb{A})\,
                R'
        $.
\end{restatable}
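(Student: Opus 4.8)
The plan is to prove progress by locating the participant $\procA$ that drives the witnessed CSM step of assumption~(3) and showing that the process playing $\procA$ can fire a matching communication. First I would invert the session-fidelity restriction rule $\runtimeTypingRestr'$ on assumption~(2): this yields, for each $\procC \in \ProcsOf{\CSMabb{A}}$, a derivation $\typingContextOne \typingContextCat \typingContextTwo_\procC, s[\procC]\hasType\vec{q}_\procC \typesSFd Q_\procC$ in which every other capability of $\typingContextTwo_\procC$ is an $\EndState$, together with a typing of the single queue $\queueProc{s}{\queuecontent}$ against the queue types $\set{s[\procC][\procD]\hasQueueType\xi(\procC,\procD)}$. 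The crucial structural observation, and the place where sink-finality of each $\CSMabb{A}_\procC$ is used, is this: if $\vec{q}_\procC$ has any outgoing transition then it is not a sink, hence not final, hence $\EndState(\vec{q}_\procC)$ fails; consequently \runtimeTypingEnd cannot discard $s[\procC]$, and since all \emph{other} capabilities in $\typingContextTwo_\procC$ are sinks (so carry no transitions at all), after unfolding any leading process call (guarded, by assumption~(1)) the process $Q_\procC$ must be a communication prefix acting on $s[\procC]$ --- an internal choice if $\vec{q}_\procC$ offers sends, an external choice if it offers only receives.

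Given this, let $\procA$ be the (unique) participant involved in the witnessed step; its state $\vec{q}_\procA$ carries an outgoing transition, so $Q_\procA$ unfolds to a communication prefix on $s[\procA]$, and I split on whether this prefix is an internal or external choice. If it is an internal choice, firing any of its (non-empty) branches via \runtimeReductionOut gives a process step, and by inversion of \runtimeTypingIntCh that branch corresponds to a send transition of $\vec{q}_\procA$, yielding a matching CSM step $(\vec{q},\xi)\to(\pvec{q}', \xi')$; the freedom to pick $(\pvec{q}', \xi')$ possibly different from the witnessed $(\pvec{q}'', \xi'')$ is exactly what the subset condition $\sups$ of \runtimeTypingIntCh forces. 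If instead it is an external choice, then by inversion of \runtimeTypingExtCh its branches are exactly the receive transitions of $\vec{q}_\procA$, so the witnessed step is itself a receive by $\procA$, say from $\procB$; by \cref{lm:queue-types-model-queues} the runtime queue $\queuecontent(\procB,\procA)$ shares its head label with $\xi(\procB,\procA)$, the CSM-enabled receive is one of the branches, and the matching branch fires via \runtimeReductionIn, mimicking the witnessed step. In both cases the reduction leaves the ambient shape $(\restr s\hasType\CSMabb{A})\,(\Parallel_{\procC} Q_\procC)\parallel\queueProc{s}{\queuecontent}$ intact, replacing only $Q_\procA$ by the continuation of the fired branch (converted by $\procToRuntime{\hole}$, and with the received payload substituted via \cref{lm:substitution-lemma} in the receive case) and updating $\queuecontent$.

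It then remains to re-establish typability at the new configuration. Since $(\vec{q},\xi)\in\reach(\CSMabb{A})$ is a premise of the inverted rule and $(\vec{q},\xi)\to(\pvec{q}', \xi')$, the target $(\pvec{q}', \xi')$ is again reachable, so I rebuild $\runtimeTypingRestr'$ around $(\pvec{q}', \xi')$. Its per-participant premises are discharged by leaving $Q_\procC$ and $\vec{q}_\procC$ untouched for $\procC\ne\procA$, and by typing the continuation of $Q_\procA$ at $\procA$'s updated state $q'_\procA$ --- obtained directly from the inversion of \runtimeTypingIntCh/\runtimeTypingExtCh together with \cref{lm:typingSF-proc-to-runtime-configuration}. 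The queue premise is re-derived by \cref{lm:typing-message-list-reversal} (appending the sent message type, in the send case) or by stripping the consumed head (in the receive case), so that it matches $\xi'$. This is a focused instance of the subject-reduction bookkeeping of \cref{thm:subject-reduction}, specialised to the single restricted session; the hypotheses that $\CSMabb{A}$ is deadlock-free and satisfies feasible eventual reception are what keep every CSM occurring in the derivation well-annotated and guarantee the reduction never lands in $\err$.

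I expect the main obstacle to be precisely this reconstruction of the restricted typing derivation for $R'$: threading the two queue-type adjustments (append versus head-removal) through the inversion of the choice rules while keeping the per-participant decomposition of the typing context consistent with the updated CSM configuration $(\pvec{q}', \xi')$. The conceptual crux, by contrast, is the short inversion argument of the first paragraph, where sink-finality of the local machines is indispensable: without it a participant sitting at a final-but-non-sink state could legitimately be typed as $\zero$, severing the correspondence between an enabled CSM action and an available process step.
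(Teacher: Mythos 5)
Your proposal is correct and follows essentially the same route as the paper's proof: invert \runtimeTypingRestr', use sink-finality to rule out $\EndState(\vec{q}_\procA)$ and force a communication prefix (after unfolding guarded definitions), split into the internal-choice case (where the $\sups$ premise of \procTypingIntCh means you may only match \emph{some} send transition, hence the possibly different witness $(\pvec{q}',\xi')$) and the external-choice case (where the witnessed receive is mimicked exactly, using the queue-typing correspondence and the substitution lemma), and then rebuild the restricted typing derivation at the reachable successor configuration via \cref{lm:typing-message-list-reversal} and \cref{lm:typingSF-proc-to-runtime-configuration}. No gaps.
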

\begin{proof}
By assumption, we know that
$
    (\vec{q}, \xi)
    \xrightarrow{x}
    (\pvec{q}', \xi')
$.
We do a case analysis on the shape of $x$.

First, let $x = \snd{\procC}{\procB}{\labelAndType{l}{L}}$.
We do inversion on
\ref{sf-cond-2}
and rewrite
$\typingContextTwo$
as
$\set{\typingContextTwo_\procA}_{\procA \in \ProcsOf{\CSMabb{A}}},
 \typingContextTwo'$:
\begin{mathpar}
\inferrule*[right=\runtimeTypingRestr ']{
    (\vec{q}, \xi) \in \reach(\CSMabb{A})
    \\
\meta{\forall \procA \in \ProcsOf{\CSMabb{A}} \st}
    \forall c \hasType q' \in \typingContextTwo_\procA \st
    \EndState(q')
    \\
\meta{\forall \procA \in \ProcsOf{\CSMabb{A}} \st}
        \typingContextOne
        \typingContextCat
        \typingContextTwo_\procA,
        s[\procA] \hasType \vec{q}_\procA
        \typesSFd
        Q_\procA
    \\
    \typingContextOne
    \typingContextCat
    \typingContextTwo'
    \typingContextCat
    \set{s[\procA][\procB] \hasQueueType \xi(\procA,\procB)}_{(\procA,\procB) \in \channelsOf{\CSMabb{A}}}
    \typesSFd
    \queueProc{s}{\queuecontent}
}{
    \typingContextOne
        \typingContextCat
        \set{\typingContextTwo_\procA}_{\procA \in \ProcsOf{\CSMabb{A}}},
\typingContextTwo'
        \typingContextCat
        \emptyset
        \typesSFs
        (\restr s \hasType \CSMabb{A})\,
        (\Parallel_{\procA \in \ProcsOf{\CSMabb{A}}} Q_\procA)
        \parallel
        \queueProc{s}{\queuecontent}
}
\end{mathpar}
and obtain all its premises
as well as the fact that
$R =
    (\Parallel_{\procA \in \ProcsOf{\CSMabb{A}}} Q_\procA)
    \parallel
    \queueProc{s}{\queuecontent}
$.
From the fact that $\vec{q}_\procC$ has outgoing transitions, we know that it is not final, which in turn means that $\EndState(\vec{q}_\procC)$ does not hold.
Hence,
two typing rules can apply:
\procTypingProcName and
\procTypingIntCh.

We do a case analysis and show that applying \procTypingProcName will eventually lead to applying \procTypingIntCh as well.
First, we do an inversion:
\begin{mathpar}
\inferrule*[right=\procTypingProcName]{
    \typingContextOne(\pn{Q}) = \vec{L} }{
    \typingContextOne
        \typingContextCat
        \vec{c}  \hasType \vec{L}
\typesSFd
        \pn{Q}[\vec{c}]
}
\end{mathpar}

Assume that we type $\pn{Q}[\vec{x}]$ and
$\pn{Q}[\vec{x}] = P'$.
From
\ref{sf-cond-1},
it follows that
$
    \typingContextOne
    \typingContextCat
    \vec{x} \hasType \vec{L}
    \typesSFd
    P'
$.
By assumption, we know that process definitions are guarded.
Thus, $P'$ needs to be typed with \procTypingIntCh.
The following arguments are very similar from now on.
In fact, there are two differences:
first, we would carry
$\vec{x} \hasType \vec{L}$ around, and
second, we would apply
\runtimeReductionProcName
to prove that $R \redto R'$.
For conciseness, we refrain from doing so and focus on the case without the indirection through a process definition.

Thus, we consider \procTypingIntCh as typing rule for $Q_\procC$.
By inversion, we have
\begin{mathpar}
\inferrule*[right=\procTypingIntCh]{
\delta(q) \sups
    \set{(\snd{\procA}{\procB_i}{\labelAndType{l_i}{L_i}}, q_i) \mid i \in I}
    \\
\meta{\forall i \in I \st}
    \typingContextOne \typingContextCat
        \hat{\typingContextTwo}_\procC,
        s[\procC] \hasType q_i,
        \set{c_j \hasType L_j}_{j \in I\setminus \set{i}}
        \types P_i \\
}{
    \typingContextOne
    \typingContextCat
    \hat{\typingContextTwo}_\procC,
    s[\procC] \hasType q,
    \set{c_i \hasType L_i}_{i \in I}
        \types
    \IntCh_{i \in I} s[\procC][\procB_i] ! \labelAndMsg{l_i}{c_i} \seq P_i
}
\end{mathpar}
and obtain all premises as well as the facts that
$
Q_\procC
=
\IntCh_{i \in I} c[\procB_i] ! \labelAndMsg{l_i}{c_i} \seq P_i
$ and
$
    \typingContextTwo_\procC
    =
    \hat{\typingContextTwo}_\procC,
    \set{c_i \hasType L_i}_{i \in I}
$.

Because of
$
    \delta(q) \sups
    \set{(\snd{\procA}{\procB_i}{\labelAndType{l_i}{L_i}}, q_i) \mid i \in I}
$,
it is possible that
$l \neq l_i$ for all $i \in I$ since $\card{I} > 0$ by definition.
However, we know that there exists at least one label $l_i$ that can be sent and we choose to use this for the witness
$(\pvec{q}', \xi')$.
This is precisely the reason why we cannot ensure that every possible send transition in the CSM can be followed but we can ensure that there is at least one to follow.
Let $k \in I$ such that $l_k = l$.
We choose
\[
R' \is
    (\Parallel_{\procA \in \ProcsOf{\CSMabb{A}} \setminus \set{\procC}} Q_\procA)
        \parallel
    P_k
        \parallel
    \queueProc{s}
    {\queuecontent[(\procA, \procB_k) \mapsto \queuecontent(\procA, \procB_k) \cat \labelAndMsg{l_k}{v_k}]}
\]
Note that
$Q_k$ is restriction by
\ref{sf-cond-2}.
Thus, $P_k$ is restriction-free, which entails that
$\procToRuntime{P_k} = P_k$.
With
\runtimeReductionOut
and
\runtimeReductionContext,
it is straightforward to show that
$R \redto R'$.
It remains to show that
\[
\small
    \typingContextOne
        \typingContextCat
        \set{\typingContextTwo_\procA}_{\procA \in \ProcsOf{\CSMabb{A}}},
\typingContextTwo'
        \typingContextCat
        \emptyset
        \overset{(\pvec{q}'\negthinspace, \, \xi')}{\typesSFs}
        (\restr s \hasType \CSMabb{A})\,
        (\Parallel_{\procA \in \ProcsOf{\CSMabb{A}} \setminus \set{\procC}} Q_\procA)
            \parallel
        P_k
            \parallel
        \queueProc{s}
        {\queuecontent[(\procC, \procB) \mapsto \queuecontent(\procC, \procB) \cat \labelAndMsg{l_k}{v_k}]}
    \enspace
\]
We start building a typing derivation:

\vspace{-2ex}
{ \scriptsize
\begin{mathpar}
\inferrule*[right=\runtimeTypingRestr ']{
    (a):
    (\vec{q}', \xi') \in \reach(\CSMabb{A})
    \\
(b):
    \meta{\forall \procA \in \ProcsOf{\CSMabb{A}} \st}
    \forall c \hasType q' \in \typingContextTwo_\procA \st
    \EndState(q')
    \\
(c):
    \meta{\forall \procA \in \ProcsOf{\CSMabb{A}} \setminus \set{\procC} \st}
        \typingContextOne
        \typingContextCat
        \typingContextTwo_\procA,
        s[\procA] \hasType \pvec{q}'_\procA
            \typesSFd
        Q_\procA
    \\
(d):
    \typingContextOne
    \typingContextCat
    \hat{\typingContextTwo}_\procC,
    \set{c_i \hasType L_i}_{i \in I \setminus \set{k}},
    s[\procC] \hasType \pvec{q}'_\procC
        \typesSFd
    P_k
    \\
(e):
    \typingContextOne
    \typingContextCat
    \typingContextTwo',
    c_k \hasType L_k
    \typingContextCat
    \set{s[\procA][\procB] \hasQueueType \xi(\procA,\procB)}_{(\procA,\procB) \in \channelsOf{\CSMabb{A}}}
    \typesSFd
    \queueProc{s}
    {\queuecontent[(\procA, \procB_k) \mapsto \queuecontent(\procA, \procB_k) \cat \labelAndMsg{l_k}{v_k}]}
}{
    \typingContextOne
        \typingContextCat
\set{\typingContextTwo_\procA}_{\procA \in \ProcsOf{\CSMabb{A}}},
        \typingContextTwo'
\typingContextCat
        \emptyset
        \typesSFs
        (\restr s \hasType \CSMabb{A})\,
        (\Parallel_{\procA \in \ProcsOf{\CSMabb{A}} \setminus \set{\procC}} Q_\procA)
            \parallel
        P_k
            \parallel
        \queueProc{s}
        {\queuecontent[(\procC, \procB) \mapsto \queuecontent(\procC, \procB) \cat \labelAndMsg{l_k}{v_k}]}
}
\end{mathpar}
}

We now argue why $(a)$ to $(e)$ hold.
\begin{itemize}
 \item $(a)$: This trivially holds since $(\vec{q}, \xi)$ was reachable and there is a transition from the latter to $(\pvec{q}', \xi')$.
 \item $(b)$: This is precisely the same premise obtained by inversion of \ref{sf-cond-2}.
 \item $(c)$: This follows from the premise obtained by inversion of \ref{sf-cond-2}. We solely do not need the fact for $Q_\procC$.
 \item $(d)$: This is one of the premises we obtained through inversion of the typing for $Q_\procC$ with typing rule \procTypingIntCh.
 \item $(e)$: For all channels different from $\channel{\procC}{\procB}$, we can build the typing derivation with the premises obtained by inversion of \ref{sf-cond-2}.
 For
 $\channel{\procC}{\procB}$,
 we observe that the message was appended, while the typing rules type message queues from the start.
 However, by applying
 \cref{lm:typing-message-list-reversal},
 we also obtain a respective typing derivation.
\end{itemize}
This concludes this case.

Second, let $x = \rcv{\procB}{\procC}{\labelAndType{l}{L}}$.
The proof is very similar but also differs in some places.
Thus, we spell it out for completeness.
Notably, we can always choose $(\pvec{q}', \xi') = (\pvec{q}'', \xi'')$ in this case since receives always need to be handled.

We do inversion on
\ref{sf-cond-2}:
\begin{mathpar}
\inferrule*[right=\runtimeTypingRestr ']{
    (\vec{q}, \xi) \in \reach(\CSMabb{A})
    \\
\meta{\forall \procA \in \ProcsOf{\CSMabb{A}} \st}
    \forall c \hasType q' \in \typingContextTwo_\procA \st
    \EndState(q')
    \\
\meta{\forall \procA \in \ProcsOf{\CSMabb{A}} \st}
        \typingContextOne
        \typingContextCat
        \typingContextTwo_\procA,
        s[\procA] \hasType \vec{q}_\procA
        \typesSFd
        Q_\procA
    \\
    \typingContextOne
    \typingContextCat
    \typingContextTwo'
    \typingContextCat
    \set{s[\procA][\procB] \hasQueueType \xi(\procA,\procB)}_{(\procA,\procB) \in \channelsOf{\CSMabb{A}}}
    \typesSFd
    \queueProc{s}{\queuecontent}
}{
    \typingContextOne
        \typingContextCat
        \set{\typingContextTwo_\procA}_{\procA \in \ProcsOf{\CSMabb{A}}},
\typingContextTwo'
        \typingContextCat
        \emptyset
        \typesSFs
        (\restr s \hasType \CSMabb{A})\,
        (\Parallel_{\procA \in \ProcsOf{\CSMabb{A}}} Q_\procA)
        \parallel
        \queueProc{s}{\queuecontent}
}
\end{mathpar}
and obtain all its premises
as well as the fact that
$R =
    (\Parallel_{\procA \in \ProcsOf{\CSMabb{A}}} Q_\procA)
    \parallel
    \queueProc{s}{\queuecontent}
$.
From the fact that $\vec{q}_\procC$ has outgoing transitions, we know that it is not final, which in turn means that $\EndState(\vec{q}_\procC)$ does not hold.
Hence,
two typing rules can apply:
\procTypingProcName and
\procTypingExtCh.

The case analysis here is analogous to the first case, eventually leading to an application of \procTypingExtCh of course.

Thus, we consider \procTypingExtCh as typing rule for $Q_\procC$.
By inversion, we have
\begin{mathpar}
\inferrule*[right=\procTypingExtCh]{
    \delta(q) =
    \set{(\rcv{\procB_i}{\procA}{\labelAndType{l_i}{L_i}}, q_i) \mid i \in I}
    \\
\meta{\forall i \in I \st}
    \typingContextOne \typingContextCat
        \hat{\typingContextTwo}_\procC,
        c \hasType q_i,
        y_i \hasType L_i
        \types P_i \\
}{
    \typingContextOne
    \typingContextCat
    \hat{\typingContextTwo}_\procC,
    s[\procC] \hasType q
        \types
    \ExtCh_{i \in I} s[\procC][\procB_i] ? \labelAndVar{l_i}{y_i} \seq P_i
}
\end{mathpar}
and obtain all premises as well as that
$
Q_\procC
=
\ExtCh_{i \in I} s[\procC][\procB_i] ? \labelAndVar{l_i}{y_i} \seq P_i
$ and
$
    \typingContextTwo_\procC
    =
    \hat{\typingContextTwo}_\procC,
    \set{c_i \hasType L_i}_{i \in I}
$.

Because of
$
    \delta(q) =
    \set{(\rcv{\procB_i}{\procA}{\labelAndType{l_i}{L_i}}, q_i) \mid i \in I}
$,
we know that there is $k \in I$ with
$l \neq l_i$.
We choose
\[
R' \is
    (\Parallel_{\procA \in \ProcsOf{\CSMabb{A}} \setminus \set{\procC}} Q_\procA)
        \parallel
    P_k[v_k / y_k]
        \parallel
    \queueProc{s}
    {\queuecontent[(\procB, \procC) \mapsto \vec{m}]}
\]
where
$\queuecontent(\procB, \procC) = \labelAndMsg{l_k}{v_k} \cat \vec{m}$.
Note that
$Q_k$ is restriction by
\ref{sf-cond-2}.
Thus, $P_k$ is restriction-free, which entails that
$\procToRuntime{P_k[v_k / y_k]} = P_k[v_k / y_k]$.
With
\runtimeReductionIn
and
\runtimeReductionContext,
it is straightforward to show that
$R \redto R'$.
It remains to show
\[
    \typingContextOne
        \typingContextCat
        \set{\typingContextTwo_\procA}_{\procA \in \ProcsOf{\CSMabb{A}}},
        \typingContextTwo'
        \typingContextCat
        \emptyset
        \overset{(\pvec{q}'\negthinspace, \, \xi')}{\typesSFs}
        (\restr s \hasType \CSMabb{A})\,
        (\Parallel_{\procA \in \ProcsOf{\CSMabb{A}} \setminus \set{\procC}} Q_\procA)
            \parallel
        P_k[v_k / y_k]
            \parallel
        \queueProc{s}
        {\queuecontent[(\procC, \procB) \mapsto \vec{m}]}
    \enspace
\]
We start building a typing derivation:

\vspace{-2ex}
{ \scriptsize
\begin{mathpar}
\inferrule*[right=\runtimeTypingRestr ']{
    (a):
    (\vec{q}', \xi') \in \reach(\CSMabb{A})
    \\
(b):
    \meta{\forall \procA \in \ProcsOf{\CSMabb{A}} \st}
    \forall c \hasType q' \in \typingContextTwo_\procA \st
    \EndState(q')
    \\
(c):
    \meta{\forall \procA \in \ProcsOf{\CSMabb{A}} \setminus \set{\procC} \st}
        \typingContextOne
        \typingContextCat
        \typingContextTwo_\procA,
        s[\procA] \hasType \pvec{q}'_\procA
            \typesSFd
        Q_\procA
    \\
(d):
    \typingContextOne
    \typingContextCat
    \hat{\typingContextTwo}_\procC,
    v_k \hasType L_k,
    s[\procC] \hasType \pvec{q}'_\procC
        \typesSFd
    P_k[v_k / y_k]
    \\
(e):
    \typingContextOne
    \typingContextCat
    \hat{\typingContextTwo}'
    \typingContextCat
    \set{s[\procA][\procB] \hasQueueType \xi(\procA,\procB)}_{(\procA,\procB) \in \channelsOf{\CSMabb{A}}}
    \typesSFd
    \queueProc{s}
    {\queuecontent[(\procC, \procB) \mapsto \vec{m}]}
}{
    \typingContextOne
        \typingContextCat
\set{\typingContextTwo_\procA}_{\procA \in \ProcsOf{\CSMabb{A}}},
        \typingContextTwo'
\typingContextCat
        \emptyset
        \typesSFs
        (\restr s \hasType \CSMabb{A})\,
        (\Parallel_{\procA \in \ProcsOf{\CSMabb{A}} \setminus \set{\procC}} Q_\procA)
            \parallel
        P_k[v_k / y_k]
            \parallel
        \queueProc{s}
        {\queuecontent[(\procC, \procB) \mapsto \vec{m}]}
}
\end{mathpar}
}

where
$
    \typingContextTwo'
    =
    \hat{\typingContextTwo}',
    v_k \hasType L_k
$,
which is possible as $v_k$ was in the message queue $\queuecontent(\procC, \procB)$ before.

We now argue why $(a)$ to $(e)$ hold.
\begin{itemize}
 \item $(a)$: This trivially holds since $(\vec{q}, \xi)$ was reachable and there is a transition from the latter to $(\pvec{q}', \xi')$.
 \item $(b)$: This is precisely the same premise obtained by inversion of \ref{sf-cond-2}.
 \item $(c)$: This follows from the premise obtained by inversion of \ref{sf-cond-2}. We solely do not need the fact for $Q_\procC$.
 \item $(d)$: This is almost one of the premises we obtained through inversion of the typing for $Q_\procC$ with typing rule \procTypingExtCh.
        We simply need to apply \cref{lm:substitution-lemma} to obtain the respective typing derivation.
 \item $(e)$: For all channels different from $\channel{\procC}{\procB}$, we can build the typing derivation with the premises obtained by inversion of \ref{sf-cond-2}.
 For
 $\channel{\procC}{\procB}$,
 we can use the same and simply skip one step for the message that it is not in the channel anymore.
\end{itemize}
This concludes this case and hence the whole proof.
\proofEndSymbol
\end{proof}

From subject reduction and session fidelity, deadlock freedom would easily follow.

\begin{restatable}[Deadlock freedom with sink-final FSMs]{lemma}{deadlockFreedomNoNonSinkFinal}
Let $\CSMabb{A}$ be a deadlock-free CSM that satisfies feasible eventual reception and, 
for every $\procA \in \ProcsOf{\CSMabb{A}}$, 
$\CSMabb{A}_\procA$ is sink-final. 
Let $P$ be a process. Assume that \begin{itemize}
    \item $\typesSFs \Defs \hasType \typingContextOne$,
    \item $\typingContextOne
            \typingContextCat
            \typingContextTwo
            \typesSFs
            (\restr s \hasType \CSMabb{A}) \, P$,
    \item $\CSMabb{A}_\procA$ is sink-final for every $\procA \in \ProcsOf{\CSMabb{A}}$, and
    \item $\procToRuntime{(\restr s \hasType \CSMabb{A}) \, P} \redto^* R$.
\end{itemize}
Then, it holds that $R \precongr \zero$ or there is $R'$ such that $R \redto R'$.
\end{restatable}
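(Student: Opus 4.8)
The plan is to combine session fidelity (Theorem~\ref{lm:session-fidelity}) with a subject-reduction invariant for the session-fidelity judgement $\typesSFs$. First I would pass from the process to its runtime configuration: by Lemma~\ref{lm:typingSF-proc-to-runtime-configuration}, $\procToRuntime{(\restr s \hasType \CSMabb{A}) \, P}$ is typable under $\typesSFs$, and the associated CSM configuration can be taken to be the initial configuration of $\CSMabb{A}$, which is trivially reachable. The core of the argument is then the invariant that every runtime configuration reachable along $\redto^*$ is still typable under $\typesSFs$ with \emph{some} reachable CSM configuration $(\vec{q}, \xi) \in \reach(\CSMabb{A})$. I would prove this by induction on the length of the reduction sequence; the inductive step is a subject-reduction statement for $\typesSFs$, mirroring Theorem~\ref{thm:subject-reduction} but restricted to the $\typesSFs$ rule set. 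Since under $\typesSFs$ there is a single session whose participants are implemented by distinct parallel processes, each reduction step of $R$ is one participant's send or receive and is mirrored by a single CSM transition, and Lemma~\ref{lm:typing-reductions-preserve-reachability} then keeps the tracked configuration inside $\reach(\CSMabb{A})$. The error reductions cannot fire, exactly as in the proof of Theorem~\ref{thm:subject-reduction}, because $\CSMabb{A}$ satisfies feasible eventual reception.

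With the invariant in hand, I would fix the reached $R$, typable under $\typesSFs$ with reachable $(\vec{q}, \xi)$, and split on whether $(\vec{q}, \xi)$ has an outgoing transition. If it does, session fidelity (Theorem~\ref{lm:session-fidelity}) directly yields an $R'$ with $R \redto R'$, giving the second disjunct. If $(\vec{q}, \xi)$ has no outgoing transition, then deadlock freedom of $\CSMabb{A}$ forces it to be a final configuration: every local state $\vec{q}_\procA$ is final and every channel is empty.

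In this final case I would show $R \precongr \zero$. Because each $\CSMabb{A}_\procA$ is sink-final, each $\vec{q}_\procA$ is a sink, hence has no outgoing send or receive transition, so $\EndState(\vec{q}_\procA)$ holds and no communication prefix can be typed at $s[\procA]$. Inspecting the $\typesSFd$ derivation of each component process $Q_\procA$ (unfolding any guarded $\pn{Q}[\vec{c}]$), an internal choice would require an outgoing send and an external choice an outgoing receive, both impossible at a sink; hence every $Q_\procA \precongr \zero$. Since $(\vec{q}, \xi)$ has all channels empty, the runtime queue is $\queueProc{s}{\emptystring}$, and collapsing via $P \parallel \zero \congr P$ and the precongruence rule $(\restr s \hasType \CSMabb{A}) \, \queueProc{s}{\emptystring} \precongr \zero$ gives $R \precongr \zero$. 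The main obstacle I anticipate is establishing the subject-reduction invariant for $\typesSFs$: session fidelity only provides the forward direction (a CSM step implies a runtime step preserving typing), so I must separately certify that an \emph{arbitrary} runtime reduction preserves $\typesSFs$-typability and keeps the CSM configuration reachable, adapting the bookkeeping of Theorem~\ref{thm:subject-reduction} to the single-session restriction.
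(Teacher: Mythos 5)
Your proposal is correct and follows essentially the same route as the paper's own (sketch of a) proof: establish by subject reduction that every reachable runtime configuration remains $\typesSFs$-typable against some reachable CSM configuration, then case-split on whether that configuration can step, invoking session fidelity (\cref{lm:session-fidelity}) in the positive case and deadlock freedom plus sink-finality in the negative case. If anything, you are more careful than the paper's sketch, which neither spells out the $\typesSFs$-restricted subject-reduction invariant nor completes the final argument that $R \precongr \zero$.
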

\textit{Proof Sketch.}
First, we claim that for all $k \geq 0$ with
$\procToRuntime{(\restr s \hasType \CSMabb{A}) \, P} \redto^k R$,
it holds that
$
\typingContextOne
    \typingContextCat
    \typingContextTwo
    \typingContextCat
    \emptyset
    \overset{(\vec{q}, \xi)}{\typesSFs}
    R
$
for some
$\vec{q}$ and $\xi$.
This can be shown using subject reduction \cref{thm:subject-reduction}.
We do a case analysis if there is
$
    (\pvec{q}', \xi')
$
with
$
    (\vec{q}, \xi)
        \redto
    (\pvec{q}', \xi')
$.

If so, we know from
\cref{lm:session-fidelity}
that there is $R'$ with $R \redto R'$, which concludes this case.

Suppose that there is no
$
    (\pvec{q}', \xi')
$
with
$
    (\vec{q}, \xi)
        \redto
    (\pvec{q}', \xi')
$.
By the assumption that $\CSMabb{A}$ is deadlock-free and, thus, all $\vec{q}$ are final states (and $\xi$ only has empty channels).

 \section{Additional Material for \cref{sec:reconstructing-global-types}}
\label{app:reconstructing-global-types}

We structure our formalisation in subsections, aligning with the key consequences 
\ref{item:key-consequence-a}
to 
\ref{item:key-consequence-d}, 
explained in the main text. 
Combining these and previous results, we will close this section by proving undecidability of the projectability problem and the strong projectability problem in the presence of mixed choice. 

\subsection{Additional Material for Consequence \ref{item:key-consequence-a}}

We present syntax and semantics of global types following \cite{DBLP:conf/ecoop/Stutz23}.

\begin{definition}[Syntax of global types]
\label{def:global-types}
\emph{Global types} for MSTs are defined by the grammar:
\vspace{-2ex}
    \begin{grammar}
     G \is
       \zero
     | \sum_{i ∈ I} \msgFromTo{\procA_i}{\procB_i}{\val_i} \seq G_i
     | μ t \seq G
     | t
    \end{grammar}
\vspace{-2ex}

The term $\zero$ explicitly represents termination while $\msgFromTo{\procA_i}{\procB_i}{\val_i}$ indicates an interaction where $\procA_i$ sends message $\val_i$ to $\procB_i$.
We assume $\card{I} > 0$ and, if~\mbox{$\card{I} = 1$}, we omit the sum operator.
The operators $\mu t$ and $t$ can be used to encode loops.
We require them to be guarded, i.e., there must be at least one interaction between the binding $\mu t$ and the use of the recursion variable~$t$.
Without loss of generality, all occurrences of recursion variables $t$ are bound and~distinct.
A global type satisfies \emph{mixed choice} if for each
syntactic subterm
$\sum_{i ∈ I} \msgFromTo{\procA_i}{\procB_i}{\val_i} \seq G_i$,
its branches are unique, \ie
$∀ i,j ∈ I.\, i≠j ⇒ \procA_i \neq \procA_j \lor \procB_i \neq \procB_j \lor \val_i ≠ \val_j$;
otherwise it is \emph{non-deterministic}.
Directed choice requires the sender and receiver to be the same but messages to be distinct for branches:
$\A i, j \in I.\, i \neq j ⇒ \procA_i = \procA_j \land \procB_i = \procB_j \land \val_i \neq \val_j$.
In contrast, sender-driven choice requires each receiver-message pair to be distinct:
$\A i,j ∈ I.\, i≠j ⇒ \procA_i = \procA_j \land (\procB_i \neq \procB_j \lor \val_i ≠ \val_j)$.
We may say that a global type is directed or sender-driven.
\end{definition}

\Cref{fig:example1-mst} represents a global type from MSTs.

\begin{definition}[Semantics of global types]
\label{def:language-global-mst}
Let $\GG$ be a global type.
We index every syntactic subterm of $\GG$ with a unique index to distinguish common syntactic subterms, denoted with $[G, k]$ for syntactic subterm $G$ and index $k$.
Without loss of generality, the index for $\GG$ is $1$\emph{:} $[\GG, 1]$.
We define
$\semglobalsync(\GG) = (Q_{\semglobalsync(\GG)}, \AlphSync, δ_{\semglobalsync(\GG)}, q_{0, \semglobalsync(\GG)}, F_{\semglobalsync(\GG)})$ where\begin{itemize}
\item $Q_{\semglobalsync(\GG)}$ is the set of all indexed syntactic subterms $[G, k]$ of $\GG$, 
\item $δ_{\semglobalsync(\GG)}$ is the smallest set containing \\
            $(
            [\sum_{i ∈ I} \msgFromTo{\procA_i}{\procB_i}{\val_i.[G_i, k_i]}, k],
            \msgFromTo{\procA_i}{\procB_i}{\val_i},
            [G_i, k_i]
            )$ for~each $i ∈ I$, \\
            $([μ t. [G',k'_2], k'_1], ε, [G', k'_2])$ and $([t, k'_3], ε, [μ t. [G', k'_2], k'_1])$, \item $q_{0, \semglobalsync(\GG)} = [\GG, 1]$, and
$F_{\semglobalsync(\GG)} = \set{[\zero, k] \mid k \text{ is an index for subterm } \zero}$.
\end{itemize}
We obtain the semantics using $\interswap$:
$
 \lang(\GG)
    \is
 \interswaplang(\lang(\semglobalsync(\GG)))
$.
\end{definition}

For every global type $\GG$, $\semglobalsync(\GG)$, when viewed as a PSM, satisfies a number of properties, which were defined in 
\cite[Def.\,3.5]{DBLP:conf/ecoop/Stutz23}.

\begin{definition}[Ancestor-recursive, non-merging, intermediate recursion, etc.]
\label{def:anc-rec-non-merging-etc}
Let $A = (Q, \Delta, \delta, q_{0}, F)$ be a finite state machine.
For convenience, we write $q \rightarrow q'$ if $q \xrightarrow{x} q'$ for some $x \in \Delta$.
We say that $A$ is \emph{ancestor-recursive} if there is a function $\levelfunc \from Q \to \Nat$ such that, for every transition $q \xrightarrow{x} q' \in \delta$, one of the two holds:
\begin{enumerate}[labelindent=0pt,labelwidth=\widthof{(ii)},label=\textnormal{(\roman*)},itemindent=0em,leftmargin=!]
 \item $\levelfunc(q) > \levelfunc(q')$, or
 \item $x = \emptystring$ and there is a run from the initial state $q_0$ (without going through $q$) to $q'$ which can be completed to reach $q$:
$q_0 \rightarrow \ldots \rightarrow q_n$ is a run with $q_n = q'$ and $q \neq q_i$ for every $0 \leq i \leq n$, and
 the run can be extended to
 $q_0 \rightarrow \ldots \rightarrow q_n \rightarrow \ldots \rightarrow q_{n+m}$ with $q_{n+m} = q$.
Then, the state $q'$ is called \emph{ancestor} of $q$.
\end{enumerate}
We call the first \textnormal{(i)} kind of transition \emph{forward transition} while the second \textnormal{(ii)} kind is a \emph{backward transition}.
The state machine $A$ is said to be free from \emph{intermediate recursion} if every state $q$ with more than one outgoing transition, i.e.,
$
\card{\set{q' \mid q \rightarrow q' \in \delta}} > 1
$,
has only forward transitions.
We say that $A$ is \emph{non-merging} if every state only has one incoming edge with greater level, i.e., for every state $q'$, $\set{q \mid q \rightarrow q' \in \delta \land \levelfunc(q) > \levelfunc(q')} \leq 1$.
\end{definition}

\citet[Prop.\,3.6]{DBLP:conf/ecoop/Stutz23} show that state machines for sender-driven global types satisfy the above properties.
It is straightforward that this also holds for mixed-choice and non-deterministic global types.

\begin{proposition}
\label{prop:global-type-as-psm}
Let $\GG$ be a global type.
Then, $\semglobalsync(\GG)$ is a \sinkfinal, ancestor-recursive, non-merging, dense \sumOnePSM without intermediate recursion.
If $\GG$ is non-deterministic, mixed-choice, sender-driven or directed, so is $\semglobalsync(\GG)$.
\end{proposition}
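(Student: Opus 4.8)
The plan is to exploit the fact that, because every syntactic subterm carries a unique index, $\semglobalsync(\GG)$ is literally the abstract syntax tree of $\GG$ decorated with back-edges: each sum node $\sum_{i\in I}\msgFromTo{\procA_i}{\procB_i}{\val_i}\seq G_i$ has one paired-event transition $\msgFromTo{\procA_i}{\procB_i}{\val_i}$ to each child $[G_i,k_i]$; each binder $\mu t.G'$ has one $\emptystring$-transition to its body; each recursion variable $t$ has one $\emptystring$-transition back to its binder $\mu t$; and each $\zero$ is a leaf. With this tree-plus-back-edge picture in hand, all the structural properties reduce to local observations about these four kinds of node.

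First I would dispatch the easy properties. Sink-finality: the final states are exactly the $\zero$-leaves, which have no outgoing transition, while every other subterm occurrence has at least one outgoing transition (a sum has $\card{I}>0$ paired edges, and $\mu$- and $t$-nodes each have one $\emptystring$-edge); hence a state is a sink iff it is final. Density: the only $\emptystring$-transitions leave $\mu$- and $t$-nodes, and each of those has a unique outgoing transition. Absence of intermediate recursion: the only states with more than one outgoing transition are sum nodes, all of whose outgoing edges are paired-event (forward) transitions. Finally, that $\semglobalsync(\GG)$ is a \sumOnePSM follows because every transition is labelled by a paired event $\msgFromTo{\procA}{\procB}{\val}=\snd{\procA}{\procB}{\val}\cat\rcv{\procA}{\procB}{\val}$, so each send in a trace is immediately matched by its receive; thus $\lang(\semglobalsync(\GG))\subseteq\FIFO_1$, the core language is \sumBounded{1}, and feasible eventual reception holds trivially (a prefix cut between a matched pair is completed by the very next event).

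The heart of the argument is ancestor-recursiveness together with non-merging, and here I would take the level function $\levelfunc$ to be tree height, counting $\zero$- and $t$-nodes as height $0$. Every paired-event edge and every $\mu$-unfolding edge goes from a node to one of its proper subtrees, hence strictly decreases height and is a forward transition; the only remaining edges are the back-edges $t\to\mu t$. Since recursion is guarded, $\mu t$ is a proper ancestor of $t$ in the syntax tree with at least one interaction strictly between them, so there is a run from $q_0=[\GG,1]$ that reaches $\mu t$ via the downward tree path without visiting $t$ and then continues down the body to reach $t$; this is exactly the backward-transition clause, establishing ancestor-recursiveness. Non-merging is then immediate: in the tree each node has a unique parent, which provides its unique incoming higher-level (forward) edge, whereas a back-edge $t\to\mu t$ raises the level and therefore never contributes to the count.

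The remaining claim is preservation of the choice discipline, which I would prove by matching each global-type condition to its PSM counterpart on the outgoing edges of a sum node, first recording the \sumOnePSM alphabet convention that the sender and receiver of a paired label $\msgFromTo{\procA}{\procB}{\val}$ are $\procA$ and $\procB$. The outgoing transitions of $\sum_{i\in I}\msgFromTo{\procA_i}{\procB_i}{\val_i}\seq G_i$ are precisely its branches, so: non-uniqueness of branches yields repeated labels and hence a non-deterministic PSM; uniqueness of branches (mixed choice) yields distinct labels, which together with density gives determinism, i.e.\ PSM mixed choice; a common sender across branches (sender-driven) lets me define $\choicefunction$ on that node, and directedness additionally fixes the receiver; the $\mu$-, $t$- and $\zero$-nodes carry no send transitions and so impose no constraint on $\choicefunction$. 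The main obstacle I anticipate is the ancestor clause: making the reachability condition of \cref{def:anc-rec-non-merging-etc} precise requires carefully using guardedness to exhibit a $q_0$-run that reaches $\mu t$ while avoiding $t$ and is extendable to $t$, and checking that the index bookkeeping of the construction does not secretly introduce sharing that would break the tree structure on which every other step relies.
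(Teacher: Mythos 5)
Your proof is correct, and it takes the approach the paper intends: the paper itself gives no argument for this proposition beyond citing Prop.~3.6 of Stutz (ECOOP 2023) for the sender-driven case and asserting that the non-deterministic and mixed-choice cases are straightforward, so your tree-plus-back-edges analysis simply supplies the details delegated to that citation. The level function, the treatment of the $t \to \mu t$ back-edges via guarded ancestry, and the branch-by-branch matching of the choice disciplines are all the natural (and, as far as I can check, gap-free) way to discharge the definitions in the paper.
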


\subsection{Additional Material for Consequence 
\ref{item:key-consequence-b}}
\label{app:tree-transformation}

Every global type's state machine is a \sinkfinal tree-like \sumOnePSM, when viewed as a PSM.
This raises an obvious question.
For which kind of PSMs can we have global types that have the same (core) language?
And can we preserve the various restrictions on choice, \eg sender-driven choice, for those?
Such preservation is particularly interesting in the light of our undecidability result for projectability of mixed-choice \sumOnePSMs.
It is immediate that we can only achieve this for \sumOnePSMs because we consider the core language and not the semantics.

\paragraph{Anti-patterns.}
Visually, we want to transform a \sumOnePSM with an arbitrary structure into a tree-like structure where recursion only happens at leaves and to ancestors.
There are several anti-patterns one needs to consider: \eg mutual recursion, intermediate recursion, and merging.
As is standard for a tree-like shape, we assume states have different levels:
the initial state has the highest level and the level usually decreases when taking a transition.
If not, it is considered recursion.
First, recursion is supposed to lead to an ancestor, \ie a state from which the state itself can be reached without increasing the level again; if this is not the case, we call this \emph{mutual recursion}.
Second, recursion should happen at a leaf, \ie there is no outgoing transition to a state with smaller level;
if there is such a transition, there is \emph{intermediate recursion}.
Third, every state ought to have at most one incoming transition;
if not, we call this~\emph{merging}.

\paragraph{Naive approach breaks choice restrictions.}
It is rather easy to remove these anti-patterns by duplicating various parts of the \sumOnePSM.
This will easily introduce non-determinism though, defeating the goal of preserving restrictions on choice.
With this in mind, the problem becomes significantly more challenging.

\paragraph{Overview of our workflow.}
We develop a workflow that transforms \sumOnePSMs to tree-like \sumOnePSMs with recursion at leaves and to ancestors, which are easily turned into global types.
Let us first give a very high-level overview and report on technical challenges.
The key insight to establish the desirable properties is the use of regular expressions as intermediate representations.
It is well-known that Arden's Lemma \cite{DBLP:conf/focs/Arden61} can be used to transform an FSM into a regular expression,
but it produces one regular expression for every final state.
We flip Arden's lemma, prove it correct, and use it to produce a regular expression for the (only) initial state.
This is only sound as we solely consider \sinkfinal \sumOnePSMs.
(Our results can only preserve the restrictions of choice for \sinkfinal \sumOnePSMs, which is reasonable because global types as PSMs are always \sinkfinal.)
To prove the preservation of choice restrictions, we also define these for regular expressions over $\AlphAsync$, inspired by deterministic regular expressions by \citet{DBLP:journals/iandc/Bruggemann-KleinW98a}.
Let us now explain how regular expressions help to establish the desirable properties.
Intuitively, one can traverse a regular expression bottom-up and generate an FSM for the same language.
The expression for alternative $\regex[1] + \regex[2]$ becomes a branch to the two respective FSMs for $\regex[1]$ and $\regex[2]$.
For concatenation $\regex[1] \cat \regex[2]$, we simply connect both FSMs, and for Kleene star $\regex^*$ we make the initial state final and add transitions from the final to the initial state.
While this gives an idea of our approach, such treatment still introduces (undesirable) non-determinism to connect different FSMs.
To avoid this, we employ Brzozowski derivatives \cite{DBLP:journals/jacm/Brzozowski64}, but adapt them to PSMs.
They allow us to pull the first event out so we can use labelled transitions to connect the FSMs.
Of course, we also prove that these PSM derivatives preserve the restrictions on choice.

\noindent
\begin{minipage}{\textwidth}
Our workflow comprises the following steps:
\begin{enumerate}[label=\textnormal{(\arabic*)}]
 \addtocounter{enumi}{-1}
 \item \label{wf:psm-sink-state-iff-final-state}
        make the PSM \sinkfinal for the price of introducing non-determinism \item \label{wf:psm-to-regex}
        compute a regular expression for the initial state of the \sinkfinal PSM
 \item \label{wf:regex-to-arnmdirf-psm}
        convert regular expression to a PSM that is ancestor-recursive, non-merging, dense, and intermediate-recursion-free 
 \item \label{wf:arnmdfir-psm-to-global-type}
        if the original PSM is a \sumOnePSM,
        transform the result from the previous step to a global type
\end{enumerate}
\end{minipage}

\medskip
Without loss of generality, we assume that every sink state is final: 
any state, for which this is not the case, can simply be removed while preserving the core language and semantics of a PSM.

For the last step, we only consider \sumOnePSMs because global types always jointly specify send and receive events.

\begin{remark}Our constructions do also apply to FSMs over other alphabets.
For these, the reasoning about preserving sender-driven choice often translates to preserving determinism.
Thus, it shows that the above structural conditions do not change expressivity for \sinkfinal deterministic FSMs.
For FSMs for participants of protocols, this establishes a connection to local types, as shown later.
\end{remark}

\subsubsection*{Step \ref{wf:psm-sink-state-iff-final-state}: Sink State iff Final State.}
\label{sec:psm-sink-state-iff-final-state}

This can be considered to be a preprocessing step for PSMs that are not \sinkfinal, making the workflow more general.
This transformation step simply introduces a new final sink state to which transitions can lead non-deterministically.

We give a construction with a single fresh final state, which can be (non-determinis\-tically) reached instead of any previous final state.

\begin{procedure}[PSM: Sink State iff Final State]
\label{proc:sink-state-iff-final-state}
Let $\PSM = (Q, \AlphAsync, \delta, q_{0}, F)$ be a PSM with $\emptystring \notin \semantics(\PSM)$.
We define a function that turns $\PSM$ into a \sinkfinal PSM:
\[
\psmToSinkFinalFunc(\PSM) \is (Q \dunion \set{q_f}, \AlphAsync, \delta', q_0, \set{q_f})
\]
where
$(q_1, x, q_2) \in \delta'$ if $(q_1, x, q_2) \in \delta$ as well as
$(q_1, x, q_f) \in \delta'$ if $(q_1, x, q_2) \in \delta$ and $q_2 \in F$.
\end{procedure}

The condition that $\emptystring \notin \semantics(\PSM)$ ensures that there is a predecessor state for every final state to which we can add the transition.

\begin{proposition}
\label{prop:psmToSinkFinalFuncCorrect}
Let \sumOnePSM be a $\PSM$ such that $\emptystring \notin \semantics(\PSM)$.
Then, the PSM $\psmToSinkFinalFunc(\PSM)$ is \sinkfinal.
\end{proposition}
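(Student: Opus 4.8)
The plan is to check the two implications in the definition of \sinkfinal directly on the constructed machine $\psmToSinkFinalFunc(\PSM) = (Q \dunion \set{q_f}, \AlphAsync, \delta', q_0, \set{q_f})$: that every final state is a sink, and that every sink is final. The first implication is immediate. The only final state of the constructed machine is the fresh state $q_f$, and by construction $q_f$ occurs only as the \emph{target} of transitions in $\delta'$ (both defining clauses place $q_f$ on the right). Hence $q_f$ has no outgoing transition and is a sink, settling ``final $\Rightarrow$ sink''.

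For ``sink $\Rightarrow$ final'' I would take an arbitrary sink $q$ and show $q = q_f$. The key observation is that the redirection clause adds an outgoing transition $q_1 \xrightarrow{x} q_f$ only to a state $q_1$ that already carries the original transition $q_1 \xrightarrow{x} q_2$ with $q_2 \in F$; it therefore never turns a sink of $\PSM$ into a non-sink. Combined with $\delta \subseteq \delta'$, this yields that a state $q \in Q$ is a sink in $\delta'$ if and only if it is a sink in $\PSM$. By the standing assumption that every sink of $\PSM$ is final, such a $q$ lies in $F$; but since $F$ is no longer the final set of the new machine, $q$ is a non-final state without outgoing transitions, i.e.\ a dead end. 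I would discharge these states using the standing convention that non-final sinks may be removed without affecting the core language or semantics: after this removal the only remaining sink is $q_f$, which is final, and \sinkfinal follows.

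To make the pruning sound I would check that no maximal trace is lost. Every maximal run of $\PSM$ ending in a final state $q_2$ is mirrored by a run of $\psmToSinkFinalFunc(\PSM)$ that takes the redirected edge into $q_f$ instead, so the ``accept here'' behaviour of every former final state --- in particular of every former final \emph{sink} --- is already captured at $q_f$; the domain hypothesis $\emptystring \notin \semantics(\PSM)$ guarantees that each final state has a predecessor, so the redirection genuinely fires. Consequently deleting a former final sink together with its incoming edges removes only runs that are already duplicated at $q_f$.

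I expect the main obstacle to be exactly these former final sinks (for instance the $\zero$-termination states of a global type's PSM): the literal construction retains all of $Q$, so such a state becomes a non-final sink and would, read naively, violate \sinkfinal. The crux of the argument is recognizing that the $q_f$-redirection turns them into dead states subsumed by $q_f$, so that they can be pruned via the standing WLOG; once this is made explicit, the remaining bookkeeping is routine.
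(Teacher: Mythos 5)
Your argument is correct, and it is in fact more careful than the paper, which asserts this proposition without any proof. The two halves are exactly right: $q_f$ is the unique final state and by construction only ever occurs as the target of a transition in $\delta'$, so it is a sink; and since the redirection clause only adds an outgoing edge to a state $q_1$ that already carries one, the sinks among the old states $Q$ are precisely the sinks of $\PSM$. The real content of your proposal is the observation that the literal construction is \emph{not} \sinkfinal: a final sink of $\PSM$ (e.g.\ the terminal state reached after a $\zero$) survives in $Q \dunion \set{q_f}$ as a non-final sink, so the claim only holds modulo the section's standing convention that non-final sink states may be deleted. Your justification of that pruning is also sound: the hypothesis $\emptystring \notin \semantics(\PSM)$ rules out the empty maximal run, so every finite maximal run has a last transition $(q_1, x, q_2)$ with $q_2 \in F$ and hence a companion $(q_1, x, q_f)$ with the same trace, while infinite runs use only $\delta$-transitions and are unaffected; thus a former final sink is a dead end whose removal changes neither the core language nor the semantics. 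The only step you leave implicit is that the removal creates no \emph{new} sinks, but this is immediate: every predecessor of a removed state keeps its redirected edge to $q_f$, and $\PSM$ itself has no non-final sinks by the same standing convention, so after pruning the unique sink is $q_f$, which is final. I would only tighten the phrase ``each final state has a predecessor'' (an unreachable final state need not have one; what matters is that each finite maximal run has a last transition), but the paper's own remark after the procedure commits the same imprecision.
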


It is obvious that this construction introduces and, thus, does not preserve sender-driven choice.

\subsubsection*{Step \ref{wf:psm-to-regex}: From Sink-final PSMs to Regular Expressions.}
\label{sec:psm-to-regex}

This transformation step translates a sink-final PSM to a regular expression over~$\AlphAsync$ that specifies the same core language.
It is well-known that this can be done using Arden's Lemma~\cite{DBLP:conf/focs/Arden61}.
We cannot apply the standard technique though, as it would produce as many regular expressions as final states.
Such treatment makes it very hard to argue about the preservation of sender-driven choice.
Instead, we exploit the fact that the PSM is \sinkfinal and produce a single regular expression for the initial state.
This also enables the treatment of infinite words, which solely require an infinite run that necessarily does not end in a final state.

We define regular expressions and include infinite words in their semantics.

\begin{definition}[Regular Expressions]
Let $\StdAlphabet$ be an alphabet.
Regular expressions (REs) over $\StdAlphabet$ are inductively defined by the following grammar where $a \in \StdAlphabet$:
\begin{grammar}
    \regex
        \is
    \emptystring |
    a |
    \regex + \regex |
    \regex \cat \regex |
    \regex^*
\end{grammar}
The concatenation operator $\cat$ has precedence over $+$.
We define
$\regexfinlang{a} = \set{a}$,
$\regexfinlang{\regex[1] + \regex[2]} =
    \regexfinlang{\regex[1]} \union \regexfinlang{\regex[2]}$,
$\regexfinlang{\regex[1] \cat \regex[2]} =
    \set{w_1 \cat w_2 \mid
        w_1 \in \regexfinlang{\regex[1]},
        w_2 \in \regexfinlang{\regex[2]}}$,
and
$\regexfinlang{\regex^*} =
    \set{w_1 \ldots w_n \mid n \in \Nat, \forall i \leq n
    \st w_i \in \regexfinlang{\regex}}$.
The infinite language $\regexinflang{\regex}$ is defined as
$\set{w \in \StdAlphabet^\omega \mid \A w' \in \pref(w) \st w' \in \pref(\regexfinlang{\regex})}$.
The language $\lang(\regex)$ is the union of $\regexfinlang{\regex}$ and $\regexinflang{\regex}$.
The function $\symbolsFunc(\regex)$ is the set of all letters in $\regex$, \ie the smallest subset $\StdAlphabet' \subseteq \StdAlphabet$ such that $\lang(\regex) \subseteq (\StdAlphabet')^\infty$.
We denote the set of all regular expressions over $\StdAlphabet$ with $\regexs_\StdAlphabet$.
\end{definition}

Instead of constructing the regular expressions for final states, as is standard with Arden's Lemma, we construct one for the initial state.
This is sound because a state is a sink if and only if it is final.
It also lets us handle infinite words.
For a state machine, an infinite word is part of its semantics if there is an infinite run.
Here, we mimic this: an infinite word is in the semantics of a regular expression if every prefix of the word is a prefix of a word in the finite~semantics.

\begin{restatable}[Arden's Lemma -- swapped]{lemma}{ardenSwapped}
\label{lm:Arden-swapped}
Let $\regex[1]$ and $\regex[2]$ be two regular expressions over an alphabet~$\StdAlphabet$.
If $\regex[1]$ does not contain the empty string,
\ie $\emptystring \not\in \regexfinlang{\regex[1]}$, then
$\regex[3] = \regex[2] + (\regex[1]\cat\regex[3])$ has a unique solution that is $\regex[3] = \regex[1]^*\cat\regex[2]$.
\end{restatable}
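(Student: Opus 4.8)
The plan is to reduce the statement, which concerns the full semantics $\lang(\hole)$ over finite \emph{and} infinite words, to the classical Arden equation on finite languages. The crucial observation is that, by definition, the infinite language of a regular expression is entirely determined by its finite language, namely $\regexinflang{\regex} = \set{w \in \StdAlphabet^\omega \mid \pref(w) \subseteq \pref(\regexfinlang{\regex})}$, and that $\regexfinlang{\hole}$ is itself defined compositionally from the finite languages of subexpressions using union, concatenation, and Kleene star. From this I would extract the key lemma underpinning the whole argument: two regular expressions have equal full languages iff they have equal finite languages — the forward direction by intersecting with $\StdAlphabet^*$, the backward direction because equal finite languages force equal $\pref$-sets and hence equal infinite parts. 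This lets me replace every claim about $\lang(\hole)$ by the corresponding claim about the finite languages $L_i \is \regexfinlang{\regex[i]}$, so that the $\omega$-semantics never has to be manipulated directly.

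First I would establish existence, i.e.\ that $\regex[1]^*\cat\regex[2]$ solves the equation. On finite languages this is the identity $L_1^* L_2 = L_2 \cup L_1(L_1^* L_2)$, which follows immediately from $L_1^* L_2 = (\set{\emptystring} \cup L_1 L_1^*)L_2 = L_2 \cup L_1 L_1^* L_2$. By the reduction lemma above, this finite-language identity upgrades to $\lang(\regex[1]^*\cat\regex[2]) = \lang(\regex[2] + \regex[1]\cat(\regex[1]^*\cat\regex[2]))$, so $\regex[1]^*\cat\regex[2]$ is indeed a solution.

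Next I would prove uniqueness (read, necessarily, up to language equivalence). Let $\regex[3]$ be any regular expression with $\lang(\regex[3]) = \lang(\regex[2] + \regex[1]\cat\regex[3])$; writing $L_3 \is \regexfinlang{\regex[3]}$, the reduction gives the fixpoint equation $L_3 = L_2 \cup L_1 L_3$. The inclusion $L_1^* L_2 \subseteq L_3$ follows by showing $L_1^i L_2 \subseteq L_3$ by induction on $i$, unfolding the equation once per step. For the converse $L_3 \subseteq L_1^* L_2$, I would iterate the equation to obtain $L_3 = \bigl(\bigcup_{i=0}^{k} L_1^i\bigr)L_2 \cup L_1^{k+1} L_3$ for every $k$, and then invoke the hypothesis $\emptystring \notin L_1$: every word in $L_1^{k+1}L_3$ has length at least $k+1$, so a word $w \in L_3$ of length $n$ must already lie in $\bigl(\bigcup_{i=0}^{n} L_1^i\bigr)L_2 \subseteq L_1^* L_2$ upon taking $k = n$. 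Hence $L_3 = L_1^* L_2$, and the reduction promotes this to $\lang(\regex[3]) = \lang(\regex[1]^*\cat\regex[2])$.

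The finite-word core is textbook, so the only genuine subtlety — and the point I would be most careful to state precisely — is the passage between finite and infinite semantics. Everything rests on the fact that the $\omega$-semantics carries no information beyond the finite semantics, which is what makes the reduction lemma valid and simultaneously explains why ``unique solution'' can only mean unique up to language equivalence. The essential use of $\emptystring \notin \regexfinlang{\regex[1]}$ is exactly the length/well-foundedness argument that terminates the unfolding; without it the fixpoint need not be unique, so I would flag that as the single hypothesis doing the real work.
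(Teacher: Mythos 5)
Your proof is correct, and at its core it is the same argument the paper gestures at: iterate the fixpoint equation until the residual term disappears. The paper's own proof, however, consists solely of the informal infinite unfolding $\regex[3] = \regex[2] + \regex[1]\cat\regex[2] + \regex[1]^{2}\cat\regex[3] + \cdots = (\emptystring + \regex[1] + \regex[1]^{2} + \cdots)\cat\regex[2] = \regex[1]^{*}\cat\regex[2]$, declared ``analogous to the original''; it does not separate existence from uniqueness, never explicitly invokes the hypothesis $\emptystring \notin \regexfinlang{\regex[1]}$, and says nothing about the infinite-word component of $\lang(\hole)$, even though the swapped version is introduced precisely to handle infinite runs. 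Your two additions supply exactly what that sketch omits: the reduction lemma showing that $\regexinflang{\hole}$ is fully determined by $\regexfinlang{\hole}$, so the whole argument can legitimately be run on finite languages; and the length-based termination argument for the inclusion $L_3 \subseteq L_1^{*}L_2$, which is the one place the empty-string hypothesis does real work. You also correctly flag that ``unique solution'' can only mean unique up to language equivalence. In short, the approaches coincide in spirit, but yours is the rigorous version of the paper's one-line unfolding, and it is the only one of the two that visibly discharges the side condition and the $\omega$-semantics.
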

\begin{proof}
The proof is analogous to the original one:
\begin{align*}
 \regex[3] & = \regex[2] + \regex[1] \cat \regex[3] \\
           & = \regex[2] + \regex[1] \cat (\regex[2] + \regex[1] \cat \regex[3]) \\
           & = \regex[2] + \regex[1] \cat \regex[2] + \regex[1] \cat \regex[1] \cat \regex[3] \\
           & = \ldots \\
           & = \regex[2] + \regex[1] \cat \regex[2] + \regex[1]^2 \cat \regex[3]+ \regex[1]^3 \cat \regex[3] + \ldots  \\
           & = (\emptystring + \regex[1] + \regex[1]^2 + \regex[1]^3 + \ldots) \cat \regex[2]  \\
           & = \regex[1]^* \cat \regex[2]
\end{align*}
\proofEndSymbol
\end{proof}

For sender-driven PSMs, we want to show that sender-driven choice is preserved.
Therefore, we need a notion of sender-driven choice for regular expressions.
We define this following work on deterministic regular expressions \cite{DBLP:journals/iandc/Bruggemann-KleinW98a}.

\newcommand{\markFunc}{\texttt{\upshape mark}}
\newcommand{\unmarkFunc}{\texttt{\upshape unmark}}
\begin{definition}[Marking and unmarking regular expressions]
Let $\StdAlphabet$ be an alphabet and $\regex \in \regexs_\StdAlphabet$ be a regular expression.
We define a function $\markFunc(\regex)$ that simply subscripts every letter in $\regex$ with a distinct index and the inverse function $\unmarkFunc(\regex)$, which is also defined for words over $\StdAlphabet$.
\end{definition}

\begin{definition}[Mixed-choice, sender-driven and directed regular expressions]
Let $\regex \in \regexs_{\AlphAsync}$.
We say that $\regex$ is a \emph{sender-driven} regular expression if the following holds:
for every $u \in \AlphAsync^*$ and $x, y \in \AlphAsync$,
if $ux \in \pref(\lang(\markFunc(\regex)))$, $uy \in \pref(\lang(\markFunc(\regex)))$ and $x \neq y$, then
$\unmarkFunc(x) \neq \unmarkFunc(y)$
as well as
$\unmarkFunc(x) \in \set{\snd{\procA}{\procB}{\_} \mid \procB \in \Procs}$
and
$\unmarkFunc(y) \in \set{\snd{\procA}{\procB}{\_} \mid \procB \in \Procs}$
for some $\procA \in \Procs$.
For \emph{directed choice}, we also require $\procB$ to be the same for both $x$ and $y$ and, for \emph{mixed choice}, we solely require $\unmarkFunc(x) \neq \unmarkFunc(y)$.
\end{definition}

Compared to deterministic regular expressions, our definition requires the special alphabet $\AlphAsync$ (and adds a condition for sender-driven and directed choice).

\begin{proposition}
Every mixed-choice, sender-driven or directed RE is a deterministic~RE.
\end{proposition}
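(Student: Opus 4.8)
The plan is to unfold the definitions and observe that the defining condition of a deterministic (one-unambiguous) regular expression is exactly the weakest of the four choice conditions, so each of the three stronger notions entails it almost by inspection. First I would recall the prefix-based characterisation of a deterministic RE: $\regex$ is deterministic if for every $u$ and every pair of marked symbols $x,y$ with $ux \in \pref(\lang(\markFunc(\regex)))$, $uy \in \pref(\lang(\markFunc(\regex)))$ and $x \neq y$, one has $\unmarkFunc(x) \neq \unmarkFunc(y)$. This is the standard one-unambiguity requirement phrased so that it shares its hypotheses verbatim with the definitions of mixed-choice, sender-driven, and directed REs given just above the proposition. I would note once that the prefix-of-marked-language formulation is equivalent to the word-based one-unambiguity definition of Br\"uggemann-Klein and Wood, since $uxv \in \lang(\markFunc(\regex))$ iff $ux \in \pref(\lang(\markFunc(\regex)))$.

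The argument is then a direct comparison of the conclusions of the three conditions under the common hypotheses $ux, uy \in \pref(\lang(\markFunc(\regex)))$ and $x \neq y$. For a mixed-choice RE, the required conclusion is precisely $\unmarkFunc(x) \neq \unmarkFunc(y)$, which coincides with the determinism conclusion, so the two notions agree on their scaffolding and a mixed-choice RE is deterministic. For a sender-driven RE, the conclusion is the conjunction of $\unmarkFunc(x) \neq \unmarkFunc(y)$ with the extra clause that $\unmarkFunc(x)$ and $\unmarkFunc(y)$ both lie in $\set{\snd{\procA}{\procB}{\_} \mid \procB \in \Procs}$ for a common sender $\procA$; discarding the second conjunct leaves exactly the determinism clause, so the sender-driven condition is strictly stronger and implies determinism. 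For a directed RE, the conclusion strengthens the sender-driven one further by also fixing the receiver $\procB$, hence it contains the determinism clause as a conjunct as well, and directedness implies determinism a fortiori. Since every RE satisfying any of the three conditions satisfies the determinism condition under the identical hypotheses, the proposition follows.

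There is no substantive mathematical obstacle here: the statement is essentially definitional, and the whole content is that the three choice conditions are logical strengthenings of one-unambiguity obtained by adding conjuncts. The only points requiring care are (i) stating the deterministic-RE definition in the same prefix-of-marked-language form so that the implications reduce to syntactic containment of the conclusions, and (ii) observing that the existential ``for some $\procA$'' (and, for directedness, the shared $\procB$) is only an additional constraint in the conclusion and never weakens the clause $\unmarkFunc(x) \neq \unmarkFunc(y)$. I would also remark in passing that these implications are strict, matching the intent of the surrounding development, although strictness is not needed for the statement itself.
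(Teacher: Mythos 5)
Your proof is correct and matches the paper's (implicit) argument: the paper states this proposition without proof precisely because, as you observe, the one-unambiguity clause $\unmarkFunc(x) \neq \unmarkFunc(y)$ is literally a conjunct of each of the three choice conditions under identical hypotheses, so the implication is definitional. Your careful unfolding — including the remark that the prefix-of-marked-language formulation agrees with the Br\"uggemann-Klein--Wood word-based one — is exactly the intended justification.
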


\newcommand{\firstFunc}{\texttt{\upshape first}}
\newcommand{\followFunc}{\texttt{\upshape follow}}

\begin{definition}
Let $\StdAlphabet$ be an alphabet and $L \subseteq \StdAlphabet^\infty$.
We define a function that collects all first letters of $L$:
$\firstFunc(L) \is \pref(L) \inters \StdAlphabet$.
The function $\followFunc(L, a)$ collects all letters that can occur after $a$ in $L$:
$\followFunc(L, a) \is \set{b \mid wab \in \pref(L)}$
and
$\followFunc(L, \emptystring) \is \firstFunc(L)$.
\end{definition}

The following lemma follows from a straightforward adaption of Lemma 2.2 by
\citet{DBLP:journals/iandc/Bruggemann-KleinW98a}.

\begin{lemma}
An RE $\regex \in \regexs_{\AlphAsync}$ is a  sender-driven RE if and only if,
for every $z \in \symbolsFunc(\markFunc(\regex)) \dunion \set{\emptystring}$ and
every $x, y \in \followFunc(\markFunc(\regex), z)$,
if $x \neq y$,
then $\unmarkFunc(x) \neq \unmarkFunc(y)$,
as well as
$\unmarkFunc(x) \in \lang(\AlphAsync_\procA)$
and
$\unmarkFunc(y) \in \lang(\AlphAsync_\procA)$
for some $\procA \in \Procs$.
\end{lemma}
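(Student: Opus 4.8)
The plan is to obtain the statement as a refinement of Brüggemann-Klein and Wood's Lemma 2.2, which characterises \emph{deterministic} (one-unambiguous) regular expressions precisely by the distinctness clause ``$x \neq y \Rightarrow \unmarkFunc(x) \neq \unmarkFunc(y)$'', stated equivalently over prefixes and over follow sets. The only surplus content in our definition of sender-driven RE is the colouring requirement that the two branching symbols unmark to send actions $\snd{\procA}{\_}{\_}$ of one common participant $\procA$. My central observation is that the definition of sender-driven RE and the follow-set condition quantify over \emph{the same} set of symbol pairs $(x,y)$, merely described in two ways: via a common prefix $u$ with $ux, uy \in \pref(\lang(\markFunc(\regex)))$, versus via a common predecessor $z$ with $x, y \in \followFunc(\markFunc(\regex), z)$. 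Once this pair-set equivalence is in place, every predicate stated about $(\unmarkFunc(x), \unmarkFunc(y))$ --- the distinctness clause and the send-from-$\procA$ clause alike --- transfers verbatim between the two formulations, and the equivalence of the two whole conditions is immediate.

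Concretely, I would prove the bridging equivalence first. For the easy direction, given $u, x, y$ with $ux, uy \in \pref(\lang(\markFunc(\regex)))$ and $x \neq y$, set $z$ to be the last symbol of $u$, or $z \is \emptystring$ when $u = \emptystring$; then $x, y \in \followFunc(\markFunc(\regex), z)$ directly from the definitions of $\followFunc$ and $\firstFunc$. This yields the implication from the follow-set condition to sender-drivenness. For the reverse direction, given $z$ and $x, y \in \followFunc(\markFunc(\regex), z)$, I have witnesses $w_1 z x \in \pref(\lang(\markFunc(\regex)))$ and $w_2 z y \in \pref(\lang(\markFunc(\regex)))$, but possibly with $w_1 \neq w_2$. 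I would then invoke the key property of marked (linearised) expressions underlying Brüggemann-Klein/Wood: since every symbol of $\markFunc(\regex)$ occurs uniquely, the set of admissible continuations after an occurrence of $z$ depends only on $z$, not on the prefix reaching it. Hence from $w_1 z \in \pref(\lang(\markFunc(\regex)))$ and $w_2 z y \in \pref(\lang(\markFunc(\regex)))$ one gets $w_1 z y \in \pref(\lang(\markFunc(\regex)))$, so $u \is w_1 z$ is a common prefix witnessing $ux, uy \in \pref(\lang(\markFunc(\regex)))$.

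The main obstacle is exactly this continuation-independence property for marked expressions, which is what makes $\followFunc(\markFunc(\regex), z)$ well-defined as a function of $z$ alone and is the technical heart of Brüggemann-Klein and Wood's Lemma 2.2. I would adapt their argument --- by structural induction on $\regex$, tracking the $\firstFunc$ and $\followFunc$ sets through $+$, $\cat$, and $(\cdot)^*$ and using linearity to exclude two distinct occurrences of the same marked symbol --- rather than reprove it from scratch; the adaptation is routine once one notes that $\markFunc$, $\firstFunc$, and $\followFunc$ are defined purely combinatorially and are oblivious to the specific alphabet $\AlphAsync$. The remaining, genuinely new clauses (distinctness and the common-sender condition) need no separate inductive treatment: they are simply the predicates carried across the pair-set equivalence established above, so both directions of the lemma follow by instantiating that equivalence and reading off the corresponding clause from the hypothesis.
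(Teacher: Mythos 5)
Your proposal is correct and takes essentially the same route as the paper: the paper offers no explicit proof, stating only that the lemma ``follows from a straightforward adaptation of Lemma~2.2 by Br\"uggemann-Klein and Wood,'' and your argument is precisely that adaptation spelled out --- the bridge between the prefix-pair and follow-set formulations via the Glushkov/position characterisation of marked expressions (continuation after a position depends only on that position), with the distinctness and common-sender predicates carried across unchanged. No gaps.
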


Intuitively, one can check if an RE over $\AlphAsync$ is an sender-driven RE as follows.
For every subexpression of the form $\regex[1] + \regex[2]$ and $\regex[1]^* \cat \regex[2]$,
the REs $\regex[1]$ and $\regex[2]$ should not share any first letters and the union of their first letters belongs to the same participant.
It suffices to consider these operators as these are the only ones where lookahead to take a decision about the path in the RE is needed.

\begin{procedure}[PSM to RE]
Let \,$\PSM = (Q, \AlphAsync, \delta, q_{0}, F)$ be a \sinkfinal PSM.
We generate a system of equations.
For every $q_1 \in Q$, we introduce $\regex[\negthinspace q_1]$ as follows:
\[
    \regex[\negthinspace q_1] =
        \sum\limits_{(q_1, x, q_2) \in \delta} x \cat \regex[\negthinspace q_2]
\]
Given the initial state $q_{0}$, we can solve the system of equations for $\regex[\negthinspace q_0]$ with \cref{lm:Arden-swapped}, yielding a regular expression $\psmToRegexFunc(\PSM)$.
\end{procedure}

The following lemma states the correctness of the previous procedure.

\begin{restatable}{lemma}{psmToRegexFuncCorrect}
\label{lm:psmToRegexFuncCorrect}
For every \sinkfinal PSM $\PSM$,
it holds that $\lang(\psmToRegexFunc(\PSM)) = \lang(\PSM)$.
If\, $\PSM$ is a sender-driven PSM, then $\psmToRegexFunc(\PSM)$ is a sender-driven RE.
If $\emptystring \notin \semantics(\PSM)$,
then $\emptystring$ does not occur in
$\psmToRegexFunc(\PSM)$.
\end{restatable}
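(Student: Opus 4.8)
The plan is to prove the three assertions separately, in each case exploiting the fixpoint nature of the declared equation system together with \cref{lm:Arden-swapped}. For the language equality I would first attach a meaning to each variable: for a state $q$ let $L_q \subseteq \AlphAsync^\infty$ be the set of traces of maximal runs of $\PSM$ starting in $q$, so that $L_{q_0} = \lang(\PSM)$ by definition. The first step is to check that $(L_q)_{q \in Q}$ solves the system read off from the procedure: a maximal run from a non-sink state $q$ takes a first transition $q \xrightarrow{x} q'$ and then continues as a maximal run from $q'$, whence $L_q = \bigcup_{(q,x,q') \in \delta} \set{x \cat w \mid w \in L_{q'}}$; and for a sink state $q$ --- which is final because $\PSM$ is \sinkfinal --- the only maximal run is trivial, so $L_q = \set{\emptystring}$, matching the empty sum read as $\emptystring$. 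This factorisation has to be verified for infinite maximal runs as well, where it follows from the fact that an infinite run likewise factors through its first transition. The second step mirrors the procedure: eliminate the variables one at a time, each time using \cref{lm:Arden-swapped} to resolve the (possibly indirect) self-reference of the variable being removed, and argue by induction that after each elimination the reduced system still admits the corresponding restriction of $(L_q)$ as a solution, and --- crucially --- as its \emph{unique} solution.

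The hinge of this argument, and the step I expect to be the main obstacle, is the hypothesis $\emptystring \notin \regexfinlang{\regex[1]}$ of \cref{lm:Arden-swapped}: the self-coefficient of each eliminated variable must be free of the empty word, and this is exactly what guarantees both uniqueness of the solution and the correct handling of infinite words. I would discharge it with two structural facts. First, $\PSM$ is dense, so $\emptystring$-transitions occur only from states with a unique outgoing transition and therefore never sit on a branching self-loop. Second, by the standing assumption ruling out degenerate $\emptystring$-only behaviour and by \sinkfinal-ness, there are no $\emptystring$-cycles, so no accumulated self-coefficient can acquire $\emptystring$. A point genuinely specific to this setting is that the infinite-word semantics $\regexinflang{\cdot}$ of the output must coincide with the infinite maximal runs: the Kleene stars introduced by \cref{lm:Arden-swapped} carry precisely the non-terminating behaviour, and since their base expressions are $\emptystring$-free, $\regexinflang{\regex[1]^* \cat \regex[2]}$ contributes exactly the limits of the productive cycles. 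I would make this precise by checking the prefix condition in the definition of $\regexinflang{\cdot}$ directly against the runs of $\PSM$.

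For preservation of sender-driven choice I would work with the $\firstFunc$/$\followFunc$ characterisation of sender-driven regular expressions stated just before the procedure, rather than at the level of languages, since the property is structural (it is phrased through $\markFunc(\regex)$). The goal is that for every position $z \in \symbolsFunc(\markFunc(\regex)) \cup \set{\emptystring}$ and any two distinct $x, y \in \followFunc(\markFunc(\regex), z)$, the unmarkings $\unmarkFunc(x)$ and $\unmarkFunc(y)$ are distinct sends of one participant. I would relate each marked letter-occurrence of the produced expression to the transition of $\PSM$ that introduced it during elimination, and show that every follow-set is contained in the set of outgoing-transition labels of a single state $q$. As $\PSM$ is sender-driven, the witness $\choicefunction(q)$ shows that all outgoing send-labels of $q$ share the sender $\choicefunction(q)$ and are pairwise distinct by determinism, which transfers verbatim to the follow-set and yields the sender-driven RE condition. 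The care needed is that substitution and the $\regex[1]^* \cat \regex[2]$ shapes produced by \cref{lm:Arden-swapped} do not amalgamate two genuinely different branch points into a single follow-set; density (no $\emptystring$-branching) again ensures that each choice in the expression corresponds to exactly one branching state.

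Finally, for the empty-word claim, observe that $\emptystring \in \semantics(\PSM)$ iff $\emptystring \in \lang(\PSM) = L_{q_0}$, because $\close(\hole)$ fixes the empty word. The symbol $\emptystring$ enters the construction only as the value of sink (hence final) states, and under the normalisations $x \cat \emptystring = x$ and $\emptystring \cat \regex = \regex$ it can survive in the expression for a state $q$ only inside a sum, which happens exactly when the trivial run is available at $q$. I would therefore prove, by the same induction on the elimination, the invariant that $\emptystring$ occurs in the expression for $q$ iff $\emptystring \in L_q$; specialising to $q_0$ under the hypothesis $\emptystring \notin \semantics(\PSM)$ then gives that $\emptystring$ does not occur in $\psmToRegexFunc(\PSM)$. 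Once the elimination invariant and the guardedness bookkeeping of the first two paragraphs are in place, the sender-driven and empty-word parts are comparatively routine.
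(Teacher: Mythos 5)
Your proposal is correct and follows essentially the same route as the paper: read off the guarded equation system, observe that the trace languages $(L_q)$ solve it, eliminate variables via the swapped Arden lemma (whose $\emptystring$-freeness hypothesis is exactly the guardedness you track), and note that substitution and the Arden step preserve sender-driven choice. You are in fact somewhat more careful than the paper's terse argument on two points it glosses over --- the treatment of infinite words via the prefix-based semantics of $\regexinflang{\cdot}$, and the normalisation $x \cat \emptystring = x$ needed so that the $\emptystring$ contributed by sink-state equations does not survive in $\psmToRegexFunc(\PSM)$ --- but these are refinements of the same argument, not a different one.
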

\begin{proof}
With the assumption that $\PSM$ is \sinkfinal, the first claim easily follows from \cref{lm:Arden-swapped}.
For the second claim, let us investigate how the system of equations, from which $\psmToRegexFunc(\PSM)$ is obtained, is solved.
We observe that every equation is guarded, \ie there is a letter from $\AlphSync$ before an occurrence of $\regex[q]$ for some state $q$.
Solving the system of equations for the initial state $\regex[q_0]$ can only involve substitution and the application of \cref{lm:Arden-swapped}.
For both, sender-driven choice of $\PSM$ is preserved for the RE across all equations, yielding a sender-driven RE for~$\regex[q_0]$.
For the third claim, it suffices to observe that no $\emptystring$ is introduced in the system of equations.
\proofEndSymbol
\end{proof}

\subsubsection*{Step \ref{wf:regex-to-arnmdirf-psm}: From Regular Expressions to Ancestor-recursive Non-merging Dense Intermediate-recursion-free PSMs.}
\label{sec:regex-to-arnmdirf-psm}

\mbox{After} the transformation, we want the PSM to be ancestor-recursive, non-merging, dense and intermediate-recursion-free by construction.
We need to carefully design this transformation because the standard approach introduces non-determinism, for instance for union.
Resolving this non-determinism would easily break the desired structural properties, making the whole workflow pointless.
We apply the idea of derivatives in order not to introduce non-determinism.
To preserve sender-driven choice, we also ensure that sender-driven regular expressions are closed under Brzozowski Derivatives.
Given a regular expression $\regex$ and a letter $a$, we can use them to construct a regular expression that specifies the language of words in the semantics of $\regex$ which start with $a$ and omits $a$.
We apply a similar idea to PSMs in order not to introduce non-determinism when constructing PSMs from regular expressions.

\newcommand{\regDer}{\operatorname{brz-deriv}}
\newcommand{\psmDer}{\operatorname{psm-deriv}}

\begin{definition}[\cite{DBLP:journals/jacm/Brzozowski64}]
Let $\StdAlphabet$ be an alphabet.
We define the Brzozowski derivative
$\regDer \from \StdAlphabet \times \regexs_\StdAlphabet \to \regexs_\StdAlphabet$
as follows:
{
\small
\[
    \regDer(a, \regex) \is
        \begin{cases}
            \emptystring
                & \text{if } \regex = a \\
            \regDer(a, \regex[1]) + \regDer(a, \regex[2])
                & \text{if } \regex = \regex[1] + \regex[2] \\
            \regDer(a, \regex[1]) \cat \regex[2]
                & \text{if } \regex = \regex[1] \cat \regex[2] \land \emptystring \notin \lang(\regex[1]) \\
            \regDer(a, \regex[1]) \cat \regex[2] +
            \regDer(a, \regex[2])
                & \text{if } \regex = \regex[1] \cat \regex[2] \land \emptystring \in \lang(\regex[1]) \\
            \regDer(a, \regex[1]) \cat \regex[1]^*
                & \text{if } \regex = \regex[1]^* \\
            \text{undefined}
                & \text{otherwise}
        \end{cases}
\]
}
\end{definition}

\begin{lemma}[Correctness of Brzozowski Derivatives \cite{DBLP:journals/jacm/Brzozowski64}]
\label{lm:correctness-brz-deriv}
Let $\regex$ be a regular expression over an alphabet $\StdAlphabet$ and $a \in \StdAlphabet$ be a letter.
If \;$\regDer(a, \regex)$ is defined, it holds that
\[
    \regexfinlang{\regDer(a, \regex)} =
    \set{w \mid aw \in \regexfinlang{\regex}}
    \enspace .
\]
If \,$\regDer(a, \regex)$ is not defined, it holds that $\set{w \mid aw \in \regexfinlang{\regex}} = \emptyset$.
\end{lemma}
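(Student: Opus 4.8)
The plan is to prove the statement by structural induction on the regular expression $\regex$, establishing in one stroke that $\regDer(a,\regex)$ is defined with $\regexfinlang{\regDer(a,\regex)} = \partial_a\regexfinlang{\regex}$, where I write $\partial_a L := \set{w \mid aw \in L}$ for the left quotient, and that $\regDer(a,\regex)$ is undefined exactly when $\partial_a\regexfinlang{\regex} = \emptyset$. Note that the lemma concerns only the finite language $\regexfinlang{\cdot}$, so no reasoning about $\regexinflang{\cdot}$ is needed. Proving both clauses simultaneously is the right move: it makes the equivalence ``undefined $\Leftrightarrow$ empty quotient'' available as an induction hypothesis on the immediate subexpressions, which is precisely what the composite cases require.

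The engine of the argument is a short list of elementary identities for $\partial_a$, each provable directly from the definitions of union, concatenation and Kleene star on word languages: $\partial_a\set{a}=\set{\emptystring}$, while $\partial_a\set{b}=\emptyset$ for $b\neq a$ and $\partial_a\set{\emptystring}=\emptyset$; next $\partial_a(L_1\union L_2)=\partial_a L_1\union\partial_a L_2$; then $\partial_a(L_1\cat L_2)=(\partial_a L_1)\cat L_2$ when $\emptystring\notin L_1$ and $(\partial_a L_1)\cat L_2 \union \partial_a L_2$ when $\emptystring\in L_1$; and finally $\partial_a(L^*)=(\partial_a L)\cat L^*$, which holds irrespective of the nullability of $L$. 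Only the last two identities need more than a line: for concatenation one splits a word $aw\in L_1\cat L_2$ according to whether its $L_1$-prefix is empty, and for the star one factors a nonempty $aw\in L^*$ into blocks from $L$ and observes that its first nonempty block must begin with $a$.

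With these identities in hand, the induction is a line-by-line match against the clauses defining $\regDer$. The base clause $\regex=a$ uses the first identity; the cases $\regex=\emptystring$ and $\regex=b$ with $b\neq a$ fall under the ``otherwise undefined'' clause and are discharged by the empty-quotient half of the statement. The clauses for $+$, for $\cat$ (both the non-nullable and the nullable head case), and for Kleene star are each obtained by rewriting $\partial_a\regexfinlang{\regex}$ with the corresponding quotient identity, applying the induction hypothesis to the subexpressions, and reading off that the resulting language is exactly the one denoted by the right-hand side of the matching clause. The nullability test $\emptystring\in\lang(\regex[1])$ appearing in the definition lines up verbatim with the case split in the concatenation identity.

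The only genuinely delicate point — and the main obstacle beyond routine bookkeeping — is the partiality of $\regDer$: since the grammar for regular expressions contains no symbol for the empty language, the derivative is a partial function, and ``undefined'' must be made to play the role of $\emptyset$. Concretely, in the $+$ clause the expression $\regDer(a,\regex[1])+\regDer(a,\regex[2])$ must be read as dropping any undefined summand, and being undefined only when both summands are; likewise the nullable-$\cat$ clause must treat an undefined summand as the empty language. The simultaneous induction hypothesis is exactly what lets these cases close cleanly: for each subexpression it supplies both whether its derivative is defined and what language it then denotes, so the four quotient identities transfer unchanged once undefined sub-results are identified with $\emptyset$. Once this convention is fixed, every case of the induction closes by the relevant quotient identity together with the induction hypothesis.
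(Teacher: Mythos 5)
The paper does not actually prove this lemma: it states it with a citation to Brzozowski's original article and moves on (the only proof in the vicinity is for the follow-up \cref{lm:correctness-brz-deriv-inf}, which concerns infinite words and takes the present lemma as given). Your structural induction is the standard argument and it is correct; each case of the definition of $\regDer$ is matched by the corresponding left-quotient identity, and the star identity $\set{w \mid aw \in L^*} = \set{w \mid aw \in L}\cat L^*$ is justified exactly as you describe, by locating the first nonempty block of a factorisation.

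The one place where your write-up goes beyond routine reconstruction is the treatment of partiality, and this is worth flagging because it is genuinely needed for the lemma \emph{as stated in this paper}. The grammar here has no constant for the empty language, so read literally, $\regDer(a,\regex[1])+\regDer(a,\regex[2])$ is undefined as soon as one summand is (e.g.\ for $\regex = a + b$ with $b \neq a$, the summand $\regDer(a,b)$ is undefined while $\set{w \mid aw \in \regexfinlang{\regex}} = \set{\emptystring} \neq \emptyset$), which would falsify the second clause of the lemma. Your convention of identifying an undefined sub-result with $\emptyset$ and dropping it from sums (and from the extra summand in the nullable-concatenation clause) is the reading under which the lemma holds, and carrying both clauses through the induction simultaneously is exactly what makes that convention usable as an induction hypothesis. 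Since the paper leans on this lemma silently, making that convention explicit is a useful clarification rather than a detour.
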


We extend this result to infinite words.

\begin{lemma}[Brzozowski Derivatives for infinite words]
\label{lm:correctness-brz-deriv-inf}
Let $\regex$ be a regular expression over an alphabet $\StdAlphabet$ and $a \in \StdAlphabet$ be a letter.
If \;$\regDer(a, \regex)$ is defined, it holds that
\[
    \regexinflang{\regDer(a, \regex)} =
    \set{w \mid aw \in \regexinflang{\regex}}
    \enspace .
\]
If \,$\regDer(a, \regex)$ is not defined, it holds that $\set{w \mid aw \in \regexinflang{\regex}} = \emptyset$.
\end{lemma}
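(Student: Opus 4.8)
The statement to prove is the infinite-word analogue of the correctness of Brzozowski derivatives: if $\regDer(a,\regex)$ is defined, then $\regexinflang{\regDer(a,\regex)} = \set{w \mid aw \in \regexinflang{\regex}}$, and otherwise $\set{w \mid aw \in \regexinflang{\regex}} = \emptyset$.

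The plan is to reduce everything to the finite-word case, which is already available as \cref{lm:correctness-brz-deriv}, by exploiting the definition of the infinite-word semantics $\regexinflang{\hole}$ in terms of finite prefixes. Recall that $\regexinflang{\regex} = \set{w \in \StdAlphabet^\omega \mid \A w' \in \pref(w).\ w' \in \pref(\regexfinlang{\regex})}$. The key observation is that membership of an infinite word in $\regexinflang{\regex}$ is entirely determined by which of its finite prefixes lie in $\pref(\regexfinlang{\regex})$, and the derivative interacts cleanly with prefixes: $a u \in \pref(\regexfinlang{\regex})$ if and only if $u \in \pref(\regexfinlang{\regDer(a,\regex)})$, a fact that follows from \cref{lm:correctness-brz-deriv} together with the straightforward identity $\pref(\set{w \mid aw \in L}) = \set{u \mid au \in \pref(L)}$ (for any nonempty consideration; one checks that the empty prefix is handled separately and trivially). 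So first I would establish this prefix-level equivalence as the single working lemma.

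Granting that, the main argument is a double inclusion that is essentially bookkeeping. For the forward direction, take $w \in \set{w \mid aw \in \regexinflang{\regex}}$, i.e.\ $aw \in \regexinflang{\regex}$. Every finite prefix of $aw$ is either $\emptystring$ or has the form $au$ for $u$ a finite prefix of $w$; by definition of $\regexinflang{\regex}$ each such $au$ lies in $\pref(\regexfinlang{\regex})$, hence by the prefix lemma every $u \in \pref(w)$ lies in $\pref(\regexfinlang{\regDer(a,\regex)})$, which is exactly $w \in \regexinflang{\regDer(a,\regex)}$. The reverse inclusion runs symmetrically: from $w \in \regexinflang{\regDer(a,\regex)}$ one recovers that every prefix $au$ of $aw$ sits in $\pref(\regexfinlang{\regex})$, giving $aw \in \regexinflang{\regex}$. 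For the undefined case, I would note that when $\regDer(a,\regex)$ is undefined, \cref{lm:correctness-brz-deriv} already gives $\set{w \mid aw \in \regexfinlang{\regex}} = \emptyset$, so in fact no finite word can start with $a$ in the finite semantics; consequently $a \notin \pref(\regexfinlang{\regex})$, and since any $aw$ in $\regexinflang{\regex}$ would need its length-one prefix $a$ to be in $\pref(\regexfinlang{\regex})$, there is no such infinite word either.

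I do not expect a serious obstacle here; the content is genuinely a corollary of the finite case. The only point requiring mild care is the treatment of the empty prefix and, relatedly, making the prefix identity $\pref(\set{w \mid aw \in L}) = \set{u \mid au \in \pref(L)}$ precise, since $\pref$ of a language includes the empty word and one must not conflate ``$u$ is a prefix of some word'' with ``$u$ is itself a word in the language.'' A secondary subtlety is that $\regexinflang{\hole}$ only constrains infinite words through their finite prefixes, so one should confirm there is no additional ``limit'' condition hiding in the definition — there is not, which is precisely why the prefix-based reduction is faithful. Once the prefix lemma is stated cleanly, both inclusions and the undefined case follow by unwinding definitions.
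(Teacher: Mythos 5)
Your proposal is correct and follows essentially the same route as the paper: both reduce the infinite-word claim to the finite-word lemma (\cref{lm:correctness-brz-deriv}) by exploiting that $\regexinflang{\hole}$ is defined purely via finite prefixes, so that the derivative's effect on prefix sets transfers directly. Your treatment of the empty prefix and the undefined case matches the paper's (slightly terser) argument.
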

\begin{proof}
For the first claim,
we consider infinite words.
By definition, an infinite word is in a language if all its prefixes are a prefix of some word in the finite language.
Thus, it suffices to show that
$
    \pref(\regexfinlang{\regDer(a, \regex)})
    =
    \pref(\set{w \mid aw \in \regexinflang{\regex}} \inters \StdAlphabet^*)
$.
By definition, the prefixes of infinite words and finite words are the same for a regular expression.
Thus, it remains to show that
\[
    \pref(\regexfinlang{\regDer(a, \regex)})
    =
    \pref(\set{w \mid aw \in \regexfinlang{\regex}} \inters \StdAlphabet^*)
    \enspace .
\]
This follows from
\cref{lm:correctness-brz-deriv}.

The second claim simply follows from
\cref{lm:correctness-brz-deriv} and the definition of $\regexinflang{\hole}$, which requires $\regexfinlang{\hole}$ to be non-empty.
\proofEndSymbol
\end{proof}

From the correctness of Brzozowski derivatives, this observation follows immediately.

\begin{corollary}
\label{cor:correctness-brz-deriv}
Let $\regex$ be a regular expression over an alphabet $\StdAlphabet$ and $D \subseteq \StdAlphabet$ be the first letters in words in~$\regex$, \ie
$D \is \set{a_1 \mid a_1 \ldots a_n \in \regexfinlang{\regex}}$.
Then, it holds that
\[
    \regexfinlang{\regex} = \Dunion_{a \in D}
    \set{a \cat w \mid w \in \regexfinlang{\regDer(a, \regex)}}
    \enspace .
\]
\end{corollary}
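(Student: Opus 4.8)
The plan is to derive the corollary directly from the correctness of Brzozowski derivatives (\cref{lm:correctness-brz-deriv}) by decomposing each word of $\regexfinlang{\regex}$ according to its first letter. First I would check that $\regDer(a, \regex)$ is defined for every $a \in D$, so that the right-hand side is well-formed. By definition of $D$, each $a \in D$ is the first letter of some word in $\regexfinlang{\regex}$, hence the set $\set{w' \mid a \cat w' \in \regexfinlang{\regex}}$ is non-empty; the contrapositive of the second clause of \cref{lm:correctness-brz-deriv} then forces $\regDer(a, \regex)$ to be defined, and moreover identifies $\regexfinlang{\regDer(a, \regex)}$ with exactly this set.

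Then I would prove the equality by double inclusion, working (as is guaranteed in the workflow, since \cref{proc:sink-state-iff-final-state} together with \cref{lm:psmToRegexFuncCorrect} rules out the empty word) under the assumption $\emptystring \notin \regexfinlang{\regex}$. For $\subseteq$, any $v \in \regexfinlang{\regex}$ is non-empty and thus of the form $v = a \cat w$ with $a \in \StdAlphabet$; then $a \in D$ by definition, and $w \in \set{w' \mid a \cat w' \in \regexfinlang{\regex}} = \regexfinlang{\regDer(a, \regex)}$ by \cref{lm:correctness-brz-deriv}, placing $v$ in the $a$-component of the union. For $\supseteq$, any $v = a \cat w$ with $a \in D$ and $w \in \regexfinlang{\regDer(a, \regex)}$ satisfies $w \in \set{w' \mid a \cat w' \in \regexfinlang{\regex}}$ by the same lemma, so $v = a \cat w \in \regexfinlang{\regex}$.

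Finally I would justify that the union is genuinely disjoint, which is what licenses the $\Dunion$ notation: for distinct $a, a' \in D$, every word in the $a$-component begins with $a$ while every word in the $a'$-component begins with $a'$, so the two components share no word.

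The only real subtlety to flag is the empty word: if $\emptystring \in \regexfinlang{\regex}$ then the left-hand side contains $\emptystring$ whereas the right-hand side consists solely of non-empty words, so the stated equality would fail. The corollary is therefore implicitly intended for expressions without the empty string, which is precisely the setting produced earlier in the workflow. Once that case is excluded, the argument is pure bookkeeping layered on top of \cref{lm:correctness-brz-deriv}, so no genuinely hard step remains.
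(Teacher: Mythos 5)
Your proof is correct and is exactly the argument the paper intends: the paper gives no explicit proof, stating only that the corollary follows immediately from \cref{lm:correctness-brz-deriv}, and your first-letter decomposition, the definedness check via the contrapositive of the lemma's second clause, and the double inclusion are precisely that immediate derivation spelled out. Your caveat about the empty word is also well taken — the equality as literally stated fails when $\emptystring \in \regexfinlang{\regex}$, since the right-hand side contains only non-empty words — but the corollary is only ever invoked on expressions without $\emptystring$ (cf.\ the hypothesis of the RE-to-PSM procedure and \cref{lm:psmToRegexFuncCorrect}), so your reading matches the paper's intended use.
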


Intuitively, we pull out every first letter for union and concatenation to avoid the introduction of $\emptystring$-transitions.
For this to work, we need to introduce a PSM derivative (function).
If we used the Brzozowski Derivative, we could not apply structural induction to prove equivalence of the regular expression and the PSM.
Still, we show sender-driven choice is preserved by the Brzozowski~Derivative.

\begin{restatable}{lemma}{regexWithChoiceClosedUnderBrzDerivative}
\label{lm:regex-with-choice-closed-under-brz-derivative}
Let $\regex$ be an sender-driven RE
and $x \in \firstFunc(\lang(\regex))$.
Then, it holds that $\regDer(x, \regex)$ is a sender-driven RE.
\end{restatable}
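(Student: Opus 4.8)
The plan is to reduce the claim to the follow-set characterization of sender-driven REs (the lemma adapting Brüggemann-Klein and Wood), which states that $\regex$ is sender-driven iff for every $z \in \symbolsFunc(\markFunc(\regex)) \dunion \set{\emptystring}$ and every $x', y' \in \followFunc(\markFunc(\regex), z)$ with $x' \neq y'$ we have $\unmarkFunc(x') \neq \unmarkFunc(y')$ and both are send events of one common sender. Thus it suffices to control the follow sets of $\markFunc(\regDer(x, \regex))$. The first observation is that, since every sender-driven RE is deterministic (one-unambiguous), there is a \emph{unique} marked symbol $\widehat{x} \in \firstFunc(\lang(\markFunc(\regex)))$ with $\unmarkFunc(\widehat{x}) = x$; this lets us speak of the marked occurrence consumed by the derivative.

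The key step is to establish a position-preserving correspondence between the marked symbols of $\markFunc(\regDer(x,\regex))$ and those of $\markFunc(\regex)$. Concretely, I would exhibit an injection $\iota$ from $\symbolsFunc(\markFunc(\regDer(x,\regex)))$ into $\symbolsFunc(\markFunc(\regex))$ that preserves unmarking ($\unmarkFunc \circ \iota = \unmarkFunc$) and transports follow sets: for $z' \neq \emptystring$, membership $b \in \followFunc(\markFunc(\regDer(x,\regex)), z')$ implies $\iota(b) \in \followFunc(\markFunc(\regex), \iota(z'))$, while the new first set $\firstFunc(\lang(\markFunc(\regDer(x,\regex))))$ maps under $\iota$ into $\followFunc(\markFunc(\regex), \widehat{x})$. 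This correspondence is proved by structural induction on $\regex$, following the defining cases of $\regDer$: the base case $\regex = a$ yields $\emptystring$ with empty symbol set; union distributes over both $\iota$ and $\followFunc$; concatenation and Kleene star merely copy and prune subexpressions, so every occurrence surviving in the derivative is inherited from $\regex$ together with its local follow relation. The correctness lemmas for Brzozowski derivatives (\cref{lm:correctness-brz-deriv} and \cref{lm:correctness-brz-deriv-inf}, summarised in \cref{cor:correctness-brz-deriv}) provide the semantic check that these syntactic manipulations compute exactly $\set{w \mid xw \in \lang(\regex)}$, ensuring the follow relation is genuinely preserved rather than merely syntactically present.

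Given $\iota$, the conclusion is immediate. Take any $z'$ and a branching pair $a \neq b$ in $\followFunc(\markFunc(\regDer(x,\regex)), z')$. By injectivity $\iota(a) \neq \iota(b)$, and by the transport property they both lie in a single follow set of $\markFunc(\regex)$ (namely $\followFunc(\markFunc(\regex), \iota(z'))$ when $z' \neq \emptystring$, or $\followFunc(\markFunc(\regex), \widehat{x})$ when $z' = \emptystring$). Sender-drivenness of $\regex$ then forces $\unmarkFunc(\iota(a)) \neq \unmarkFunc(\iota(b))$ and that both are sends of a common participant; since $\iota$ preserves unmarking, the same holds for $a$ and $b$. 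Applying the follow-set characterization in the converse direction yields that $\regDer(x,\regex)$ is sender-driven.

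The main obstacle I expect is the concatenation case with a nullable left factor, $\regex = \regex[1] \cat \regex[2]$ with $\emptystring \in \lang(\regex[1])$, where $\regDer(x,\regex) = \regDer(x,\regex[1]) \cat \regex[2] + \regDer(x,\regex[2])$. Here the derivative \emph{merges} the continuation inside $\regex[1]$ with the first letters of $\regex[2]$, so I must show this merge does not create a branching pair violating the send condition. The resolution is that the merged set corresponds exactly to $\followFunc(\markFunc(\regex), \widehat{x})$ — because $\regex[1]$ being nullable after $\widehat{x}$ means control may legitimately pass into $\regex[2]$ at that point — so the constraint is already part of the sender-drivenness hypothesis for $\regex$. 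Defining $\iota$ so that it witnesses this identification, and verifying that it respects both summands of the derivative simultaneously, is the delicate bookkeeping on which the proof turns.
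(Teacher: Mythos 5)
Your strategy (reduce to the follow-set characterization and transport follow sets of $\markFunc(\regDer(x,\regex))$ back into those of $\markFunc(\regex)$) is the natural direct route, and it is essentially how one proves the underlying closure result that the paper merely cites. But there is a genuine gap: the injection $\iota$ you posit does not exist in the Kleene-star case. Since $\regDer(x,\regex[1]^*) = \regDer(x,\regex[1])\cat\regex[1]^*$, the derivative contains a residual of $\regex[1]$ \emph{and} a second full copy of $\regex[1]$; after marking, strictly more occurrences must be mapped onto the occurrences of $\regex[1]$ in the original, so $\iota$ cannot be injective, and the step ``$a\neq b$ implies $\iota(a)\neq\iota(b)$, hence the hypothesis on $\regex$ applies'' collapses. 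This is not mere bookkeeping: the statement is actually false for raw sender-driven REs. Take $\regex = (\snd{\procA}{\procB}{\val}^{\,*})^*$, which is vacuously sender-driven because its marking contains a single symbol. Its derivative is $(\emptystring\cat\snd{\procA}{\procB}{\val}^{\,*})\cat(\snd{\procA}{\procB}{\val}^{\,*})^*$, whose marking $(\emptystring\cat x_1^*)\cat(x_2^*)^*$ admits both $x_1x_1$ and $x_1x_2$ as prefixes of accepted words with $x_1\neq x_2$ but $\unmarkFunc(x_1)=\unmarkFunc(x_2)$ --- not sender-driven. So you have flagged the wrong case as the crux: the nullable-concatenation merge is handled exactly as you describe (the merged first set is the follow set of $\widehat{x}$ in $\regex$), but it is the star case that breaks the induction.

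The missing ingredient is \emph{star normal form}, which is precisely how the paper's proof proceeds: it cites Br\"uggemann-Klein and Wood, who show that every deterministic RE can be rewritten in star normal form and that the Brzozowski derivative of a deterministic star-normal-form RE is again deterministic (their Theorem~B); the paper then argues that the additional ``distinct sends of a common sender'' constraint survives the same argument. Your follow-set transport would go through once $\regex$ is first normalized --- star normal form is exactly the condition that rules out the $(a^*)^*$-style duplication pathology --- but as written, without that normalization step (or an explicit restriction of the lemma to star-normal-form REs), the structural induction fails at $\regex[1]^*$.
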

\begin{proof}
\citet{DBLP:journals/iandc/Bruggemann-KleinW98a} show that deterministic REs, which they call \mbox{1-unambiguous}, are closed under the Brzozowski derivative.
Their result generalises to sender-driven REs.
They define star normal form for regular expressions \cite[Def.\,3.3]{DBLP:journals/iandc/Bruggemann-KleinW98a}.
They recall that deterministic REs can always be specified by an RE in star normal form~\cite{DBLP:conf/latin/Bruggemann-Klein92}.
With \cite[Thm.\,B]{DBLP:journals/iandc/Bruggemann-KleinW98a}, they show that the Brzozowski derivative of a deterministic RE in star normal form is again deterministic and in star normal form.
The conditions on sender-driven choice for REs do not restrict representability in star normal form and, thus, the result generalises to sender-driven REs.
\proofEndSymbol
\end{proof}

The last ingredient for our transformation is a procedure that applies the derivative to PSMs, preserving the properties of interest.

\begin{lemma}[PSM for derivative]
\label{lm:psm-for-derivative}
Let $\PSM = (Q, \AlphAsync, \delta, q_{0}, F)$ be a PSM that is
ancestor-recursive, non-merging, dense, intermediate-recursion-free, and \sinkfinal.
and let $a \in \firstFunc(\lang(\PSM))$.
Then, there is a PSM, $\psmDer(a, \PSM)$, such that
$\lang(\psmDer(a, \PSM')) = \regDer(a, \lang(\PSM))$,
which is ancestor-recursive, non-merging, dense, intermediate-recursion-free, and \sinkfinal.
If $\PSM$ is sender-driven, $\psmDer(a, \PSM)$ is sender-driven.
\end{lemma}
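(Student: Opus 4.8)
The plan is to construct $\psmDer(a,\PSM)$ directly on the automaton by ``descending past the first $a$'', mirroring the case distinction in the definition of $\regDer$, and then to verify the language identity together with each structural invariant separately. First I would locate the \emph{effective initial state}. Since $\PSM$ is dense, every $\emptystring$-transition is the unique outgoing transition of its source, so the $\emptystring$-transitions out of $q_0$ form a single deterministic chain $q_0 \xrightarrow{\emptystring} q_1 \xrightarrow{\emptystring} \cdots$. This chain cannot cycle: an infinite $\emptystring$-chain would be a maximal run with trace $\emptystring$, forcing $\firstFunc(\lang(\PSM)) = \emptyset$ and contradicting $a \in \firstFunc(\lang(\PSM))$. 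Hence it terminates at a unique state $q^*$ all of whose outgoing transitions are non-$\emptystring$; because $\PSM$ is intermediate-recursion-free, if $q^*$ has more than one outgoing transition they are all forward, and in particular every $a$-transition $q^* \xrightarrow{a} q'$ strictly decreases $\levelfunc$. The target language is $\{w : aw \in \lang(\PSM)\}$, i.e.\ the union over these $a$-successors $q'$ of the language of $\PSM$ restarted at $q'$.

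Then I would build $\psmDer(a,\PSM)$ by retaining the states below the $a$-successors and introducing a fresh initial state $q_0^{new}$ whose outgoing labelled transitions are exactly the first transitions of the derivative language. The crucial and delicate part is the treatment of recursion. A backward ($\emptystring$) transition from a descendant of $q'$ may target an ancestor of $q'$ lying \emph{above} the point at which $a$ was consumed, which would no longer be an ancestor in the derivative and so break ancestor-recursiveness. This is precisely the phenomenon visible in $\regDer(a,\regex[1]^*) = \regDer(a,\regex[1])\cat\regex[1]^*$: consuming a letter inside a starred subexpression re-attaches the whole loop at the end. Accordingly such transitions must be \emph{re-rooted} so that the loop is preserved but its head becomes a descendant of $q_0^{new}$, keeping every back-edge pointing to a genuine ancestor. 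I expect this re-rooting, carried out while \emph{simultaneously} maintaining non-merging (no state acquires a second higher-level incoming edge), density (no new $\emptystring$-transition is created out of a branching state), and sink-finality, to be the main obstacle; I would discharge it by a case analysis following the structure of $\regDer$, assigning each surviving or newly created state a level via a shifted copy of $\levelfunc$ with $q_0^{new}$ placed on top.

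For the language identity I would set up a correspondence between maximal runs of $\psmDer(a,\PSM)$ on $w$ and maximal runs of $\PSM$ on $aw$, namely the $\emptystring$-prefix to $q^*$ followed by one $a$-step and then the re-rooted continuation. The finite case is immediate, and the infinite case follows by the standard prefix argument, exactly as in \cref{lm:correctness-brz-deriv} and \cref{lm:correctness-brz-deriv-inf}.

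Finally, for preservation of sender-driven choice I would exploit that a sender-driven $\PSM$ is deterministic at $q^*$, so there is a \emph{unique} $a$-successor and the descent involves no merging; the outgoing transitions of $q_0^{new}$ are then those of a single subtree and satisfy sender-driven choice. To make this airtight I can route through the regular-expression correspondence: $\psmToRegexFunc(\PSM)$ is a sender-driven RE with the same language (\cref{lm:psmToRegexFuncCorrect}), its Brzozowski derivative by $a$ is again a sender-driven RE (\cref{lm:regex-with-choice-closed-under-brz-derivative}), and the re-rooting construction realises exactly this derivative, so the choice property transfers back to $\psmDer(a,\PSM)$.
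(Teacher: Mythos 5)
Your skeleton matches the paper's: locate the $a$-successor of the (effective) initial state, keep the part of the machine of which that successor is an ancestor, and repair the backward transitions that used to close on the old root; your handling of the $\emptystring$-chain and of the level function is, if anything, more explicit than the paper's. But the step you yourself flag as ``the main obstacle'' is exactly the step the lemma lives or dies on, and you leave it as a promissory note. ``Re-rooting'' a surviving back-edge is ambiguous, and its most natural reading --- redirect the edge so that it points at $q_0^{new}$ or at the pruned, $a$-stripped body --- yields the wrong language. Concretely, for $(a+b)^*$ the correct derivative is $\regDer(a,(a+b)^*)=\emptystring\cat(a+b)^*=(a+b)^*$, whereas closing the loop back into the $a$-stripped body gives $(\regDer(a,a+b))^{+}=\set{\emptystring}$. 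Once the loop closes, the continuation must be the \emph{entire original} language from $q_0$, not the derivative again.

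The paper's construction pins this down: for every state $q_2$ that survives the pruning and has a backward transition to $q_0$, delete $q_2$ and splice in a \emph{fresh copy of the whole of $\PSM$}, routing $q_2$'s incoming transitions to the copy's initial state. This realises $\regDer(a,\regex[1])\cat\regex[1]^{*}$ literally (one unrolled iteration followed by the original machine), and because the grafted copies are disjoint, ancestor-recursiveness, non-merging, density, freedom from intermediate recursion, sink-finality and sender-drivenness are all inherited without further surgery --- no global case analysis on the structure of $\regDer$ is needed. Two smaller points: your fallback argument for sender-drivenness via \cref{lm:psmToRegexFuncCorrect} and \cref{lm:regex-with-choice-closed-under-brz-derivative} does not go through as stated, because sender-drivenness of a PSM is a property of the automaton and is not determined by its language (knowing that $\lang(\psmDer(a,\PSM))$ coincides with the language of some sender-driven RE says nothing about the transitions of $\psmDer(a,\PSM)$); the direct argument via the unique $a$-successor plus disjointness of the grafted copies is the one to keep. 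Your observation that the paper's own statement silently assumes the $a$-transition leaves $q_0$ itself (rather than the end of an $\emptystring$-chain) is a fair catch, and your fix is the right one.
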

\begin{proof}
Let $q_0$ be the initial state of $\PSM$,
$Q_1$ be the states with incoming transitions from $q_0$,
and $Q_2$ be the states with outgoing transition to $q_0$.
(Without loss of generality, we can assume that these are $\emptystring$-transitions.)
Formally
$Q_1 \is \set{q_1 \mid (q_0, \_, q_1) \in \delta}$ and
$Q_2 \is \set{q_2 \mid (q_2, \emptystring, q_0) \in \delta}$.
By assumption that
$a \in \firstFunc(\lang(\PSM))$,
we know that there is $q_1 \in Q_1$ such that $(q_0, a, q_1) \in \delta$.
We construct $\psmDer(a, \PSM)$ as follows.
We take $q_1$ as its initial state and do only keep the states for which $q_1$ is an ancestor.
For every state $q_2$ in $Q_2$, which was not deleted, we copy the original PSM~$\PSM$, remove the state $q_2$ and replace it by the initial state from the copy.
By assumption that the original PSM $\PSM$ is ancestor-recursive, non-merging, dense, intermediate-recursion-free, and \sinkfinal, this construction yields a PSM $\psmDer(a, \PSM)$ with the same properties and $\lang(\psmDer(a, \PSM)) = \regDer(a, \lang(\PSM))$.
By construction, $\psmDer(a, \PSM)$ is sender-driven if $\PSM$ is sender-driven.
\proofEndSymbol
\end{proof}

With this, we can provide the procedure that translates REs to PSMs.

\begin{procedure}[RE to PSM]
Given a regular expression $\regex$ without $\emptystring$ over $\AlphAsync$, we inductively construct the PSM $\regexToPsmFunc(\regex)$:
    \begin{enumerate}
\item $a$: one initial state with one transition labelled $a$ to one final state;
    \item $\regex[1] + \regex[2]$:
        We add one initial state.
        We do the same for both $\regex[1]$.
        We describe it for $\regex[1]$: we compute
        $\psmDer(a, \regexToPsmFunc(\regex[1]))$
        for every first letter $a$
        and add a transition labelled with the letter from the initial state to the initial state of the PSM.
\item $\regex[1] \cat \regex[2]$:
        We construct the FSM for $\regex[1]$.
        For $\regex[2]$, we apply the derivative idea again:
        we compute $\psmDer(a, \regexToPsmFunc(\regex[2]))$ for every first letter $a$ and
        copy each FSM as often as there are final states in the FSM for $\regex[1]$ and add transitions from each such final state.
    \item $\regex^*$:
        For Kleene Star, we construct the FSM for the inner regex and connect the final state(s) with the initial one by an $\emptystring$-transition and make the initial one final.
        (These are backward transitions and, thus, should to be labelled $\emptystring$ for the PSM to be dense.)
    \end{enumerate}
\end{procedure}

We prove the previous procedure to be correct.

\begin{restatable}{lemma}{regexToPsmFuncCorrect}
\label{lm:regexToPsmFuncCorrect}
Let \,$\regex$ be a regular expression over $\AlphAsync$ without $\emptystring$ and $\regexToPsmFunc(\regex)$ be a PSM.
Then, the core language of both are the same:
$\lang(\regex) = \lang(\regexToPsmFunc(\regex))$.
Also,
$\regexToPsmFunc(\regex)$
is
ancestor-recursive, non-merging, intermediate-recursion-free, dense, and \sinkfinal.
If \,$\regex$ is a sender-driven RE, then $\regexToPsmFunc(\regex)$ is a sender-driven PSM.
\end{restatable}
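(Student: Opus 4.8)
The plan is to prove \cref{lm:regexToPsmFuncCorrect} by structural induction on the regular expression $\regex$, carrying all five properties (equal language, ancestor-recursiveness, non-mergingness, intermediate-recursion-freeness, density, sink-finality, and preservation of sender-driven choice) simultaneously as a strengthened induction hypothesis. The base case $\regex = a$ is immediate: the two-state automaton with a single $a$-labelled transition trivially satisfies every structural condition, and its language is $\set{a}$, matching $\lang(a)$. The whole point of carrying all properties at once is that the inductive steps build on PSMs that are already well-structured, so that \cref{lm:psm-for-derivative} and \cref{lm:regex-with-choice-closed-under-brz-derivative} can be invoked to guarantee that applying $\psmDer(a, \hole)$ keeps us inside the class of well-structured PSMs and preserves sender-driven choice.

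For the inductive step, I would treat each of the three operators in turn, relying on $\regexToPsmFunc$ as constructed. For $\regex[1] + \regex[2]$, the key observation is that the construction does not glue the two sub-PSMs non-deterministically; instead it pulls out every first letter $a$ via $\psmDer(a, \hole)$ and attaches an $a$-labelled transition from a fresh initial state. By \cref{cor:correctness-brz-deriv} the resulting language is $\lang(\regex[1]) \cup \lang(\regex[2])$ (and the infinite-word case follows from \cref{lm:correctness-brz-deriv-inf}), and because distinct first letters become distinct labelled transitions from the new initial state, no $\emptystring$-branching is introduced, keeping the PSM dense and intermediate-recursion-free. For sender-driven choice, I would invoke that $\regex[1] + \regex[2]$ being a sender-driven RE forces the first letters of the two branches to be distinct and to share a sender, so the outgoing transitions of the new initial state are exactly sender-driven; \cref{lm:regex-with-choice-closed-under-brz-derivative} then guarantees the derivatives themselves remain sender-driven, so the induction hypothesis applies to them. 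The concatenation case $\regex[1]\cat\regex[2]$ is analogous but more delicate: we keep the PSM for $\regex[1]$ intact and, at each of its (possibly several) final states, attach copies of $\psmDer(a,\regexToPsmFunc(\regex[2]))$ for each first letter $a$ of $\regex[2]$. The Kleene-star case makes the initial state final and adds $\emptystring$-backward transitions, which are exactly the backward transitions permitted by ancestor-recursiveness (clause (ii) of \cref{def:anc-rec-non-merging-etc}), and being $\emptystring$-labelled they respect density.

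The main obstacle I anticipate is establishing non-mergingness and intermediate-recursion-freeness through the concatenation and star constructions. Concatenation requires duplicating the sub-PSM for $\regex[2]$ once per final state of the sub-PSM for $\regex[1]$, precisely so that no state receives two distinct forward in-edges; I would need to argue carefully that this duplication, together with the fact that $\regexToPsmFunc(\regex[1])$ is already sink-final (so its final states are exactly its sinks and receive the attached transitions cleanly), yields a PSM in which every state still has at most one incoming higher-level edge. The star case similarly needs care: making the initial state final while adding $\emptystring$-transitions from former finals back to it must not create a forward in-edge that violates non-mergingness, which is why these backward edges must be $\emptystring$-labelled and treated as backward transitions in the level function rather than forward ones. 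I would set up an explicit level function $\levelfunc$ for each constructed PSM and verify clause-by-clause that the combined construction respects it; the bookkeeping here, rather than any conceptual difficulty, is where the real work lies. Finally, I would note that since $\regex$ contains no $\emptystring$ (guaranteed upstream by \cref{lm:psmToRegexFuncCorrect}), every forward transition carries a genuine letter, which is what makes the resulting PSM a legitimate PSM with $\lang \subseteq \FIFO_B$ inherited from the $B$-boundedness already established for the source.
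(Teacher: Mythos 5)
Your proposal is correct and follows essentially the same route as the paper's proof: structural induction on $\regex$, using the PSM derivative $\psmDer(a,\hole)$ to pull out first letters and avoid introducing non-determinism, the Brzozowski-derivative correctness lemmas for language equality (including the infinite-word case), and the sender-driven-RE conditions to justify that the new branching points satisfy sender-driven choice. The only difference is one of emphasis — you plan to make the level-function bookkeeping for non-mergingness and ancestor-recursiveness explicit, where the paper argues these structural properties more informally from the copying in the concatenation/union cases and the $\emptystring$-labelled backward edges in the star case.
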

\begin{proof}
We prove the claims by induction on the structure of the regular expression $\regex$.

The case for a single letter $\regex = a$ is obvious.

Let $\regex = \regex[1] \cat \regex[2]$.
We first show that
$\lang(\regex[1] \cat \regex[2]) = \lang(\regexToPsmFunc(\regex[1] \cat \regex[2]))$.
The PSM construction applies the PSM derivative
$\psmDer(a, \regexToPsmFunc(\regex[2]))$
for every $a \in \firstFunc(\lang(\regexToPsmFunc(\regex[2]))$ and copies the resulting PSM for every final state of
$\regexToPsmFunc(\regex[1])$ and adds a transition with label $a$.
Thus, for every word $w$ in $\regexToPsmFunc(\regex[1] \cat \regex[2])$,
we have that
\begin{itemize}
 \item $w \in \lang(\regexToPsmFunc(\regex[1])) \inters \AlphAsync^\omega$ or
 \item $w = u \cat a \cat v$ with
        $u \in \lang(\regexToPsmFunc(\regex[1]))$,
        $a \in \firstFunc(\lang(\regexToPsmFunc(\regex[2])))$,
        and $v \in \lang(\psmDer(a, \regexToPsmFunc(\regex[2])))$.
\end{itemize}
By induction hypothesis, we have that
$\lang(\regex[1]) = \lang(\regexToPsmFunc(\regex[1]))$
and
$\lang(\regex[2]) = \lang(\regexToPsmFunc(\regex[2]))$.
By \cref{lm:psm-for-derivative},
$
\lang(\psmDer(a, \regex[2]))
=
\lang(\regDer(a, \regex[2]))
$.
Hence, we obtain:
\begin{itemize}
 \item $w \in \lang(\regex[1]) \inters \AlphAsync^\omega$ or
 \item $w = u \cat a \cat v$ with
        $u \in \lang(\regex[1])$,
        $a \in \firstFunc(\lang(\regex[2]))$,
        and $v \in \lang(\regDer(a, \regex[2]))$.
\end{itemize}

By the semantics of regular expressions and
\cref{lm:correctness-brz-deriv,lm:correctness-brz-deriv-inf},
it follows that
$w \in \lang(\regex[1] \cat \regex[2])$, which shows language equality.

By induction hypothesis, we know that
$\regexToPsmFunc(\regex[1])$ and
$\regexToPsmFunc(\regex[2])$
are ancestor-recursive, non-merging, dense, intermediate-recursion-free, and \sinkfinal.
By \cref{lm:psm-for-derivative},
for every
$a \in \firstFunc(\lang(\regexToPsmFunc(\regex[2])))$,
it holds that
$\psmDer(a, \regexToPsmFunc(\regex[2]))$
is
ancestor-recursive, non-merging, intermediate-recursion-free, dense, \sinkfinal and sender-driven if $\regexToPsmFunc(\regex[2])$ is.
Thus, by construction, the PSM
$\regexToPsmFunc(\regex[1] \cat \regex[2])$
is
ancestor-recursive, non-merging, dense, intermediate-recursion-free, and \sinkfinal,
where the multiple copies ensure ancestor-recursiveness and non-merging property and the derivatives preserve density;
and sender-driven choice is also preserved.

For $\regex[1] + \regex[2]$, the construction applies the PSM derivative to avoid introducing non-determinism (as is common in standard constructions for FSMs from REs) if it was not present before.
If it was there before, it preserves it to avoid subsequent merging and, thus, avoids introducing non-sink final states.
For sender-driven choice, the assumption for the regular expression yields that the first letters are pair-wise distinct and, thus, the newly introduced branching satisfies the sender-driven choice condition for PSMs.
The remaining reasoning is very similar to the previous case for concatenation and, hence, omitted.
Here, both $\regex[1]$ and $\regex[2]$ are treated the same, like the second part of concatenation.

For $\regex[1]^*$, we simply introduce a backward transition, which ought to be labelled by $\emptystring$ and it is.
In fact, these are the only $\emptystring$-transition, ensuring that the PSM is dense.
Note that we construct a PSM without forward transitions that are labelled with $\emptystring$.
This is different from state machines for global types where every subterm of shape $\mu t \seq G$ has only one incoming backward and one outgoing forward transition labelled by $\emptystring$.
In this construction, we basically merge the states for $\mu t \seq G$ and $G$.
It is also the place where recursion is introduced, ensuring ancestor-recursion and intermediate recursion freedom.
For sender-driven choice, analogously, the assumption for the regular expression yields that the first letters are pair-wise distinct and, thus, the branch that decides whether to start or repeat with~$\regex[1]$ or continue with the next regular expression satisfies the sender-driven choice condition for the PSMs.
\proofEndSymbol
\end{proof}

\subsubsection*{Step \ref{wf:arnmdfir-psm-to-global-type}:
From Ancestor-recursive Non-merging Dense Intermediate-recursion-free PSMs to Global Types.}
\label{sec:wf:arnmdfir-psm-to-global-type}

While the previous steps apply to arbitrary (\sinkfinal) PSMs, this one only applies for \sumOnePSMs since global types specify send and receive events together.
This transformation is rather straightforward.
The global type can be constructed via a traversal of the \sumOnePSM.

\begin{procedure}[\sumOnePSM to Global Type]
Let $\PSM$ be
ancestor-recursive, dense, non-merging, and inter\-mediate-recursion-free \sumOnePSM.
As a preprocessing step, we merge asynchronous events and assume $\PSM$ works on the alphabet of synchronous events~$\AlphSync_\Procs$.
We start with an empty global type and start the traversal from initial state:
\begin{itemize}
 \item If the state is final: add $0$ and return;
 \item if the state has an incoming transition: \\
        add $\mu t.$ for fresh $t$ and store $t$ for this state;
 \item if the state has an outgoing transition to previously seen state: \\
 add $t$ for the destination of the outgoing transition and return;
 \item if the state has outgoing transitions to unseen states: \\
 add $\Sum_{i \in I}$ with a fresh index set $I$ (for $\card{I}$ branches) with one branch for each next state with according transition label and recurse for each of next states.
\end{itemize}
Note that a state can have an incoming transition and more than one outgoing transitions, in contrast to the state machine of a global type where every subterm of shape $\mu t \seq G$ has only one incoming and one outgoing transition labelled by~$\emptystring$.
In this construction, the states for $\mu t \seq G$ and $G$ are merged.
We denote the result of this procedure with $\psmToGlobalTypeFunc(\PSM)$.
\end{procedure}

The following lemma states the correctness of the previous procedure.

\begin{restatable}{lemma}{psmToGlobalTypeFuncCorrect}
\label{lm:psmToGlobalTypeFuncCorrect}
Let $\PSM$ be
ancestor-recursive, non-merging, dense, intermediate-recursion-free, and \sinkfinal \sumOnePSM and $\psmToGlobalTypeFunc(\PSM)$ be the global type constructed from $\PSM$.
Then, their core languages are equal:
$\lang(\PSM) = \lang(\psmToGlobalTypeFunc(\PSM))$.
If $\PSM$ is a sender-driven PSM, then $\psmToGlobalTypeFunc(\PSM)$ is a sender-driven global~type.
\end{restatable}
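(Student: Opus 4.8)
The plan is to prove the two claims separately: for the language equality I would reduce to a state-correspondence argument between $\PSM$ and the state machine $\semglobalsync(\GG)$ of the constructed global type $\GG = \psmToGlobalTypeFunc(\PSM)$, and for the sender-driven claim I would simply inspect the construction. First I would justify the preprocessing step that merges asynchronous into paired synchronous events. Since $\PSM$ is a \sumOnePSM, its core language is \sumBounded{1}, so in any prefix at most one message is in flight; hence after any send $\snd{\procA}{\procB}{\val}$ the only admissible next event is its matching receive $\rcv{\procA}{\procB}{\val}$, and every maximal trace of $\lang(\PSM)$ factors uniquely into paired events $\msgFromTo{\procA}{\procB}{\val}$. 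Using this (together with the structural hypotheses, which force each send transition's target to have a single outgoing receive), I would argue that the merged automaton over $\AlphSync$ recognises exactly the paired-event form of $\lang(\PSM)$, so that $\lang(\PSM)$ is the unfolding of that synchronous language. It then suffices to show that the merged automaton and $\semglobalsync(\GG)$ have the same core language over $\AlphSync$.

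The core of the proof is a correspondence between the states of the merged PSM and the states of $\semglobalsync(\GG)$. By its definition $\semglobalsync(\GG)$ has one state per indexed subterm, with labelled transitions for each branch and with the silent edges $([\mu t.[G',k'_2],k'_1],\emptystring,[G',k'_2])$ and $([t,k'_3],\emptystring,[\mu t.[G',k'_2],k'_1])$. The traversal builds $\GG$ so that each visited PSM state $q$ is mapped to the subterm emitted at $q$; crucially the procedure merges the states for $\mu t.G$ and $G$, so a single recursion-binding PSM state corresponds to the silently connected pair $[\mu t.G,k'_1]$, $[G,k'_2]$ in $\semglobalsync(\GG)$, and each backward ($\emptystring$-labelled) PSM transition corresponds to a state $[t,k'_3]$ together with its silent edge back to $[\mu t.G,k'_1]$. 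I would formalise this as a relation $R$ between PSM states and $\semglobalsync(\GG)$ states that identifies states reachable from one another by $\emptystring$-transitions, and then prove, by induction on runs (and, for the $\omega$-case, on prefixes), that every maximal run of the merged PSM is matched by a maximal run of $\semglobalsync(\GG)$ with the same trace, and conversely. Finite maximal runs end in a sink, which is final by \sinkfinal-ness and is mapped to $\zero$; infinite runs traverse backward edges infinitely often, matching repeated unfoldings of $\mu t$ through the $t$-to-$\mu t$ silent edges.

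Each structural hypothesis plays a dedicated role, and I would invoke them exactly where the traversal relies on them: ancestor-recursion guarantees that the traversal terminates and that every backward transition points to an already-seen ancestor, making the $t/\mu t$ binding well-formed and guarded; freedom from intermediate recursion ensures recursion occurs only at leaves, so branching states have only forward transitions and match the shape of $\sum_{i\in I}$ in global types; the non-merging property ensures each state receives at most one backward edge, so exactly one $\mu t$ binder (and one fresh variable) is introduced per recursion target; and density ensures silent transitions occur only where the construction expects them, namely the recursion edges. I expect the main obstacle to be precisely this bookkeeping around the $\emptystring$-transition mismatch: the PSM merges $\mu t.G$ with $G$ and labels recursion edges silently, whereas $\semglobalsync(\GG)$ keeps these as distinct silently connected states, so the correspondence is not a state bijection but a quotient under $\emptystring$-closure. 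Verifying that maximal runs, and in particular infinite runs, are preserved in both directions under this quotient, without spuriously gaining or losing traces, is the delicate part.

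Finally, for the sender-driven claim I would argue directly from the construction. At each branching PSM state the procedure emits $\sum_{i\in I}$ with one branch per forward transition, copying the transition labels verbatim as paired events $\msgFromTo{\procA_i}{\procB_i}{\val_i}$. Since $\PSM$ is sender-driven, at each such state all outgoing send labels share a single sender and are pairwise distinct, which is exactly the sender-driven condition on the emitted $\sum_{i\in I}$; as these are the only choice points introduced by the traversal, $\psmToGlobalTypeFunc(\PSM)$ is sender-driven. The same reading of the copied labels shows that determinism is preserved and would yield the mixed-choice and directed variants by the analogous branch conditions.
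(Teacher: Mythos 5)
Your proposal is correct, and it is substantially more explicit than what the paper actually writes: the paper's proof is a two-sentence sketch that says the claim ``follows easily by construction'' and gestures at formalising it via a hybrid object that jointly represents languages of FSM states and partial global types, iteratively refined while preserving the language and the choice restriction. Your route is different in the formal device you choose: instead of an invariant on a mixed FSM/partial-type representation, you set up an explicit state correspondence between the (paired-event) PSM and $\semglobalsync(\psmToGlobalTypeFunc(\PSM))$, quotienting by $\emptystring$-closure to absorb the mismatch between the PSM's merged $\mu t\seq G$/$G$ state and the two silently connected states of the global type's machine, and then argue by induction on runs (and on prefixes for the $\omega$-case). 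This buys a concrete, checkable simulation argument in both directions and forces you to confront exactly the delicate points the paper elides --- preservation of \emph{maximal} runs, in particular infinite ones, under the quotient. The paper's refinement-invariant style would instead fold the correctness of each traversal step into a single invariant, which is more compact but leaves the run-level bookkeeping implicit. Your justification of the preprocessing step from \sumBounded{1}-ness, your treatment of finite maximal runs ending in sinks mapped to $\zero$, and your direct inspection argument for sender-driven choice are all sound and consistent with the construction.

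One small correction: you attribute to the non-merging property the guarantee that ``each state receives at most one backward edge.'' By the paper's definition, non-merging bounds the number of incoming \emph{forward} edges (incoming edges from states of strictly greater $\levelfunc$-level) by one; a recursion target may well receive several backward edges from different leaves, all reusing the single stored variable $t$. What non-merging actually buys you is that the forward-edge structure is a tree, so the traversal visits and emits each state's subterm exactly once and the $\mu t$ binder for a recursion target is introduced exactly once --- which is the consequence your proof needs, so the argument goes through once the attribution is fixed.
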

\begin{proof}
The assumptions guarantee that the traversal does not revisit states and only sink states are final.
The preprocessing simplifies the translation to the corresponding terms of a global type.
The claim then follows easily by construction.
We sketch how to formalise it.
One can define a formalism that jointly/recursively represents languages starting from states in an FSM and (partial) global types.
The construction iteratively refines this representation,
preserving the specified language and sender-driven choice if given.
\proofEndSymbol
\end{proof}

\subsubsection*{Wrapping Up: From PSMs to Global Types.}
\label{sec:summary-transformation-psm-mst}

Let us first observe that part of this workflow can be applied when using the more general alphabet $\AlphAsync$ where send and receive events may not happen next to each other, yielding the following results. 

\begin{lemma}
For every PSM $\PSM$ with $\emptystring \notin \semantics(\PSM)$, there is an
ancestor-recursive, non-merging, dense, and intermediate-recursion-free PSM $\PSM'$ with the same core language.
If $\PSM$ is \sinkfinal and satisfies mixed choice (sender-driven choice, or directed choice respectively),
then $\PSM'$ is \sinkfinal and satisfies mixed choice (sender-driven choice, or directed choice respectively).
If $\PSM$ is not \sinkfinal, restrictions on choice are not preserved.
\end{lemma}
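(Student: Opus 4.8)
The plan is to prove the statement by composing the three translation procedures of Steps~\ref{wf:psm-sink-state-iff-final-state}--\ref{wf:regex-to-arnmdirf-psm} and chaining their correctness lemmas, splitting the argument into the sink-final and the non-sink-final case.

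First I would treat the case where $\PSM$ is already \sinkfinal. Here Step~\ref{wf:psm-sink-state-iff-final-state} is unnecessary, so I set $\regex \is \psmToRegexFunc(\PSM)$ and $\PSM' \is \regexToPsmFunc(\regex)$. By \cref{lm:psmToRegexFuncCorrect} we have $\lang(\regex) = \lang(\PSM)$; moreover, since the hypothesis $\emptystring \notin \semantics(\PSM)$ is exactly the third premise of that lemma, the regular expression $\regex$ contains no $\emptystring$, which is precisely what \cref{lm:regexToPsmFuncCorrect} requires of its input. Applying the latter yields that $\PSM'$ is ancestor-recursive, non-merging, dense, intermediate-recursion-free, and \sinkfinal, with $\lang(\PSM') = \lang(\regex) = \lang(\PSM)$. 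For the preservation of choice, I would note that both lemmas are phrased for sender-driven objects but their arguments use only the first-letter-disjointness condition captured uniformly by the definition of mixed-choice, sender-driven and directed regular expressions, together with \cref{lm:regex-with-choice-closed-under-brz-derivative}; hence the sender-driven chain applies verbatim when ``sender-driven'' is replaced throughout by ``mixed-choice'' or ``directed'', giving all three flavours at once.

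Second I would reduce the general case to the sink-final one. When $\PSM$ is not \sinkfinal, I first apply $\psmToSinkFinalFunc$, which relies on $\emptystring \notin \semantics(\PSM)$ so that every finite accepting run ends with a transition into a final state and can be rerouted. By \cref{prop:psmToSinkFinalFuncCorrect} the result $\widehat{\PSM}$ is \sinkfinal, and one checks that it has the same core language: the construction only reroutes each final-entering transition to the fresh sink, so every finite accepting run is mirrored by one ending in that sink and every infinite run is left untouched. Running the sink-final case on $\widehat{\PSM}$ then produces the desired $\PSM'$ with $\lang(\PSM') = \lang(\widehat{\PSM}) = \lang(\PSM)$. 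Since $\psmToSinkFinalFunc$ introduces genuine non-determinism (both the new transitions into the fresh sink and the retained transitions into the old final states), the choice restrictions are destroyed already at this step, which is exactly the content of the last sentence of the statement.

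The main obstacle I anticipate is not the composition itself, which is mechanical, but the careful threading of the two side conditions. I must verify that the $\emptystring$-freeness produced by \cref{lm:psmToRegexFuncCorrect} matches the input requirement of \cref{lm:regexToPsmFuncCorrect}, and, more delicately, that the Brzozowski-derivative closure argument and the first-letter reasoning inside \cref{lm:regexToPsmFuncCorrect} genuinely depend only on the generic property shared by all three flavours (distinct first letters belonging to a single sender for sender-driven, additionally a common receiver for directed, merely distinct labels for mixed) rather than on a feature special to sender-driven choice. Confirming that \cref{lm:regex-with-choice-closed-under-brz-derivative} generalises in this way, by appealing to the star-normal-form machinery, which is insensitive to the extra sender/receiver constraints, is the step that carries the real weight.
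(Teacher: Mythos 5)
Your proposal is correct and follows essentially the same route as the paper: a case split on whether $\PSM$ is \sinkfinal, composing $\regexToPsmFunc(\psmToRegexFunc(\hole))$ in the first case and prepending $\psmToSinkFinalFunc$ in the second, with correctness discharged by \cref{prop:psmToSinkFinalFuncCorrect}, \cref{lm:psmToRegexFuncCorrect}, and \cref{lm:regexToPsmFuncCorrect}. Your explicit check that the sender-driven arguments (including the Brzozowski-derivative closure) rely only on the first-letter-disjointness shared by all three choice flavours is a point the paper's one-line proof leaves implicit, so your version is, if anything, slightly more careful.
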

\begin{proof}
Let $\PSM$ be a \sinkfinal PSM with $\emptystring \notin \semantics(\PSM)$.
We do a case analysis if $\PSM$ is \sinkfinal.
If $\PSM$ is \sinkfinal,
\[
    \regexToPsmFunc(\psmToRegexFunc(\PSM))
\]
is such a PSM and any restriction on choice is preserved by
\cref{lm:psmToRegexFuncCorrect,lm:regexToPsmFuncCorrect}.
If $\PSM$ is not \sinkfinal,
\[
    \regexToPsmFunc(\psmToRegexFunc(\psmToSinkFinalFunc(\PSM)))
\]
is such a PSM by
\cref{prop:psmToSinkFinalFuncCorrect,lm:psmToRegexFuncCorrect,lm:regexToPsmFuncCorrect}.
\proofEndSymbol
\end{proof}

For the special case of \sumOnePSMs, we can convert such PSMs to global types, proving the main result in this section.

\sumOnePSMasGlobalType*

\begin{proof}
For the first claim, let $\PSM$ be a \sinkfinal \sumOnePSM.
We do a case analysis if $\emptystring \in \semantics(\PSM)$.

If so, we know that $\semantics(\PSM) = \set{\emptystring}$ because $\PSM$ is \sinkfinal and no transition is labelled by $\emptystring$.
Then, $G = 0$ has the same core language.

If not, we can apply our workflow and construct the global type
\[
    \psmToGlobalTypeFunc(\regexToPsmFunc(\psmToRegexFunc(\PSM)))
\]
which represents the same core language and any restriction on choice is preserved by
\cref{lm:psmToRegexFuncCorrect,lm:regexToPsmFuncCorrect,lm:psmToGlobalTypeFuncCorrect}.

\end{proof}

With the optional step (0), our workflow can also transform non-\sinkfinal \sumOnePSMs into global types; however, at the price of adding non-determinism.

\begin{theorem}
\label{lm:sumOnePSMasNonDetGlobalType}
Every \sumOnePSM $\PSM$ with $\emptystring \notin \semantics(\PSM)$ can be represented as a non-deterministic global type with the same core language.
\end{theorem}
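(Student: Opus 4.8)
The claim to prove is \cref{lm:sumOnePSMasNonDetGlobalType}: every \sumOnePSM $\PSM$ with $\emptystring \notin \semantics(\PSM)$ can be represented as a non-deterministic global type with the same core language. This is the lifting of \cref{thm:sumOnePSMasGlobalType} to \sumOnePSMs that are not assumed \sinkfinal, at the cost of allowing non-determinism in the resulting global type.

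The plan is to reuse the four-step workflow from \cref{app:tree-transformation} essentially verbatim, but now beginning with the optional preprocessing step \ref{wf:psm-sink-state-iff-final-state}, which was designed precisely for inputs that are not \sinkfinal. Concretely, I would apply the composite transformation
\[
    \psmToGlobalTypeFunc(\regexToPsmFunc(\psmToRegexFunc(\psmToSinkFinalFunc(\PSM))))
\]
and argue that the result is the desired non-deterministic global type. First I would invoke \cref{prop:psmToSinkFinalFuncCorrect} to conclude that $\psmToSinkFinalFunc(\PSM)$ is \sinkfinal; note that the side condition $\emptystring \notin \semantics(\PSM)$ is exactly the hypothesis needed by \cref{proc:sink-state-iff-final-state} to guarantee that every final state has a predecessor to which the fresh accepting transition can be attached. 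I would also remark that $\psmToSinkFinalFunc$ preserves the core language, so the subsequent steps operate on a PSM representing the same language as $\PSM$.

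Next I would chain the correctness lemmas for the remaining three steps. By \cref{lm:psmToRegexFuncCorrect}, the regular expression $\psmToRegexFunc(\psmToSinkFinalFunc(\PSM))$ has the same language as the \sinkfinal PSM, and since $\emptystring$ does not occur in the semantics, $\emptystring$ does not occur in the regular expression either — which is the hypothesis required to feed it into the RE-to-PSM procedure. Then \cref{lm:regexToPsmFuncCorrect} yields that $\regexToPsmFunc(\hole)$ produces an ancestor-recursive, non-merging, dense, intermediate-recursion-free, \sinkfinal PSM with the same core language. Since the input is a \sumOnePSM, its regular-expression encoding and the reconstructed PSM remain \sumBounded{1}, so finally \cref{lm:psmToGlobalTypeFuncCorrect} applies and gives a global type $\psmToGlobalTypeFunc(\hole)$ with the same core language. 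Composing these equalities establishes protocol fidelity.

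The one genuinely delicate point — and the reason the statement weakens to \emph{non-deterministic} global types — is the behaviour of step \ref{wf:psm-sink-state-iff-final-state} with respect to choice restrictions. As observed right after \cref{proc:sink-state-iff-final-state}, $\psmToSinkFinalFunc$ redirects transitions non-deterministically to the single fresh final state $q_f$ while retaining the original transitions, and so it introduces (rather than preserves) branching that is not sender-driven. Hence I cannot claim preservation of sender-driven or directed choice here, only that the output is a well-formed (deterministic-in-the-automata-sense is also lost) global type; this is why the theorem only promises a non-deterministic global type. The main obstacle in writing up the proof is therefore not a calculation but a careful bookkeeping argument: I must verify that the \sumBounded{1} property survives step \ref{wf:psm-sink-state-iff-final-state} (it does, since adding a fresh final state and copies of existing send/receive transitions never increases the number of messages in flight) so that the later \sumOnePSM-specific lemma \cref{lm:psmToGlobalTypeFuncCorrect} is applicable, while simultaneously not over-claiming any choice-restriction preservation that step \ref{wf:psm-sink-state-iff-final-state} destroys.
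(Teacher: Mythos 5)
Your proposal is correct and matches the paper's own proof essentially verbatim: the paper applies the same composite transformation $\psmToGlobalTypeFunc(\regexToPsmFunc(\psmToRegexFunc(\psmToSinkFinalFunc(\PSM))))$ and cites the same chain of correctness lemmas (\cref{prop:psmToSinkFinalFuncCorrect,lm:psmToRegexFuncCorrect,lm:regexToPsmFuncCorrect,lm:psmToGlobalTypeFuncCorrect}). Your additional remarks on why the empty-word side condition is needed and why only non-determinism (not choice restriction) can be promised are accurate elaborations of what the paper leaves implicit.
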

\begin{proof}
Let $\PSM$ be a non-\sinkfinal \sumOnePSM with $\emptystring \notin \semantics(\PSM)$.
Then,
\[
    \psmToGlobalTypeFunc(\regexToPsmFunc(\psmToRegexFunc(\psmToSinkFinalFunc(\PSM))))
\]
is a non-deterministic global type that represents the same core language by
\cref{prop:psmToSinkFinalFuncCorrect,lm:psmToRegexFuncCorrect,lm:regexToPsmFuncCorrect,lm:psmToGlobalTypeFuncCorrect}.
\proofEndSymbol
\end{proof}

\subsection{Additional Material for Consequence \ref{item:key-consequence-c}}

Local types are defined analogously to global types.
We present local types which allow participants to send and receive from different participants, as defined in~\cite{DBLP:conf/concur/MajumdarMSZ21}.

\begin{definition}[Local Types]
\label{def:local-types}
The \emph{local types} for a role $\procA$ are defined as:
    \begin{grammar}
     L \is 0
         | \IntCh_{i ∈ I} \snd{}{\procB_i}{\val_i}.L_i
         | \ExtCh_{i ∈ I} \rcv{\procB_{i}}{}{\val_i}.L_i
         | μ X. L
         | X
    \end{grammar}
where the internal choice $(\IntCh)$ and  external choice $(\ExtCh)$ both respect
$∀ i,j ∈ I.\; i≠j ⇒ (\procB_i, \val_i) ≠ (\procB_j, \val_j)$.
As for global types,
we assume every recursion variable is bound, each recursion
operator~($μ$) uses a different identifier $t$,
We assume $\card{I} > 0$ and, if~$\card{I} = 1$, we omit $\IntCh$ and $\ExtCh$.
The semantics is defined analogously to global types, using a state machine $\semlocal(\hole)$, and is omitted.
\end{definition}

The workflow of \cref{app:tree-transformation} can obtain a local type from a \sinkfinal FSM over $\AlphAsync_\procA$ for a participant $\procA$.
If a deterministic FSM is not \sinkfinal, there is no \sinkfinal deterministic FSM for the same language, making a non-deterministic local type the most one can achieve.
If the FSM has no mixed-choice states, the transformation yields a local type.
If it has, the structure still resembles the one of local types but requires the simultaneous specification of receiving and sending. 

\localTypesEquiExpressive*

\subsection{Additional Material for \ref{item:key-consequence-d}}
\label{app:projection-without-soft-deadlocks}

In this section, we provide the formalisation to prove \cref{thm:soft-impl-sink-final-red}. 

When using local types, final configurations are always sink-state configurations, \ie where each participant is in a sink state.
For our setting, this is not the case and has repercussions on the semantics and meaning of deadlocks:
we can have final configurations where some participants are not in a final state.
If there is no next transition for such a configuration,
we call it a soft deadlock.
We define the notion of soft deadlocks and recall the definition of strong projectability. 

\begin{definition}[Soft Deadlocks, Strong Projectability]
\label{def:soft-implementability-problem}
Let $\CSM{A}$ be a CSM.
A configuration $(\vec{q}, \xi)$ is a \emph{soft deadlock}
if there is no $(\pvec{q'}, \xi')$ with
$(\vec{q}, \xi) \to (\pvec{q'}, \xi')$
and
$(\vec{q}, \xi)$ is no final sink-state configuration.
We say $\CSM{A}$ is free from soft deadlocks if every reachable configuration is no soft deadlock.
A language $L \subseteq \AlphAsync^\omega$ is said to be \emph{strongly projectable} if there exists a CSM $\CSM{B}$ such that
$\CSM{B}$ is free from soft deadlocks
 \emph{(soft deadlock freedom)},
 and
$L$ is the language of $\CSM{B}$
 \emph{(protocol fidelity)}.
We say that $\CSM{B}$ is a strong projection of $L$.
\end{definition}

With this notion of deadlock, intuitively, a configuration can only be considered actually final if no state machine has an outgoing transition.
It is obvious that every deadlock is a soft deadlock.
Different applications call for different notions of deadlock freedom.
In distributed computing, it is fine if a server keeps listening for incoming requests while, in embedded computing, it can be essential that all participants eventually stop.
We believe this is a design~choice.

Intuitively, if one aims for soft deadlock freedom, no state with outgoing transitions needs to be final because,
for soft deadlocks, only final sink-state configurations matter.
One could require a projection to be \sinkfinal.
However, while this is a sufficient, it is not obviously necessary:
any final non-sink-state configuration could simply not get stuck, never exploiting the fact it is final.
We show that this is not the case if one of two conditions hold for CSMs.

We use the subset construction from~\cite[Def.\,5.4]{DBLP:conf/cav/LiSWZ23}, denoted by $\subsetcons{\GG}{\procA}$ for participant $\procA$, which simply projects the global type's state machine onto the participant alphabet and determinises the result. 

\begin{definition}[Subset construction]
\label{def:subset-construction}
Let $\GG$ be a global type and $\procA$ be a participant.
Then, the \emph{subset construction} for $\procA$ is defined as
\[
\subsetcons{\GG}{\procA} =
\bigl(
Q_{\procA},
\Alphabet_{\!\procA},
\delta_{\procA},
s_{0, \procA},
F_{\procA}
\bigr)
\text{ where }
\]
\begin{itemize}
\item $ \delta(s, a) \is
	\set{q' \in Q_{\GG}
		\mid
		\exists q \in s,
		q \xrightarrow{a} \xrightarrow{\emptystring}\mathrel{\vphantom{\to}^*} q' \in \projerasuretrans
	}
	$
	for
	every
	$s \subseteq Q_{\GG}$ and
	$a \in \Alphabet_{\procA}$,
	\item $s_{0, \procA} \is
	\set{q \in Q_{\GG} \mid
		q_{0, \GG} \xrightarrow{\emptystring} \mathrel{\vphantom{\to}^*} q \in \projerasuretrans}$,
	\item $Q_{\procA} \is \lfp_{\set{s_{0,\procA}}}^\subseteq \lambda Q.\, Q \union \set{ \delta(s,a) \mid s \in Q \land a \in \Alphabet_{\!\procA}} \setminus \set{\emptyset}$,
\item $\delta_{\procA} \is \restrict{\delta}{Q_{\procA} \times \Alphabet_{\!\procA}}$, and
	\item $F_{\procA} \is
	\set{s \in Q_{\procA}
		\mid s \inters F_{\GG} \neq \emptyset}$.
\end{itemize}
\end{definition}

To prove \cref{thm:soft-impl-sink-final-red}, 
we add two more equivalent statements and prove four implications, yielding a cycle and equivalence of all statements.
It is trivial that
\cref{thm:soft-impl-sink-final-red}
follows from
\cref{thm:soft-impl-sink-final}.

\begin{restatable}{theorem}{softImplSinkFinal}
\label{thm:soft-impl-sink-final}
Let $\GG$ be a projectable global type.
Then, the following statements are equivalent, where $\subsetcons{\GG}{\procA}$ is the subset construction of $\GG$ onto $\procA$:
\begin{enumerate}[labelindent=0pt,labelwidth=\widthof{(iii)},label=\textnormal{(\roman*)},itemindent=0em,leftmargin=!]
 \item \label{thm:soft-impl-sink-final-1}
    There is a \sinkfinal CSM that is a projection of $\GG$ and satisfies feasible eventual reception or every of its state machines is deterministic.
    $\GG$ can be implemented by a \sinkfinal CSM that satisfies feasible eventual reception or every of its state machines is deterministic.
 \item \label{thm:soft-impl-sink-final-2}
    The subset construction $\subsetcons{\GG}{\procA}$ is \sinkfinal for every participant $\procA$.
 \item \label{thm:soft-impl-sink-final-3}
    All reachable final configurations of $\CSMl{\subsetcons{\GG}{\procA}}$ are sink-state. \item \label{thm:soft-impl-sink-final-4}
    There is a CSM that is a strong projection of $\GG$ and this CSM satisfies feasible eventual reception or every of its state machines is deterministic.
\end{enumerate}
\end{restatable}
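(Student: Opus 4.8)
The plan is to prove the four-way equivalence by a single cycle of implications, organised around the observation that, since $\GG$ is projectable, the subset construction yields a CSM $\CSMl{\subsetcons{\GG}{\procA}}$ that is \emph{deterministic}, \emph{deadlock-free}, and \emph{local-language preserving}, i.e.\ $\lang(\subsetcons{\GG}{\procA}) = \semantics(\GG)\wproj_{\AlphAsync_\procA}$, by the results of \cite{DBLP:conf/cav/LiSWZ23}. I would take this CSM as the canonical reference object and relate all four conditions to whether its participant automata have final states that fail to be sinks.

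First I would dispatch the easy arrows. For (ii)$\Rightarrow$(i): if each $\subsetcons{\GG}{\procA}$ is \sinkfinal, then $\CSMl{\subsetcons{\GG}{\procA}}$ is itself a \sinkfinal, deterministic projection of $\GG$, so it witnesses (i). For (ii)$\Rightarrow$(iii): when final states coincide with sinks, every final configuration (all participants in final local states, all channels empty) is automatically a sink-state configuration. For (i)$\Rightarrow$(iv): any reachable stuck configuration of a deadlock-free CSM is already final (deadlock-freedom forbids stuck configurations with a non-empty channel or a non-final participant), and in a \sinkfinal CSM such a configuration is a final sink-state configuration, hence not a soft deadlock; so the CSM is free from soft deadlocks and is a strong projection, the FER/determinism disjunct carrying over verbatim. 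The remaining ``reachability'' arrow (iii)$\Rightarrow$(ii) I would prove contrapositively: a reachable final non-sink state $s$ of some $\subsetcons{\GG}{\procA}$ can be completed to a reachable \emph{final} configuration of $\CSMl{\subsetcons{\GG}{\procA}}$ in which $\procA$ sits in $s$; that configuration is final but not sink-state, contradicting (iii). Realizing $s$ inside a globally final configuration is the one non-trivial bookkeeping step here, obtained from local-language preservation together with the completion guarantees (feasible eventual reception) of the subset projection.

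The crux is the single arrow turning the mere \emph{existence} of a good projection into the structural property (ii) --- concretely (iv)$\Rightarrow$(ii), and by the same lemma (i)$\Rightarrow$(ii). I would argue by contraposition: suppose some $\subsetcons{\GG}{\procA}$ has a reachable final non-sink state. By local-language preservation this is equivalent to the purely language-theoretic fact that there is a finite $u \in L_\procA := \semantics(\GG)\wproj_{\AlphAsync_\procA}$ admitting an extension $ua \in \pref(L_\procA)$, i.e.\ $\procA$ may legitimately stop after $u$ but could also continue with $a$. I would then show this ambiguity forces \emph{every} projection $\CSM{C}$ satisfying the FER-or-determinism condition to expose a reachable soft deadlock, thereby ruling out both a sink-final and a strong projection. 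Driving $\CSM{C}$ to produce a complete word whose $\procA$-projection is $u$ reaches a final configuration in which $\procA$ can nonetheless fire $a$; if $a$ is a send this deposits a message the (already terminated) partner can never consume, and feasible eventual reception then drives the run into a stuck, non-final configuration --- a soft deadlock; the dual case, where $\procA$ must wait for a message the environment never produces, is symmetric. Determinism or FER is exactly what pins $\procA$'s post-$u$ state and rules out a non-deterministic escape.

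I expect the main obstacle to be precisely this crux arrow. First, local-language preservation must be available in \emph{both} regimes of the disjunction --- for deterministic projections (from \cite{DBLP:conf/cav/LiSWZ23}) and, separately, for non-deterministic projections satisfying feasible eventual reception --- and the two arguments are different in character; without it the reduction to the statement about $L_\procA$ collapses. Second, converting the local ``stop-or-continue'' ambiguity into a genuinely reachable soft-deadlock \emph{configuration} needs a careful FIFO/matching argument: one must track that the extra event $a$ yields either an orphan message or an unmatchable receive, and must invoke feasible eventual reception exactly to exclude the possibility that an unmatched send is harmlessly deferred forever. Handling finite and infinite maximal runs uniformly --- so that ``stuck'' is correctly distinguished from ``infinitely busy'' --- is the delicate point where I would spend most of the effort.
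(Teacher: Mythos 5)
Your overall decomposition is workable and most of the arrows are right: (ii)$\Rightarrow$(i) and (ii)$\Rightarrow$(iii) are immediate; (i)$\Rightarrow$(iv) correctly notes that deadlock freedom plus sink-finality leaves no stuck configuration other than final sink-state ones; and your (iii)$\Rightarrow$(ii) together with the core of (i)$\Rightarrow$(ii) --- two global runs with the same local projection for $\procA$, one ending in a final configuration and one leaving a message pending, so that the (determinism- or FER-pinned) local state is simultaneously final and non-sink --- is essentially the paper's own proof of its implication from (i) to (ii).

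The genuine gap is your crux arrow (iv)$\Rightarrow$(ii). You claim the local ``stop-or-continue'' ambiguity forces every candidate CSM to expose a reachable \emph{soft deadlock}, but the configurations your construction produces are not stuck. After the complete word $w$ the configuration is final with empty channels, yet another participant may well have an enabled send (the protocol has a branch that continues towards $w''$), so this configuration need not be a soft deadlock; and after $w''$ the pending message is receivable --- that is exactly what FER or determinism buys you --- so that configuration is not stuck either. What you actually derive is only that the candidate CSM fails to be \sinkfinal, which does not contradict its being a strong projection. Your case split is also off: by Send Validity, a final state of the subset construction of a projectable global type has no outgoing \emph{send} transitions, so the ``orphan message deposited for a terminated partner'' branch cannot arise, and in any case a participant sitting in a final non-sink state is still able to consume messages. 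The paper sidesteps all of this by proving (iv)$\Rightarrow$(i) instead: in a strong projection, a reachable final non-sink-state configuration would be stuck (final states admit only receives, and the channels are empty) and hence a soft deadlock, so finality of non-sink states is never needed for protocol fidelity and can simply be stripped, yielding a \sinkfinal projection; the hard work is then done once, in (i)$\Rightarrow$(ii). A second, smaller issue: your reduction to the purely local statement ``$u \in L_{\procA}$ and $u\cat a \in \pref(L_{\procA})$'' discards the fact that in the second global run the message witnessing $a$ has actually been sent. FER only forces the receive transition to exist at a concrete reachable configuration with that message already in the channel, so you must carry the two global runs (and the equality of their $\procA$-projections) through the argument rather than working with the local language alone.
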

\begin{proof}
Note that the subset projection is basically the subset construction but checks validity conditions \cite[Sec.\,5]{DBLP:conf/cav/LiSWZ23}.
Together with completeness of their approach \cite[Thm.\,7.1]{DBLP:conf/cav/LiSWZ23}, the Send Validity condition enforces that any final state in the subset projection, and thus construction of a projectable global type, cannot have outgoing send transitions.

Proof that
\ref{thm:soft-impl-sink-final-1}
implies
\ref{thm:soft-impl-sink-final-2}: \\
Let $\CSM{A}$ be a \sinkfinal CSM that satisfies feasible eventual reception or every of its state machines is deterministic, and is a projection of $\GG$.
Towards a contradiction, assume that the subset construction is not \sinkfinal.
Without loss of generality, let $\subsetcons{\GG}{\procA}$ be the subset construction with at least one final non-sink state and let $s$ be one of the states that is final and has outgoing transitions.

By the fact that $s$ is final, there is $0 \in s$.
By the fact that $s$ has outgoing transitions, there is $G' \in s$ with $G' \xrightarrow{x} G''$ for $x \wproj_{\AlphAsync_{\!\procA}} \in \AlphAsync_{\!\procA}$ and some~$G''$.
Because of Send Validity \cite[Def.\,5.4]{DBLP:conf/cav/LiSWZ23}, we have $x = \msgFromTo{\procB}{\procA}{\val}$ for some participant $\procB$ and message $\val$.

In the subset construction, two subterms $G_1$ and $G_2$ do only occur in the same state $\vec{s}_\procA$ of $\subsetcons{\GG}{\procA}$ if there are two runs
$
G \xrightarrow{w}\mathrel{\vphantom{\to}^*} G_1
$
and
$
G \xrightarrow{w'}\mathrel{\vphantom{\to}^*} G_2
$
with
$w \in \AlphAsync^*$,
$w' \in \AlphAsync^*$,
and $w \wproj_{\Alphabet_{\!\procA}} = w' \wproj_{\Alphabet_{\!\procA}}$.
Here, we use choose $G_1 = 0$ and $G_2 = G'$.
Thus, we have $w \in \lang(\GG)$.

Let
$(\vec{s}, \xi)$
be the configuration of the
subset construction $\CSMl{\subsetcons{\GG}{\procA}}$
that is reached after processing $w$.
Because of protocol fidelity, determinacy of the subset construction and $w \in \lang(\GG)$, it holds that $(\vec{s}, \xi)$ is final.
Recall there is a transition from $G'$ to~$G''$ labelled with $\msgFromTo{\procB}{\procA}{\val}$, so we can extend $w'$ to obtain $w'' \is w' \cat \snd{\procA}{\procB}{\val}$
for which
$w \wproj_{\Alphabet_{\!\procA}} = w'' \wproj_{\Alphabet_{\!\procA}}$.
Let
$(\pvec{s}'', \xi'')$
be the configuration of
$\CSMl{\subsetcons{\GG}{\procA}}$
after processing~$w''$.
By construction of the traces, the channels are empty after processing~$w''$.
Hence, we have $\xi''(\procB, \procA) = \val$ with the additional send event in $w''$.
Because of
$w \wproj_{\Alphabet_{\!\procA}} = w'' \wproj_{\Alphabet_{\!\procA}}$, it holds that
$\vec{s}_\procA = \pvec{s}''_\procA$.
Let us consider the two configurations of
$\CSM{A}$
that are reached with $w$ and $w''$.
By
\cite[Lm.\,20]{DBLP:conf/concur/MajumdarMSZ21},
they will have the same channel contents as the subset construction respectively.
Let
$(\vec{t}, \xi)$
and
$(\pvec{t}'', \xi'')$
be the configurations of
$\CSM{A}$
after processing $w$ and $w''$ respectively.
By
$w \wproj_{\AlphAsync_{\!\procA}}
=
w'' \wproj_{\AlphAsync_{\!\procA}}$,
it is possible and we assume that
$\vec{t}_\procA = \pvec{t}''_\procA$.
Note that we do not assume determinacy of $\CSM{A}$ but we can assume that both runs end in the same state for $\procA$ since all ways of non-determinism need to be accounted for.
We show that $\vec{t}_\procA$ is not \sinkfinal: it is final but has at least one outgoing~transition.

To start, we show that $\vec{t}_\procA$ is final.
It suffices to show that
$(\vec{t}, \xi)$
is a final configuration.
By the fact that
$(\vec{s}, \xi)$
is final, $\xi$ has only empty channels.
Towards a contradiction, assume that $\vec{t}_\procC$ is not final for some participant $\procC$.
Then, $w$ is not in $\lang(\CSM{A'})$ if there is no other run for $w$ but, if there were, $\CSM{A'}$ still deadlocks in the configuration $(\vec{t}, \xi)$, contradicting deadlock freedom.
Hence,
$(\vec{t}, \xi)$ is~final.

It remains to show that $\vec{t}_\procA$ has an outgoing transition.
We do a case analysis on the side condition for
$\CSM{A}$.

First, we assume that $\CSM{A}$ satisfies feasible eventual reception.
We use the second configuration
$(\pvec{t}'', \xi'')$
where $\procA$ is in the same state.
We know that $\val$ is the first message in $\xi''(\procB, \procA)$.
It was sent and must be received.
Thus, $\pvec{t}''_\procA$ has at least one outgoing~transition.

Second, assume that $A_\procC$ is deterministic for every participant $\procC$.
Again, we use the second configuration
$(\pvec{t}'', \xi'')$
where $\procA$ is in the same state.
By the semantics of global types, there is an extension $w'''$ of $w''$ with
$w''' \in \lang(\GG)$ that contains the receive event $\rcv{\procB}{\procA}{\val}$ for the enqueued message $\val$.
If $w'''$ is finite, it is straightforward that $\procA$ needs to be able to receive the message $\val$ from $\procB$ to satisfy protocol fidelity, ensuring an outgoing transition.
If $w''$ is infinite, towards a contradiction, assume that $\vec{t}_\procA$ has no outgoing transition.
The semantics require that every prefix $u \in \pref(w''')$ is in $\pref(\CSM{A})$.
However, if $\vec{t}_\procA$ could not receive $\val$, because of determinacy of $A_\procA$, there is a prefix of $w'''$ that is not.
This contradicts the assumption that $\CSM{A}$ is a projection of $\GG$.

Proof that
\ref{thm:soft-impl-sink-final-2}
implies
\ref{thm:soft-impl-sink-final-3}: \\
By assumption, all final configurations are sink-state configurations.
Hence, all reachable final configurations are sink-state configurations.

Proof that
\ref{thm:soft-impl-sink-final-3}
implies
\ref{thm:soft-impl-sink-final-4}: \\
We claim that
$\CSMl{\subsetcons{\GG}{\procA}}$
is such a CSM.
Because of soundness \cite[Thm.\,6.1]{DBLP:conf/cav/LiSWZ23},
we know that
$\CSMl{\subsetcons{\GG}{\procA}}$
satisfies protocol fidelity and is deadlock-free.
Hence, it suffices to show
$\CSMl{\subsetcons{\GG}{\procA}}$
is free from soft deadlocks.
Since
$\CSMl{\subsetcons{\GG}{\procA}}$
is deadlock-free,
the only way there could be soft deadlocks is that there is a reachable stuck final configuration that is no sink-state configuration.
This is impossible because every reachable final configuration is a sink-state configuration by assumption.
It remains to show that
$\CSMl{\subsetcons{\GG}{\procA}}$
satisfies one of both side conditions.
Both properties hold.
First, the subset construction determinises by definition.
Second, every global type satisfies feasible eventual reception by construction and a projection preserves this property.

Proof that
\ref{thm:soft-impl-sink-final-4}
implies
\ref{thm:soft-impl-sink-final-1}: \\
We know there is a strong projection for $\GG$ (with certain properties) and want to show there is a \sinkfinal projection (with the same properties).
From the assumption, let $\CSM{A}$ be a strong projection of $\GG$.
By definition, every strong projection is also a projection of $\GG$ because soft deadlock freedom implies deadlock freedom.
We assume it is not \sinkfinal for some participant $\procA$ as the claim follows trivially otherwise.
This means that $A_\procA$ has final states that have outgoing transitions.
We will show that these states do not need to be final for any such~$\procA$.
Hence, we can turn $\CSM{A}$ to a \sinkfinal CSM $\CSM{A'}$ by inductively applying this to all participants whose state machines are not \sinkfinal.

We claim that, for a strong projection $\CSM{A}$, whether a state is final or not does only matter for soft deadlock freedom and not protocol fidelity.
Assume it was relevant for protocol fidelity.
Then, there is a reachable non-sink-state configuration that is final, which means that all participants are in final states and the channels are empty.
We established earlier that no final state can have an outgoing send transition.
Thus, this would constitute a soft deadlock, contradicting the fact that $\CSM{A}$ is a strong projection.

From soft deadlock freedom, we know that there is no stuck reachable final non-sink-state configuration.
In other words, all the stuck configurations are final sink-state configurations.
Thus, any non-sink final state does not need to be final.
This proves the claim that the existence of $\CSM{A}$ always implies the existence of a \sinkfinal CSM $\CSM{A'}$ which is our witness for projectability.
The side conditions for both statements are the same and not affected by our construction so they simply carry over.
\proofEndSymbol
\end{proof}

For our result, it is fine to assume projectability of $\GG$ as it is a prerequisite for strong projectability.
If we aim for a strong projection, we can construct the global type's subset projection (the subset construction with validity conditions) and check if it is \sinkfinal.
If it provides one that is not, there is no strong projection of it.
If this is undesirable, the protocol needs redesigning.
To obtain local types from a CSM of \sinkfinal state machines, we can use \cref{lm:local-types-equi-expressive}. This requires no mixed-choice states, which is always the case for the subset projection. 

\begin{corollary}
Let $\GG$ be a strongly projectable global type.
Then, there is a local type $L_\procA$ for every $\procA \in \Procs$ such that
$\CSMl{\semlocal(L_\procA)}$ is a strong projection of~$\GG$.
\end{corollary}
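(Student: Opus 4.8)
The plan is to reduce the statement to the equivalence established in \cref{thm:soft-impl-sink-final} together with the expressivity result \cref{lm:local-types-equi-expressive}. First I would observe that strong projectability implies projectability: any strong projection is soft-deadlock-free, hence deadlock-free, hence a projection, so $\GG$ is a projectable global type and \cref{thm:soft-impl-sink-final} is applicable. The goal is then to land on statement \ref{thm:soft-impl-sink-final-2}, namely that the subset construction $\subsetcons{\GG}{\procA}$ is \sinkfinal for every $\procA$, and to read the required local types off that construction.

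To enter the equivalence I would establish statement \ref{thm:soft-impl-sink-final-4}. Strong projectability hands me a CSM $\CSM{B}$ that is a strong projection of $\GG$; determinising each component $B_\procA$ individually is language-preserving and, as already used elsewhere in the paper, may be taken without loss of generality. This yields a strong projection all of whose state machines are deterministic, which is exactly \ref{thm:soft-impl-sink-final-4}. The equivalence of \cref{thm:soft-impl-sink-final} then gives both \ref{thm:soft-impl-sink-final-2}, so that each $\subsetcons{\GG}{\procA}$ is \sinkfinal, and, via the step \ref{thm:soft-impl-sink-final-3}$\,\Rightarrow\,$\ref{thm:soft-impl-sink-final-4} in the theorem's proof, that $\CSMl{\subsetcons{\GG}{\procA}}$ is itself a strong projection of $\GG$.

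With $\subsetcons{\GG}{\procA}$ now known to be \sinkfinal, deterministic, and free of mixed-choice states (which the subset construction always is), I would apply \cref{lm:local-types-equi-expressive} to each participant to obtain a local type $L_\procA$ with $\lang(\semlocal(L_\procA)) = \lang(\subsetcons{\GG}{\procA})$. Since local-type state machines are \sinkfinal and deterministic by construction, each $\semlocal(L_\procA)$ is a \sinkfinal deterministic state machine with the same local language as $\subsetcons{\GG}{\procA}$. It then remains to transfer the strong-projection property from $\CSMl{\subsetcons{\GG}{\procA}}$ to $\CSMl{\semlocal(L_\procA)}$.

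That final transfer is where I expect the main difficulty to lie, and the very same gluing argument is already needed at the determinisation step above. Concretely, I would prove that replacing each local FSM by a language-equivalent, \sinkfinal, deterministic one that preserves feasible eventual reception preserves both protocol fidelity and soft-deadlock-freedom of the induced CSM. Protocol fidelity should follow from the local-language-preserving characterisation of the subset projection (\cref{thm:cav-result}) together with \cite[Lm.\,20]{DBLP:conf/concur/MajumdarMSZ21}, which ties reachable channel contents to the local traces and thereby makes $\lang(\CSMl{\semlocal(L_\procA)})$ depend only on the component languages. Soft-deadlock-freedom is the delicate part: a reachable stuck, non-final-sink configuration of $\CSMl{\semlocal(L_\procA)}$ would, by local-language equivalence and determinism, correspond to such a configuration of $\CSMl{\subsetcons{\GG}{\procA}}$, contradicting that the latter is a strong projection; care is needed for infinite runs, where the $\preceq_\interswap^\omega$-closure and the determinism of the components must be used to match stuck behaviours. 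Combining these yields that $\CSMl{\semlocal(L_\procA)}$ is a strong projection of $\GG$, as required.
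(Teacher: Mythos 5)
Your overall route is the one the paper intends: the corollary is stated without an explicit proof, and the surrounding text makes clear it is meant to follow by combining \cref{thm:soft-impl-sink-final} (strong projectability forces every $\subsetcons{\GG}{\procA}$ to be \sinkfinal, and $\CSMl{\subsetcons{\GG}{\procA}}$ is then itself a strong projection) with \cref{lm:local-types-equi-expressive} applied to these deterministic, \sinkfinal, mixed-choice-free components. Two remarks. First, your detour through componentwise determinisation of the given strong projection $\CSM{B}$ is unnecessary and is also the shakiest step you take: replacing a component by a language-equivalent FSM does not in general preserve the language or the (soft-)deadlock behaviour of the induced CSM, so ``may be taken without loss of generality'' would itself need an argument. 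It can be skipped entirely, because statement \ref{thm:soft-impl-sink-final-4} is a disjunction and its feasible-eventual-reception disjunct holds for free: protocol fidelity gives $\lang(\CSM{B}) = \semantics(\GG) = \close(\lang(\GG))$, the core language of any global type satisfies feasible eventual reception by construction, and \cref{lm:FER-reflected-and-preserved} shows $\close(\hole)$ preserves it, so the original $\CSM{B}$ already witnesses \ref{thm:soft-impl-sink-final-4}. Second, the final transfer from $\CSMl{\subsetcons{\GG}{\procA}}$ to $\CSMl{\semlocal(L_\procA)}$, which you flag as the main difficulty, is indeed the only place where anything beyond citation happens; your justification (for deterministic components with all states useful, the enabled actions and finality after any trace are determined by the component language alone, so swapping in the language-equivalent deterministic $\semlocal(L_\procA)$ cannot change the reachable-configuration structure of the product) is the right one, and is what the paper leaves implicit.
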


\subsection{Projectability and Strong Projectability \\ for Global Types with Mixed Choice is Undecidable}
\label{app:mixed-choice-global-types-undec}

In \cref{sec:mixed-choice-yields-undecidable-projection}, we constructed a \sinkfinal mixed-choice \sumOnePSM to prove undecidability of the respective projectability problem.
With our results from
\cref{app:tree-transformation},
we can transform this encoding into a mixed-choice global type.

\checkingImplMixedChoiceGlobalTypesUndec*
\begin{proof}
From \cref{thm:checking-implementability-sinkfinal-mixed-choice-PSM-undecidable}, we know that the projectability problem is undecidable for \sinkfinal mixed-choice \sumOnePSMs in general.
From \cref{thm:sumOnePSMasGlobalType}, we know that such PSMs can be transformed into a mixed-choice global type, which proves the claim for the projectability problem.
By construction, all participants are informed when the protocol of the encoding ends, making the projectability and soft projectability problem equivalent.
This proves the claim for the soft projectability problem.
\proofEndSymbol
\end{proof}
 }
{
}

\end{document}